
\documentclass[letterpaper,twocolumn,10pt]{article}
\usepackage{usenix}

\usepackage{preamble}

\newcommand{\urlartifact}{\href{http://doi.org/10.5281/zenodo.17977528}{10.5281/zenodo.17977528}}

\begin{document}

\date{}

\title{\Large \bf The Adverse Effects of Omitting Records in Differential Privacy: \\ How Sampling and Suppression Degrade the Privacy--Utility Tradeoff \\ 
{\large\textit{(Long version of the paper accepted at USENIX Security '26)}}}

\author{
{\rm Àlex Miranda-Pascual}\\
Karlsruhe Institute of Technology\\Universitat Politècnica de Catalunya\\
alex.pascual@kit.edu
\and
{\rm Javier Parra-Arnau}\\
Universitat Politècnica de Catalunya\\
javier.parra@upc.edu
\and
{\rm Thorsten Strufe}\\
Karlsruhe Institute of Technology \\
thorsten.strufe@kit.edu
}

\maketitle

\begin{abstract}
    Sampling is renowned for its privacy amplification in differential privacy (DP), and is often assumed to improve the utility of a DP mechanism by allowing a noise reduction. In this paper, we further show that this last assumption is flawed: When measuring utility at equal privacy levels, sampling as preprocessing consistently yields penalties due to utility loss from omitting records over all canonical DP mechanisms---Laplace, Gaussian, exponential, and report noisy max---, as well as recent applications of sampling, such as clustering.

    Extending this analysis, we investigate suppression as a generalized method of choosing, or omitting, records. Developing a theoretical analysis of this technique, we derive privacy bounds for arbitrary suppression strategies under unbounded approximate DP. We find that our tested suppression strategy also fails to improve the privacy–utility tradeoff. Surprisingly, uniform sampling emerges as one of the best suppression methods---despite its still degrading effect. Our results call into question common preprocessing assumptions in DP practice.
\end{abstract}

\section{Introduction}\label{sec:introduction}

Differential privacy (DP)~\cite{dwork2006differential,dwork2014algorithmic} is firmly established as the state-of-the-art privacy framework, thanks to its strong privacy guarantees and mathematical formulation. 
A popular technique employed in the DP field is \textit{sampling}, widely regarded for its privacy amplification property~\cite{smith2009differential,balle2018privacy,steinke2022composition}: Applying a uniform sampling $\S$ to any $(\varepsilon,\delta)$-DP mechanism $\M$ yields that the composition $\M\circ\S$ is $(\varepsilon',\delta')$-DP with $\varepsilon'\leq\varepsilon$ and $\delta'\leq\delta$. Here, it is commonly assumed that these gains in privacy can be translated into higher utility by calibrating the privacy parameters in $\M\circ\S$ to achieve the same privacy guarantees, while reducing the perturbation applied by the DP mechanism~\cite{bun2023controlling,li2012sampling,fang2024privacy}.  
While the noise introduced by $\M$ can indeed be thus reduced, it remains an open question whether this noise reduction is in general sufficient to offset the utility loss caused by the loss of records through $\S$ itself. In fact, the contrary has now been demonstrated~\cite{raisa2024subsampling} for DP stochastic gradient descent (DP-SGD)~\cite{abadi2016deep}---a well-established and widespread mechanism using sampling.

The question is, beyond DP-SGD, whether the inherent utility loss caused by $\S$ can, in fact, be outweighed by the intended noise reduction. In the composition of $\M\circ\S$, distortion may derive from two principal sources: from the perturbation provided by the protection in $\M$ and from the utility loss caused by the omission of records in $\S$. 
By translating the privacy amplification into decreased protection demands, we reduce the perturbation of $\M$, thus establishing a tradeoff between these two error sources. In this paper, we investigate the utility effect of sampling in the composition $\M\circ\S$, and importantly, we expand upon these findings to broadly inquire into the possible privacy and utility effects more generally.

To begin, we investigate sampling as a preprocessing step, testing the assumption that it enhances utility through privacy gains. Our experiments compute and compare the utility guarantees of DP mechanisms with and without sampling under identical privacy parameters. We analyze the canonical DP mechanisms (e.g., Laplace, Gaussian, exponential, and report-noisy-max mechanisms~\cite{dwork2014algorithmic}) as well as clustering, which has previously been ``amplified'' by sampling~\cite{blocki2021differentiallyprivate}. Our findings reveal that the utility with sampling is worse than that without, indicating that any utility gained from sampling's privacy amplification does not compensate for the inherent utility loss caused by removing records through sampling in general.

This surprising revelation leads us to ask: \textit{What is the actual effect of deleting records as wanted?} Therefore, we introduce to DP the technique of \textit{suppression}~\cite{hundepool2012statistical}, a method from statistical disclosure control (SDC) that targets the deletion of vulnerable records. Suppression is effective in other privacy frameworks, such as $k$-anonymity~\cite{samarati2001protecting}, where it improves the privacy--utility tradeoff by selectively removing outliers~\cite{gramaglia2021glove,chaudhuri2006when}. Similar benefits could then be expected in DP, as outliers also complicate the DP privacy--utility tradeoff.

Therefore, we investigate whether DP mechanisms with suppression yield better privacy or utility than the same mechanism without. Given that outliers may vary in vulnerability across databases, the decision to delete records can significantly impact the privacy parameters. 
Therefore, we derive upper bounds on the privacy parameters of $\M\circ\S$ (where $\S$ now denotes suppression) in terms of those of $\M$. We impose no conditions on the suppression algorithm $\S$, thus covering all possible cases, including state-of-the-art sampling.

To assess utility, we replicate our sampling experiments for a family of suppression algorithms, obtaining the same findings. Among the tested mechanisms, our results indicate that DP mechanisms with this suppression do not outperform those without at fixed privacy levels, often yielding worse utility guarantees compared to sampling. Thus, despite the negative outcomes associated with sampling, it remains the superior method for record deletion.

In summary, this paper main contributions are as follows:
\begin{itemize}
    \item Our experimental study on uniform Poisson sampling over classic unbounded approximate DP mechanisms reveals that, for fixed privacy levels, the utility guarantees of the DP mechanism with sampling are worse than those of the mechanism without sampling.
    \item We introduce record suppression to DP and we prove how the privacy parameters of $\M$ are affected by preprocessing with any suppression algorithm $\S$, and when we obtain a privacy amplification. To the best of our knowledge, we are the first to provide such a general result for unbounded DP, and additionally, the first to provide a result that is also independent of the choice of~$\M$.
    \item We empirically show that even when factoring in the privacy amplification, our suppression in DP worsens the privacy--utility tradeoff analogously to sampling. 
    We show that, despite both techniques providing unfavorable outcomes, suppression rarely outperforms sampling.
\end{itemize}

Our findings offer new insights into the relationship between DP, sampling, and suppression. Above all, our findings highlight the need for careful consideration of data preprocessing strategies in privacy-preserving data analysis.

\section{Preliminaries and Background}\label{sec:preliminaries}

\subsection{Differential Privacy}

In this paper, we work with (pure) differential privacy ($\varepsilon$-DP)~\cite{dwork2006differential,dwork2014algorithmic} and its approximate counterpart, $(\varepsilon,\delta)$-DP~\cite{dwork2006our,dwork2014algorithmic}. Their definitions using our notation are as follows:

\begin{definition}[Differential privacy~\cite{dwork2014algorithmic}]\label{def:ApproximateDP}
    Let $\varepsilon,\delta\geq0$ and $\D$ be a class of databases drawn from a data universe $\X$. A randomized mechanism $\M$ with domain $\D$ is \textit{$(\varepsilon,\delta)$-DP} if for all neighboring $D,D'\in\D$ and all measurable $\meas\subseteq\Range(\M)$,
    \[
        \Prob\{\M(D)\in\meas\} \leq \e^{\varepsilon}\Prob\{\M(D')\in\meas\}+\delta.
    \]

    If $\delta=0$, we say that $\M$ is \textit{$\varepsilon$-DP}. 
\end{definition}

We will work almost exclusively with \textit{unbounded DP}, the original DP notion~\cite{dwork2006differential,dwork2014algorithmic}. Unbounded DP is obtained by selecting in \Cref{def:ApproximateDP} the \textit{unbounded neighborhood definition}~\cite{kifer2011no}. Two databases are said to be unbounded neighboring if one is obtained from the other by adding or deleting a record (or, their symmetric difference has size $1$: $|D\Delta D'|=1$). 
Note that the definition of DP allows for different variants by changing the neighborhood definition~\cite{kifer2011no,desfontaines2020sok}, such as \textit{bounded DP}~\cite{kifer2011no}. In this case, two databases are said to be \textit{bounded neighboring} if one is obtained from the other by substituting one record for another.

The \textit{privacy parameters}, $\varepsilon$ and $\delta$, quantify the privacy level of the mechanism $\M$, limiting the amount of information an attacker can extract about the input data. 
Intuitively, lower values of $\varepsilon$ and $\delta$ provide stronger privacy. Although $(\varepsilon,\delta)$-DP is defined for all $\varepsilon,\delta\geq0$,
only smaller values of $\varepsilon$ and $\delta$ provide reasonable or acceptable privacy levels (consensus would place these bounds at around $\varepsilon\leq2$ and $\delta<\frac{1}{|D|}$~\cite{dwork2014algorithmic}). 

Any $(\varepsilon,\delta)$-DP mechanism is also $(\varepsilon',\delta')$-DP for any $\varepsilon'\geq\varepsilon$ and $\delta'\geq\delta$. Since larger values provide weaker privacy, it makes sense to find the lowest possible values of $\varepsilon$ and $\delta$. 
In particular, given an $(\varepsilon,\delta)$-DP mechanism $\M$, we say that $\varepsilon$ and $\delta$ are \textit{tight} if there are no $\varepsilon''\leq\varepsilon$ and $\delta''\leq\delta$ (not both equal) such that $\M$ is $(\varepsilon'',\delta'')$-DP. 

\subsection{Sampling in DP}\label{sec:samplingDP}

\textit{Sampling} (also known as  \textit{subsampling}) is a non-perturbative masking method of SDC that consists in publishing a (random) subset of the original dataset of records~\cite{hundepool2012statistical}. 
Sampling is also well established for DP, with many theoretical works~\cite{smith2009differential,ullman2017cs7880,balle2018privacy,steinke2022composition,li2012sampling} considering it and studying its effect as a preprocessing algorithm, usually in search of improving the privacy guarantees of DP mechanisms.  Balle et~al.~\cite{balle2018privacy} term this search as the \textit{problem of privacy amplification}: Given a sampling algorithm $\S$ and a DP mechanism $\M$, the goal is to bound the privacy parameters of $\M\circ\S$ by those of $\M$. 
In particular, $\M\circ\S$ must also be DP, which requires that the sampling technique be well adapted to the neighborhood definition~\cite{balle2018privacy}.
The expected privacy enhancement is given by the ``privacy amplification by sampling'' principle~\cite{li2012sampling,balle2018privacy}, which holds that the privacy guarantees of a DP mechanism can be improved, with respect to the original database, when applied to a random subset of records. The rationale behind DP amplification by sampling is as follows: As records are dropped, the privacy should increase, and since there is uncertainty about which records are actually sampled, an attacker will be unable to tell which data has or has not been sampled~\cite{steinke2022composition}.

DP amplification by sampling was first introduced~\cite{smith2009differential} for \textit{Poisson sampling}, which samples each element $x$ in $D$ with a fixed probability $p\in[0,1]$. 
Li et~al.~\cite{li2012sampling} later provide the first tight bounds on the privacy parameters of $\M\circ\S$ (proven tight in~\cite{balle2018privacy}): If $\M$ is an unbounded $(\varepsilon,\delta)$-DP mechanism, then $\M\circ\S$ is unbounded $(\ln(1+p(\e^\varepsilon-1)),\delta p)$-DP 
(i.e., $\e^\varepsilon-1$ and $\delta$ are reduced by a factor of $p$). 
Since the privacy parameters of $\M\circ\S$ are smaller than those of $\M$, the privacy amplification by Poisson sampling is clear. Additionally, Poisson sampling allows, as well, for a slightly more general non-uniform definition, where each element $x$ in a database can be sampled with a different probability $p_x\in[0,1]$. The tight privacy parameters of $\M\circ\S$ are given by the same formula with $p=\max_{x\in\X} p_x$~\cite{steinke2022composition}.

Balle et~al.~\cite{balle2018privacy} and Steinke~\cite{steinke2022composition} provide independent theorems for unbounded and bounded DP. 
In particular, the theorems provide Poisson sampling for unbounded DP, and \textit{sampling without replacement} (SWOR)~\cite{beimel2013characterizing,wang2016learning,lin2013benefits} for bounded DP, which uniformly samples a subset of $D$ of size $m<|D|$. They prove that, for both sampling algorithms, $\M\circ\S$ is tightly $(\ln(1+p(\e^\varepsilon-1)),\delta p)$-DP for $p=\max_{x\in\X}\Prob\{x\in\S(D)\}$.
In particular, Balle et~al.~\cite{balle2018privacy} define their sampling algorithm $\S$ as any algorithm over $\D$ that returns subsets of the input database $D\in\D$.
However, to obtain bounds on the privacy parameters of $\M\circ\S$, their theory requires certain assumptions (referred to as ``$d_Y$-compatibility'' in the paper) to be satisfied, which are not achieved for all $\S$ under their definition.

Bun et~al.~\cite{bun2023controlling} introduce other sampling strategies to pure DP, such as cluster and stratified sampling. 
The proposed strategies allow more flexibility in sampling, but they all still involve some form of uniform selection. 
Further, the authors conclude that some sampling strategies cannot enjoy the privacy amplification property. 
In addition, they present a general theorem that provides a lower bound on the tight parameters of $\M\circ\S$ for most sampling strategies $\S$ in bounded pure DP.

\textbf{How sampling is applied in DP.}
Sampling has been widely used to design and improve DP mechanisms. The privacy amplification provided by the most popular variants, Poisson sampling and SWOR, can also be translated into less DP perturbation~\cite{bun2023controlling,li2012sampling,fang2024privacy}. Indeed, consider an $\varepsilon$-DP mechanism $\M_\varepsilon$ (e.g., the Laplace mechanism). By applying Poisson sampling, we can obtain that $\M_\varepsilon\circ\S$ is $\varepsilon'$-DP with $\varepsilon'<\varepsilon$, but we can also translate this privacy amplification by reducing the noise (of the Laplace mechanism) as follows: We choose $\varepsilon''$ so that $\varepsilon=\ln(1+p(\e^{\varepsilon''}-1))$ and replace $\M_\varepsilon$ with $\M_{\varepsilon''}$ (which adds less noise). The result is $\M_{\varepsilon''}\circ\S$ satisfying $\varepsilon$-DP for the initial (unmodified) privacy budget~$\varepsilon$. However, we note here that such noise reduction \textit{does not account for} the potential utility loss caused by sampling records (see \Cref{sec:samplingExperiments}). 

In particular, Poisson sampling has been used as a preprocessing step to, for example, ``amplify'' $k$-medians and $k$-means DP clustering mechanisms~\cite{blocki2021differentiallyprivate} and dependency-graph generation for the publication of synthetic high-dimensional databases~\cite{chen2015differentially}.
Furthermore, the literature has also used sampling in alternative settings to preprocessing, like within iterations of mechanisms
in machine learning and in the analysis of noisy stochastic gradient descent~\cite{abadi2016deep,koskela2023practical,mironov2019renyi,ponomareva2023how,jiang2024calibrating}. For example, DP stochastic gradient descent (DP-SGD)~\cite{abadi2016deep} uses multiple iterations of Poisson sampling over the Gaussian mechanism.
In the context of stochastic learning with Rényi DP, sampling has also been studied theoretically for the Laplace mechanism and others~\cite{zhu2019poisson,jiang2024calibrating}.

Recently, Räisä et al.~\cite{raisa2024subsampling} studied the effect of sampling on the variance of DP-SGD at fixed privacy levels and concluded that less sampling always leads to a better privacy--utility tradeoff. Even though their conclusion is limited to this concrete utility measure and to binary databases, 
this calls into question the actual benefit of sampling as a preprocessing step to DP, which we will analyze empirically in the next section. 

\section{The Effect of Uniform Sampling on Utility}\label{sec:samplingExperiments}

In this section, we study whether the privacy amplification provided by uniform Poisson sampling and translation into less perturbation can indeed provide benefits to the privacy--utility tradeoff of unbounded approximate DP mechanisms.

Thus, in these experiments, we will be comparing the utility levels of a DP mechanism without sampling to those of the same mechanism with sampling---considering the corresponding noise reduction to meet identical privacy guarantees. Our experimentation covers two types of mechanisms: (1)~Basic canonical DP mechanisms of DP (i.e., Laplace, Gaussian, exponential, and report noisy max), in the context of statistic computation (precisely, the mean and mode); and (2)~Clustering, covering a mechanism~\cite{gupta2010differentially} that has been previously ``amplified'' with sampling~\cite{blocki2021differentiallyprivate}. 
For completeness, also note that the Gaussian and Laplace mechanisms have also been studied with sampling~\cite{abadi2016deep,jiang2024calibrating,zhu2019poisson,ponomareva2023how}.

All mechanisms we test allow for rescaling the privacy parameters, essentially calibrating the noise added. Thus, given a DP mechanism and denoting its $(\varepsilon,\delta)$-DP instantiations as $\M_{\varepsilon,\delta}$, we will be comparing the utility values of $\M_{\varepsilon,\delta}$ and $\M_{\varepsilon'',\delta''}\circ\S$. To ensure that both mechanisms satisfy $(\varepsilon,\delta)$-DP for the same privacy parameters, it is enough to select $\varepsilon''=\ln(\frac{\e^{\varepsilon}-(1-p)}{p})$ and $\delta''=\frac{\delta}{p}$ where $p$ is the sampling rate of $\S$.  
We note that since $\varepsilon\leq\varepsilon''$ and $\delta\leq\delta''$, the utility loss of $\M_{\varepsilon'',\delta''}$ is, generally, equal or lower than that of $\M_{\varepsilon,\delta}$; yet, the question remains on how the utility losses of $\M_{\varepsilon,\delta}$ and $\M_{\varepsilon'',\delta''}\circ\S$ compare. Note that comparing the utility guarantees under the same privacy level allows for a fair comparison of the tradeoff.

\subsection{Experiment Setup}\label{sec:mechanismsandutilitymetrics}

\textbf{Mean computation.} We protect the mean in two ways: Using the Laplace and Gaussian mechanisms. 
We consider two independent $\frac{\varepsilon}{2}$-DP Laplace (or $(\frac{\varepsilon}{2},\frac{\delta}{2})$-DP Gaussian) mechanisms, one for the sum query $f_{\mathrm{sum}}$ that sums all the values in the database, and one for the counting query $f_{\mathrm{count}}$ that counts how many records are in the database. The noisy mean is then obtained by dividing the noisy sum by the noisy count, which is an $\varepsilon$-DP mechanism (or $(\varepsilon,\delta)$-DP) by sequential composition and post-processing~\cite{dwork2014algorithmic}. This variation, called NoisyAverage, is a well-known DP mechanism to compute the mean that reduces the overall noise that would be necessary to protect the mean query function directly with Laplace or Gaussian noise~\cite{li2016differential}.

\textbf{Mode computation.} We compute the mode in four ways using report noisy max (RNM) and the exponential mechanism~\cite{dwork2014algorithmic}. RNM is used to determine which of the $k$ count queries $f_k$ has the maximum value, and thus we can use it to return a perturbed mode with DP protection. RNM achieves $\varepsilon$-DP by adding Laplace noise: For all $D\in\D$, it is defined as $\M^{f,\varepsilon}_{\mathrm{RNM}}(D) = \argmax_{i\in[k]}\{f_i(D) + z_i\}$ with $z_i\sim\Lap(\frac{\Delta f_i}{\varepsilon})$ (i.i.d.). A variation of RNM is obtained by adding exponential noise from $\operatorname{Exp}(\frac{\varepsilon}{2\Delta f_i})$ instead, which still satisfies $\varepsilon$-DP. In our case, the query functions act over disjoint support (i.e., the elements of $\X$), and thus RNM can be viewed as a parallel composition~\cite{mcsherry2009privacy} of $|\X|$ Laplace mechanisms. 
This fact allows us to obtain an $(\varepsilon,\delta)$-DP variant using Gaussian mechanisms (note that this is not generally true for RNM with Gaussian noise~\cite{lebensold2024privacy}).
In addition, we also protect the mode using the exponential mechanism by defining the score function as the count query minus the maximum value in $\X$ (this last term ensures the score function is negative, avoiding computational inaccuracies caused by large floating-point numbers).

\textbf{Clustering mechanisms.} Recall that DP clustering has previously been amplified via sampling~\cite{blocki2021differentiallyprivate}. Our experiment covers their tested $k$-median algorithm~\cite{gupta2010differentially}, achieving DP through the exponential mechanism. We also test a different $k$-means algorithm (due to some ambiguity of the original method). We choose the well-known DP version of the $k$-means clustering (i.e., Lloyd's algorithm)
introduced by Blum et~al.~\cite{blum2005practical} for this purpose. 
This mechanism, also known as DPLloyd~\cite{su2017differentially}, achieves DP by computing the centroids in each iteration with the Laplace NoisyAverage mechanism over each cluster. We refer to Su et~al.~\cite{su2017differentially} for further details.

\textbf{Utility metrics.}
To keep our plots consistent, we ensure that for all utility metrics $u(\M,D)$, larger values indicate worse utility (increased errors), and values close to 0 indicate better utility preservation. We are thereby providing intuition on the amount of error or inaccuracy.
For the mean computation, we stick to common practice and take $u(\M,D)$ as the mean percent error (MPE) between the real mean $\frac{f_{\mathrm{sum}}(D)}{f_{\mathrm{count}}(D)}$ and the output noisy mean.
For the mode computation, we take $u(\M,D)$ as the probability of incorrectly returning the argument of the maximum of $D$. 
For the $k$-median clustering mechanisms, we take the average of the $L^2$ distances of each record to the closest median (the average \textit{cost}~\cite{gupta2010differentially}). Finally, for DPLloyd, we take the normalized intracluster variance (NICV), defined as the average of the squares of the $L^2$ distance of each record to the centroid of the assigned cluster. NICV is a common metric used to evaluate $k$-means clustering approaches including DPLloyd~\cite{su2017differentially}. 

\textbf{Databases.} For the computation of mean and mode, we consider three well-known popular numerical databases in the field of SDC. For each database, we select two columns to use in our evaluations, considering each column as its own one-dimensional database. \Cref{tab:databases} shows the selected databases and columns, where we prioritized different numerical ranges for variability and simpler-to-understand attributes for each database (such as ages).  
However, we do not compute the mode over the columns of the Census database because multiple elements reach the maximum count for each column (in particular, no element repeats in \texttt{FEDTAX}).

\begin{table}[t]
    \centering
    \rowcolors[]{1}{white}{lightgray}
    \small
    \begin{tabular}{c c c c }
         Database & Columns & \bettershortstack[c]{Value\\range} & \bettershortstack[c]{Sensitivity\\bounds} \\
         Adult~\cite{becker1996adult} & \texttt{age} 
         & 17--90 & 0--125 \\
         \cline{2-4}
         (Size: 32\,561) & \texttt{hours-per-week} & 1--99 & 0--100 \\
         Census~\cite{brand2002reference} & \texttt{FEDTAX} & 1--21\,260 & 0--31\,889 \\
         \cline{2-4}
         (Size: 1\,080) & \texttt{FICA} & 6--7\,932 & 0--11\,890 \\
         Irish~\cite{ayala-rivera2016cocoa} & \texttt{Age} & 15--84 & 0--125 \\
         \cline{2-4}
         (Size: 66\,666) & \texttt{Education} & 1--10 & 1--10 \\
    \end{tabular}
    \caption{Databases employed in the experimentation.}
    \label{tab:databases}
\end{table}

\begin{figure*}[ht]
    \centering
    \includegraphics[width=0.32\textwidth]{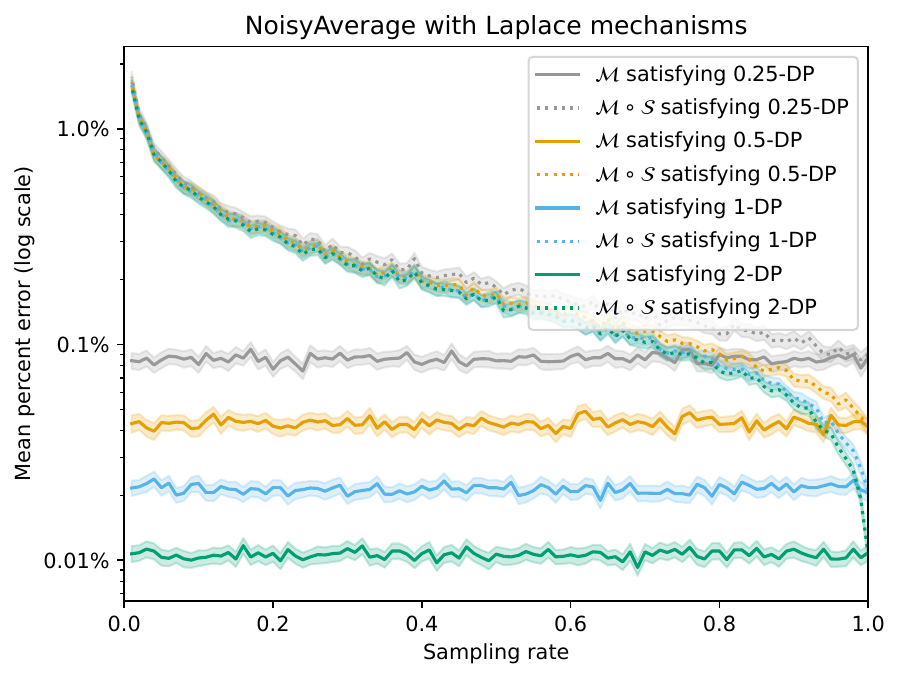}%
    \includegraphics[width=0.32\textwidth]{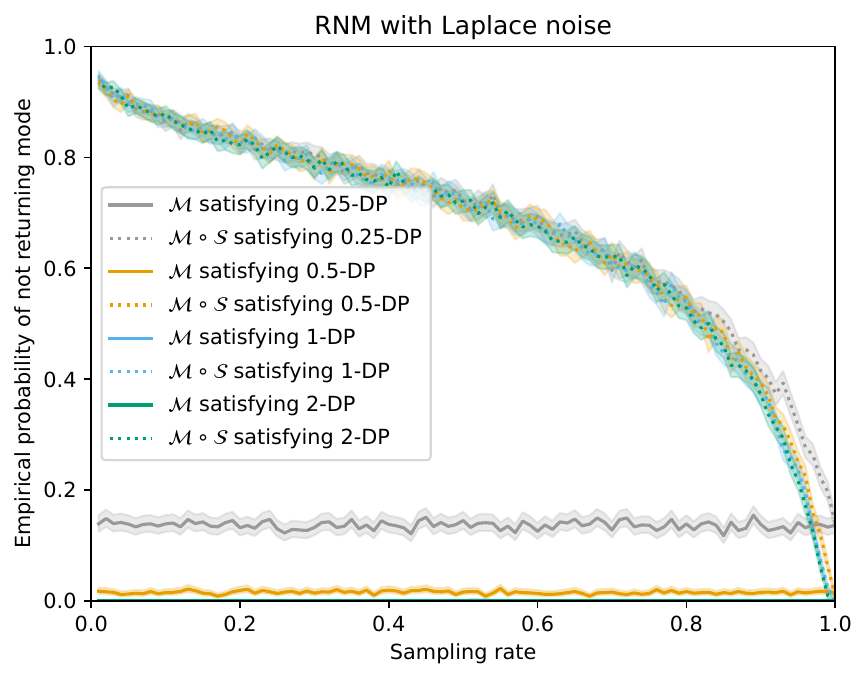}%
    \includegraphics[width=0.32\textwidth]{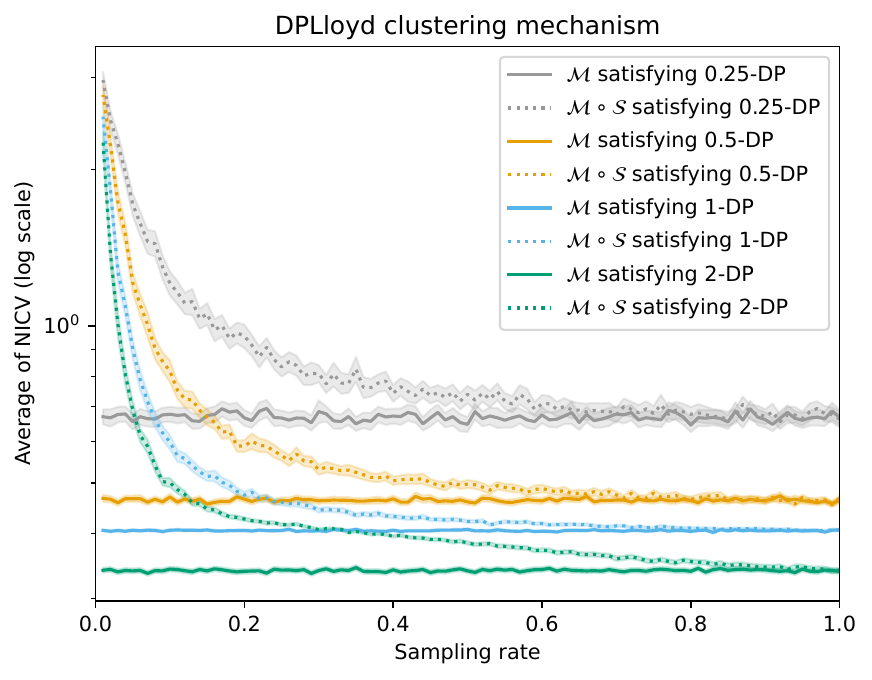}%
    \caption{Plots of the utility values of $\M$ and $\M\circ\S$ for the uniform Poisson sampling for the Adult database (and \texttt{age} column). The shaded areas correspond to a 95\% confidence interval (CI) for the mean of the utility metric (95\% Wilson CI for the mode).}
    \label{fig:uniformPoissonsampling}
\end{figure*}

In our computations, we will need bounds on the values of each column (e.g., to compute the sensitivity of the Laplace/Gaussian mechanism). Since DP is a property that does not depend on the choice of database, lower and upper bounds are usually chosen that do not necessarily match the range of values in the database. Following field practices~\cite{soria-comas2014enhancing}, we either select logical extremal bounds (e.g., $0$ to $125$ for ages) or $0$ to $\ceil{1.5\,\texttt{max\_{}value\_{}in\_{}database}}$ if no clear upper bound exists---note that this does not constitute a privacy violation, but the contrary, it is an estimation of the possible domain range meant to represent every database in $\D$~\cite{soria-comas2014enhancing}. The exact values chosen are shown in \Cref{tab:databases}.

We run DPLloyd on the Adult database~\cite{becker1996adult} under the same conditions as Su et~al.'s experiment~\cite{su2017differentially}: The clustering is performed over the six numeric columns of the database and for $k=5$ clusters. All values are (min-max) normalized to $[-1,1]$ as required by DPLloyd. 
The chosen $k$-median algorithm is not empirically evaluated in the original publications, but only theoretically~\cite{gupta2010differentially,blocki2021differentiallyprivate}. Therefore, following the mechanism requirements and due to large computational cost, we first generate a random two-column database over $\{1,\dots,100\}^2$. We sample 100 points using a Gaussian distribution with $\sigma=10$ (nearing to the closest integer) centered at four randomly selected accumulation points in $\{10,\dots,90\}^2$. The database is then normalized so the sensitivity is $1$ and we select $k=4$.

\subsection{Experiments and Results}\label{sec:samplingResults}

For every database and mechanism, we compute the utility metric values of the mechanism with and without sampling for various privacy parameters and sampling rates. We run the experiments for $\varepsilon\in\{0.25,0.5,1,2\}$. We use the optimal Gaussian mechanism~\cite{balle2018improving} that, unlike the classic version~\cite{dwork2014algorithmic}, is also defined for $\varepsilon\geq1$. In addition, parameter $\delta$ is set to $|D|^{-2}$ when working with the Gaussian mechanism (as suggested in the literature~\cite{dwork2014algorithmic}), the only mechanism that requires a non-zero $\delta$. We test every sampling algorithm $\S_p$ with sampling rate $p\in\{0.01,0.02,\dots,0.98,0.99\}$. Since all mechanisms and sampling algorithms are randomized, for each instantiation, we compute $u(\M_{\varepsilon,\delta},D)$ and $u(\M_{\varepsilon'',\delta''}\circ\S_p,D)$ 500 times (mean and DPLloyd), 2\,000 times (mode), or 20 times ($k$-median), and we always provide their means.

Our results are surprising as they show that the utility guarantees of $\M_{\varepsilon'',\delta''}\circ\S_p$ are worse than those of $\M_{\varepsilon,\delta}$ for all mechanisms, databases, privacy parameters $\varepsilon$ and $\delta$, and (almost all) sampling rates $p$ we tested. Räisä et~al.~\cite{raisa2024subsampling} make a similar observation for DP-SGD, but given the general interest in privacy amplification \cite{bun2023controlling,li2012sampling,fang2024privacy}, we consider it rather surprising that our results hold for all tested mechanisms. 

In \Cref{fig:uniformPoissonsampling}\footnote{
We provide all plots gallery in Appendix~\ref{sec:plotgallery}.}, we show that the utility values differ significantly for most sampling rates, with $\M\circ\S$ having worse utility than $\M$ (here, simplifying the notation). 
The difference becomes small near a sampling rate of 1, and we find a few rates where $\M\circ\S$ preserves utility better than $\M$.
We assume that this is due either to issues of floating-point precision, or rare beneficial random choices by the sampling algorithm.

Looking at the mechanisms independently, we note that the mean computation with sampling provides very small error, remaining less than $0.25\%$ MPE for sampling rates larger than $0.4$ and less than $2\%$ even for more abrasive rates near~$0$; yet, it always increases with respect to the mechanism without sampling. The mode computation expresses variations across the databases depending on the original distribution. For example, in the \texttt{hours-per-week} column in the Adult database, the mode represents more than half the results in the database, and thus both $\M$ and $\M\circ\S$ provide a perfect failure probability of $0$. For more varied data, like the \texttt{age} column in the Adult database, we see drastic utility losses with sampling, increasing the failure probability of RNM (Laplace) from under 18\% to over 60\% for most sampling rates. 
DPLloyd also exhibits a utility degradation under sampling, but contrary to the others, $u(\M\circ\S,D)$ remains quite close to $u(\M,D)$ until spiking at around $p=0.1$. The $k$-median clustering also shows a utility degradation similar to the previous plots; however, this does not contradict the theoretical utility evaluations performed on sampling~\cite{blocki2021differentiallyprivate} in which the sum of distances (rather than the average sum) is compared, resulting in bias with respect to the database and sample size.

In summary, our result shows that, for the tested mechanisms, it is preferable to apply mechanism $\M$ directly under the target privacy parameters than to rely on the privacy amplification of sampling to improve the privacy--utility tradeoff.

\section{Introducing Suppression to DP}\label{sec:introsuppression}

Our previous experimental results reveal that uniform Poisson sampling has detrimental effects on the utility that DP mechanisms yield: The utility gain from translating the privacy parameters is insufficient to counteract the utility loss caused by omitting records. 

Yet, records in databases are complex and diverse, and they have different degrees of vulnerability. Furthermore, experience shows that some records are harder to protect than others or have different costs for the protecting mechanism. 
Thus, any loss in utility that is caused by omitting records could be reduced when records are omitted strategically, for example, by targeting the hardest-to-protect or outlying records. 
This process is known as \textit{suppression}~\cite{hundepool2012statistical}, and it has been shown as a mechanism amplifier under syntactic privacy notions like $k$-anonymity~\cite{gramaglia2021glove,elemam2008protecting}. 
Nevertheless, to the best of our knowledge, no studies treat the effect of suppression for DP.

Thus, in the following, we introduce suppression to DP. Formally, suppression is a SDC non-perturbative masking technique like sampling~\cite{hundepool2012statistical}. 
In suppression, data values are deleted from the original dataset to eliminate easily identifiable features. Our suppression corresponds to \textit{whole-record suppression}~\cite{samarati2001protecting, gramaglia2021glove,elemam2008protecting,bayardo2005data,xu2014survey}, but we note that suppression can also refer to deleting specific data values of the records~\cite{hundepool2012statistical}.

We will formalize DP suppression as a generalization of sampling. While the state of the art on sampling in unbounded DP works exclusively with algorithms defined according to a uniform selection scheme~\cite{elemam2008protecting}, we define suppression completely general, covering any way of choosing or omitting records. 
Nevertheless, we acknowledge that our definition of suppression in DP does match the general definition of sampling by Balle et~al.~\cite{balle2018privacy}. We will provide bounds on the privacy parameters for all suppression algorithms and empirically evaluate a family of suppression algorithms.
In this way, we address an open question in the literature: What are the effects of sampling/suppression when defined more flexibly (e.g., in a non-uniform manner) on DP mechanisms?

\subsection{The Suppression Algorithm}

We now turn to investigating the effect of deleting some records on a DP mechanism. To keep our results well-defined, we assume that $\D$ is closed by subsets (or subdatabases), i.e., if $D\in\D$ and $C\subseteq D$, then $C\in\D$. This is the common assumption in the literature~\cite{dwork2014algorithmic}. Suppression is modeled by a \textit{suppression algorithm} $\S$ with domain $\D$ that, given any database $D\in\D$, deterministically or randomly outputs a subset of $D$ or, equivalently, suppresses a subset of $D$. Since databases $D\in\D$ are of finite size, we consider $\S(D)$ to be a discrete random variable that outputs subsets of $D$ (including the case where $\S$ is a deterministic function). Note that our definition does not impose any restrictions on the deletion process, nor does it establish relations between $\S(D_1)$ and $\S(D_2)$ for different $D_1,D_2\in\D$. In particular, completely different suppression techniques can be defined independently for each database in $\D$.
Thus, $\S$ is defined in a completely general way without any additional restrictions.

We introduce two types of suppression: database-dependent and database-independent. In \textit{database-independent} suppression, an element or subset of elements is deleted with the same probability regardless of the database to which it belongs, i.e., $\Prob\{C\subseteq\S(D_1)\}=\Prob\{C\subseteq\S(D_2)\}$ for all $D_1,D_2\in\D$ and for all $C\subseteq D_1,D_2$. Some examples of database-independent suppression consist of deleting records---deterministically or probabilistically---over or under a predefined threshold (e.g., deleting all individuals over the age of 100 or over the height of 2.10\,m). In particular, any same record is deleted with the same probability across all different databases (i.e., $\Prob\{x\in\S(D_1)\}=\Prob\{x\in\S(D_2)\}$ for all $x\in D_1,D_2$).
We refer to all other types of suppression as \textit{database-dependent}, where a same record may be deleted with different probabilities in two databases. This includes deleting records according to their mean or counts in the database, which varies across databases. 

We note that classic state-of-the-art sampling algorithms~\cite{li2012sampling,balle2018privacy,steinke2022composition} are all database-independent, with many~\cite{li2012sampling,balle2018privacy} also being \textit{uniform} (i.e., $\Prob\{x\in\S(D)\}$ is equal for all $D\in\D$ and $x\in D$).
Bun et~al.~\cite{bun2023controlling} introduce specific database-dependent sampling algorithms, but these contain some kind of uniform selection and are thus more limited in our suppression context.

\subsection{The Suppression Problem}\label{sec:suppressionproblem}

We extend the \textit{privacy amplification problem} of sampling~\cite{balle2018privacy} in order to define the \textit{suppression problem} as follows: Given an unbounded $(\varepsilon,\delta)$-DP mechanism $\M$ and a suppression algorithm $\S$, both with domain $\D$, what are the privacy and utility guarantees provided by $\M\circ\S$? In particular, with this problem, we are interested in understanding under which conditions $\M\circ\S$ has beneficial properties, and whether they improve over those of $\M$. 
Posing these questions, we are interested in understanding exactly when $\M\circ\S$ also satisfies approximate DP, and with which privacy parameters. In addition, given our experimental results (see \Cref{sec:samplingExperiments}), we highlight the importance of knowing the effect on the output utility of $\M\circ\S$ compared to that of $\M$.

The suppression problem already becomes relevant when discussing database-independent vs.\ -dependent suppression. 
Database-dependent suppression is better suited to the deletion of vulnerable or outlier records because these records usually depend on the database to which they belong. However, such records impose a cost on the privacy parameters. In essence, DP must protect or take into account any change between databases, which consumes privacy budget. For these reasons, differences between databases which, on the one hand, provide utility improvements in database-dependent suppression, may incur, on the other, costs for the privacy guarantees. In particular, $\S$ needs to respect the neighborhood relation to ensure low privacy parameters~\cite{balle2018privacy}. We illustrate this phenomenon in our privacy results in \Cref{sec:suppressiontheorem}.

In the following, we provide different answers to the suppression problem covering how suppression affects the privacy (\Cref{sec:suppressiontheorem}) and utility guarantees (\Cref{sec:experiments}), ultimately seeing that our suppression does not improve over sampling. 

\section{The Effect of Suppression on Privacy}\label{sec:suppressiontheorem}

In this section, we study how the privacy guarantees are affected by suppression algorithms. As previously mentioned, we will assume that $\M$ satisfies unbounded $(\varepsilon,\delta)$-DP, and study when $\M\circ\S$ satisfies unbounded $(\varepsilon^\S,\delta^\S)$-DP, deriving expressions for $\varepsilon^\S$ and $\delta^\S$. We will provide bounds independent of the choice of mechanism $\M$, and thus show the worst-case bounds with respect to $\M$.

Our goal is to show not only how specific targeted suppression, such as deleting outliers, affects the privacy parameters, but also the effect of any possible alternative suppression algorithm. 
This allows to identify which and how records can be deleted to improve the privacy guarantees, and thus we pave the way for data curators to easily learn what the privacy guarantees are after deleting exactly the records they want. To stay true to this, we want to impose the fewest conditions on $\S$ to provide the most general results possible. 

We note that $\S$ does not necessarily satisfy $(\varepsilon,\delta)$-DP and therefore the DP composition rules cannot be applied. Particularly for pure DP, since $\S$ only outputs subsets of the input database, it is possible that $\S(D)$ outputs a subset which cannot be output by $\S(D')$, violating the $\varepsilon$-DP definition. Therefore, the only algorithm $\S$ that satisfies pure DP is the mechanism that deletes all records, i.e., $\S(D)=\varnothing$ for all $D\in\D$, which satisfies $0$-DP. We note that $\M\circ\S$ can still satisfy DP even when $\S$ is non-DP.

In the following, we first tackle deterministic suppression (\Cref{sec:deterministicsuppression}), presenting some interesting results that show an initial understanding how deleting records affects the privacy parameters. Due to the limitations of this type of suppression in DP, we then study probabilistic suppression in a general way (\Cref{sec:mainsuppression}) and provide a specific probabilistic suppression strategy for outlier deletion (\Cref{sec:outlierscoresuppressiontheorem}). 

\textbf{All proofs of our results can be found in Appendix~\ref{sec:proofs}.}

\subsection{Deterministic Suppression}\label{sec:deterministicsuppression}

It is well known that $(\varepsilon,\delta)$-DP is a worst-case metric since any $(\varepsilon,\delta)$-DP mechanism $\M$ must satisfy $\Prob\{\M(D)\in\meas\}\leq\e^{\varepsilon}\Prob\{\M(D')\in\meas\}+\delta$ for all neighboring $D,D'\in\D\coloneqq\operatorname{Domain}(\M)$ (and all measurable $\meas\subseteq\Range(\M)$). One of the first results where we would intuitively expect privacy amplification is in reducing the number of inequalities that must be satisfied, which can easily be done by reducing the domain $\D$ of $\M$. In particular, by excluding the hardest-to-satisfy inequalities, we can lower the values of $\varepsilon$ and $\delta$, potentially obtaining that $\M$ over this reduced domain is $(\varepsilon',\delta')$-DP with $\varepsilon'<\varepsilon$ and $\delta'<\delta$.
We can obtain a domain reduction if we have a deterministic suppression algorithm that verifies $\S(\D)\subsetneq\D$, thus reducing the input of mechanism $\M$ from $\D$ to $\S(\D)$. 

\Cref{th:deterministicsuppression} shows how the privacy parameters of $\M$ are affected when preprocessed with a deterministic suppression algorithm $\S$. There are two factors that affect the privacy parameters of $\M\circ\S$: First, the privacy improvement we can gain by restricting the domain of $\M$ as we explained; and second, the effect on the privacy parameters caused by applying $\S$, which closely follows from the known preprocessing result on $c$-stable transformations~\cite{mcsherry2009privacy}.

\begin{theorem}[restate = THdeterminisitcsuppression, name = Effect of deterministic suppression]\label{th:deterministicsuppression}
    Let $\M$ be an $(\varepsilon,\delta)$-DP mechanism and $\S$ be a deterministic suppression algorithm, both with domain $\D$. Let $\intset$ be such that $\S(\D)\subseteq\intset\subseteq\D$, and suppose the restriction of $\M$ to domain $\intset$, $\M|_{\intset}$, is $(\varepsilon_\intset,\delta_\intset)$-DP. 
    Then, $\M\circ \S=\M|_{\intset}\circ\S$ with domain $\D$ is $(\varepsilon_\intset\,\Delta_\intset\S,\delta_\intset\sum^{\Delta_\intset\S-1}_{k=0}\e^{\varepsilon_\intset k})$-DP, where the \textit{sensitivity} of $\S$ is
    \[
        \Delta_\intset\S \coloneqq \sup_{\substack{D,D'\in\D\\\text{neighb.}}}
        \dist_\intset(\S(D),\S(D')),
    \]
    and $\dist_{\intset}(\S(D),\S(D'))$ is the minimum number of neighboring databases in $\intset$ needed to go from $\S(D)$ to $\S(D')$ (see~\cite{guerra-balboa2024composition}).
\end{theorem}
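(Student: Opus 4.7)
The plan is to decompose the argument into two independent ingredients already hinted at in the theorem's motivation: a domain-restriction step that lets us work with $\M|_\intset$ instead of $\M$, and a group-privacy style chaining argument that accounts for the ``stability'' of $\S$ measured by $\Delta_\intset\S$.

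First I would observe that since $\S$ is deterministic and $\S(\D)\subseteq\intset$, the output of $\M\circ\S$ on any $D\in\D$ is obtained by evaluating $\M$ only on databases in $\intset$. Hence as stochastic maps $\M\circ\S = \M|_\intset\circ\S$, and we may use the hypothesis that $\M|_\intset$ is $(\varepsilon_\intset,\delta_\intset)$-DP with neighboring pairs taken inside $\intset$. This reduction is the conceptual core: the ``hard'' inequalities between databases in $\D\setminus\intset$ are no longer relevant, because $\S$ never outputs such databases.

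Next, fix an arbitrary pair of neighboring databases $D,D'\in\D$ and a measurable $\meas\subseteq\Range(\M)$. By definition of $\Delta_\intset\S$, the databases $\S(D)$ and $\S(D')$ can be joined inside $\intset$ by a chain $C_0=\S(D),C_1,\dots,C_k=\S(D')$ of length $k=\dist_\intset(\S(D),\S(D'))\leq\Delta_\intset\S$, where consecutive $C_i,C_{i+1}$ are neighboring in $\intset$. Applying the $(\varepsilon_\intset,\delta_\intset)$-DP inequality for $\M|_\intset$ along each link and iterating, a standard group-privacy induction yields
\[
\Prob\{\M|_\intset(C_0)\in\meas\}\leq\e^{\varepsilon_\intset k}\Prob\{\M|_\intset(C_k)\in\meas\}+\delta_\intset\sum_{j=0}^{k-1}\e^{\varepsilon_\intset j}.
\]
Since $k\leq\Delta_\intset\S$ and both $\e^{\varepsilon_\intset k}$ and the geometric sum are monotone in $k$, we can upper bound the right-hand side by the quantity with $k$ replaced by $\Delta_\intset\S$, which gives exactly the claimed $(\varepsilon_\intset\Delta_\intset\S,\delta_\intset\sum_{j=0}^{\Delta_\intset\S-1}\e^{\varepsilon_\intset j})$-DP guarantee for $\M|_\intset\circ\S=\M\circ\S$.

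The only subtleties I expect are bookkeeping ones. The degenerate case $\S(D)=\S(D')$ (distance $0$) must be handled so the empty-sum contribution $\delta$-term is $0$ and the $\e^{0}=1$ factor leaves the inequality trivially satisfied; this drops out of the induction base case. One also needs $\Delta_\intset\S$ to be finite for the bound to be meaningful, which follows from $\S$ being a deterministic suppression (both $\S(D)$ and $\S(D')$ differ from each other by at most $|D|+|D'|$ deletions/additions, each realizable via neighbors in $\intset$ by the closed-under-subsets assumption on $\D$). The main work, then, is just invoking the group-privacy lemma with the correct step count $\Delta_\intset\S$, which is the cleanest way to combine the preprocessing/$c$-stability perspective of~\cite{mcsherry2009privacy} with the domain-restriction observation.
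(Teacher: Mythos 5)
Your proposal is correct and follows essentially the same route as the paper's proof: identify $\M\circ\S=\M|_\intset\circ\S$, chain the $(\varepsilon_\intset,\delta_\intset)$-DP inequality along a path of $k=\dist_\intset(\S(D),\S(D'))$ neighboring databases in $\intset$, and then bound $k$ by $\Delta_\intset\S$ using monotonicity of $\e^{\varepsilon_\intset k}$ and the geometric sum. One side remark in your last paragraph is wrong, though harmless: $\Delta_\intset\S$ is \emph{not} always finite for deterministic suppression (the paper exhibits natural database-dependent suppressions with $\Delta_\D\S=\infty$, and $\intset$ need not be closed under subsets, so the claimed chain through $\intset$ need not exist); the correct handling is the paper's, namely that when $\Delta_\intset\S=\infty$ the stated guarantee is vacuous and the theorem holds trivially, while for finite $\Delta_\intset\S$ every pairwise distance is finite and your chaining argument goes through as written.
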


In the theorem, we show the effect of restricting the domain of $\M$ to intermediate subsets $\intset$ (such that $\S(\D)\subseteq\intset\subseteq\D$), since it is possible that the subset $\intset$ that provides the lowest privacy parameters is not $\S(\D)$.
Since the result holds for every choice of $\intset$, we can choose $\intset$ that minimizes the privacy parameters of $\M\circ\S$. We note that finding the minimum can be difficult since smaller $\intset$ intuitively yield smaller (or equal) values of $\varepsilon_\intset$ and $\delta_\intset$ but larger (or equal) values of $\Delta_\intset\S$. More formally, given $\intset\subseteq\intset'$, we have that $\varepsilon_\intset\leq\varepsilon_{\intset'}$ and $\delta_\intset\leq\delta_{\intset'}$ (if chosen tightly), but, at the same time, we have $\Delta_{\intset'}\S\leq\Delta_{\intset}\S$. In particular, we note that the smallest sensitivity is
\[
    \Delta_{\D}\S = \sup_{\substack{D,D'\in\D\\\text{neighb.}}} |\S(D)\Delta\S(D')| \leq \Delta_{\intset}\S.
\]

Moreover, there can exist $\intset\subsetneq\D$ such that $\Delta_\D\S=\Delta_\intset\S$ (e.g., $\intset=\bigcup_{D\in\D}\mathcal{P}(\S(D))$ where $\mathcal{P}$ denotes the power set). In particular, we also note that if $\delta=0$, then $\M\circ\S$ is $(\varepsilon\Delta_\D\S)$-DP, remaining in pure DP.

We now provide an applied example of \Cref{th:deterministicsuppression} and some of its consequences.

\begin{example}[Laplace mechanism with deterministic suppression]\label{ex:Laplace}
    Recall that the Laplace mechanism $\M_{f,b}$ of the query function $f\colon\D\to\R^k$ that adds noise drawn from the Laplace distribution $\Lap(b)$ with scale $b$ to each coordinate of $f(D)$ satisfies $\varepsilon$-DP with $\varepsilon=\frac{\Delta f}{b}$~\cite{dwork2014algorithmic}, where the sensitivity of $f$,
    \[
        \Delta f = \sup_{\substack{D,D'\in\D\\\text{neighb.}}} \norm{f(D)-f(D')}_1,
    \]
    depends on $\D$ and thus on the range of $f$. Considering a deterministic suppression $\S$ that reduces the domain (i.e., $\S(\D)\subsetneq\D$) and selecting $\intset\coloneqq\bigcup_{C\in\D}\mathcal{P}(\S(C))$, we obtain by \Cref{th:deterministicsuppression} that $\M_{f,b}\circ\S$ is $(\varepsilon_\intset\Delta_{\intset}\S)$-DP with
    \[
        \varepsilon_\intset=\frac{\Delta f|_{\intset}}{b}
        \quad\text{and}\quad \Delta f|_{\intset} \coloneqq \sup_{\substack{D,D'\in\intset\\\text{neighb.}}} \norm{f(D)-f(D')}_1 \leq \Delta f.
    \]
    
    That is, we obtain $\varepsilon_\intset<\varepsilon$ if and only if $\Delta f|_{\intset}\,\Delta_{\intset}\S<\Delta f$.
    This improvement can be leveraged to increase utility by raising the privacy parameter, which adds less noise (i.e., noise drawn from $\Lap(b')$ with $b'<b$) accordingly. Selecting $b'$ such that $\frac{\Delta f|_{\intset}}{b'}\Delta\S=\frac{\Delta f}{b}$ 
    holds, will ensure that the privacy parameters of both mechanisms remains constant and allow us to evaluate the effect of suppression (cf. \Cref{sec:experiments}).
\end{example}

Furthermore, bear in mind that $\varepsilon_{\intset}$ and $\delta_{\intset}$ depend on $\M$ and may change for different mechanisms. In particular,
there are mechanisms that cannot benefit from a domain reduction, like a Laplace or Gaussian mechanism for a counting query (since $\Delta f|_\intset=\Delta f=1$ for all $\intset$). Therefore, $\M\circ\S$ is always $(\varepsilon\Delta_\intset\S,\delta\sum^{\Delta_\intset\S-1}_{k=0}\e^{\varepsilon k})$-DP for all $(\varepsilon,\delta)$-DP mechanisms $\M$, which is an independent bound on the choice of $\M$. 

Moreover, \Cref{th:deterministicsuppression} provides a tight bound: For all privacy parameters and all $\S$, there exists a DP mechanism $\M$ such that $\M\circ\S$ is tightly $(\varepsilon_{\intset}\Delta_\intset\S,\delta_{\intset}\sum^{\Delta_\intset\S-1}_{k=0}\e^{\varepsilon_{\intset} k})$-DP if $\delta_{\intset}\sum^{\Delta_\intset\S-1}_{k=0}\e^{\varepsilon_{\intset} k}<1$ (see \Cref{prop:deterministicsuppressiontight}).
Therefore, \Cref{th:deterministicsuppression} provides a complete characterization for deterministic suppression, indicating that there are suppression algorithms $\S$ and mechanisms $\M$ such that $\M\circ\S$ provides weaker privacy than $\M$, since $\Delta_\intset\S$ can potentially be greater than~$1$.

When we have a suppression algorithm with sensitivity $\Delta_{\intset}\S=1$, we obtain that the privacy parameters given by \Cref{th:deterministicsuppression} remain constant (or decrease, if so by a domain reduction). By definition, all database-independent suppression algorithms $\S$ have sensitivity $\Delta_\intset\S=1$, such as fixing a subset $A$ of the universe of records $\X$ and defining $\S_A(D)=D\cap A$ for all $D\in\D$. This can be understood as removing the values outside $A$, the set of elements with ``good'' properties, or the records that are not outlying. For instance, we can use this to remove predefined extreme values, such as super-centenarians in an age database, or remote locations in a location database. In both of these examples, the suppressed records are defined independently of the choice of database, i.e., using public or common knowledge to designate people over a certain age as outliers or to define which map areas are remote.

Alternatively, database-dependent suppression can be useful for outlier deletion because it does not require any knowledge and allows suppression on a per-database basis. As a simple example, consider the database class $\D$ of databases containing people's ages (ranging from 0 to the maximum verified age) and other data. Applying a database-independent suppression that deletes all supercentenarians may make sense for many $D\in\D$, but the results can become skewed for specific databases in $\D$ such as a superagers database.

However, as covered in \Cref{sec:introsuppression}, DP must account for changes in-between databases, which increases the privacy parameters when applying a database-dependent suppression. 
In this case, this privacy degradation is represented by the sensitivity of $\S$, which is large or even infinite when defining a suppression strategy specifically to delete outliers or distant records. For example, deleting all records whose average distance to the other records in the database exceeds a certain threshold (\Cref{prop:determinsticsuppressionwithinfinitesensitivity}) or deleting the top $P\%$ of records that are furthest away from all other records in the database (for $P\leq 50$; \Cref{prop:determinsticsuppressionwithinfinitesensitivity}) are both suppression algorithms with $\Delta_{\D}\S=\infty$ (and thus $\Delta_{\intset}\S=\infty\geq \Delta_{\D}\S$ for all $\intset$). In these examples, adding or removing a record can have a large effect on the distance to the rest of the records in the database, and thus $\S(D)$ and $\S(D')$ can potentially be very different, which leads to $\Delta_{\D}\S=\infty$ and no DP guarantees.

In general, we find that many database-dependent suppression algorithms defined to suppress outliers require large or even infinite sensitivities, thus increasing the privacy parameters to unmanageable levels. Furthermore, while it is theoretically possible to construct a deterministic database-dependent suppression algorithm $\S$ with $\Delta_\D\S=1$ (see \Cref{prop:deterministicsuppressionDataDependentSensitivity1}), we have not found any that correspond to a meaningful way of deleting outliers.

\subsection{Probabilistic Suppression}\label{sec:mainsuppression}

In this section, we consider probabilistic suppression and extend the privacy evaluation of suppression to the whole spectrum, covering any suppression algorithm. Precisely, we provide results showing how the privacy parameters of $\M$ are affected when preprocessing with any probabilistic suppression and when suppression yields privacy amplification. 

Deterministic suppression can be viewed as a special case of probabilistic suppression, and thus our theorems generalize the results of the previous section. In particular, probabilistic suppression can still provide cases where $\M\circ\S$ is not DP, as we obtained in \Cref{sec:deterministicsuppression}. Therefore, we find it convenient to exclude such cases from our main theorem (\Cref{th:unboundedcase}) in order to provide a concise result, and we later explain how we can extend the theorem to the rest of the suppression algorithms, including those that do not achieve DP. 

Our proofs follow the steps of the existing theorem on Poisson sampling by Li et~al.~\cite{li2012sampling} but with our generalized suppression algorithm $\S$. We find that the newer sampling theorems~\cite{steinke2022composition,balle2018privacy} require additional conditions that do not directly generalize, or are inapplicable, to the more general suppression. Like these sampling results, our theorems work for any $(\varepsilon,\delta)$-DP mechanism $\M$ and the bounds are independent of the choice of $\M$. The bounds given are therefore worst-case with respect to $\M$, i.e., they represent bounds to the largest possible privacy parameters over all DP mechanisms.

The essential steps in the proof are bounding the privacy parameters of $\M\circ\S$ with that of $\M$ using the law of total probability, i.e., 
\[
    \Prob\{\M(\S(D))\in\meas\} = \quad\smashoperator[lr]{\sum_{C\in\supp(\S(D))}}\quad\Prob\{\M(C)\in\meas\}\Prob\{\S(D)=C\},   
\]
and finding a relation between the probability measures of $\S(D)$ and $\S(D')$ (for all neighboring $D,D'\in\D$). Here, finding a good relation is the challenging part. For the state-of-the-art uniform sampling~\cite{li2012sampling,balle2018privacy,steinke2022composition}, this relation is simply constant over the support of $\S(D)$, but the relation for general $\S$ can be hard to define and very complex in some exceptional cases. 

Therefore, as mentioned, we find it convenient to exclude these hard cases for now by assuming that $\S$ satisfies the \textit{support condition}: For all unbounded-neighboring $D,D'\in\D$ of the form $D'=D_{+y}\coloneqq D\uplus\{y\}$ and all $C\subseteq D$, there is a non-zero chance that $\S(D)$ outputs $C$ if and only if there is a non-zero chance that $\S(D')$ outputs $C$ or $C_{+y}$; or, formally, $C\in\supp(\S(D))$ if and only if $C\in\supp(\S(D'))$ or $C_{+y}\in\supp(\S(D'))$. 
We can think of this condition as basically ensuring that if an output $C$ is possible for $\S(D)$, then $C$ or $C_{+y}$ is also possible for $\S(D')$, which avoids dividing by $0$ in \Cref{th:unboundedcase}.

\begin{theorem}[restate = THgeneralunboundedtheorem, name = Suppression theorem]\label{th:unboundedcase}
    Let $\M$ be a mechanism that satisfies unbounded $(\varepsilon,\delta)$-DP and $\S$ be a suppression algorithm that satisfies the support condition, both with domain $\D$. Then, $\M\circ\S$ is unbounded $(\varepsilon^{\S},\delta^{\S})$-DP with 
    \[
        \varepsilon^{\S} = \sup_{\substack{D,D'\in\D\\\text{neighb.}}} \varepsilon^{\S}_{D,D'} \quad \text{and} \quad \delta^{\S} = \sup_{\substack{D,D'\in\D\\\text{neighb.}}} \delta^{\S}_{D,D'},
    \]
    where $\varepsilon^{\S}_{D,D'}$ and $\delta^{\S}_{D,D'}$ are as follows:
    If $D'=D_{+y}$, then
    \begin{align*}
        \e^{\varepsilon^{\S}_{D,D'}}=\quad\smashoperator[l]{\max_{\substack{C\in\supp(\S(D))}}}\frac{\Prob\{\S(D)=C\}}{\Prob\{\S(D')=C\}+\e^{-\varepsilon}\Prob\{\S(D')=C_{+y}\}}
    \end{align*}
    and
    \[
        \delta^{\S}_{D,D'}= \ \ \delta\ \ \smashoperator[l]{\sum_{\substack{C\in\supp(\S(D))}}}\frac{\Prob\{\S(D)=C\}\,\e^{-\varepsilon}\Prob\{\S(D')=C_{+y}\}}{\Prob\{\S(D')=C\}+\e^{-\varepsilon}\Prob\{\S(D')=C_{+y}\}},
    \]
    and, since the values are not symmetric with respect to $D,D'$, 
    \begin{align*}
        \e^{\varepsilon^{\S}_{D',D}}=\quad \smashoperator[lr]{\max_{\substack{C\in\supp(\S(D))}}}\quad\frac{\Prob\{\S(D')=C\}+\e^{\varepsilon}\Prob\{\S(D')=C_{+y}\}}{\Prob\{\S(D)=C\}}
    \end{align*}
    and
    \[
        \delta^{\S}_{D',D}=\ \ \delta\ \ \smashoperator[lr]{\sum_{C\in\supp(\S(D))}}\quad \Prob\{\S(D')=C_{+y}\} = \delta \Prob\{y\in\S(D')\}.
    \]
\end{theorem}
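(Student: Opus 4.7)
The plan is to apply the law of total probability to both $\Prob\{\M(\S(D))\in\meas\}$ and $\Prob\{\M(\S(D'))\in\meas\}$, and then use the $(\varepsilon,\delta)$-DP of $\M$ on pairs of subsets of $D,D'$ that are themselves unbounded neighbors. Because unbounded neighboring is asymmetric (one record is either added or removed), the inequalities in the two directions take different forms, which is why the statement splits into two cases $(\varepsilon^{\S}_{D,D'},\delta^{\S}_{D,D'})$ and $(\varepsilon^{\S}_{D',D},\delta^{\S}_{D',D})$; the final parameters $\varepsilon^{\S},\delta^{\S}$ follow by taking suprema over all neighboring pairs. I would fix $D'=D_{+y}$ throughout and treat the two directions separately.

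For the first direction (bounding $\Prob\{\M(\S(D))\in\meas\}$ by $\Prob\{\M(\S(D'))\in\meas\}$), the key observation is that, for each $C\in\supp(\S(D))$, the subsets $C$ and $C_{+y}$ are themselves unbounded neighbors, so DP of $\M$ gives $\Prob\{\M(C)\in\meas\}\leq\e^{\varepsilon}\Prob\{\M(C_{+y})\in\meas\}+\delta$. Writing $a_C\coloneqq\Prob\{\S(D')=C\}$ and $b_C\coloneqq\Prob\{\S(D')=C_{+y}\}$, this inequality rearranges into
\[
    \Prob\{\M(C)\in\meas\}\leq\frac{a_C\Prob\{\M(C)\in\meas\}+b_C\Prob\{\M(C_{+y})\in\meas\}}{a_C+\e^{-\varepsilon}b_C}+\frac{\e^{-\varepsilon}b_C\,\delta}{a_C+\e^{-\varepsilon}b_C}.
\]
Multiplying by $\Prob\{\S(D)=C\}$, summing over $C\in\supp(\S(D))$, pulling out $\sup_C\Prob\{\S(D)=C\}/(a_C+\e^{-\varepsilon}b_C)$ as $\e^{\varepsilon^{\S}_{D,D'}}$, and recognizing the leftover additive term as $\delta^{\S}_{D,D'}$ produces the claimed inequality, provided the remaining sum $\sum_C(a_C\Prob\{\M(C)\in\meas\}+b_C\Prob\{\M(C_{+y})\in\meas\})$ collapses to $\Prob\{\M(\S(D'))\in\meas\}$. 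That collapse is exactly what the support condition delivers: every $C'\in\supp(\S(D'))$ is indexed by a unique $C\in\supp(\S(D))$, either as $C$ itself (contributing through $a_C$) or as $C_{+y}$ (contributing through $b_C$), with each $C'$ counted once.

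For the reverse direction (bounding $\Prob\{\M(\S(D'))\in\meas\}$ by $\Prob\{\M(\S(D))\in\meas\}$), I would expand the former over $\supp(\S(D'))$, split the terms by whether $y$ lies in the output, and re-index both pieces by $C\in\supp(\S(D))$ using the same support-condition bijection. On the $y$-containing pieces I apply the opposite DP inequality $\Prob\{\M(C_{+y})\in\meas\}\leq\e^{\varepsilon}\Prob\{\M(C)\in\meas\}+\delta$, so that every $C$ accumulates the coefficient $a_C+\e^{\varepsilon}b_C$ in front of $\Prob\{\M(C)\in\meas\}$. Dividing and multiplying by $\Prob\{\S(D)=C\}$ and taking the supremum of the ratio produces $\e^{\varepsilon^{\S}_{D',D}}$, while the accumulated $\delta$-mass is $\delta\sum_C b_C=\delta\Prob\{y\in\S(D')\}$, which is exactly $\delta^{\S}_{D',D}$.

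The main obstacle is the careful bookkeeping between $\supp(\S(D))$ and $\supp(\S(D'))$: both manipulations require the re-indexing to be exhaustive (every $C'\in\supp(\S(D'))$ is covered) and well-defined (the denominator $a_C+\e^{-\varepsilon}b_C$ does not vanish when $\Prob\{\S(D)=C\}>0$). The biconditional in the support condition is exactly what guarantees both properties, by pairing every support point of $\S(D)$ with at least one of $C, C_{+y}$ in $\supp(\S(D'))$, and conversely. Without it, several degenerate cases (vanishing denominators and stray support points) would have to be argued separately, which is why the theorem is stated modulo this condition in the first place.
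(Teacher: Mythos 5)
Your proposal is correct and follows essentially the same route as the paper's proof: the law of total probability, the rearranged DP inequality between $C$ and $C_{+y}$ with the weights $\Prob\{\S(D')=C\}+\e^{\mp\varepsilon}\Prob\{\S(D')=C_{+y}\}$, the re-indexing of $\supp(\S(D'))$ by whether $y$ is in the output, and the support condition used exactly as you describe (coverage of the re-indexing and non-vanishing denominators). The only difference is presentational—you apply the DP bound per $C$ before summing, while the paper splits the sum first—so there is nothing substantive to add.
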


Our theorem always satisfies $\delta^\S\leq\delta$, but $\varepsilon^\S$ can be larger, equal, or smaller than $\varepsilon$. 
Therefore, our theorem does not always show a privacy amplification, but rather shows how the parameters change through a suppression algorithm, and gives us an intuition on how deleting records affects the privacy parameters. 
The bound we provide is not tight in general, but it is for some specific algorithms~$\S$, such as Poisson sampling. 

Even though \Cref{th:unboundedcase} is not generally tight, it provides an intuition on how the privacy budget is affected by suppression. When the values of $\frac{\Prob\{\S(D)=C\}}{\Prob\{\S(D_{+y})=C\}}$ and $\frac{\Prob\{\S(D)=C\}}{\Prob\{\S(D_{+y})=C_{+y}\}}$ (or their inverses) remain small, $\varepsilon^\S_{D,D'}$ (or $\varepsilon^\S_{D',D}$) remains small; but in other cases, $\max\{\varepsilon^\S_{D,D'},\varepsilon^\S_{D',D}\}$ takes on larger values, increasing the privacy parameters of $\M\circ\S$.
In particular, we see at play the fact that DP must always factor in changes between neighboring databases, as mentioned in \Cref{sec:introsuppression}.
For example, if we define $\S$ as probabilistically deleting those points that are furthest away from the mean, we must take into account the differences in distribution caused by adding any single record in any database.  

Our theorem provides an upper bound on the privacy guarantees of every suppression algorithm satisfying the support condition. The complexity of the equations in \Cref{th:unboundedcase} are just a consequence of the potential complexity of suppression algorithms. Therefore, \Cref{th:unboundedcase} is more useful in evaluating specific suppression strategies, as we will do in \Cref{sec:outlierscoresuppressiontheorem}. In addition, it also provides the tight bound for Poisson sampling~\cite{li2012sampling} and our tight bound for deterministic suppression with $\Delta_\D\S=1$ we provided in \Cref{th:deterministicsuppression}.

As mentioned earlier, we can adapt the proof of \Cref{th:unboundedcase} to obtain the result for all suppression/sampling algorithms $\S$. The full result is given in \Cref{th:generalunboundedtheorem}: Its idea is to assign each subset $C$ that does not satisfy the support condition with another term $C^*$ that does, so that the bound can be defined. However, there are multiple ways to assign $C^*$ to $C$, each giving a different bound of the privacy parameters.
In this case, the privacy parameters increase for assignments with large $|C\Delta C^*|$, and we lose the guarantee that $\delta^\S\leq\delta$ or that $\M\circ\S$ is pure DP if $\M$ is pure DP from \Cref{th:unboundedcase}.

In summary, \Cref{th:unboundedcase} (and the general \Cref{th:generalunboundedtheorem}) tells us that some probabilistic suppression can provide privacy amplifications just as sampling does, even without the improvement provided by domain reduction; while other suppression strategies can end up with larger privacy parameters, especially if $\S$ varies significantly between neighboring databases. However, even though there are specific examples that increase the values of the privacy parameters (e.g., the deterministic ones), we are unable to provide proof of the tightness of the results.
Finally, we note that these theorems are defined independently of the choice of $\M$, showing how privacy degrades in the worst case for any $(\varepsilon,\delta)$-DP mechanism. Nevertheless, certain mechanisms $\M$ could provide better bounds, such as the improvement provided by domain reduction.

In summary, selective suppression strategies can easily violate acceptable privacy bounds, resulting in privacy degradation instead of amplification; while suppression methods that delete records more uniformly across databases guarantee lower (better) privacy parameters.

\subsection{Distance-Based Probabilistic Suppression}\label{sec:outlierscoresuppressiontheorem}
In this section, we present a type of suppression strategy $\S$ to deal with the presence of outliers in databases. The privacy parameters of $\M\circ\S$ are obtained through \Cref{th:unboundedcase}. 

In this suppression strategy, every record is suppressed independently---as in Poisson sampling---but with a probability proportional to how different they are from the other records in the database. By carefully measuring these differences, we obtain that any DP mechanism $\M$ preprocessed by our $\S$ is also DP and can derive the precise expression of its privacy parameters (\Cref{th:outlierscoresuppression}). In addition, our probabilistic result avoids the large sensitivities of deterministic database-dependent suppression that we saw in \Cref{sec:deterministicsuppression}, recalling that the deterministic version of this result is not DP ($\Delta_\D\S=\infty$, as seen in \Cref{prop:determinsticsuppressionwithinfinitesensitivity}).

Formally, the difference between records is given through a (normalized) bounded distance  
$\dist\colon\X\times\X\to[0,1]$ defined over the data universe $\X$ (from where the databases are drawn). We choose parameters $m,M\in(0,1)$ with $m\leq M$ to control the extent to which the property of being an outlier is considered. Precisely, we take the \textit{$(m,M)$-transformation~$\tdist$ of $\dist$}, defined as $\tdist(x,y) = (m + (M-m)\dist(x,y)) \in [m,M]$
for all $x,y\in\X$. Then, for any non-empty database $D\in\D$, we define the \textit{outlier-score function $\out_D\colon D\to [m,M]$ over $D$ (with respect to $\tdist$)} such that
$\out_D(x) = \frac{1}{|D|}\sum_{\substack{y\in D}}\tdist(x,y)$
for all $x\in D$.

Our suppression algorithm will delete every record $x$ in $D$ independently with probability $\out_D(x)$, the average distance to all elements in $D$.
By definition, $\out_D(x)$ is guaranteed to be between $m$ and $M$, thus providing a lower and upper bound on the probability of a record being deleted. A large difference $M-m$ means that the suppression strategy discriminates strongly between inliers and outliers, while $M-m=0$ provides the uniform Poisson sampling, where each record is deleted independently with the same probability $m$. 

\Cref{th:outlierscoresuppression} shows the effect of this family of suppression algorithms, which we call \textit{outlier-score suppression}, on the privacy parameters of $\M\circ\S$.

\begin{theorem}[restate = THoutlierscoresuppression, name = Outlier-score suppression]\label{th:outlierscoresuppression}
    Let $\S$ be the outlier-score suppression algorithm that independently deletes each record $x\in D$ with probability $\out_D(x)$, i.e., $\S$ is defined so that
    \[
        \Prob\{\S(D)=C\} = \prod_{x\in C} (1-\out_D(x)) \prod_{x\in D\backslash C} \out_D(x) 
    \]
    for all $D\in\D$ and all $C\subseteq D$. Then, if $\M$ is $(\varepsilon,\delta)$-DP, we obtain that $\M\circ\S$ is $(\varepsilon^\S,\delta^\S)$-DP where $\delta^\S=\delta(1-m)$ and
    \[
        \varepsilon^{\S} = \max_{p\in[0,1]}\max\{l_1(p),l_2(p),l_3\} 
    \]
    up to an error\footnotemark of $2\cdot10^{-7}$, where 
    \begin{multline*}
        l_1(p) = \ln(\e^\varepsilon-(\e^\varepsilon-1)(pM+(1-p)m))\\
        +p\frac{M}{m}+(1-p)\frac{1-m}{1-(pM+(1-p)m)}-1
    \end{multline*}
    and
    \begin{multline*}
        l_2(p) = \ln\!\bigg(\e^\varepsilon-(\e^\varepsilon-1)\bigg(pM+(1-p)\frac{(M+m)-pM}{2-p}\bigg)\bigg)\\
        +p\frac{M}{m}+(1-p)\frac{1-\frac{(M+m)-pM}{2-p}}{1-M}-1
    \end{multline*}
    for all $p\in[0,1]$; and $l_3=-\ln(\e^{-\varepsilon}+(1-\e^{-\varepsilon})M)+1-\frac{1-M}{1-m}$.
\end{theorem}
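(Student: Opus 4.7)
The plan is to instantiate Theorem~\ref{th:unboundedcase} at the outlier‑score suppression $\S$ and then reduce the resulting sup over neighboring pairs and over subsets $C$ to a one‑parameter optimization. First I would check that the support condition is trivially satisfied: since $\out_D(x)\in[m,M]\subset(0,1)$ for every non‑empty $D$ and every $x\in D$, every subset $C\subseteq D$ is in $\supp(\S(D))$, and likewise for $D'=D_{+y}$. Hence Theorem~\ref{th:unboundedcase} applies and yields $\varepsilon^\S=\sup_{D,D'}\max\{\varepsilon^\S_{D,D'},\varepsilon^\S_{D',D}\}$ and $\delta^\S=\sup_{D,D'}\max\{\delta^\S_{D,D'},\delta^\S_{D',D}\}$.

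For the $\delta$ half, the easy side is $\delta^\S_{D',D}=\delta\,\Prob\{y\in\S(D')\}=\delta(1-\out_{D'}(y))\leq\delta(1-m)$, with equality when $\tdist(z,y)=m$ for every $z\in D'$. For $\delta^\S_{D,D'}$ I would use the fact that each summand in the Theorem~\ref{th:unboundedcase} expression is bounded by $\Prob\{\S(D)=C\}\cdot\frac{\Prob\{\S(D')=C_{+y}\}}{\Prob\{\S(D')=C\}+\e^{-\varepsilon}\Prob\{\S(D')=C_{+y}\}}\leq\Prob\{\S(D)=C\}\cdot(1-\out_{D'}(y))$ (using that the ratio is at most $1-\out_{D'}(y)$ after cancelling the $y$‑factor), so the sum telescopes to at most $\delta(1-\out_{D'}(y))\leq\delta(1-m)$.

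For $\varepsilon^\S$, the key algebraic observation is that under $\S$ all coordinates are independent Bernoullis, so for $D'=D_{+y}$ and $C\subseteq D$,
\[
\frac{\Prob\{\S(D)=C\}}{\Prob\{\S(D')=C\}+\e^{-\varepsilon}\Prob\{\S(D')=C_{+y}\}}
=\prod_{x\in C}\frac{1-\out_D(x)}{1-\out_{D'}(x)}\prod_{x\in D\setminus C}\frac{\out_D(x)}{\out_{D'}(x)}\cdot\frac{1}{\out_{D'}(y)+\e^{-\varepsilon}(1-\out_{D'}(y))}.
\]
The sup over $C$ then decouples into a per‑record choice $\max\bigl\{\frac{1-\out_D(x)}{1-\out_{D'}(x)},\frac{\out_D(x)}{\out_{D'}(x)}\bigr\}$, and the analogous formula holds for $\varepsilon^\S_{D',D}$ with the roles swapped and an $\e^{\varepsilon}$ in the numerator. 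Using the identity $(|D|+1)\out_{D'}(x)=|D|\out_D(x)+\tdist(x,y)$ one sees $\out_{D'}(x)-\out_D(x)=O(1/|D|)$, so taking logs and applying $\ln(1+u)=u+O(u^2)$ converts the products into sums of order $1/|D|$, which in turn converge to integrals depending only on the joint distribution of $(\out_D(x),\tdist(x,y))$ on $D$.

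The reduction to $l_1,l_2,l_3$ comes from optimizing that limiting integral over all admissible joint distributions subject to $\tdist\in[m,M]$. I expect the extremizer to be a two‑atom distribution on $\{m,M\}$ parameterized by a single proportion $p\in[0,1]$ (the fraction of records with $\tdist(x,y)=M$), because the per‑record choice in the factorization above is convex in the underlying quantities. The three branches then correspond to: $l_1(p)$ arising from $\varepsilon^\S_{D,D'}$ with $\out_{D'}(y)=pM+(1-p)m$; $l_2(p)$ from $\varepsilon^\S_{D',D}$ where the extra $\e^\varepsilon(1-\out_{D'}(y))$ factor shifts $\out_{D'}(y)$ to $\tfrac{(M+m)-pM}{2-p}$; and the constant $l_3$ from the boundary extremizer where every record sits at $\tdist(\cdot,y)=M$, in which case the per‑record ratios are uniform and the $\out$‑denominator attains the bound $M$.

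The main obstacle is precisely this last step: justifying that the worst‑case joint distribution of $\out_D(x)$ and $\tdist(x,y)$ is concentrated on two atoms, and handling the asymptotic $|D|\to\infty$ rigorously. Because $\out_D(x)$ is itself an average of $\tdist(x,\cdot)$ values, the variables $\{\out_D(x)\}_{x\in D}$ are not independent of $\{\tdist(x,y)\}_{x\in D}$; one has to argue that coupling them adversarially still leads to the two‑atom extremizer on $[m,M]$. The $2\cdot10^{-7}$ slack stated in the theorem presumably absorbs the $O(1/|D|)$ Taylor remainder and finite‑$|D|$ corrections after optimizing the continuous $p$ numerically.
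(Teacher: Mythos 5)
Your opening moves match the paper: you verify the support condition, invoke Theorem~\ref{th:unboundedcase}, obtain $\delta^{\S}=\delta(1-m)$, and factor the ratios into per-record products using independence and the identity $(N+1)\out_{D'}(x)=N\out_D(x)+\tdist(x,y)$. But the decisive step is exactly where your plan stops being a proof. You yourself flag it: you \emph{conjecture} that the worst-case configuration is a two-atom distribution of $\tdist(\cdot,y)$ on $\{m,M\}$ parameterized by a proportion $p$, and you offer no argument for it. This is the entire difficulty of the theorem. The variables $\{\out_D(x)\}_{x\in D}$ are not free in $[m,M]$: they are coupled to each other and to $\{\tdist(x,y)\}_{x\in D}$ through the transformed triangle inequality, which yields constraints such as $\sum_{x'\in D}\out_D(x')\le(2N-2)\out_D(x)-(N-2)m$ and $\out_D(x)\le\frac{1}{N}\min\{m+(N-1)M,\,(N-2)(\tdist(x,y)-m)+\sum_{x'}\tdist(x',y)\}$ (Lemma~\ref{lemma:boundsfordandout}). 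Maximizing the per-record ratios without these constraints would overshoot the stated bound; the paper's route is to show the objective is convex in the $\out$-variables (Lemma~\ref{lemma:convex-large}), characterize the feasible set as an explicit convex polytope and enumerate its vertices (Proposition~\ref{prop:PolytopeIsConvex}), reduce over the $\tdist$-variables by Jensen-type arguments that split into convex and concave regimes (Proposition~\ref{prop:formerproblem}), and only then arrive at a low-dimensional optimization whose solution gives $l_1$, $l_2$, $l_3$. Your ``per-record choice is convex, hence two atoms'' heuristic does not substitute for this, and it ignores the direction $\varepsilon^{\S}_{D,D'}$ versus $\varepsilon^{\S}_{D',D}$ asymmetry that forces two separate optimizations (Theorems~\ref{th:bounddependingonp} and~\ref{th:boundinverse}).

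A second gap is the asymptotic treatment. The supremum in Theorem~\ref{th:unboundedcase} runs over \emph{all} finite neighboring pairs, including $|D|=1,2$, which the paper must treat separately; your plan replaces the finite-$N$ maximization by a limiting ``integral'' via $\ln(1+u)=u+O(u^2)$, but you never show that the finite-$N$ values are dominated by (or converge monotonically to) the limit, which is needed for the supremum to equal the limiting expression. In the paper this is precisely the part that is not fully analytic: the final five-parameter optimization (over $N$, two proportion parameters, $c$, $t$) is verified by computer for a grid of $(\varepsilon,m,M)$, and the $2\cdot10^{-7}$ in the statement is the tolerance of that numerical check (Remarks~\ref{remark:computation} and~\ref{remark:computationInverse}) --- it is not, as you suggest, a slack that absorbs Taylor remainders or finite-$|D|$ corrections. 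As it stands, your proposal reproduces the easy parts (support condition, $\delta^{\S}$, the product form) and leaves the hard part --- the constrained extremal problem and its reduction to $\max_p\max\{l_1(p),l_2(p),l_3\}$ --- unproved.
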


\footnotetext{The proof of \Cref{th:outlierscoresuppression} is computer assisted and verified up to an error of $2\cdot 10^{-7}$ for every value of $m$ and $M$ in $\{0.01,0.02,\dots, 0.98,0.99\}$ (with $m\leq M$) and every value of $\varepsilon$ in $\{0,0.01,0.02,\dots, 1.98,1.99,2\}$, $\{2.1,2.2,\dots,9.9,10\}$, and $\{11,12,\dots,99,100\}$. Computational power is used to check that the bound we provide matches the empirically optimized value (see details in \Cref{remark:computation,remark:computationInverse}).  
We conjecture the expression extends to all $m$ and $M$, and to all $\varepsilon\leq 100$ due to the continuity of $\varepsilon^\S$.}

\begin{figure*}[t]
    \centering
    \includegraphics[width=0.25\textwidth]{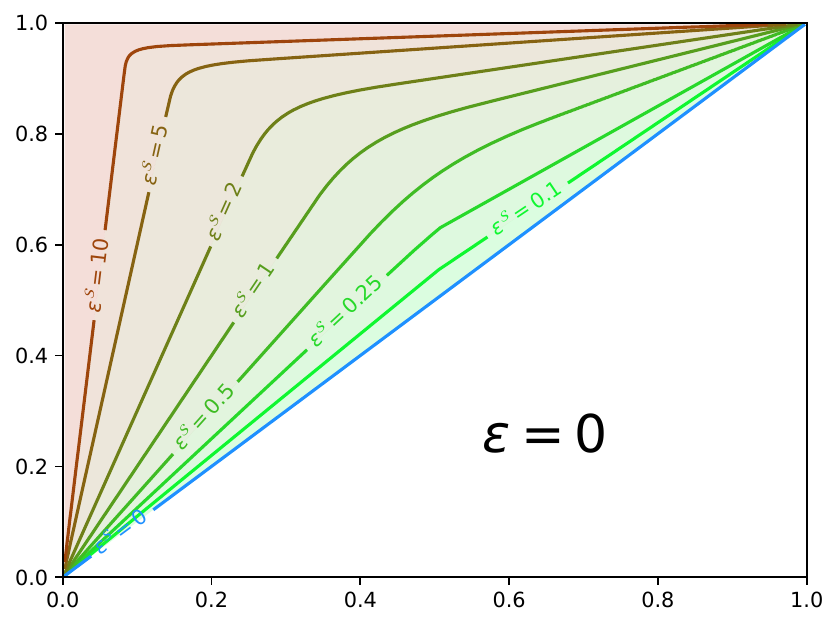}%
    \includegraphics[width=0.25\textwidth]{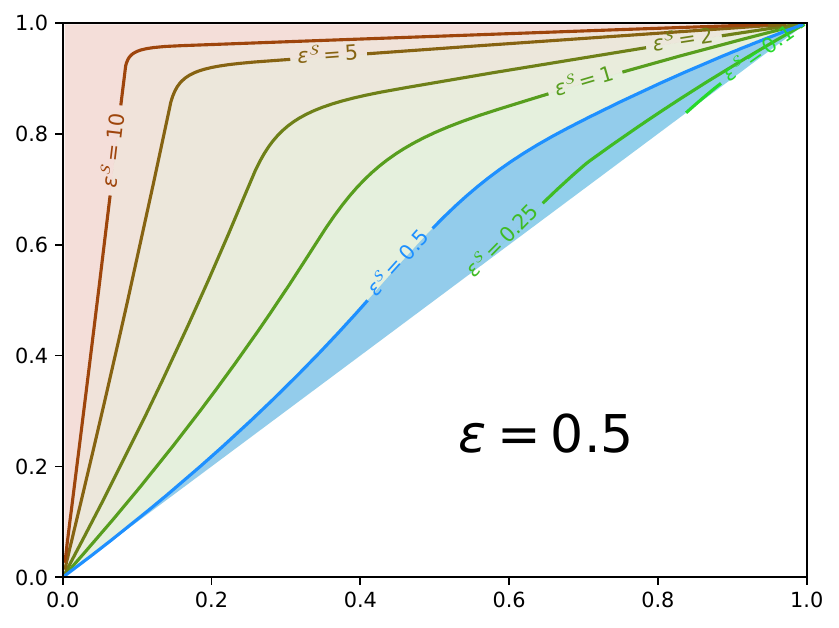}%
    \includegraphics[width=0.25\textwidth]{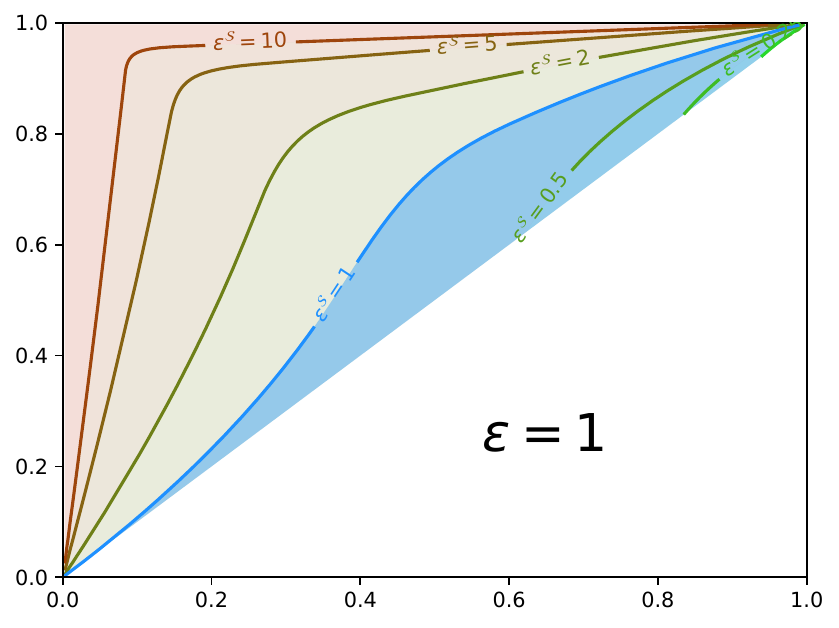}%
    \includegraphics[width=0.25\textwidth]{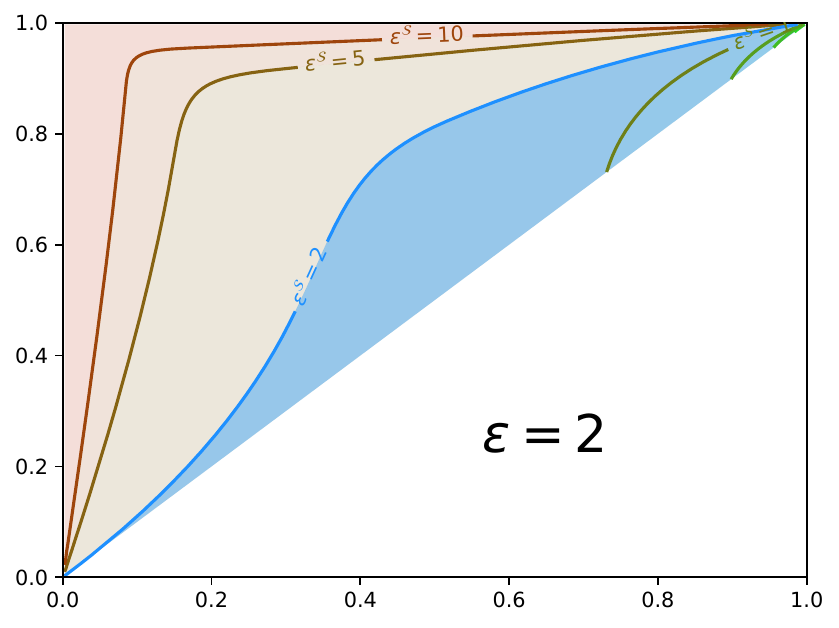}%
    \caption{Contour plots of $\varepsilon^\S\coloneqq\varepsilon^\S(\varepsilon,m,M)$ with respect to $m$ ($x$-axis) and $M$ ($y$-axis) for the values of $\varepsilon\in\{0,0.5,1,2\}$. Values increase as $M-m$ increases. The blue regions show the values of $m$ and $M$ with a privacy amplification (i.e., such that $\varepsilon^\S\leq\varepsilon$).}
    \label{fig:SuppressionTheoremEps}
\end{figure*}

Note that $\varepsilon^\S$ depends only on $\varepsilon$ and the constants $m$ and $M$, and $\delta^\S$ only on $\delta$ and $m$, and neither depend on the choice of mechanism $\M$ nor on the distance $\dist$. We plot the expression of $\varepsilon^\S$ with respect to $m$ and $M$ for some values of $\varepsilon$ in \Cref{fig:SuppressionTheoremEps} and provide an interactive plot that allows computing the exact $\varepsilon^\S$ from the three constants\footnote{\url{https://www.desmos.com/calculator/hydpubdqtm}}. We also provide a closed form of the precise values in \Cref{prop:expressionofp}, and we note that the maximum is usually obtained when $p=0$ or $p=1$, which quickly simplifies the expression of $\varepsilon^\S$.
In these cases, our result is tight with respect to \Cref{th:unboundedcase}, but we cannot show whether it is tight in general (see \Cref{remark:tightnessoutliersuppressiontheorem} for further discussion on tightness). We note that the complex expression of $\varepsilon^\S$ in \Cref{th:outlierscoresuppression} is a consequence of our efforts to provide the tightest bounds on the result.

We observe, similar to \Cref{th:unboundedcase}, that we always have $\delta^\S\leq\delta$, but $\varepsilon^\S$ can take values larger, equal, or smaller than~$\varepsilon$. Lower values are obtained near the diagonal, with the blue regions in \Cref{fig:SuppressionTheoremEps} representing the values such that $\varepsilon^\S\leq\varepsilon$, i.e., the values of $m$ and $M$ such that $\S$ provides a privacy amplification. In particular, \Cref{th:outlierscoresuppression} generalizes uniform Poisson sampling, which corresponds to the algorithms on the diagonal (i.e., $m=M$). As seen in the plots, $\varepsilon^\S$ increases as $m\to0$ or $M\to1$, with the limit values being $\infty$. In addition, the ratio $\frac{\varepsilon^{\S}}{\varepsilon}$ converges to $1$ when $\varepsilon\to\infty$. 

Overall, outlier-score suppression can be used to suppress outliers with higher probability. We note that our definition of outlyingness is similar to previous definitions of record vulnerability used in DP~\cite{meeus2024achilles}. However, our theorem shows that greatly differentiating outliers quickly increases the privacy parameters (see \Cref{fig:SuppressionTheoremEps}).  

\subsection{The Impact of Suppression on Privacy}

Suppression in DP can offer a privacy amplification, but not always: 
Since DP must protect any difference between neighboring databases, flexible suppression algorithms---such that $\S(D)$ and $\S(D')$ behave differently for neighboring databases $D,D'\in\D$---greatly increase the privacy parameters.

Nevertheless, our theorems in this section not only represent this phenomenon, but also show precisely how suppression affects the privacy parameters. In particular, this effect is depicted in our theorems with the sensitivity $\Delta_{\intset}\S$ in the deterministic case, and with the ratios $\frac{\Prob\{\S(D)=C\}}{\Prob\{\S(D_{+y})=C\}}$ and $\frac{\Prob\{\S(D)=C\}}{\Prob\{\S(D_{+y})=C_{+y}\}}$ (and their inverses) in the probabilistic case. Furthermore, the difference between $m$ and $M$ in \Cref{th:outlierscoresuppression} also exhibits the same effect: The privacy parameters increase if outliers are more distinguishable (i.e., if $M-m$ increases).

In this sense, DP causes a conflict between ensuring manageable privacy parameters in $\M\circ\S$ and having flexibility between $\S$ across databases. For example, in the deterministic case, we have seen that database-independent suppression $\S$ ensures $\Delta_{\D}\S=1$; hence, $\M\circ\S$ satisfies DP with at least the same privacy parameters than $\M$. On the other hand, the flexibility of database-deterministic suppression usually leads to very high sensitivities and privacy parameters. In some cases, these values are infinite (e.g., when records furthest from the database centroid are deleted), in which case $\M\circ\S$ cannot possibly be DP. 
Nevertheless, when $\Delta_\intset\S$ is not too large, there can be a privacy amplification if the mechanism and the suppression algorithm allow for a domain reduction. However, this domain reduction is highly dependent on the chosen mechanism, and there are always mechanisms for each suppression algorithm which do not benefit from it. Consequently, the guarantee here is not global. 

In contrast to deterministic suppression, our results show that in the probabilistic case, privacy amplifications can exist independently of $\M$. 
However, they are only possible if the probability of deleting records does not vary drastically across neighboring databases; otherwise, privacy parameters degrade.
We recall that suppression encompasses the state-of-the-art sampling, and it is especially the suppression algorithms close to uniform sampling that achieve a more significant privacy amplification---after all, these suppression algorithms ensure similar $\S(D)$ and $\S(D')$. In addition, the privacy parameters of suppression naturally cannot be less than those of uniform Poisson sampling (see \Cref{prop:final} and \Cref{cor:final}).  

In summary, due to the properties of DP, uniform suppression provides lower privacy parameters than targeted suppression. Our theorems confirm that privacy amplification is still achievable for other suppression strategies close to uniform suppression. However, the privacy parameters will easily start to increase when specifically protecting database outliers or vulnerable records if these records vary greatly between neighboring databases. This is especially true for deterministic suppression, which is more unforgiving than probabilistic suppression.

\section{The Effect of Suppression on Utility}\label{sec:experiments}

Having seen how suppression affects privacy, we are now interested in how it affects the mechanisms utility guarantees. In this section, we perform the same empirical evaluations we conducted for sampling in \Cref{sec:samplingExperiments}, that is, we compute the empirical evaluations of the utility guarantees of $\M\circ\S$ and compare them to those of $\M$. 

In this case, we believe that we cannot formulate a fair evaluation using deterministic suppression: Any reasonable suppression strategy that is database-independent leads to a very large (or infinite) privacy budget for $\M\circ\S$, and database-dependent suppression requires defining global statistics about the database, which could lead to an unfair comparison since it requires knowledge about the values in the database. Therefore, our evaluation centers on the outlier-score suppression algorithm of \Cref{sec:outlierscoresuppressiontheorem}. 

Note too that we only expect utility gains with a privacy amplification:
If $\S$ actually increases the privacy parameters, we would generally obtain a noise amplification---not reduction---when translating the privacy parameters down to those of $\M$, and $\M\circ\S$ would expectedly provide worse utility than $\M$ at fixed privacy levels under our utility metrics.

\subsection{Experiment Description and Setup}

Conducting experiments similar to those in \Cref{sec:samplingExperiments}, we will compare the utility guarantees of $\M\circ\S$ to those of $\M$ under the same privacy guarantees.
To ensure the same privacy level, we conduct the analogous transformation: By 
\Cref{th:outlierscoresuppression}, if $\M$ satisfies $(\varepsilon,\delta)$-DP, then $\M\circ\S$ satisfies $(\varepsilon^\S,\delta^\S)$-DP with $\varepsilon^\S=\varepsilon^\S(\varepsilon,m,M)$ and $\delta^\S=\delta^\S(\delta,m)$. 
So, to ensure that $\M\circ\S$ also satisfies $(\varepsilon,\delta)$-DP, we impose $\M$ to be $(\varepsilon'',\delta'')$-DP such that $\varepsilon^\S(\varepsilon'',m,M)=\varepsilon$ and $\delta^\S(\delta'',m)=\delta$.
We note that this process requires $\varepsilon^\S(\varepsilon,m,M)$ to have an inverse $\varepsilon''$ with respect to $\varepsilon$, which is not possible when $\varepsilon<\varepsilon^\S(0,m,M)$. This limitation is reflected by the unfilled areas in our plots.

We use the same mechanisms, utility metrics, databases, and privacy parameters $\varepsilon$ and $\delta$ as in \Cref{sec:samplingExperiments}. As the mechanisms are randomized, we also compute $u(\M,D)$ and $u(\M\circ\S,D)$ the same amount of times as before, and provide their means.
The only addition for this experiment is the distance function $\dist$ for $\S$, directly linked to how records are suppressed. In this case, we select distances that intuitively represent ways of deleting records and mimic potential choices made by data curators.

\textbf{Distance functions for $\S$.}
For the mean calculation, we select the absolute difference between values (i.e., the $L^k$ distance in $\R$ for any $k\in\N$). Thus, the suppression algorithm deletes with higher probability the values that are the furthest away from others in a weighted manner. Note that the mean minimizes this average distance, so records closer to it are less likely to be deleted.

For the mode calculation, we select $\dist$ as the discrete metric (i.e., $\dist(x,y)=1$ if $x\neq y$ and $\dist(x,x)=0$). This ensures that the values with higher counts in the database will be deleted with a lower probability than those with fewer.

For the clustering mechanisms, we choose $\dist$ as the $L^2$ distance between records. This is the same distance function used in the mechanisms for the assignment of clusters/medians and in their respective utility metrics.  

\subsection{Experimental Results}\label{sec:experimentsandresults}

To show whether $\M\circ\S$ can preserve utility better than $\M$, we plot the utility difference $u(\M,D)-u(\M\circ\S,D)$ (see \Cref{fig:Experiment2}).
Since higher values of $u(\M,D)$ are associated with worse utility, we obtain that $\M\circ\S$ provides better utility than $\M$ when $u(\M,D)-u(\M\circ\S,D)$, or the plot values, are positive. For each $\varepsilon\in\{0.25,0.5,1,2\}$, we provide plots of $u(\M,D)-u(\M\circ\S,D)$ with respect to $m$ and $M$ (similar to the privacy plots in \Cref{fig:SuppressionTheoremEps}).  
We compute the difference value at the points where $m$ and $M$ both are in $\{0.1,\dots,0.9\}$, with the precise value shown in the plots. The colors are then filled by triangulation, with the color grading being set to yellow for $0$, red for negative values ($\M\circ\S$ provides worse utility), and green for positive values ($\M\circ\S$ provides better utility). The blue line corresponds to the values of $m$ and $M$ such that $\varepsilon^\S(\varepsilon,m,M)=\varepsilon$ (as in \Cref{fig:SuppressionTheoremEps}).
 
All results across all databases and noise variations are qualitatively very similar, and we thus only plot representative examples in this section (\Cref{fig:Experiment2}). All other plots are included as a gallery in Appendix~\ref{sec:plotgallery}. 

\begin{figure*}[t]
    \centering
    \includegraphics[width=0.25\textwidth]{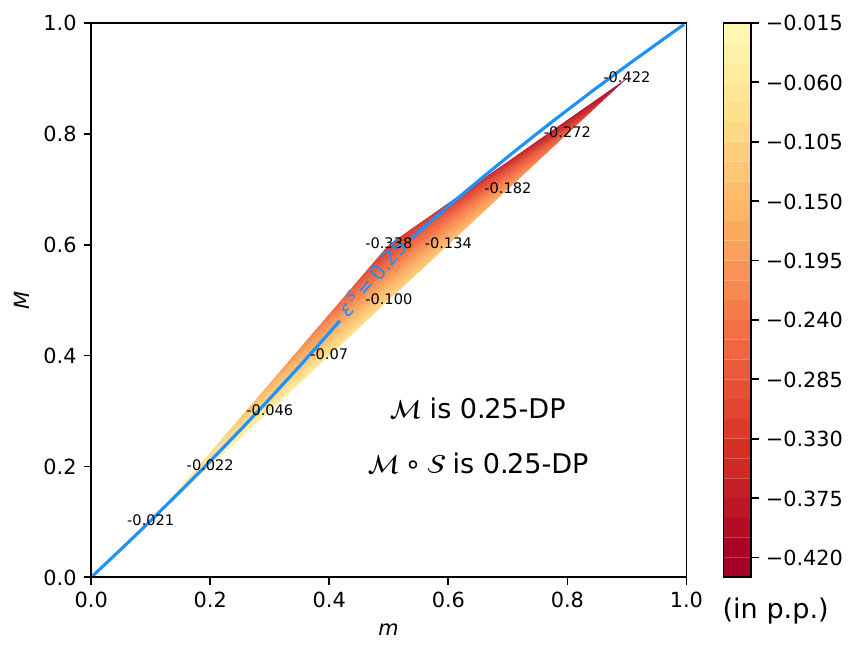}%
    \includegraphics[width=0.25\textwidth]{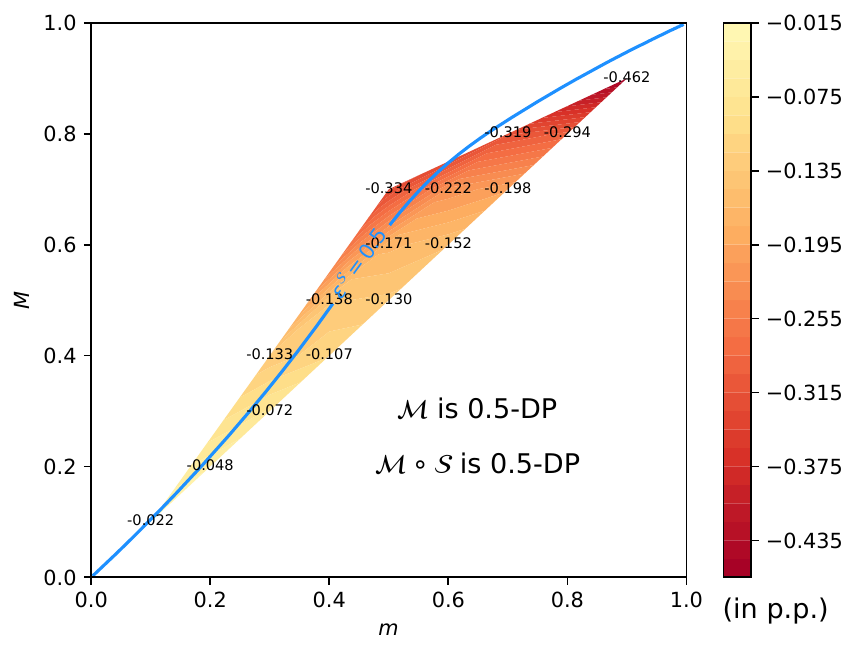}%
    \includegraphics[width=0.25\textwidth]{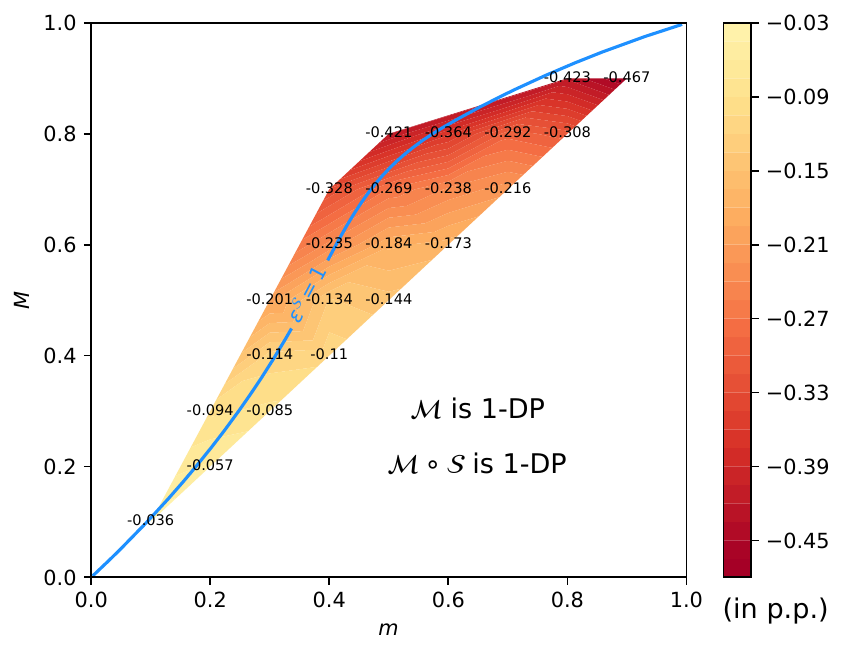}%
    \includegraphics[width=0.25\textwidth]{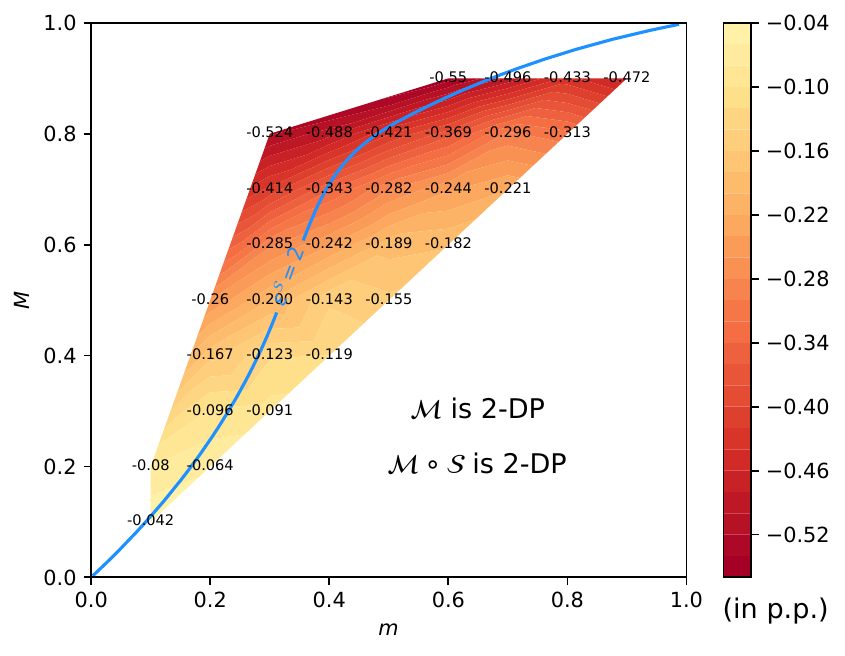}%

    \includegraphics[width=0.25\textwidth]{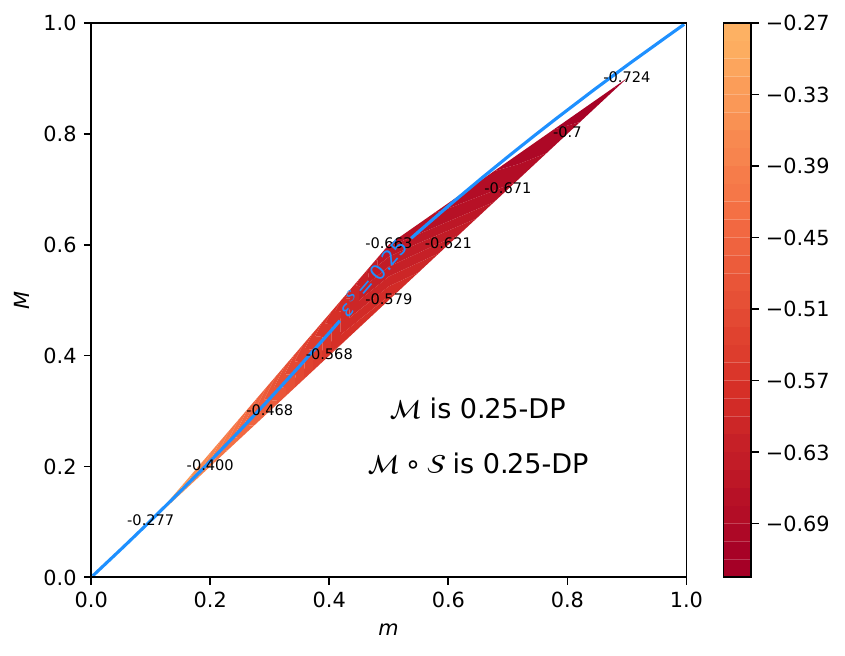}%
    \includegraphics[width=0.25\textwidth]{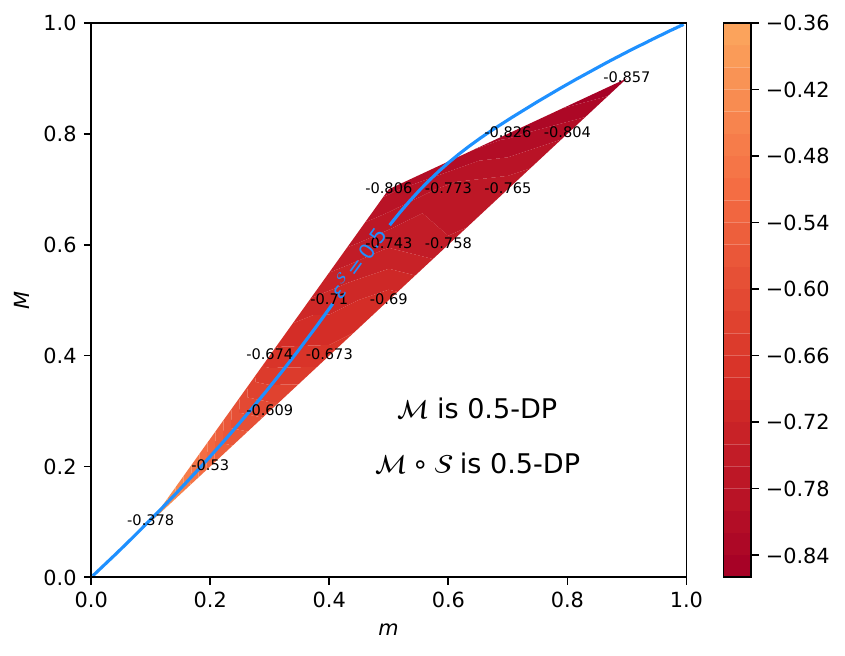}%
    \includegraphics[width=0.25\textwidth]{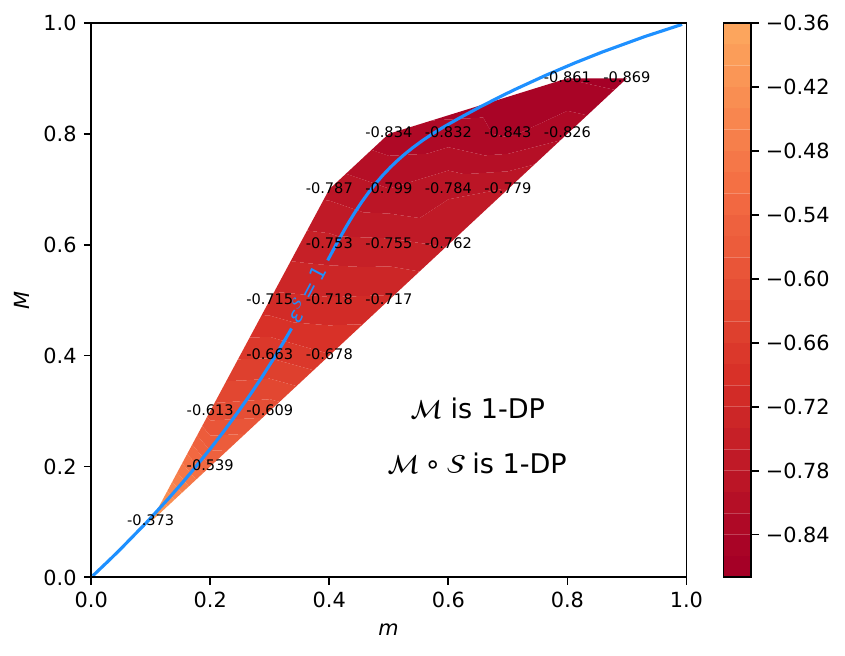}%
    \includegraphics[width=0.25\textwidth]{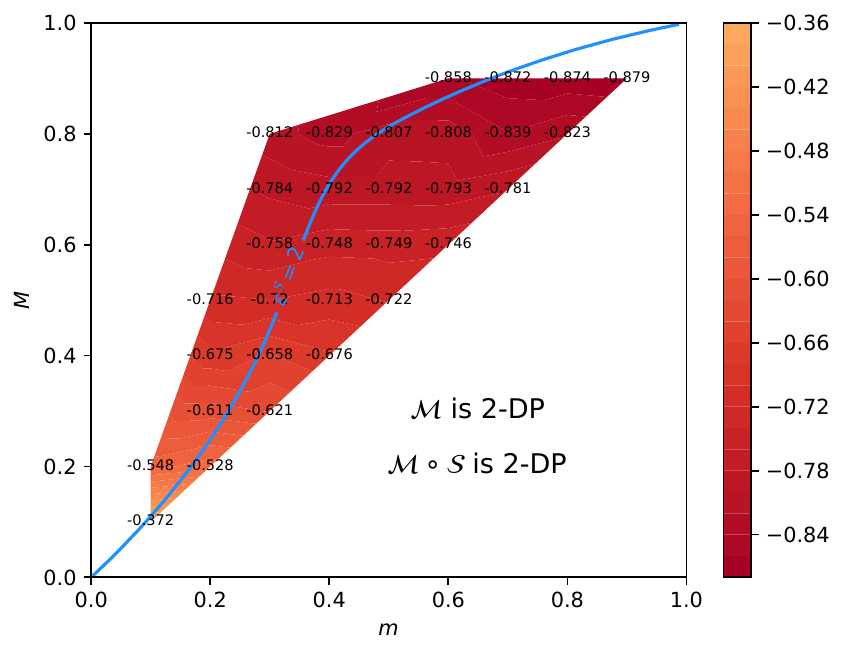}%

    \includegraphics[width=0.25\textwidth]{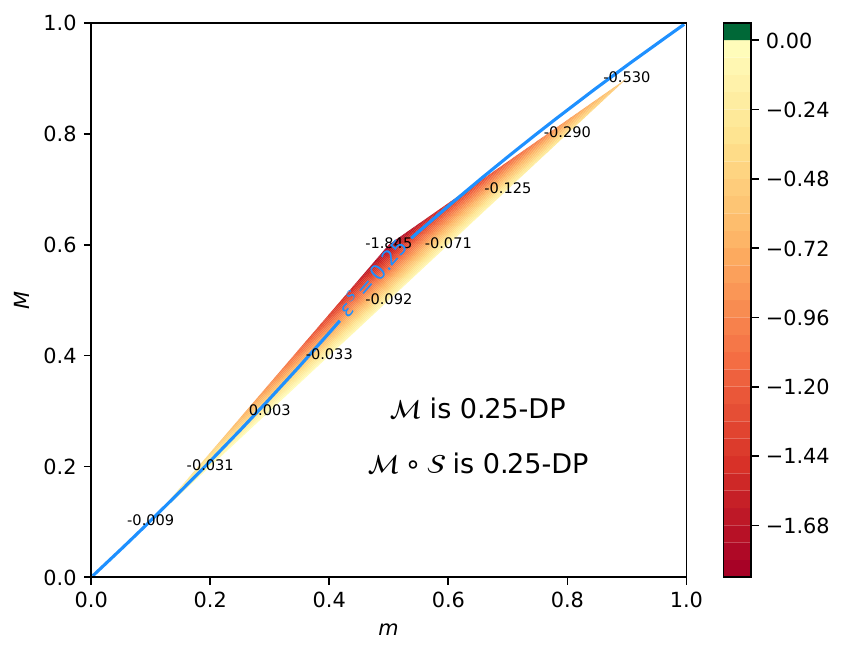}%
    \includegraphics[width=0.25\textwidth]{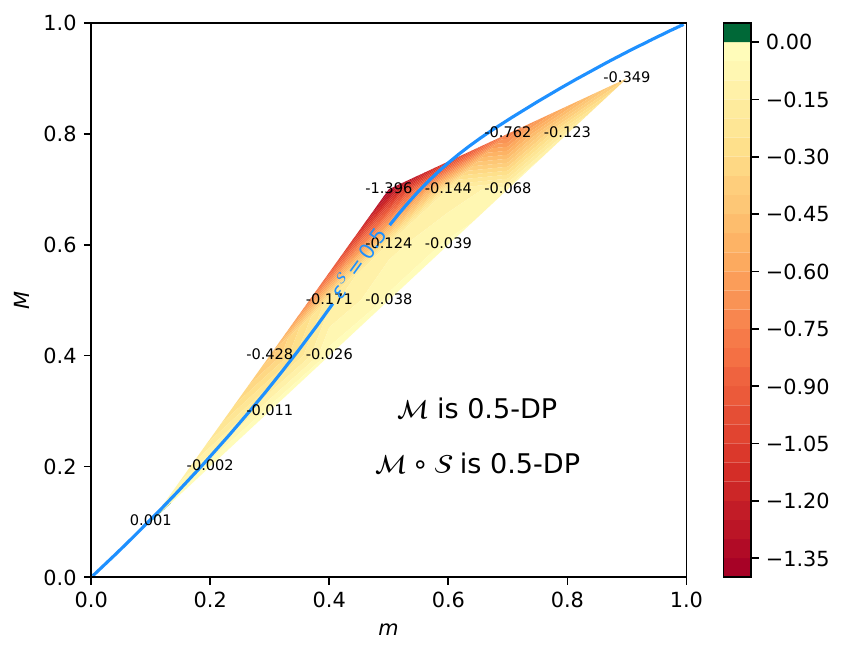}%
    \includegraphics[width=0.25\textwidth]{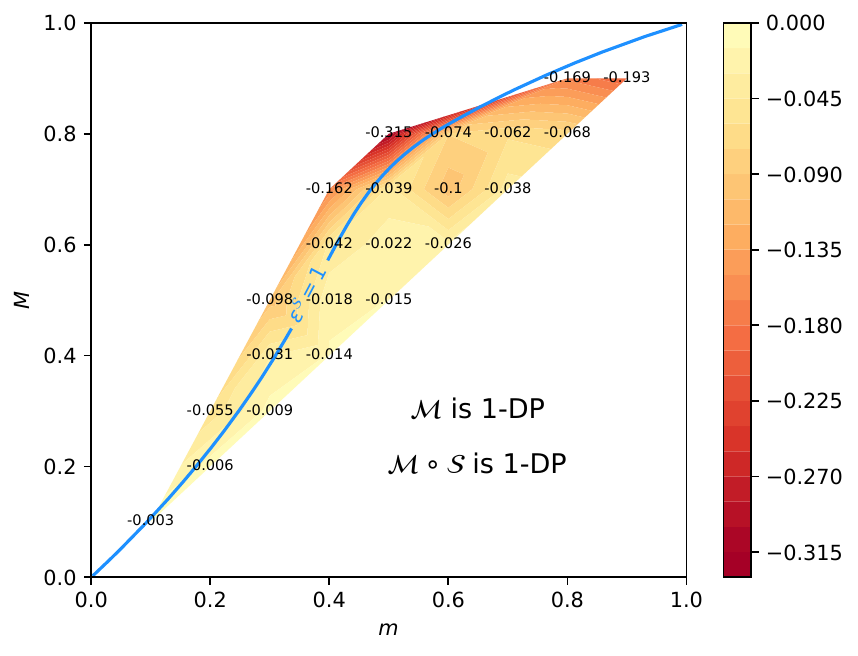}%
    \includegraphics[width=0.25\textwidth]{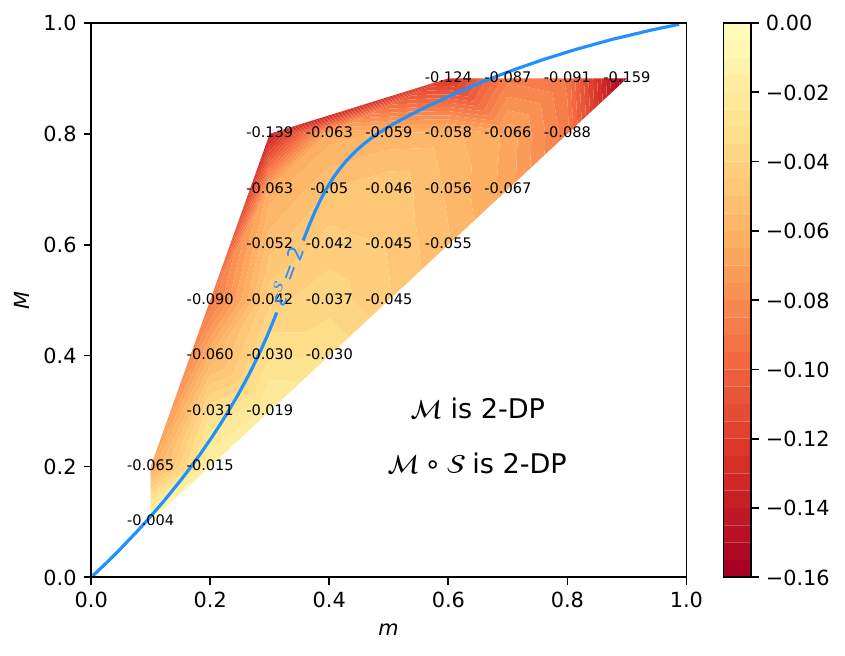}%
    \caption{Plots of the utility difference of $\M$ minus that of $\M\circ\S$ (i.e., $u(\M,D)-u(\M\circ\S,D)$) for the Adult database (and \texttt{age} column). The shown mechanisms are (top row) NoisyAverage with Laplace mechanisms; (middle row) RNM with Laplace noise; and (bottom row) DPLloyd clustering algorithm.}
    \label{fig:Experiment2}
\end{figure*}

\textbf{Results.} Our main observation is that there are almost no points where outlier-score suppression improves utility (cf., for instance, \Cref{fig:Experiment2}). This holds even when the privacy amplification is compensated with lower noise (i.e., the values between the diagonal and the blue line in the plots).

In general, the figures also show how quickly privacy degrades as more and more data is suppressed, with some plot values largely increasing around $m=M=0.9$. The difference also becomes smaller near $m=M=0.1$, but we do not plot for smaller values because numerical inaccuracies begin to cause distortions. 
In the great majority of cases, the values of $m$ and $M$ that show the least utility loss are those on the diagonal, corresponding to uniform Poisson sampling
(when comparing similar proportions of suppressed records). Our results also nearly always show that the utility difference under suppression worsens as $\varepsilon$ increases (for fixed $m$ and~$M$).

We now look at the three main experiments individually. \Cref{fig:Experiment2} (top row) shows the results for the NoisyAverage mechanism with Laplace noise, which rarely sees a utility gain. However, the actual MPE difference remains at insignificantly low levels all throughout, around less than $0.6\,\text{p.p.}$ (and sometimes even as low as $0.021\,\text{p.p.}$).

Similarly to the sampling counterpart, the utility of mode preservation decreases significantly when suppression is applied.
As shown in \Cref{fig:Experiment2} (middle row), there is up to an $87.9\,\textrm{p.p.}$ difference between the percentages of (in)correctly returning the mode of $\M$ and $\M\circ\S$ for RNM with Laplace noise on the \texttt{age} column (Adult database). 
These large differences are observed for values of $m$ and $M$ closer to $1$, which correspond to deleting a large portion of the database. 
The utility values, however, strongly depend on the record distribution in the database, similar to our observations in \Cref{sec:samplingExperiments}. 

The evaluation on clustering differs from the others, in the sense that the chosen utility metric does not assume the original database to be the ground truth or the base measurement. This means that the effect of suppression on the measured utility should theoretically be less damaging.
Nevertheless, our results on $k$-median and DPLloyd show the same phenomenon as the mean computation: $\M\circ\S$ provides worse utility than $\M$, with the difference remaining small all throughout (see \Cref{fig:Experiment2}, bottom row). DPLloyd shows some extremely small improvement over $\M$ for $m=M=0.1$ for $\varepsilon=0.25$ and $\varepsilon=0.5$ values, but these could be due to rounding errors caused for these low parameters.

\subsection{The Impact of Suppression on Utility}

Finally, we investigated whether the privacy amplification achieved by outlier-score suppression could result in less perturbation of the DP mechanism $\M$ such that the overall utility of $\M\circ\S$ is greater than $\M$ at fixed privacy levels. Our results show that this is not generally possible: The utility loss from outlier-score suppression carries much more weight in the utility measurement than the DP perturbation, and almost all reductions in the DP perturbation achieved through privacy amplification are too low to benefit the tradeoff. 
In particular, this effect appears even in cases where the utility loss is insignificantly small, such as in the NoisyAverage experiments.
Our results here thus largely follow the previous observations we made for sampling in \Cref{sec:samplingExperiments}. 
Moreover, we also note that most of our plots show a smaller utility difference on the diagonal. Hence, we see that utility is less affected by uniform Poisson sampling than by outlier-score suppression.

We believe that these results may be unexpected in some cases: Our method of assigning outliers depends on the chosen distances, which are deliberately selected to reduce privacy challenges and enhance utility. For instance, the distance chosen for the mode computation ensures that records with higher counts are less likely to be deleted than those with lower counts. However, in this case, the utility loss from deleting records is much greater than the utility gain from this selective deletion: Although the relative frequency of the mode in the database theoretically increases, the overall database size decreases, bringing the mode closer in count to the other records. This is why we obtain a significant loss of utility. Nevertheless, the fact that this type of suppression performs worse than uniform suppression, even when deleting the same proportion of records, offers an interesting insight into the privacy--utility tradeoff. We attribute this to the greater noise reduction in uniform suppression compared to non-uniform suppression due to the larger privacy amplification.

\section{Conclusion}\label{sec:conclusion}

Though sampling can provide orthogonal benefits like reducing time complexity, our privacy study shows that classic DP mechanisms without sampling consistently achieve a better privacy--utility tradeoff than the mechanisms with sampling, even when accounting for the noise reduction potentially gained by privacy amplification. 
These results motivated us to study the actual effect on the DP privacy--utility tradeoff when records are deleted as wanted. Here, we show that the positive suppression effects enjoyed in such privacy notions as $k$-anonymity do not transfer, mainly due to the DP definition itself. Since DP must account for any change between databases, more flexible strategic suppression algorithms that delete in a per-database setting come either at a weaker privacy amplification or, most often, at a privacy degradation.
 
We observe that
our theorems in \Cref{sec:suppressiontheorem} describe the privacy amplification---or reduction---effect that any suppression strategy can enjoy. This is a new and rigorous insight into the effects that any conceivable way of records omission may exert on the privacy guarantees of DP mechanisms.

Our evaluation on outlier-score suppression yields the same results as in sampling. For all databases and mechanisms tested, we found that utility is reduced compared to analyses without this preprocessing step at fixed privacy levels. However, the difference in error is quite insignificant in some cases.
We conclude that, in our case, the potential utility gain from modifying the privacy parameters is insufficient to overcome the substantial utility loss caused by omitting records. In addition, since our tested mechanisms include the canonical building blocks of most DP mechanisms, we expect this result to extend to most of them. Furthermore, our results show that, in the majority of cases, uniform Poisson sampling provides the least utility loss among our tested suppression algorithms.

As future work, we will extend our results and evaluations to cover other suppression strategies and mechanisms and evaluate the effect of sampling and suppression over other DP variants, like bounded DP or Rényi DP. An interesting evaluation would be to see whether suppression can improve utility when measuring utility according to some ground truth rather than the input database. 

In summary, our study provides new insights into sampling and suppression in DP, showing the particular need for balancing utility of these techniques against their demonstrated effects on the DP privacy--utility tradeoff. Overall, we show that sampling and outlier-score suppression both negatively impact the privacy--utility tradeoff, rendering the application of both techniques in DP questionable in this regard. 

\cleardoublepage
\appendix
\section*{Ethical Considerations}

The ethical implications of this work were thoroughly discussed. The authors of this paper declare the following:

\textbf{Basic Ethical Principle and Stakeholder Analysis:} This work was conducted following established ethical principles, guidelines, and best practices in data privacy. The focus is on differential privacy (DP), an important tool for protecting data with formal privacy guarantees. Our work deepens the understanding of sampling, a well-known technique in DP, and suppression, a technique used in privacy contexts but not in DP.
Given the importance and popularity of DP, sampling, and suppression, our work may impact different members of society as it reveals vulnerabilities of these tools. In particular, we identify two main stakeholder groups: (1)~individuals whose data were previously protected using a DP mechanism with sampling, and (2)~data curators and practitioners who have developed or wish to develop mechanisms with sampling.

While our study shows that sampling and our suppression worsen utility, they do not compromise the level of privacy or protection they or any mechanism using them provides. Therefore, the individuals' privacy remains unaffected, regardless of our results. Our work does not cause any direct or indirect harm to the first stakeholder group or their data.

Our study may impact the second group, as it reveals a potential weakness in their mechanisms. While our work raises questions about the benefits of sampling, it does not weaken or invalidate published mechanisms that use sampling. Rather, it reveals the possibility of refining the privacy--utility tradeoff, which may encourage data practitioners to conduct new experiments on previously published work. Similarly, our study emphasizes the need to further evaluate the effect of sampling when designing new sampling-based mechanisms.

\textbf{Bias and Fairness:} We acknowledge the potential biases that can accompany any empirical evaluation. To mitigate these biases, we experimented with multiple mechanisms, databases, and parameters, and performed multiple iterations to guarantee statistical accuracy. We paid careful attention to avoid undue generalization of our results. Our experimentation and results only cover the privacy--utility tradeoff of sampling and our suppression in unbounded DP.
We acknowledge that our conclusions may differ for other scenarios, suppression algorithms, or DP variants. We reiterate the need to evaluate the actual effect of sampling (or any preprocessing). 

\textbf{Decision:} Our work raises awareness of sampling and suppression, improving their understanding of these techniques and discussing how they do not improve the privacy--utility tradeoff of DP mechanisms in our settings. Our paper also opens the door to testing mechanisms that have previously used sampling, potentially leading to improvements. Although our results are negative, we hope that they can guide the future design of more effective privacy-preserving mechanisms. The potential improvements that follow from our work have led us to submit it for consideration at USENIX. 

\textbf{Subjects and Public Database Use:} Our work does not involve any direct human or animal participation. For our experiment, we used either synthetic databases or commonly used, publicly available standard databases derived from the US Census~\cite{becker1996adult,brand2002reference} and Irish Census Data~\cite{ayala-rivera2016cocoa}. These databases had all been previously and thoroughly anonymized to prevent the identification of any individual and are explicitly released for public research, and statistical and educational use. In our work, we use these databases only for statistical research and do not intend to deanonymize them or perform data linking that might risk identifying the participants or their data.

\textbf{Competing Interests:} The authors have no relevant financial or nonfinancial interests to disclose, and are not affiliated with any organization or entity with a financial or nonfinancial interest in the content or implications of this paper.

\textbf{Further ethical considerations:} The research activities conducted in this paper did not have the potential to negatively affect the authors or their health. This work has been conducted within the framework of the law.

\section*{Open Science}

To promote transparency and reproducibility, we make available an artifact that includes all the code from our work at: \urlartifact. This includes the code for all the experiments (i.e., the mean and mode computation, and the two clustering mechanisms) for both sampling and suppression, as discussed in \Cref{sec:samplingExperiments,sec:experiments}, the code for the numerical computation for \Cref{th:outlierscoresuppression} (see \Cref{remark:computation,remark:computationInverse}), and the code to generate all plots in the paper. All databases used are included in the artifact because they are from publicly accessible data. We provide all the necessary code to reproduce every result and plot of our paper. 

\section*{Acknowledgments}
Javier Parra-Arnau is a ``Ramón y Cajal'' fellow (ref.\ RYC2021-034256-I) funded by the MCIN\slash AEI\slash 10.13039\slash 501100011033 and the EU ``NextGenerationEU''/PRTR. This work was also supported by (i)~the ``DIstributed Smart Communications with Verifiable EneRgy-optimal Yields (DISCOVERY)'' project (PID2023-148716OB-C32), funded by the same two institutions above; (ii)~the Generalitat de Catalunya, under the AGAUR grant ``2021 SGR 01413''; (iii)~KASTEL Security Research Labs, Karlsruhe, funded by the Topic Engineering Secure Systems of the Helmholtz Association (HGF); and (iv)~Germany’s Excellence Strategy – EXC 2050/1 – Project ID 390696704 – Cluster of Excellence ``Centre for Tactile Internet with Human-in-the-Loop'' (CeTI). This paper has been edited by our textician, Daniel Shea. We would also like to thank Daniel Schadt, Patricia Guerra-Balboa, and Felix Morsbach for their useful help and comments.

\bibliographystyle{plain}
\bibliography{bibliography}

\newpage
\onecolumn

\section{Plots Gallery}\label{sec:plotgallery}

In this section, we present a complete gallery of all plots from our experiments. \Cref{sec:plots:PoissonSampling1,sec:plots:PoissonSampling2,sec:plots:PoissonSampling3} show the plots from the sampling experiment described in \Cref{sec:samplingExperiments}; and \Cref{sec:plots:SuppressionwithEpsDeltaChange1,sec:plots:SuppressionwithEpsDeltaChange2,sec:plots:SuppressionwithEpsDeltaChange3} show the plots from the suppression experiment with the privacy-parameter change described in \Cref{sec:experiments}. Furthermore, we provide additional plots showing the difference in utility values between the mechanism with and without suppression \textit{without the noise reduction} (i.e., in this case, $\M$ and $\M\circ\S$ are compared without changing the privacy parameters) in 
\Cref{sec:plots:SuppressionwithoutEpsDeltaChange1,sec:plots:SuppressionwithoutEpsDeltaChange2,sec:plots:SuppressionwithEpsDeltaChange3}.

Note that the plots of the mode computation for the \texttt{hours-per-week} column in the Adult database and for the \texttt{HighestEducationCompleted} column in the Irish database are correct. In these cases, the maximum corresponds to a highly representative value in the database. For instance, the maximum value of the \texttt{hours-per-week} column in the Adult database corresponds to over half of the records. Thus, both the mechanism with and without sampling/suppression output a perfect failure probability of 0, as explained in \Cref{sec:samplingExperiments}. Nevertheless, we include these plots for completeness.   

\subsection{Plots of the Uniform Poisson Sampling for the Mean Computation}\label{sec:plots:PoissonSampling1}

\begin{figure}[H]
    \centering
    \includegraphics[width=0.4\textwidth]{PaperPlots/Adult/age/age_uniform_Poisson_sampling_laplace_MPE+SD.pdf}%
    \includegraphics[width=0.4\textwidth]{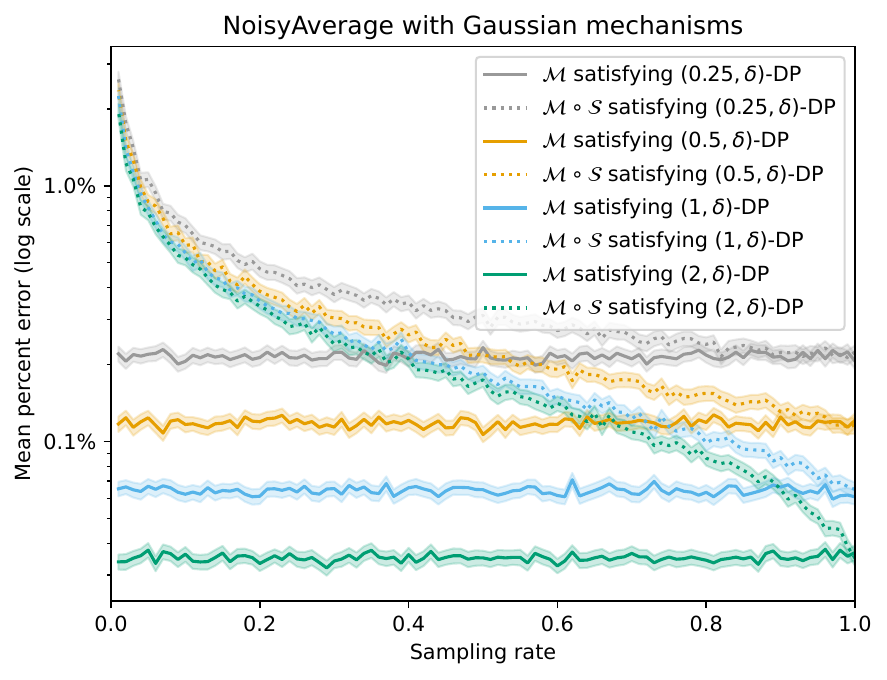}%
    \caption{Plots of the utility values of $\M$ and $\M\circ\S$ for the uniform Poisson sampling over the \texttt{age} column in the Adult database. The shaded areas correspond to a 95\% confidence interval for the mean of the utility metric.}
    \label{fig:uniformPoissonsampling-Adult-age}
\end{figure}

\begin{figure}[H]
    \centering
    \includegraphics[width=0.4\textwidth]{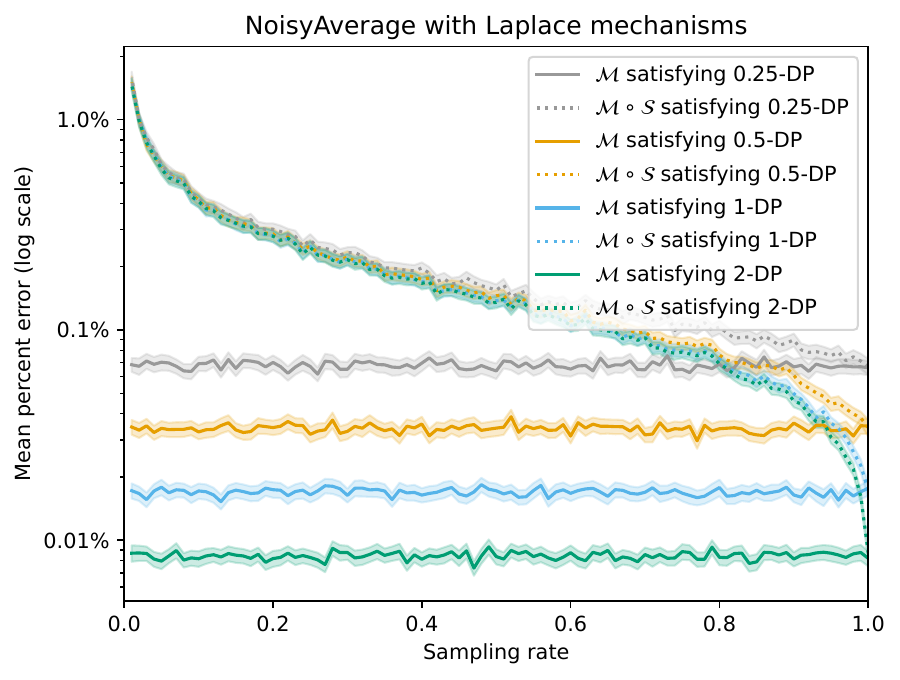}%
    \includegraphics[width=0.4\textwidth]{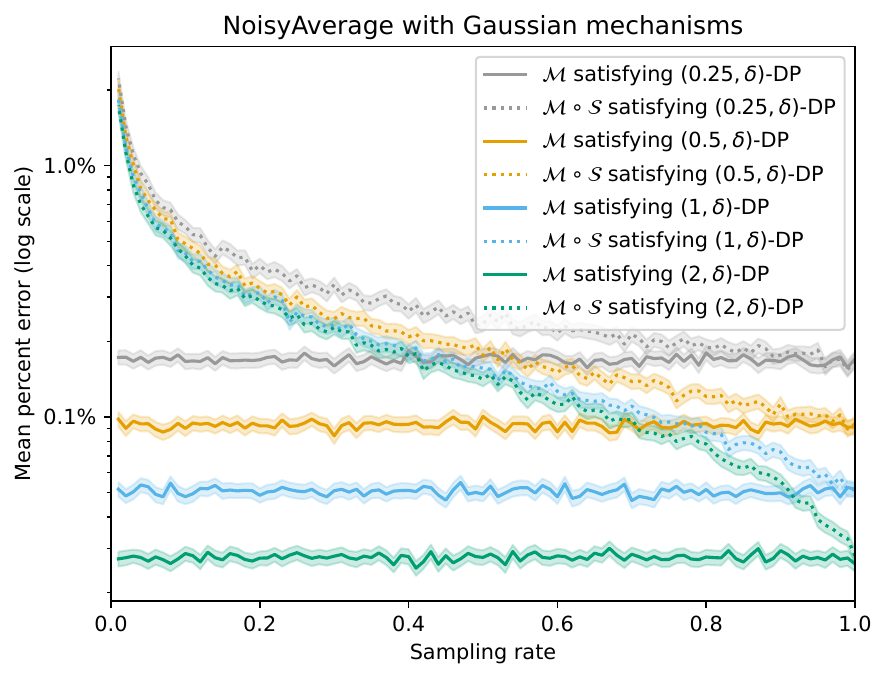}%
    \caption{Plots of the utility values of $\M$ and $\M\circ\S$ for the uniform Poisson sampling over the \texttt{hours-per-week} column in the Adult database. The shaded areas correspond to a 95\% confidence interval for the mean of the utility metric.}
    \label{fig:uniformPoissonsampling-Adult-hours-per-week}
\end{figure}

\begin{figure}[H]
    \centering
    \includegraphics[width=0.4\textwidth]{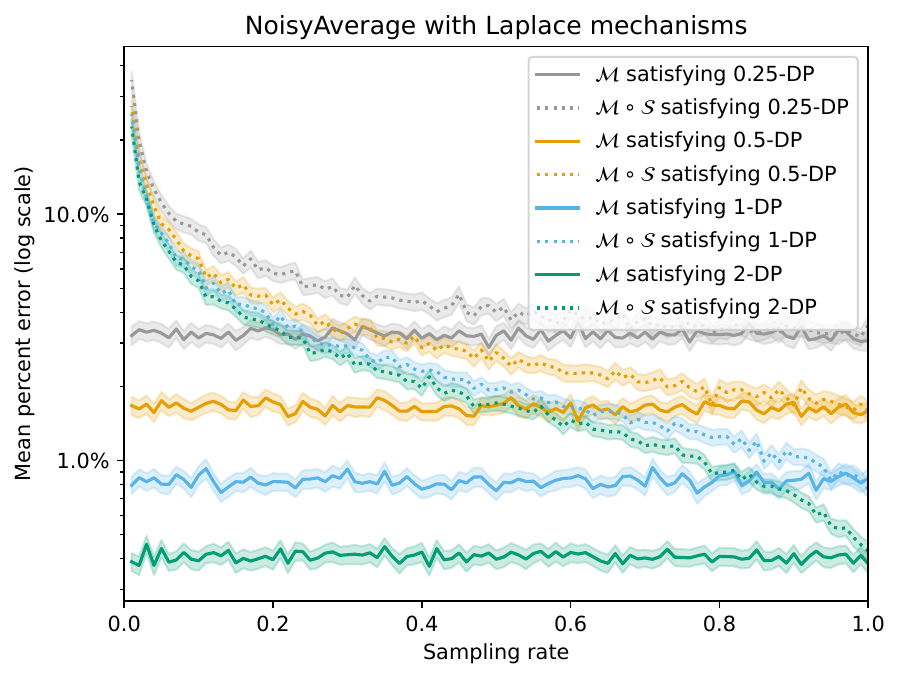}%
    \includegraphics[width=0.4\textwidth]{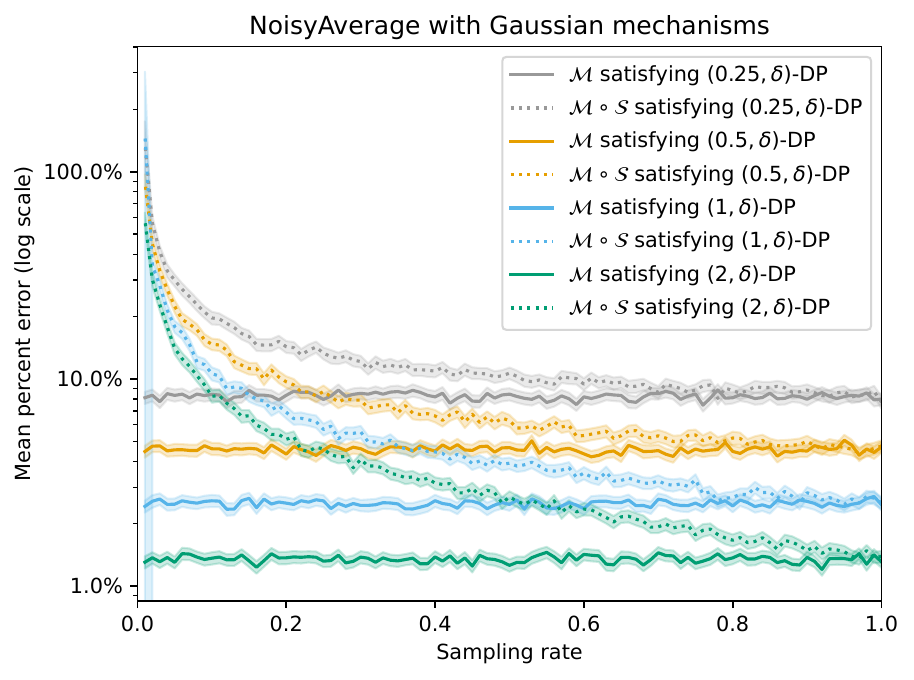}%
    \caption{Plots of the utility values of $\M$ and $\M\circ\S$ for the uniform Poisson sampling over the \texttt{FEDTAX} column in the Census database. The shaded areas correspond to a 95\% confidence interval for the mean of the utility metric.}
    \label{fig:uniformPoissonsampling-Census-FEDTAX}
\end{figure}

\begin{figure}[H]
    \centering
    \includegraphics[width=0.4\textwidth]{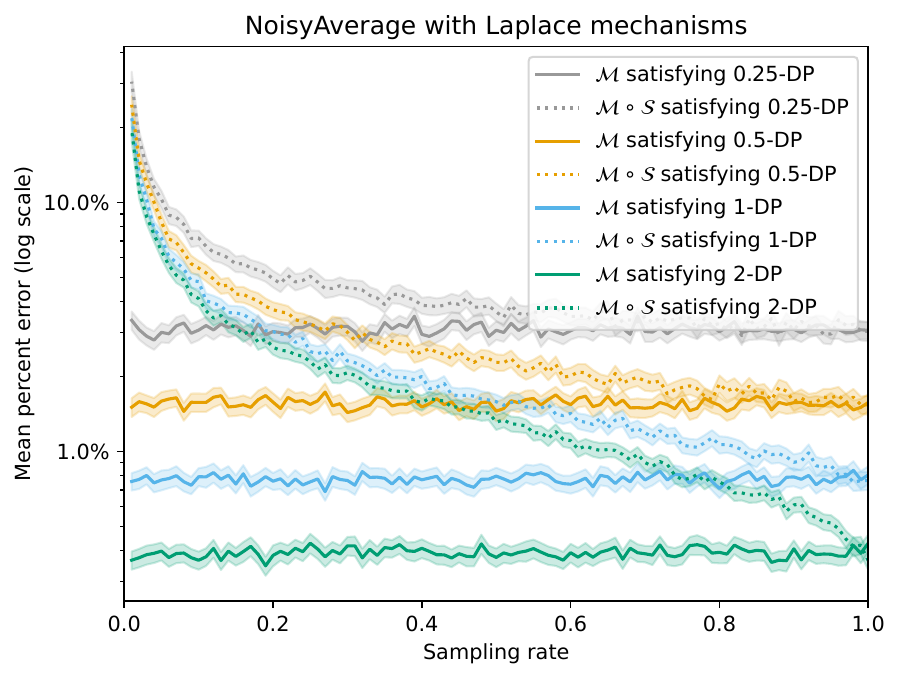}%
    \includegraphics[width=0.4\textwidth]{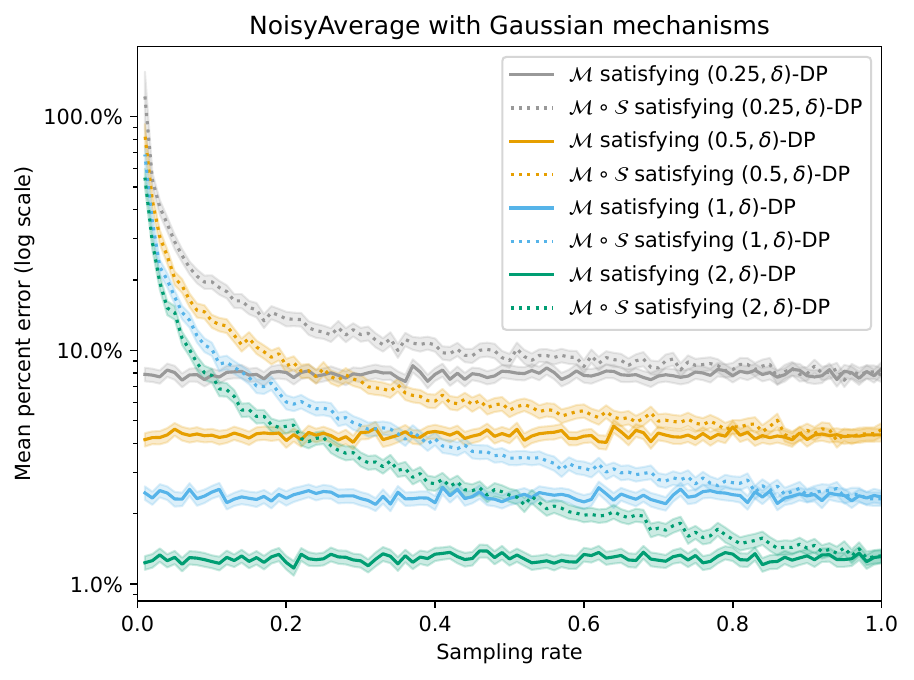}%
    \caption{Plots of the utility values of $\M$ and $\M\circ\S$ for the uniform Poisson sampling over the \texttt{FICA} column in the Census database. The shaded areas correspond to a 95\% confidence interval for the mean of the utility metric.}
    \label{fig:uniformPoissonsampling-Census-FICA}
\end{figure}

\begin{figure}[H]
    \centering
    \includegraphics[width=0.4\textwidth]{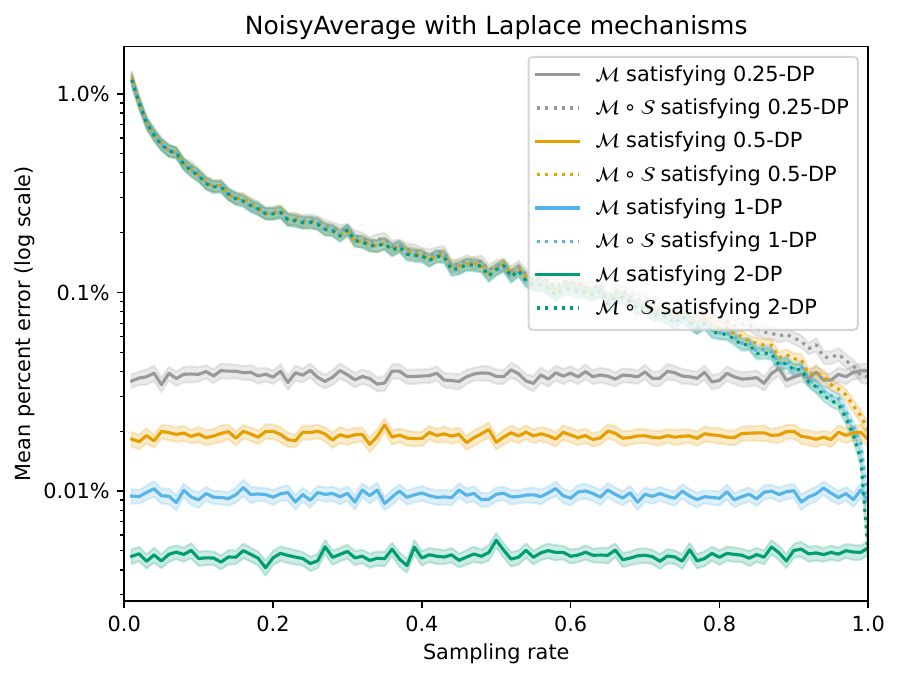}%
    \includegraphics[width=0.4\textwidth]{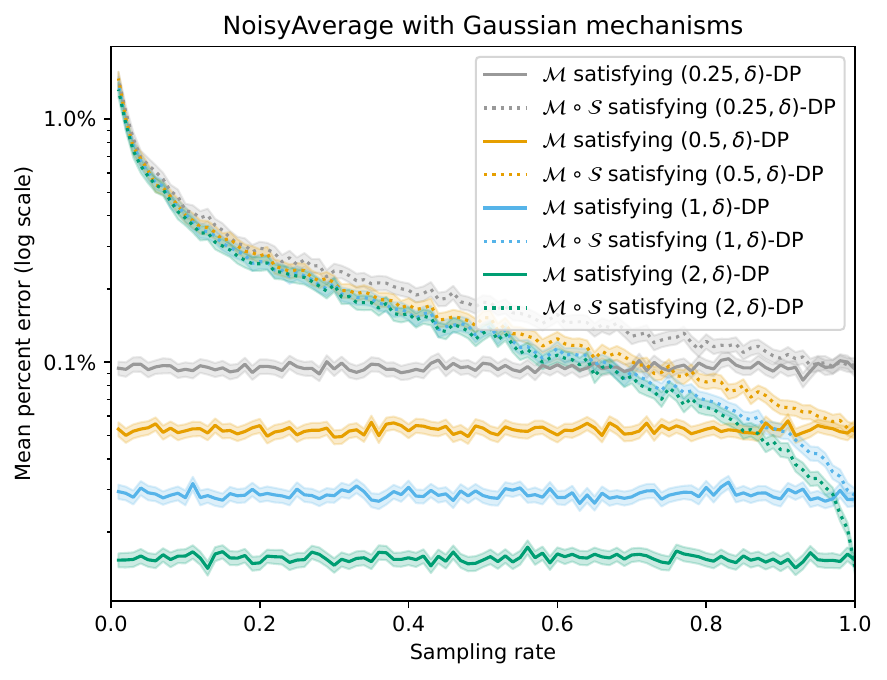}%
    \caption{Plots of the utility values of $\M$ and $\M\circ\S$ for the uniform Poisson sampling over the \texttt{Age} column in the Irish database. The shaded areas correspond to a 95\% confidence interval for the mean of the utility metric.}
    \label{fig:uniformPoissonsampling-Irishn-Age}
\end{figure}

\begin{figure}[H]
    \centering
    \includegraphics[width=0.4\textwidth]{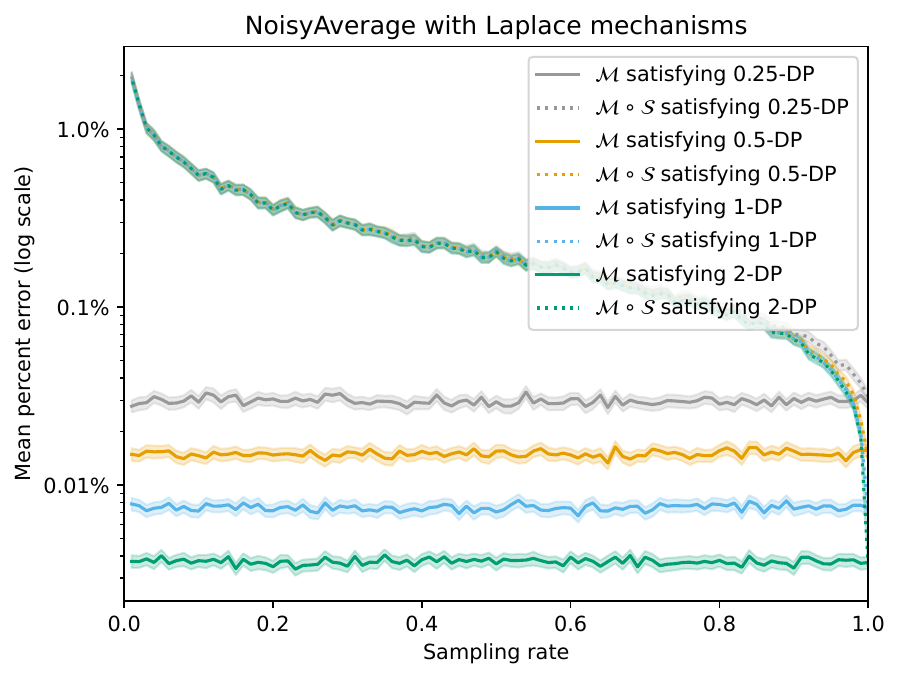}%
    \includegraphics[width=0.4\textwidth]{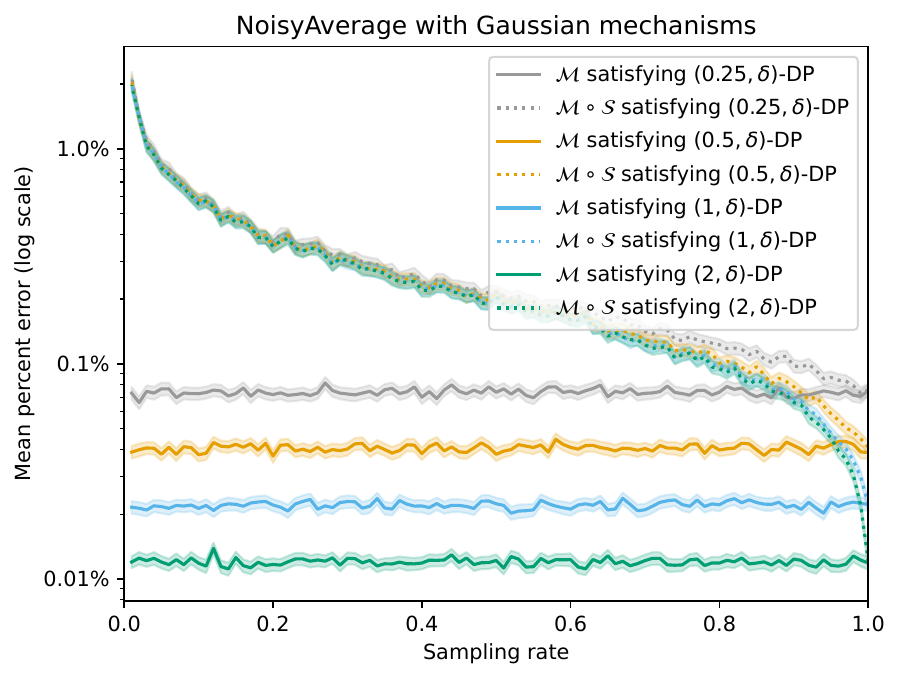}%
    \caption{Plots of the utility values of $\M$ and $\M\circ\S$ for the uniform Poisson sampling over the \texttt{HighestEducationCompleted} column in the Irish database. The shaded areas correspond to a 95\% confidence interval for the mean of the utility metric.}
    \label{fig:uniformPoissonsampling-Irishn-HighestEducationCompleted}
\end{figure}

\subsection{Plots of the Uniform Poisson Sampling for the Mode Computation}\label{sec:plots:PoissonSampling2}
\begin{figure}[H]   
    \centering
    \includegraphics[width=0.4\textwidth]{PaperPlots/RNM-Adult/age/age_uniform_Poisson_sampling_laplace_EmpProb+SD.pdf}%
    \includegraphics[width=0.4\textwidth]{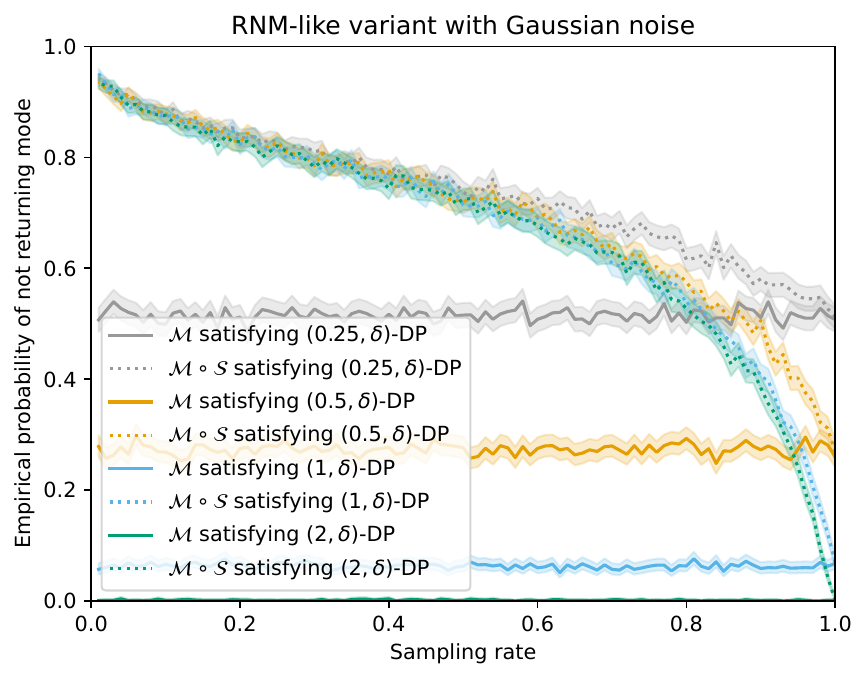}%
    
    \includegraphics[width=0.4\textwidth]{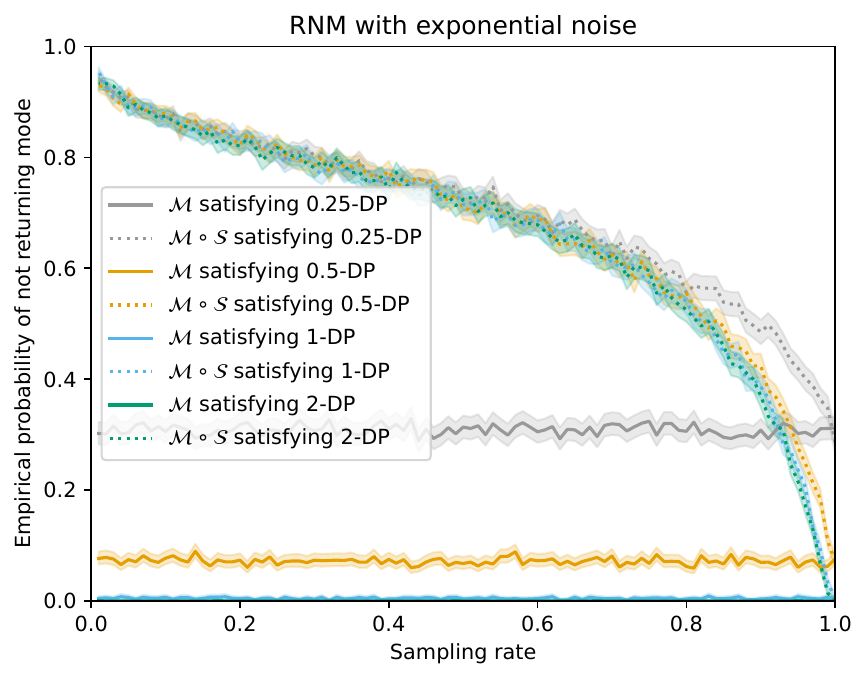}%
    \includegraphics[width=0.4\textwidth]{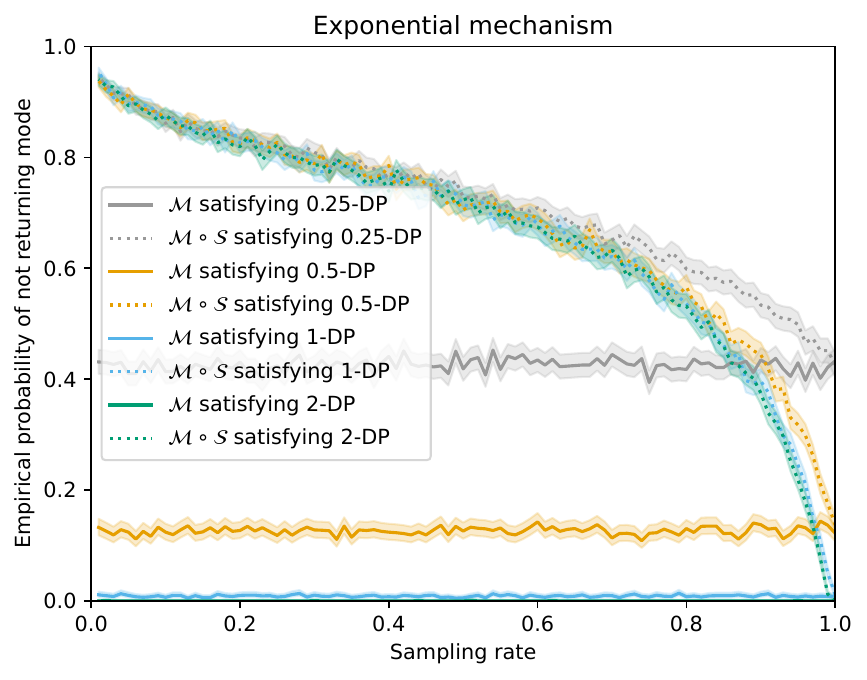}%
    \caption{Plots of the utility values of $\M$ and $\M\circ\S$ for the uniform Poisson sampling over the \texttt{age} column in the Adult database. The shaded areas correspond to a 95\% Wilson confidence interval for the mean of the utility metric.}
    \label{fig:uniformPoissonsampling-RNM-Adult-age}
\end{figure}

\begin{figure}[H]
    \centering
    \includegraphics[width=0.4\textwidth]{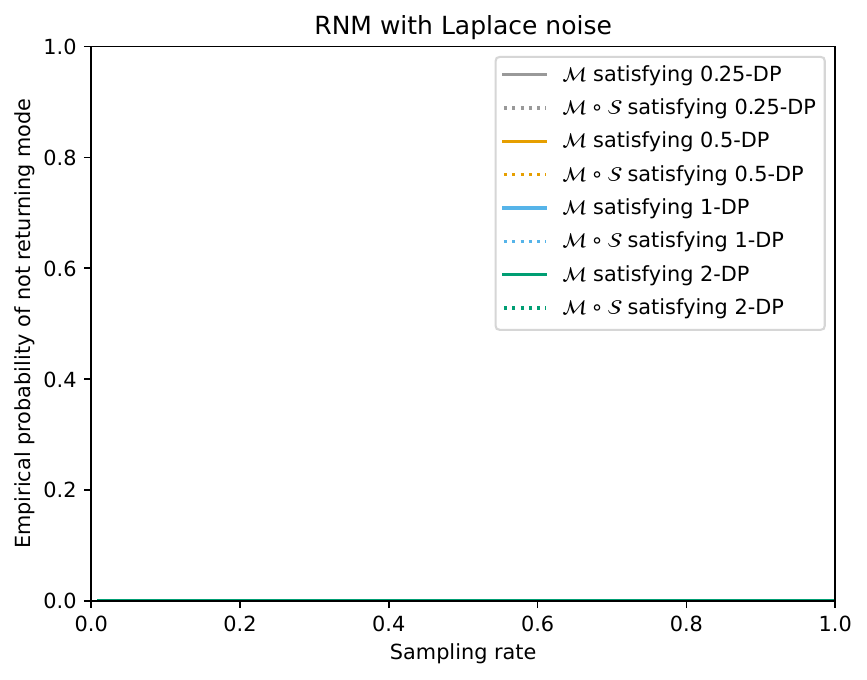}%
    \includegraphics[width=0.4\textwidth]{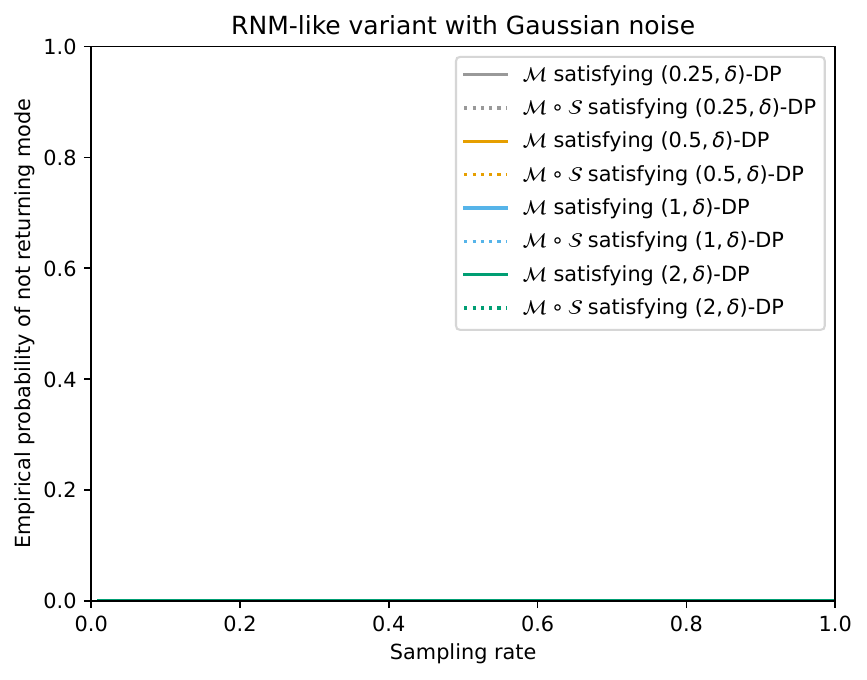}%
    
    \includegraphics[width=0.4\textwidth]{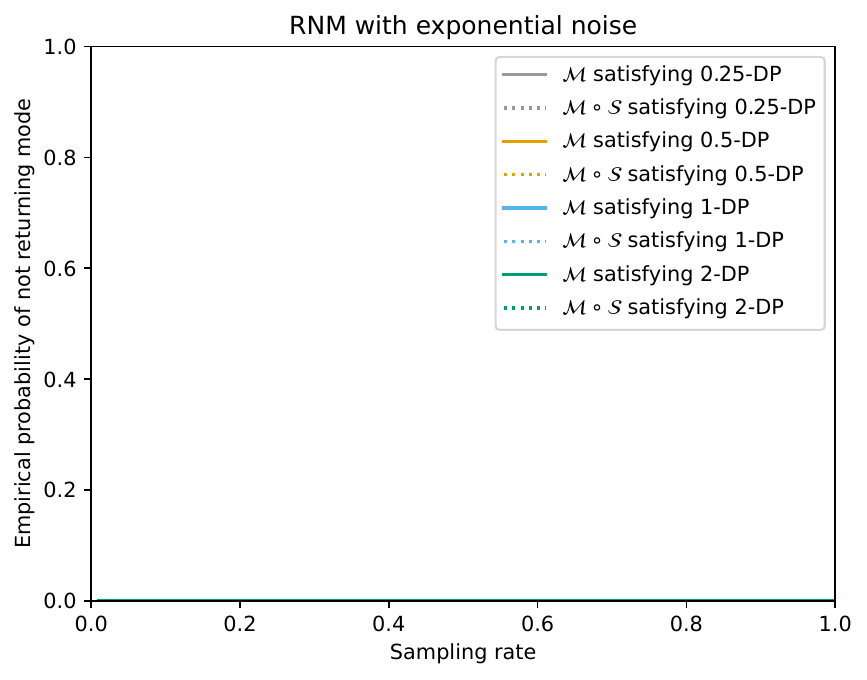}%
    \includegraphics[width=0.4\textwidth]{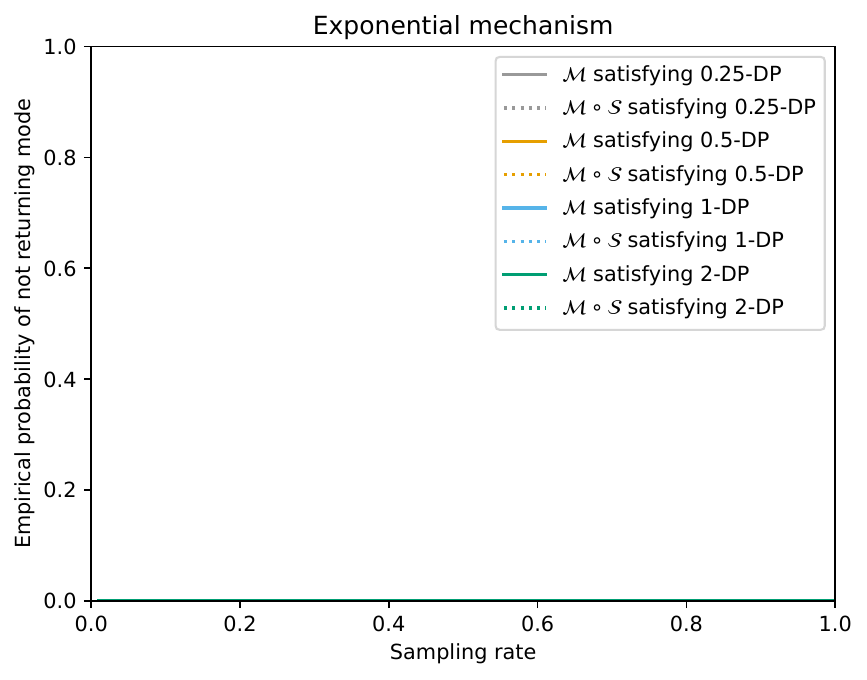}%
    \caption{Plots of the utility values of $\M$ and $\M\circ\S$ for the uniform Poisson sampling over the \texttt{hours-per-week} column in the Adult database. The shaded areas correspond to a 95\% Wilson confidence interval for the mean of the utility metric.}
    \label{fig:uniformPoissonsampling-RNM-Adult-hours-per-week}
\end{figure}

\begin{figure}[H]
    \centering
    \includegraphics[width=0.4\textwidth]{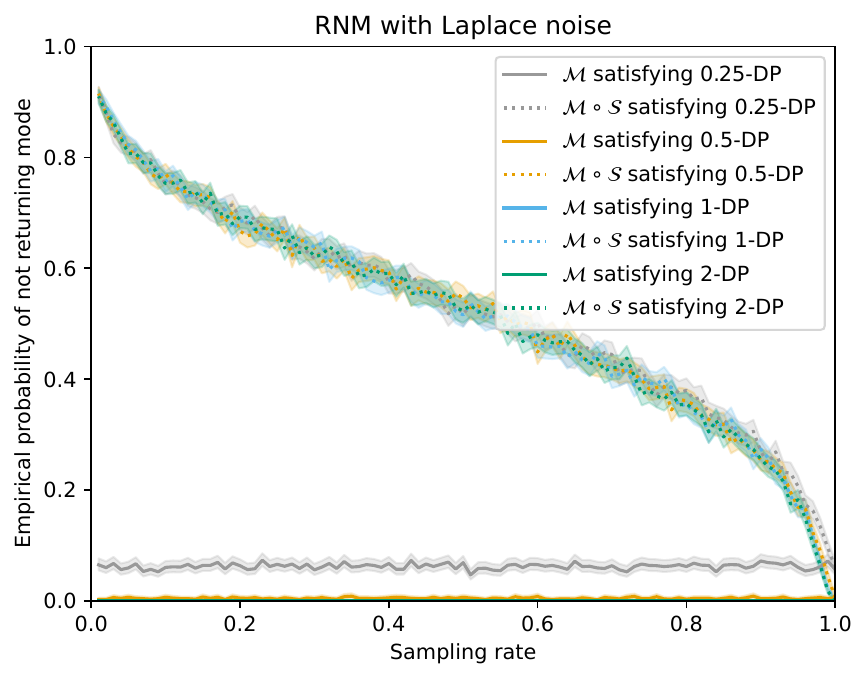}%
    \includegraphics[width=0.4\textwidth]{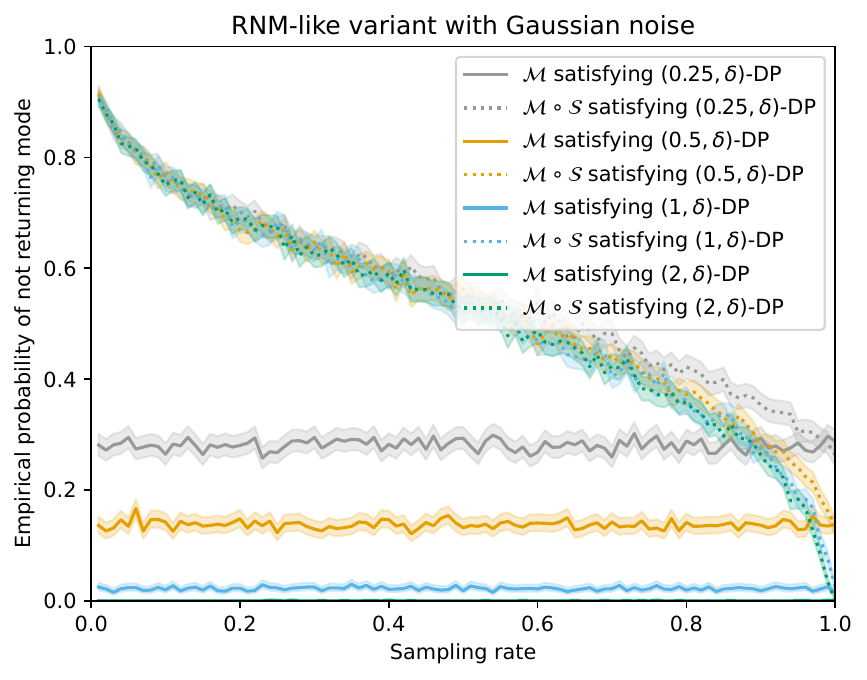}%
    
    \includegraphics[width=0.4\textwidth]{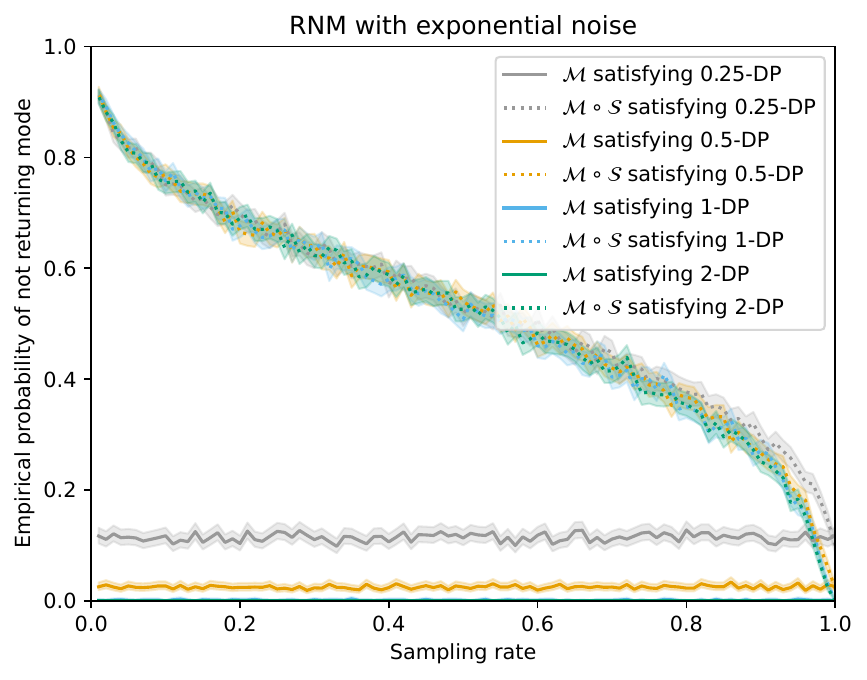}%
    \includegraphics[width=0.4\textwidth]{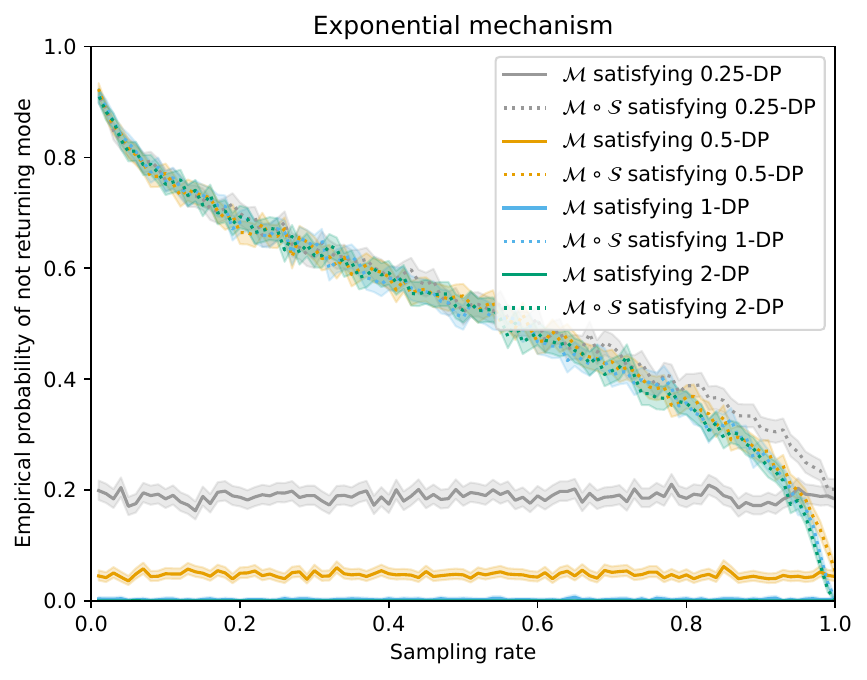}%
    \caption{Plots of the utility values of $\M$ and $\M\circ\S$ for the uniform Poisson sampling over the \texttt{Age} column in the Irish database. The shaded areas correspond to a 95\% Wilson confidence interval for the mean of the utility metric.}
    \label{fig:uniformPoissonsampling-RNM-Irishn-Age}
\end{figure}

\begin{figure}[H]
    \centering
    \includegraphics[width=0.4\textwidth]{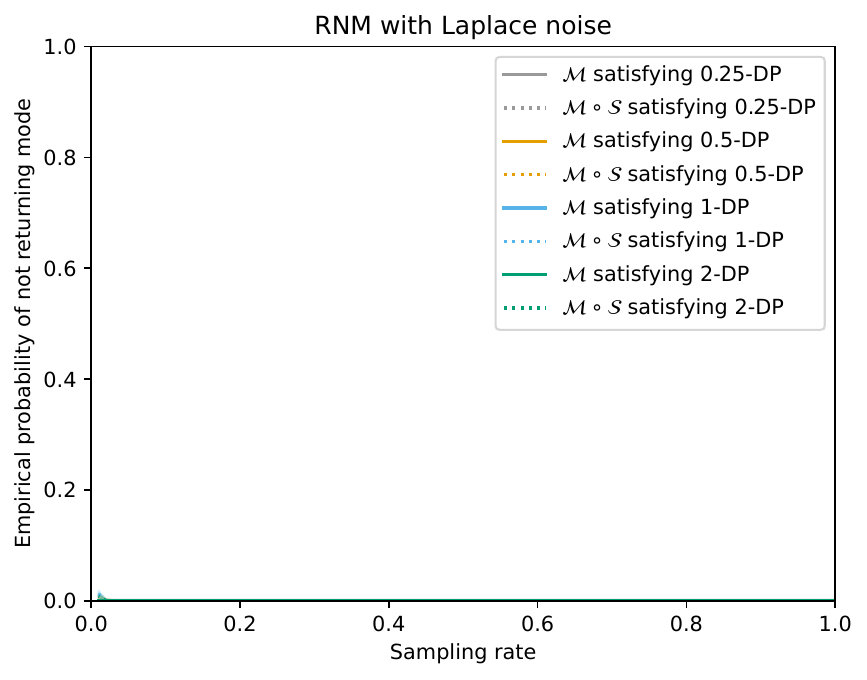}%
    \includegraphics[width=0.4\textwidth]{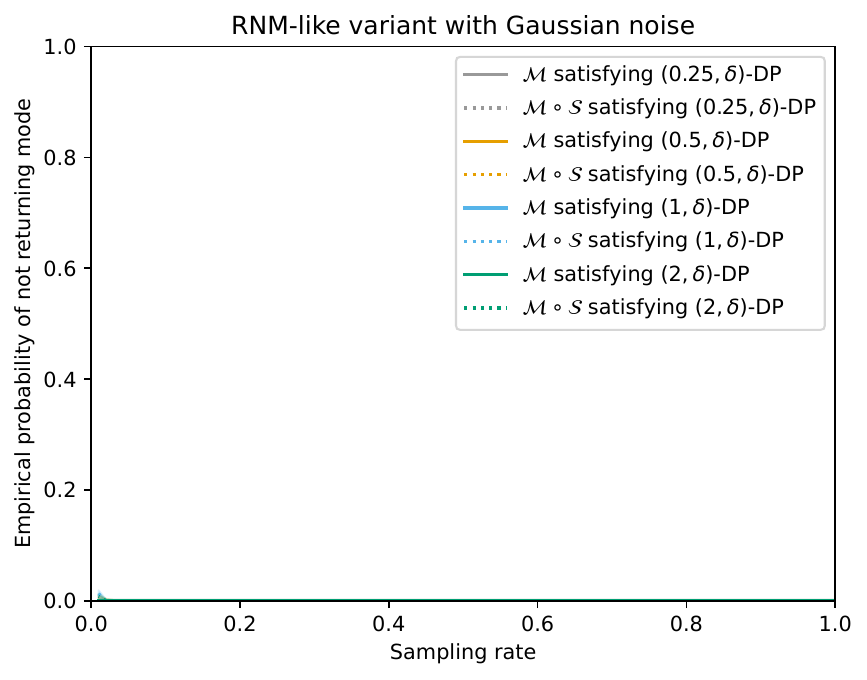}%
    
    \includegraphics[width=0.4\textwidth]{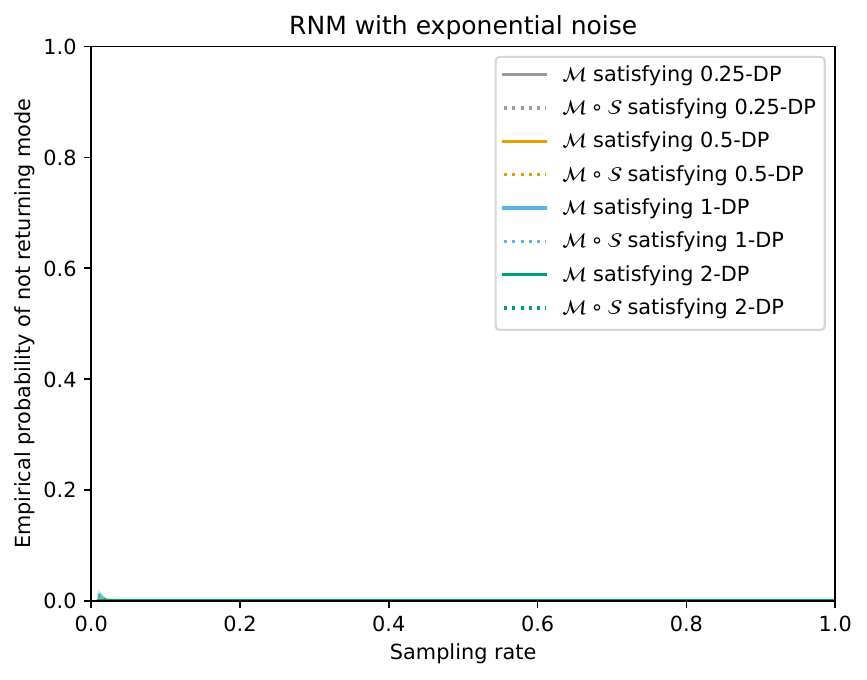}%
    \includegraphics[width=0.4\textwidth]{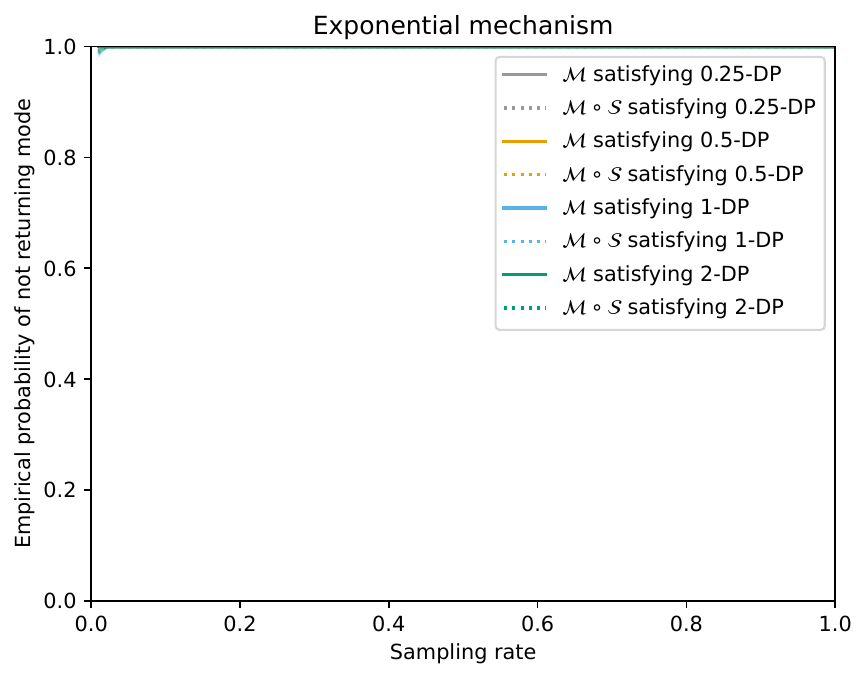}%
    \caption{Plots of the utility values of $\M$ and $\M\circ\S$ for the uniform Poisson sampling over the \texttt{HighestEducationCompleted} column in the Irish database. The shaded areas correspond to a 95\% Wilson confidence interval for the mean of the utility metric.}
    \label{fig:uniformPoissonsampling-RNM-Irishn-HighestEducationCompleted}
\end{figure}

\subsection{Plots of the Uniform Poisson Sampling for the Clustering Mechanisms}\label{sec:plots:PoissonSampling3}
\begin{figure}[H]
    \centering
    \includegraphics[width=0.4\textwidth]{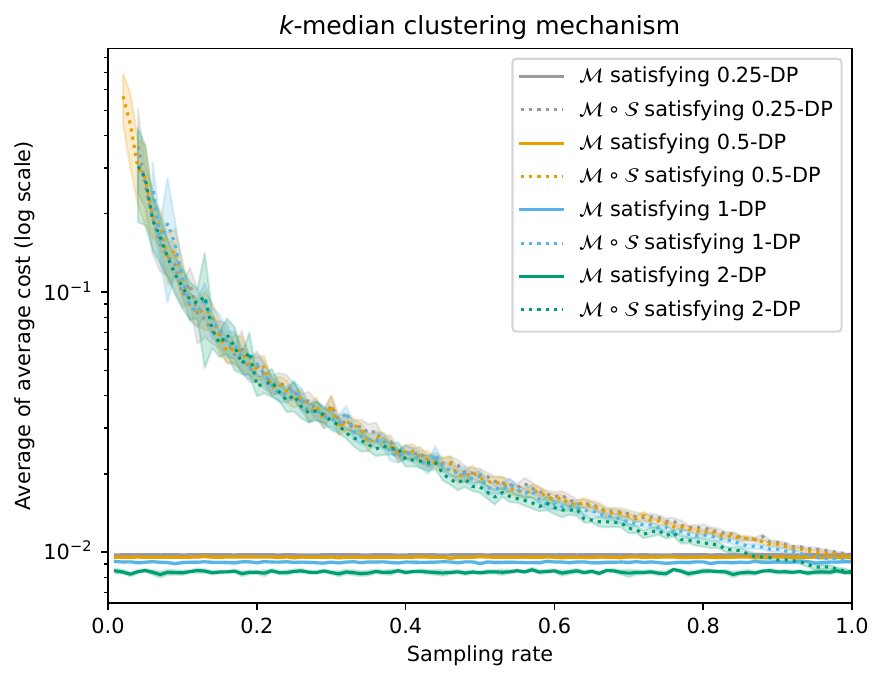}%
    \includegraphics[width=0.4\textwidth]{PaperPlots/Clustering/age_fnlwgt_education-num_capital-gain_capital-loss_hours-per-week_uniform_Poisson_sampling_Average+SD.pdf}%
    \caption{Plots of the utility values of $\M$ and $\M\circ\S$ for the uniform Poisson sampling for (left) $k$-median over our synthetic database and (right) DPLloyd over the six numerical columns in the Adult database. The shaded areas correspond to a 95\% confidence interval for the mean of the utility metric.}
    \label{fig:uniformPoissonsampling-Clustering}
\end{figure}

\subsection{Plots for the Mean Computation}\label{sec:plots:SuppressionwithEpsDeltaChange1}

\begin{figure}[H]
    \includegraphics[width=0.25\textwidth]{PaperPlots/Adult/age/age_eps=0.25_difference_laplace_M_minus_MoSChangeEpsDelta_MPE_10--90.pdf}%
    \includegraphics[width=0.25\textwidth]{PaperPlots/Adult/age/age_eps=0.5_difference_laplace_M_minus_MoSChangeEpsDelta_MPE_10--90.pdf}%
    \includegraphics[width=0.25\textwidth]{PaperPlots/Adult/age/age_eps=1_difference_laplace_M_minus_MoSChangeEpsDelta_MPE_10--90.pdf}%
    \includegraphics[width=0.25\textwidth]{PaperPlots/Adult/age/age_eps=2_difference_laplace_M_minus_MoSChangeEpsDelta_MPE_10--90.pdf}%
    \caption{The mean percent error (MPE) of $\M$ minus that of $\M\circ\S$ at the same privacy levels. Results shown for the NoisyAverage with Laplace mechanisms over the \texttt{age} column in the Adult database.}
    \label{fig:Experiment2-NoisyAverage-Laplace-Adult-age}
\end{figure}

\begin{figure}[H]
    \includegraphics[width=0.25\textwidth]{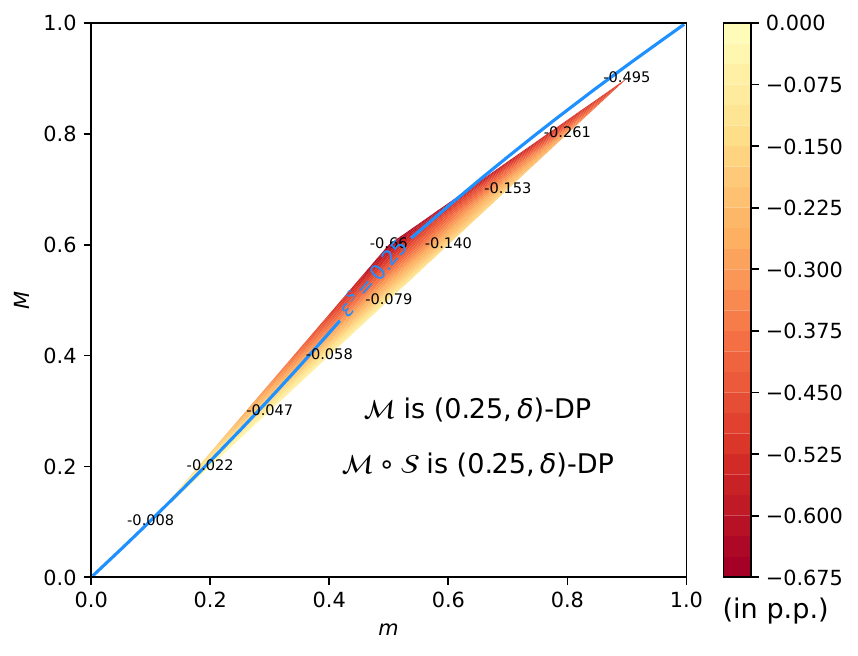}%
    \includegraphics[width=0.25\textwidth]{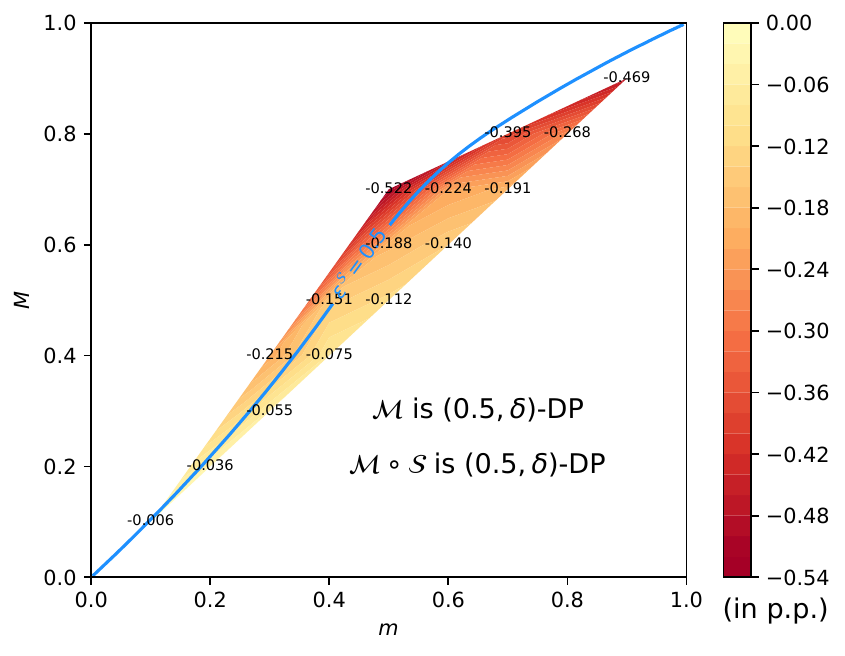}%
    \includegraphics[width=0.25\textwidth]{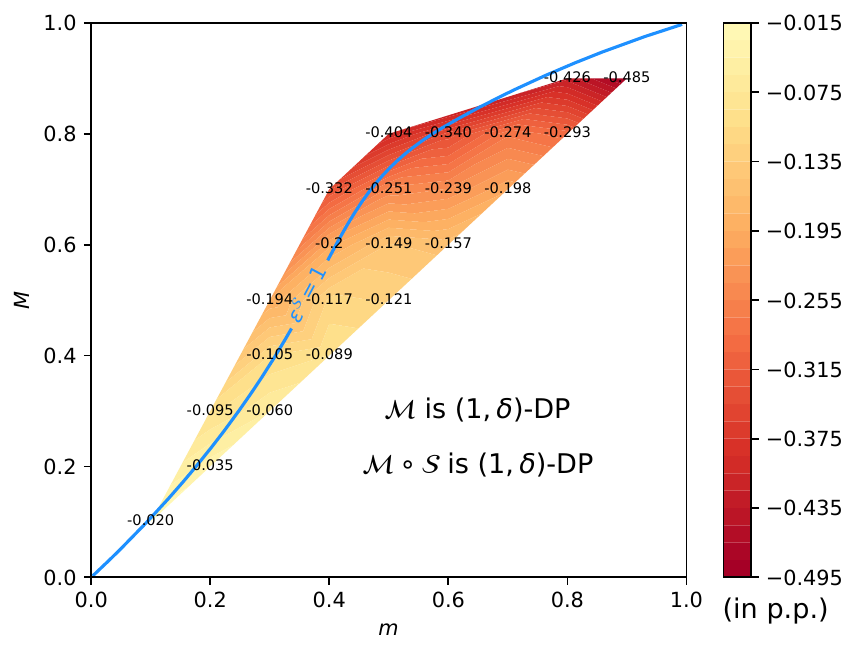}%
    \includegraphics[width=0.25\textwidth]{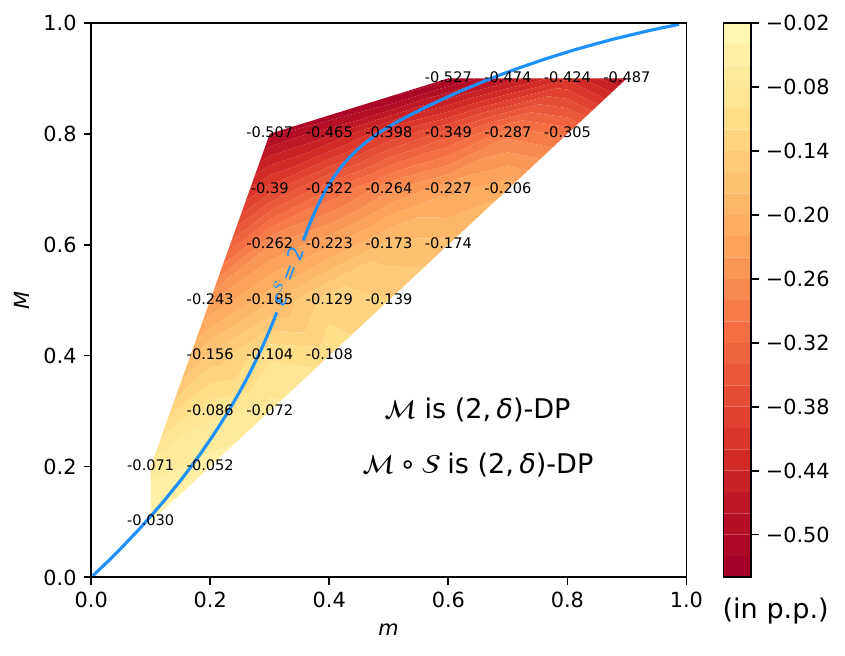}%
    \caption{The mean percent error (MPE) of $\M$ minus that of $\M\circ\S$ at the same privacy levels. Results shown for the NoisyAverage with Gaussian mechanisms over the \texttt{age} column in the Adult database.}
    \label{fig:Experiment2-NoisyAverage-Gaussian-Adult-age}
\end{figure}

\begin{figure}[H]
    \includegraphics[width=0.25\textwidth]{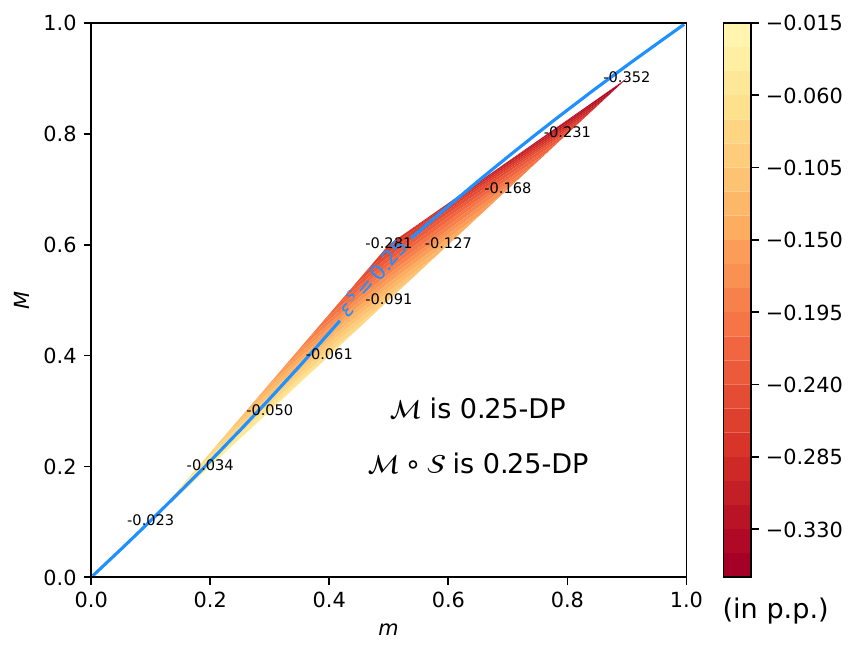}%
    \includegraphics[width=0.25\textwidth]{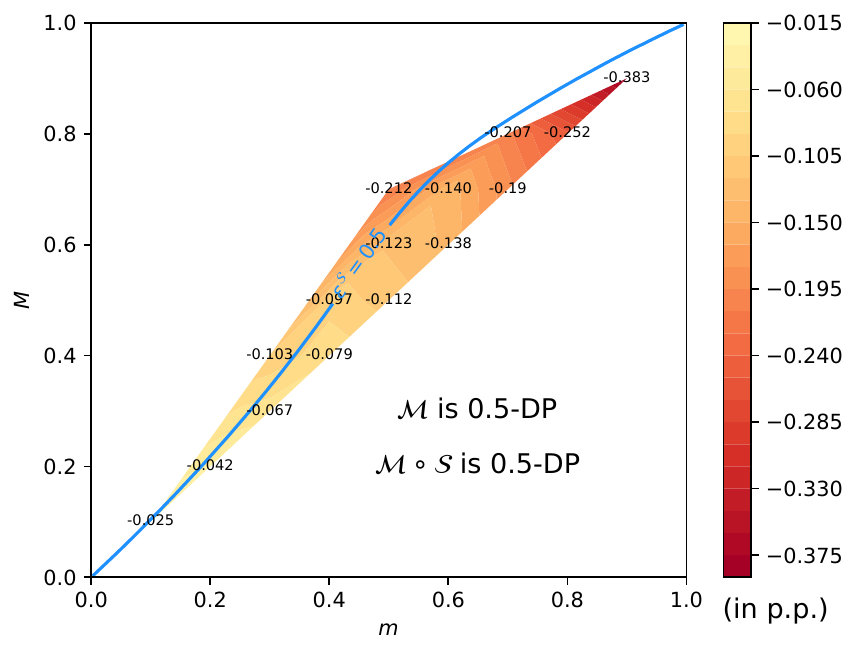}%
    \includegraphics[width=0.25\textwidth]{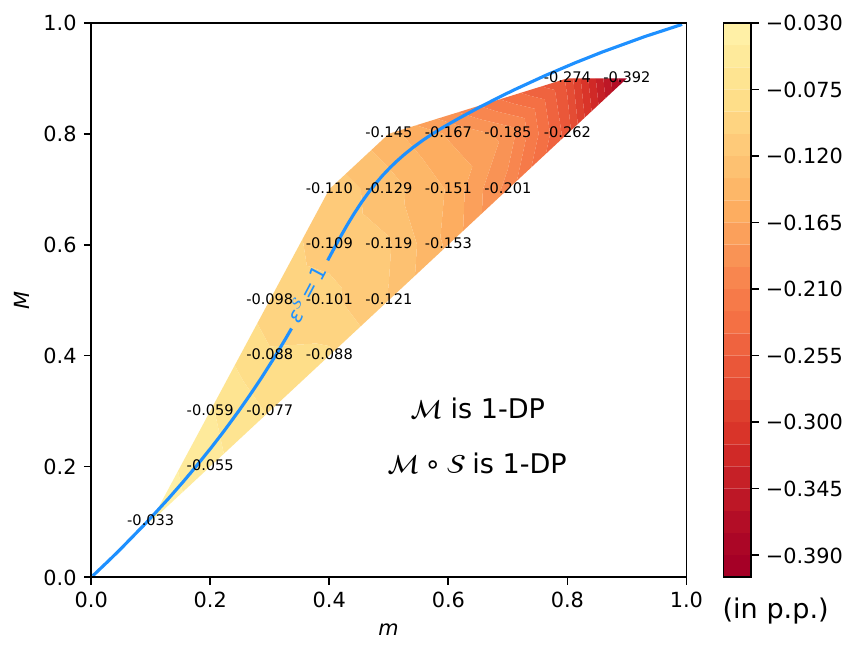}%
    \includegraphics[width=0.25\textwidth]{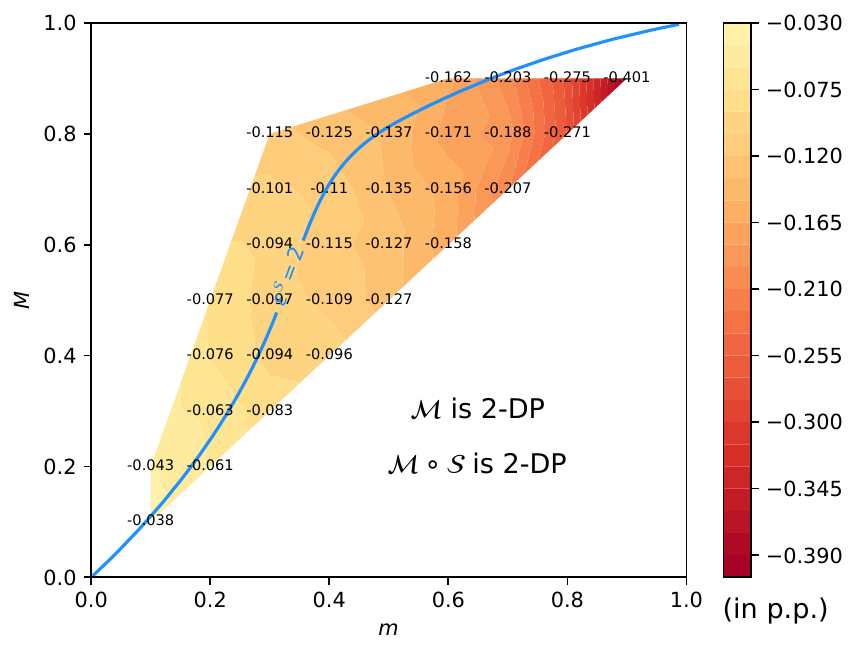}%
    \caption{The mean percent error (MPE) of $\M$ minus that of $\M\circ\S$ at the same privacy levels. Results shown for the NoisyAverage with Laplace mechanisms over the \texttt{hours-per-week} column in the Adult database.}
    \label{fig:Experiment2-NoisyAverage-Laplace-Adult-hours-per-week}
\end{figure}

\begin{figure}[H]
    \includegraphics[width=0.25\textwidth]{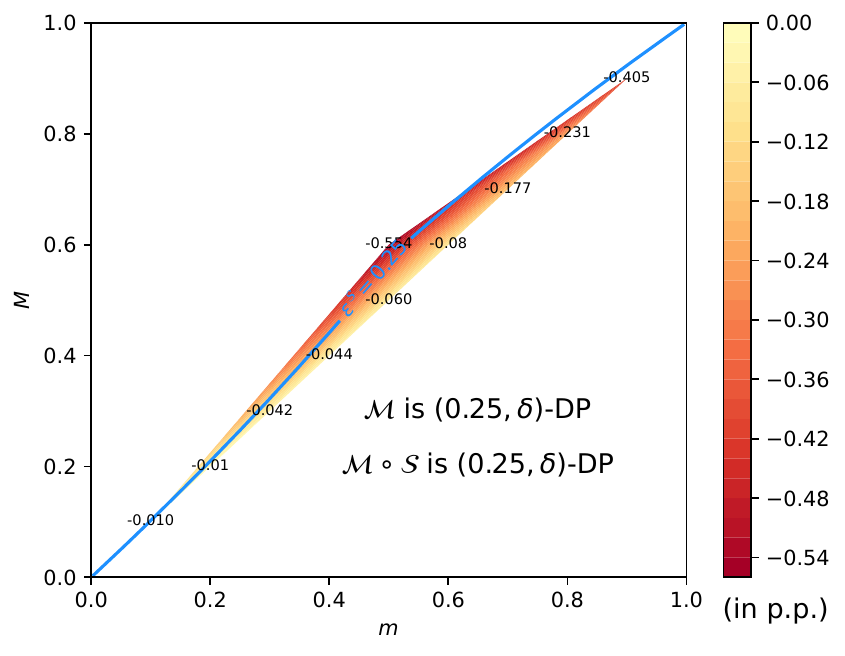}%
    \includegraphics[width=0.25\textwidth]{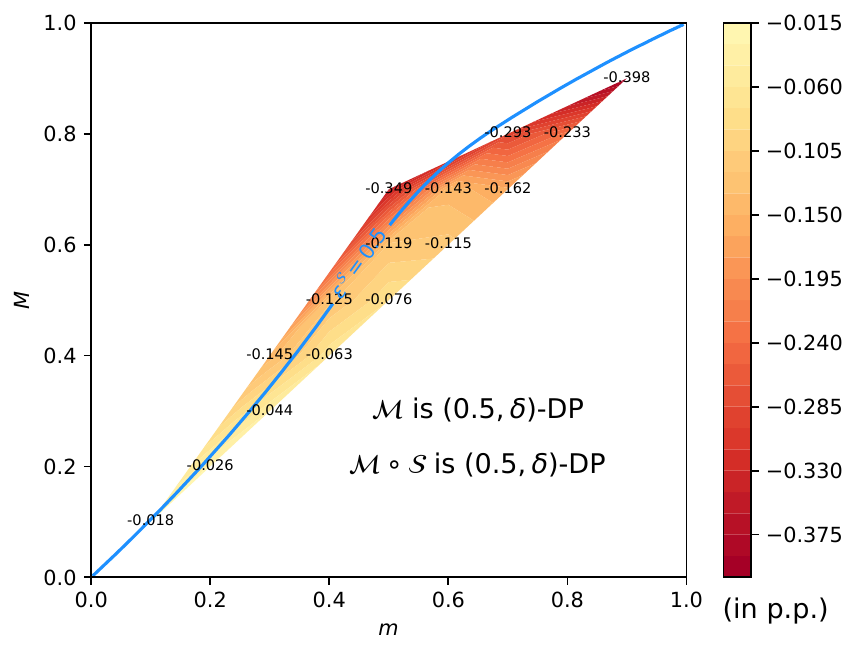}%
    \includegraphics[width=0.25\textwidth]{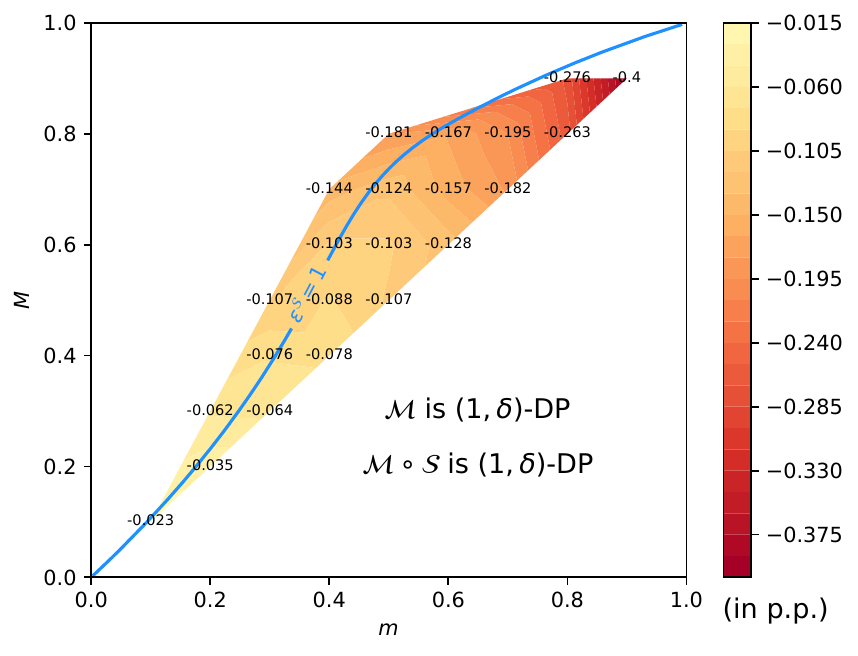}%
    \includegraphics[width=0.25\textwidth]{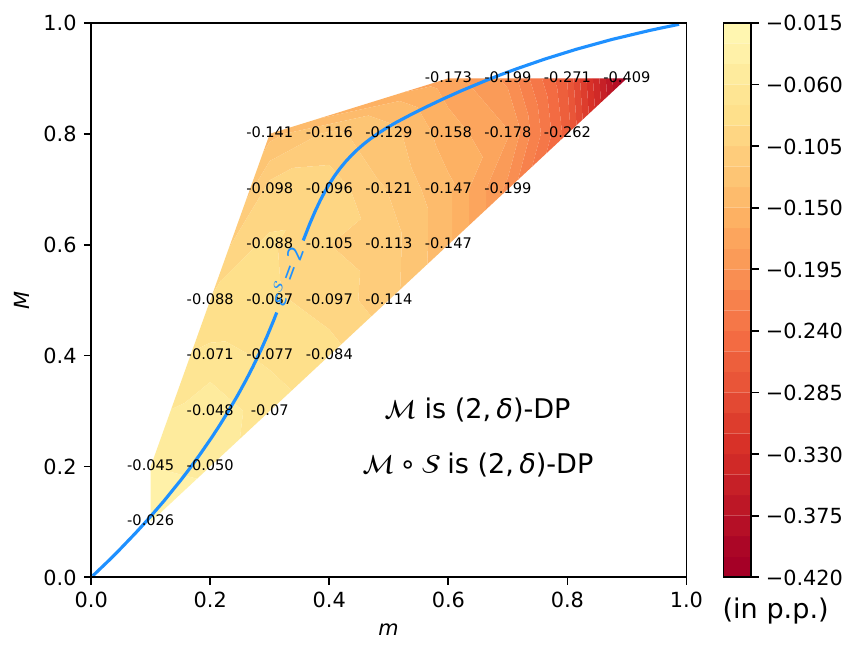}%
    \caption{The mean percent error (MPE) of $\M$ minus that of $\M\circ\S$ at the same privacy levels. Results shown for the NoisyAverage with Gaussian mechanisms over the \texttt{hours-per-week} column in the Adult database.}
    \label{fig:Experiment2-NoisyAverage-Gaussian-Adult-hours-per-week}
\end{figure}

\begin{figure}[H]
    \includegraphics[width=0.25\textwidth]{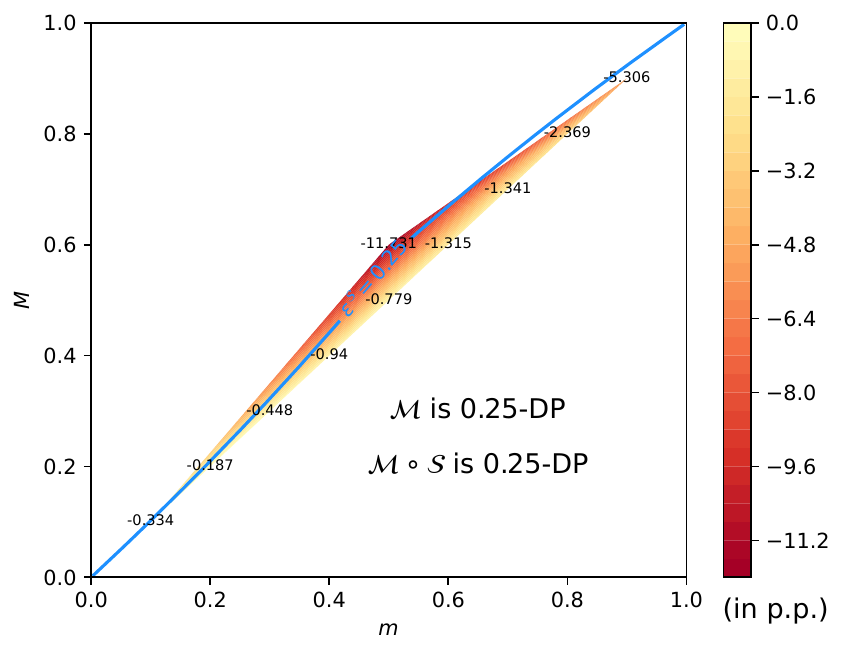}%
    \includegraphics[width=0.25\textwidth]{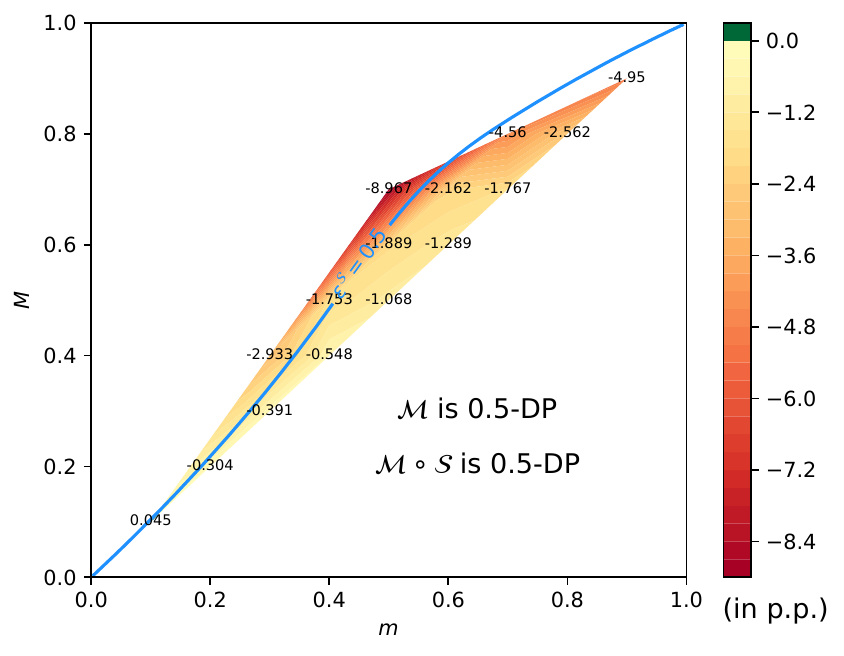}%
    \includegraphics[width=0.25\textwidth]{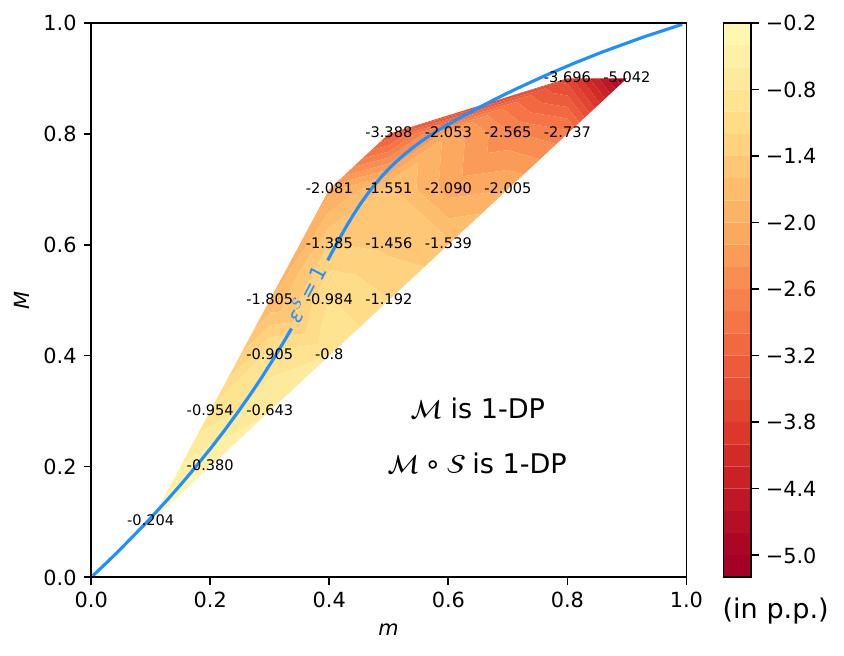}%
    \includegraphics[width=0.25\textwidth]{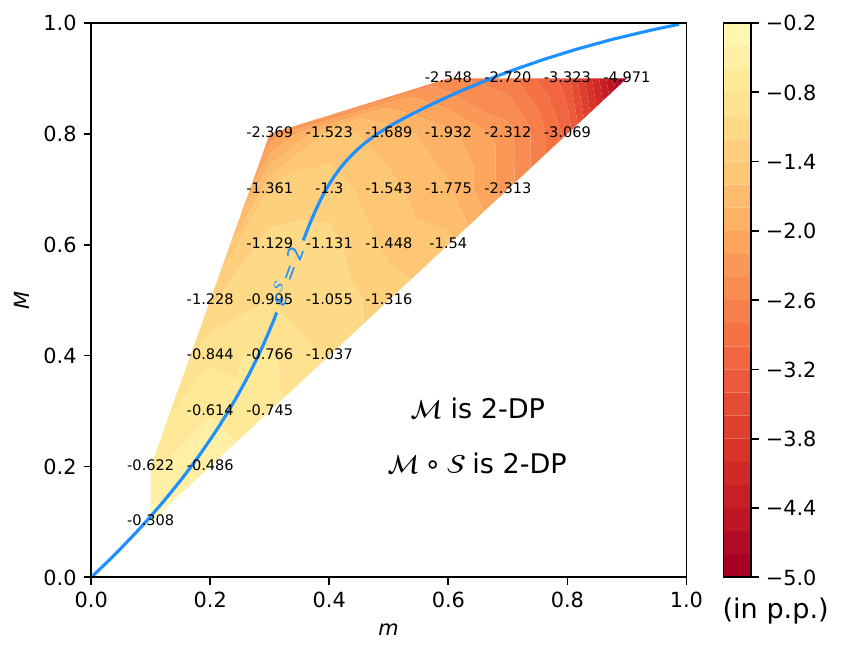}%
    \caption{The mean percent error (MPE) of $\M$ minus that of $\M\circ\S$ at the same privacy levels. Results shown for the NoisyAverage with Laplace mechanisms over the \texttt{FEDTAX} column in the Census database.}
    \label{fig:Experiment2-NoisyAverage-Laplace-Census-FEDTAX}
\end{figure}

\begin{figure}[H]
    \includegraphics[width=0.25\textwidth]{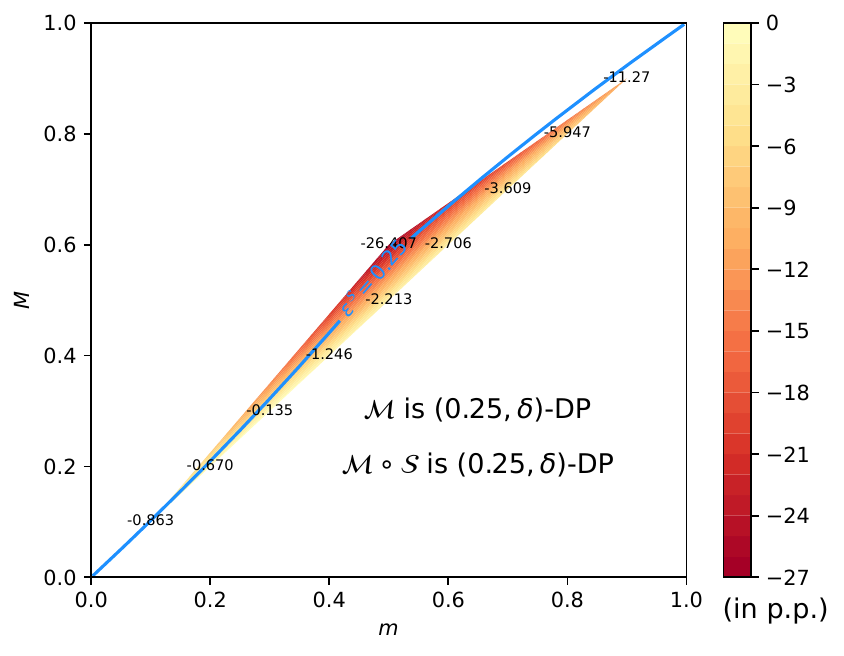}%
    \includegraphics[width=0.25\textwidth]{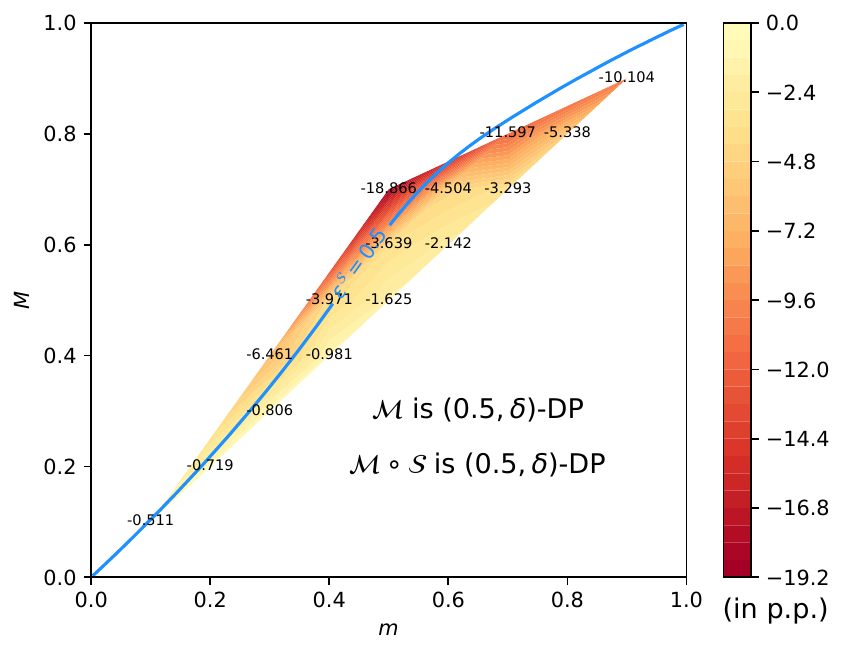}%
    \includegraphics[width=0.25\textwidth]{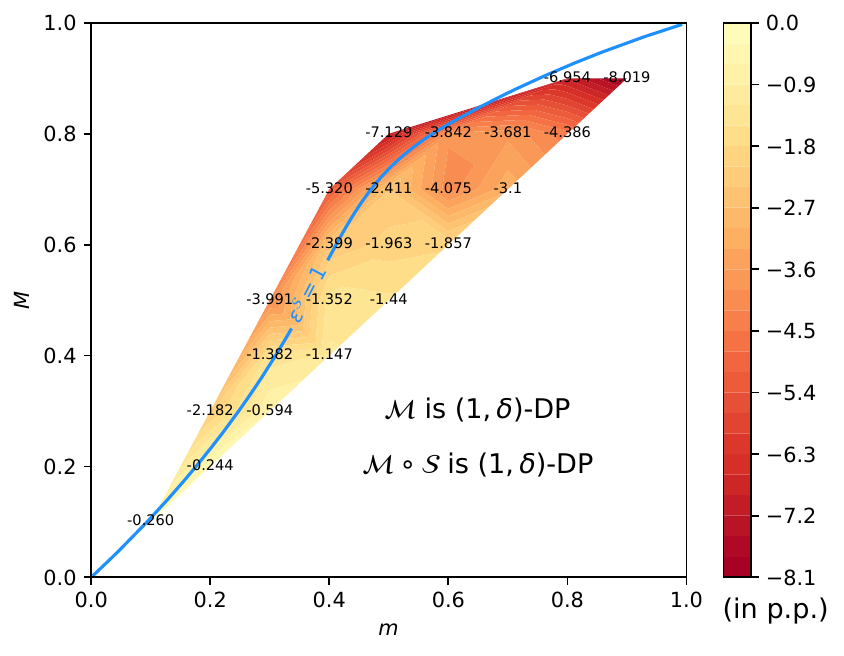}%
    \includegraphics[width=0.25\textwidth]{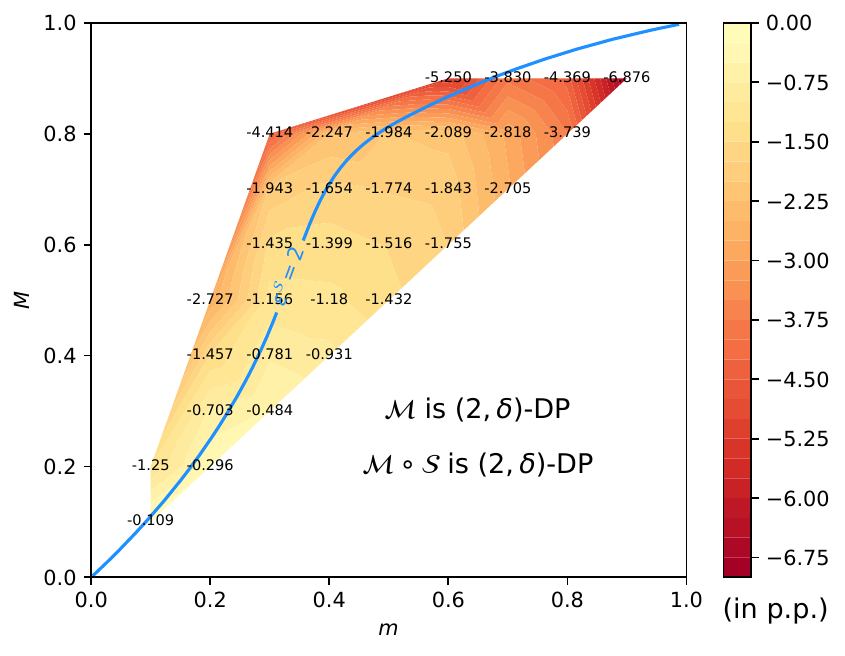}%
    \caption{The mean percent error (MPE) of $\M$ minus that of $\M\circ\S$ at the same privacy levels. Results shown for the NoisyAverage with Gaussian mechanisms over the \texttt{FEDTAX} column in the Census database.}
    \label{fig:Experiment2-NoisyAverage-Gaussian-Census-FEDTAX}
\end{figure}

\begin{figure}[H]
    \includegraphics[width=0.25\textwidth]{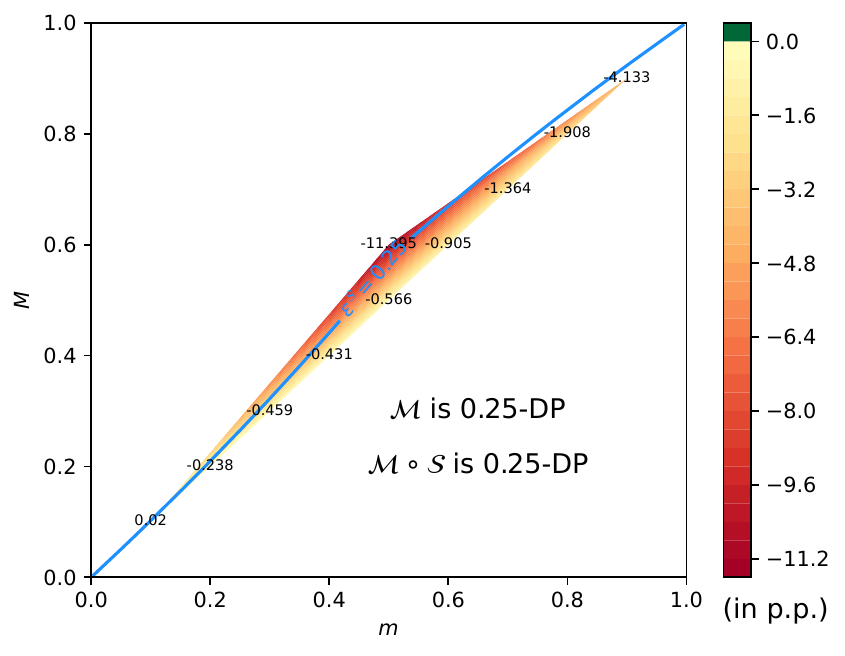}%
    \includegraphics[width=0.25\textwidth]{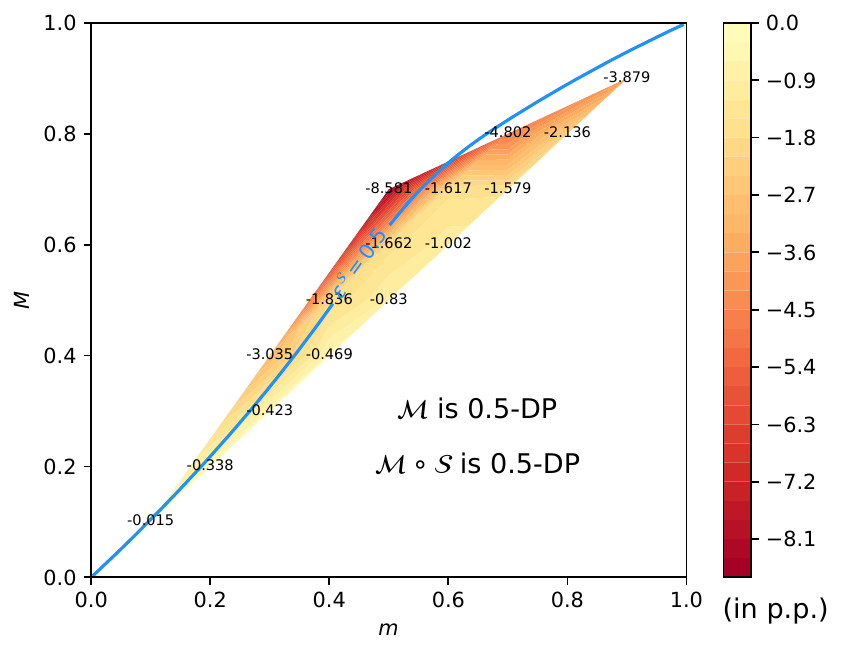}%
    \includegraphics[width=0.25\textwidth]{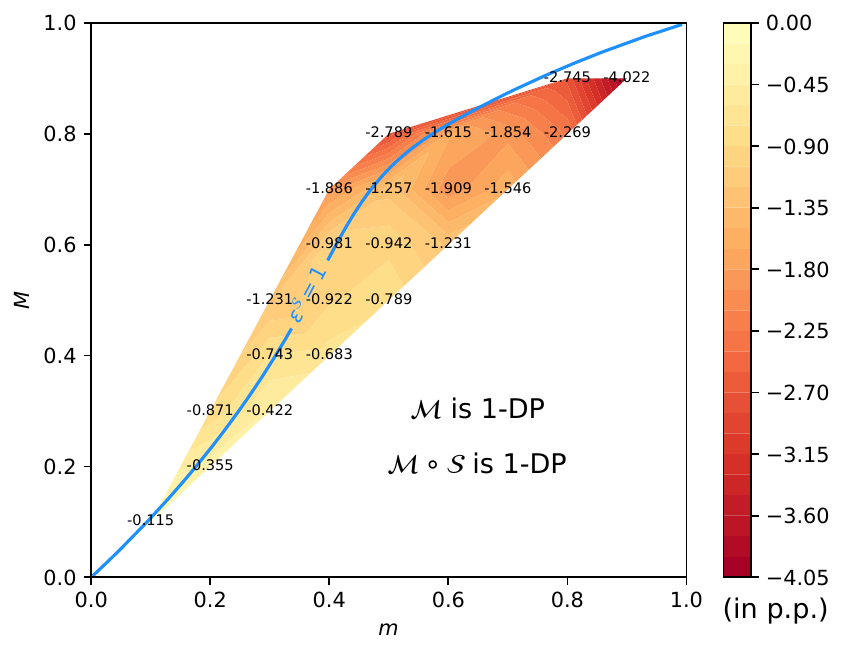}%
    \includegraphics[width=0.25\textwidth]{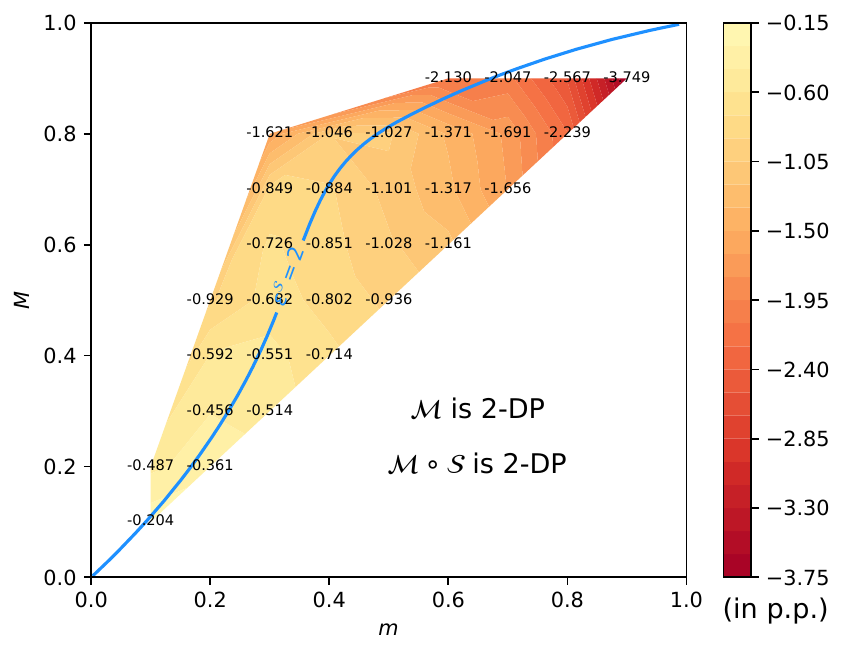}%
    \caption{The mean percent error (MPE) of $\M$ minus that of $\M\circ\S$ at the same privacy levels. Results shown for the NoisyAverage with Laplace mechanisms over the \texttt{FICA} column in the Census database.}
    \label{fig:Experiment2-NoisyAverage-Laplace-Census-FICA}
\end{figure}

\begin{figure}[H]
    \includegraphics[width=0.25\textwidth]{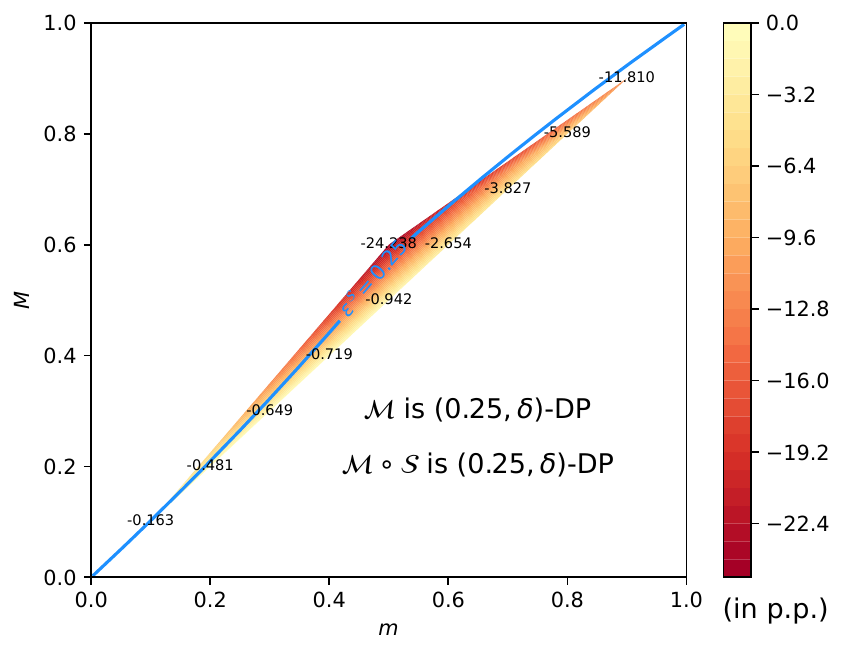}%
    \includegraphics[width=0.25\textwidth]{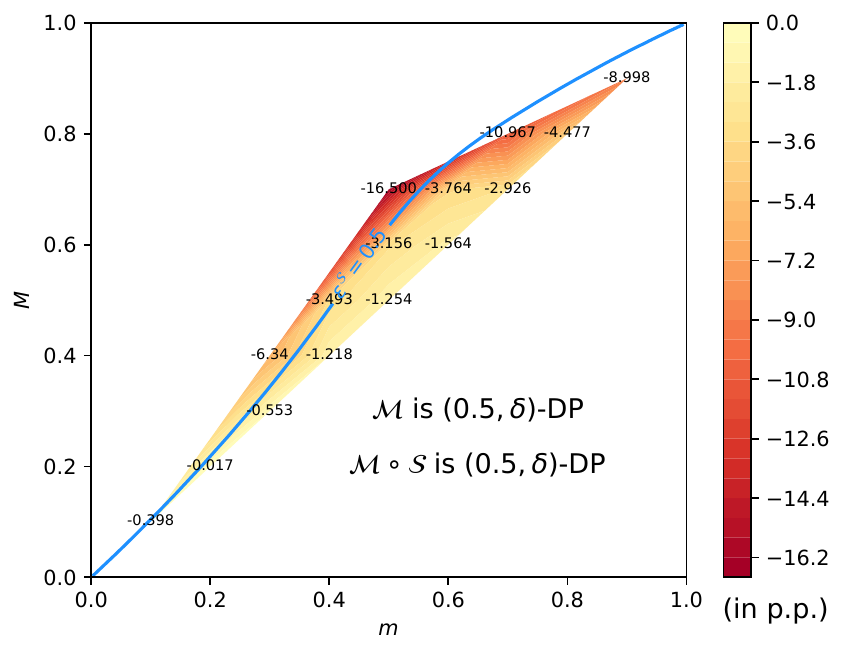}%
    \includegraphics[width=0.25\textwidth]{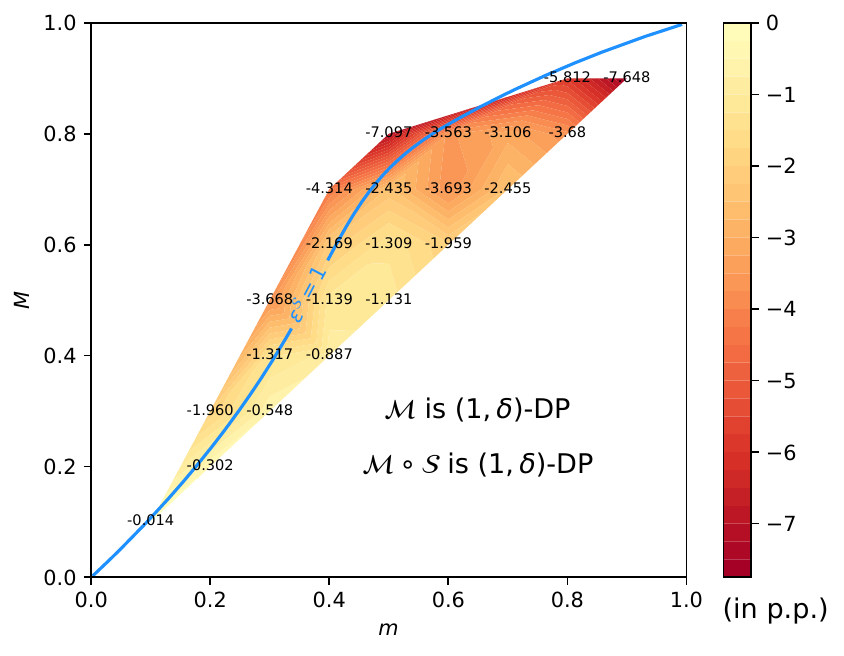}%
    \includegraphics[width=0.25\textwidth]{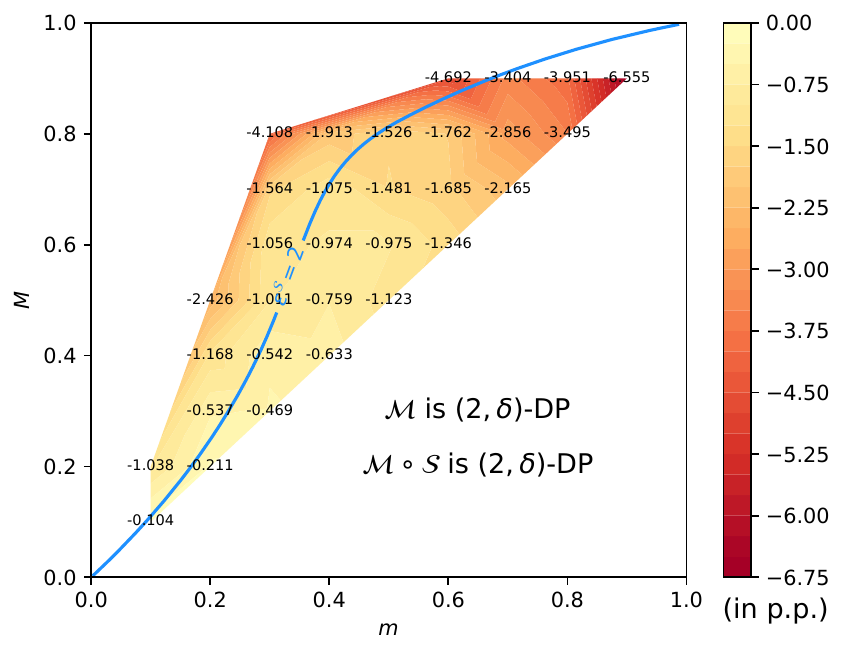}%
    \caption{The mean percent error (MPE) of $\M$ minus that of $\M\circ\S$ at the same privacy levels. Results shown for the NoisyAverage with Gaussian mechanisms over the \texttt{FICA} column in the Census database.}
    \label{fig:Experiment2-NoisyAverage-Gaussian-Census-FICA}
\end{figure}

\begin{figure}[H]
    \includegraphics[width=0.25\textwidth]{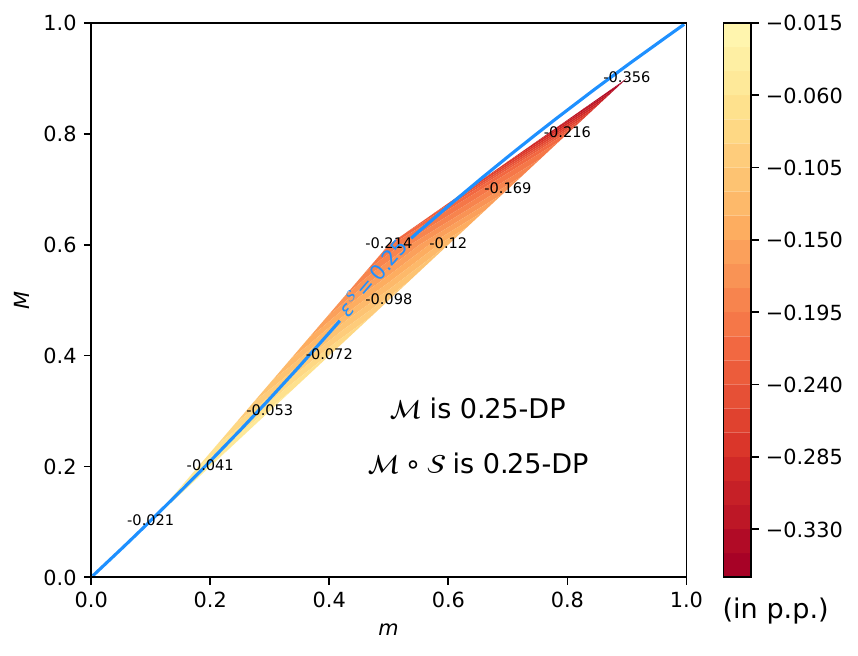}%
    \includegraphics[width=0.25\textwidth]{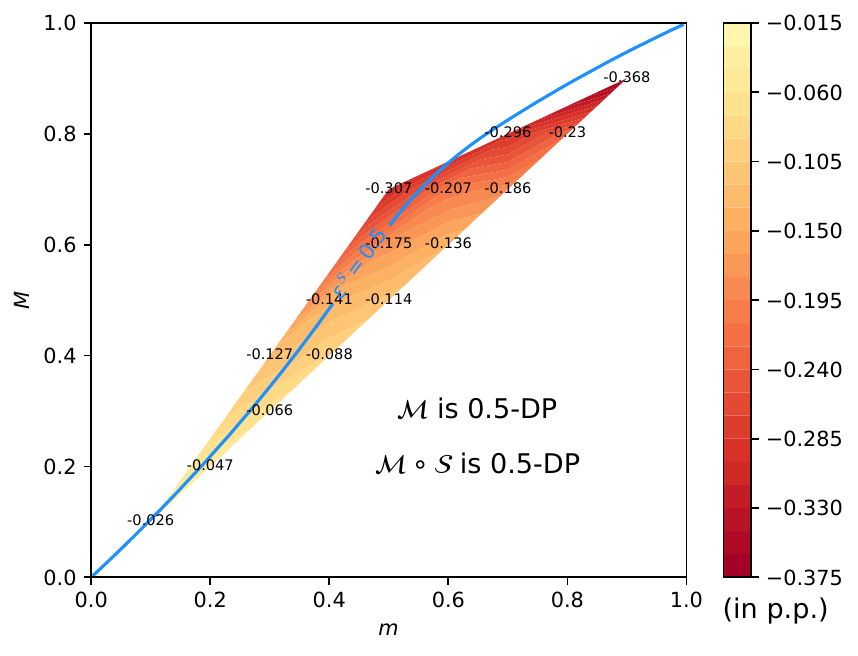}%
    \includegraphics[width=0.25\textwidth]{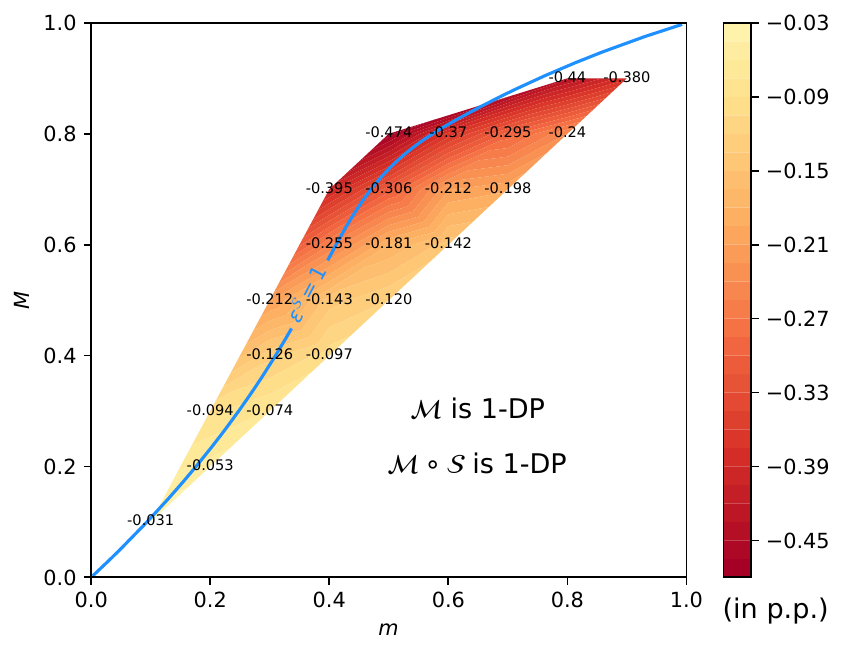}%
    \includegraphics[width=0.25\textwidth]{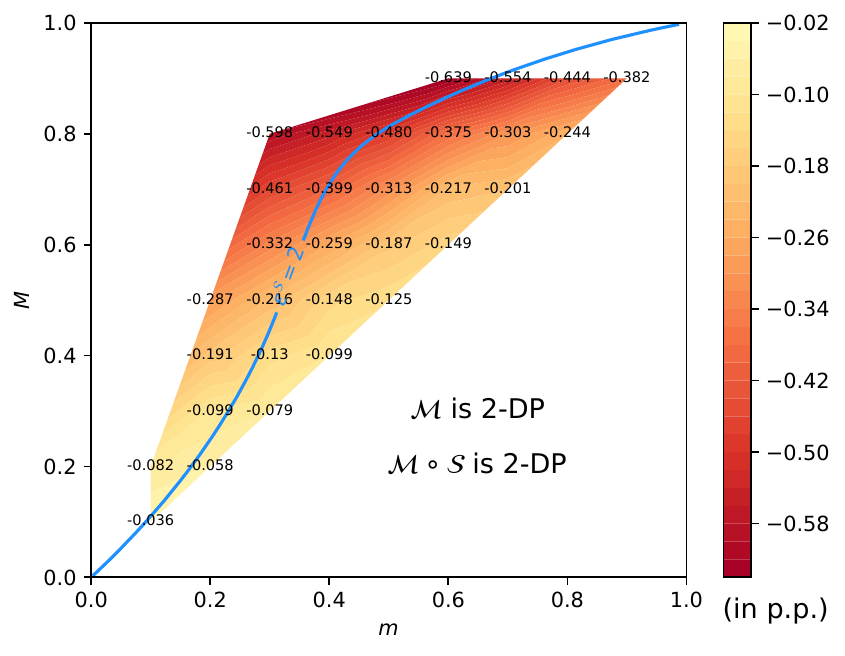}%
    \caption{The mean percent error (MPE) of $\M$ minus that of $\M\circ\S$ at the same privacy levels. Results shown for the NoisyAverage with Laplace mechanisms over the \texttt{Age} column in the Irish database.}
    \label{fig:Experiment2-NoisyAverage-Laplace-Irishn-Age}
\end{figure}

\begin{figure}[H]
    \includegraphics[width=0.25\textwidth]{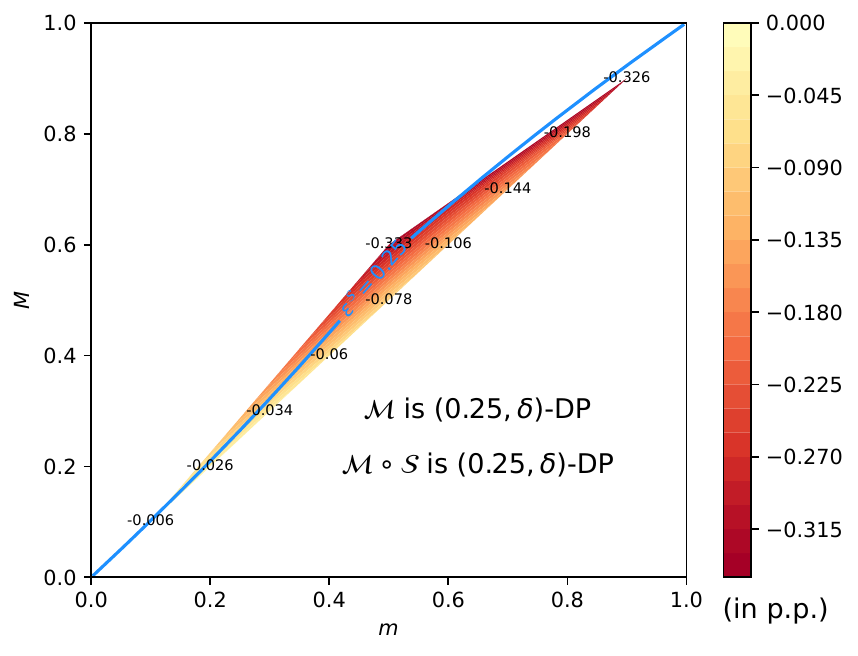}%
    \includegraphics[width=0.25\textwidth]{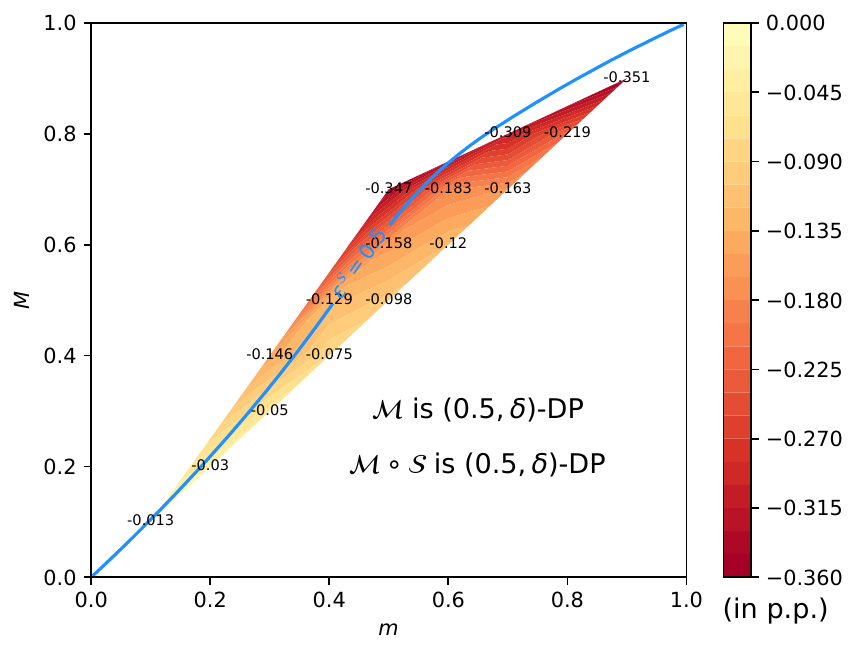}%
    \includegraphics[width=0.25\textwidth]{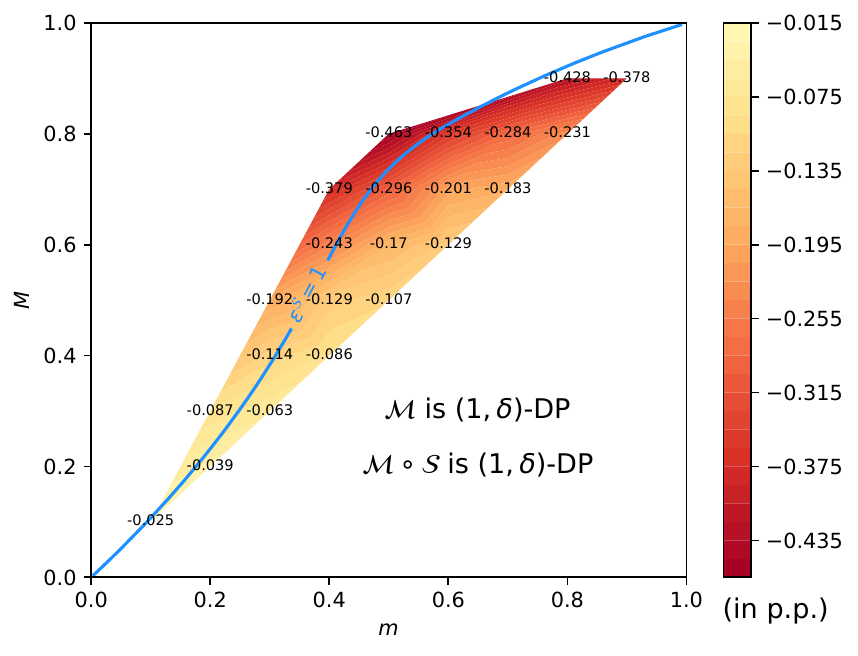}%
    \includegraphics[width=0.25\textwidth]{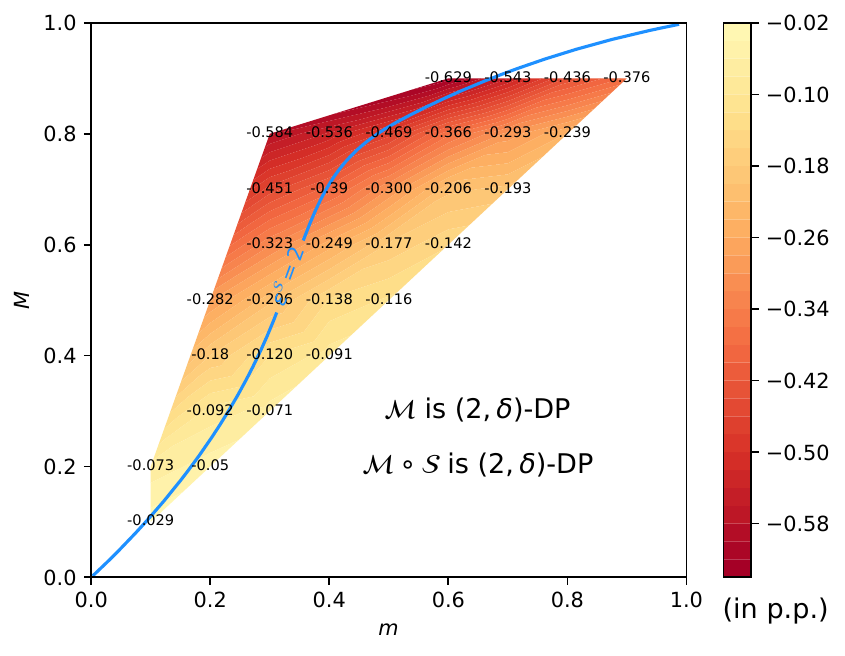}%
    \caption{The mean percent error (MPE) of $\M$ minus that of $\M\circ\S$ at the same privacy levels. Results shown for the NoisyAverage with Gaussian mechanisms over the \texttt{Age} column in the Irish database.}
    \label{fig:Experiment2-NoisyAverage-Gaussian-Irishn-Age}
\end{figure}

\begin{figure}[H]
    \includegraphics[width=0.25\textwidth]{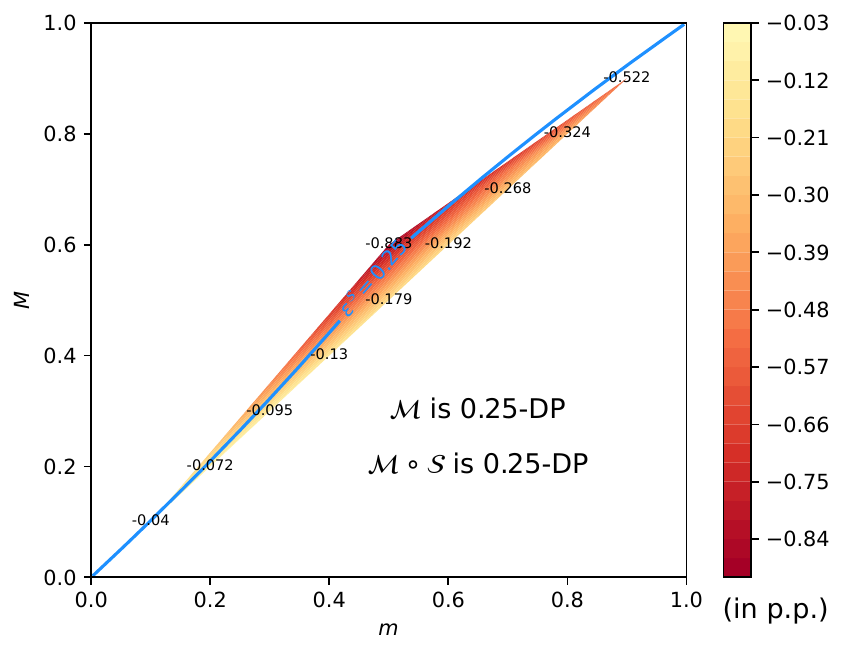}%
    \includegraphics[width=0.25\textwidth]{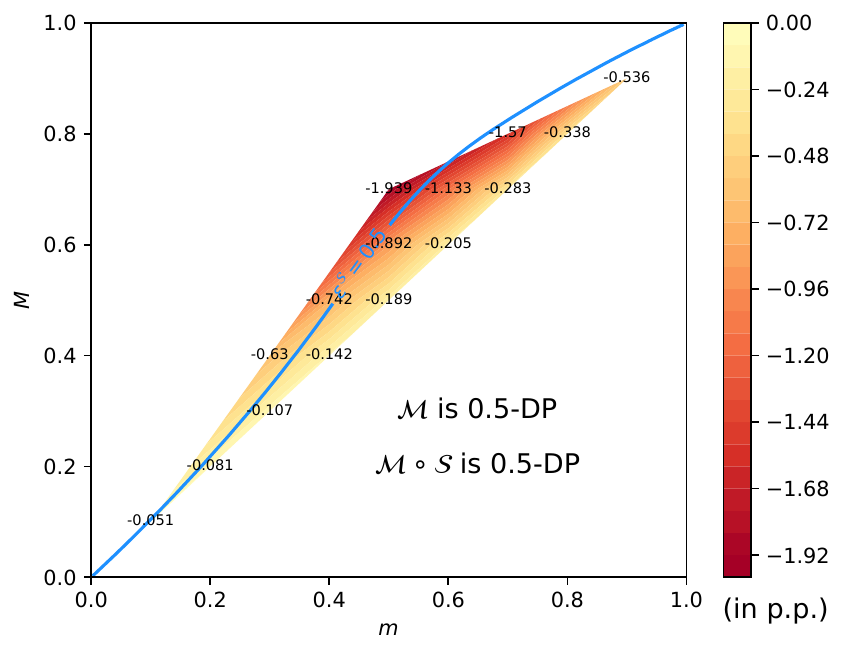}%
    \includegraphics[width=0.25\textwidth]{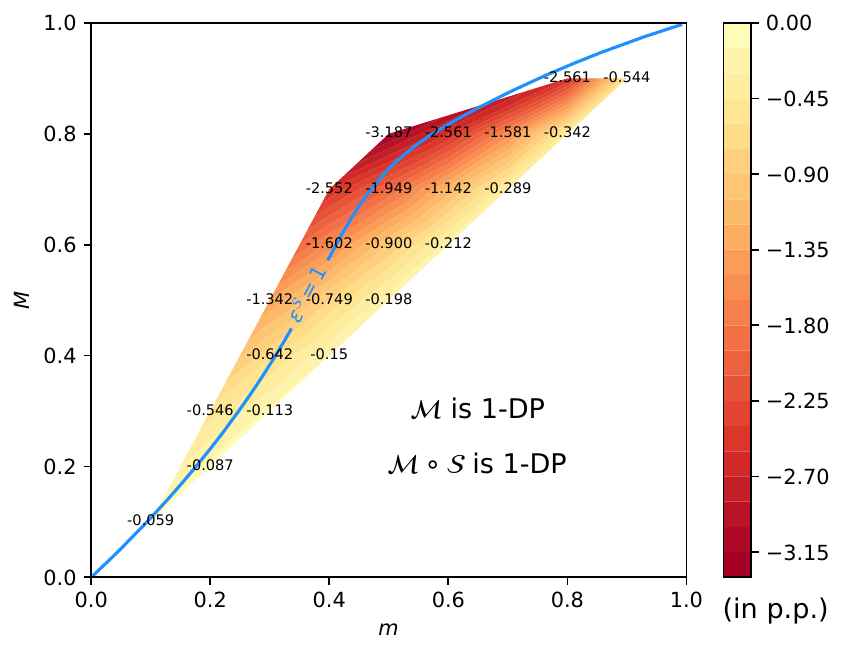}%
    \includegraphics[width=0.25\textwidth]{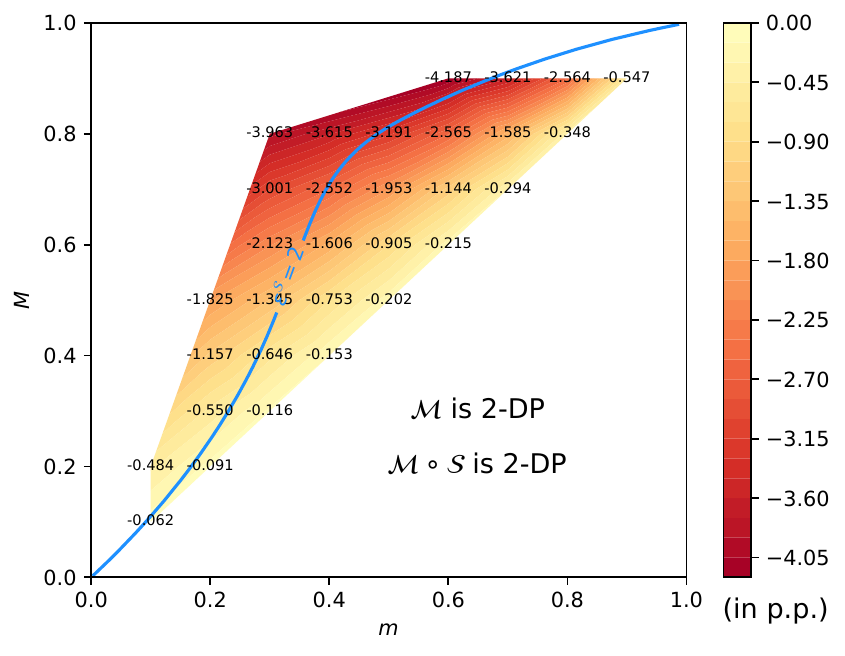}%
    \caption{The mean percent error (MPE) of $\M$ minus that of $\M\circ\S$ at the same privacy levels. Results shown for the NoisyAverage with Laplace mechanisms over the \texttt{HighestEducationCompleted} column in the Irish database.}
    \label{fig:Experiment2-NoisyAverage-Laplace-Irishn-HighestEducationCompleted}
\end{figure}

\begin{figure}[H]
    \includegraphics[width=0.25\textwidth]{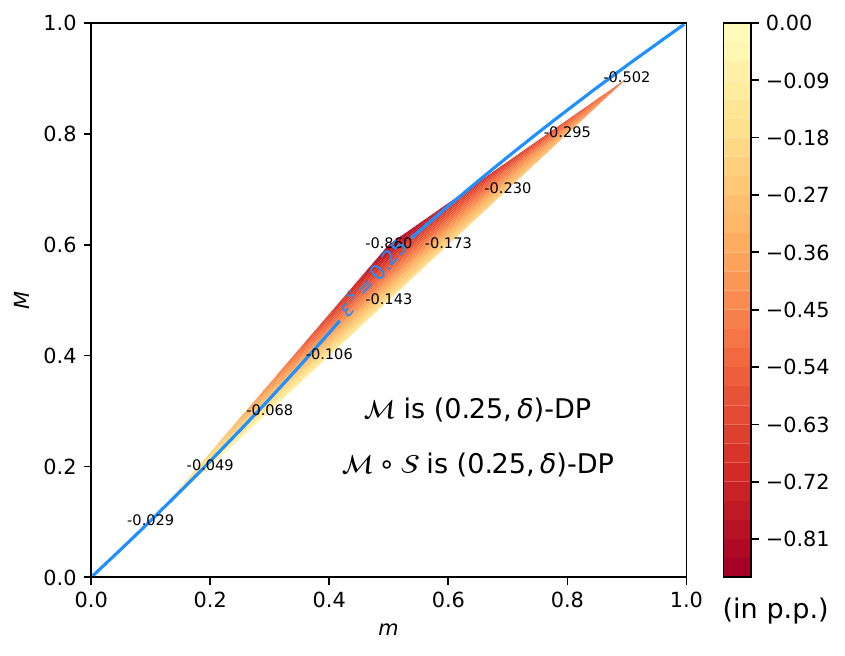}%
    \includegraphics[width=0.25\textwidth]{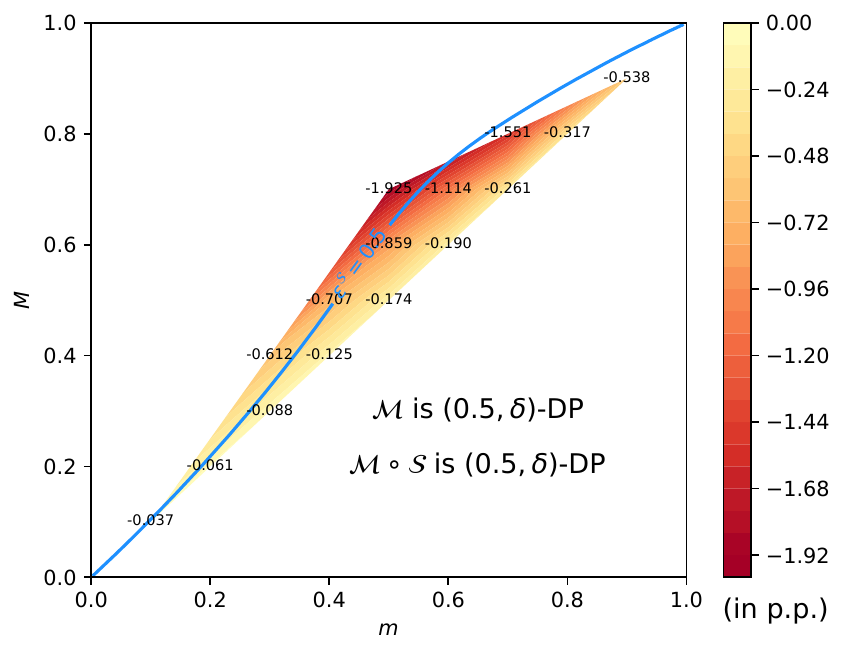}%
    \includegraphics[width=0.25\textwidth]{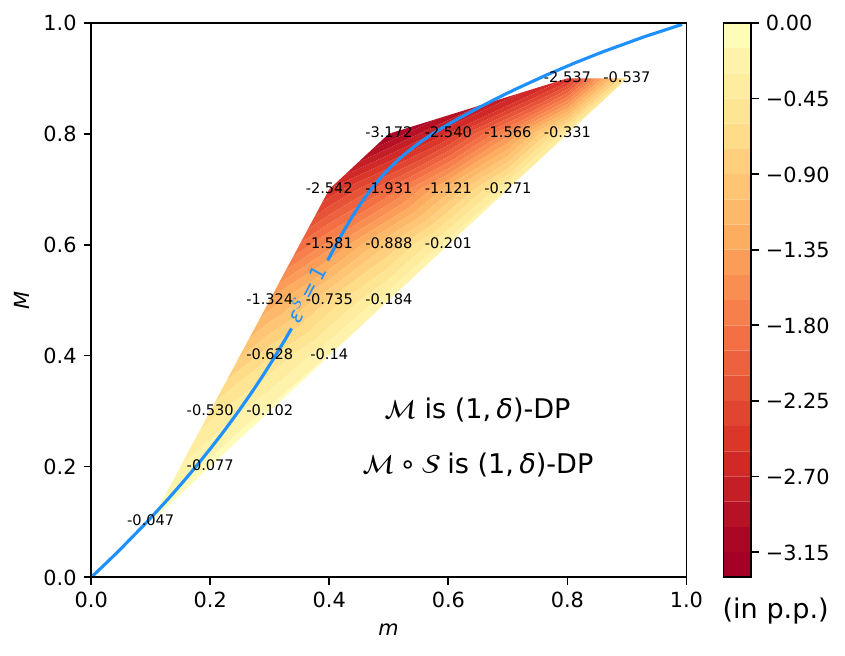}%
    \includegraphics[width=0.25\textwidth]{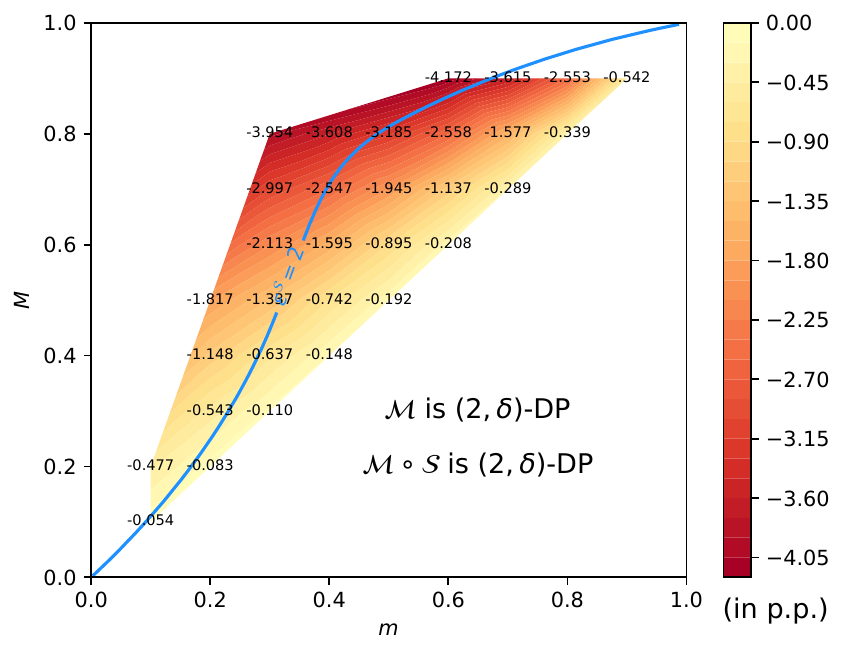}%
    \caption{The mean percent error (MPE) of $\M$ minus that of $\M\circ\S$ at the same privacy levels. Results shown for the NoisyAverage with Gaussian mechanisms over the \texttt{HighestEducationCompleted} column in the Irish database.}
    \label{fig:Experiment2-NoisyAverage-Gaussian-Irishn-HighestEducationCompleted}
\end{figure}

\subsection{Plots for the Mode Computation}\label{sec:plots:SuppressionwithEpsDeltaChange2}

\begin{figure}[H]
    \includegraphics[width=0.25\textwidth]{PaperPlots/RNM-Adult/age/age_eps=0.25_difference_laplace_error_M_minus_MoSChangeEpsDelta_10--90.pdf}%
    \includegraphics[width=0.25\textwidth]{PaperPlots/RNM-Adult/age/age_eps=0.5_difference_laplace_error_M_minus_MoSChangeEpsDelta_10--90.pdf}%
    \includegraphics[width=0.25\textwidth]{PaperPlots/RNM-Adult/age/age_eps=1_difference_laplace_error_M_minus_MoSChangeEpsDelta_10--90.pdf}%
    \includegraphics[width=0.25\textwidth]{PaperPlots/RNM-Adult/age/age_eps=2_difference_laplace_error_M_minus_MoSChangeEpsDelta_10--90.pdf}%
    \caption{The probability of outputting an incorrect mode of $\M$ minus that of $\M\circ\S$ at the same privacy levels. Results shown for the RNM with Laplace noise over the \texttt{age} column in the Adult database.}
    \label{fig:Experiment2-RNM-Laplace-Adult-age}
\end{figure}

\begin{figure}[H]
    \includegraphics[width=0.25\textwidth]{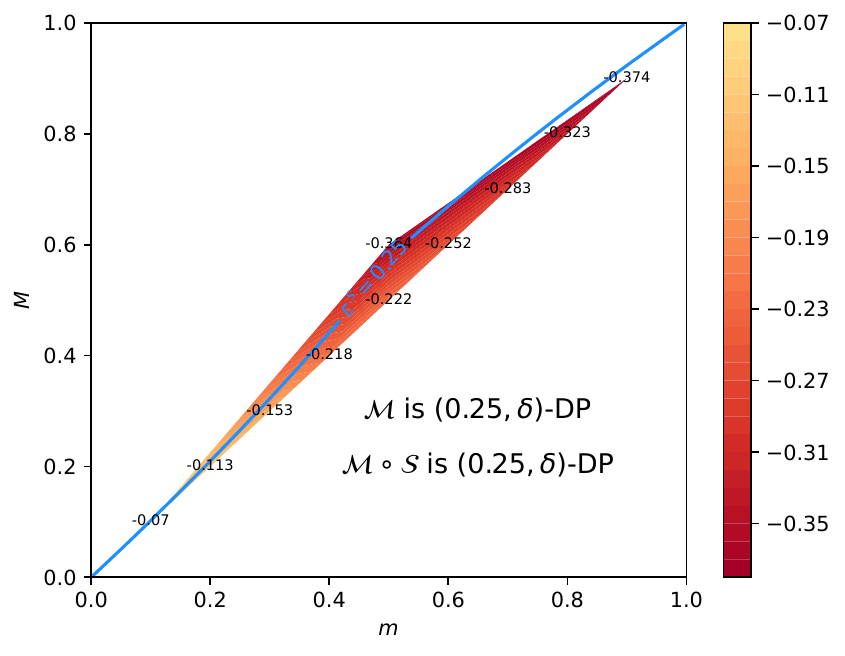}%
    \includegraphics[width=0.25\textwidth]{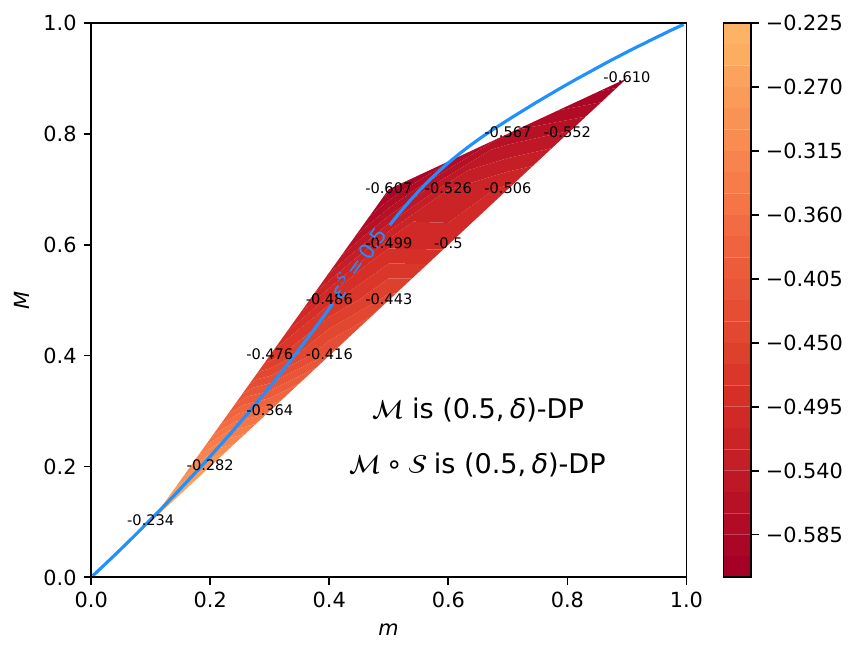}%
    \includegraphics[width=0.25\textwidth]{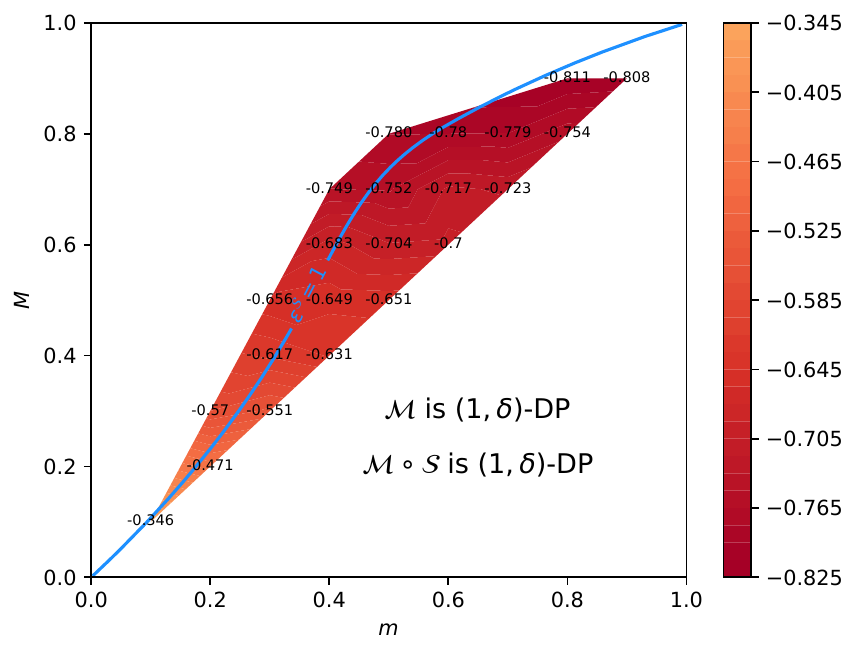}%
    \includegraphics[width=0.25\textwidth]{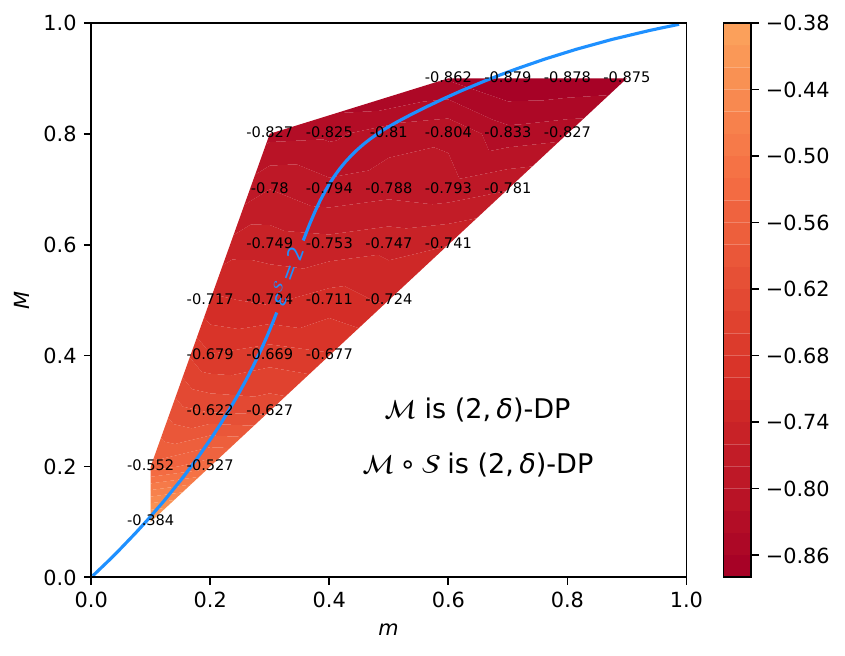}%
    \caption{The probability of outputting an incorrect mode of $\M$ minus that of $\M\circ\S$ at the same privacy levels. Results shown for the RNM-like variant with Gaussian noise over the \texttt{age} column in the Adult database.}
    \label{fig:Experiment2-RNM-Gaussian-Adult-age}
\end{figure}

\begin{figure}[H]
    \includegraphics[width=0.25\textwidth]{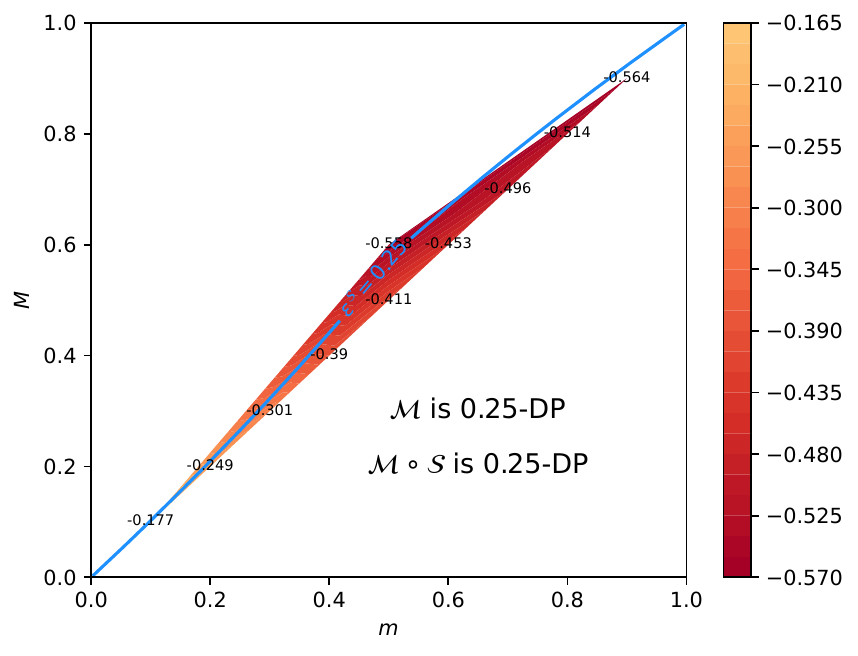}%
    \includegraphics[width=0.25\textwidth]{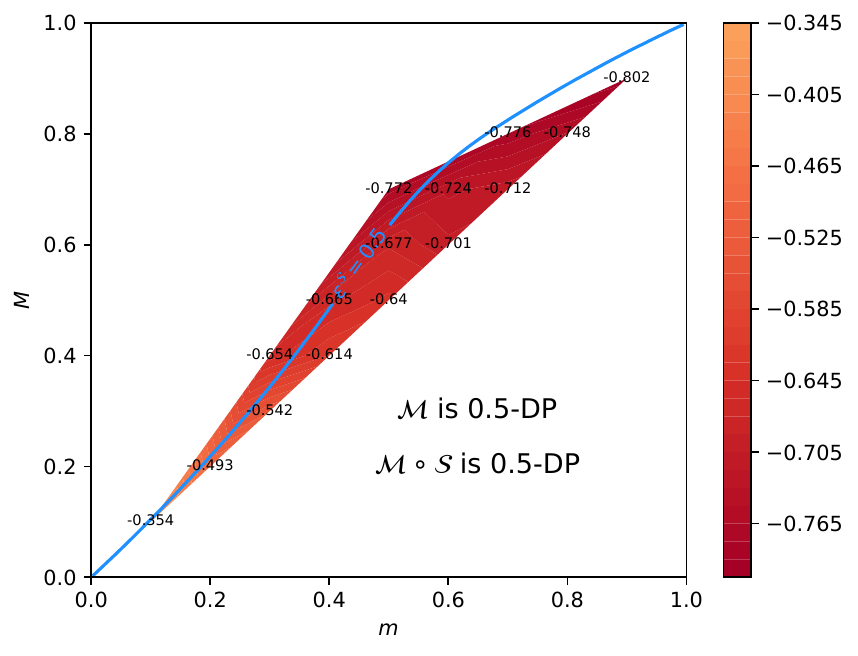}%
    \includegraphics[width=0.25\textwidth]{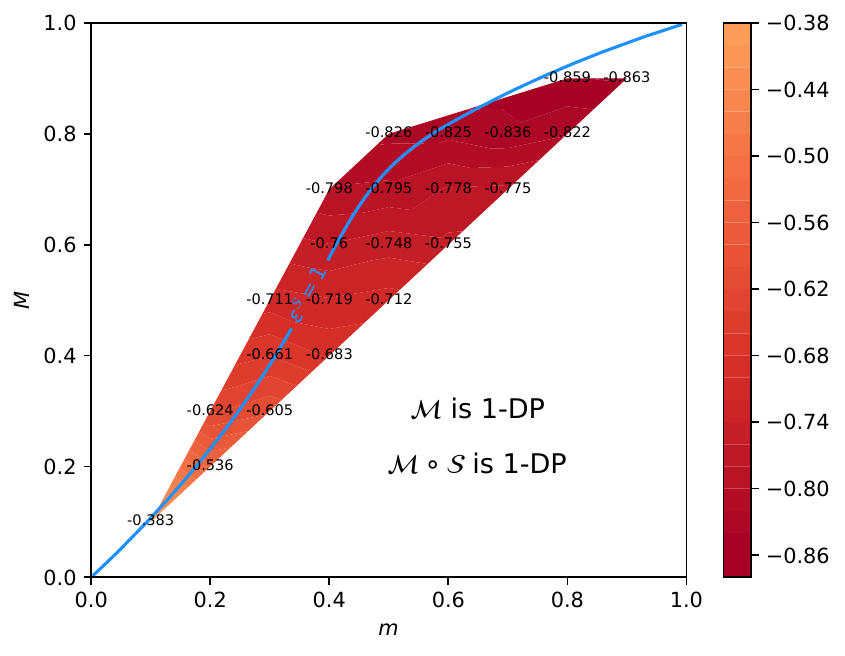}%
    \includegraphics[width=0.25\textwidth]{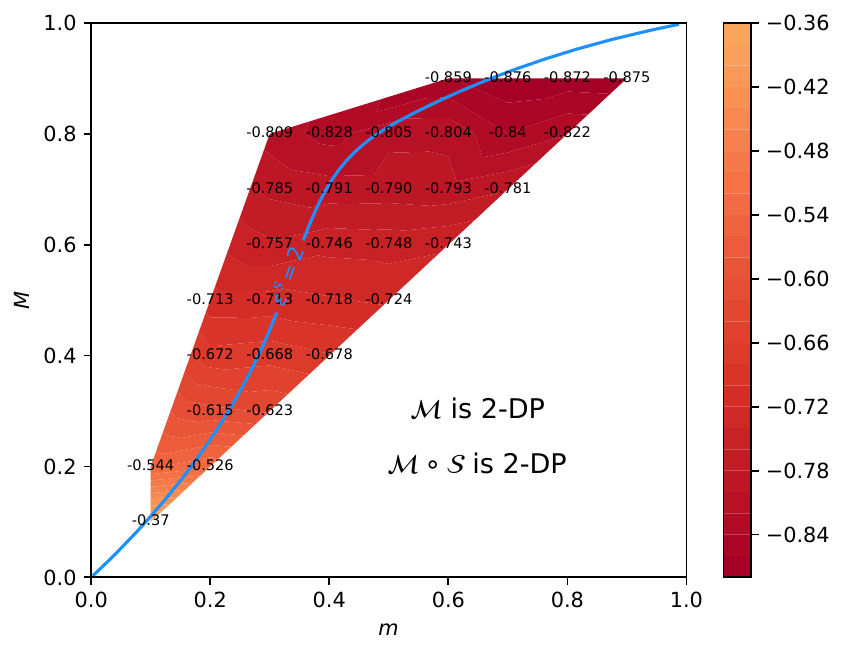}%
    \caption{The probability of outputting an incorrect mode of $\M$ minus that of $\M\circ\S$ at the same privacy levels. Results shown for the RNM with exponential noise over the \texttt{age} column in the Adult database.}
    \label{fig:Experiment2-RNM-Exponential-Adult-age}
\end{figure}

\begin{figure}[H]
    \includegraphics[width=0.25\textwidth]{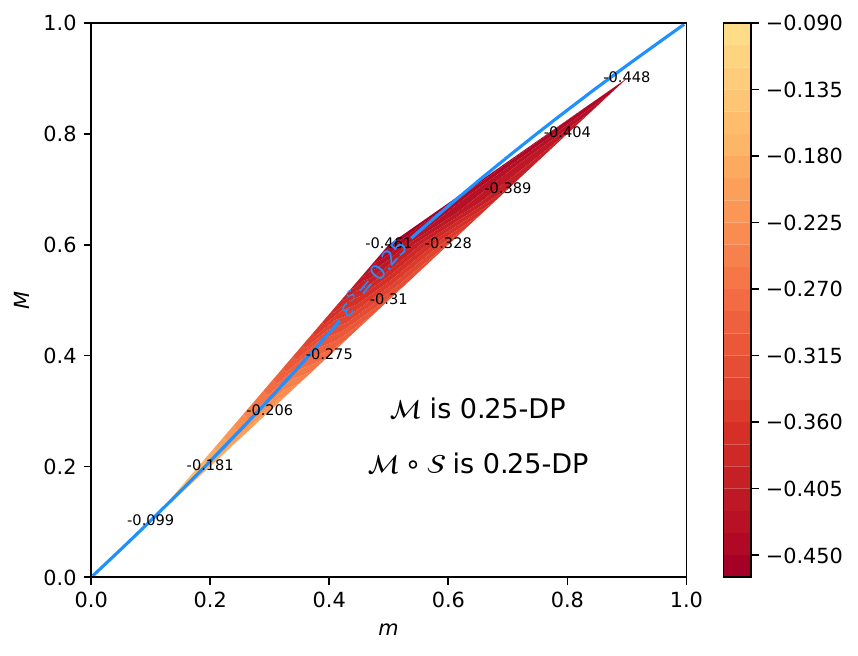}%
    \includegraphics[width=0.25\textwidth]{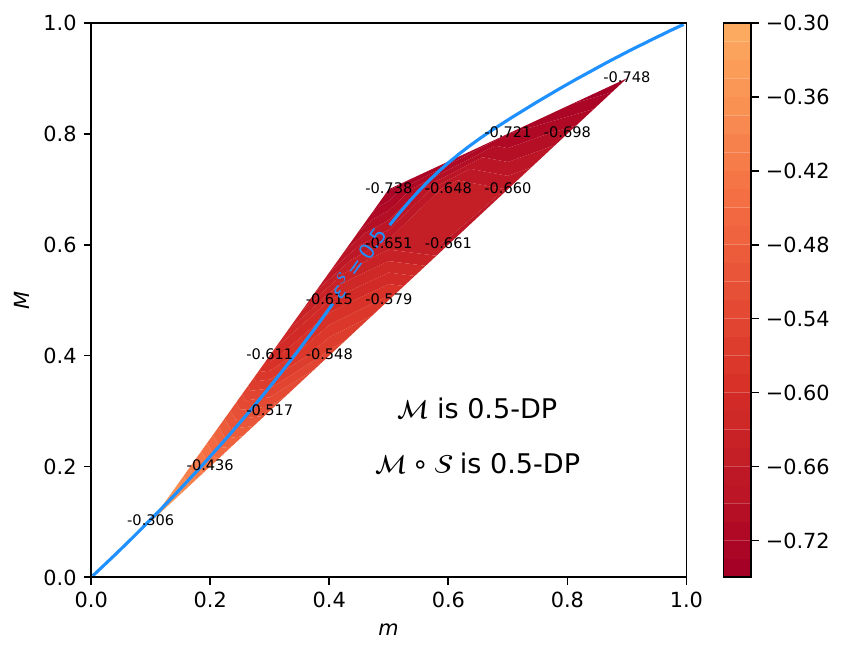}%
    \includegraphics[width=0.25\textwidth]{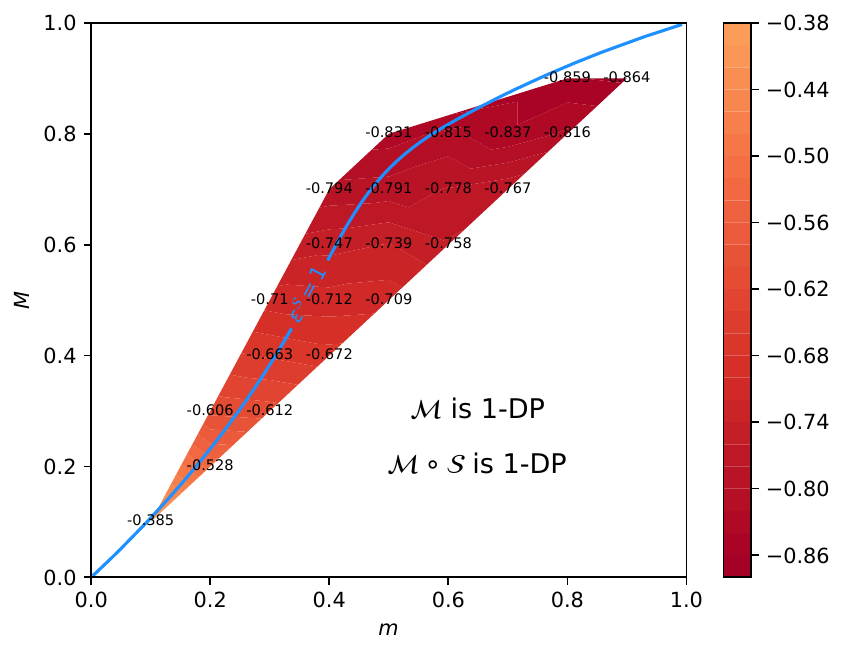}%
    \includegraphics[width=0.25\textwidth]{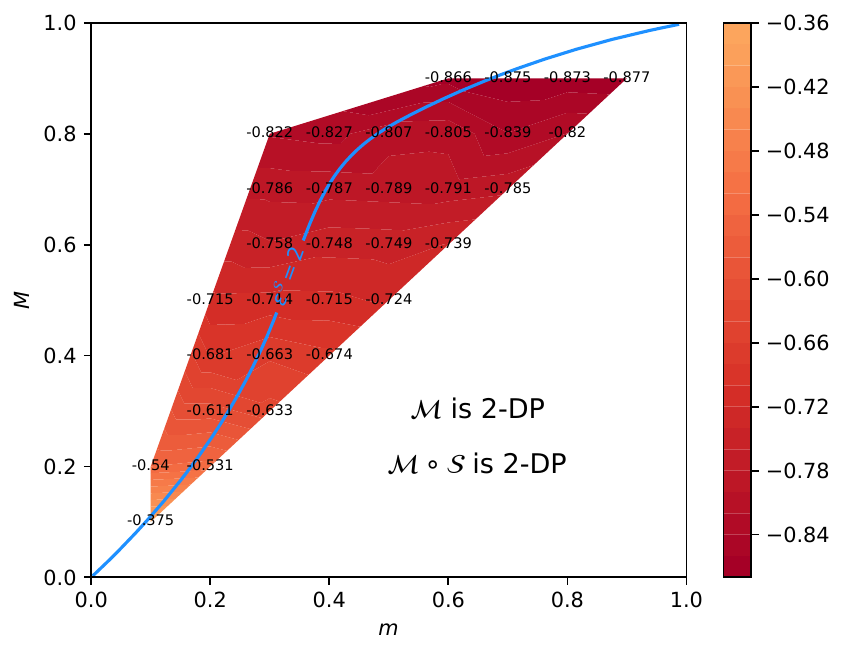}%
    \caption{The probability of outputting an incorrect mode of $\M$ minus that of $\M\circ\S$ at the same privacy levels. Results shown for the exponential mechanism over the \texttt{age} column in the Adult database.}
    \label{fig:Experiment2-RNM-ExponentialMechanism-Adult-age}
\end{figure}

\begin{figure}[H]
    \includegraphics[width=0.25\textwidth]{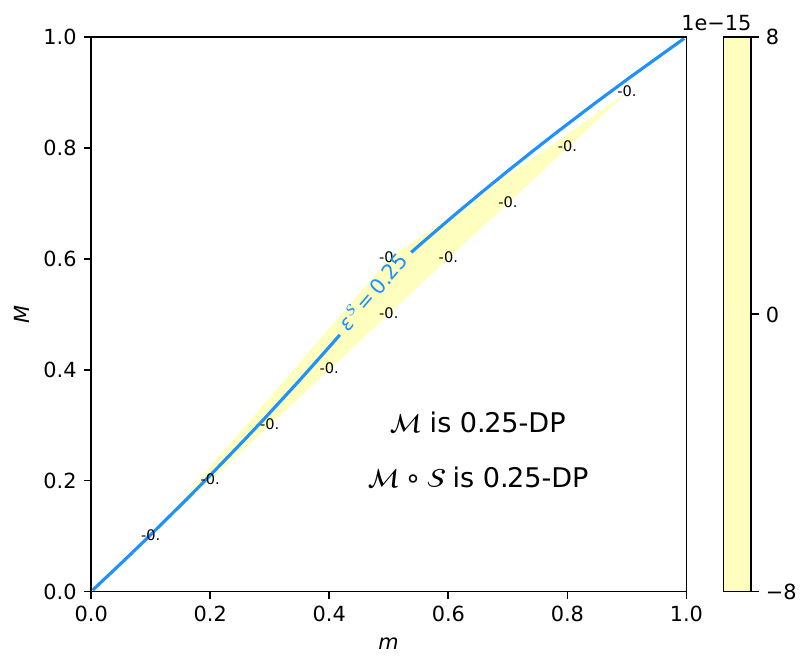}%
    \includegraphics[width=0.25\textwidth]{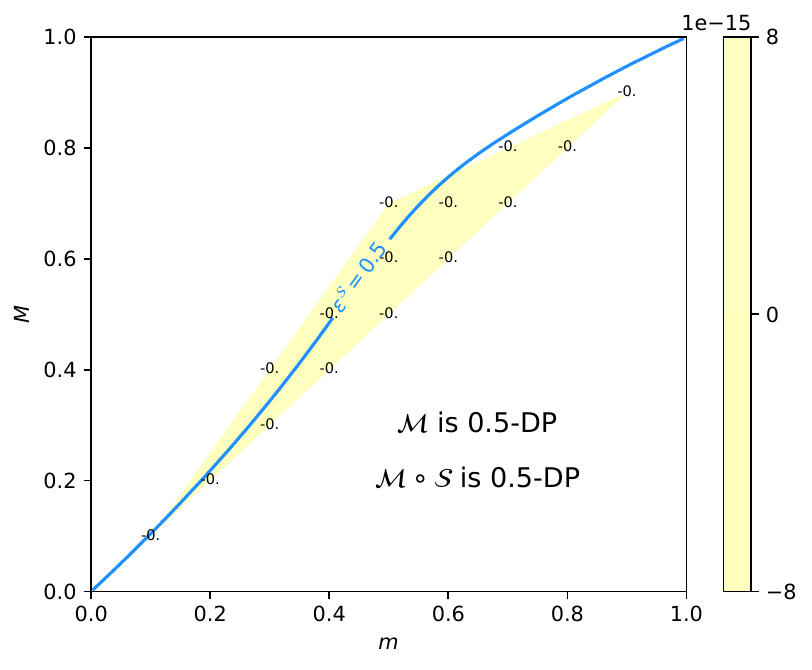}%
    \includegraphics[width=0.25\textwidth]{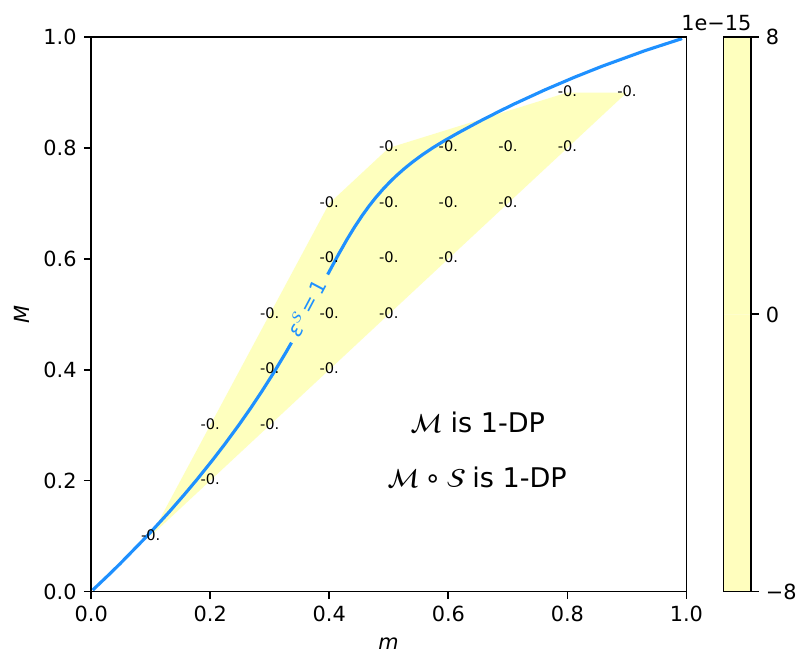}%
    \includegraphics[width=0.25\textwidth]{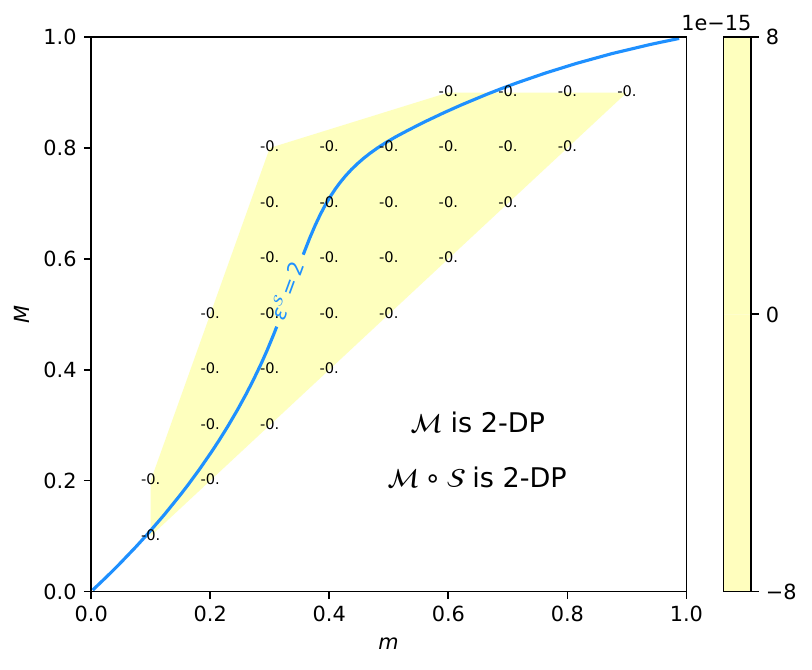}%
    \caption{The probability of outputting an incorrect mode of $\M$ minus that of $\M\circ\S$ at the same privacy levels. Results shown for the RNM with Laplace noise over the \texttt{hours-per-week} column in the Adult database.}
    \label{fig:Experiment2-RNM-Laplace-Adult-hours-per-week}
\end{figure}

\begin{figure}[H]
    \includegraphics[width=0.25\textwidth]{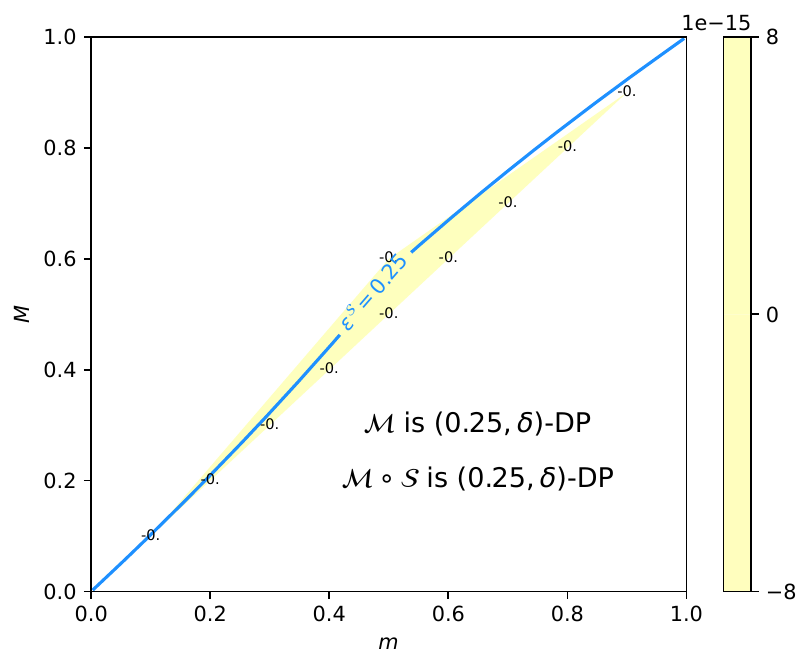}%
    \includegraphics[width=0.25\textwidth]{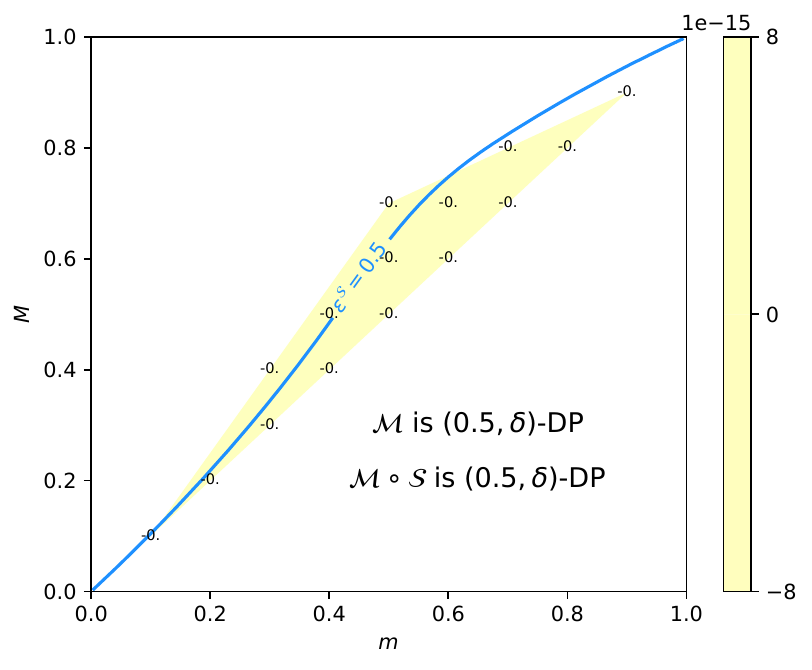}%
    \includegraphics[width=0.25\textwidth]{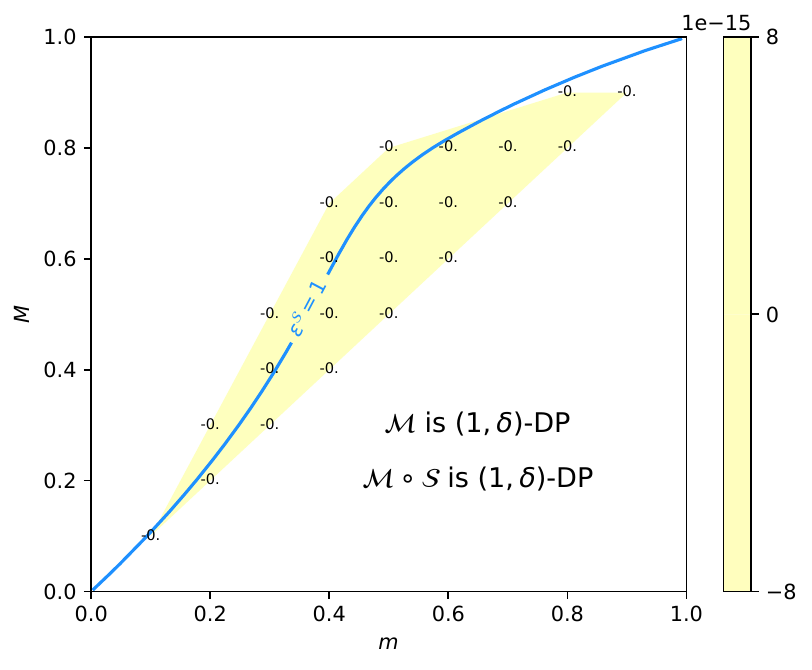}%
    \includegraphics[width=0.25\textwidth]{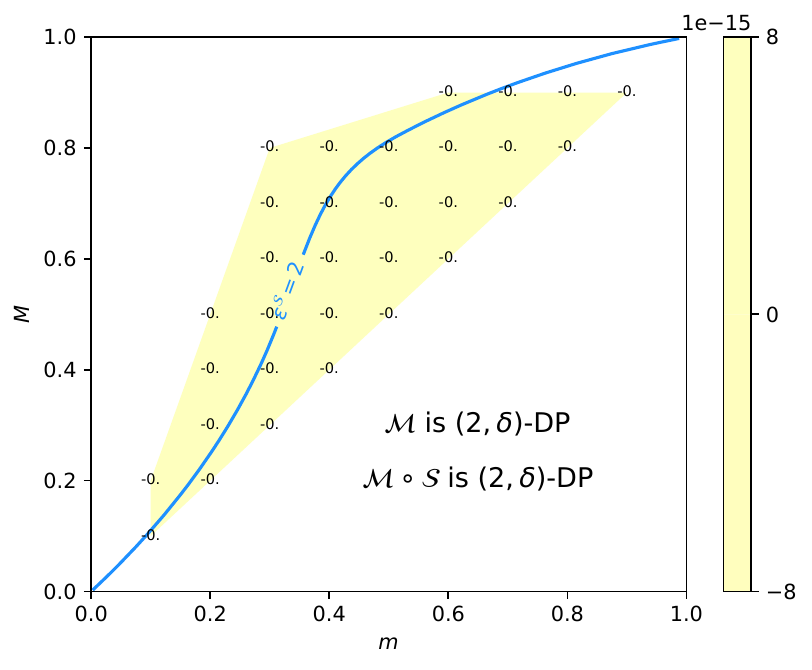}%
    \caption{The probability of outputting an incorrect mode of $\M$ minus that of $\M\circ\S$ at the same privacy levels. Results shown for the RNM-like variant with Gaussian noise over the \texttt{hours-per-week} column in the Adult database.}
    \label{fig:Experiment2-RNM-Gaussian-Adult-hours-per-week}
\end{figure}

\begin{figure}[H]
    \includegraphics[width=0.25\textwidth]{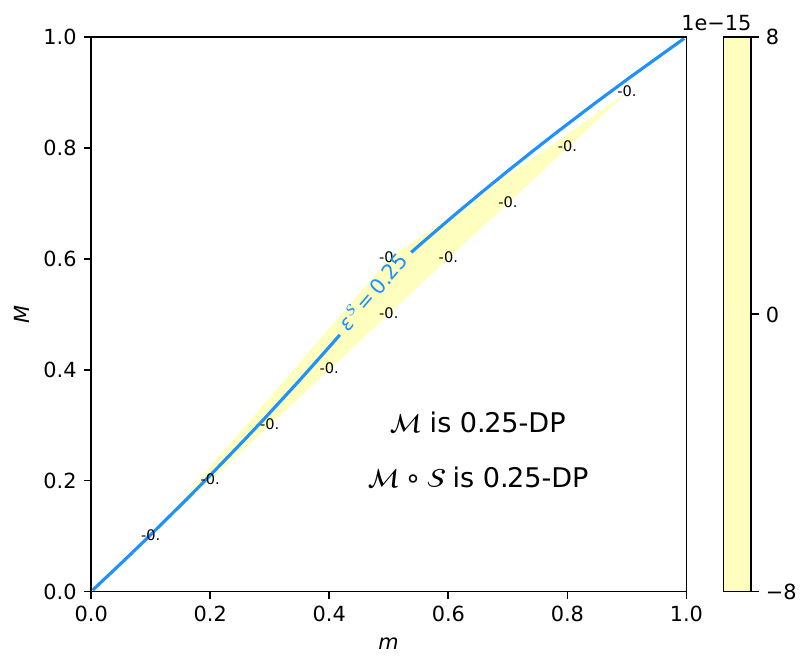}%
    \includegraphics[width=0.25\textwidth]{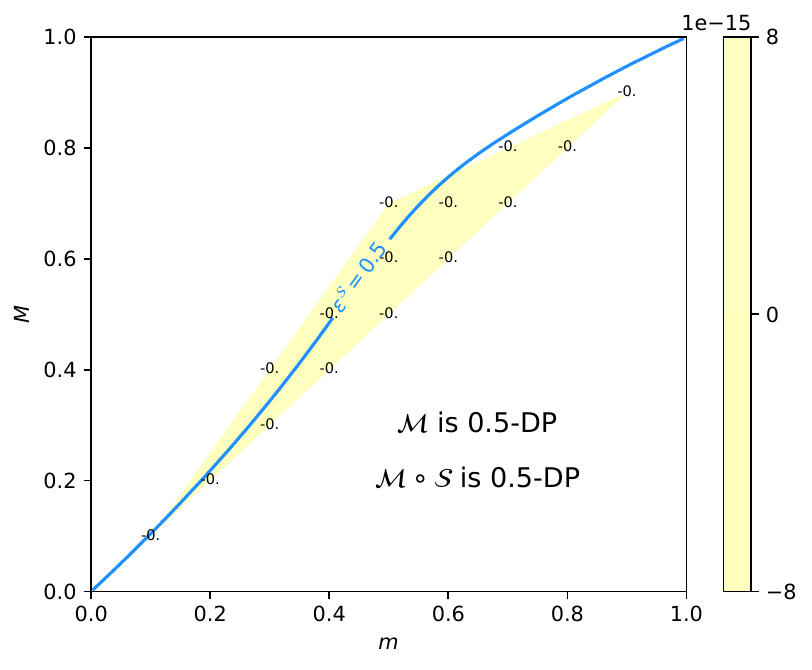}%
    \includegraphics[width=0.25\textwidth]{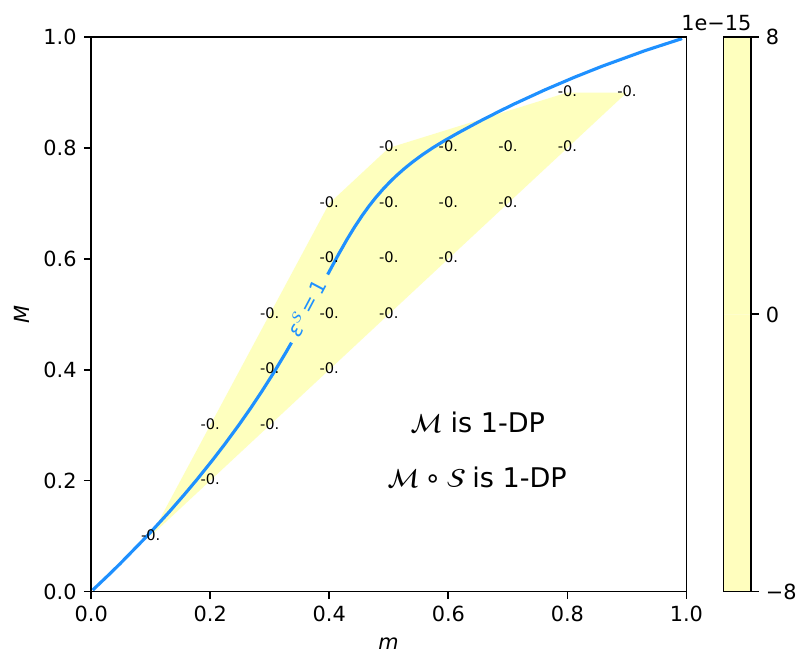}%
    \includegraphics[width=0.25\textwidth]{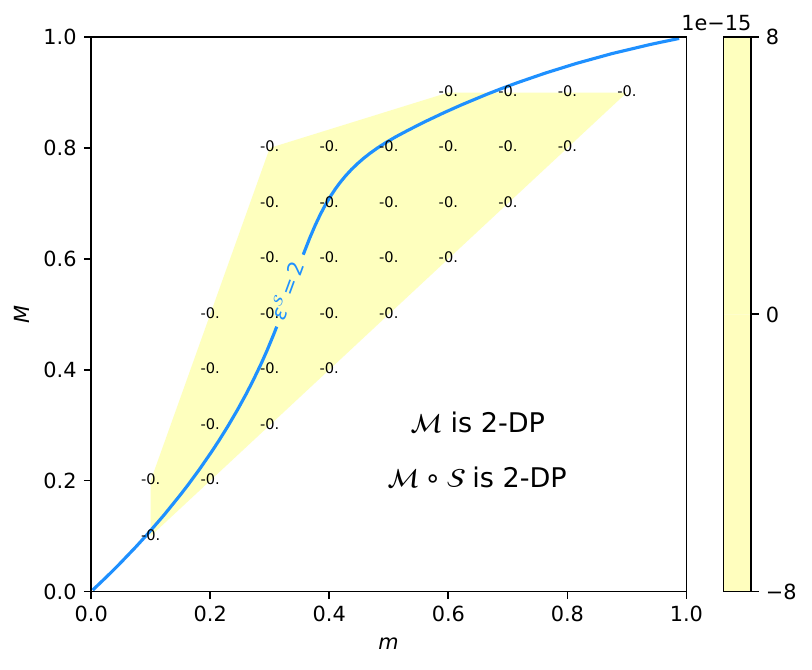}%
    \caption{The probability of outputting an incorrect mode of $\M$ minus that of $\M\circ\S$ at the same privacy levels. Results shown for the RNM with exponential noise over the \texttt{hours-per-week} column in the Adult database.}
    \label{fig:Experiment2-RNM-Exponential-Adult-hours-per-week}
\end{figure}

\begin{figure}[H]
    \includegraphics[width=0.25\textwidth]{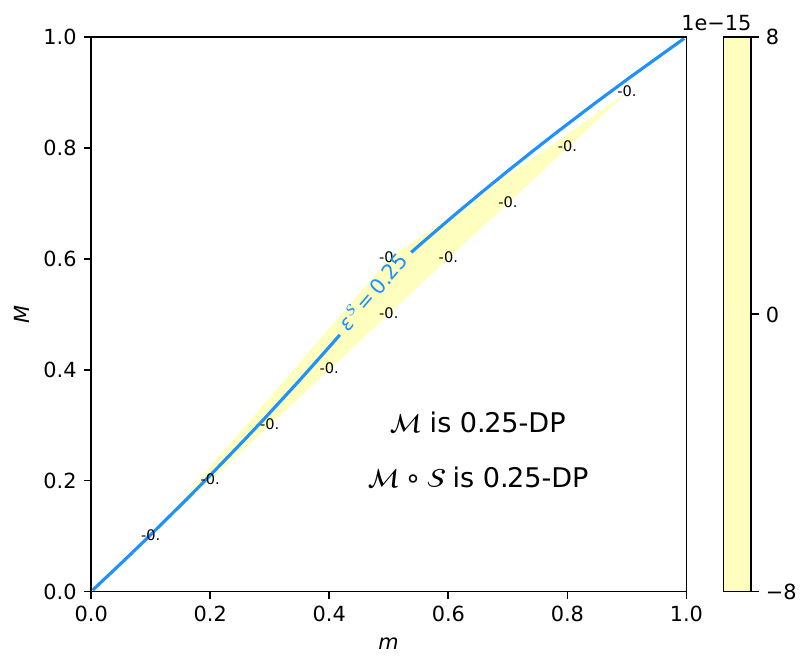}%
    \includegraphics[width=0.25\textwidth]{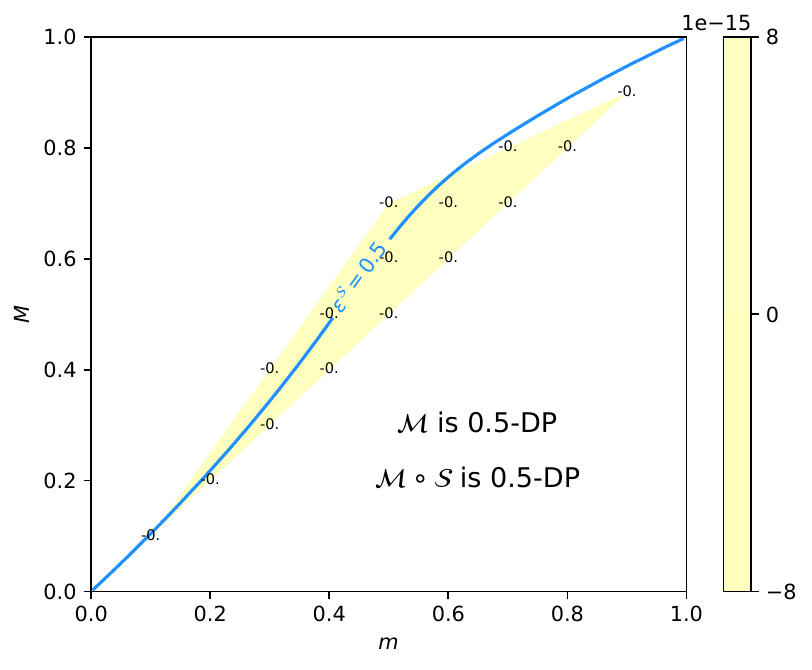}%
    \includegraphics[width=0.25\textwidth]{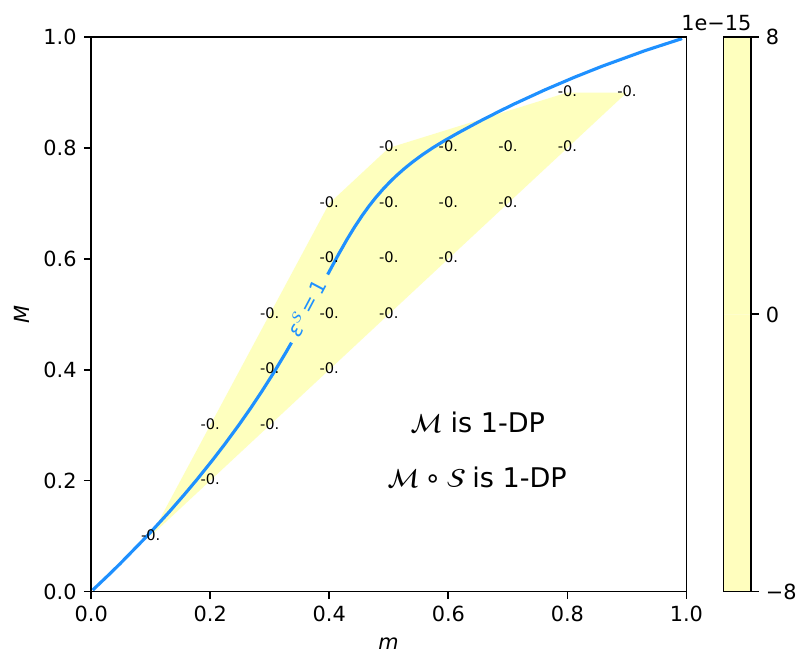}%
    \includegraphics[width=0.25\textwidth]{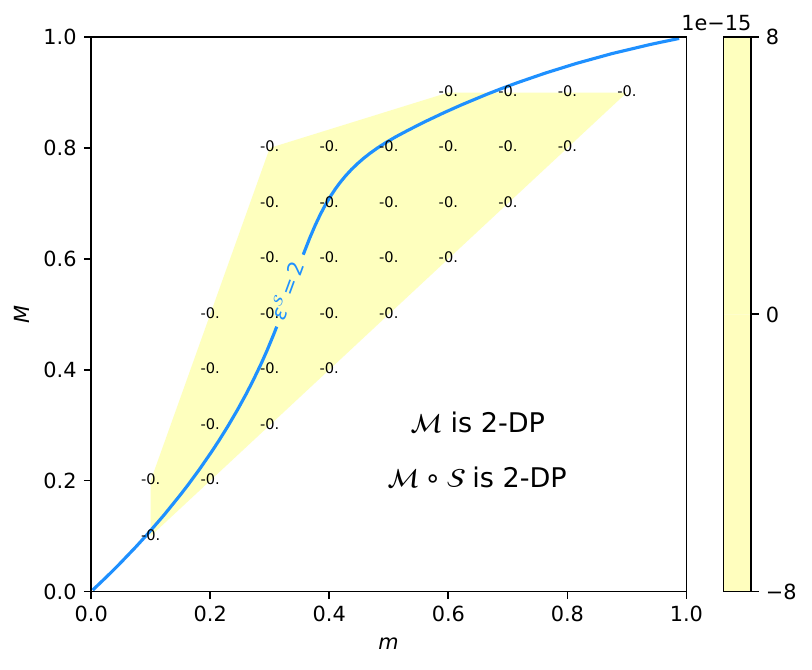}%
    \caption{The probability of outputting an incorrect mode of $\M$ minus that of $\M\circ\S$ at the same privacy levels. Results shown for the exponential mechanism over the \texttt{hours-per-week} column in the Adult database.}
    \label{fig:Experiment2-RNM-ExponentialMechanism-Adult-hours-per-week}
\end{figure}

\begin{figure}[H]
    \includegraphics[width=0.25\textwidth]{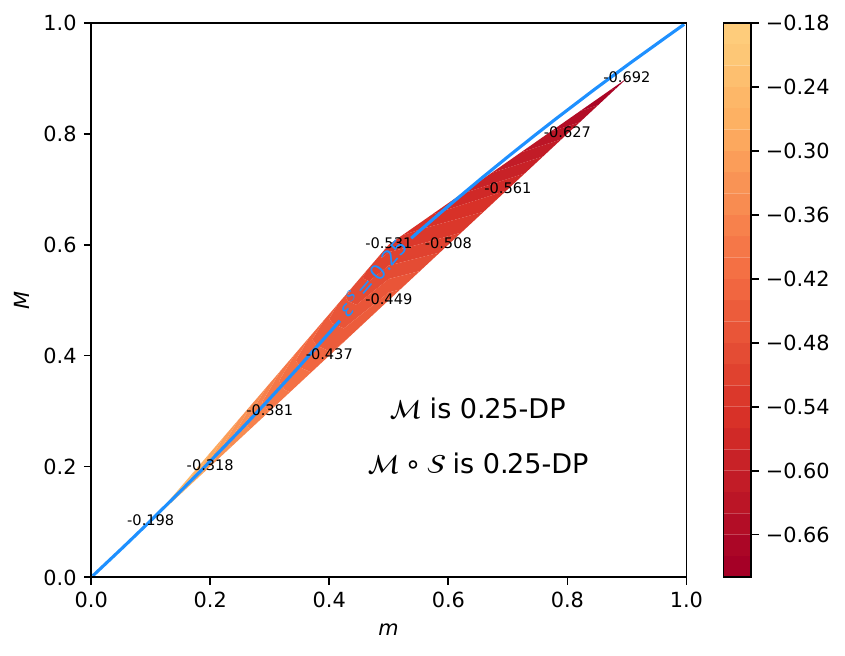}%
    \includegraphics[width=0.25\textwidth]{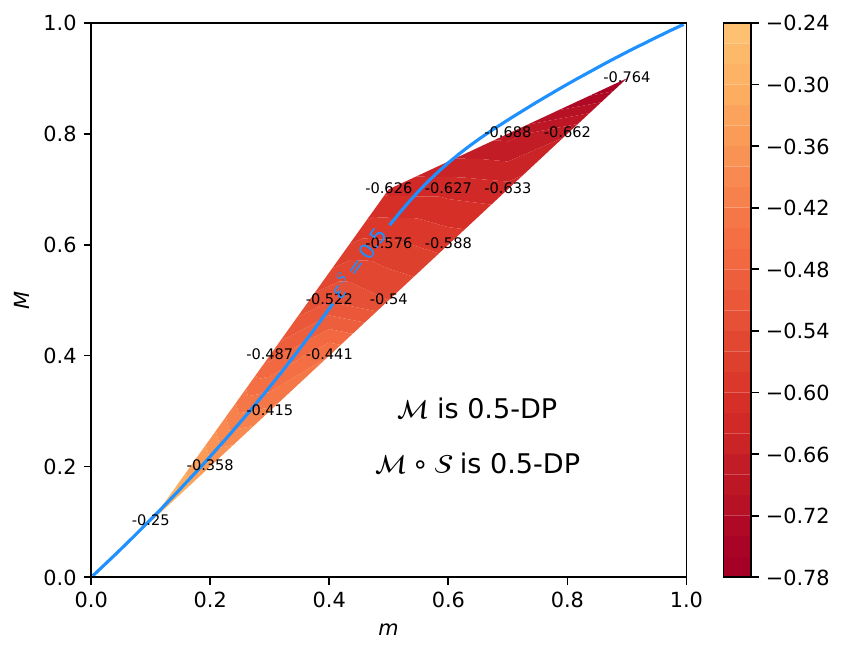}%
    \includegraphics[width=0.25\textwidth]{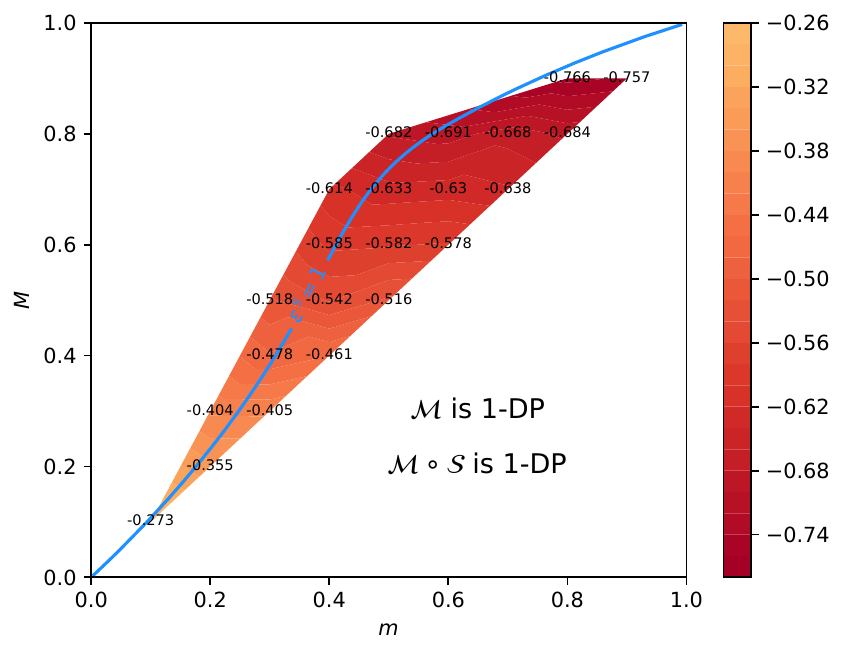}%
    \includegraphics[width=0.25\textwidth]{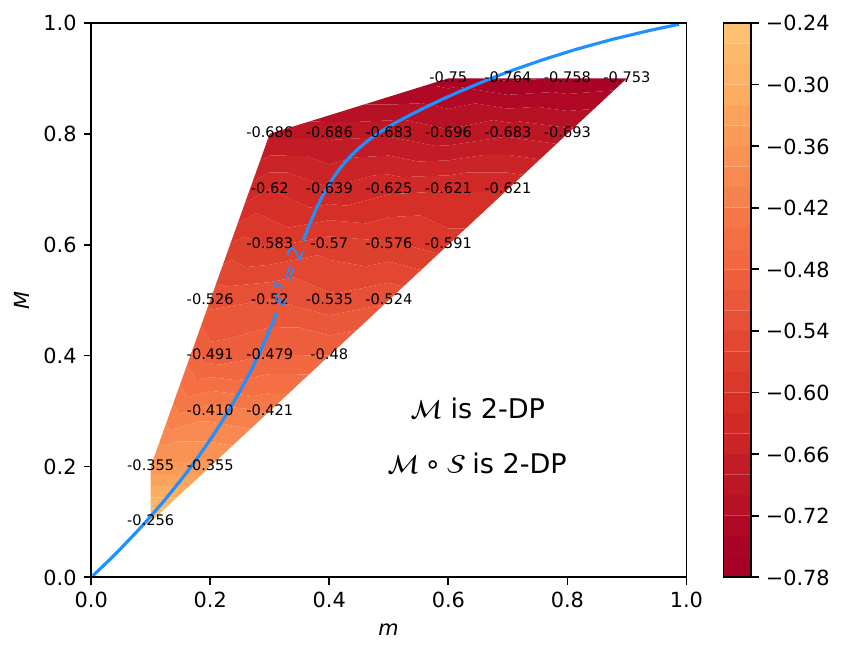}%
    \caption{The probability of outputting an incorrect mode of $\M$ minus that of $\M\circ\S$ at the same privacy levels. Results shown for the RNM with Laplace noise over the \texttt{Age} column in the Irish database.}
    \label{fig:Experiment2-RNM-Laplace-Irishn-Age}
\end{figure}

\begin{figure}[H]
    \includegraphics[width=0.25\textwidth]{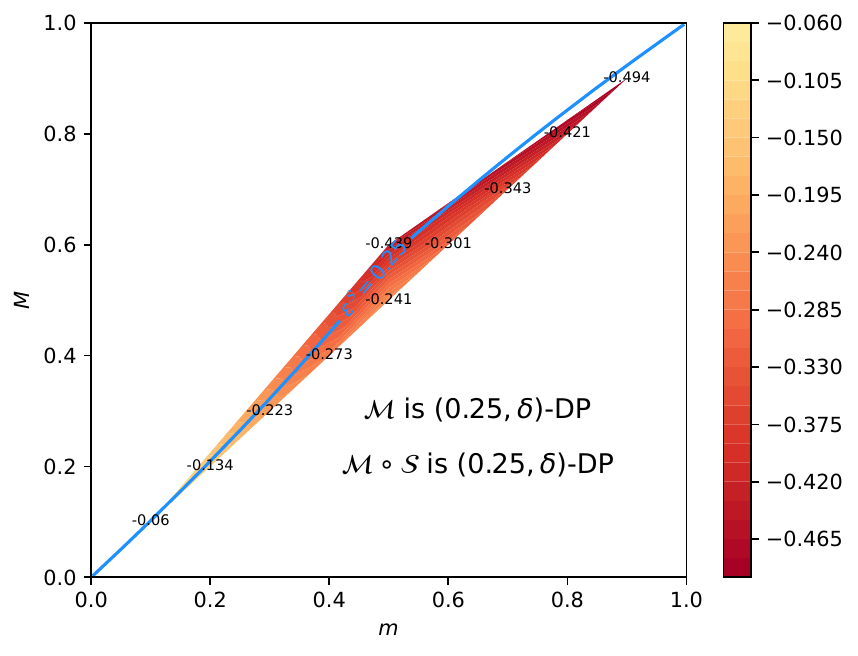}%
    \includegraphics[width=0.25\textwidth]{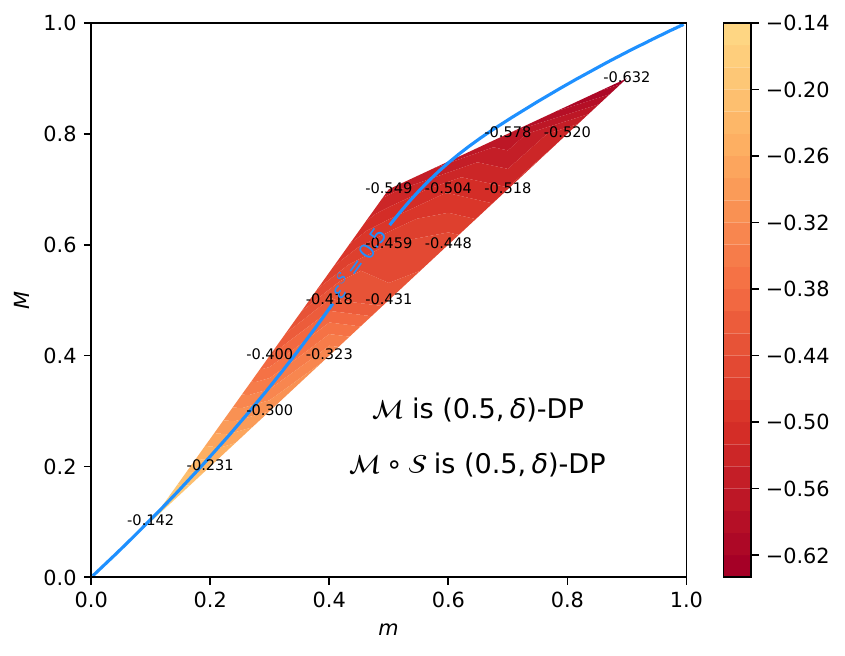}%
    \includegraphics[width=0.25\textwidth]{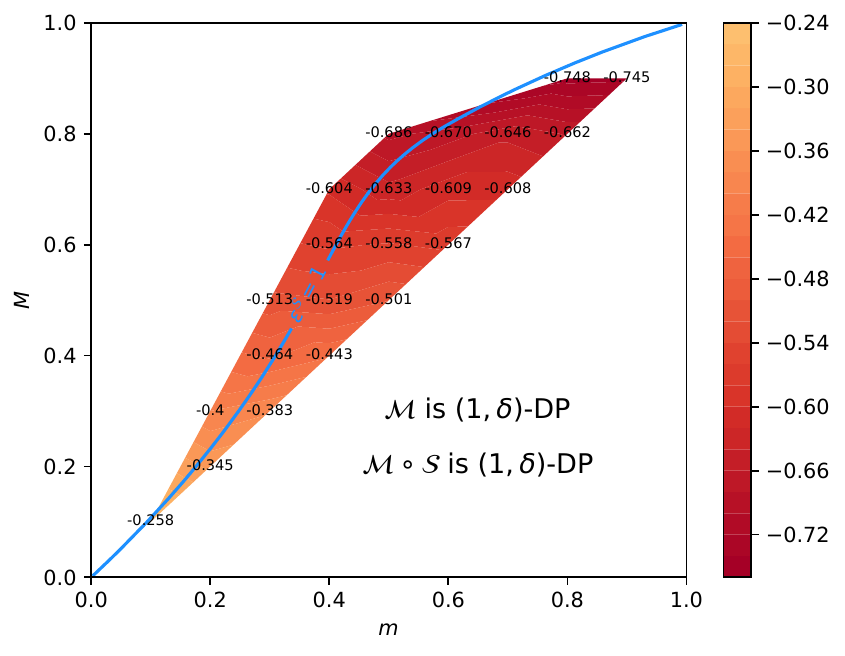}%
    \includegraphics[width=0.25\textwidth]{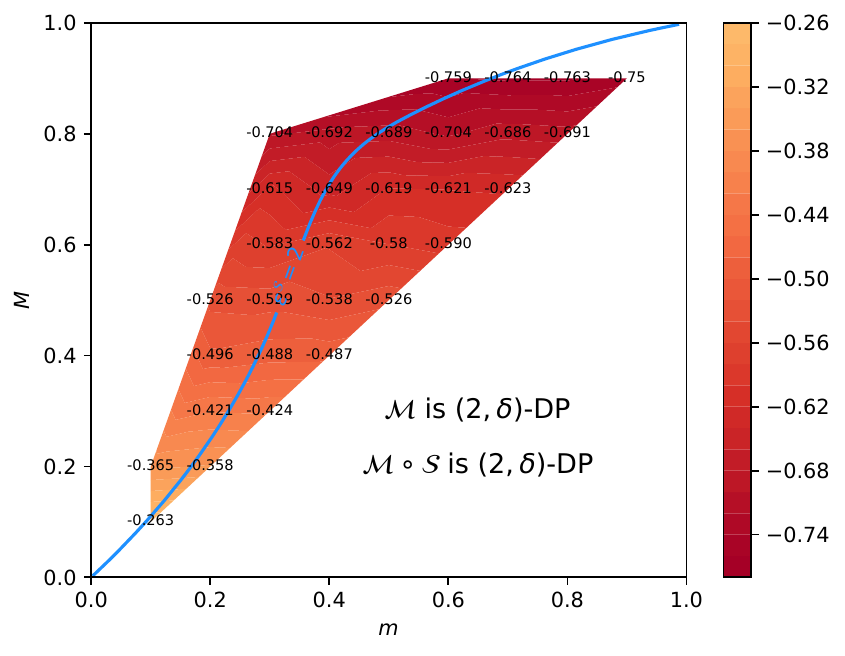}%
    \caption{The probability of outputting an incorrect mode of $\M$ minus that of $\M\circ\S$ at the same privacy levels. Results shown for the RNM-like variant with Gaussian noise over the \texttt{Age} column in the Irish database.}
    \label{fig:Experiment2-RNM-Gaussian-Irishn-Age}
\end{figure}

\begin{figure}[H]
    \includegraphics[width=0.25\textwidth]{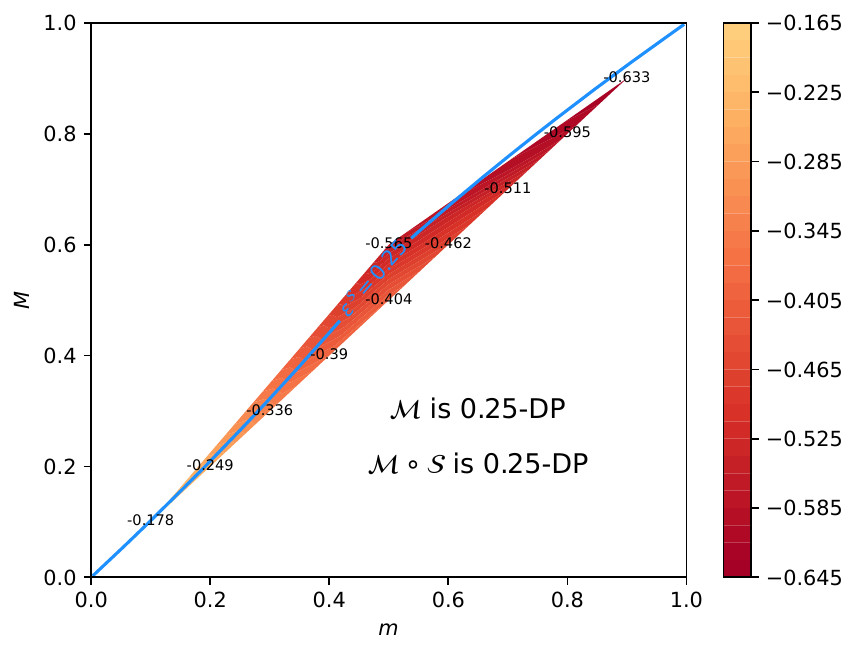}%
    \includegraphics[width=0.25\textwidth]{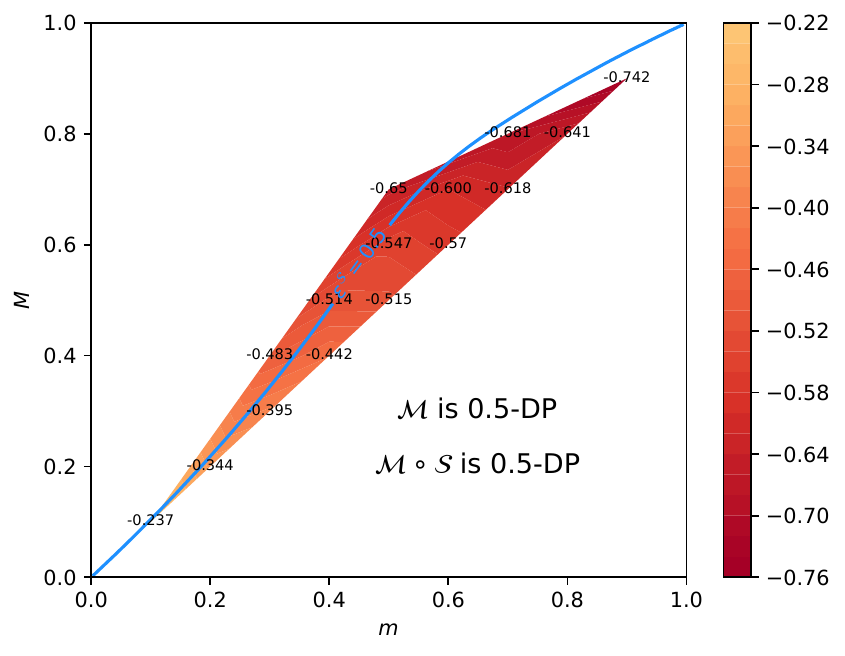}%
    \includegraphics[width=0.25\textwidth]{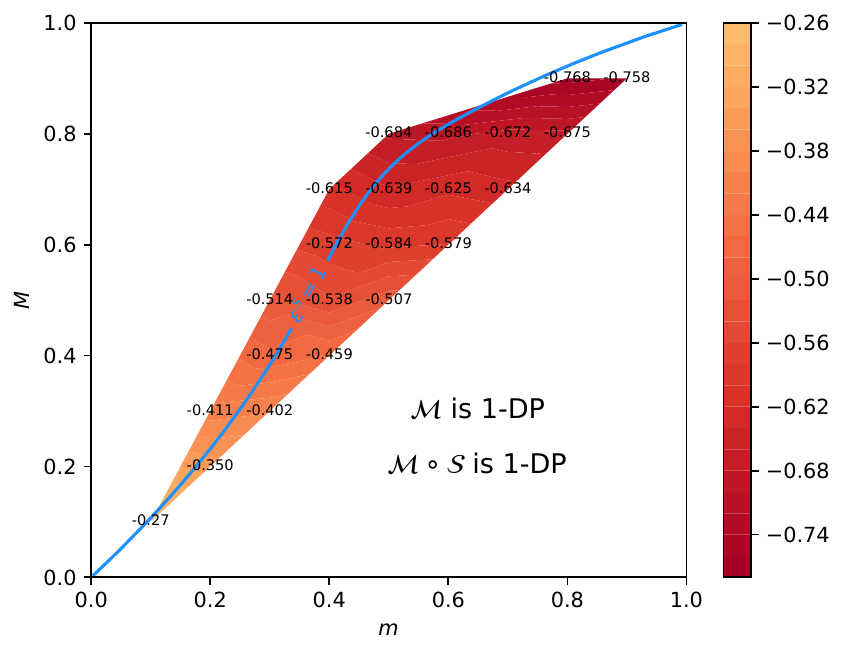}%
    \includegraphics[width=0.25\textwidth]{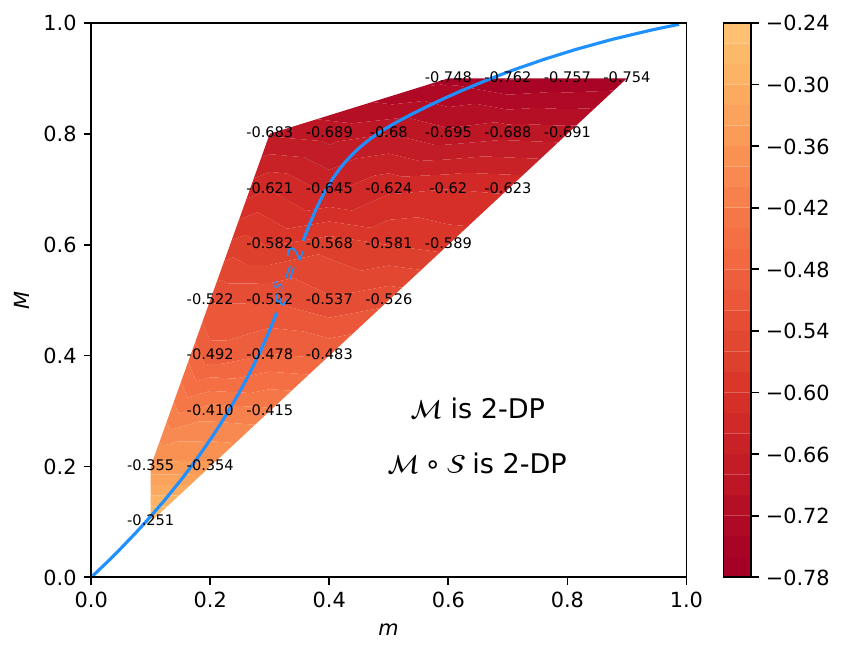}%
    \caption{The probability of outputting an incorrect mode of $\M$ minus that of $\M\circ\S$ at the same privacy levels. Results shown for the RNM with exponential noise over the \texttt{Age} column in the Irish database.}
    \label{fig:Experiment2-RNM-Exponential-Irishn-Age}
\end{figure}

\begin{figure}[H]
    \includegraphics[width=0.25\textwidth]{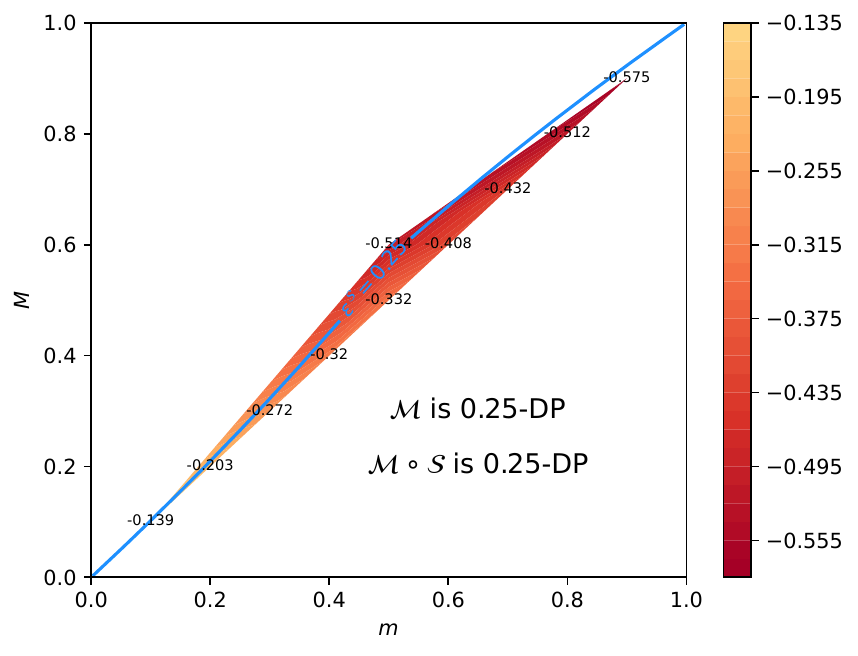}%
    \includegraphics[width=0.25\textwidth]{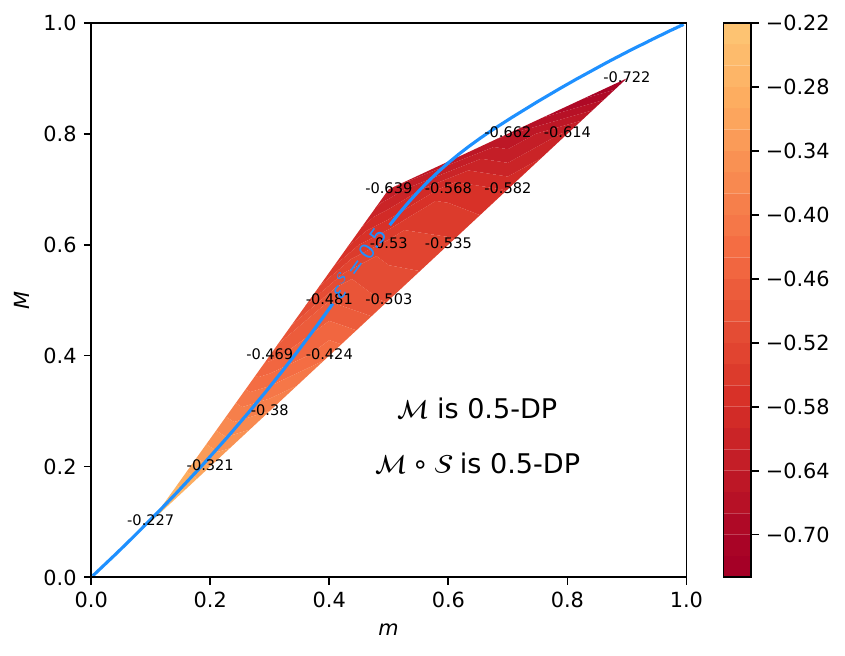}%
    \includegraphics[width=0.25\textwidth]{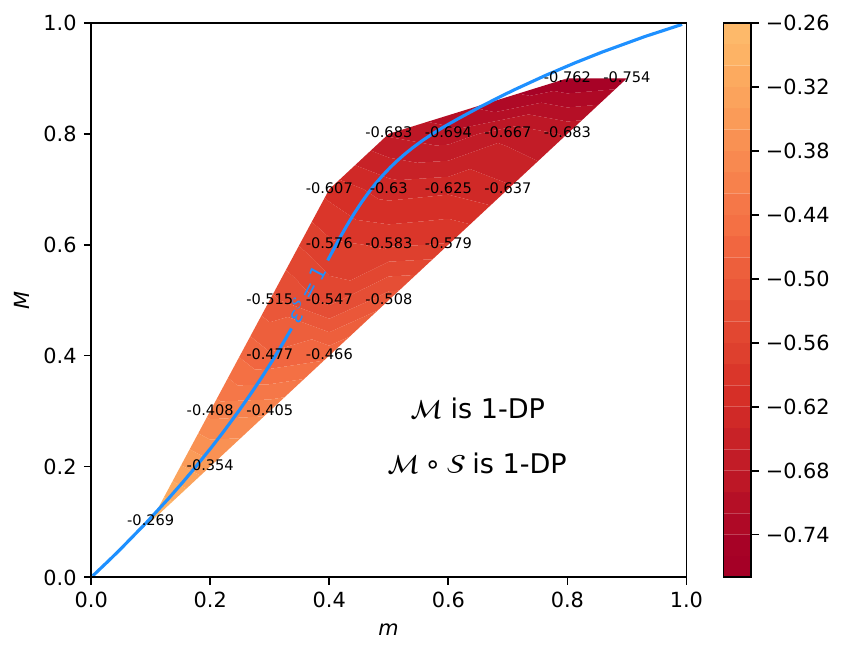}%
    \includegraphics[width=0.25\textwidth]{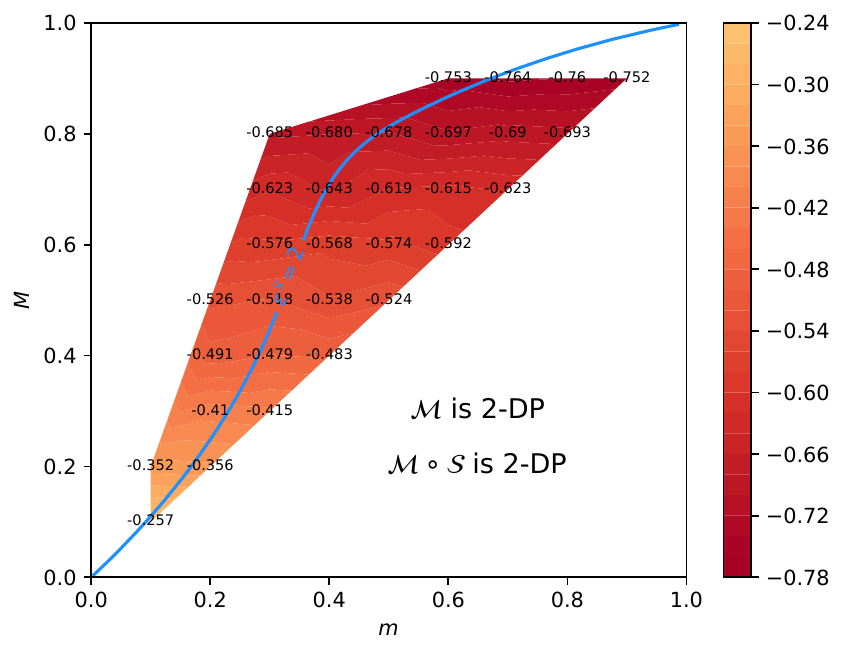}%
    \caption{The probability of outputting an incorrect mode of $\M$ minus that of $\M\circ\S$ at the same privacy levels. Results shown for the exponential mechanism over the \texttt{Age} column in the Irish database.}
    \label{fig:Experiment2-RNM-ExponentialMechanism-Irishn-Age}
\end{figure}

\begin{figure}[H]
    \includegraphics[width=0.25\textwidth]{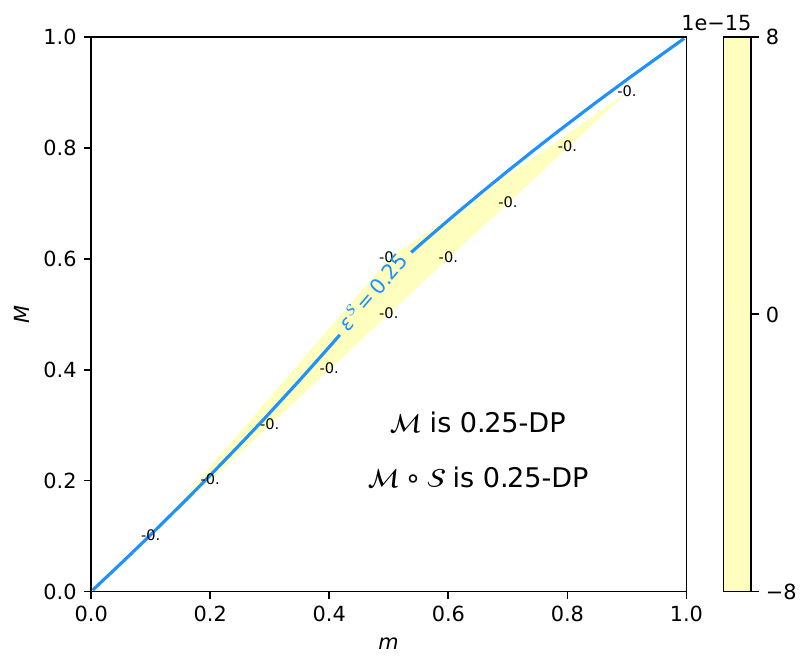}%
    \includegraphics[width=0.25\textwidth]{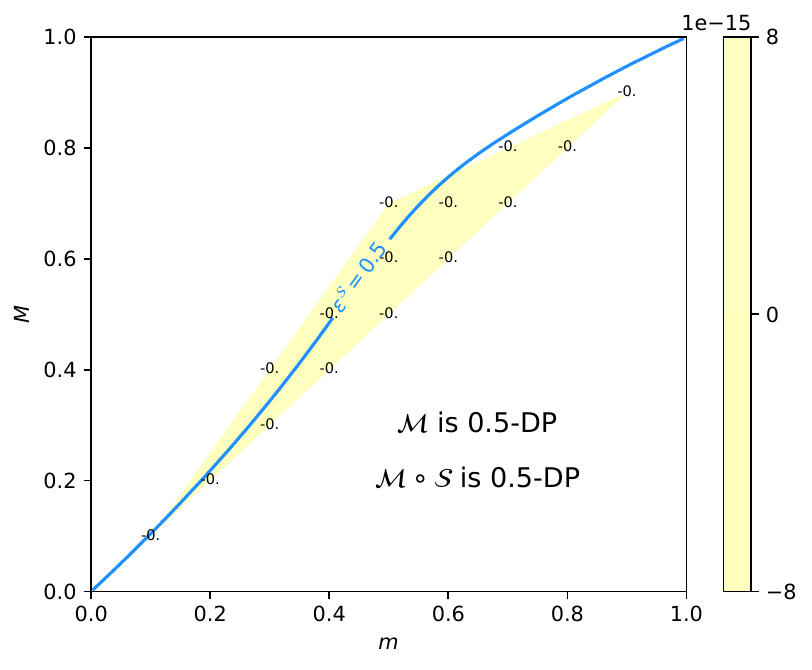}%
    \includegraphics[width=0.25\textwidth]{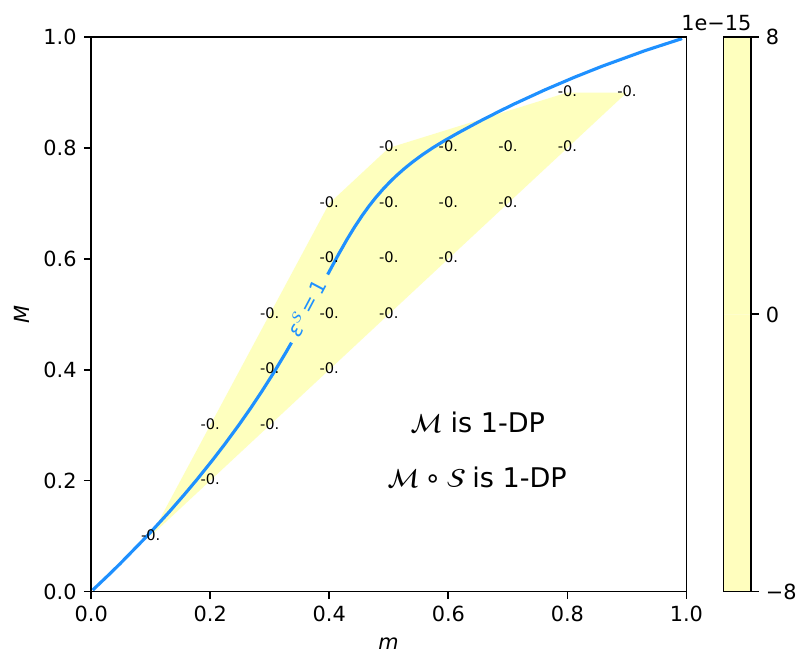}%
    \includegraphics[width=0.25\textwidth]{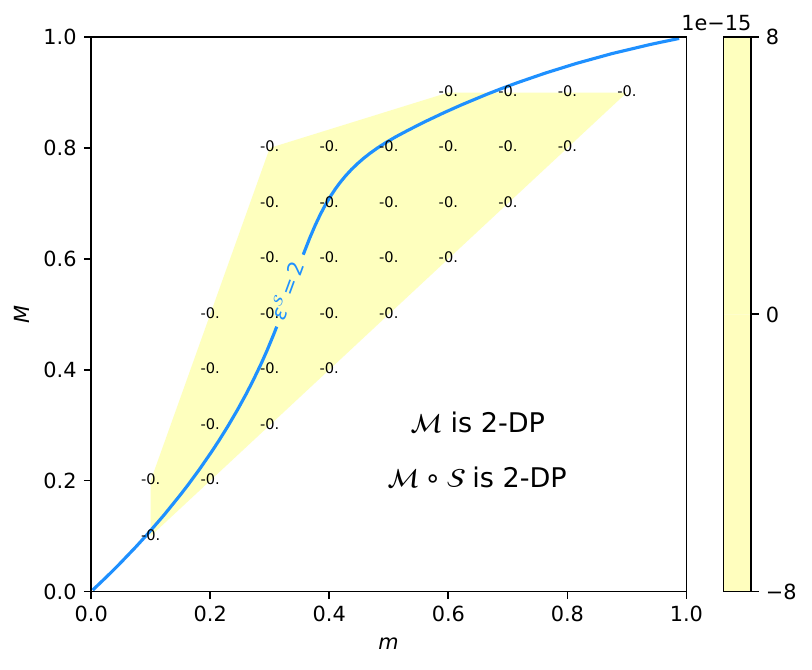}%
    \caption{The probability of outputting an incorrect mode of $\M$ minus that of $\M\circ\S$ at the same privacy levels. Results shown for the RNM with Laplace noise over the \texttt{HighestEducationCompleted} column in the Irish database.}
    \label{fig:Experiment2-RNM-Laplace-Irishn-HighestEducationCompleted}
\end{figure}

\begin{figure}[H]
    \includegraphics[width=0.25\textwidth]{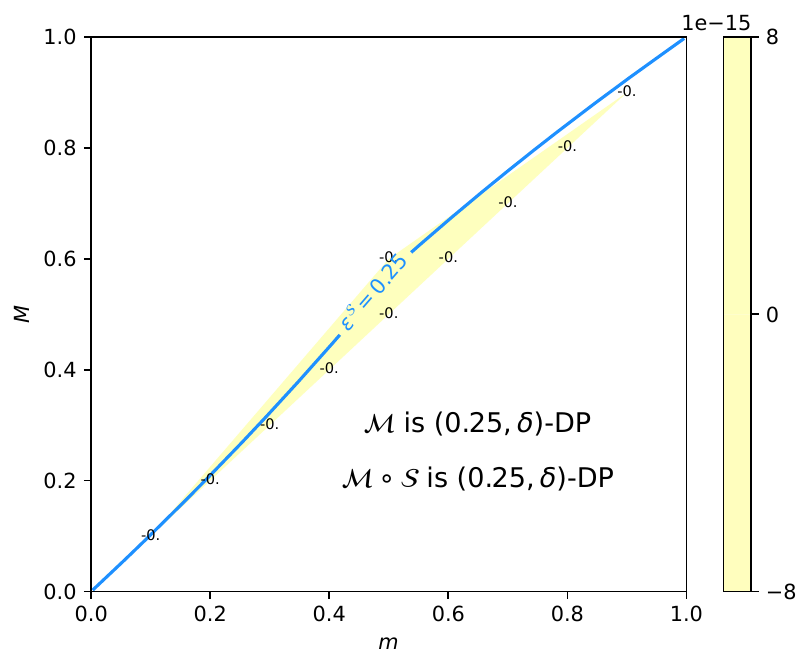}%
    \includegraphics[width=0.25\textwidth]{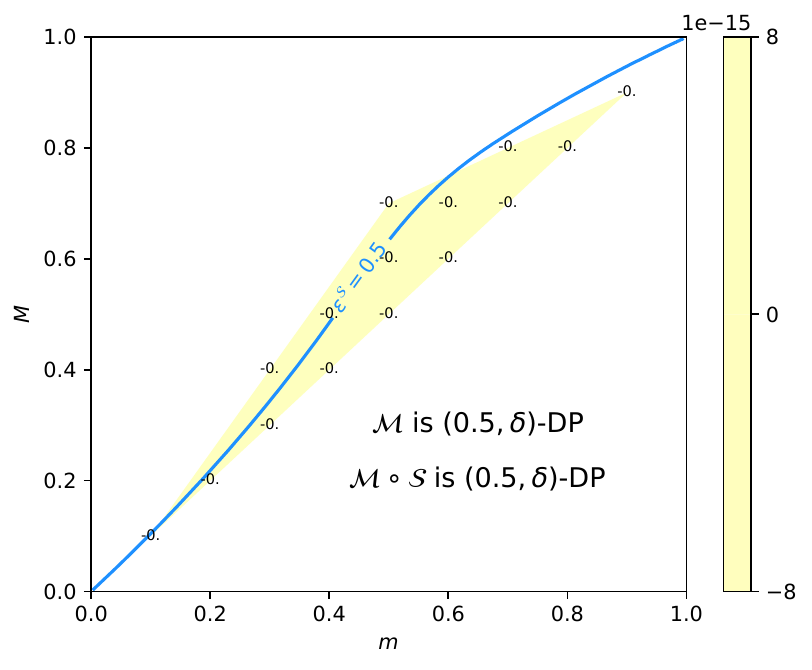}%
    \includegraphics[width=0.25\textwidth]{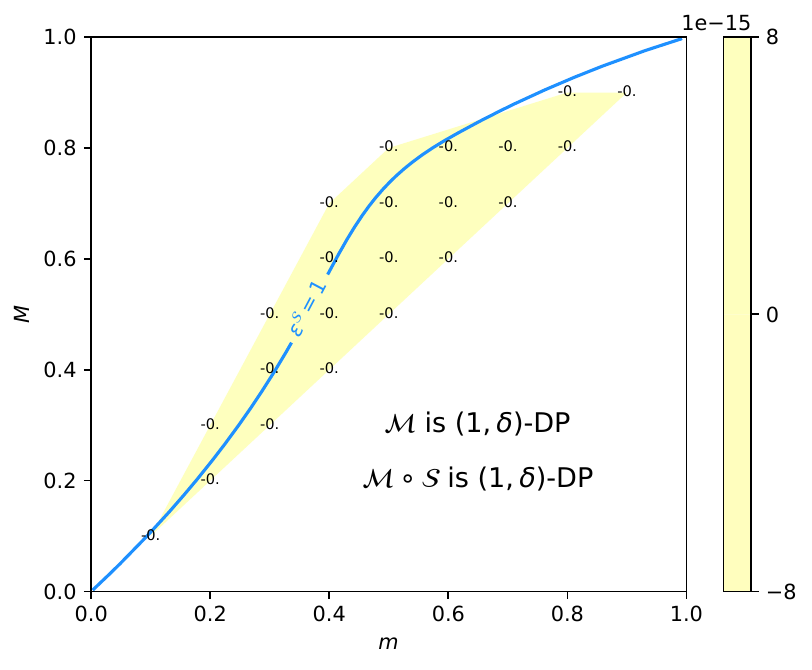}%
    \includegraphics[width=0.25\textwidth]{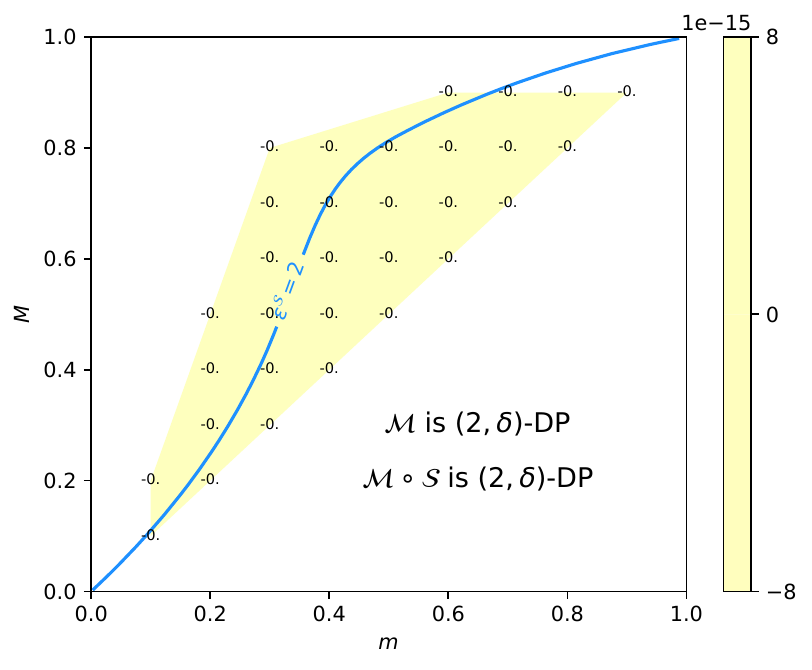}%
    \caption{The probability of outputting an incorrect mode of $\M$ minus that of $\M\circ\S$ at the same privacy levels. Results shown for the RNM-like variant with Gaussian noise over the \texttt{HighestEducationCompleted} column in the Irish database.}
    \label{fig:Experiment2-RNM-Gaussian-Irishn-HighestEducationCompleted}
\end{figure}

\begin{figure}[H]
    \includegraphics[width=0.25\textwidth]{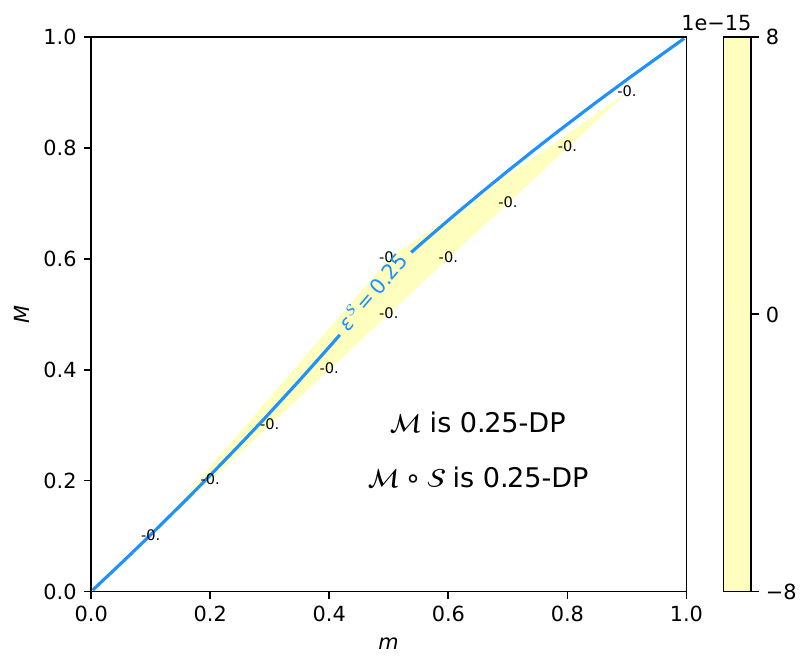}%
    \includegraphics[width=0.25\textwidth]{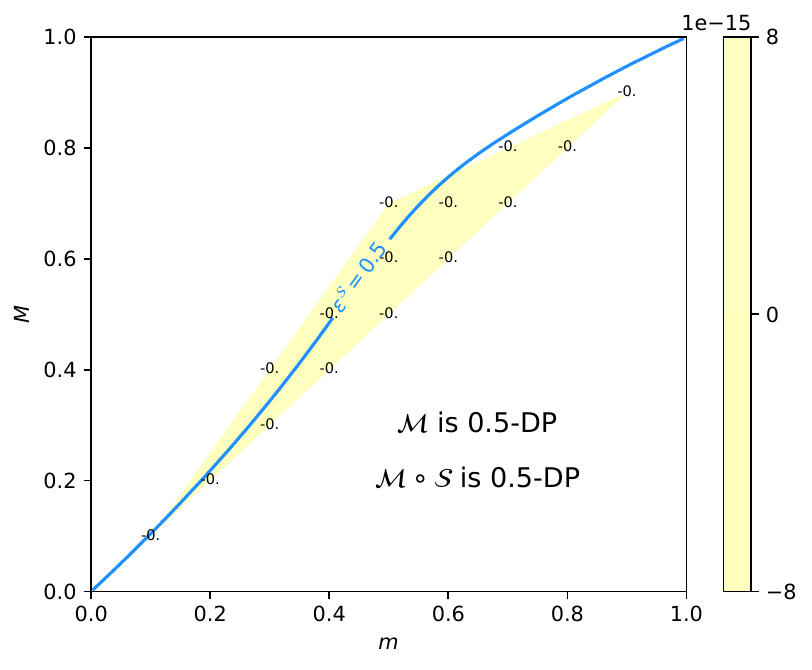}%
    \includegraphics[width=0.25\textwidth]{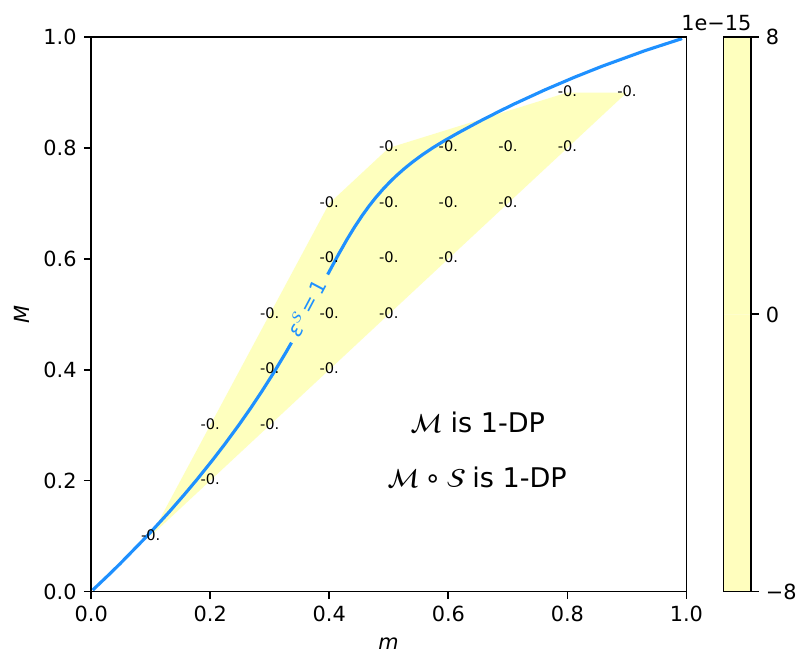}%
    \includegraphics[width=0.25\textwidth]{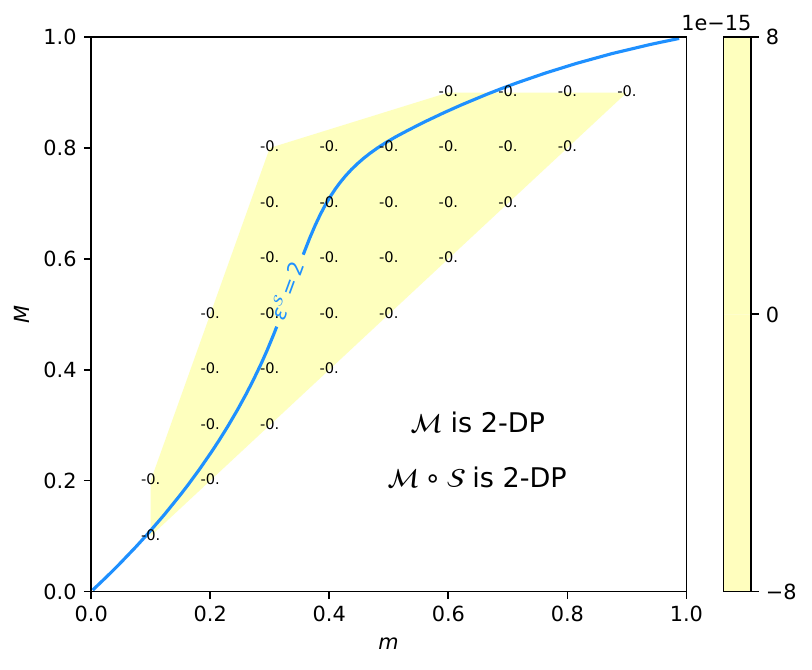}%
    \caption{The probability of outputting an incorrect mode of $\M$ minus that of $\M\circ\S$ at the same privacy levels. Results shown for the RNM with exponential noise over the \texttt{HighestEducationCompleted} column in the Irish database.}
    \label{fig:Experiment2-RNM-Exponential-Irishn-HighestEducationCompleted}
\end{figure}

\begin{figure}[H]
    \includegraphics[width=0.25\textwidth]{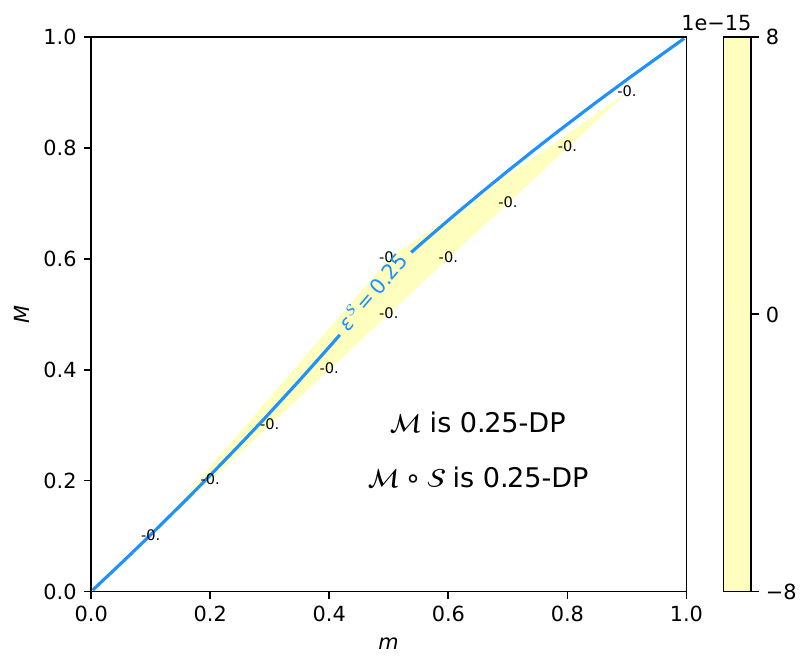}%
    \includegraphics[width=0.25\textwidth]{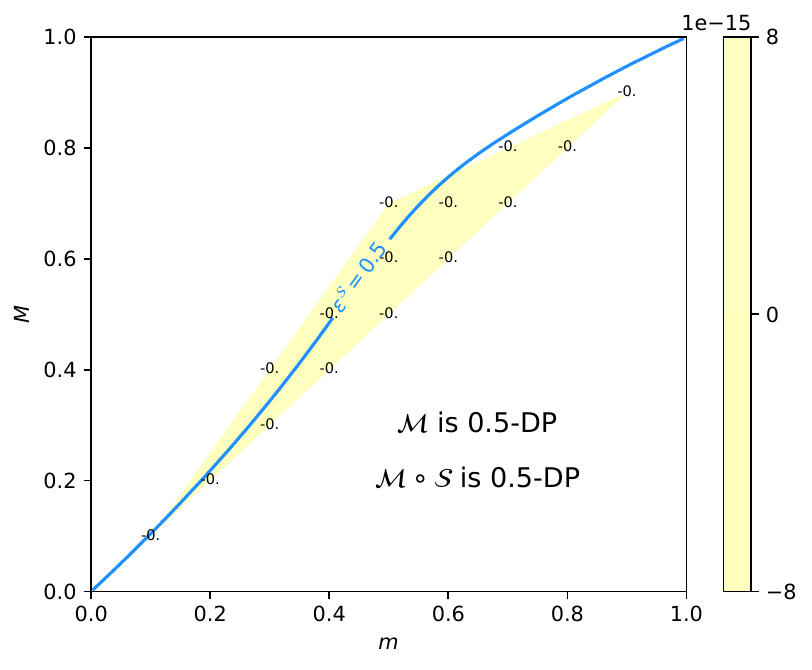}%
    \includegraphics[width=0.25\textwidth]{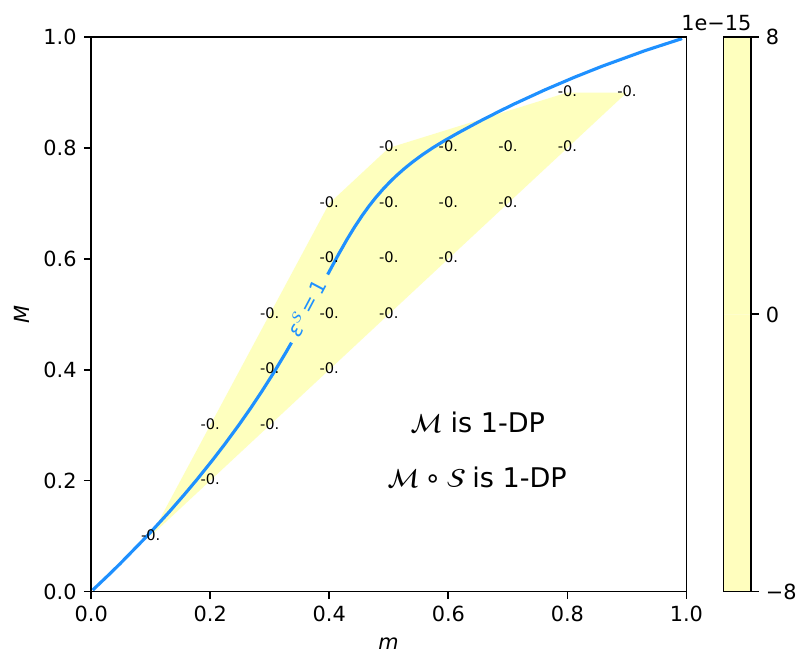}%
    \includegraphics[width=0.25\textwidth]{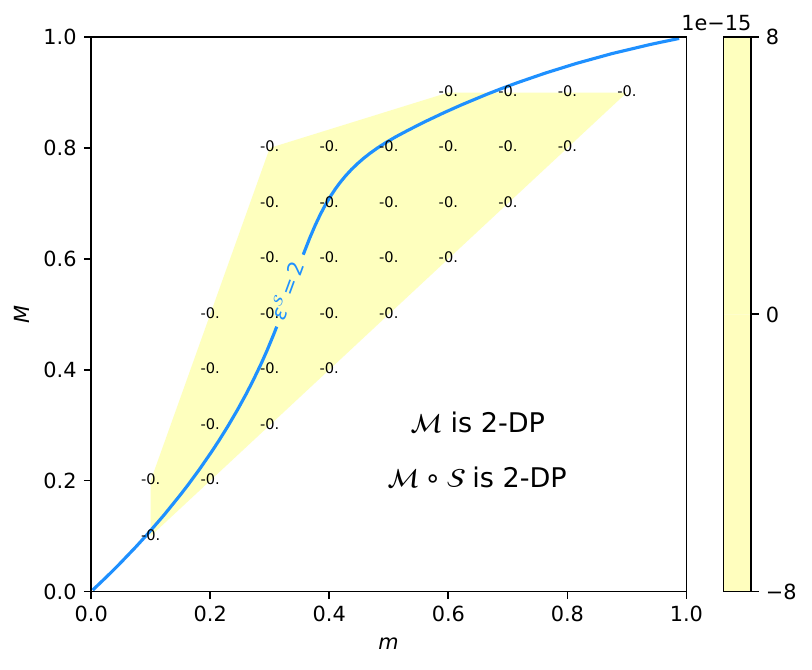}%
    \caption{The probability of outputting an incorrect mode of $\M$ minus that of $\M\circ\S$ at the same privacy levels. Results shown for the exponential mechanism over the \texttt{HighestEducationCompleted} column in the Irish database.}
    \label{fig:Experiment2-RNM-ExponentialMechanism-Irishn-HighestEducationCompleted}
\end{figure}

\subsection{Plots for the Clustering Mechanisms}\label{sec:plots:SuppressionwithEpsDeltaChange3}

\begin{figure}[H]
    \includegraphics[width=0.25\textwidth]{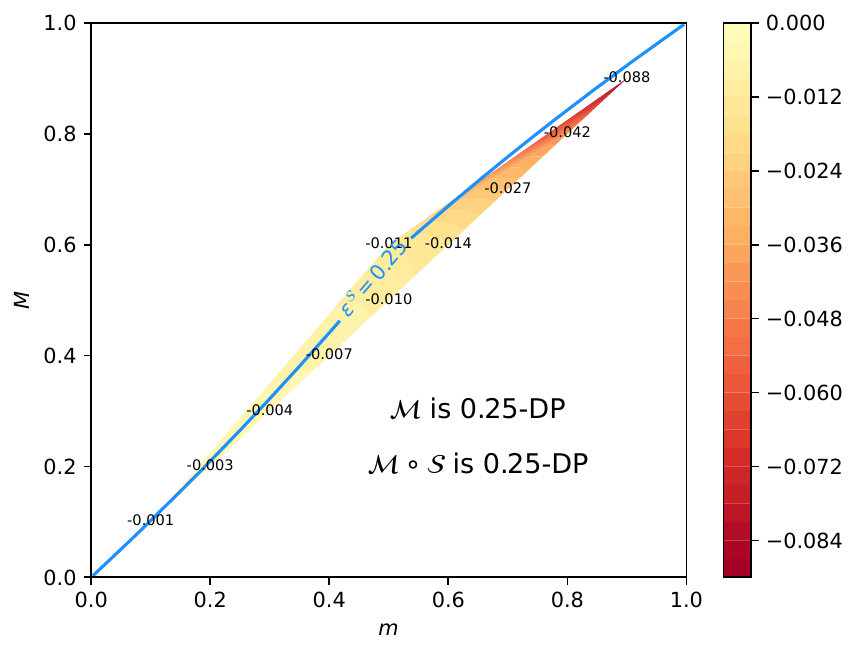}%
    \includegraphics[width=0.25\textwidth]{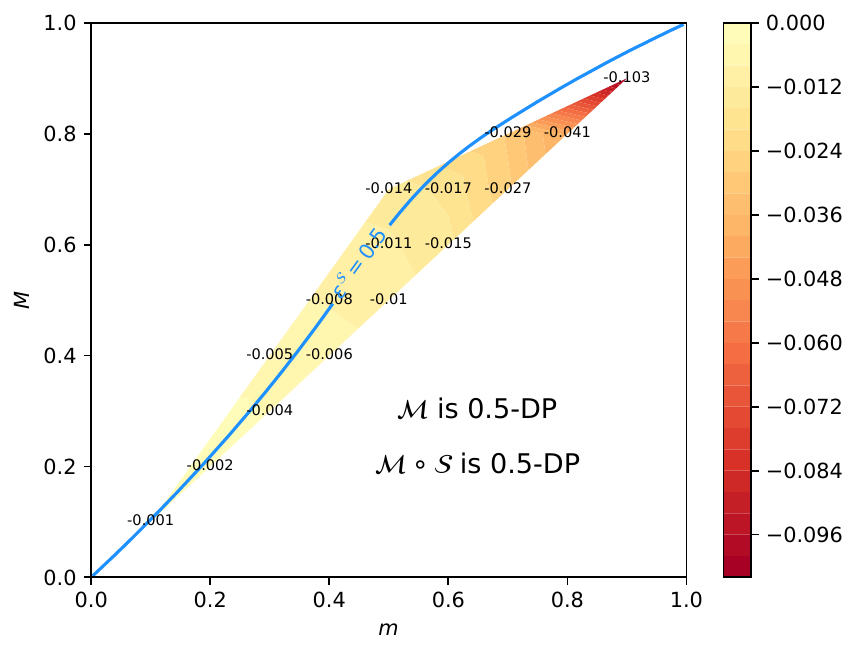}%
    \includegraphics[width=0.25\textwidth]{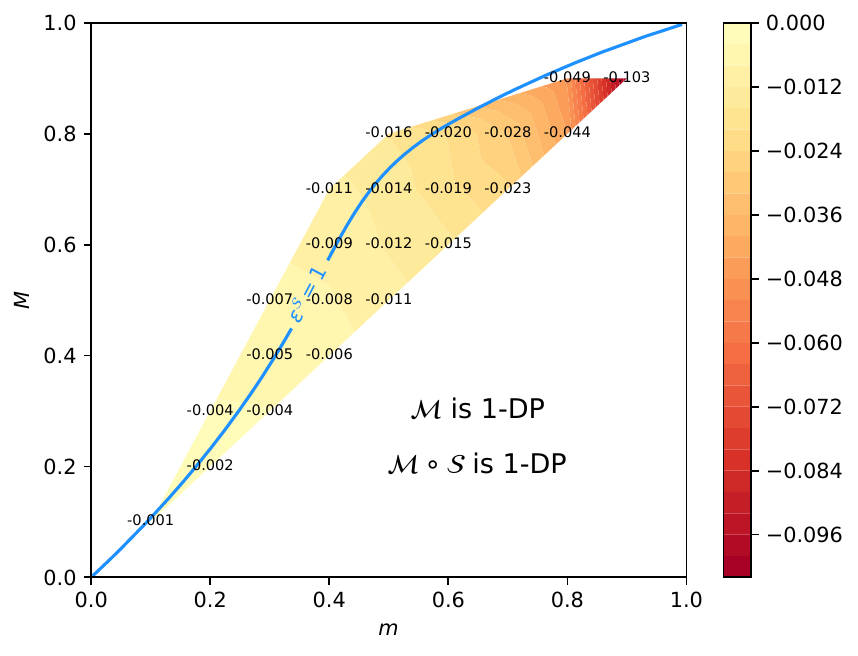}%
    \includegraphics[width=0.25\textwidth]{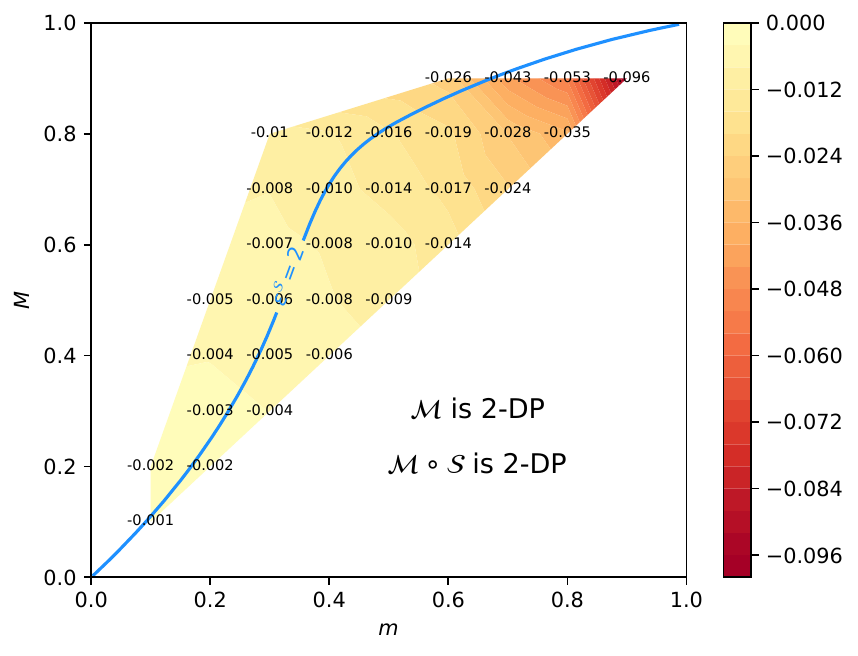}%
    \caption{The average cost of $\M$ minus that of $\M\circ\S$ at the same privacy levels. Results shown for $k$-median over our synthetic database.}
    \label{fig:Experiment2-Clusteringkmedian}
\end{figure}

\begin{figure}[H]
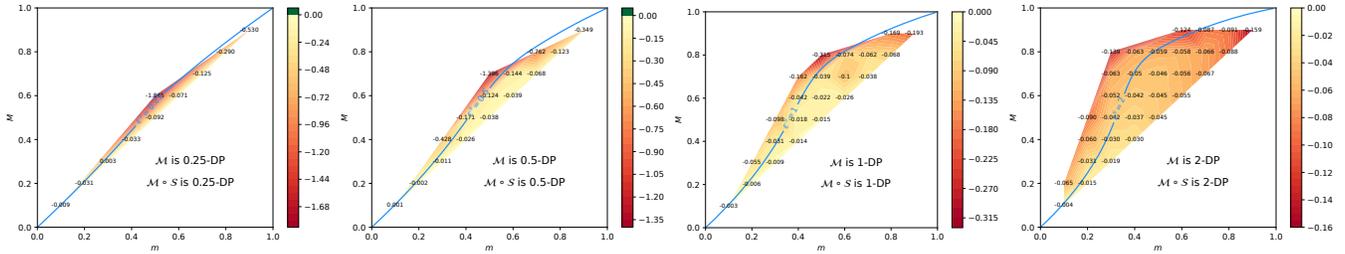

    \includegraphics[width=0.25\textwidth]{PaperPlots/Clustering/eps=0.25_difference_error_M_minus_MoSChangeEpsDelta_Average_10--90.pdf}%
    \includegraphics[width=0.25\textwidth]{PaperPlots/Clustering/eps=0.5_difference_error_M_minus_MoSChangeEpsDelta_Average_10--90.pdf}%
    \includegraphics[width=0.25\textwidth]{PaperPlots/Clustering/eps=1_difference_error_M_minus_MoSChangeEpsDelta_Average_10--90.pdf}%
    \includegraphics[width=0.25\textwidth]{PaperPlots/Clustering/eps=2_difference_error_M_minus_MoSChangeEpsDelta_Average_10--90.pdf}%
    \caption{The normalized intracluster variance of $\M$ minus that of $\M\circ\S$ at the same privacy levels. Results shown for DPLloyd over the six numerical columns of the Adult database.}
    \label{fig:Experiment2-ClusteringDPLloyd}
\end{figure}

\subsection{Plots of the Utility Difference between the Mechanisms \textit{without} the Noise Reduction for the Mean Computation}\label{sec:plots:SuppressionwithoutEpsDeltaChange1}

\begin{figure}[H]
    \includegraphics[width=0.25\textwidth]{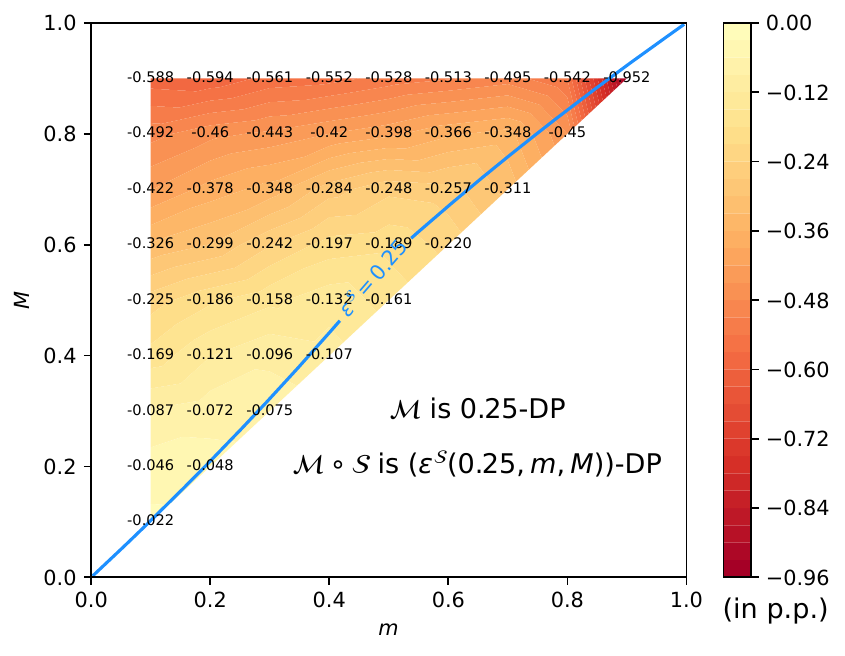}%
    \includegraphics[width=0.25\textwidth]{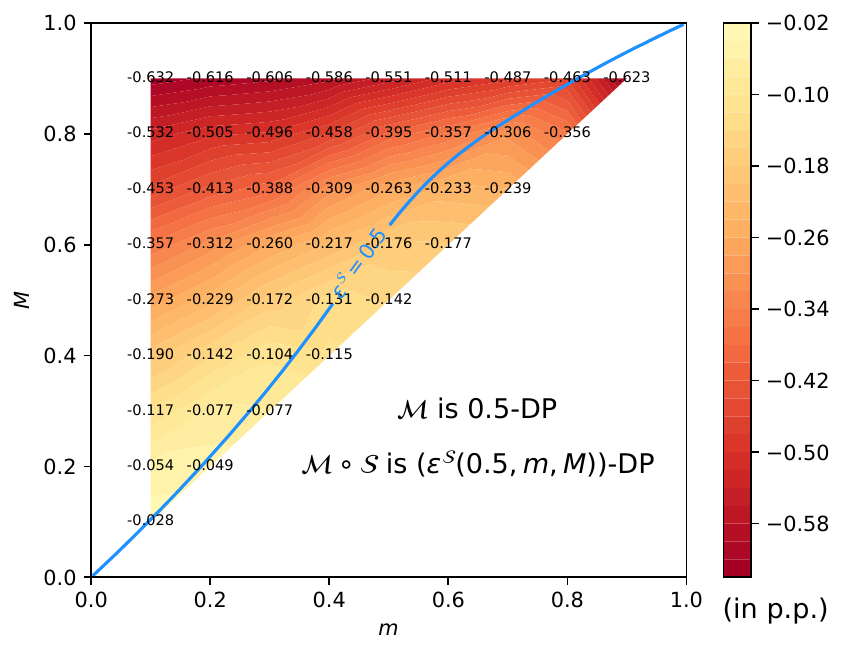}%
    \includegraphics[width=0.25\textwidth]{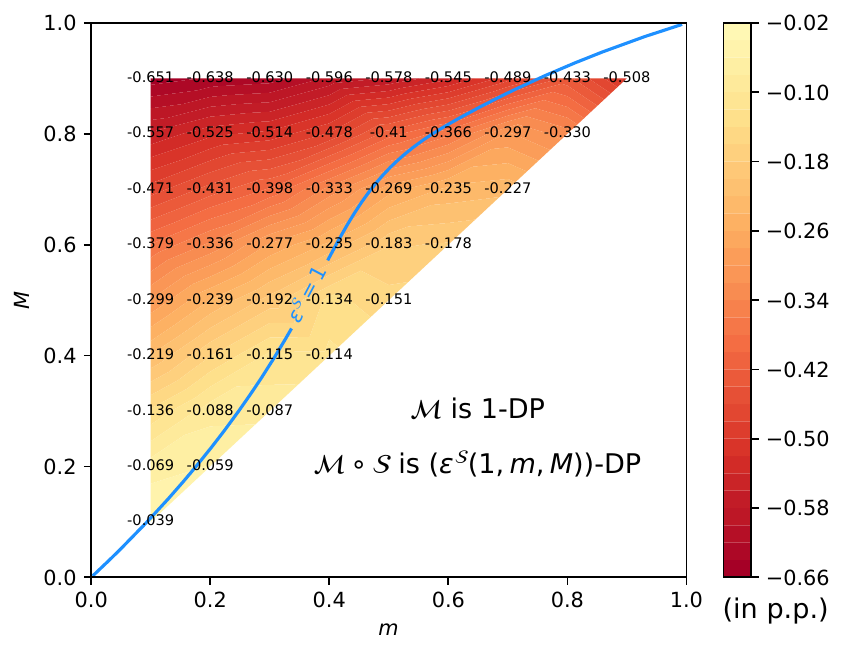}%
    \includegraphics[width=0.25\textwidth]{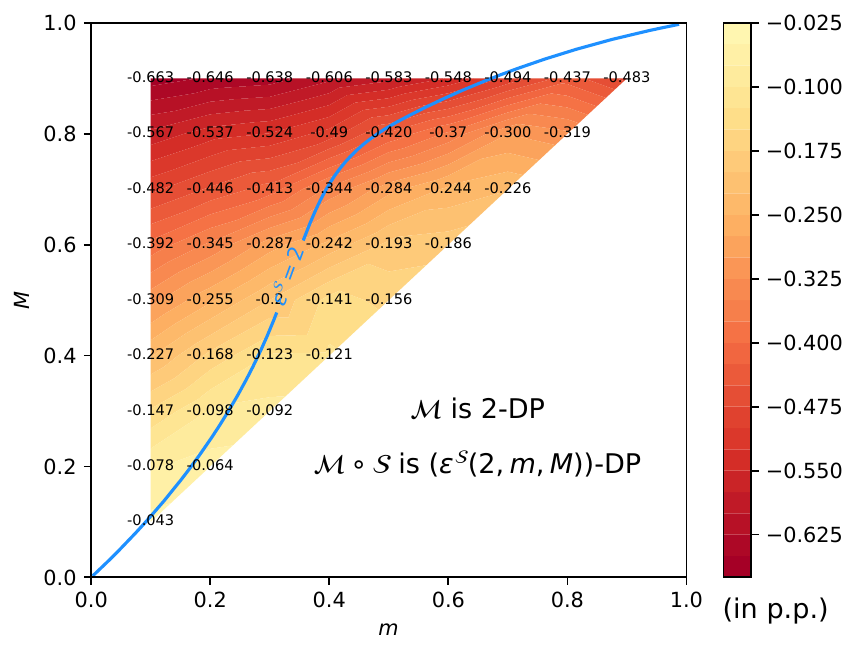}%
    \caption{The mean percent error (MPE) of $\M$ minus that of $\M\circ\S$ without the noise reduction. Results shown for the NoisyAverage with Laplace mechanisms over the \texttt{age} column in the Adult database.}
    \label{fig:Experiment1-NoisyAverage-Laplace-Adult-age}
\end{figure}

\begin{figure}[H]
    \includegraphics[width=0.25\textwidth]{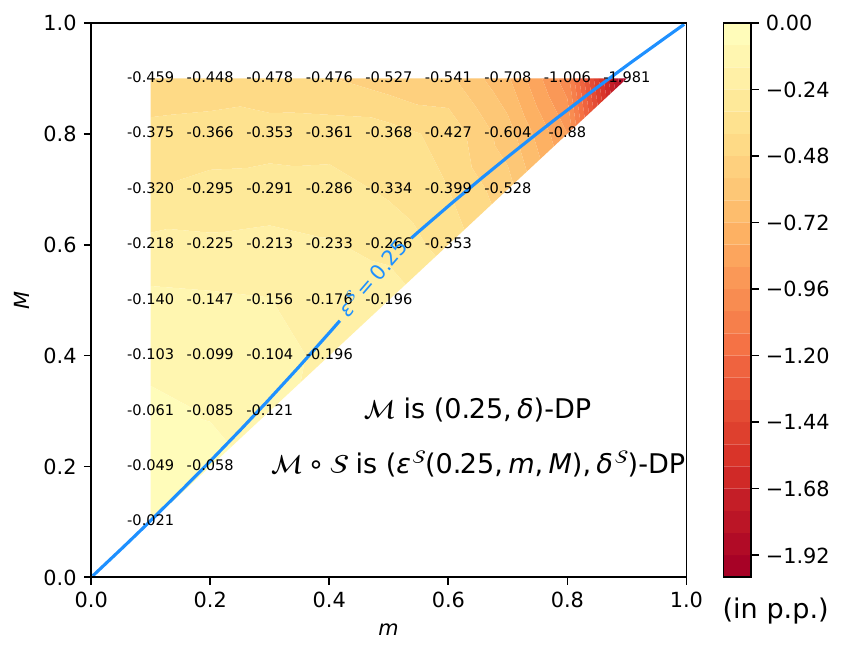}%
    \includegraphics[width=0.25\textwidth]{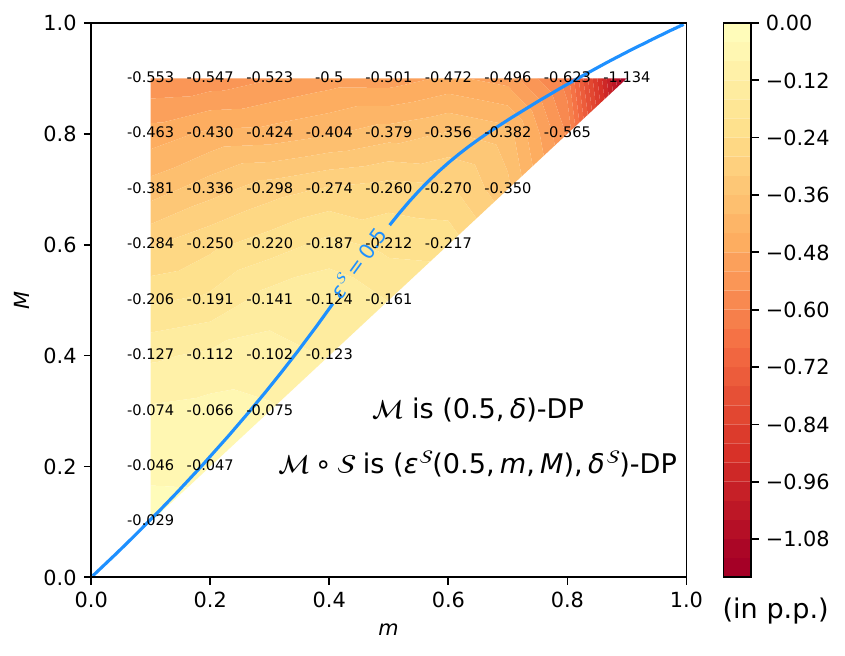}%
    \includegraphics[width=0.25\textwidth]{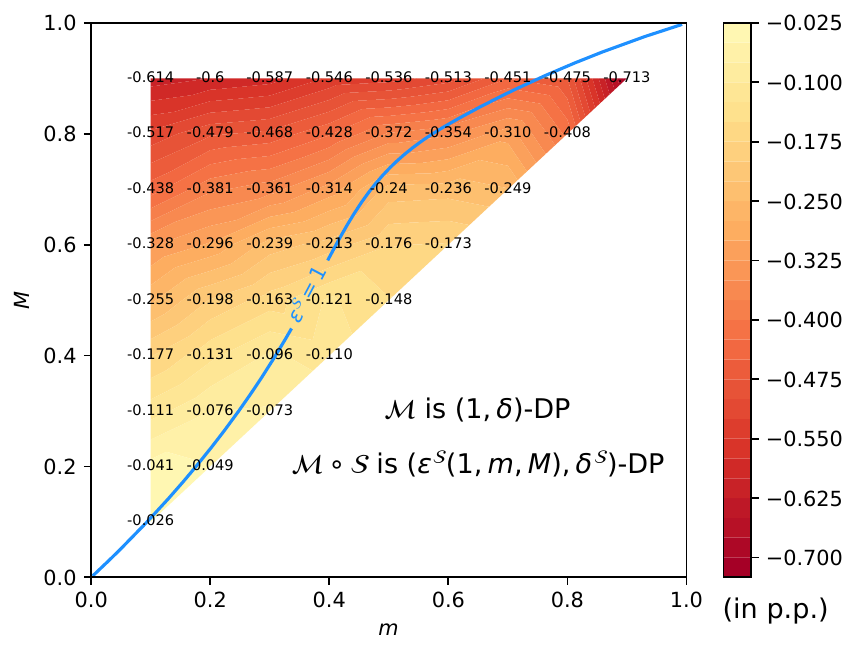}%
    \includegraphics[width=0.25\textwidth]{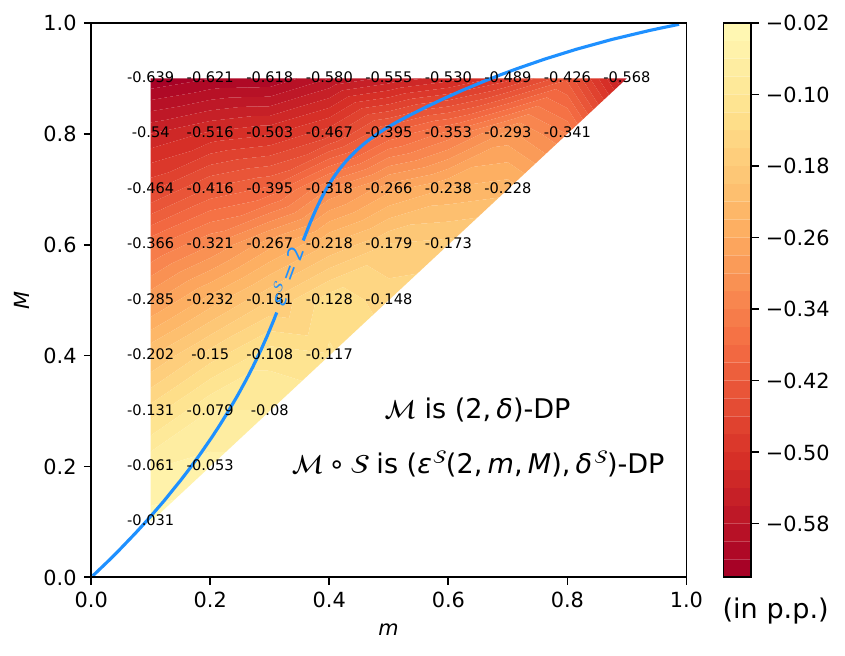}%
    \caption{The mean percent error (MPE) of $\M$ minus that of $\M\circ\S$ without the noise reduction. Results shown for the NoisyAverage with Gaussian mechanisms over the \texttt{age} column in the Adult database.}
    \label{fig:Experiment1-NoisyAverage-Gaussian-Adult-age}
\end{figure}

\begin{figure}[H]
    \includegraphics[width=0.25\textwidth]{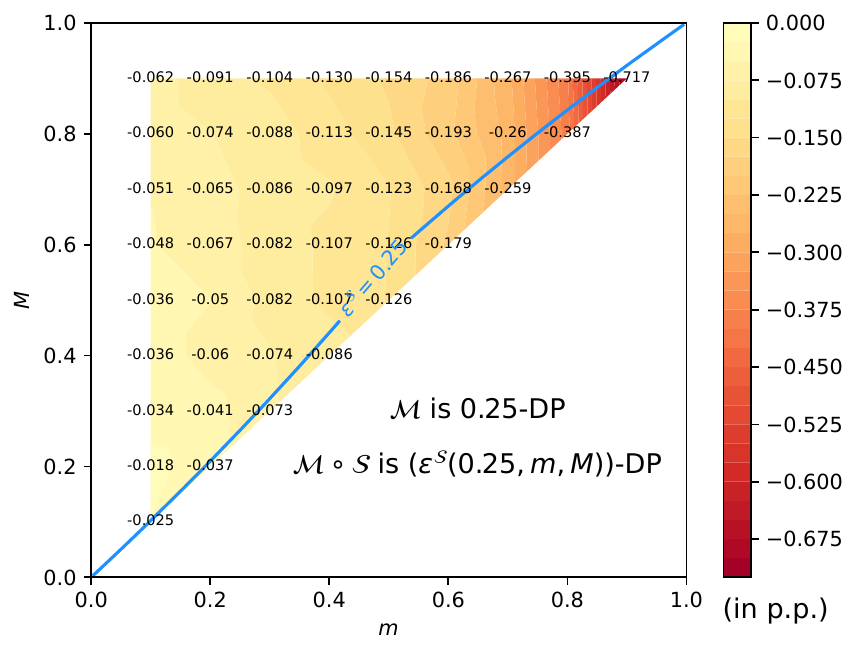}%
    \includegraphics[width=0.25\textwidth]{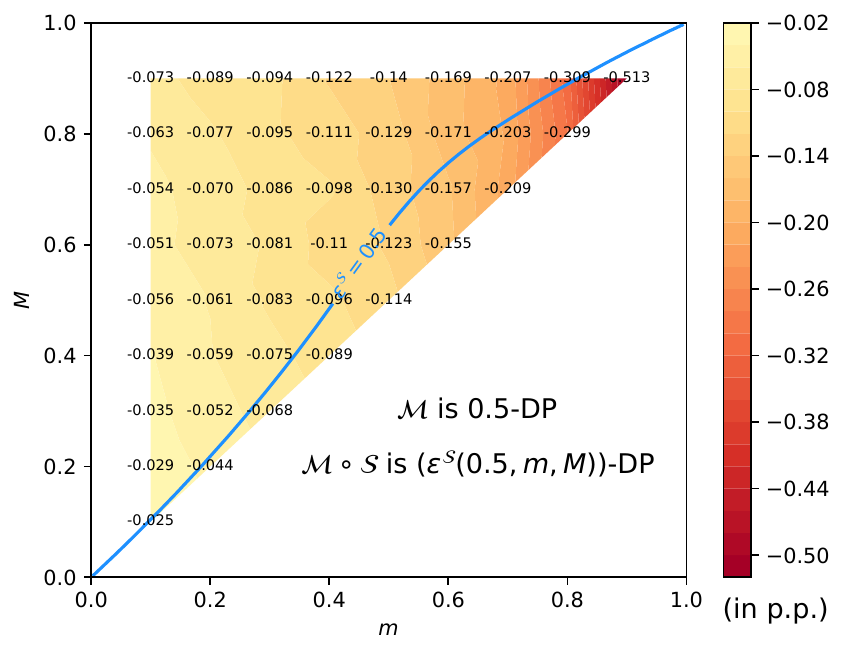}%
    \includegraphics[width=0.25\textwidth]{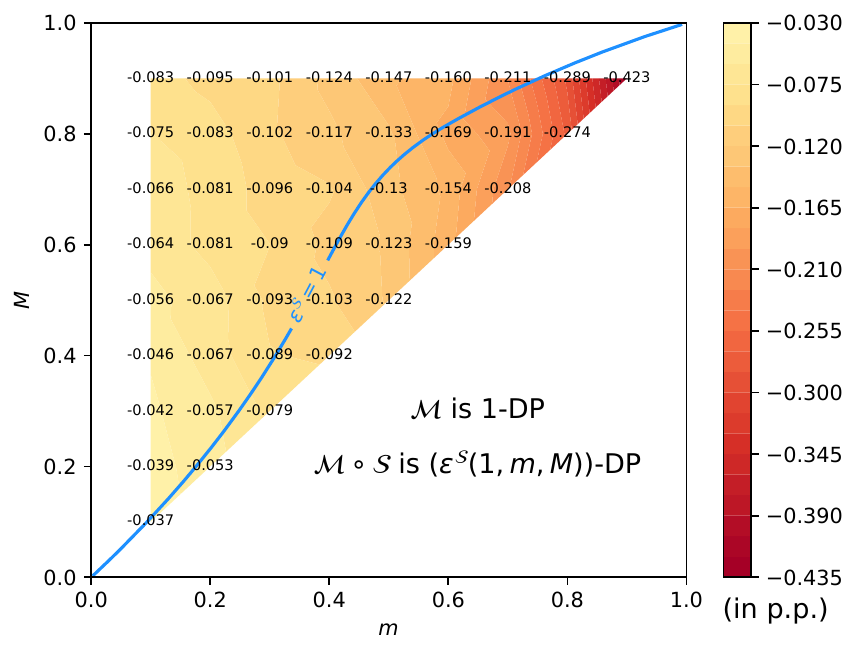}%
    \includegraphics[width=0.25\textwidth]{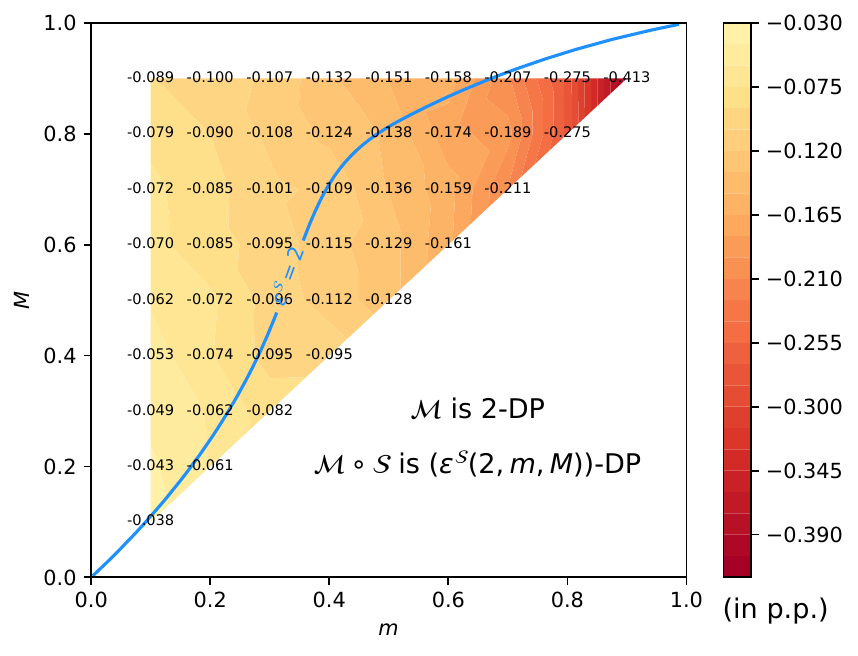}%
    \caption{The mean percent error (MPE) of $\M$ minus that of $\M\circ\S$ without the noise reduction. Results shown for the NoisyAverage with Laplace mechanisms over the \texttt{hours-per-week} column in the Adult database.}
    \label{fig:Experiment1-NoisyAverage-Laplace-Adult-hours-per-week}
\end{figure}

\begin{figure}[H]
    \includegraphics[width=0.25\textwidth]{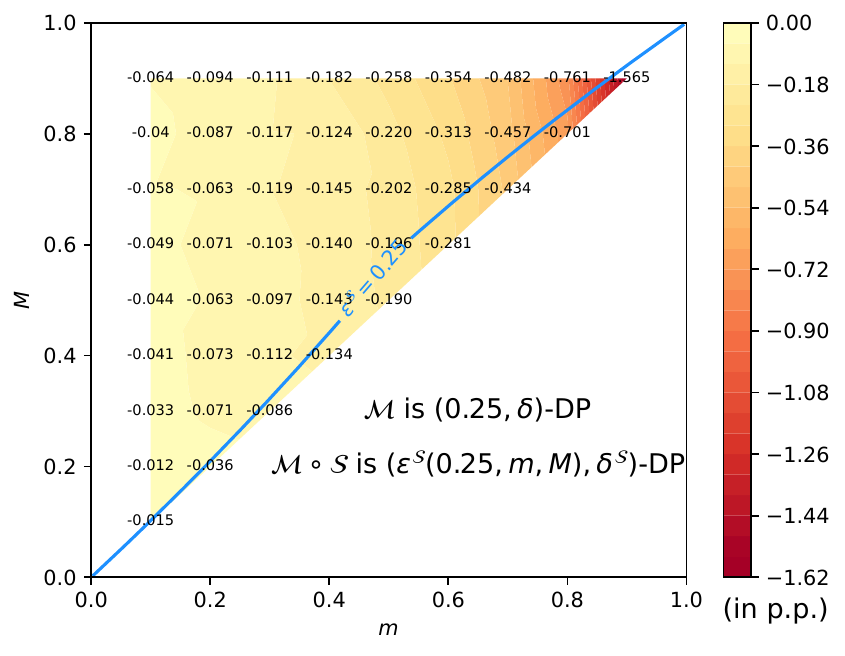}%
    \includegraphics[width=0.25\textwidth]{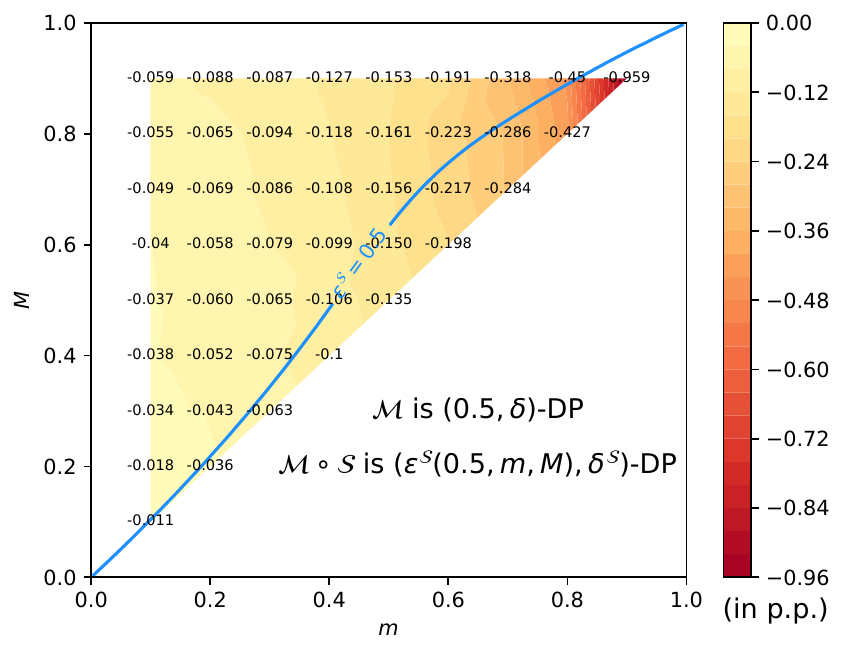}%
    \includegraphics[width=0.25\textwidth]{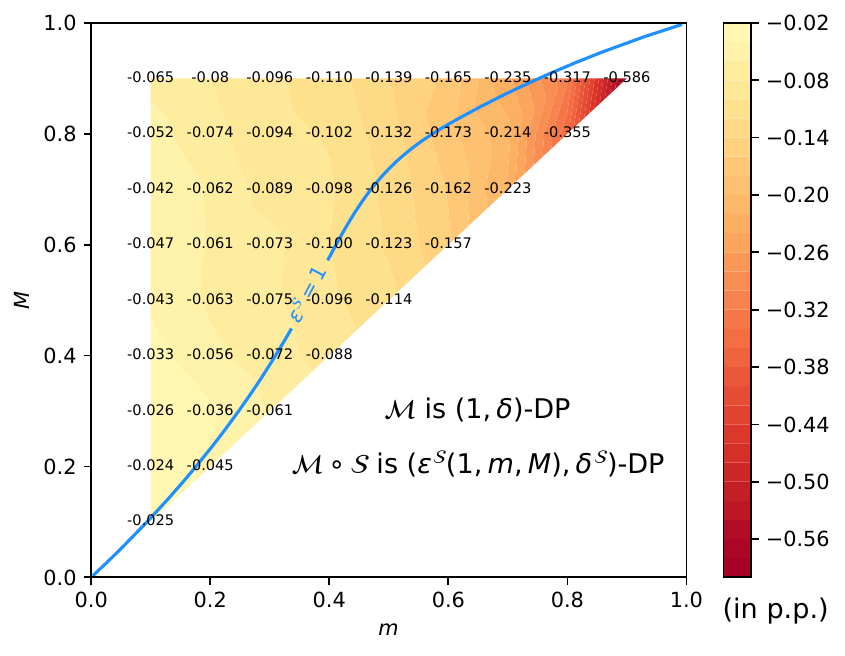}%
    \includegraphics[width=0.25\textwidth]{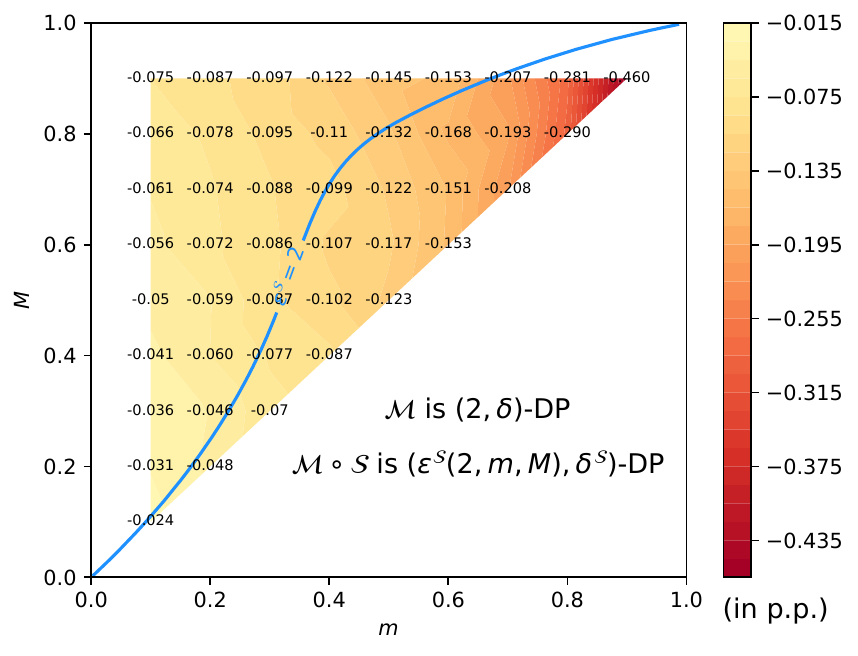}%
    \caption{The mean percent error (MPE) of $\M$ minus that of $\M\circ\S$ without the noise reduction. Results shown for the NoisyAverage with Gaussian mechanisms over the \texttt{hours-per-week} column in the Adult database.}
    \label{fig:Experiment1-NoisyAverage-Gaussian-Adult-hours-per-week}
\end{figure}

\begin{figure}[H]
    \includegraphics[width=0.25\textwidth]{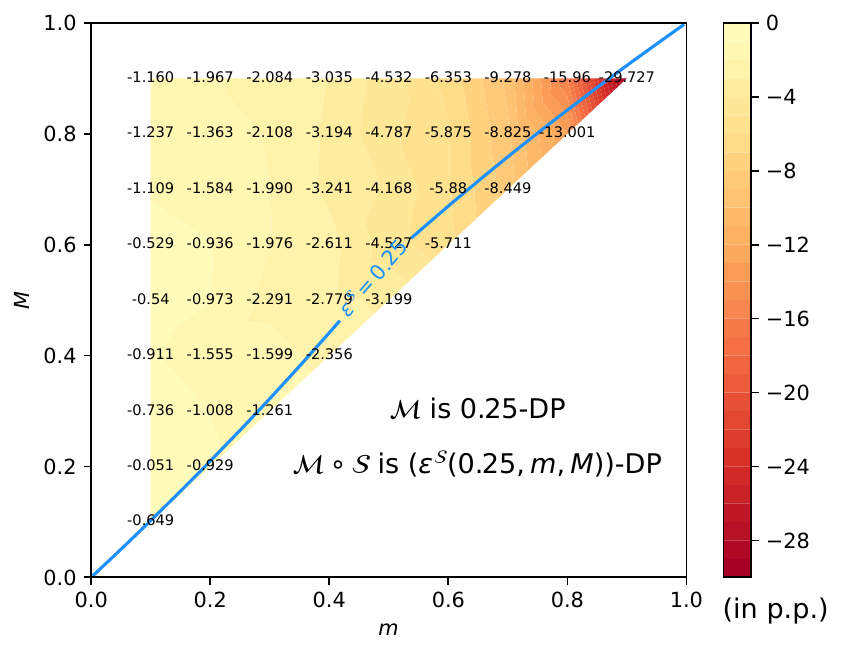}%
    \includegraphics[width=0.25\textwidth]{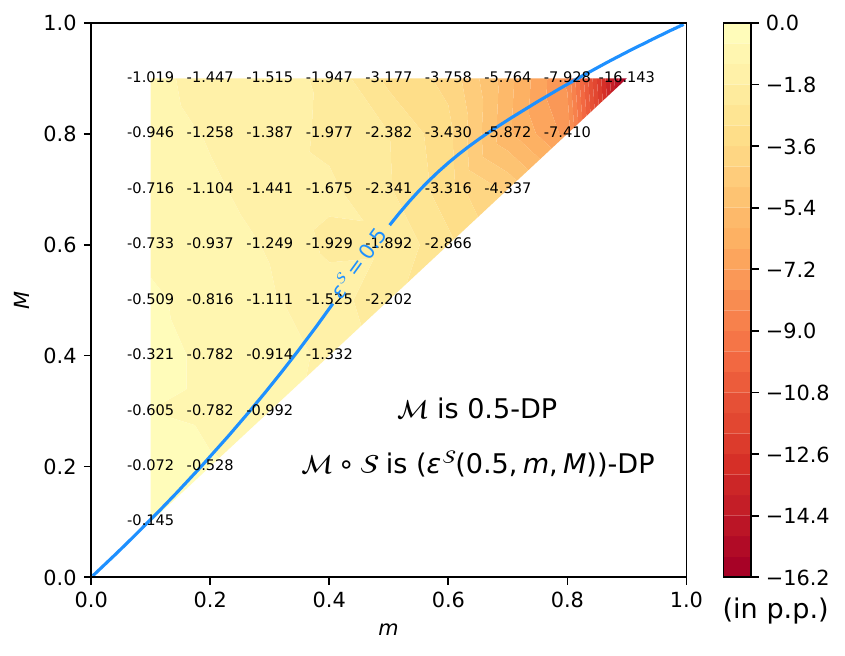}%
    \includegraphics[width=0.25\textwidth]{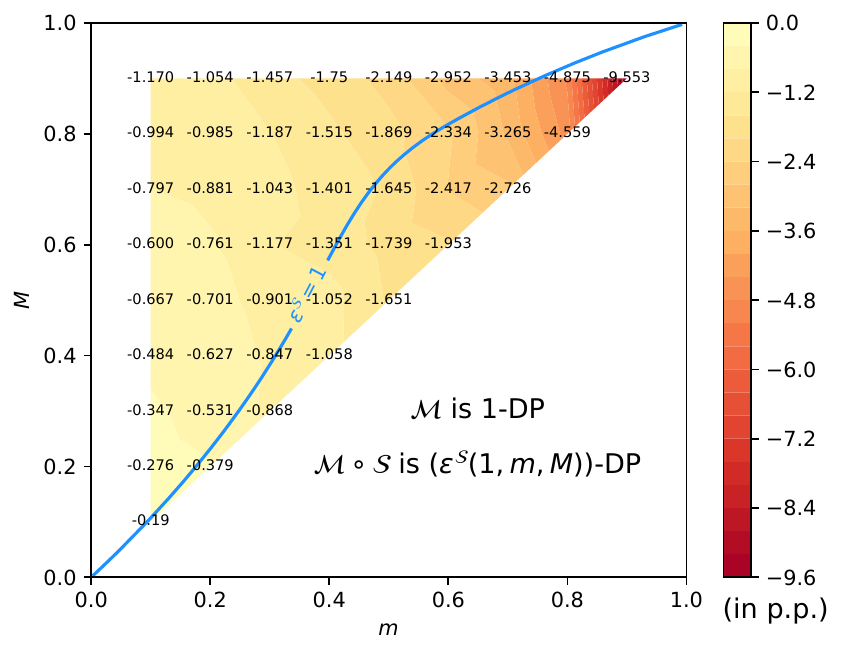}%
    \includegraphics[width=0.25\textwidth]{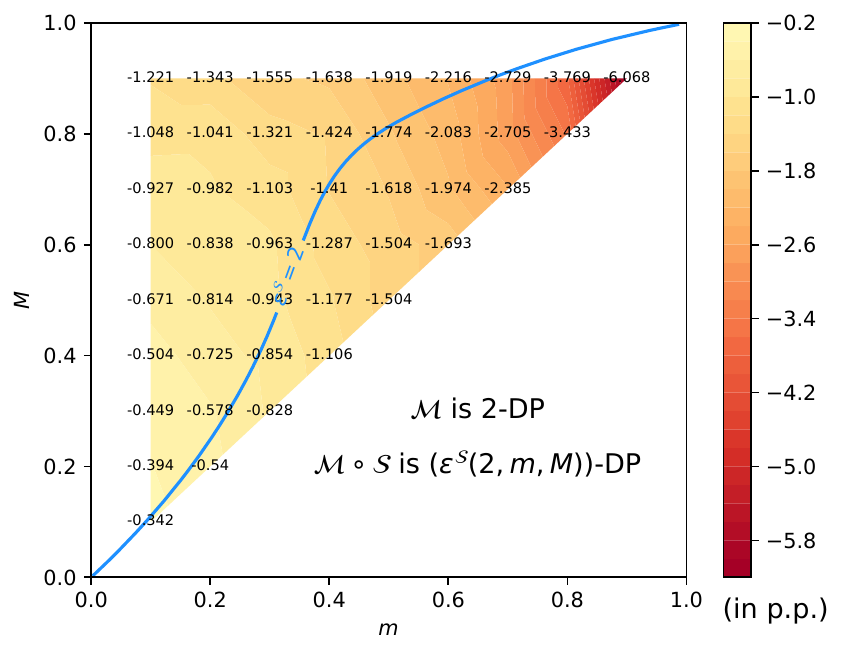}%
    \caption{The mean percent error (MPE) of $\M$ minus that of $\M\circ\S$ without the noise reduction. Results shown for the NoisyAverage with Laplace mechanisms over the \texttt{FEDTAX} column in the Census database.}
    \label{fig:Experiment1-NoisyAverage-Laplace-Census-FEDTAX}
\end{figure}

\begin{figure}[H]
    \includegraphics[width=0.25\textwidth]{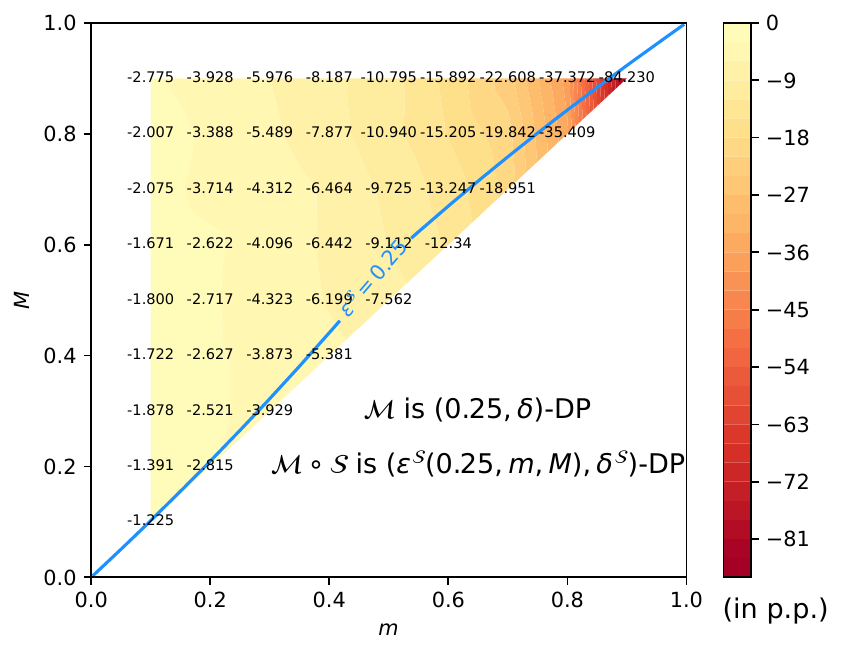}%
    \includegraphics[width=0.25\textwidth]{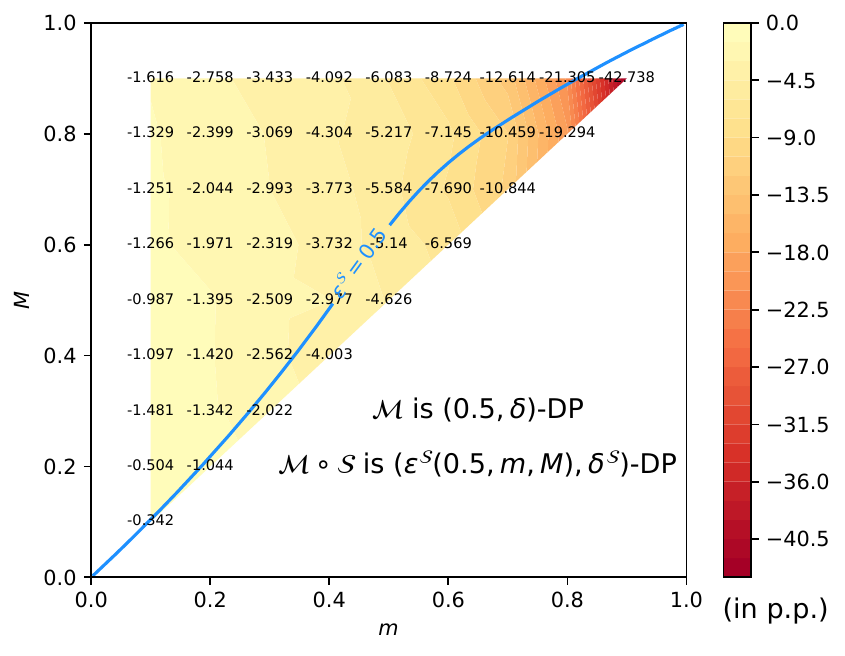}%
    \includegraphics[width=0.25\textwidth]{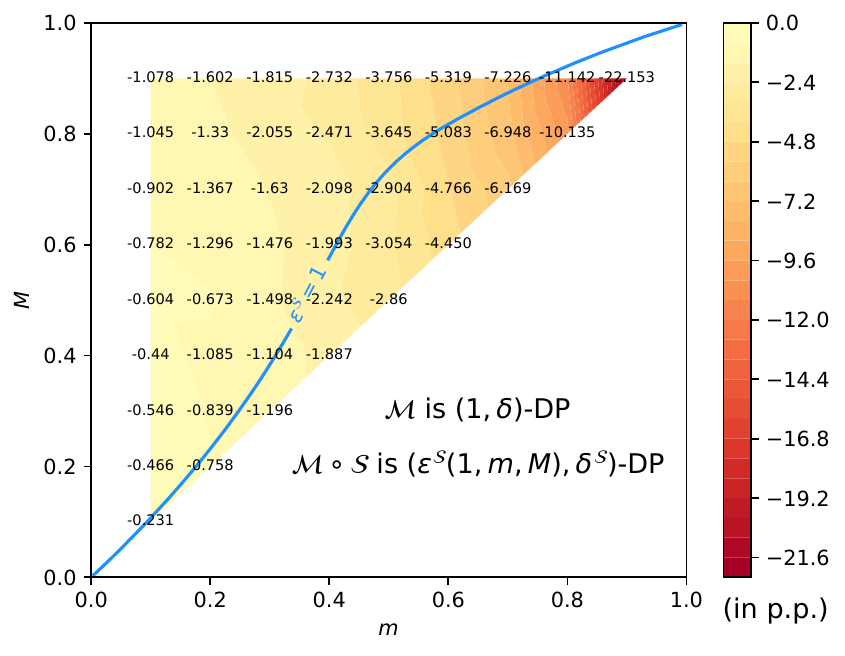}%
    \includegraphics[width=0.25\textwidth]{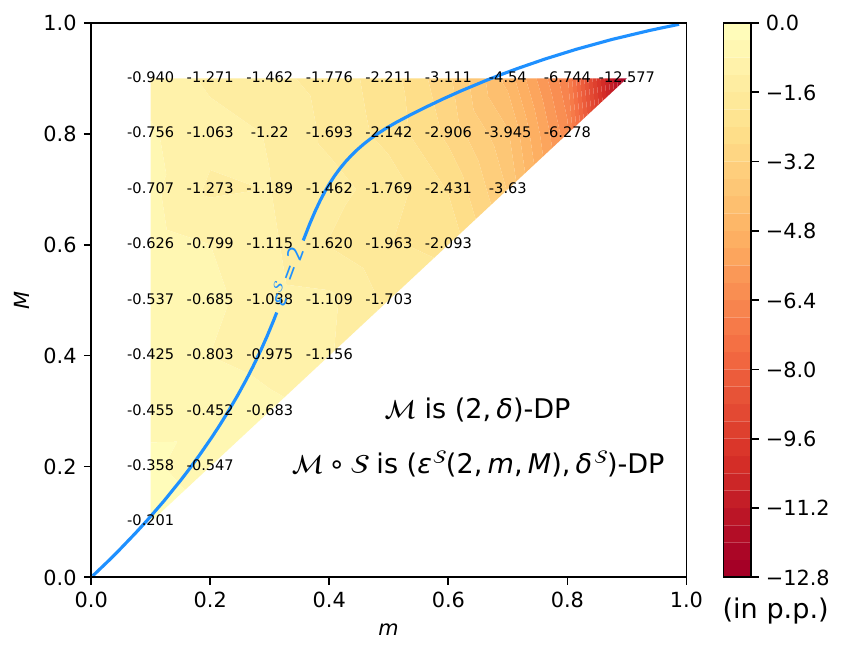}%
    \caption{The mean percent error (MPE) of $\M$ minus that of $\M\circ\S$ without the noise reduction. Results shown for the NoisyAverage with Gaussian mechanisms over the \texttt{FEDTAX} column in the Census database.}
    \label{fig:Experiment1-NoisyAverage-Gaussian-Census-FEDTAX}
\end{figure}

\begin{figure}[H]
    \includegraphics[width=0.25\textwidth]{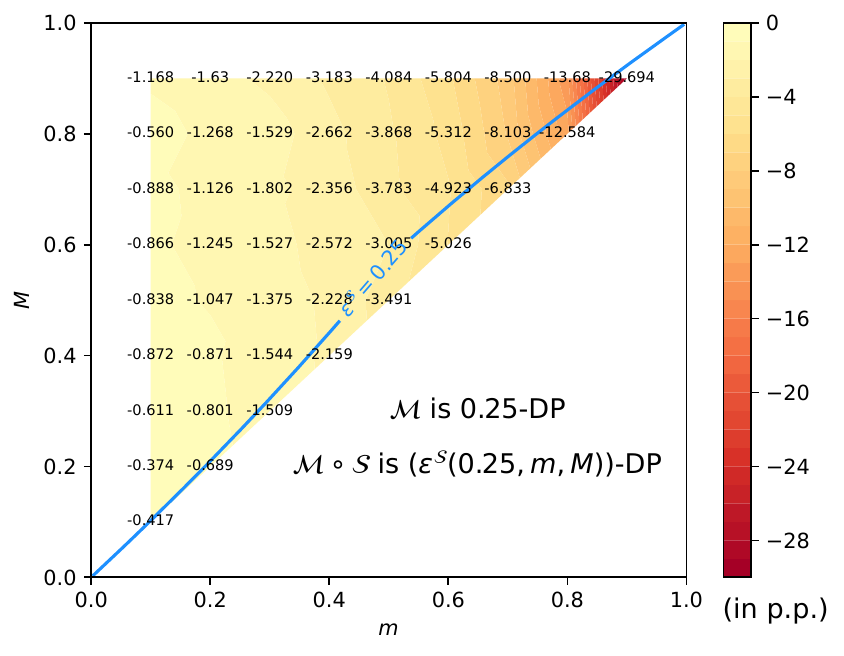}%
    \includegraphics[width=0.25\textwidth]{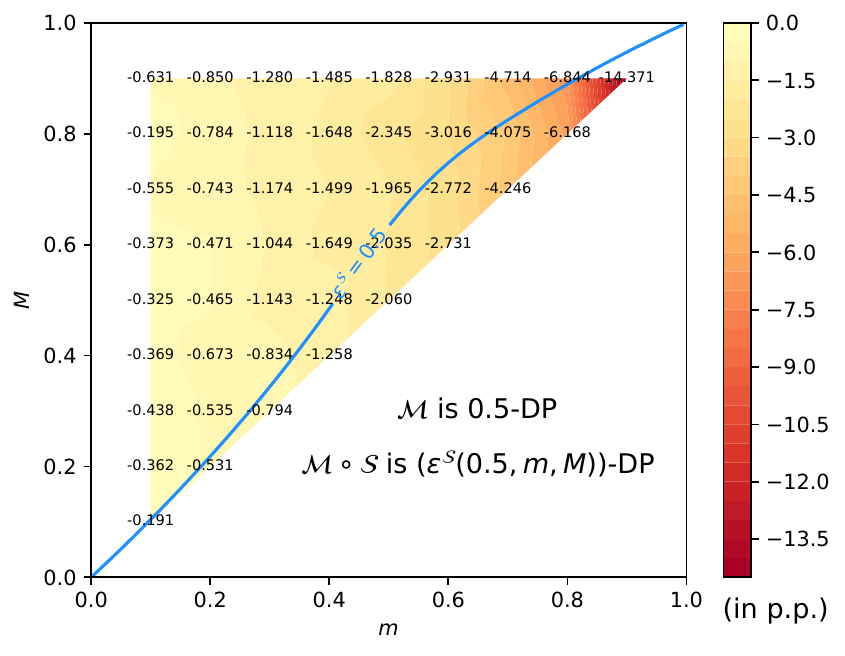}%
    \includegraphics[width=0.25\textwidth]{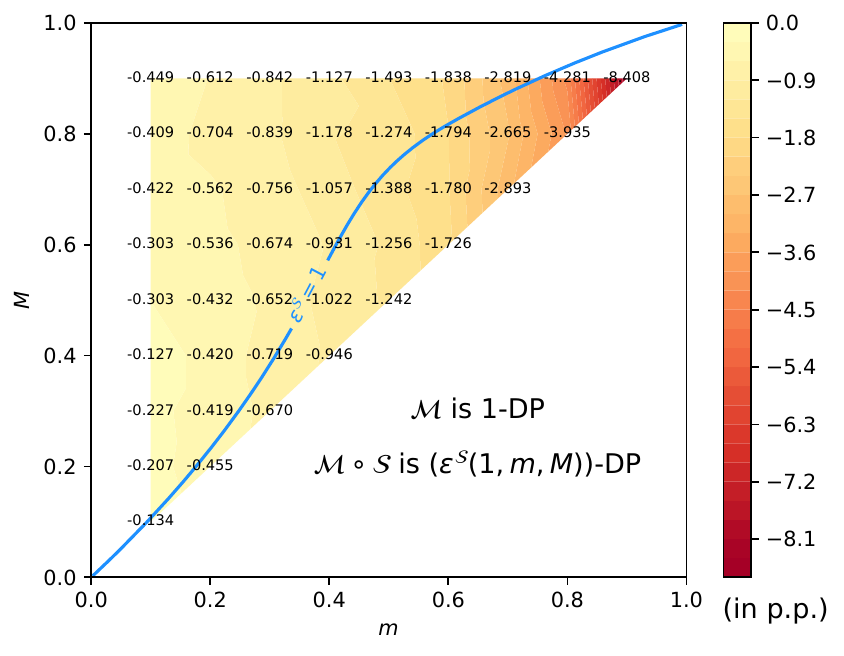}%
    \includegraphics[width=0.25\textwidth]{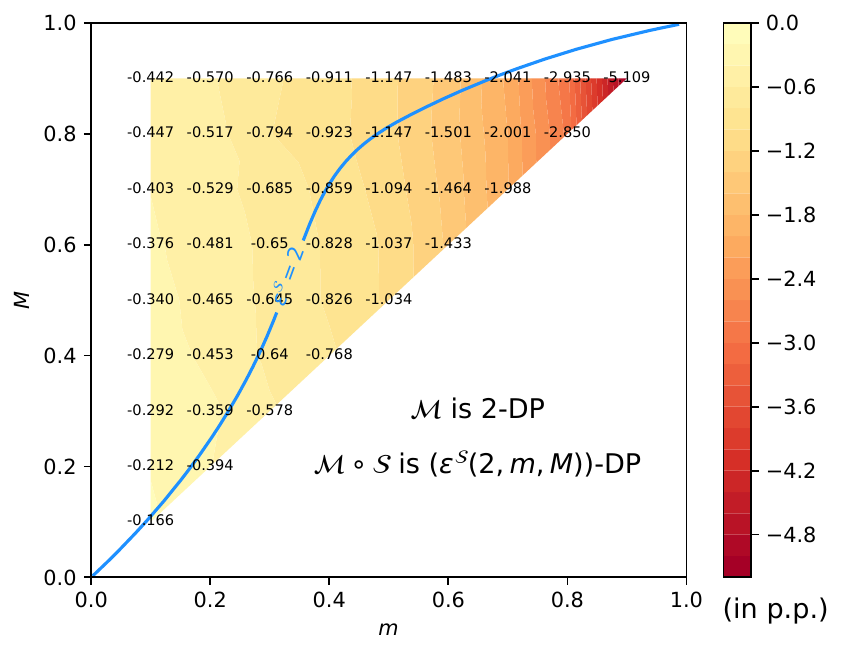}%
    \caption{The mean percent error (MPE) of $\M$ minus that of $\M\circ\S$ without the noise reduction. Results shown for the NoisyAverage with Laplace mechanisms over the \texttt{FICA} column in the Census database.}
    \label{fig:Experiment1-NoisyAverage-Laplace-Census-FICA}
\end{figure}

\begin{figure}[H]
    \includegraphics[width=0.25\textwidth]{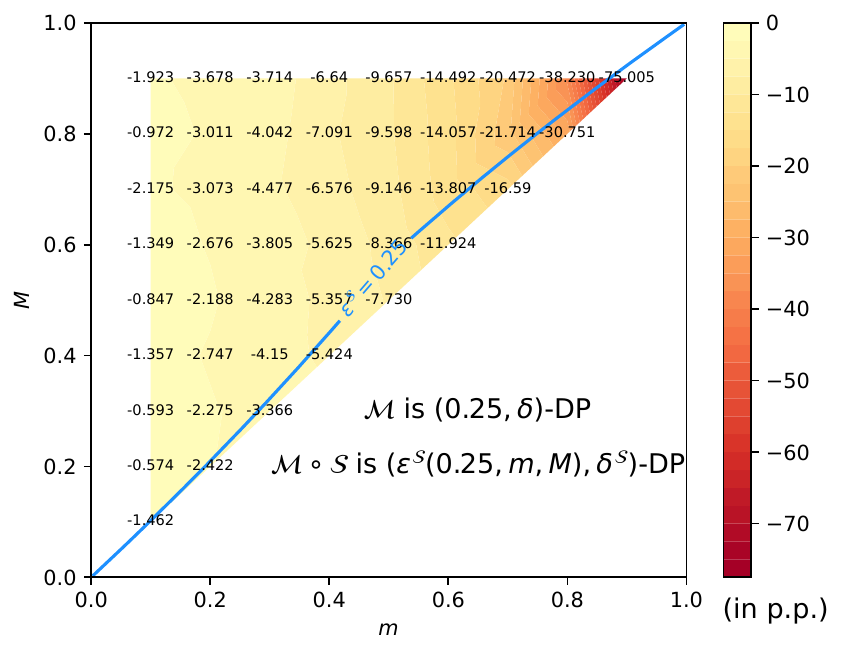}%
    \includegraphics[width=0.25\textwidth]{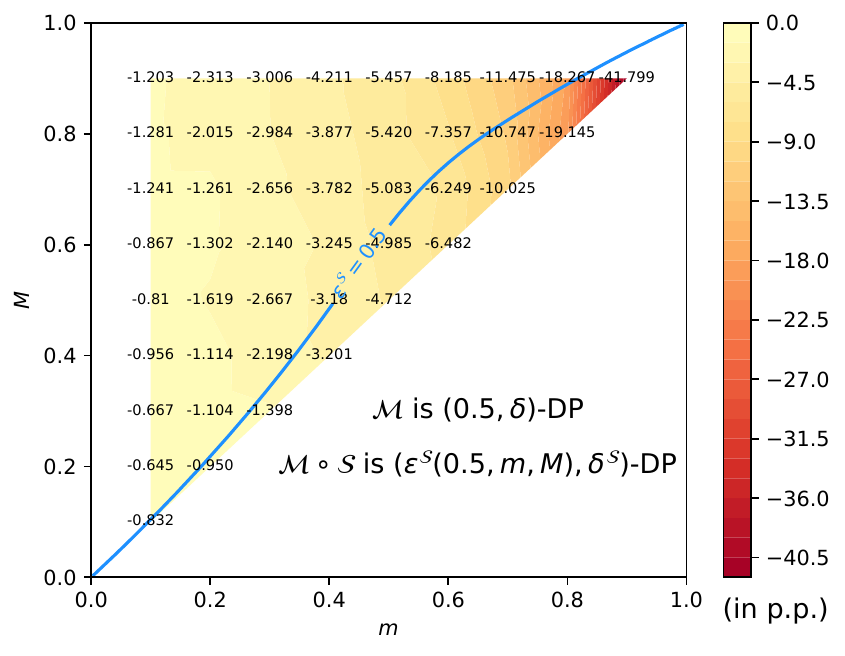}%
    \includegraphics[width=0.25\textwidth]{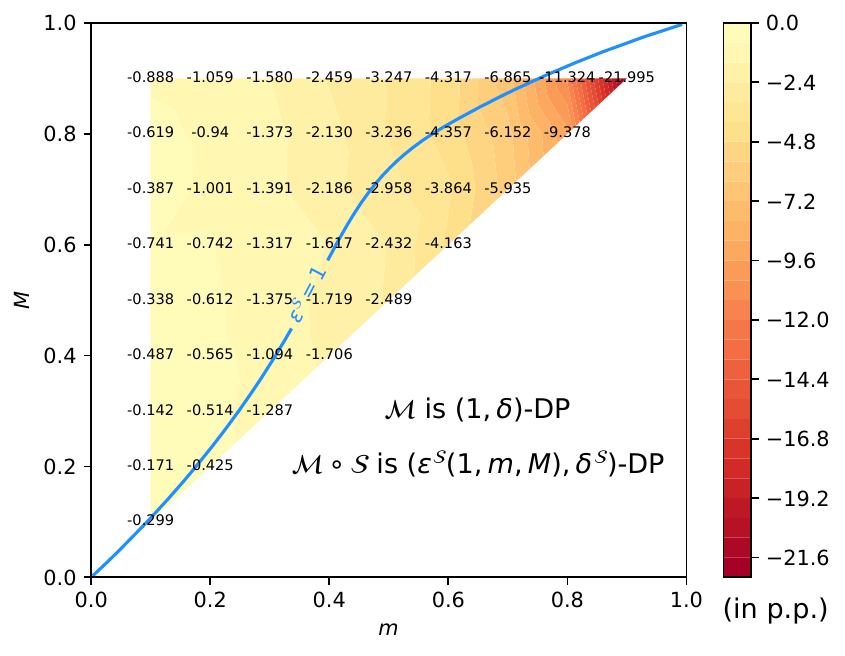}%
    \includegraphics[width=0.25\textwidth]{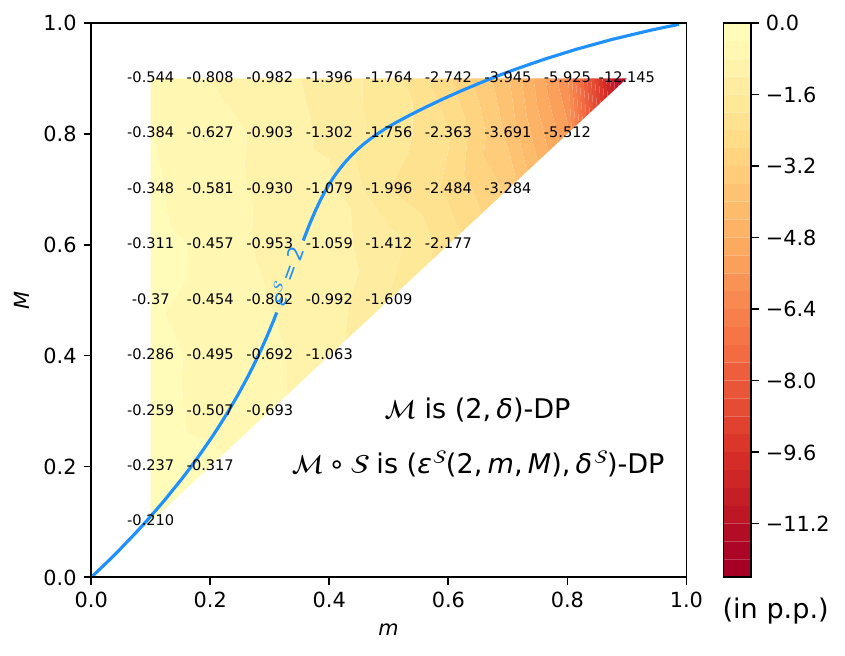}%
    \caption{The mean percent error (MPE) of $\M$ minus that of $\M\circ\S$ without the noise reduction. Results shown for the NoisyAverage with Gaussian mechanisms over the \texttt{FICA} column in the Census database.}
    \label{fig:Experiment1-NoisyAverage-Gaussian-Census-FICA}
\end{figure}

\begin{figure}[H]
    \includegraphics[width=0.25\textwidth]{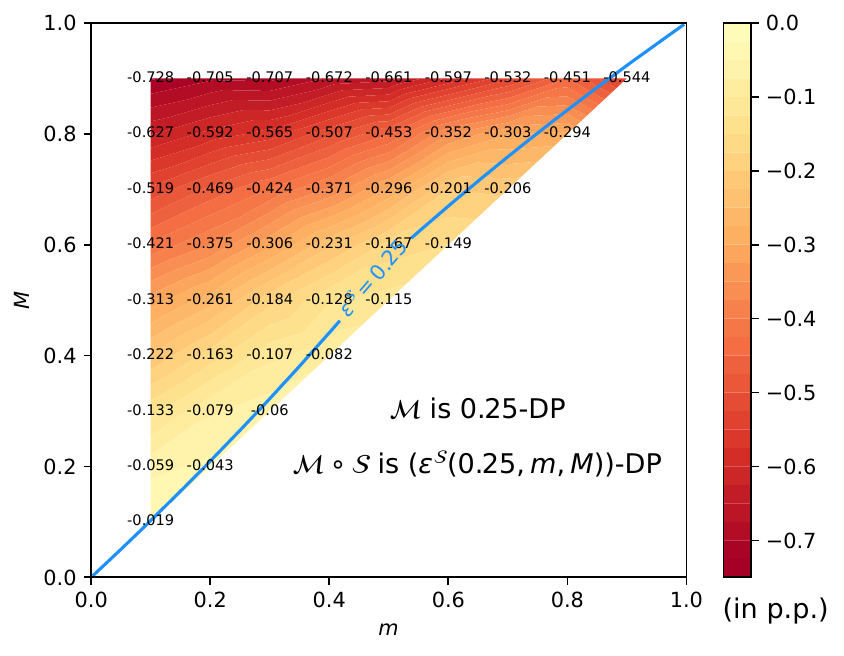}%
    \includegraphics[width=0.25\textwidth]{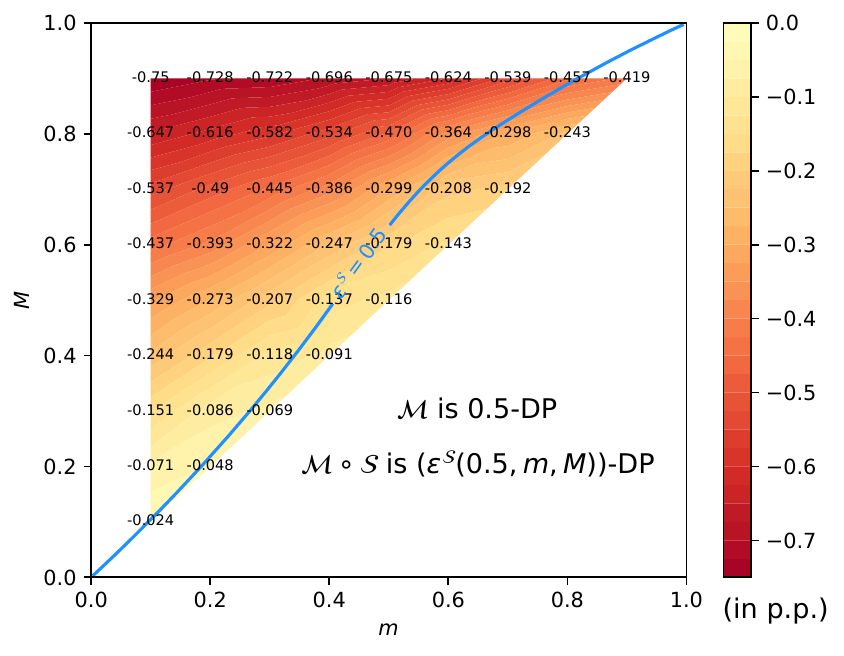}%
    \includegraphics[width=0.25\textwidth]{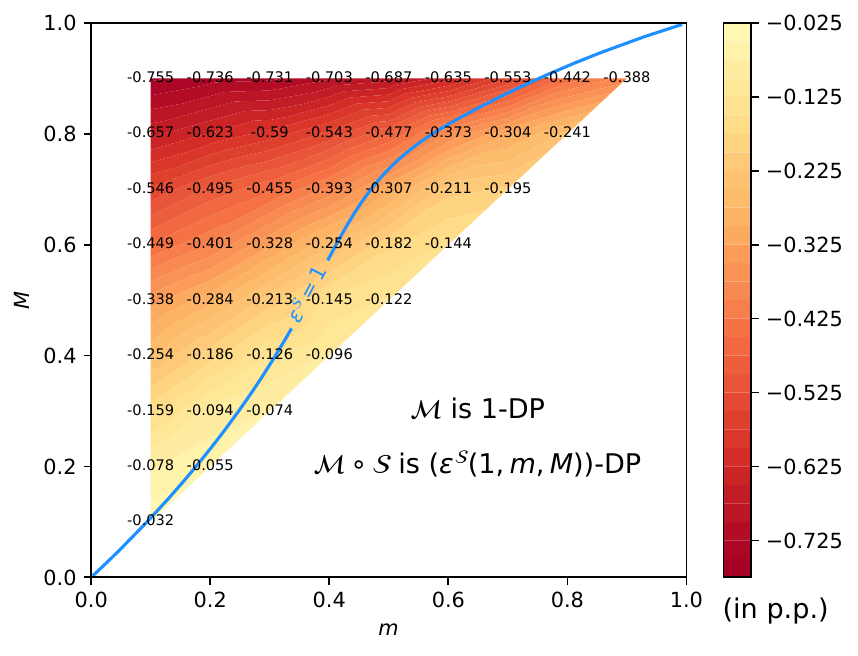}%
    \includegraphics[width=0.25\textwidth]{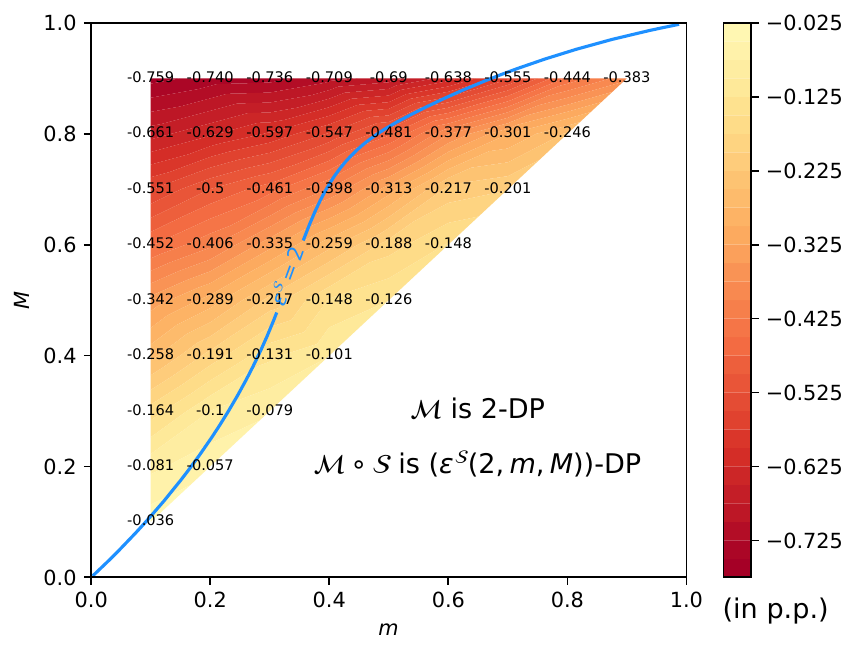}%
    \caption{The mean percent error (MPE) of $\M$ minus that of $\M\circ\S$ without the noise reduction. Results shown for the NoisyAverage with Laplace mechanisms over the \texttt{Age} column in the Irish database.}
    \label{fig:Experiment1-NoisyAverage-Laplace-Irishn-Age}
\end{figure}

\begin{figure}[H]
    \includegraphics[width=0.25\textwidth]{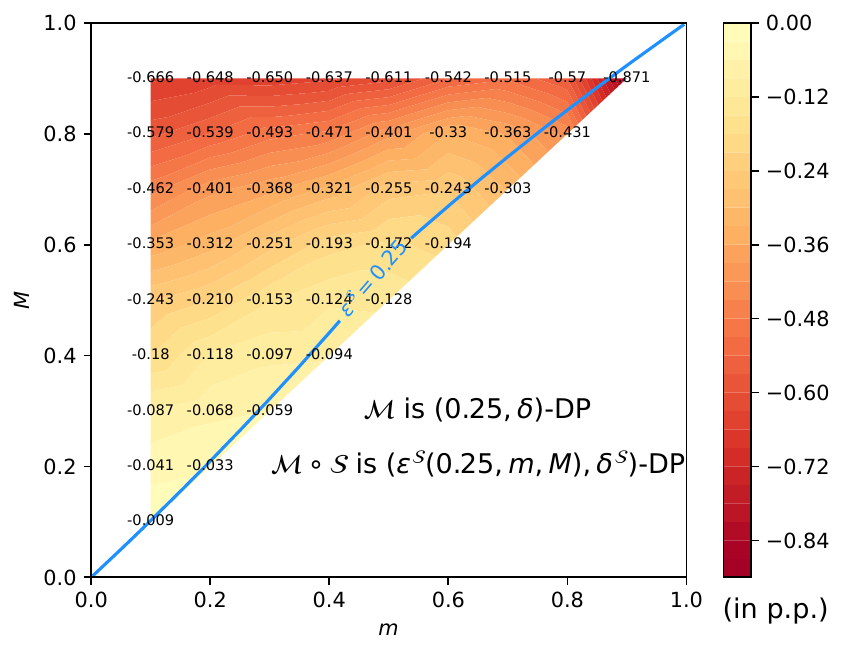}%
    \includegraphics[width=0.25\textwidth]{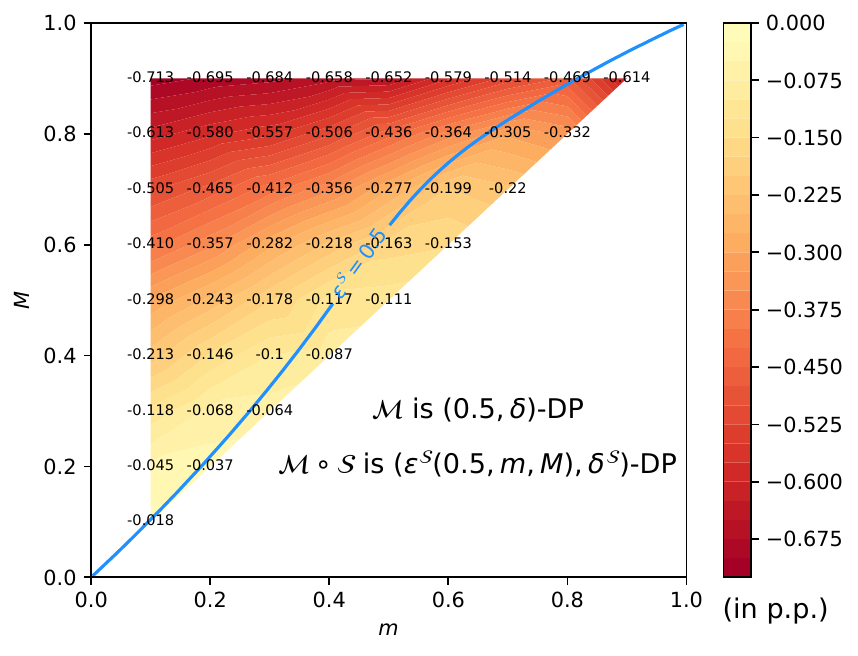}%
    \includegraphics[width=0.25\textwidth]{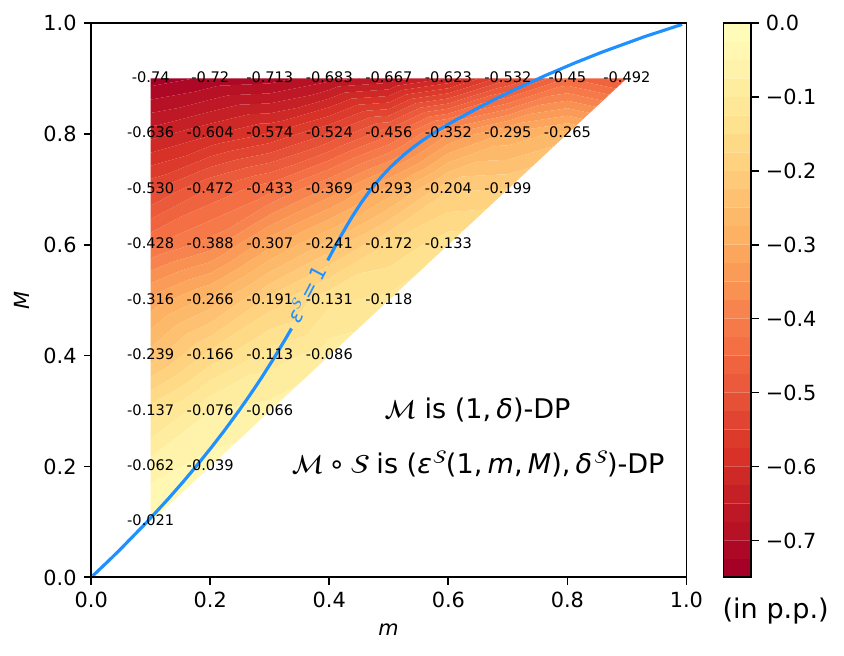}%
    \includegraphics[width=0.25\textwidth]{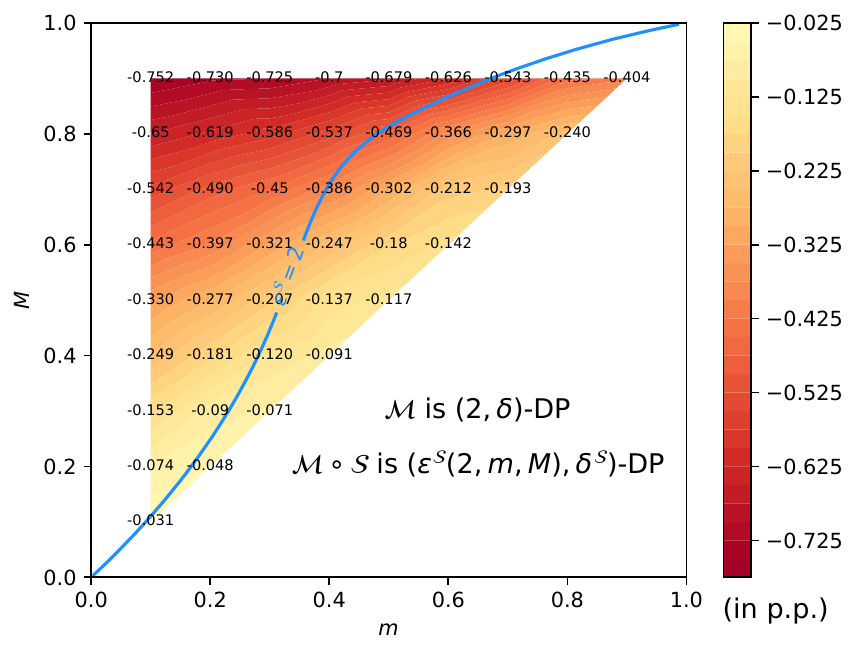}%
    \caption{The mean percent error (MPE) of $\M$ minus that of $\M\circ\S$ without the noise reduction. Results shown for the NoisyAverage with Gaussian mechanisms over the \texttt{Age} column in the Irish database.}
    \label{fig:Experiment1-NoisyAverage-Gaussian-Irishn-Age}
\end{figure}

\begin{figure}[H]
    \includegraphics[width=0.25\textwidth]{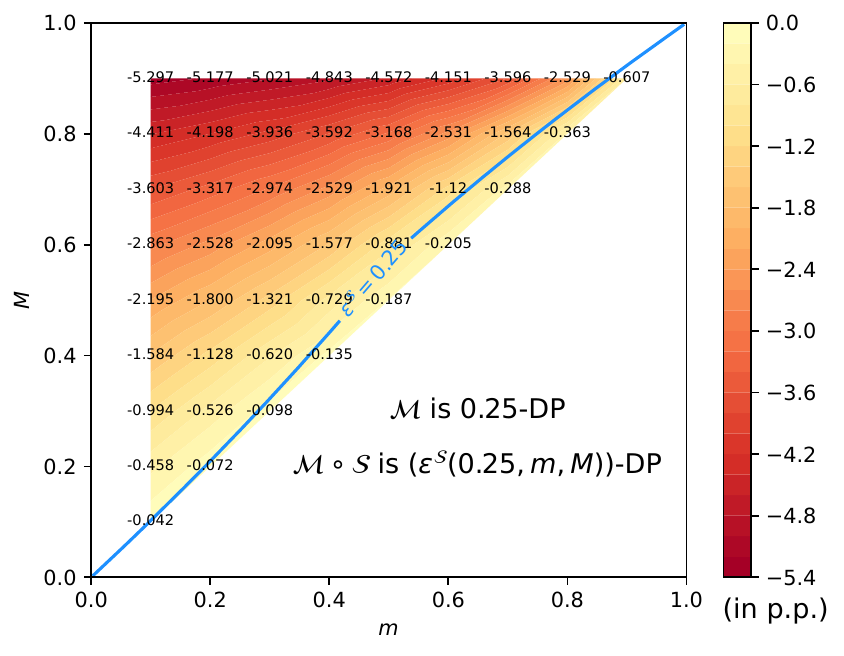}%
    \includegraphics[width=0.25\textwidth]{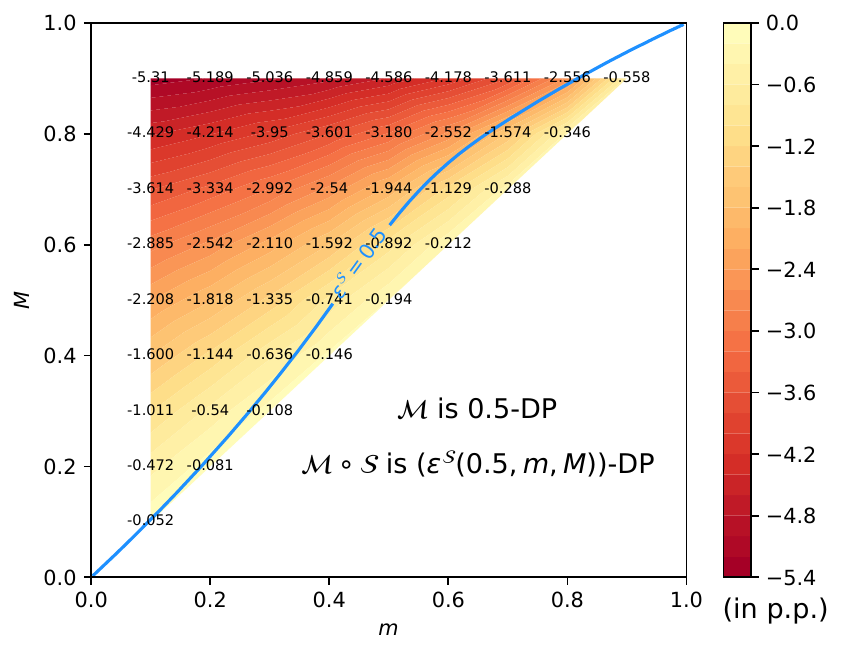}%
    \includegraphics[width=0.25\textwidth]{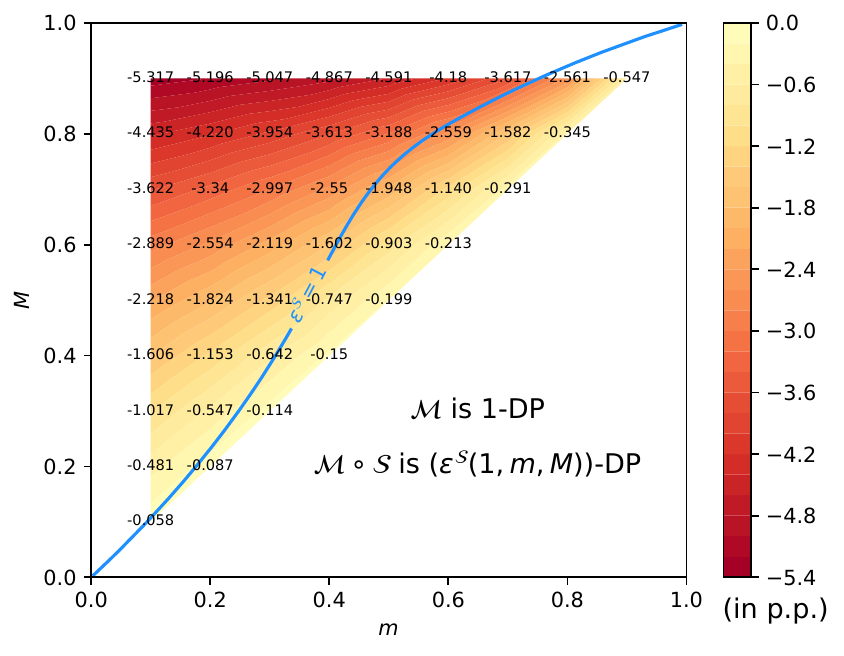}%
    \includegraphics[width=0.25\textwidth]{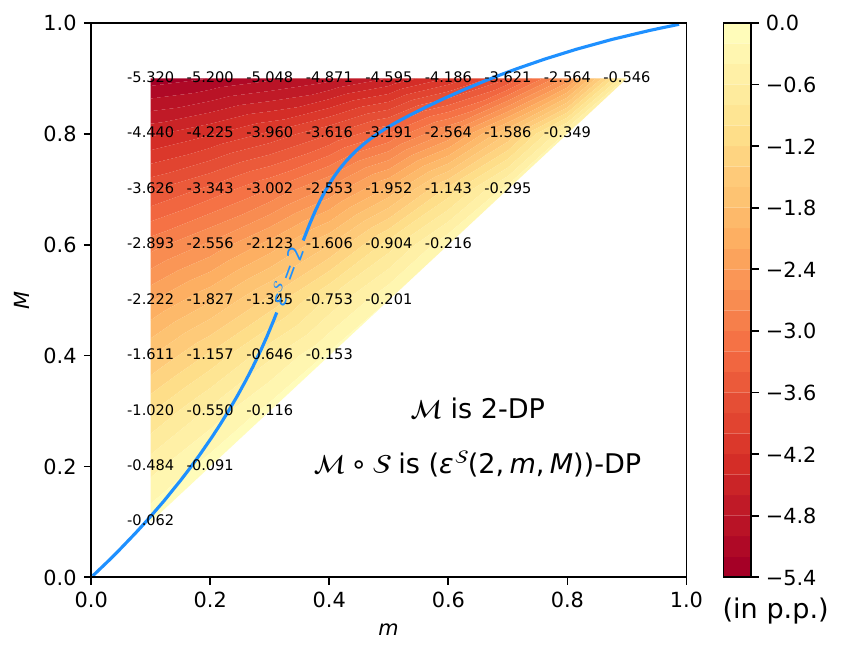}%
    \caption{The mean percent error (MPE) of $\M$ minus that of $\M\circ\S$ without the noise reduction. Results shown for the NoisyAverage with Laplace mechanisms over the \texttt{HighestEducationCompleted} column in the Irish database.}
    \label{fig:Experiment1-NoisyAverage-Laplace-Irishn-HighestEducationCompleted}
\end{figure}

\begin{figure}[H]
    \includegraphics[width=0.25\textwidth]{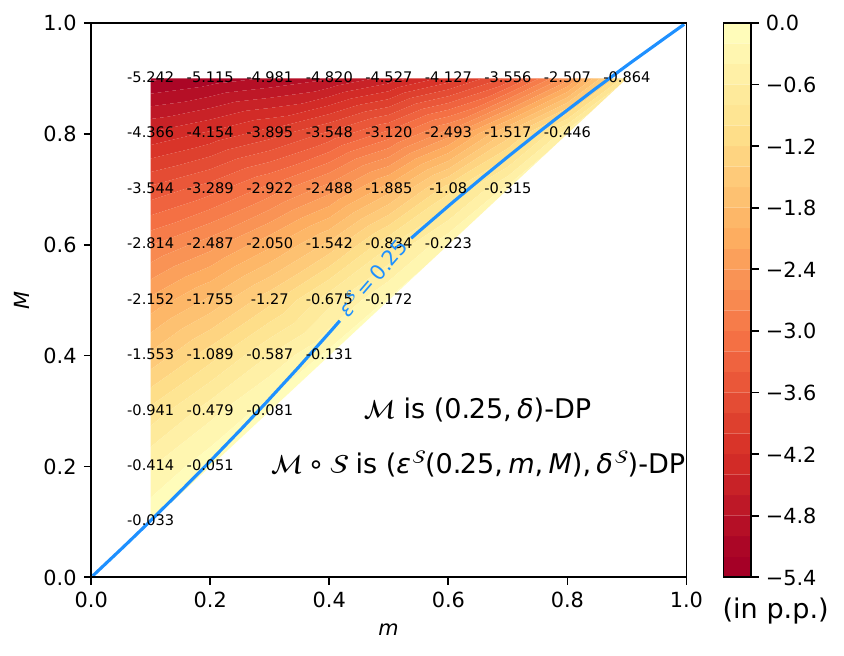}%
    \includegraphics[width=0.25\textwidth]{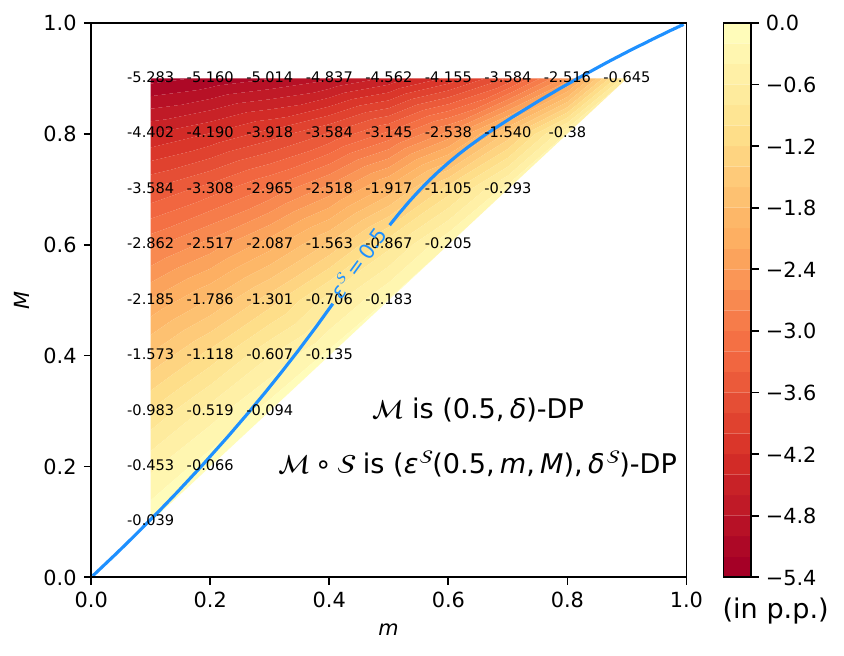}%
    \includegraphics[width=0.25\textwidth]{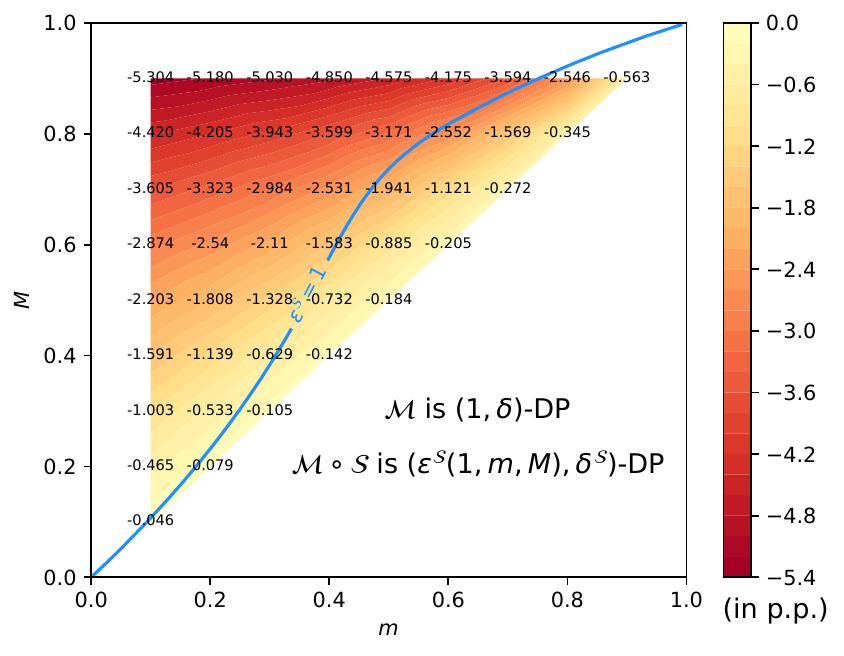}%
    \includegraphics[width=0.25\textwidth]{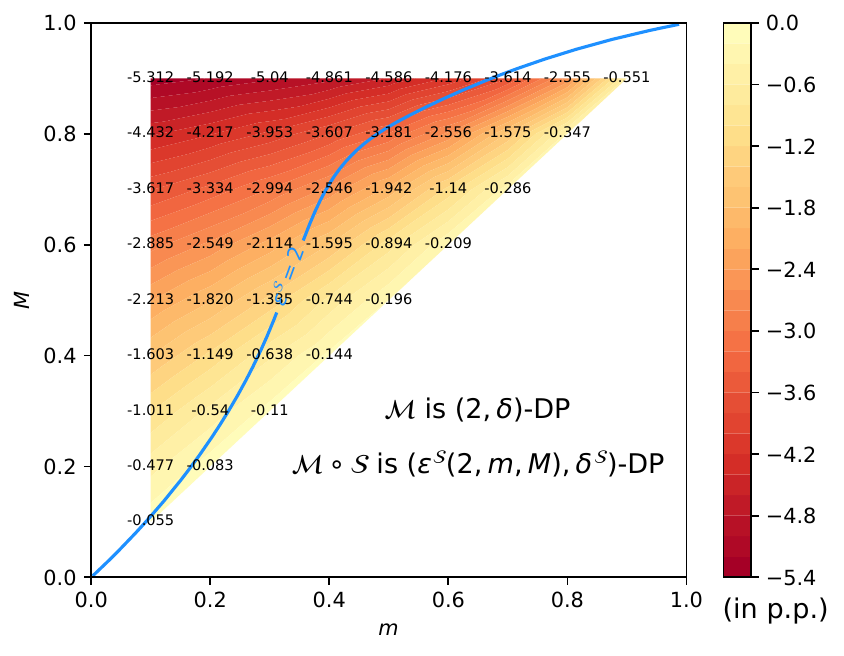}%
    \caption{The mean percent error (MPE) of $\M$ minus that of $\M\circ\S$ without the noise reduction. Results shown for the NoisyAverage with Gaussian mechanisms over the \texttt{HighestEducationCompleted} column in the Irish database.}
    \label{fig:Experiment1-NoisyAverage-Gaussian-Irishn-HighestEducationCompleted}
\end{figure}

\subsection{Plots of the Utility Difference between the Mechanisms \textit{without} the Noise Reduction for the Mode Computation}\label{sec:plots:SuppressionwithoutEpsDeltaChange2}

\begin{figure}[H]
    \includegraphics[width=0.25\textwidth]{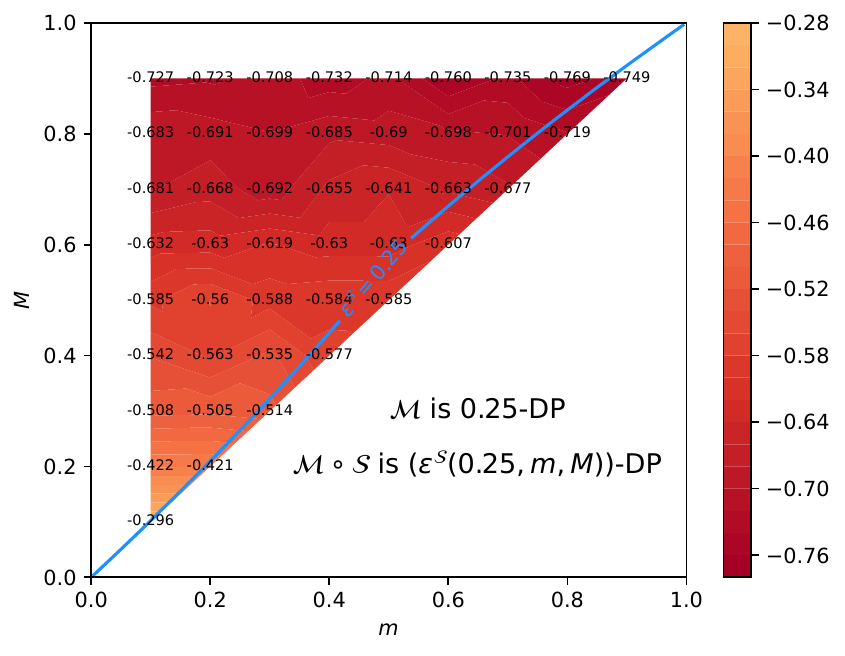}%
    \includegraphics[width=0.25\textwidth]{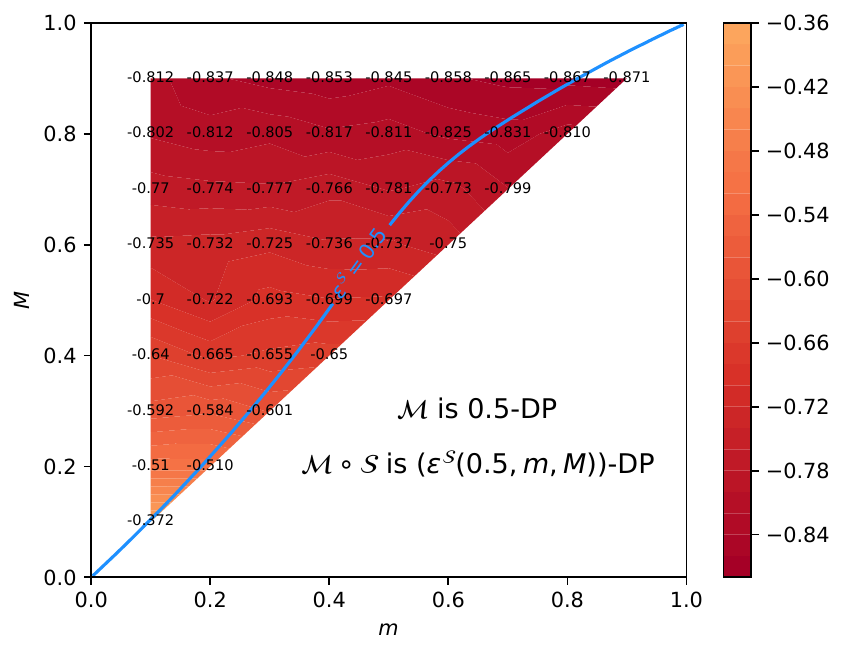}%
    \includegraphics[width=0.25\textwidth]{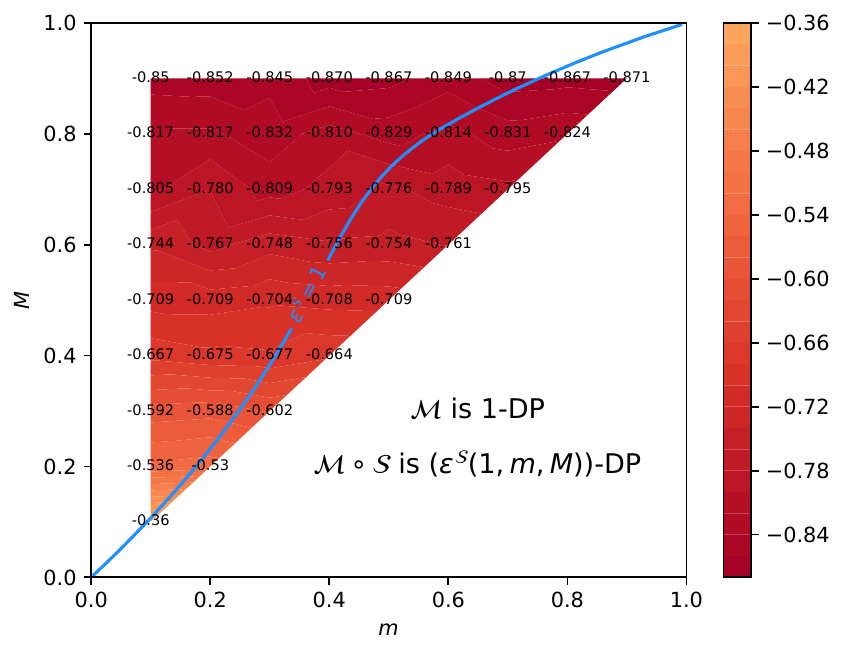}%
    \includegraphics[width=0.25\textwidth]{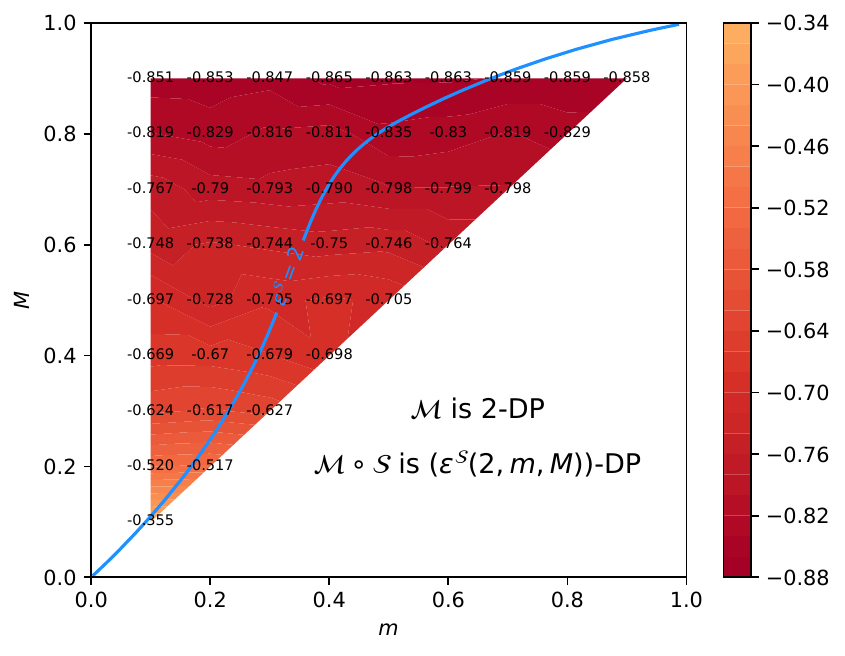}%
    \caption{The probability of outputting an incorrect mode of $\M$ minus that of $\M\circ\S$ without the noise reduction. Results shown for the RNM with Laplace noise over the \texttt{age} column in the Adult database.}
    \label{fig:Experiment1-RNM-Laplace-Adult-age}
\end{figure}

\begin{figure}[H]
    \includegraphics[width=0.25\textwidth]{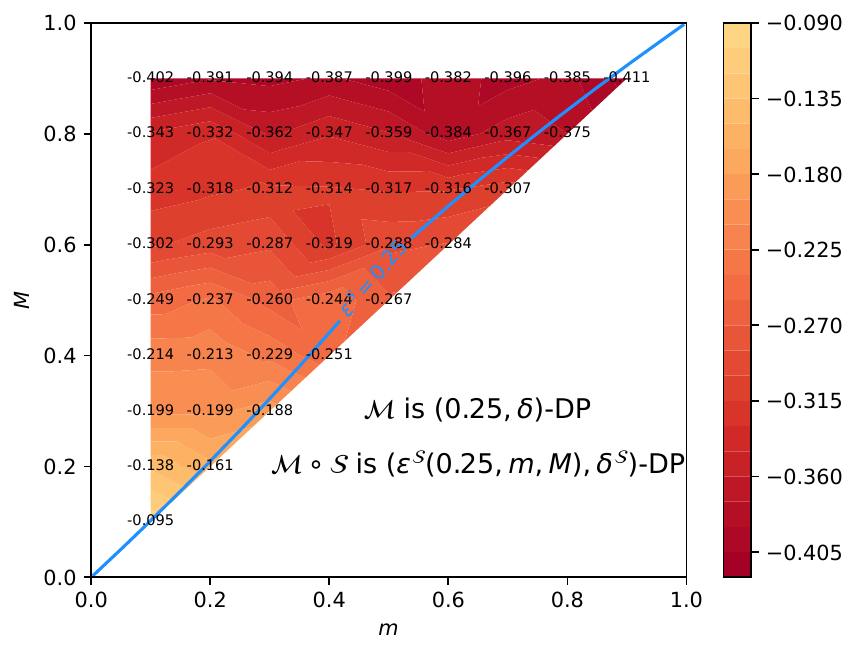}%
    \includegraphics[width=0.25\textwidth]{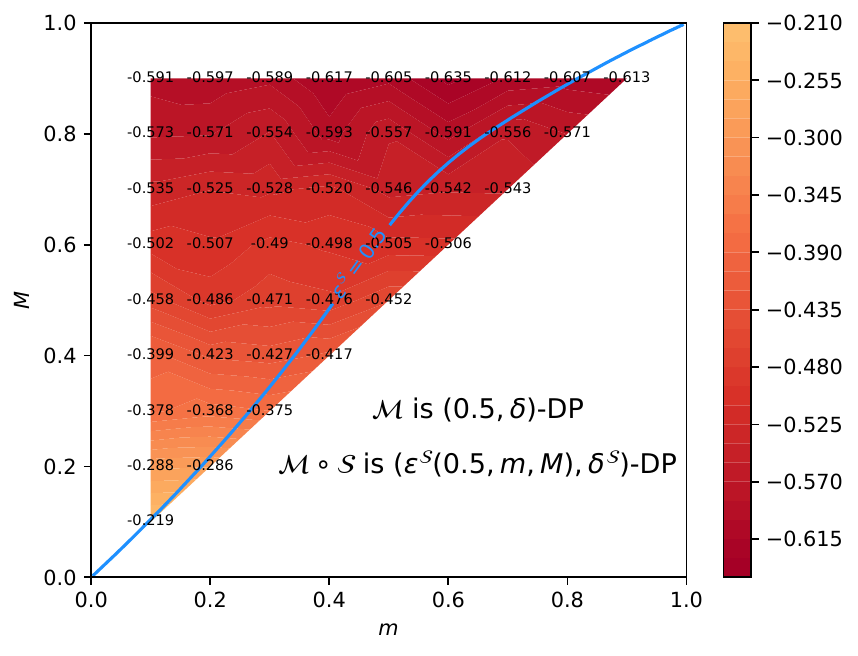}%
    \includegraphics[width=0.25\textwidth]{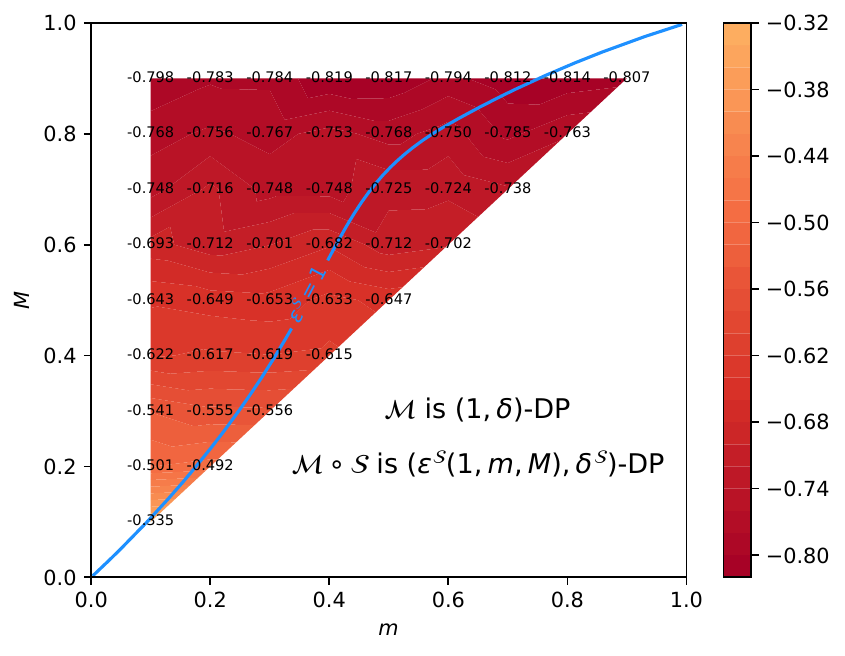}%
    \includegraphics[width=0.25\textwidth]{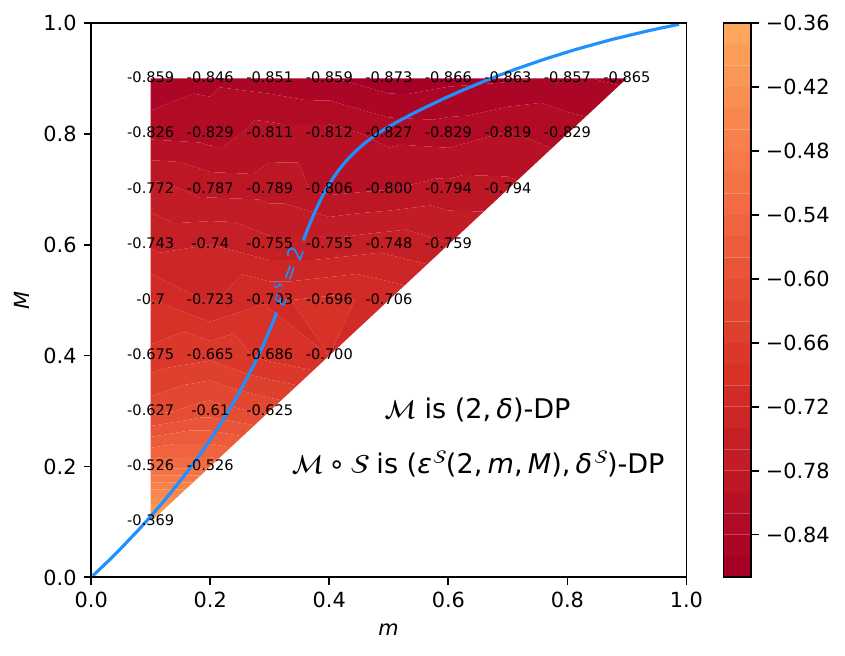}%
    \caption{The probability of outputting an incorrect mode of $\M$ minus that of $\M\circ\S$ without the noise reduction. Results shown for the RNM-like variant with Gaussian noise over the \texttt{age} column in the Adult database.}
    \label{fig:Experiment1-RNM-Gaussian-Adult-age}
\end{figure}

\begin{figure}[H]
    \includegraphics[width=0.25\textwidth]{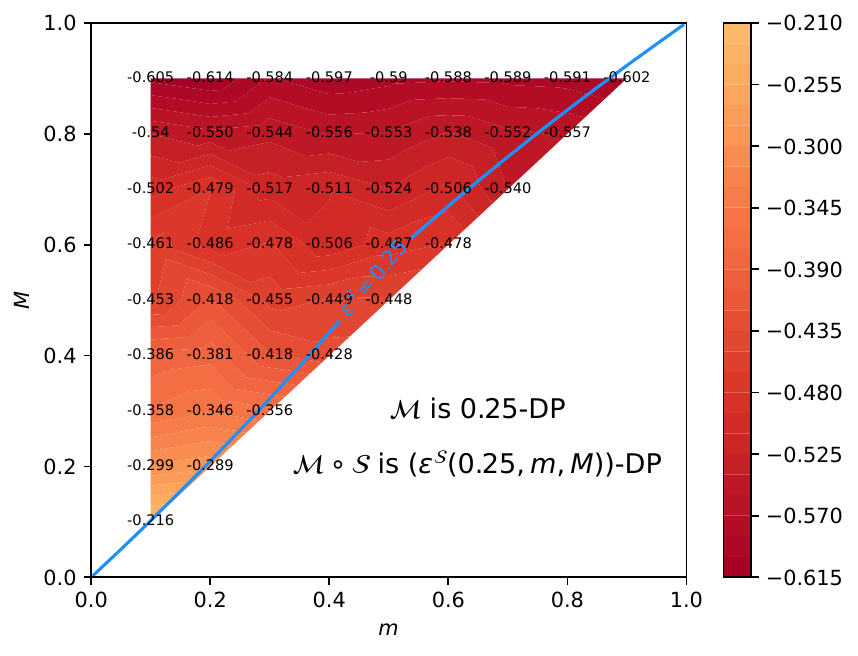}%
    \includegraphics[width=0.25\textwidth]{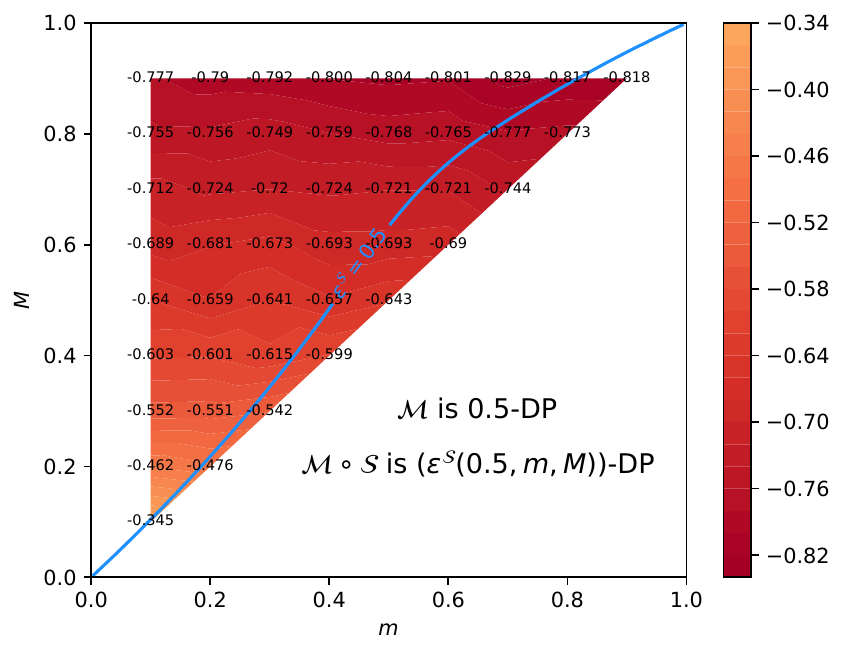}%
    \includegraphics[width=0.25\textwidth]{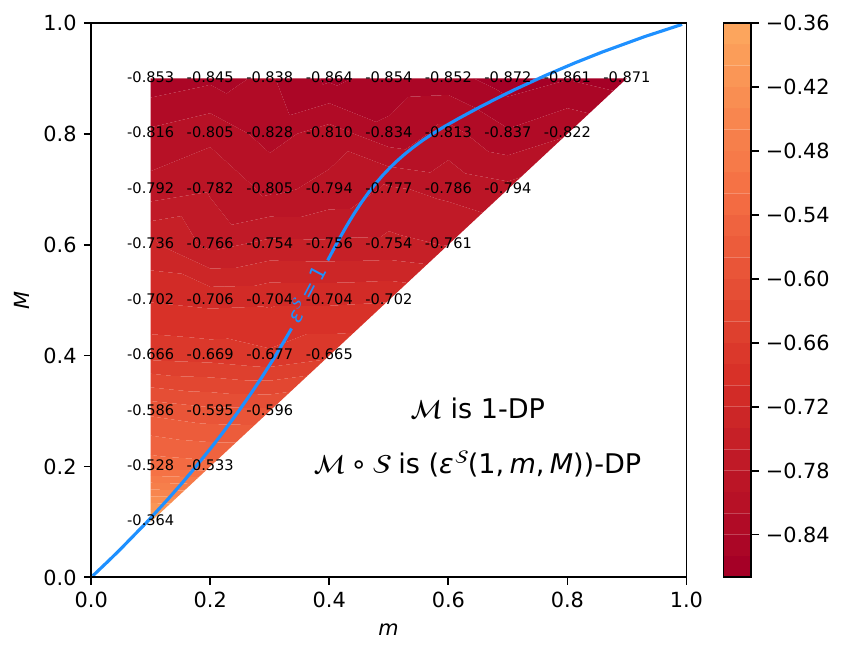}%
    \includegraphics[width=0.25\textwidth]{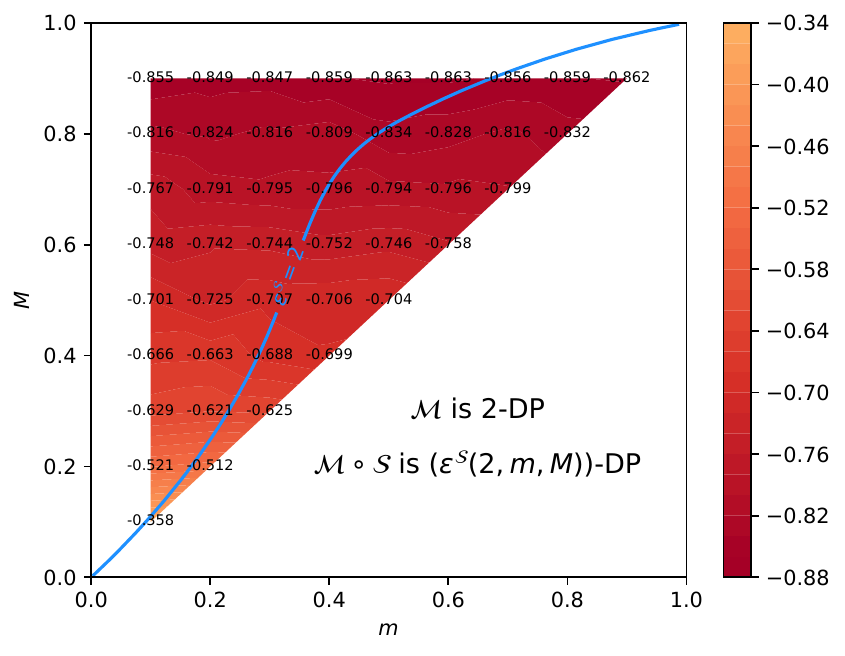}%
    \caption{The probability of outputting an incorrect mode of $\M$ minus that of $\M\circ\S$ without the noise reduction. Results shown for the RNM with exponential noise over the \texttt{age} column in the Adult database.}
    \label{fig:Experiment1-RNM-Exponential-Adult-age}
\end{figure}

\begin{figure}[H]
    \includegraphics[width=0.25\textwidth]{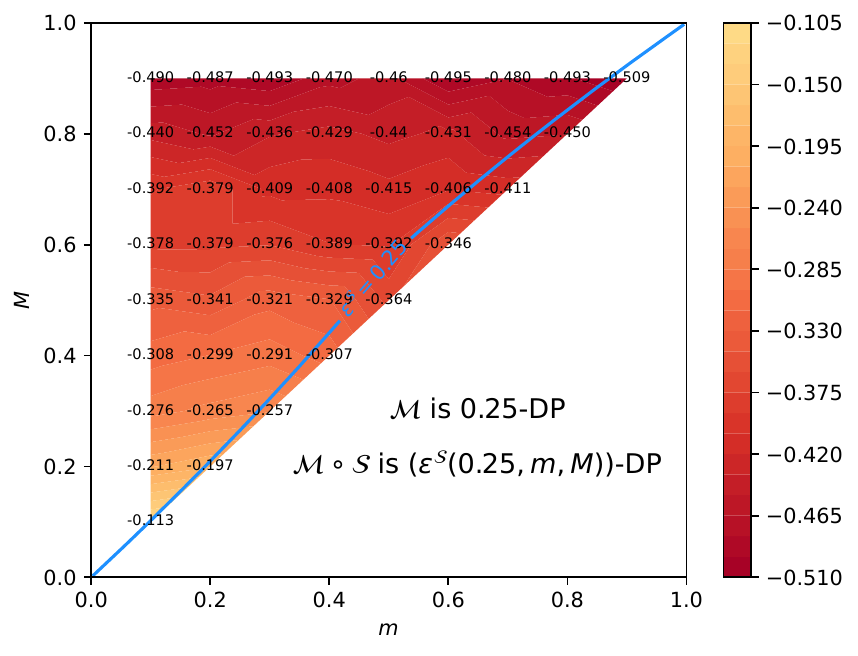}%
    \includegraphics[width=0.25\textwidth]{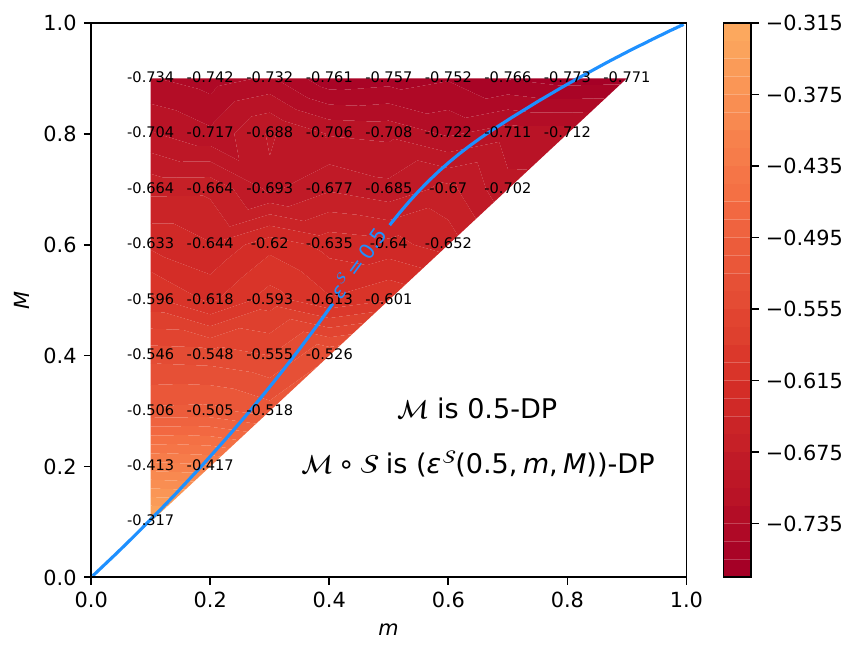}%
    \includegraphics[width=0.25\textwidth]{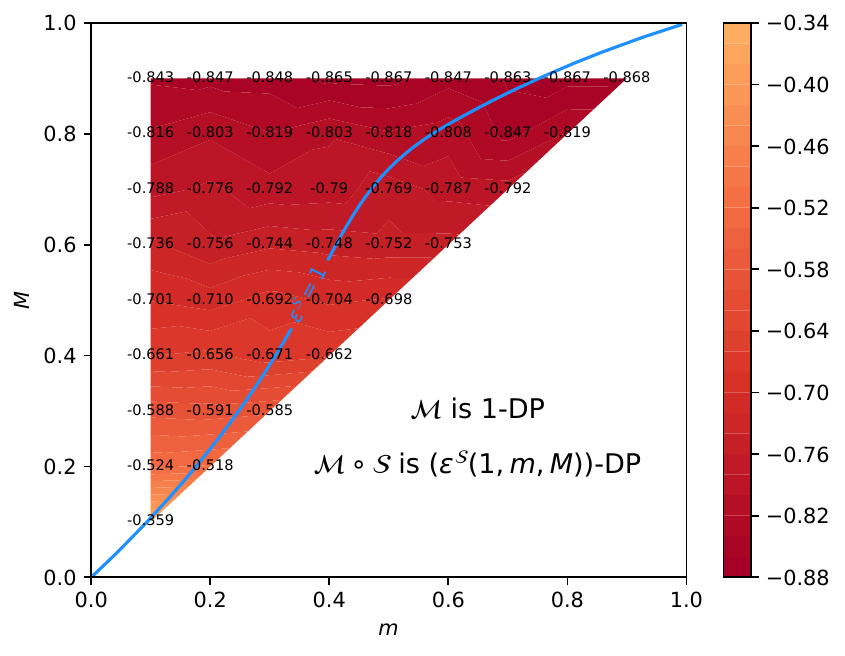}%
    \includegraphics[width=0.25\textwidth]{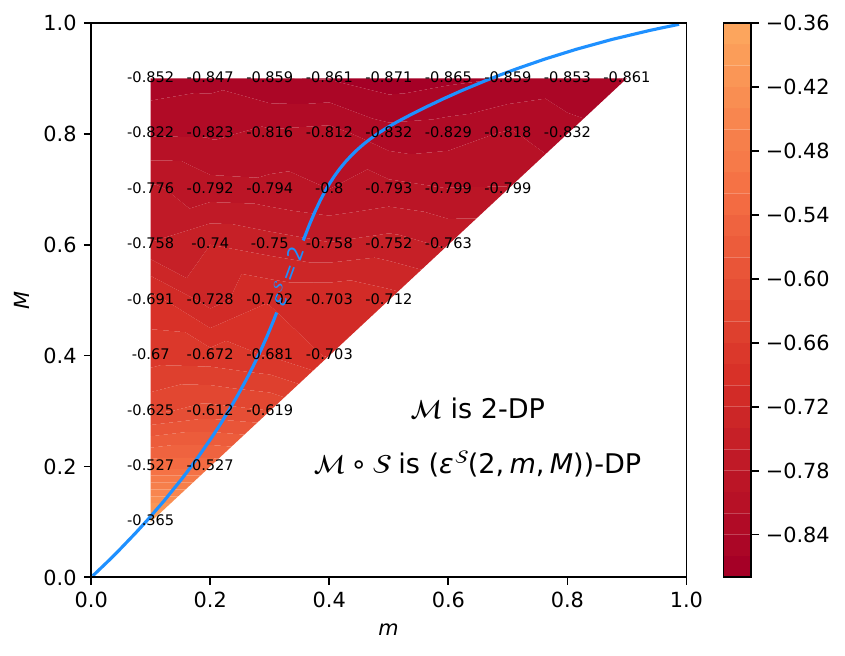}%
    \caption{The probability of outputting an incorrect mode of $\M$ minus that of $\M\circ\S$ without the noise reduction. Results shown for the exponential mechanism over the \texttt{age} column in the Adult database.}
    \label{fig:Experiment1-RNM-ExponentialMechanism-Adult-age}
\end{figure}

\begin{figure}[H]
    \includegraphics[width=0.25\textwidth]{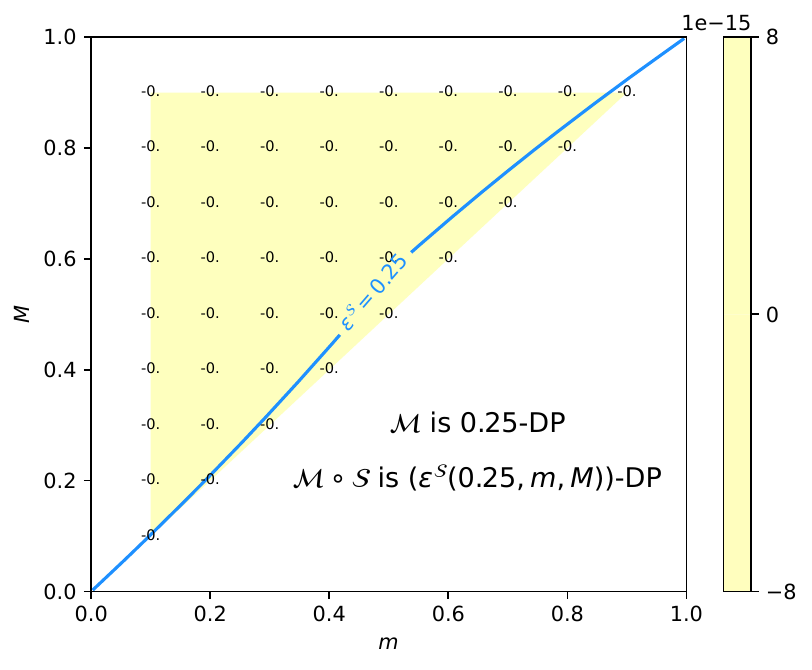}%
    \includegraphics[width=0.25\textwidth]{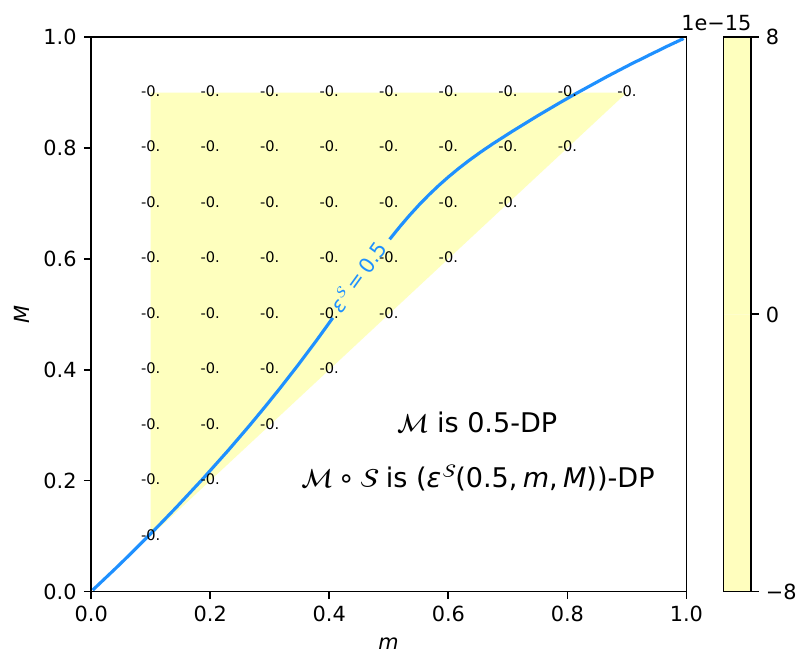}%
    \includegraphics[width=0.25\textwidth]{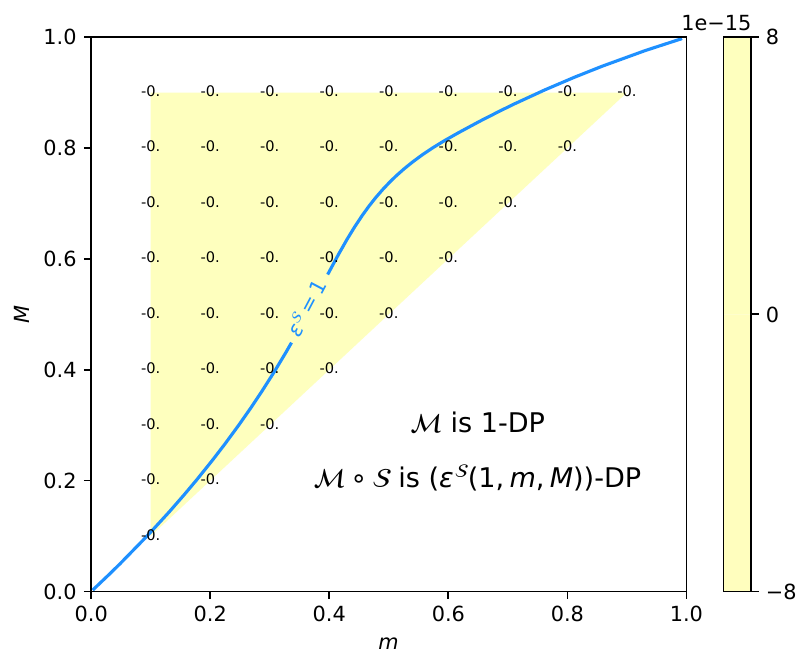}%
    \includegraphics[width=0.25\textwidth]{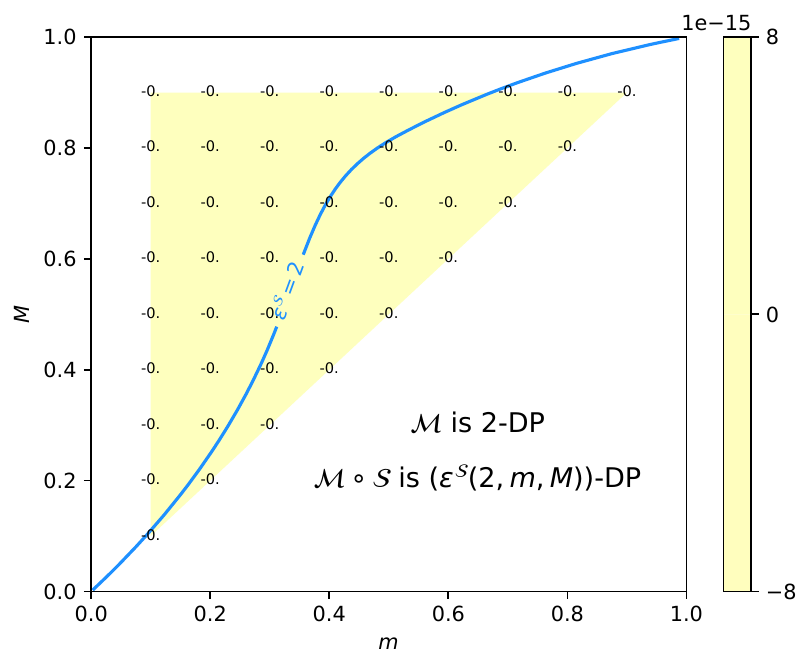}%
    \caption{The probability of outputting an incorrect mode of $\M$ minus that of $\M\circ\S$ without the noise reduction. Results shown for the RNM with Laplace noise over the \texttt{hours-per-week} column in the Adult database.}
    \label{fig:Experiment1-RNM-Laplace-Adult-hours-per-week}
\end{figure}

\begin{figure}[H]
    \includegraphics[width=0.25\textwidth]{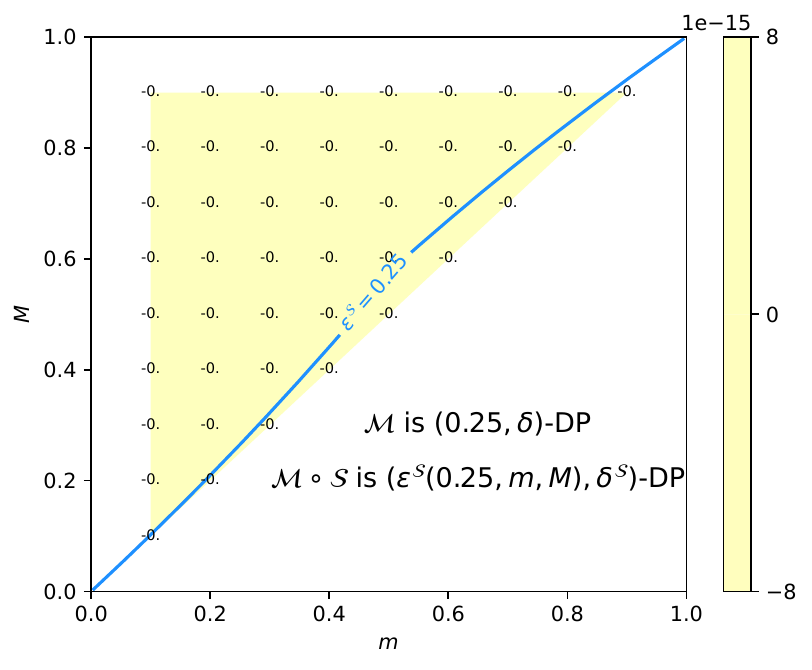}%
    \includegraphics[width=0.25\textwidth]{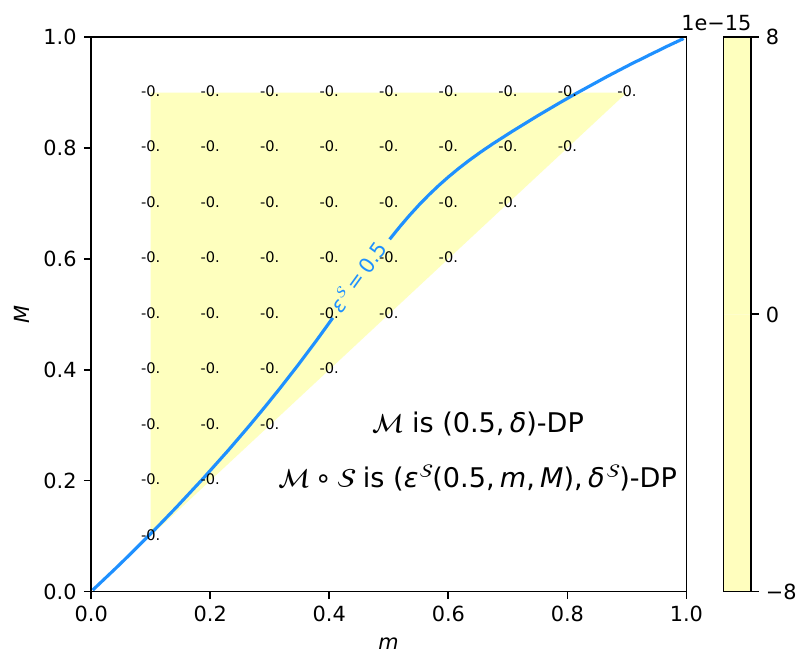}%
    \includegraphics[width=0.25\textwidth]{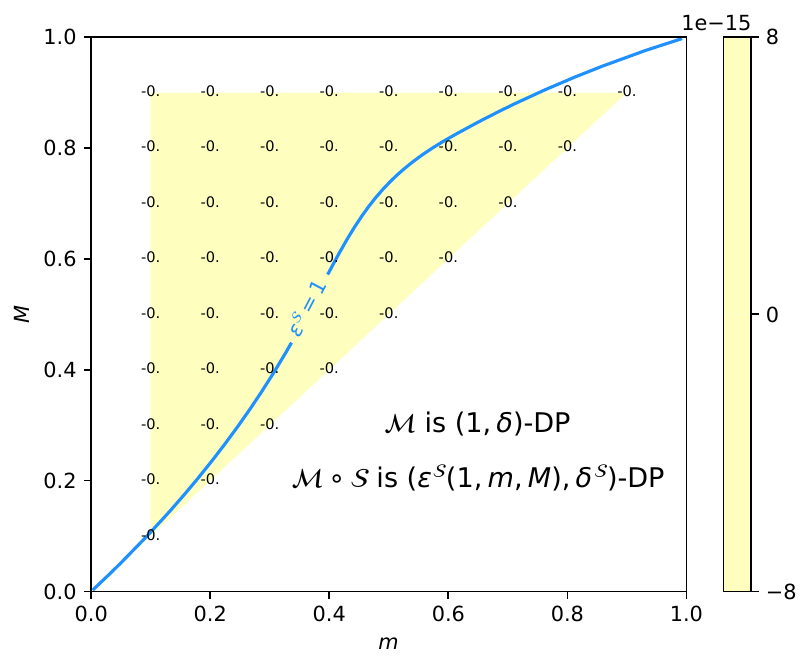}%
    \includegraphics[width=0.25\textwidth]{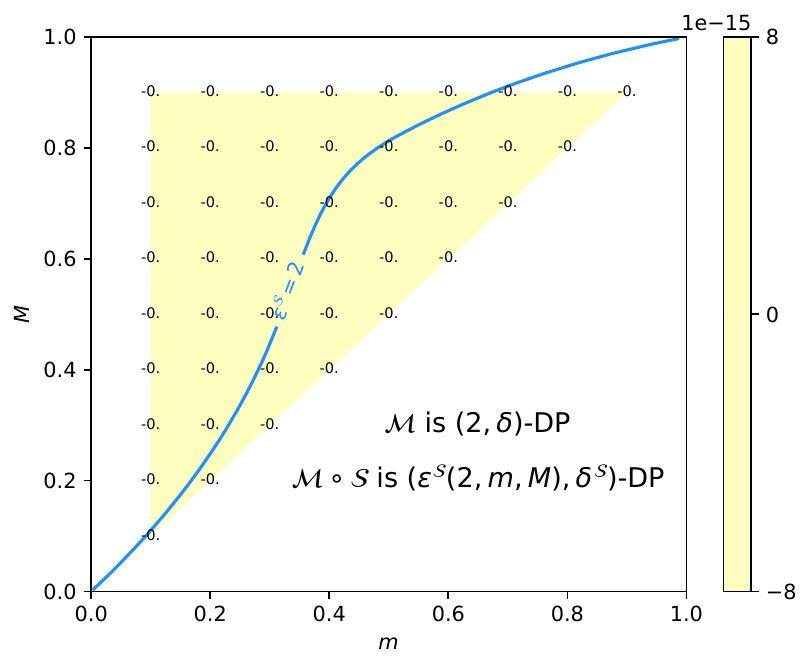}%
    \caption{The probability of outputting an incorrect mode of $\M$ minus that of $\M\circ\S$ without the noise reduction. Results shown for the RNM-like variant with Gaussian noise over the \texttt{hours-per-week} column in the Adult database.}
    \label{fig:Experiment1-RNM-Gaussian-Adult-hours-per-week}
\end{figure}

\begin{figure}[H]
    \includegraphics[width=0.25\textwidth]{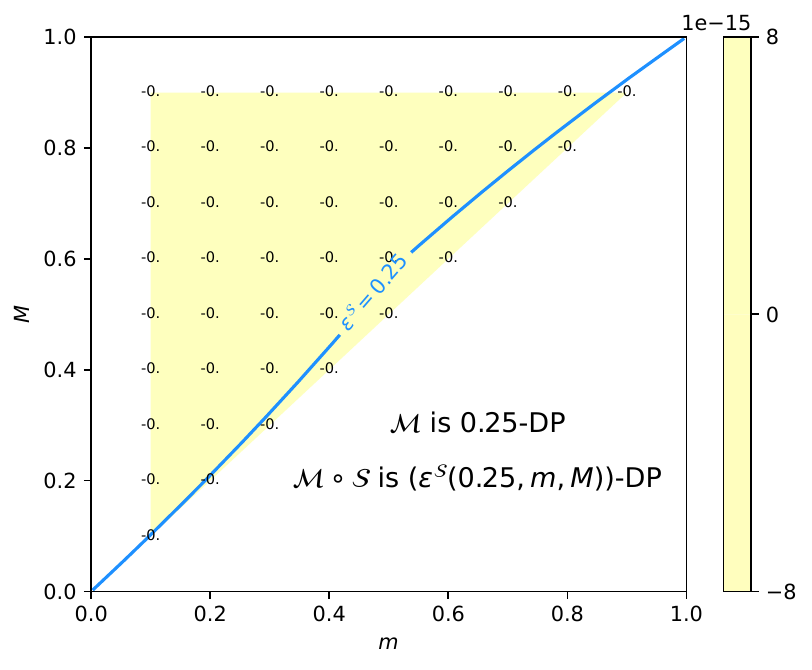}%
    \includegraphics[width=0.25\textwidth]{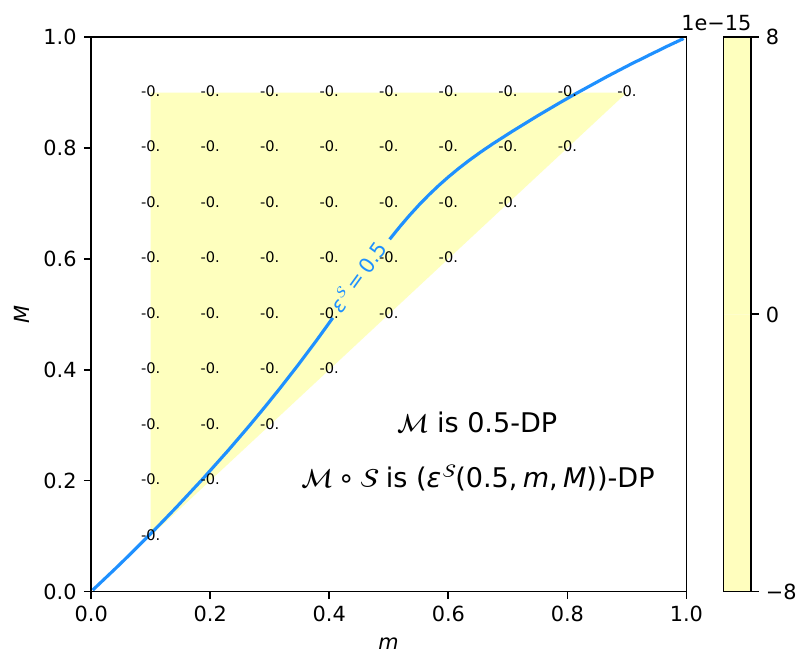}%
    \includegraphics[width=0.25\textwidth]{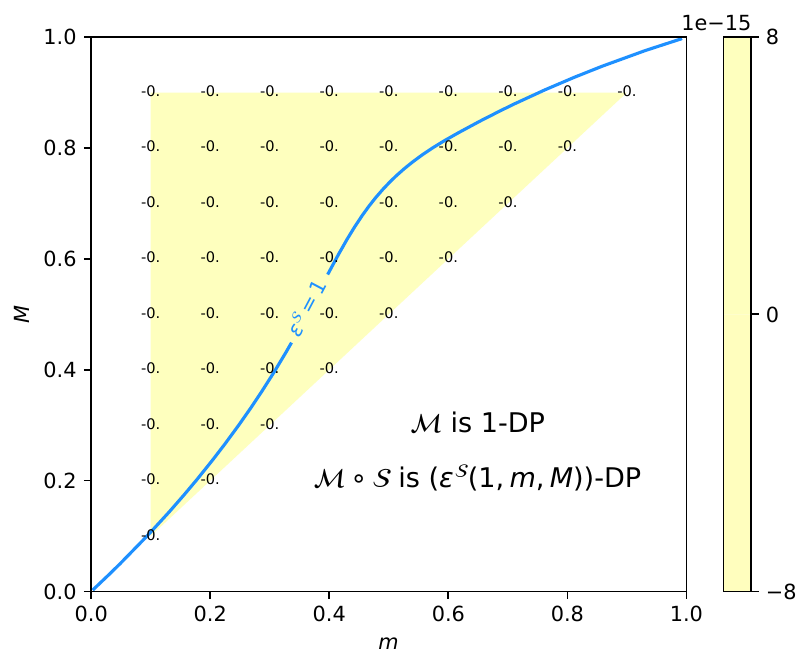}%
    \includegraphics[width=0.25\textwidth]{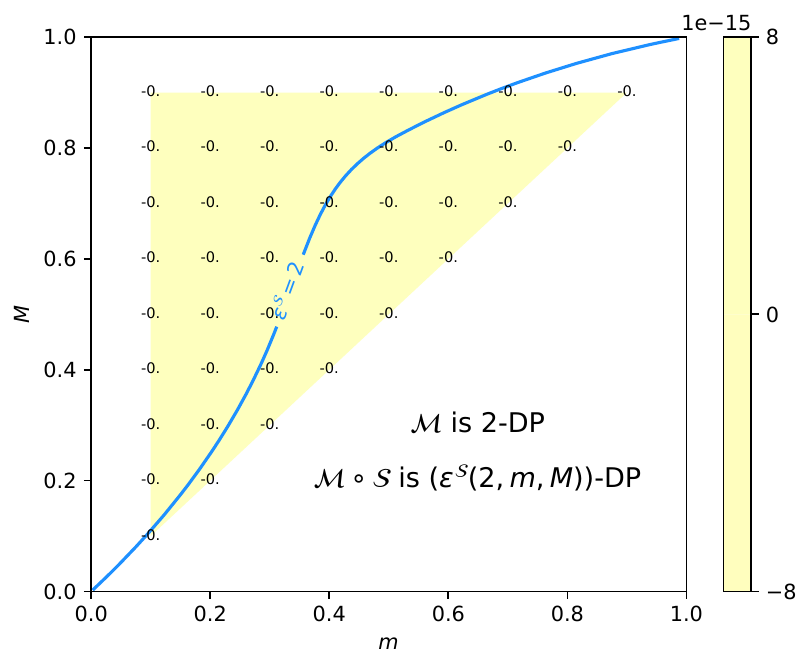}%
    \caption{The probability of outputting an incorrect mode of $\M$ minus that of $\M\circ\S$ without the noise reduction. Results shown for the RNM with exponential noise over the \texttt{hours-per-week} column in the Adult database.}
    \label{fig:Experiment1-RNM-Exponential-Adult-hours-per-week}
\end{figure}

\begin{figure}[H]
    \includegraphics[width=0.25\textwidth]{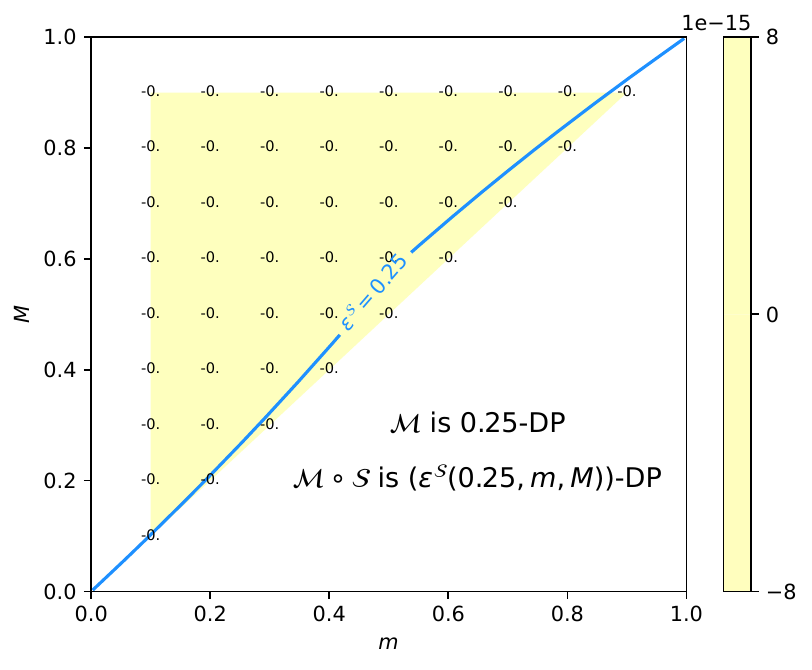}%
    \includegraphics[width=0.25\textwidth]{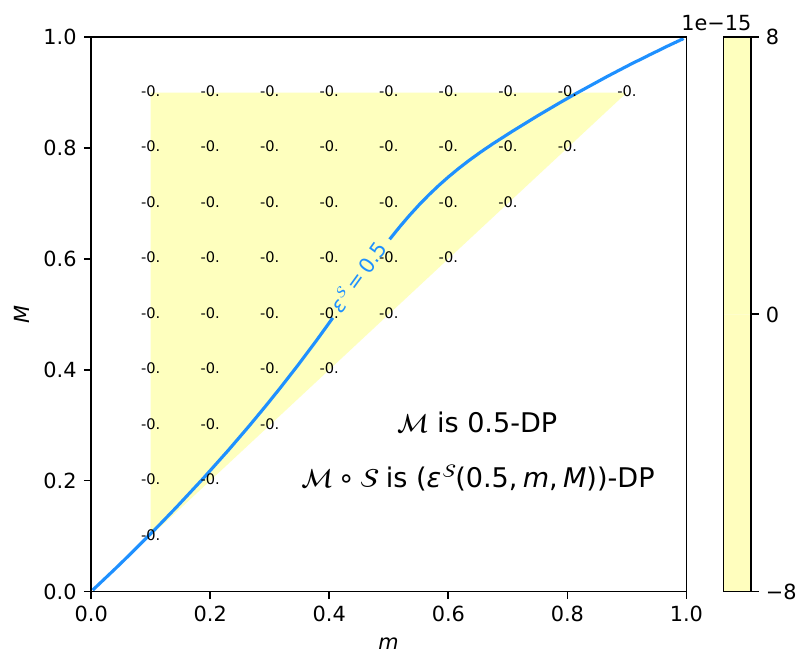}%
    \includegraphics[width=0.25\textwidth]{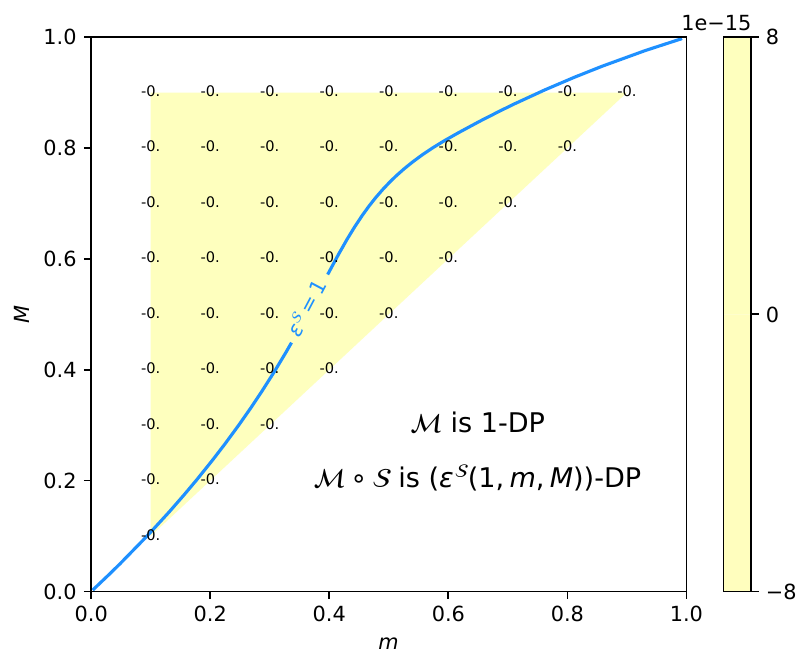}%
    \includegraphics[width=0.25\textwidth]{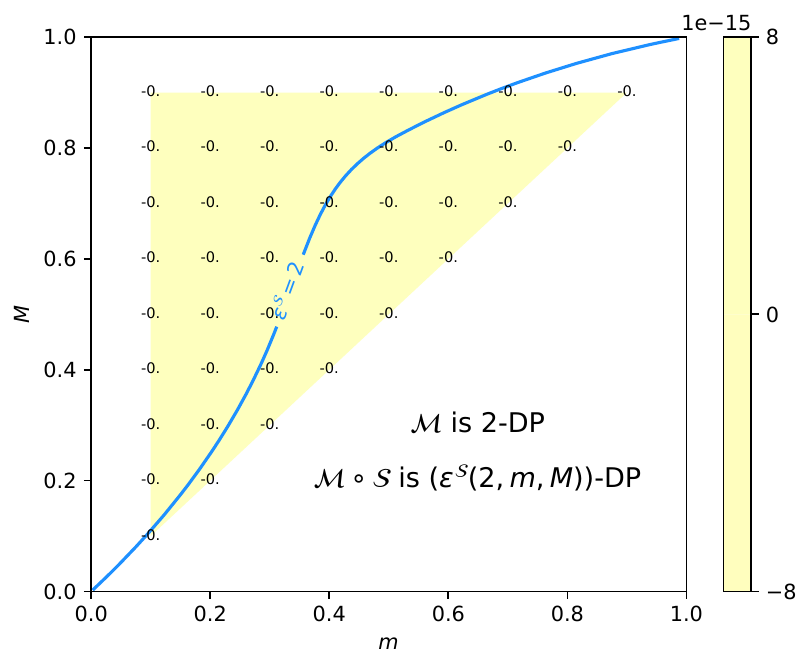}%
    \caption{The probability of outputting an incorrect mode of $\M$ minus that of $\M\circ\S$ without the noise reduction. Results shown for the exponential mechanism over the \texttt{hours-per-week} column in the Adult database.}
    \label{fig:Experiment1-RNM-ExponentialMechanism-Adult-hours-per-week}
\end{figure}

\begin{figure}[H]
    \includegraphics[width=0.25\textwidth]{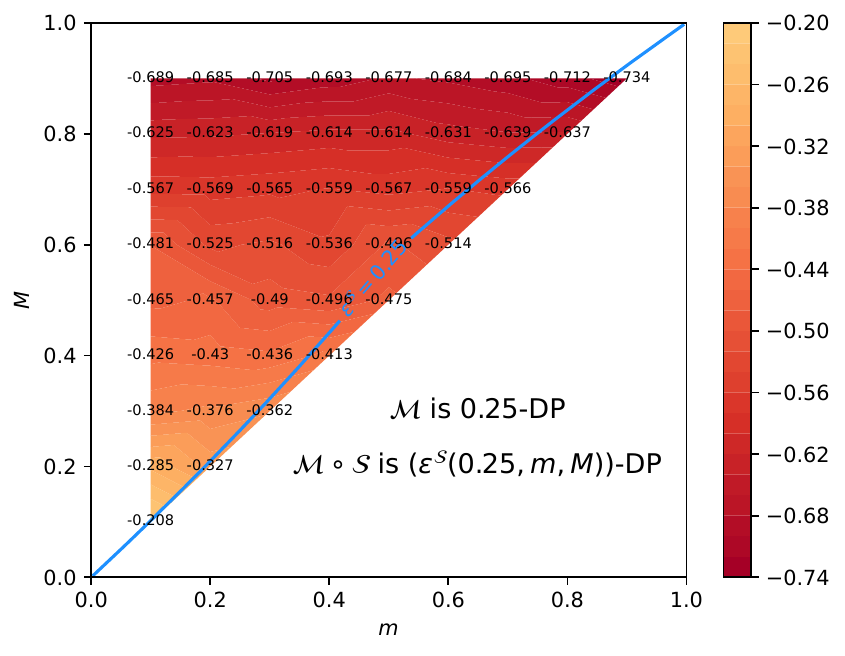}%
    \includegraphics[width=0.25\textwidth]{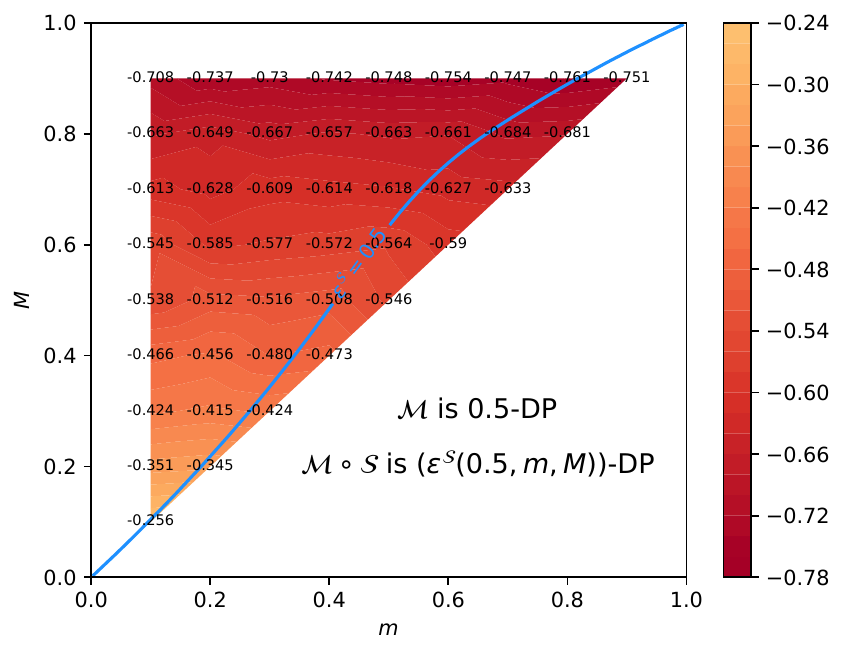}%
    \includegraphics[width=0.25\textwidth]{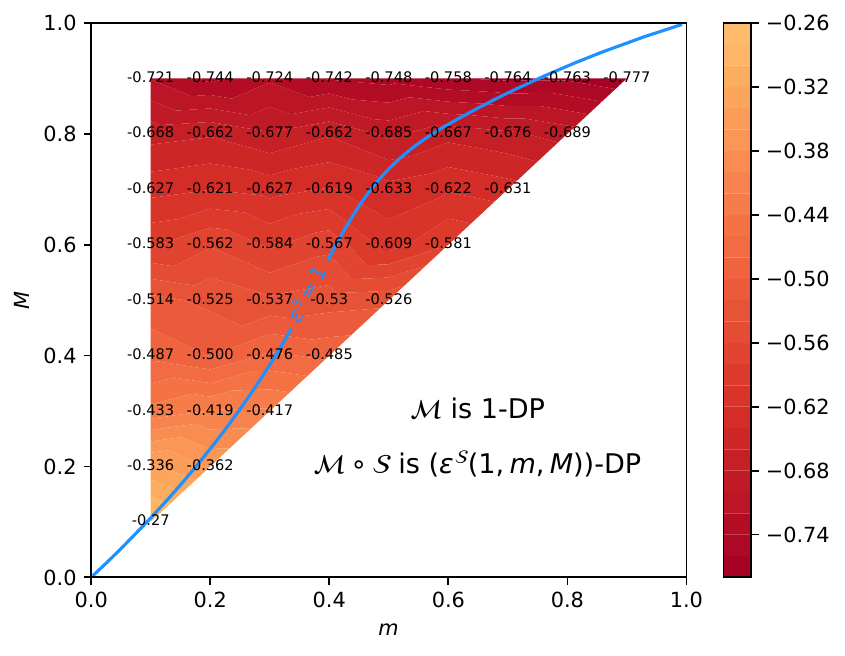}%
    \includegraphics[width=0.25\textwidth]{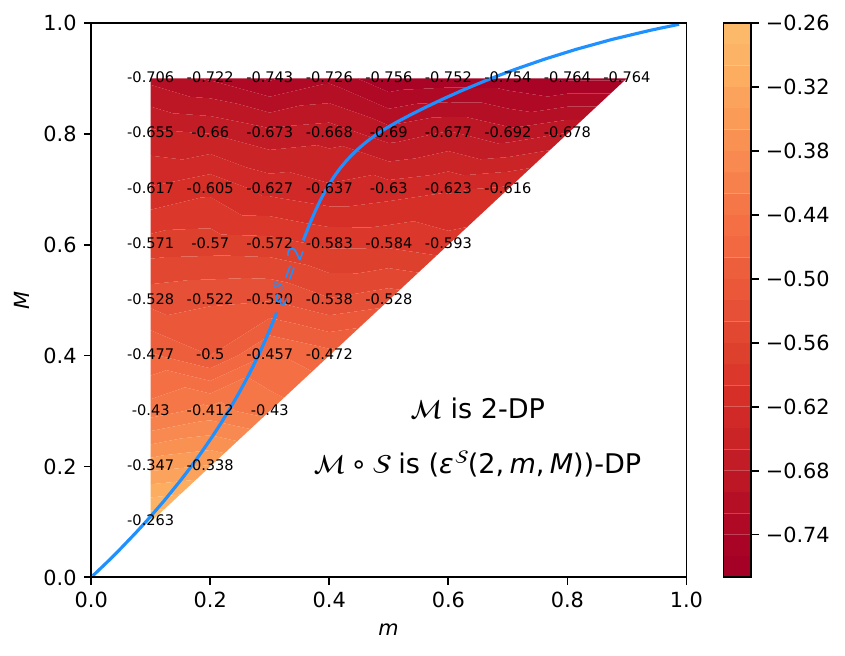}%
    \caption{The probability of outputting an incorrect mode of $\M$ minus that of $\M\circ\S$ without the noise reduction. Results shown for the RNM with Laplace noise over the \texttt{Age} column in the Irish database.}
    \label{fig:Experiment1-RNM-Laplace-Irishn-Age}
\end{figure}

\begin{figure}[H]
    \includegraphics[width=0.25\textwidth]{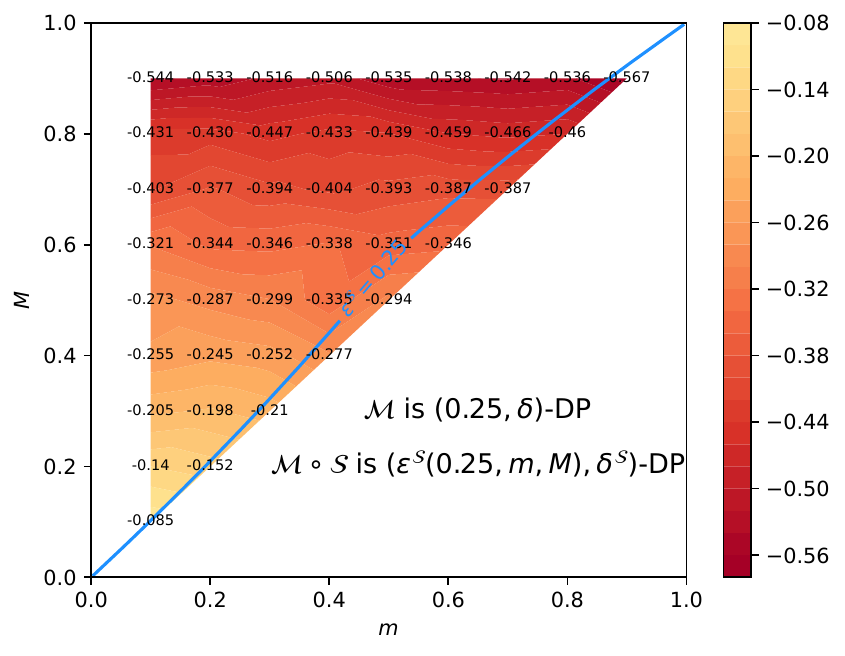}%
    \includegraphics[width=0.25\textwidth]{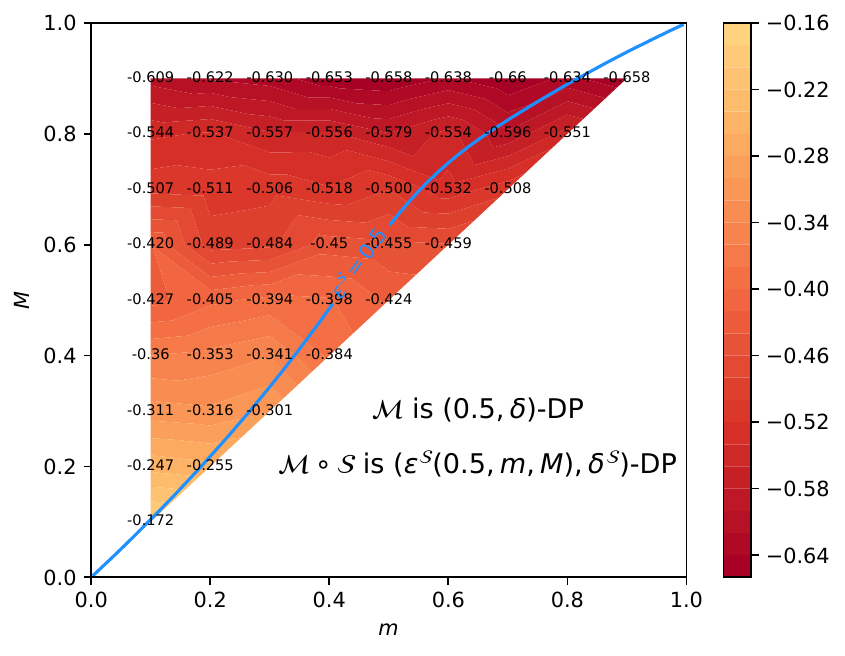}%
    \includegraphics[width=0.25\textwidth]{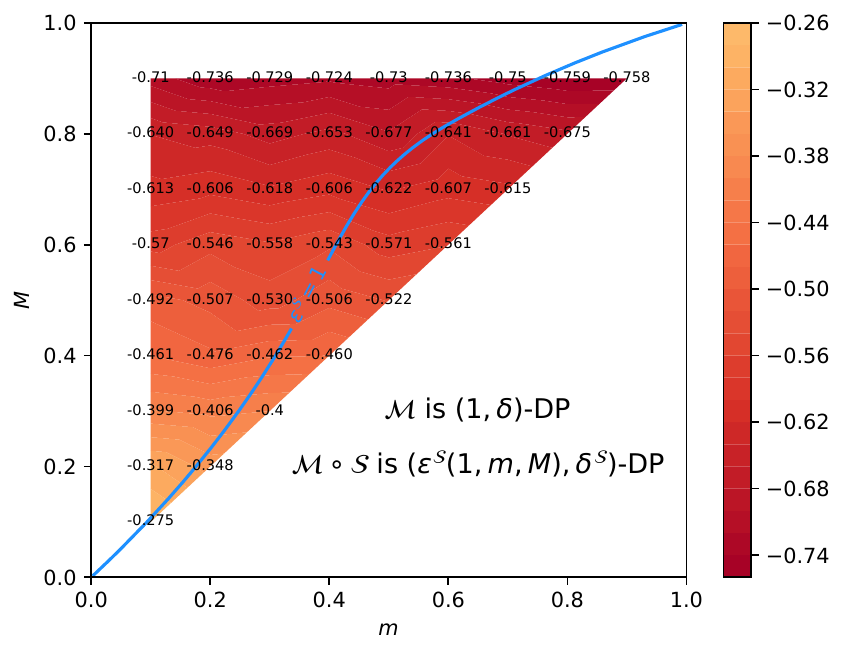}%
    \includegraphics[width=0.25\textwidth]{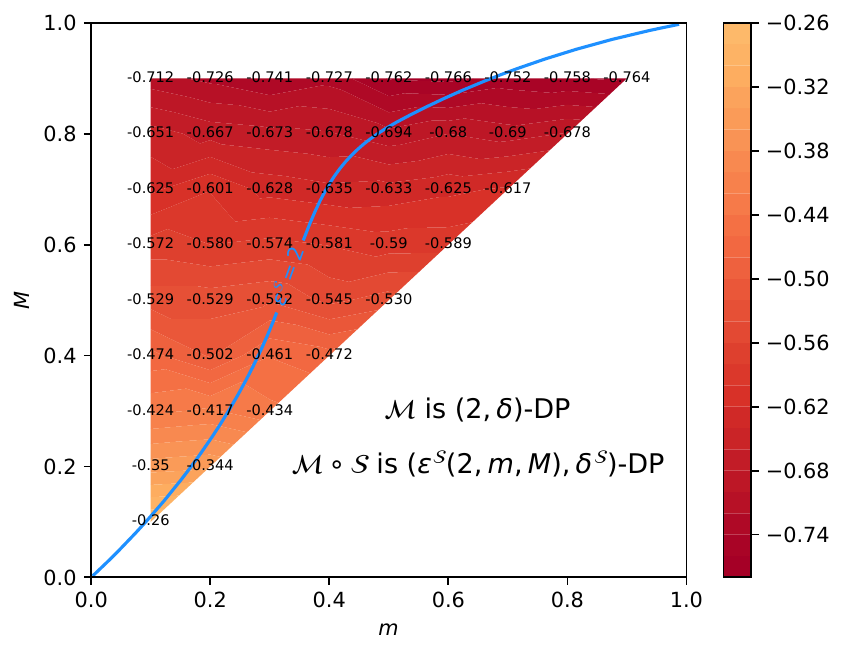}%
    \caption{The probability of outputting an incorrect mode of $\M$ minus that of $\M\circ\S$ without the noise reduction. Results shown for the RNM-like variant with Gaussian noise over the \texttt{Age} column in the Irish database.}
    \label{fig:Experiment1-RNM-Gaussian-Irishn-Age}
\end{figure}

\begin{figure}[H]
    \includegraphics[width=0.25\textwidth]{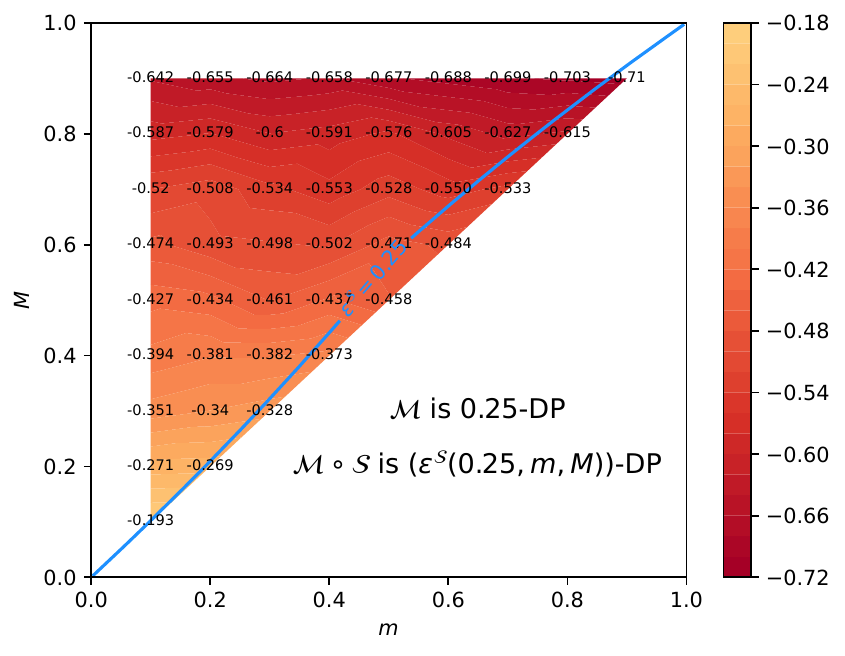}%
    \includegraphics[width=0.25\textwidth]{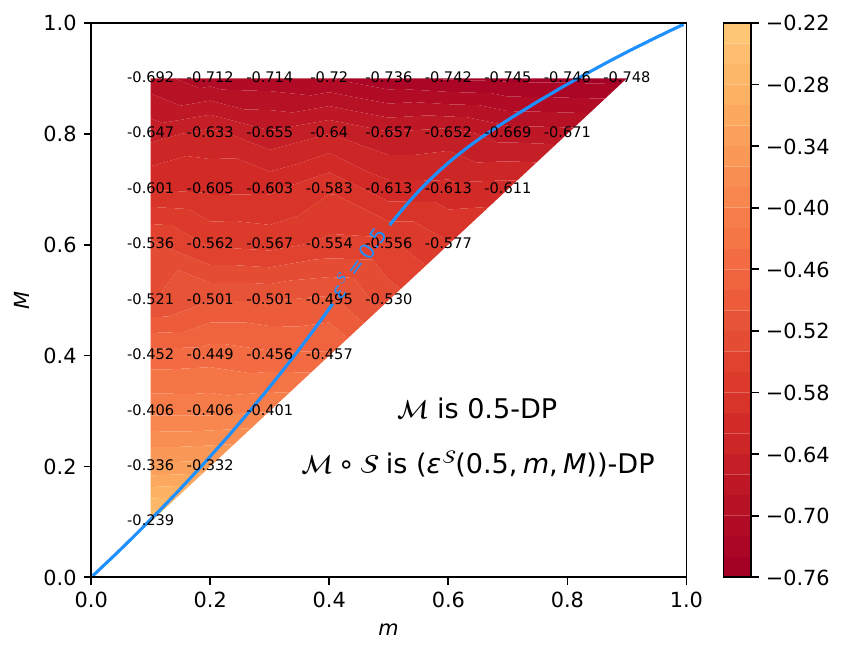}%
    \includegraphics[width=0.25\textwidth]{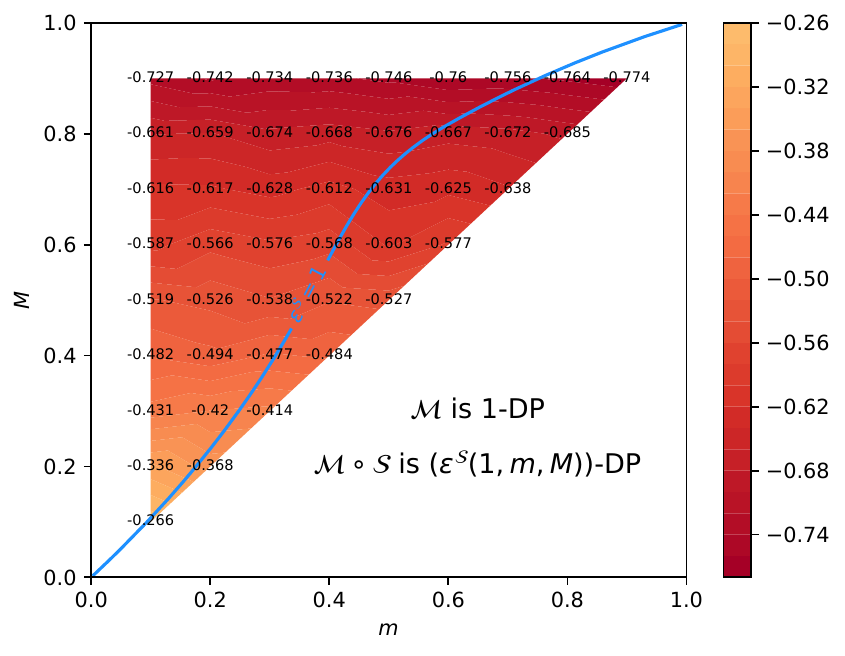}%
    \includegraphics[width=0.25\textwidth]{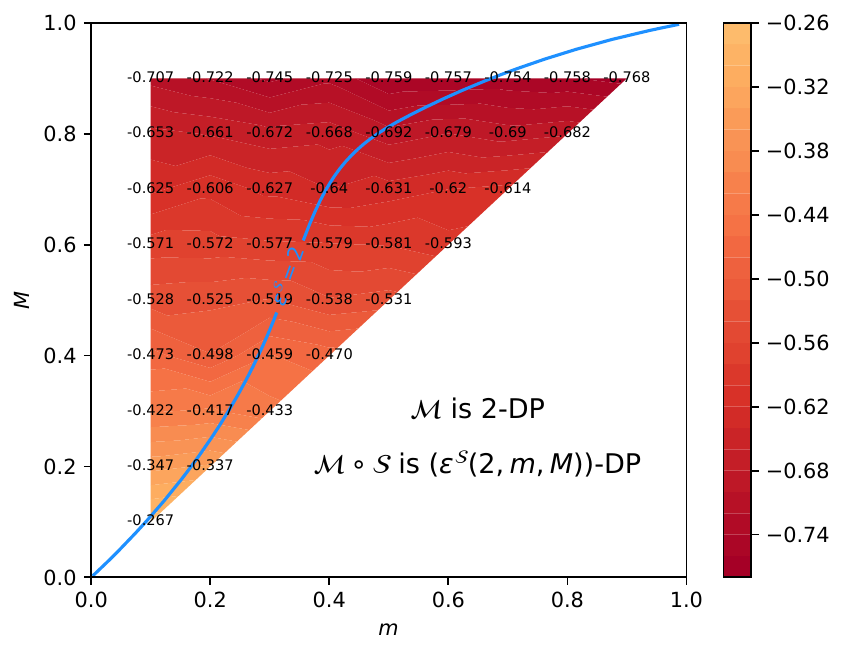}%
    \caption{The probability of outputting an incorrect mode of $\M$ minus that of $\M\circ\S$ without the noise reduction. Results shown for the RNM with exponential noise over the \texttt{Age} column in the Irish database.}
    \label{fig:Experiment1-RNM-Exponential-Irishn-Age}
\end{figure}

\begin{figure}[H]
    \includegraphics[width=0.25\textwidth]{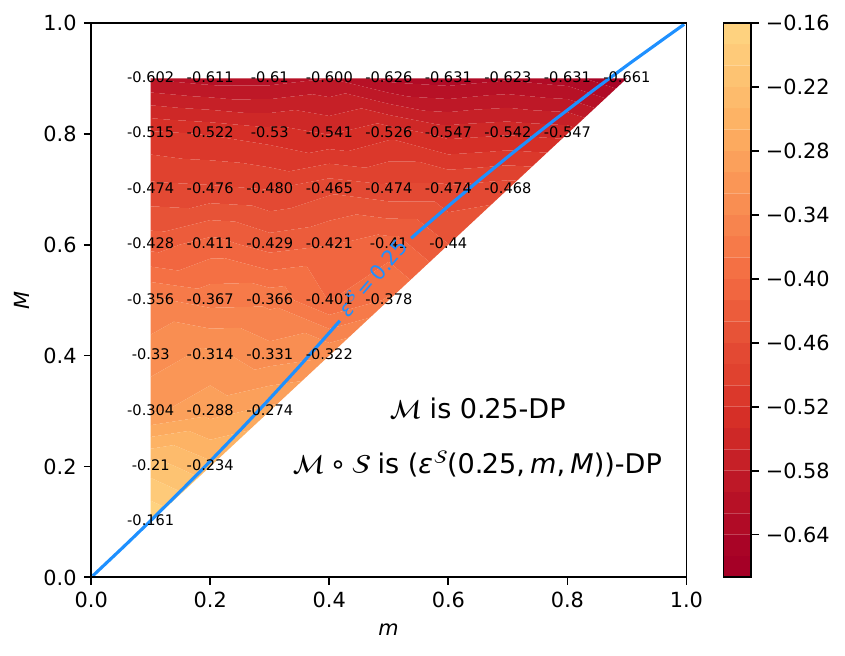}%
    \includegraphics[width=0.25\textwidth]{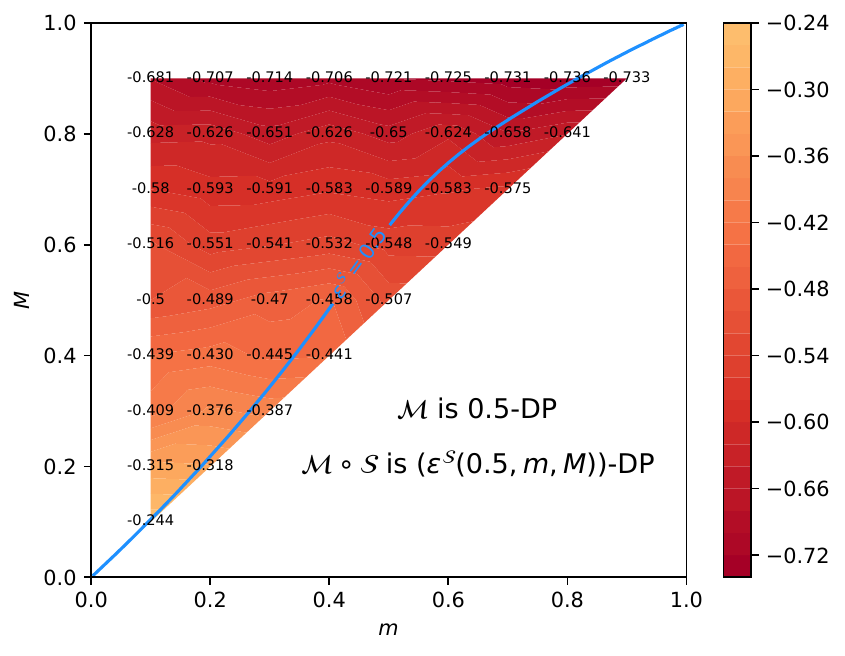}%
    \includegraphics[width=0.25\textwidth]{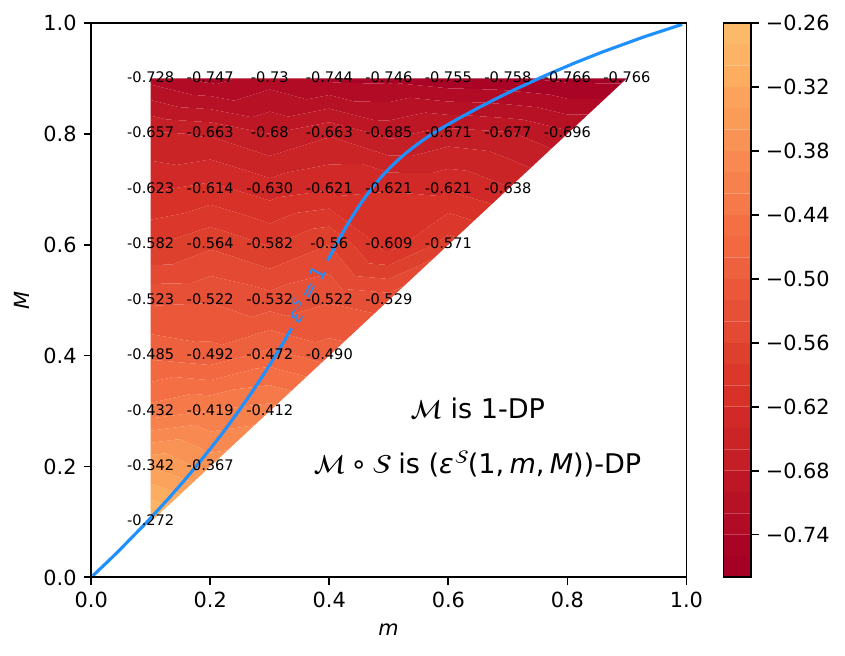}%
    \includegraphics[width=0.25\textwidth]{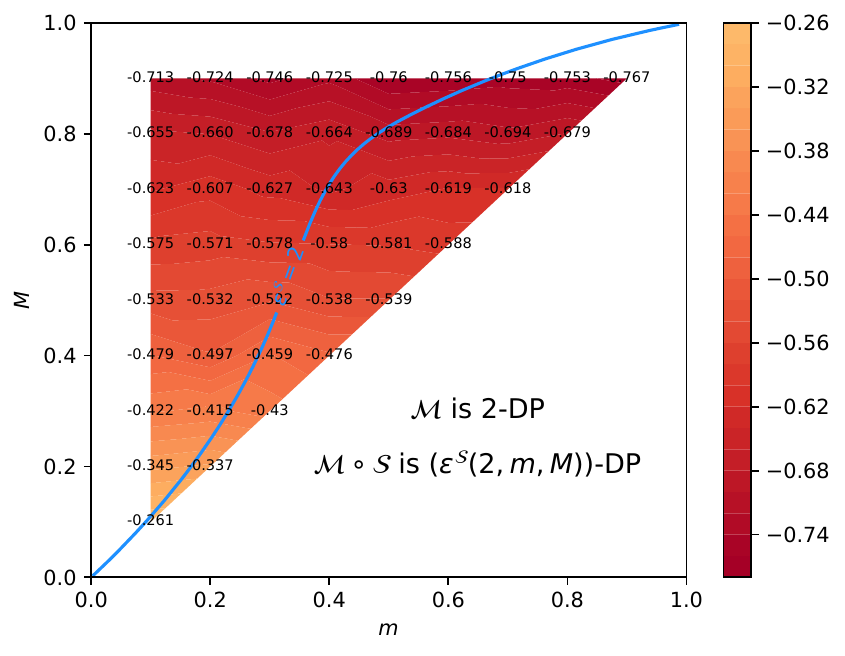}%
    \caption{The probability of outputting an incorrect mode of $\M$ minus that of $\M\circ\S$ without the noise reduction. Results shown for the exponential mechanism over the \texttt{Age} column in the Irish database.}
    \label{fig:Experiment1-RNM-ExponentialMechanism-Irishn-Age}
\end{figure}

\begin{figure}[H]
    \includegraphics[width=0.25\textwidth]{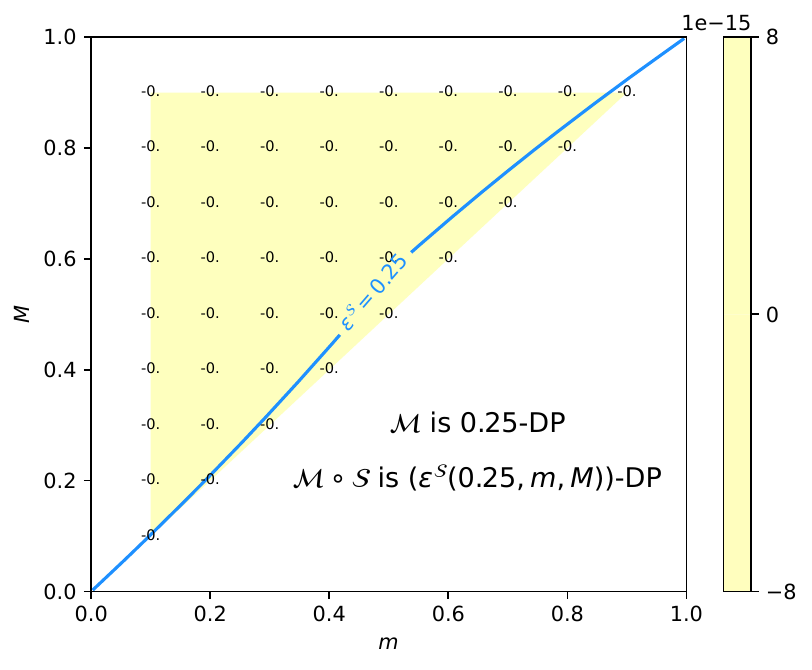}%
    \includegraphics[width=0.25\textwidth]{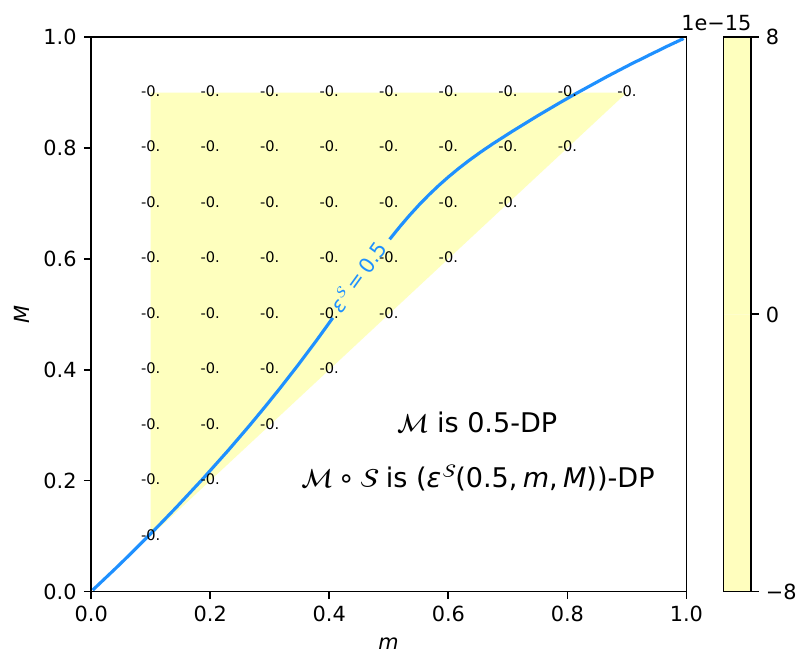}%
    \includegraphics[width=0.25\textwidth]{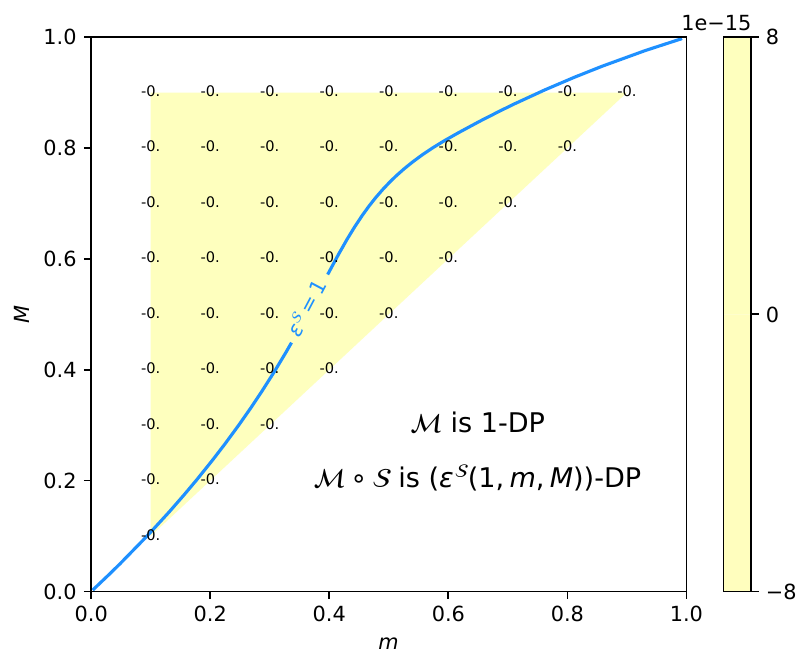}%
    \includegraphics[width=0.25\textwidth]{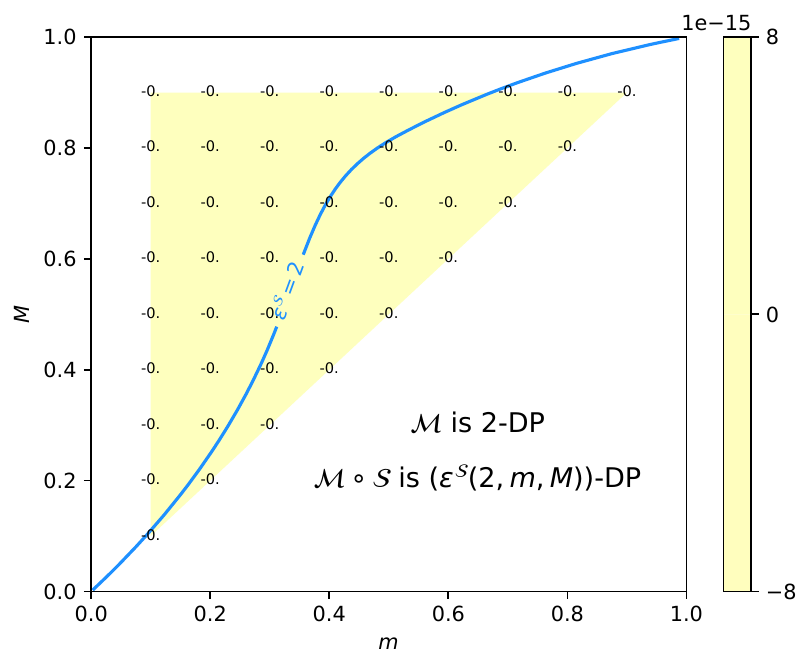}%
    \caption{The probability of outputting an incorrect mode of $\M$ minus that of $\M\circ\S$ without the noise reduction. Results shown for the RNM with Laplace noise over the \texttt{HighestEducationCompleted} column in the Irish database.}
    \label{fig:Experiment1-RNM-Laplace-Irishn-HighestEducationCompleted}
\end{figure}

\begin{figure}[H]
    \includegraphics[width=0.25\textwidth]{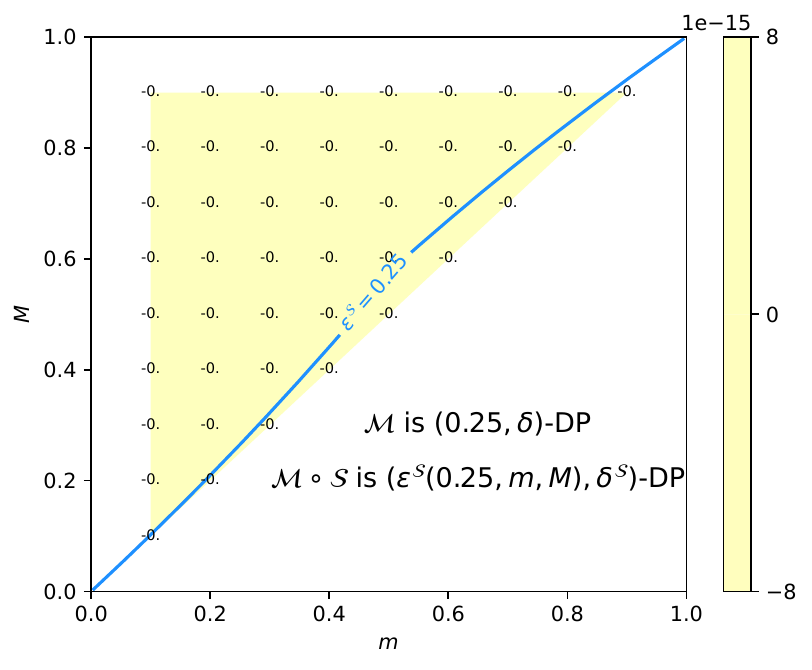}%
    \includegraphics[width=0.25\textwidth]{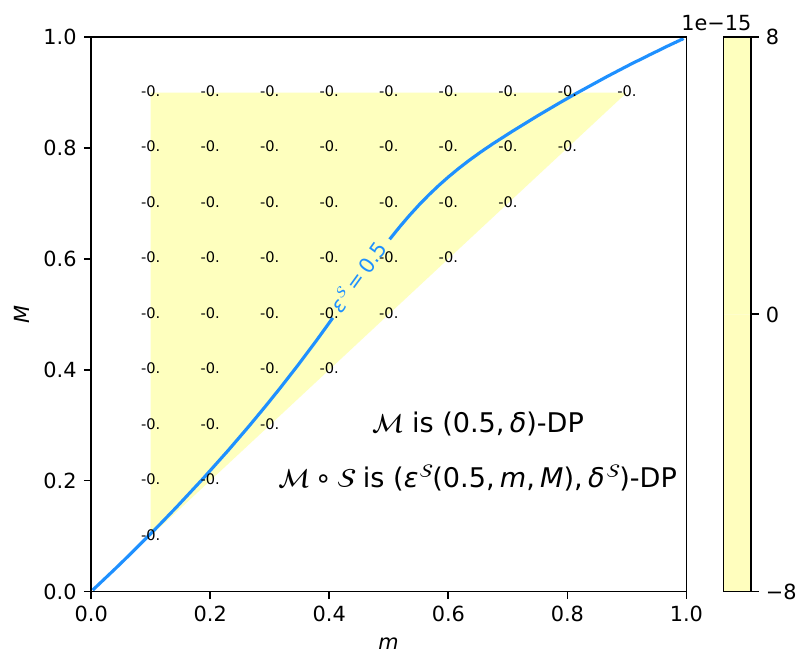}%
    \includegraphics[width=0.25\textwidth]{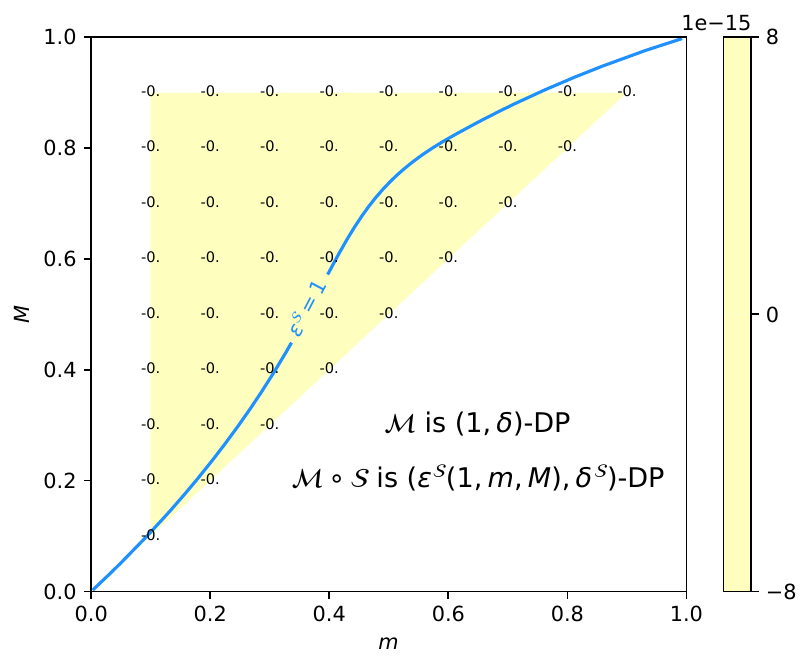}%
    \includegraphics[width=0.25\textwidth]{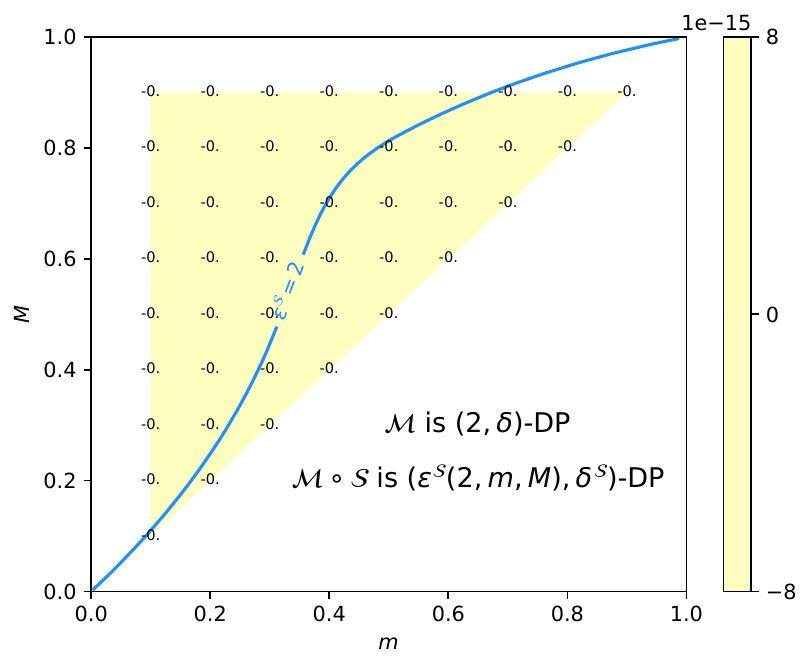}%
    \caption{The probability of outputting an incorrect mode of $\M$ minus that of $\M\circ\S$ without the noise reduction. Results shown for the RNM-like variant with Gaussian noise over the \texttt{HighestEducationCompleted} column in the Irish database.}
    \label{fig:Experiment1-RNM-Gaussian-Irishn-HighestEducationCompleted}
\end{figure}

\begin{figure}[H]
    \includegraphics[width=0.25\textwidth]{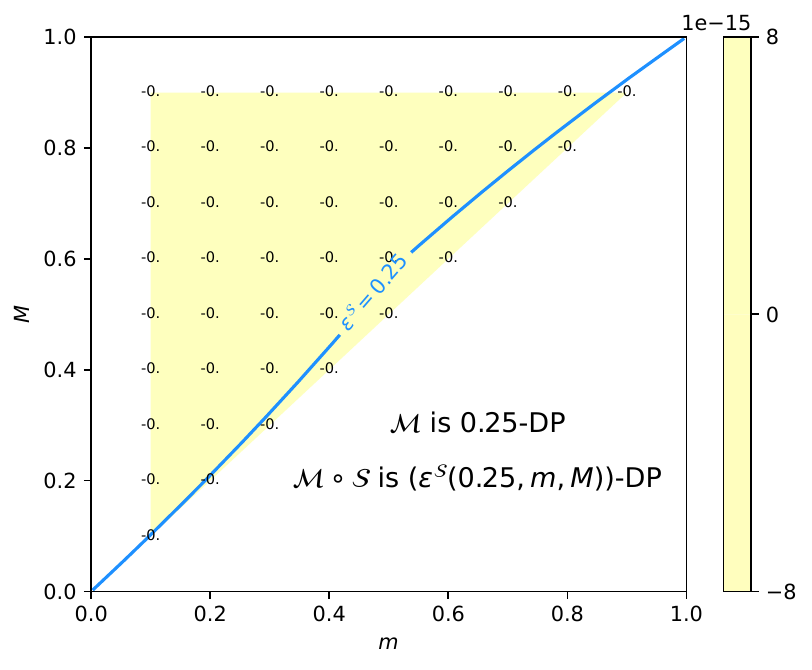}%
    \includegraphics[width=0.25\textwidth]{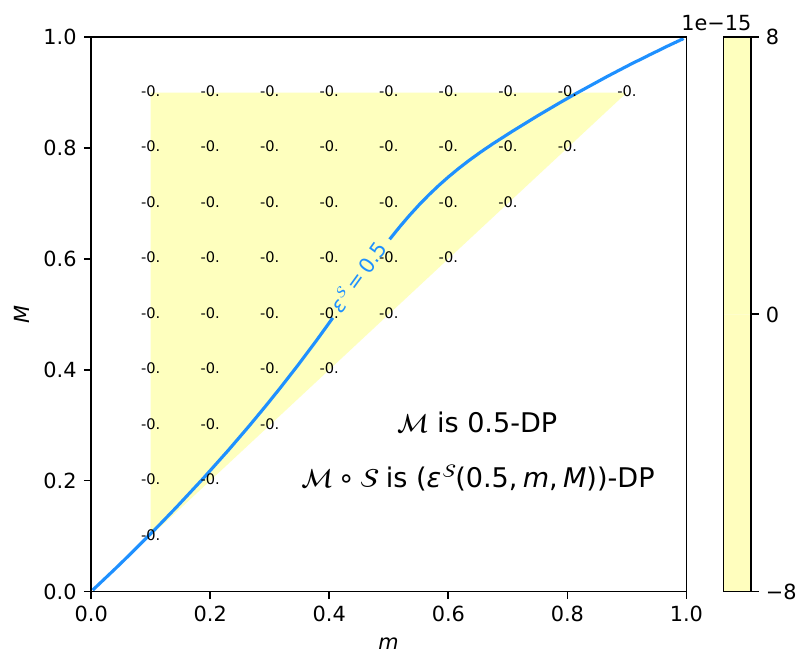}%
    \includegraphics[width=0.25\textwidth]{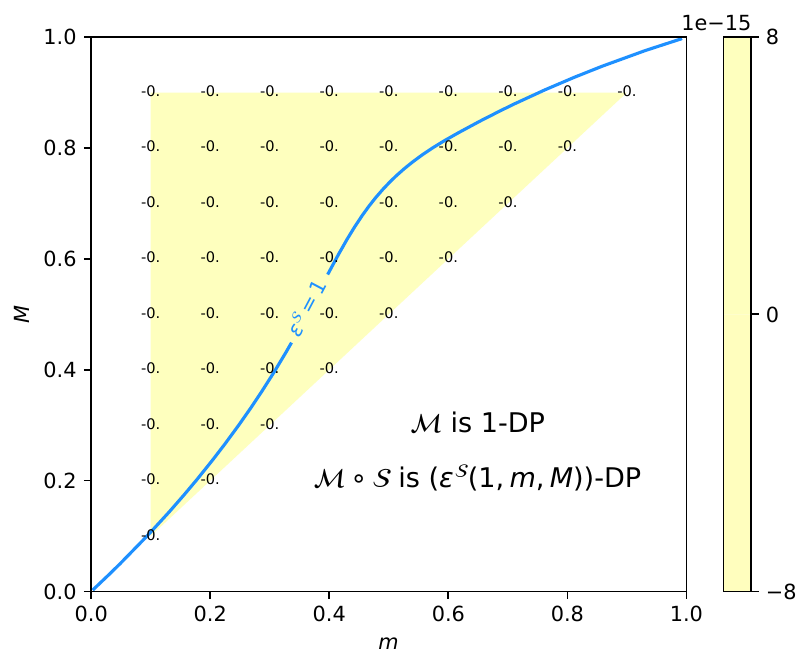}%
    \includegraphics[width=0.25\textwidth]{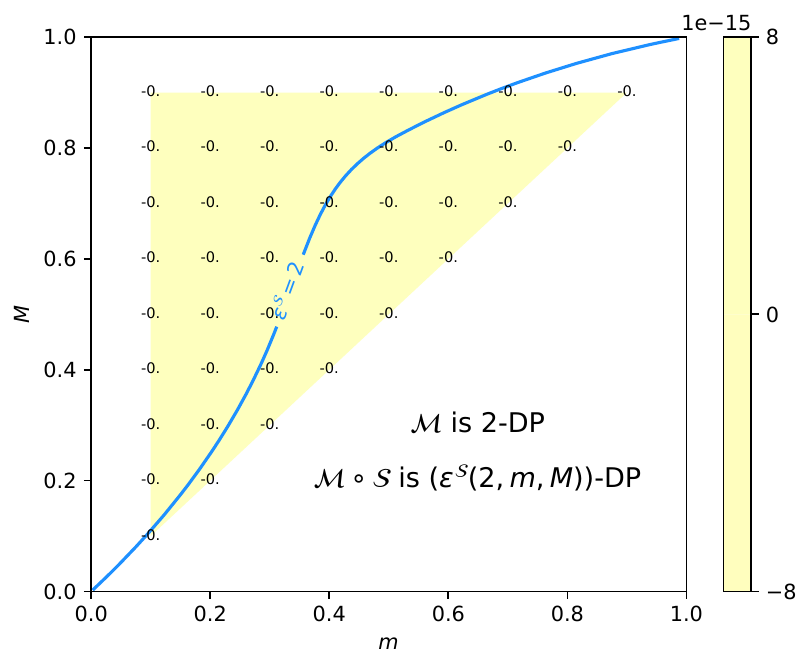}%
    \caption{The probability of outputting an incorrect mode of $\M$ minus that of $\M\circ\S$ without the noise reduction. Results shown for the RNM with exponential noise over the \texttt{HighestEducationCompleted} column in the Irish database.}
    \label{fig:Experiment1-RNM-Exponential-Irishn-HighestEducationCompleted}
\end{figure}

\begin{figure}[H]
    \includegraphics[width=0.25\textwidth]{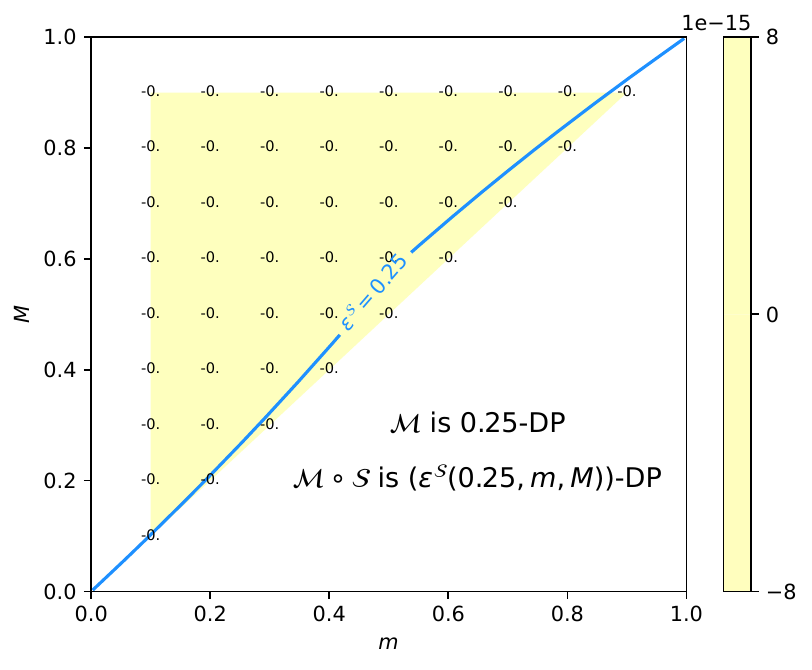}%
    \includegraphics[width=0.25\textwidth]{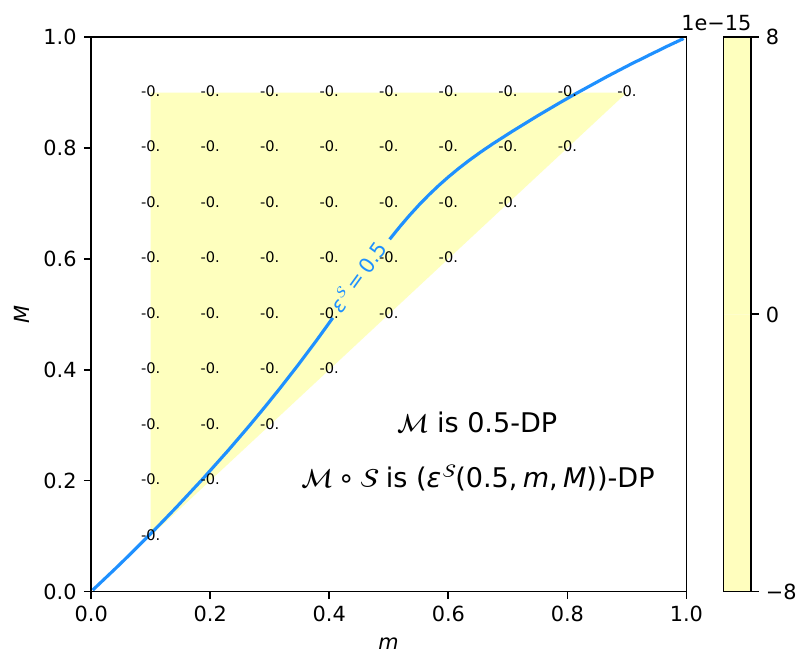}%
    \includegraphics[width=0.25\textwidth]{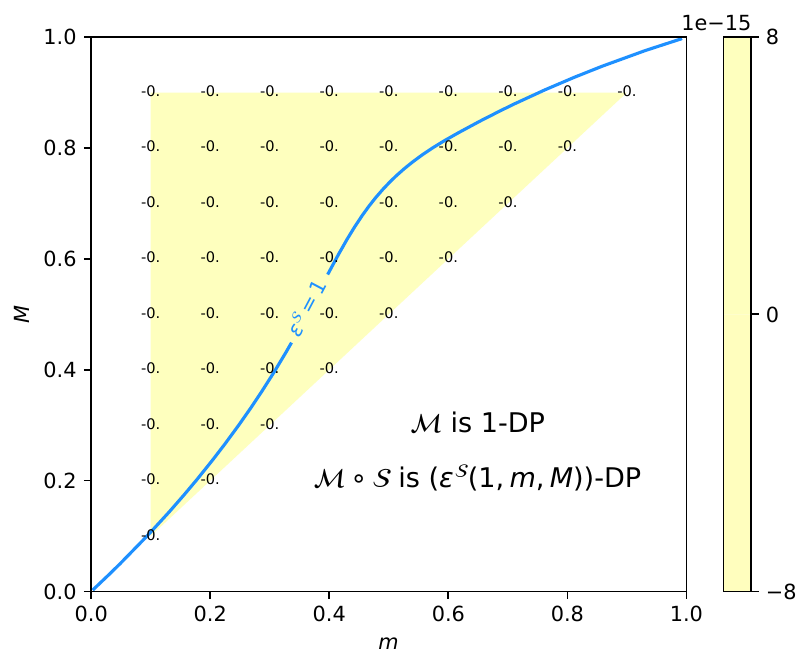}%
    \includegraphics[width=0.25\textwidth]{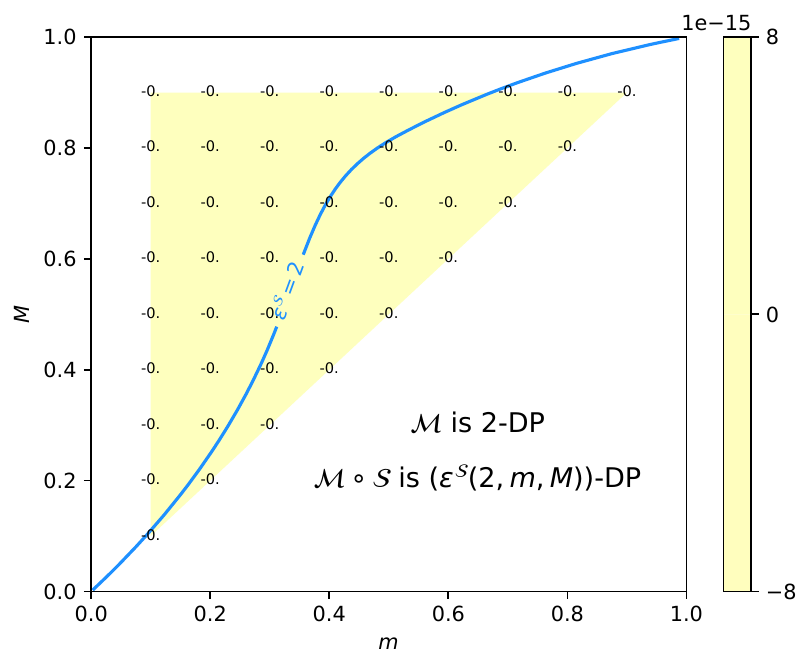}%
    \caption{The probability of outputting an incorrect mode of $\M$ minus that of $\M\circ\S$ without the noise reduction. Results shown for the exponential mechanism over the \texttt{HighestEducationCompleted} column in the Irish database.}
    \label{fig:Experiment1-RNM-ExponentialMechanism-Irishn-HighestEducationCompleted}
\end{figure}

\subsection{Plots of the Utility Difference between the Mechanisms \textit{without} the Noise Reduction for the Clustering Mechanisms}\label{sec:plots:SuppressionwithoutEpsDeltaChange3}

\begin{figure}[H]
    \includegraphics[width=0.25\textwidth]{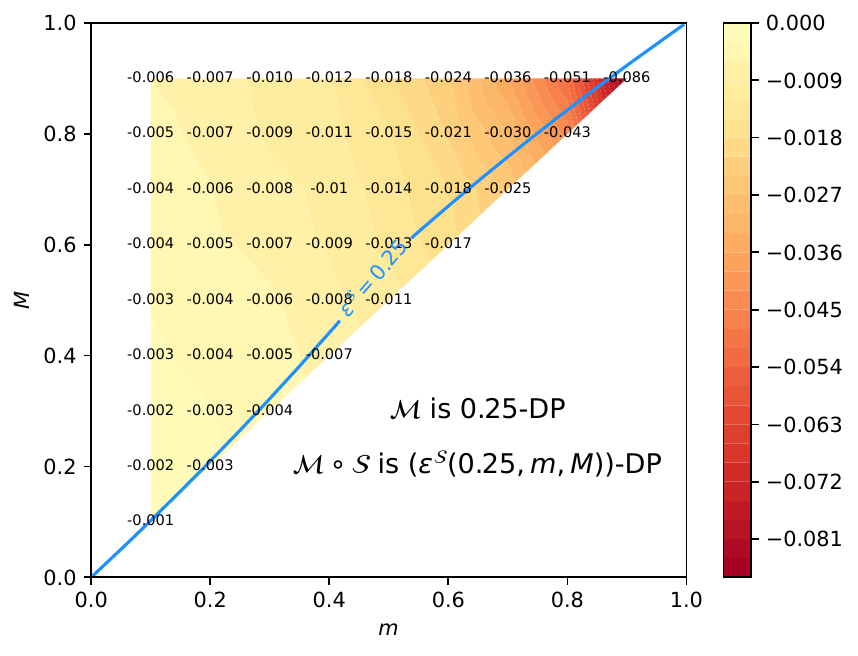}%
    \includegraphics[width=0.25\textwidth]{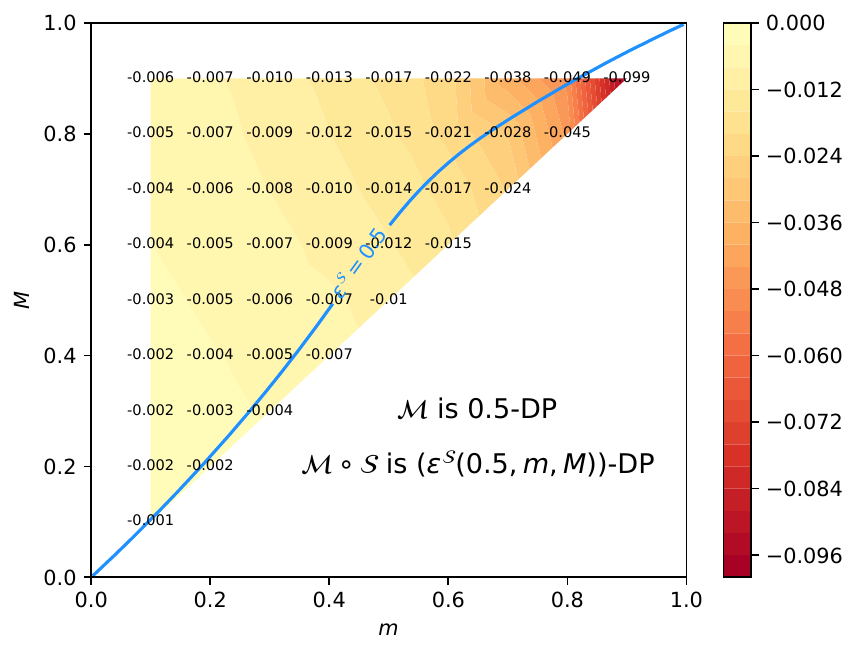}%
    \includegraphics[width=0.25\textwidth]{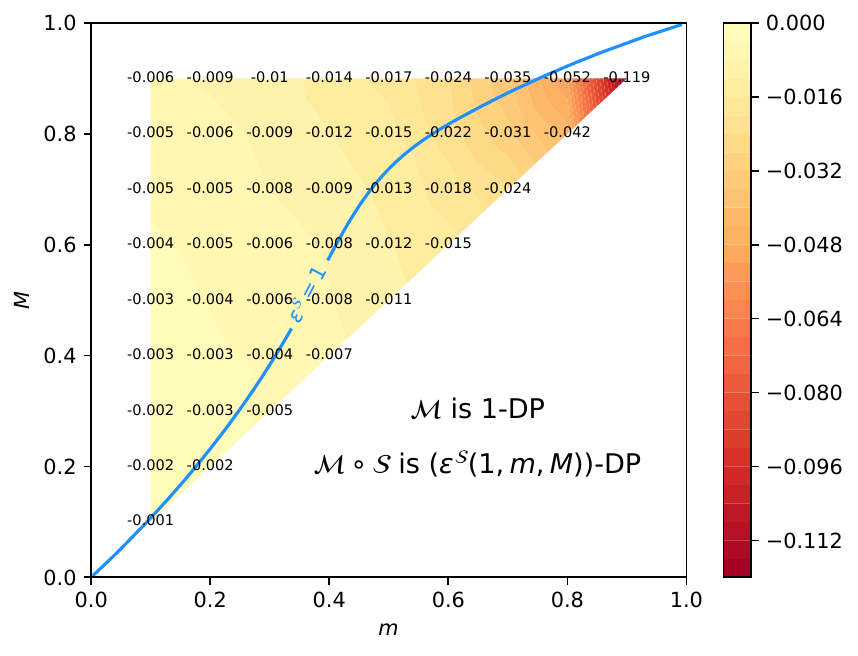}%
    \includegraphics[width=0.25\textwidth]{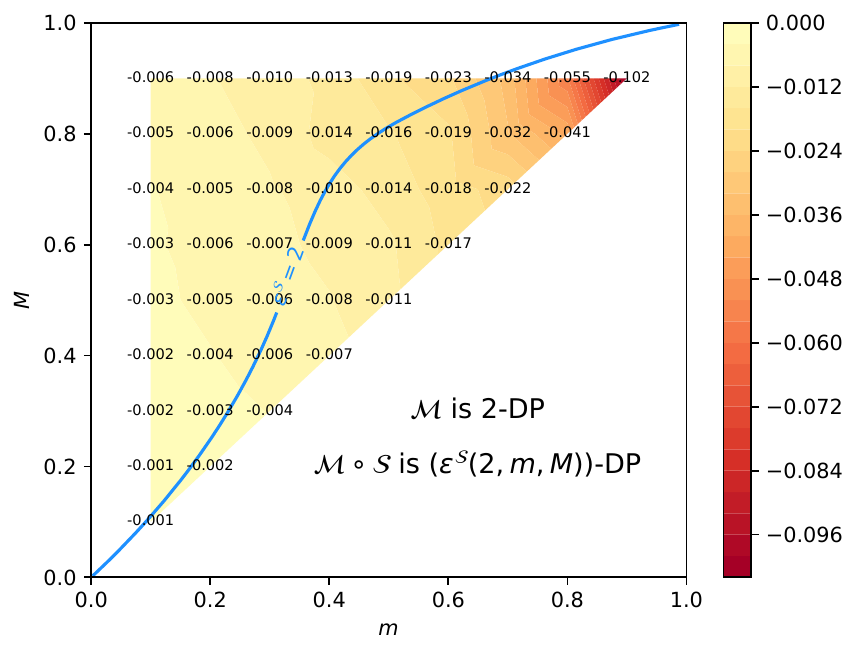}%
    \caption{The average cost of $\M$ minus that of $\M\circ\S$ without the noise reduction. Results shown for $k$-median over our synthetic database.}
    \label{fig:Experiment1-Clusteringkmedian}
\end{figure}

\begin{figure}[H]
    \includegraphics[width=0.25\textwidth]{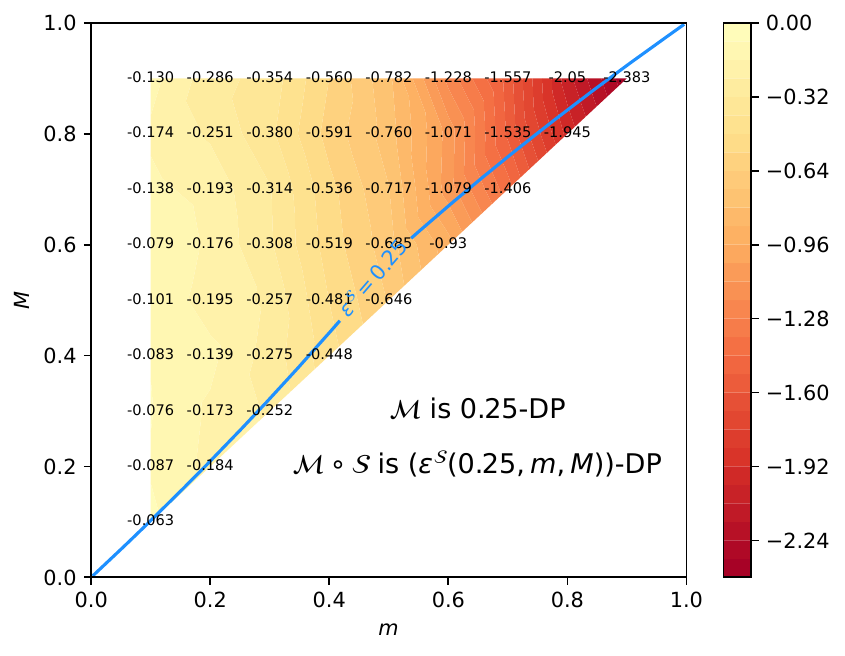}%
    \includegraphics[width=0.25\textwidth]{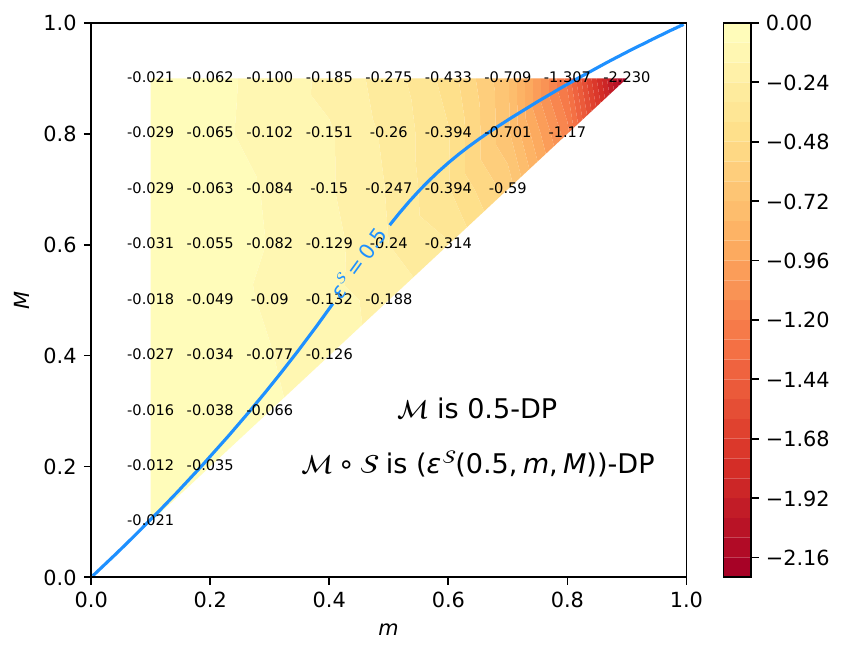}%
    \includegraphics[width=0.25\textwidth]{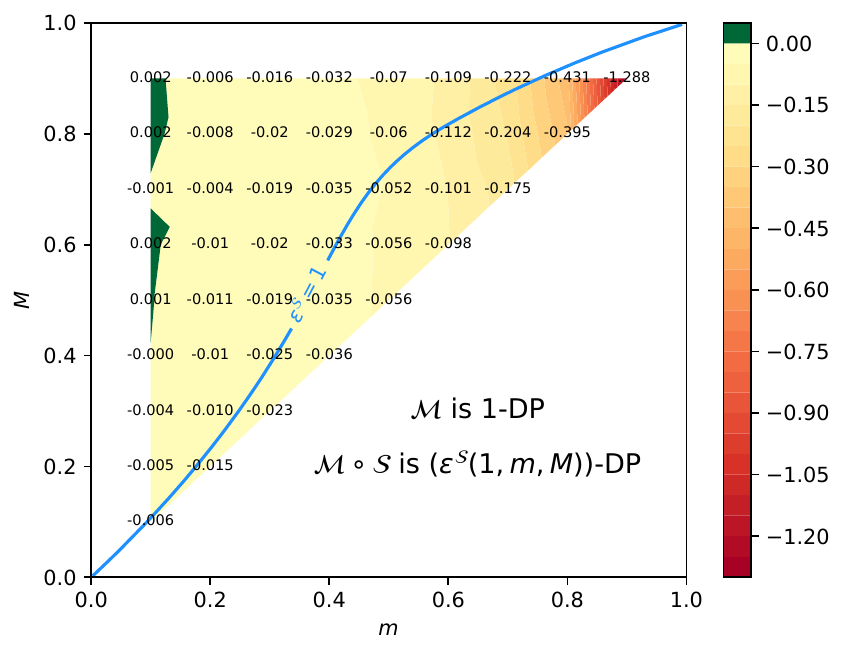}%
    \includegraphics[width=0.25\textwidth]{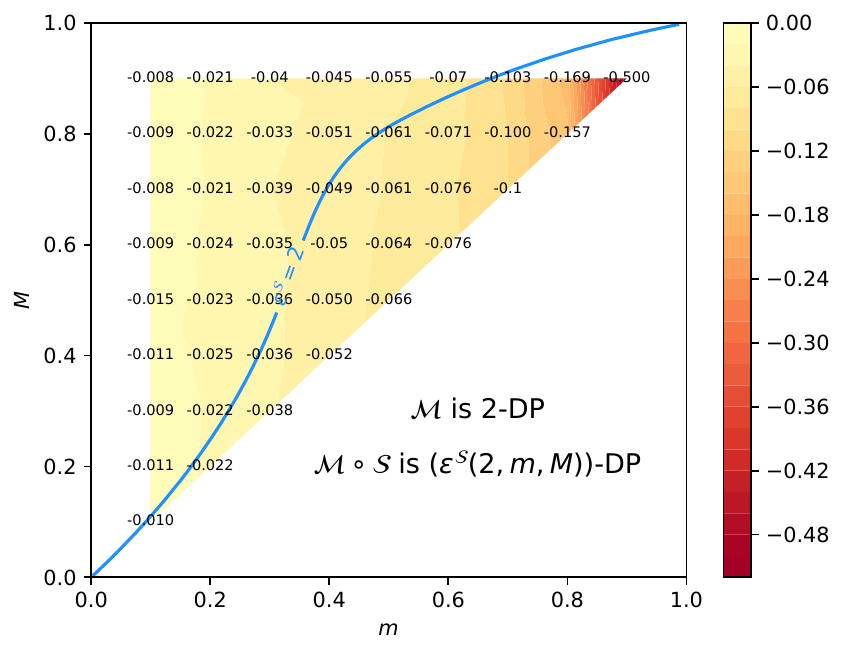}%
    \caption{The normalized intracluster variance of $\M$ minus that of $\M\circ\S$ without the noise reduction. Results shown for DPLloyd over the six numerical columns of the Adult database.}
    \label{fig:Experiment1-ClusteringDPLloyd}
\end{figure}

\newpage

\section{Proofs, Additional Theorems, and Remarks}\label{sec:proofs}

\Cref{tab:notation} compiles the notation used in the proofs.

\begin{table}[H]
    \centering
    \begin{tabular}{|c|c|}
        \hline
        Symbol & Meaning \\
        \hline\hline
        $\X$ & Set of possible data records \\
        $\D$ & Class of finite databases drawn from $\X$ \\
        $D,D'$ & A pair of databases \\
        $D\sim D'$ & $D$ and $D'$ are neighboring databases \\
        $|D|$ & Size of $D$ (number of records) \\
        $x$ & Data record (element of $\X$)\\
        $\M$ & A randomized mechanism with domain $\D$ \\ $\Ran\coloneqq\Range(\M)$ & Set of possible outputs of all $\M(D)$ with $D\in\D$ \\
        $\meas$ & Measurable subset of $\Ran$ \\
        $\S$ & Suppression algorithm\\
        $C\subseteq D$ & Subdatabase of $D$ \\
        $D_{+y}$ & Database $D\uplus\{y\}$ \\ 
        $\supp(X)$ & Support of the random variable $X$ \\
        $[n]$ & Set $\{1,\dots,n\}$ \\
        \hline
    \end{tabular}
    \caption{\small Summary of the notation used in this paper.}
    \label{tab:notation}
\end{table}

\subsection{Proofs of \Cref{sec:deterministicsuppression}: \nameref*{sec:deterministicsuppression}}

\THdeterminisitcsuppression*
\begin{proof}
    Note that if $\Delta_{\intset}\S=\infty$, then we are done. We assume then that $\Delta_{\intset}\S$ is finite. 
    
    We fix $D,D'\in\D$. Suppose $k=\dist_{\intset}(\S(D),\S(D'))$, then we have that $k\leq\Delta_{\intset}\S<\infty$ and there exists a chain of $k+1$ databases $D_0,\dots,D_k\in\intset$ such that
    \[
        \S(D) = D_0 \sim D_1 \sim \dots \sim D_k = \S(D'),
    \]
    and applying the definition of DP over $\M|_{\intset}$, we obtain for all measurable $\meas\subseteq\mathcal{R}\coloneqq\Range(\M)$,
    \begin{gather*}
        \Prob\{\M|_{\intset}(\S(D))\in\meas\} = \Prob\{\M|_{\intset}(D_0)\in\meas\} \leq \e^{\varepsilon_{\intset}}\Prob\{\M|_{\intset}(D_1)\in\meas\} + \delta_{\intset} \leq \e^{\varepsilon_{\intset}}(\e^{\varepsilon_{\intset}}\Prob\{\M|_{\intset}(D_2)\in\meas\} + \delta_{\intset}) + \delta_{\intset} \leq \cdots \\\leq \e^{k\varepsilon_{\intset}}\Prob\{\M|_{\intset}(D_k)\in\meas\} + \delta_{\intset}\sum^{k-1}_{i=0}\e^{\varepsilon_{\intset} i} = \e^{k\varepsilon}\Prob\{\M|_{\intset}(\S(D'))\in\meas\} +  \delta_{\intset}\sum^{k-1}_{i=0}\e^{\varepsilon_{\intset} i} \leq\e^{\varepsilon\Delta_{\intset}\S}\Prob\{\M|_{\intset}(\S(D'))\in\meas\} +  \delta_{\intset}\sum^{\Delta_{\intset}\S-1}_{i=0}\e^{\varepsilon_{\intset} i}.
    \end{gather*}
    
    Since the last bound is independent of the choice of $D,D'\in\D$, we obtain the result. \qedhere
\end{proof}

\begin{proposition}\label{prop:deterministicsuppressiontight}
    The bound provided by \Cref{th:deterministicsuppression} is tight for all suppression algorithms $\S$ if $\delta_{\intset}\sum^{\Delta_\intset\S-1}_{i=0}\e^{\varepsilon_{\intset} i}<1$.
\end{proposition}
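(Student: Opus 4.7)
The plan is to reduce tightness to the standard chain-tight construction for $(\varepsilon_\intset, \delta_\intset)$-DP composition. First, since $\dist_\intset$ takes non-negative integer values, the supremum defining $\Delta_\intset\S$ is either attained by some pair of neighboring $D, D' \in \D$ or is infinite (in which case the theorem's bound is vacuous and tightness is trivial). Assuming $\Delta_\intset\S = k$ is finite, fix such witnesses $D, D' \in \D$ and a shortest neighbor chain $C_0 = \S(D), C_1, \ldots, C_k = \S(D')$ in $\intset$. If we can produce an $\M$ whose restriction to $\intset$ is $(\varepsilon_\intset, \delta_\intset)$-DP and whose output distributions on this chain make the group-privacy inequality tight link by link, then $\M \circ \S$ evaluated at $(D, D')$ achieves the bound of \Cref{th:deterministicsuppression} with equality, proving tightness.

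The construction proceeds by designing distributions $P_i = \M(C_i)$ on some measurable output space so that a fixed event $S^*$ satisfies $P_i(S^*) = \delta_\intset \sum_{j=0}^{k-i-1} e^{j\varepsilon_\intset}$ for every $i$. Then $P_k(S^*) = 0$, $P_0(S^*) = \delta_\intset \sum_{j=0}^{k-1} e^{j\varepsilon_\intset} < 1$ by hypothesis, and the forward inequality $P_i(S^*) = e^{\varepsilon_\intset} P_{i+1}(S^*) + \delta_\intset$ holds with equality at every link. Telescoping gives
\[
    P_0(S^*) = e^{k\varepsilon_\intset} P_k(S^*) + \delta_\intset \sum_{j=0}^{k-1} e^{j\varepsilon_\intset},
\]
which is precisely the bound of \Cref{th:deterministicsuppression} saturated on the event $S^*$. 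Extending $\M$ to the remaining databases in $\intset$ (and arbitrarily to $\D \setminus \intset$, whose behavior is irrelevant since $\S$ outputs only into $\intset$) while preserving $(\varepsilon_\intset, \delta_\intset)$-DP finishes the construction.

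The main technical obstacle is to ensure that \emph{all four} $(\varepsilon_\intset, \delta_\intset)$-DP inequalities between consecutive $C_i, C_{i+1}$ (on $S^*$ and on its complement, in both directions) hold simultaneously, not merely the forward one on $S^*$, which is saturated by design. A naive two-point Bernoulli mechanism with $\Prob\{\M(C_i) \in S^*\} = P_i(S^*)$ can violate the reverse-complement inequality $1 - P_{i+1}(S^*) \le e^{\varepsilon_\intset}(1 - P_i(S^*)) + \delta_\intset$ when $P_0(S^*)$ is close to $1$. This is remedied by enlarging the output space, for instance via a three-point mechanism that redistributes balancing mass, or equivalently by invoking the chain-tight construction underlying optimal composition results for approximate DP; the hypothesis $\delta_\intset \sum_{j=0}^{k-1} e^{j\varepsilon_\intset} < 1$ is precisely the feasibility threshold that permits such a construction. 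With $\M$ in hand, $\M \circ \S$ evaluated at $(D, D')$ yields equality in \Cref{th:deterministicsuppression}, establishing tightness.
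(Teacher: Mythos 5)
Your overall strategy is the same as the paper's: pick a neighboring pair $D,D'$ attaining $\Delta_\intset\S$ (which exists when the supremum is finite, since the distances are integers), fix a shortest chain in $\intset$ from $\S(D)$ to $\S(D')$, and exhibit a mechanism on $\intset$ that is tightly $(\varepsilon_\intset,\delta_\intset)$-DP and saturates the group-privacy chain. However, there are two genuine gaps. First, the heart of the proof is precisely the construction you defer to ``a three-point mechanism'' or ``the chain-tight construction'': you never define $\M$, and the difficulty is not only the four inequalities between consecutive chain elements but also that $\M$ must satisfy $(\varepsilon_\intset,\delta_\intset)$-DP between \emph{every} pair of neighbors in $\intset$, not just along the chain. ``Extending $\M$ to the remaining databases while preserving DP'' is not automatic for approximate DP and is a substantive step. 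The paper resolves both issues at once by making $\Prob\{\M|_\intset(C)=s\}$ depend only on $d_C=\dist_\intset(\S(D),C)$ (a two-sided geometric kernel centered at $\Delta_\intset\S-d_C$ plus a shifted, $\e^{\varepsilon_\intset}$-scaled additive sequence carrying the $\delta_\intset$ mass); since $|d_C-d_{C'}|\le 1$ for neighbors in $\intset$, checking DP reduces to distributions whose distance parameters differ by at most one, which is verified explicitly.

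Second, even granting a construction on the chain, your saturation argument does not prove tightness in the paper's sense. You arrange $P_k(S^*)=0$ and $P_0(S^*)=\delta_\intset\sum_{j=0}^{k-1}\e^{j\varepsilon_\intset}$, so the telescoped identity reads $P_0(S^*)=\e^{k\varepsilon_\intset}\cdot 0+\delta_\intset\sum_{j=0}^{k-1}\e^{j\varepsilon_\intset}$. This only shows that the additive parameter cannot be reduced; it does not exclude that $\M\circ\S$ is $(\varepsilon'',\delta_\intset\sum_{j=0}^{k-1}\e^{j\varepsilon_\intset})$-DP for some $\varepsilon''<\varepsilon_\intset\Delta_\intset\S$, because the multiplicative factor never bites on an event of probability zero. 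Tightness as defined in the paper (no pair $(\varepsilon'',\delta'')$ with both components no larger and not both equal) requires an event $\meas$ with $\Prob\{\M(\S(D))\in\meas\}>0$ and $\Prob\{\M(\S(D'))\in\meas\}=\e^{\varepsilon_\intset\Delta_\intset\S}\Prob\{\M(\S(D))\in\meas\}+\delta_\intset\sum_{j=0}^{\Delta_\intset\S-1}\e^{\varepsilon_\intset j}$ exactly; in the paper this is supplied by the geometric tails (the event $\Z_{\le 0}$), whose mass scales by exactly $\e^{\varepsilon_\intset}$ per chain step on top of the $\delta_\intset$ terms. Your proposal needs this second, strictly positive component to close the argument; as written it establishes only half of the claimed tightness. (Your dismissal of the infinite-sensitivity case is also looser than the paper's limit argument, but that is a minor point.)
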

\begin{proof}
    The proof is a corollary of the existence of a mechanism $\M$ such that the group privacy bound is tight. Selecting such a mechanism, like the following one, suffices.

    First note that if $(\varepsilon_\intset,\delta_\intset)=(0,0)$ or $\Delta_\intset\S=0$, then the result holds since $\M\circ\S$ does not depend of its input, and thus is $(0,0)$-DP. We will now cover the rest of the cases. 

    We fix the suppression algorithm $\S$ and assume that $\Delta_\intset\S<\infty$. By definition, there exists a database $D\in\D$ such that
    \[
        \sup_{\substack{D'\in D:\\D'\sim D}} \dist_\intset(\S(D),\S(D'))=\Delta_\intset\S.
    \]
    
    We define the finite sequence $\hat{D}$ such that the first element is $\delta_\intset$ and it is followed, for all $j=1,\dots,\Delta_\intset\S$ in order, by $\ceil{\e^{\varepsilon_{\intset}\,j}}$ copies of $\frac{\e^{\varepsilon_{\intset}\,j}}{\ceil{\e^{\varepsilon_{\intset}\,j}}}\delta_\intset$. Then, for all $n\in[\Delta_\intset\S]$, we consider the sequence $D^{(n)}$ with indices in $\Z$ such that it contains $\hat{D}$ with the first element $\frac{\e^{\varepsilon_{\intset}\,n}}{\ceil{\e^{\varepsilon_{\intset}\,n}}}\delta_\intset$ at index $1$ and is preceded and succeeded by only zeros, i.e., informally,
    \[
        D^{(n)}=(\cdots,0,0,\text{``the elements of $\hat{D}$ in order''},0,0,\cdots).
    \]

    Note that the sum of the elements in $D^{(n)}$ is exactly $\delta_\intset\sum^{\Delta_\intset\S-1}_{i=0}\e^{\varepsilon_\intset i}$, and that every element $D^{(n)}_i$ in $D^{(n)}$ is between $0$ and $\delta_\intset$. Note too that any two elements $D_1,D_2$ in $D^{(n)}$ satisfies $D_1-\e^{\varepsilon_\intset}D_2 \leq \delta_\intset$. It is clear that if $D_1$ or $D_2$ equals 0, and otherwise, we have that
    \begin{align*}
        D_1-\e^{\varepsilon_\intset}D_2 
        = \delta_\intset\bigg(\frac{\e^{\varepsilon_\intset\,i_1}}{\ceil{\e^{\varepsilon_\intset\,i_1}}}-\frac{\e^{\varepsilon_\intset(i_2+1)}}{\ceil{\e^{\varepsilon_\intset\,i_2}}}\bigg)
        = \delta_\intset\bigg(\frac{\ceil{\e^{\varepsilon_\intset\,i_2}}\e^{\varepsilon_\intset\,i_1}-\ceil{\e^{\varepsilon_\intset\,i_1}}\e^{\varepsilon_\intset(i_2+1)}}{\ceil{\e^{\varepsilon_\intset\,i_1}}\ceil{\e^{\varepsilon_\intset\,i_2}}}\bigg) \\
        \leq \delta_\intset\bigg(\frac{(\e^{\varepsilon_\intset\,i_2}+1)\e^{\varepsilon_\intset\,i_1}-\e^{\varepsilon_\intset\,i_1}\e^{\varepsilon_\intset(i_2+1)}}{\ceil{\e^{\varepsilon_\intset\,i_1}}\ceil{\e^{\varepsilon_\intset\,i_2}}}\bigg)
        = \delta_\intset\bigg(\frac{\e^{\varepsilon_\intset\,i_2}\e^{\varepsilon_\intset\,i_1}(1-\e^{\varepsilon_\intset})+\e^{\varepsilon_\intset\,i_1}}{\ceil{\e^{\varepsilon_\intset\,i_1}}\ceil{\e^{\varepsilon_\intset\,i_2}}}\bigg)
        \leq \delta_\intset\bigg(1-\e^{\varepsilon_\intset} + \frac{1}{\ceil{\e^{\varepsilon_\intset\,i_2}}}\bigg),
    \end{align*}
    which is smaller than $\delta_\intset$ since
    \[  
        D_1-\e^{\varepsilon_\intset}D_2 \leq \delta_\intset \Longleftarrow 1-\e^{\varepsilon_\intset} + \frac{1}{\ceil{\e^{\varepsilon_\intset\,i_2}}}\leq 1 \Longleftrightarrow 0 \leq \e^{\varepsilon_\intset}\ceil{\e^{\varepsilon_\intset\,i_2}}.
    \]

    Since $\delta_\intset\sum^{\Delta_\intset\S-1}_{i=0}\e^{\varepsilon_\intset i}<1$, we have that
    \[
        \alpha \coloneqq \frac{(1-\sum^{\Delta_\intset\S-1}_{i=0})(1-\e^{-\varepsilon_\intset})}{1+\e^{\varepsilon_\intset}}\in(0,1]
    \]
    and, we consider the mechanism $\M|_\intset\colon\intset\to\Z$ such that 
    \[    
        \Prob\{\M|_\intset(C) = s\} = \alpha\e^{-\varepsilon_\intset|s-\Delta_\intset\S+d_C|} + D^{(d_C)}_{s}
    \]
    with $d_C\coloneqq\dist_{\intset}(\S(D),C)$. We denote $P_{d_C}(s)\coloneqq\Prob\{\M|_\intset(C) = s\}$ since this value only depends on $d_C$ and $s$. 
    We can see that this mechanism is well-defined since (i)~all probabilities sum to $1$: It is clear for $\varepsilon_\intset=0$ and, otherwise, 
    \begin{gather*}
        \sum_{\substack{s\in\Z}}P_{d_C}(s)=\sum_{\substack{s\in\Z\\s-\Delta_\intset\S+d_C< 0}}\alpha\e^{\varepsilon_\intset(s-\Delta_\intset\S+d_C)} + \alpha + \sum_{\substack{s\in\Z\\s-\Delta_\intset\S+d_C> 0}}\alpha\e^{-\varepsilon_\intset(s-\Delta_\intset\S+d_C)} + \sum_{\substack{s\in\Z}} D^{(d_C)}_s = \alpha\sum_{\substack{s\in\Z\\s> 0}}\e^{-\varepsilon_\intset s} + \alpha + \alpha\sum_{\substack{s\in\Z\\s> 0}}\e^{-\varepsilon_\intset s} +\delta_\intset\sum^{\Delta_\intset\S-1}_{i=0}\e^{\varepsilon_\intset i} \\
        = 2\alpha\frac{\e^{-\varepsilon_\intset}}{1-\e^{-\varepsilon_\intset}} + \alpha + \delta_\intset\sum^{\Delta_\intset\S-1}_{i=0}\e^{\varepsilon_\intset i}
        = \alpha\frac{1+\e^{-\varepsilon_\intset}}{1-\e^{-\varepsilon_\intset}} + \delta_\intset\sum^{\Delta_\intset\S-1}_{i=0}\e^{\varepsilon_\intset i} 
        = \bigg(1-\delta_\intset\sum^{\Delta_\intset\S-1}_{i=0}\e^{\varepsilon_\intset i}\bigg)+\delta_\intset\sum^{\Delta_\intset\S-1}_{i=0}\e^{\varepsilon_\intset i} = 1
    \end{gather*}
    and (ii)~all probabilities are between $0$ and $1$ (follows directly from case~(i) and $P_{d_C}(s)\geq0$). 
    
    We note that $\M|_\intset$ is constructed to satisfy 
    $(\varepsilon_\intset,\delta_\intset)$-DP tightly: Since the range of $\M|_\intset$ is discrete, seeing that $\M|_\intset$ is $(\varepsilon_\intset,\delta_\intset)$-DP is equivalent to seeing that  
    \begin{equation}
         P_{d_C}(s) \leq \e^{\varepsilon_\intset}P_{d_{C'}}(s) + \delta_\intset \label{eq:tightnessdeterministicexample}
    \end{equation}
    for all $s\in\Z$ and all unbounded-neighboring $C,C'\in\intset$. We fix $C$ and $C'$. Note that $\dist_\intset(C,C')=1$ and that $\abs{\dist_\D(\S(D),C)-\dist_\intset(\S(D),C')}=\abs{d_C-d_{C'}}\leq1$ by the triangular inequality. We note that if $d_C=d_{C'}$, then \Cref{eq:tightnessdeterministicexample} holds directly. Suppose, without loss of generality, that $d_{C'}=d_{C}+1$, and we see that the inequality \Cref{eq:tightnessdeterministicexample} is verified for all $s\in\Z$:
    \begin{itemize}
        \item If $s>d_C$:
        \begin{align*}
            P_{d_C}(s)
            &\leq\e^{\varepsilon_\intset}P_{d_C+1}(s)+\delta_\intset \\
            \Longleftrightarrow\alpha\e^{-\varepsilon_\intset(s-\Delta_\intset\S+d_C)} + D^{(d_C)}_{s}
            &\leq\e^{\varepsilon_\intset}(\alpha\e^{-\varepsilon_\intset(s-\Delta_\intset\S+d_C+1)} + D^{(d_C+1)}_{s})+\delta_\intset \\
            \Longleftrightarrow\alpha\e^{-\varepsilon_\intset(s-\Delta_\intset\S+d_C)} + D^{(d_C)}_{s}
            &\leq\alpha\e^{-\varepsilon_\intset(s-\Delta_\intset\S+d_C)} + \e^{\varepsilon_\intset}D^{(d_C+1)}_{s}+\delta_\intset \\
            \Longleftrightarrow D^{(d_C)}_{s} -\e^{\varepsilon_\intset}D^{(d_C+1)}_{s}
            &\leq \delta_\intset,
        \end{align*}
        which is satisfied as previously shown. The inverse inequality is also satisfied:
        \begin{align*}
            P_{d_C+1}(s)
            &\leq\e^{\varepsilon_\intset}P_{d_C}(s)+\delta_\intset \\
            \Longleftrightarrow\alpha\e^{-\varepsilon_\intset(s-\Delta_\intset\S+d_C+1)} + D^{(d_C+1)}_{s}
            &\leq\e^{\varepsilon_\intset}(\alpha\e^{-\varepsilon_\intset(s-\Delta_\intset\S+d_C)} + D^{(d_C)}_{s})+\delta_\intset \\
            \Longleftrightarrow\alpha\e^{-\varepsilon_\intset(s-\Delta_\intset\S+d_C+1)} + D^{(d_C+1)}_{s}
            &\leq\alpha\e^{-\varepsilon_\intset(s-\Delta_\intset\S+d_C-1)} + \e^{\varepsilon_\intset}D^{(d_C)}_{s}+\delta_\intset \\
            \Longleftrightarrow D^{(d_C+1)}_{s} -\e^{\varepsilon_\intset}D^{(d_C)}_{s}
            &\leq \delta_\intset + \alpha\e^{-\varepsilon_\intset(s-\Delta_\intset\S+d_C+1)}(\e^{2\varepsilon_\intset}-1).
        \end{align*}
        \item If $s\leq d_C$, the same inequalities can be constructed:
        \begin{align*}
            P_{d_C}(s)
            &\leq\e^{\varepsilon_\intset}P_{d_C+1}(s)+\delta_\intset \\
            \Longleftrightarrow\alpha\e^{\varepsilon_\intset(s-\Delta_\intset\S+d_C)} + D^{(d_C)}_{s}
            &\leq\e^{\varepsilon_\intset}(\alpha\e^{\varepsilon_\intset(s-\Delta_\intset\S+d_C+1)} + D^{(d_C+1)}_{s})+\delta_\intset \\
            \Longleftrightarrow\alpha\e^{\varepsilon_\intset(s-\Delta_\intset\S+d_C)} + D^{(d_C)}_{s}
            &\leq\alpha\e^{\varepsilon_\intset(s-\Delta_\intset\S+d_C+2)} + \e^{\varepsilon_\intset}D^{(d_C+1)}_{s}+\delta_\intset \\
            \Longleftrightarrow D^{(d_C)}_{s} -\e^{\varepsilon_\intset}D^{(d_C+1)}_{s}
            &\leq \delta_\intset + \alpha\e^{\varepsilon_\intset(s-\Delta_\intset\S+d_C)}(\e^{2\varepsilon_\intset}-1),
        \end{align*}
        and
        \begin{align*}
            P_{d_C+1}(s)
            &\leq\e^{\varepsilon_\intset}P_{d_C}(s)+\delta_\intset \\
            \Longleftrightarrow\alpha\e^{\varepsilon_\intset(s-\Delta_\intset\S+d_C+1)} + D^{(d_C+1)}_{s}
            &\leq\e^{\varepsilon_\intset}(\alpha\e^{\varepsilon_\intset(s-\Delta_\intset\S+d_C)} + D^{(d_C)}_{s})+\delta_\intset \\
            \Longleftrightarrow\alpha\e^{\varepsilon_\intset(s-\Delta_\intset\S+d_C+1)} + D^{(d_C+1)}_{s}
            &\leq\alpha\e^{\varepsilon_\intset(s-\Delta_\intset\S+d_C+1)} + \e^{\varepsilon_\intset}D^{(d_C)}_{s}+\delta_\intset \\
            \Longleftrightarrow D^{(d_C+1)}_{s} -\e^{\varepsilon_\intset}D^{(d_C)}_{s}
            &\leq \delta_\intset.
        \end{align*}
    \end{itemize}

    Having covered each case, we obtain that $\M|_\intset$ is $(\varepsilon_\intset,\delta_\intset)$-DP. We note that it is tightly DP since selecting $s=0$ and $C=\S(D)$, we have that
    \[
        P_{1}(0) = \alpha\e^{-\varepsilon_\intset(\Delta_\intset\S-1)} + D^{(1)}_{0} = \alpha\e^{-\varepsilon_\intset(\Delta_\intset\S-1)} + \delta_\intset
        =\e^{\varepsilon_\intset}(\alpha\e^{-\varepsilon_\intset\Delta_\intset\S} + 0)+\delta_\intset = \e^{\varepsilon_\intset}(\alpha\e^{-\varepsilon_\intset\Delta_\intset\S} + D^{(0)}_{0})+\delta_\intset
        =\e^{\varepsilon_\intset}P_{0}(0)+\delta_\intset.
    \]
    
    Now observe too that, for $\varepsilon_\intset\neq0$ and for all neighboring $C,C'\in\D$ with $d_{C'}=d_C+1$ and $\meas=\Z_{\leq0}$, 
    \begin{gather*}
        \Prob\{\M|_{\intset}(C)\in\meas\} = \sum_{s\leq0} \alpha\e^{-\varepsilon_\intset|s-\Delta_\intset\S+d_C|}+\sum_{s\leq0} D^{(d_C)}_s = \sum^\infty_{s=0} \alpha\e^{-\varepsilon_\intset(s+\Delta_\intset\S-d_C)}+\delta_\intset\sum^{d_C-1}_{i=0}\e^{\varepsilon_\intset i}
        = \alpha\frac{\e^{-\varepsilon(\Delta_\intset\S-d_C)}}{1-\e^{-\varepsilon_\intset}}+\delta_\intset\sum^{d_C-1}_{i=0}\e^{\varepsilon_\intset\,i}
    \end{gather*}
    and, thus,
    \[
        \Prob\{\M|_{\intset}(C')\in\meas\} = \alpha\frac{\e^{-\varepsilon_\intset(\Delta_\intset\S-d_C-1)}}{1-\e^{-\varepsilon_\intset}}+\delta_\intset\sum^{d_C}_{i=0}\e^{\varepsilon_\intset i}
        = \e^{\varepsilon_\intset}\bigg(\alpha\frac{\e^{-\varepsilon_\intset(\Delta_\intset\S-d_C)}}{1-\e^{-\varepsilon_\intset}}+\delta_\intset\sum^{d_C-1}_{i=0}\e^{\varepsilon_\intset i}\bigg) + \delta_\intset = \e^{\varepsilon_\intset}\Prob\{\M|_{\intset}(C)\in\meas\} + \delta_\intset. 
    \]

    Note that the last equality also holds for $\varepsilon_\intset=0$. Thus, finally selecting $C=\S(D)$ and $C'=\S(D')$ with $D'$ the neighboring database to $D$ such that $\dist_\intset(\S(D),\S(D'))=\Delta_\intset\S$, we obtain recursively that 
    \[
        \Prob\{\M(\S(D'))\in\meas\} = \e^{\varepsilon_\intset\,\Delta_\intset\S}\Prob\{\M(\S(D))\in\meas\} + \delta_\intset\sum^{\Delta_\intset\S-1}_{i=0}\e^{\varepsilon_\intset i},
    \]
    which proves tightness of \Cref{th:deterministicsuppression}.

    Finally, the case where $\Delta_\intset\S=\infty$ also follows from the previous proof. Since we can construct $D,D'$ at distance $k$ for all $k\in\N$, taking the limit leads to $\infty$-DP.
\end{proof}

\begin{proposition}\label{prop:determinsticsuppressionwithinfinitesensitivity}
    Let $\D=\D_{\X}$ be the class of all databases with elements drawn from $\X$ with $|\X|\geq2$. Let $\dist\colon \X\times\X\to[0,1]$ be any normalized distance over $\X$ 
    and we denote $\operatorname{avg}(x,D)$ the average distance of $x$ to all elements in $D$. 
    Consider the suppression strategy $\S_{K}$ such that $\S_{K}(D)=\{x\in D\mid \operatorname{avg}(x,D)\leq K\}$ for any $K\in(0,1)\cap\Q$. Then $\S_K$ has sensitivity $\Delta_{\D}\S_K=\infty$. 
\end{proposition}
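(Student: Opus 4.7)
The strategy is to exhibit, for every $N \in \N$, a pair of unbounded-neighboring databases $D, D' \in \D$ with $|\S_K(D)\Delta\S_K(D')| \geq N$, which forces $\Delta_\D\S_K = \infty$.

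By the normalization of $\dist$ together with $K < 1$, I fix two distinct elements $x_1, x_2 \in \X$ with $d \coloneqq \dist(x_1, x_2) > K$; such a pair exists because $\dist$ takes values arbitrarily close to its supremum $1 > K$. Set $r^\ast \coloneqq K/(d-K) > 0$. For each $n \in \N$, let $m \coloneqq \lfloor r^\ast n\rfloor + 1 \geq 1$, take $D$ to be the multiset containing $n$ copies of $x_1$ and $m$ copies of $x_2$, and let $D' \coloneqq D \setminus \{x_2\}$, i.e., the multiset with $n$ copies of $x_1$ and $m-1$ copies of $x_2$. Since $D'$ is obtained from $D$ by deleting a single record, $D$ and $D'$ are unbounded-neighboring.

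A direct computation gives $\operatorname{avg}(x_1, D) = \frac{md}{n+m}$ and $\operatorname{avg}(x_1, D') = \frac{(m-1)d}{n+m-1}$; elementary rearrangement shows that $m > r^\ast n$ is equivalent to $\operatorname{avg}(x_1, D) > K$, while $m-1 \leq r^\ast n$ is equivalent to $\operatorname{avg}(x_1, D') \leq K$. By construction $m = \lfloor r^\ast n\rfloor + 1$ satisfies both inequalities, so each of the $n$ copies of $x_1$ is suppressed in $\S_K(D)$ but retained in $\S_K(D')$. Regardless of how $\S_K$ treats the $x_2$-records, those $n$ copies of $x_1$ all lie in $\S_K(D') \setminus \S_K(D)$, yielding $|\S_K(D)\Delta\S_K(D')| \geq n$; letting $n \to \infty$ gives $\Delta_\D\S_K = \infty$.

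The main technical point is ensuring the threshold $K$ can be straddled by a single-record modification of the $x_1$-average. I handle this by neighbouring through removal of one $x_2$ (rather than addition of one $x_1$), which turns the feasibility condition into $m \in (r^\ast n,\, r^\ast n + 1]$---an interval of length exactly $1$ that always contains an integer. Consequently no rationality or Diophantine approximation argument on $r^\ast$ is needed, and the construction works uniformly for every $n \in \N$.
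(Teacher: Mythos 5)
Your proof is correct. It follows the same overall strategy as the paper's proof—two-valued multiset databases $D\sim D'$ in which deleting a single record flips the suppression status of linearly many copies of the other value, so that $|\S_K(D)\Delta\S_K(D')|$ is unbounded—but the construction is tuned differently. The paper fixes the multiplicities in proportion $K$ to $1-K$ (which is why it needs $K\in\Q$, via $NK\in\N$, to get integer counts) and then pushes the distance of the chosen pair toward $1$ (using pairs $x'_n,y'_n$ with $\dist(x'_n,y'_n)>1-\frac{1}{nN}$) so that removing one near record pushes $\operatorname{avg}(y'_n,\cdot)$ just above $K$. You instead fix a single pair $x_1,x_2$ with $d=\dist(x_1,x_2)>K$ and tune the multiplicity ratio via $m=\lfloor r^\ast n\rfloor+1$ with $r^\ast=K/(d-K)$, so that removing one copy of $x_2$ drops $\operatorname{avg}(x_1,\cdot)$ from above $K$ to at most $K$; the half-open interval $(r^\ast n,\,r^\ast n+1]$ always contains an integer, so the construction works for every $n$. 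Your threshold computations ($m>r^\ast n \Leftrightarrow \operatorname{avg}(x_1,D)>K$ and $m-1\le r^\ast n \Leftrightarrow \operatorname{avg}(x_1,D')\le K$) check out, and the conclusion $|\S_K(D)\Delta\S_K(D')|\ge n$ is valid regardless of how the $x_2$-records are treated. What your variant buys: it never uses rationality of $K$ and only needs one pair of points at distance exceeding $K$ (rather than distances arbitrarily close to $1$), so it is marginally more general and self-contained; the paper's version is equally valid under its stated hypotheses and makes the role of the proportion $K$ versus $1-K$ explicit.
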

\begin{proof}
    This proof consists of showing that there are neighboring databases, $D$ and $D'$, such that many elements are deleted in $\S(D)$ but not in $\S(D')$. If this difference scales with the size of the databases, then the sensitivity must be infinite because it captures the worst-case difference. We will now describe the database pairs $(D_n,D'_n)$ for which the difference becomes unbounded as $n\to\infty$. 

    For all $K\in(0,1)\cap\Q$, there exists $N\in\N, N\geq2$ such that $NK\in\N$. Since the distance is normalized, for every $n\in\N$, there exists $x'_n,y'_n\in\X$ such that $\dist(x'_n,y'_n)>1-\frac{1}{nN}$. 
    We consider an infinite sequence of databases $\{D_n\}_{n\in\N}$ such that $D_n$ contains $nNK$ copies of $x'_n$ and $nN(1-K)$ copies of $y'_n$ for all $n\in\N$ (in particular, $|D_n|=nN$). We consider their respective neighboring databases $\{D'_n\}_{n\in\N}$ that contain $nNK-1$ copies of $x'_n$ and $nN(1-K)$ copies of $y'_n$.

    In particular, we can see that 
    \[
        \operatorname{avg}(y'_n,D_n) = \frac{nNK\dist(x'_n,y'_n)+(nN(1-K))\dist(y'_n,y'_n)}{nN} = K\dist(x'_n,y'_n) \leq K
    \]
    and 
    \begin{gather*}
        \operatorname{avg}(y'_n,D'_n) = \frac{nNK}{nN-1}\dist(x'_n,y'_n) = \frac{(nN-1)K+K}{nN-1}\dist(x'_n,y'_n) > 
        K-\frac{K}{nN}+\frac{K}{nN-1}\bigg(1-\frac{1}{nN}\bigg) = K.
    \end{gather*}

    Thus, for all $n\in\N$, $S_K$ suppresses all copies of $y'_n$ from $D'_n$ but none from $D_n$. Consequently, denoting $Y_n=\{y'_n,\dots,y'_n\}$ as the database (multiset) with $nN(1-K)$ copies of $y'_n$, we have that $Y_n\subseteq\S_K(D_n)$ and $Y_n\cap\S_K(D'_n)=\varnothing$ for all $n\in\N$. Thus, $Y_n\subseteq\S_K(D_n)\Delta\S_K(D'_n)$ and 
    \[
        \Delta_{\D}\S_K = \sup_{D\sim D'}
        |\S_K(D)\Delta\S_K(D')| \geq \sup_{n\in\N} |\S_K(D_n)\Delta\S_K(D'_n)| \geq \sup_{n\in\N} |Y_n| = \sup_{n\in\N} nN(1-K) = \infty. \qedhere 
    \]
\end{proof}

\begin{proposition}\label{prop:determinsticsuppressionwithinfinitesensitivity2}
    Let $\D=\D_{\X}$ be the class of all databases with elements drawn from $\X$ with $|\X|\geq2$. Let $\dist\colon \X\times\X\to[0,1]$ be any distance over $\X$ 
    and we denote $\operatorname{avg}(x,D)$ the average distance of $x$ to all elements in $D$. 
    For any $p\in(0,\frac{1}{2}]$, consider the suppression strategy $\S_{p}$ that deletes the top $\floor{p|D|}$ records with the highest average distance to the records of its input $D$ (in the case of ties among the top $\floor{p|D|}$ elements, we delete all elements). Then $\S_p$ has sensitivity $\Delta_{\D}\S_p=\infty$.
\end{proposition}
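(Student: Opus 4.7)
The plan is to exhibit, for each sufficiently large $n$, a pair of unbounded-neighboring databases $D_n,D'_n$ such that $|\S_p(D_n)\Delta\S_p(D'_n)|$ grows without bound, which directly forces $\Delta_\D\S_p=\infty$. The construction will exploit the tie-breaking clause: when every record in $D_n$ has the same average distance they are all tied at the top threshold and hence all deleted, whereas adding a single record to obtain $D'_n$ can break this symmetry so that $\S_p(D'_n)$ still retains a large multiset of records.

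Concretely, I would fix two distinct $x,y\in\X$ (which exist because $|\X|\geq 2$) and set $d\coloneqq\dist(x,y)>0$; then, for each $n\geq\lceil 1/(2p)\rceil$, take $D_n$ to be the multiset consisting of $n$ copies of $x$ and $n$ copies of $y$, and $D'_n\coloneqq D_n\uplus\{x\}$. A short calculation then shows that every record in $D_n$ has average distance exactly $d/2$, so all $2n$ records are tied; since $\lfloor p|D_n|\rfloor\geq 1$, the tie clause fires and gives $\S_p(D_n)=\varnothing$. For $D'_n$, the $n$ copies of $y$ have strictly larger average distance $\frac{(n+1)d}{2n+1}$ than the $n+1$ copies of $x$, whose average is $\frac{nd}{2n+1}$; and because $p\leq 1/2$ we have $\lfloor p|D'_n|\rfloor\leq n$, so the selection threshold lies entirely within the tied pool of $y$-copies. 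The tie clause then removes all $n$ copies of $y$ and nothing else, leaving $\S_p(D'_n)$ equal to the multiset of $n+1$ copies of $x$.

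Combining the two computations yields $|\S_p(D_n)\Delta\S_p(D'_n)|=n+1$, and since $D_n\sim D'_n$ for every $n$, letting $n\to\infty$ forces $\Delta_\D\S_p=\infty$. The main subtlety I anticipate is simply pinning down the intended reading of the tie-breaking clause so that $\S_p(D_n)$ and $\S_p(D'_n)$ are both unambiguously defined; under the natural interpretation (``delete every record tied at or above the selection threshold''), the plan above carries through with only routine arithmetic, and it adapts cleanly to the boundary $p=1/2$ because $\lfloor p(2n+1)\rfloor=n$ still leaves the cutoff inside the $y$-tier.
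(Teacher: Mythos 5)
Your proof is correct and follows essentially the same strategy as the paper's: exhibit unbounded-neighboring pairs in which the tie-breaking clause makes one output empty while the other retains linearly many records, so that $|\S_p(D_n)\Delta\S_p(D'_n)|\to\infty$ and hence $\Delta_{\D}\S_p=\infty$. Your construction is just a mirrored variant of the paper's (the paper removes a record so that the formerly strictly-separated top block becomes tied and the output collapses to $\varnothing$, whereas you add a record to an all-tied database so the tie breaks and the $x$-copies survive), and your reading of the tie clause coincides with the one the paper itself uses.
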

\begin{proof}
    Like for \Cref{prop:determinsticsuppressionwithinfinitesensitivity}, the proof consists of showing that there are neighboring databases, $D$ and $D'$, such that many elements are deleted in $\S(D)$ but not in $\S(D')$. If this difference scales with the size of the databases, then the sensitivity must be infinite because it captures the worst-case difference. We will now describe the database pairs $(D_N,D'_N)$ for which the difference becomes unbounded as $N\to\infty$. 

    Let $x',y'\in\X$ such that $x'\neq y'$, i.e., $\dist(x',y')\neq0$.   
    
    For all $N\in\N$ such that $\floor{pN}\geq1$, we consider the database $D_N$ with $N$ elements, $\floor{pN}$ copies of $x'$ and $N-\floor{pN}$ copies of $y'$. We have that
    \[
        \operatorname{avg}(x',D_N)=\frac{N-\floor{pN}}{N}\dist(x',y') \qquad \text{and}\qquad \operatorname{avg}(y',D_N)=\frac{\floor{pN}}{N}\dist(x',y'). 
    \]
    
    If $p\leq\frac{1}{2}$, then $\floor{pN}\leq \floor{\frac{1}{2}N}\leq\frac{1}{2}N$ and thus $\operatorname{avg}(y',D_N)\leq\operatorname{avg}(x',D_N)$. In $D_N$, the $\floor{pN}$ records with the highest average distance are the $\floor{pN}$ copies of $x'$. Thus, $\S_p(D_N)=\{y',\dots,y'\}$, i.e., the $N-\floor{pN}$ copies of $y'$. 
    
    Now, if we consider the neighboring database $D'_N$ to $D_N$ defined so that it has one less copy of $x'$ (this is defined since $\floor{pN}\geq1$). In this case, we also obtain that $\operatorname{avg}(y',D_N)\leq\operatorname{avg}(x',D_N)$, but since there are fewer than $\floor{pN}$ copies of $x'$, every copy of $y'$ ties as the $\floor{pN}$th element with the highest average distance. Consequently, by definition of $\S_p$, we obtain that $\S_p(\D'_N)=\varnothing$.  
    
    Thus, 
    \[
        \Delta_{\D}\S_p = \sup_{D\sim D'}
        |\S_p(D)\Delta\S_p(D')| \geq \sup_{\substack{N\in\N:\\\floor{pN}\geq1}} |S_p(D_N)\Delta\S_p(D'_N)| \geq \sup_{\substack{N\in\N:\\\floor{pN}\geq1}} (N-\floor{pN}) = \infty. \qedhere 
    \]
\end{proof}

\begin{remark}\label{prop:deterministicsuppressionDataDependentSensitivity1}
    The suppression algorithm that, given $D$, outputs $D$ if $|D|=1$ and $\varnothing$ otherwise is data-dependent with sensitivity~$1$ by construction. This suppression algorithm has no applications and is only presented for illustrative purposes. 
\end{remark}

\subsection{Proofs of \Cref{sec:mainsuppression}: \nameref*{sec:mainsuppression}}

\THgeneralunboundedtheorem*
\begin{proof}
    First note that the expressions are well-defined since the support condition ensures that there is no division by $0$.
    
    We fix $D,D'\in\D$ unbounded-neighboring databases with $D'=D\uplus\{y\}$. We need to see that for all measurable $\meas\subseteq\Range(\M\circ\S)\subseteq\Range(\M)=\Ran$,
    \begin{gather}
        \Prob\{\M(\S(D))\in\meas\}\leq \e^{\varepsilon^\S_{D,D'}}\Prob\{\M(\S(D'))\in\meas\}+\delta^\S_{D,D'}, \label{eq:suppressionUnbounded1} \\
        \Prob\{\M(\S(D'))\in\meas\}\leq \e^{\varepsilon^\S_{D',D}}\Prob\{\M(\S(D))\in\meas\}+\delta^\S_{D',D}, \label{eq:suppressionUnbounded2}
    \end{gather}
    with $\varepsilon^\S_{D,D'}$, $\delta^\S_{D,D'}$, $\varepsilon^\S_{D',D}$ and $\delta^\S_{D',D}$ as defined in the statement.
    
   We prove first inequality~\ref{eq:suppressionUnbounded1}. To simplify notation, we denote $p_C \coloneqq \Prob\{\S(D')=C\}+\e^{-\varepsilon}\Prob\{\S(D')=C_{+y}\}$ for all $C\in\supp(\S(D))$, which is non-zero by the support condition. Since $\S(D)$ are discrete random variables, by the law of total probability and further manipulations, we have that 
    \begin{align*}
       \Prob\{\M(\S(D))\in\meas\} 
       ={}& \sum_{C\in\supp(\S(D))} \Prob\{\M(C)\in\meas\}\Prob\{\S(D)=C\} \\
       ={}& \sum_{C\in\supp(\S(D))} \Prob\{\M(C)\in\meas\}\frac{p_C}{p_C}\Prob\{\S(D)=C\} \\
       ={}& \sum_{C\in\supp(\S(D))} \Prob\{\M(C)\in\meas\}\Prob\{\S(D')=C\}\frac{\Prob\{\S(D)=C\}}{p_C} \\
       &+ \sum_{C\in\supp(\S(D))} \e^{-\varepsilon}\Prob\{\M(C)\in\meas\}\Prob\{\S(D')=C_{+y}\}\frac{\Prob\{\S(D)=C\}}{p_C}.
    \end{align*}

    Applying that $\M$ is $(\varepsilon,\delta)$-DP in the second sum, we obtain
    \begin{align*}
        \Prob\{\M(\S(D))\in\meas\} 
        ={}& \sum_{C\in\supp(\S(D))} \Prob\{\M(C)\in\meas\}\Prob\{\S(D')=C\}\frac{\Prob\{\S(D)=C\}}{p_C} \\
        &+ \sum_{C\in\supp(\S(D))} \e^{-\varepsilon}\Prob\{\M(C)\in\meas\}\Prob\{\S(D')=C_{+y}\}\frac{\Prob\{\S(D)=C\}}{p_C} \\
        \leq{}& \sum_{C\in\supp(\S(D))} \Prob\{\M(C)\in\meas\}\Prob\{\S(D')=C\}\frac{\Prob\{\S(D)=C\}}{p_C} \\
        &+ \sum_{C\in\supp(\S(D))} \e^{-\varepsilon}(\e^{\varepsilon}\Prob\{\M(C_{+y})\in\meas\}+\delta)\Prob\{\S(D')=C_{+y}\}\frac{\Prob\{\S(D)=C\}}{p_C} \\
        ={}& \sum_{C\in\supp(\S(D))} \Prob\{\M(C)\in\meas\}\Prob\{\S(D')=C\}\frac{\Prob\{\S(D)=C\}}{p_C} \\
        &+ \sum_{C\in\supp(\S(D))} \Prob\{\M(C_{+y})\in\meas\}\Prob\{\S(D')=C_{+y}\}\frac{\Prob\{\S(D)=C\}}{p_C} \\
        &+ \sum_{C\in\supp(\S(D))} \delta\e^{-\varepsilon}\Prob\{\S(D')=C_{+y}\}\frac{\Prob\{\S(D)=C\}}{p_C} \\
        \leq{}& \bigg(\max_{C\in\supp(\S(D))} \frac{\Prob\{\S(D)=C\}}{p_C}\bigg) \sum_{C\in\supp(\S(D))} \Prob\{\M(C)\in\meas\}\Prob\{\S(D')=C\} \\
        &+ \bigg(\max_{C\in\supp(\S(D))} \frac{\Prob\{\S(D)=C\}}{p_C}\bigg) \sum_{C\in\supp(\S(D))} \Prob\{\M(C_{+y})\in\meas\}\Prob\{\S(D')=C_{+y}\} \\
        &+ \sum_{C\in\supp(\S(D))} \delta\e^{-\varepsilon}\Prob\{\S(D')=C_{+y}\}\frac{\Prob\{\S(D)=C\}}{p_C}.
    \end{align*}

    Note that the last sum equals $\delta^\S_{D,D'}$ and $\e^{\varepsilon^\S_{D,D'}}=\max_{C\in\supp(\S(D))}\frac{\Prob\{\S(D)=C\}}{p_C}$. Therefore, we have that
    \begin{align*}
        \Prob\{\M(\S(D))\in\meas\} 
        \leq{}& \e^{\varepsilon^\S_{D,D'}}\sum_{C\in\supp(\S(D))} \Prob\{\M(C)\in\meas\}\Prob\{\S(D')=C\} \\
        &+ \e^{\varepsilon^\S_{D,D'}} \sum_{C\in\supp(\S(D))} \Prob\{\M(C_{+y})\in\meas\}\Prob\{\S(D')=C_{+y}\} + \delta^\S_{D,D'} \\
        ={}& \e^{\varepsilon^\S_{D,D'}}\sum_{\substack{C'\in\supp(\S(D')):\\y\notin C'}} \Prob\{\M(C')\in\meas\}\Prob\{\S(D')=C'\} \\
        &+ \e^{\varepsilon^\S_{D,D'}}\sum_{\substack{C'\in\supp(\S(D')):\\y\in C'}} \Prob\{\M(C')\in\meas\}\Prob\{\S(D')=C'\} + \delta^\S_{D,D'} \\
        ={}& \e^{\varepsilon^\S_{D,D'}}\sum_{C'\in\supp(\S(D'))} \Prob\{\M(C')\in\meas\}\Prob\{\S(D')=C'\} + \delta^\S_{D,D'} \\
        ={}& \e^{\varepsilon^\S_{D,D'}}\Prob\{\M(\S(D'))\in\meas\} + \delta^\S_{D,D'}, 
    \end{align*}
    proving inequality~\ref{eq:suppressionUnbounded1}.
    Now we see inequality~\ref{eq:suppressionUnbounded2}. Once again, by the law of total probability and further manipulations, we have that 
    \begin{align*}
       \Prob\{\M(\S(D'))\in\meas\} 
       ={}& \sum_{C'\in\supp(\S(D'))} \Prob\{\M(C')\in\meas\}\Prob\{\S(D')=C'\} \\
       ={}& \sum_{\substack{C'\in\supp(\S(D')):\\y\notin C'}} \Prob\{\M(C')\in\meas\}\Prob\{\S(D')=C'\}
       + \sum_{\substack{C'\in\supp(\S(D')):\\y\in C'}} \Prob\{\M(C')\in\meas\}\Prob\{\S(D')=C'\} \\
       ={}& \sum_{C\in\supp(\S(D))} \Prob\{\M(C)\in\meas\}\Prob\{\S(D')=C\}
       + \sum_{C\in\supp(\S(D))} \Prob\{\M(C_{+y})\in\meas\}\Prob\{\S(D')=C_{+y}\}.
     \end{align*}

    Now, since $\M$ is $(\varepsilon,\delta)$-DP and $C$ and $C_{+y}$ are unbounded-neighboring, we have that
    \begin{align*}
       \Prob\{\M(\S(D'))\in\meas\} 
       ={}& \sum_{C\in\supp(\S(D))} \Prob\{\M(C)\in\meas\}\Prob\{\S(D')=C\}
       + \sum_{C\in\supp(\S(D))} \Prob\{\M(C_{+y})\in\meas\}\Prob\{\S(D')=C_{+y}\} \\
       \leq{}& \sum_{C\in\supp(\S(D))} \Prob\{\M(C)\in\meas\}\Prob\{\S(D')=C\}
       + \sum_{C\in\supp(\S(D))} (\e^\varepsilon\Prob\{\M(C)\in\meas\}+\delta)\Prob\{\S(D')=C_{+y}\} \\
       ={}& \underbrace{\sum_{C\in\supp(\S(D))} \Prob\{\M(C)\in\meas\}(\Prob\{\S(D')=C\}+\e^\varepsilon\Prob\{\S(D')=C_{+y}\})}_{\eqqcolon a}
       + \underbrace{\delta\sum_{C\in\supp(\S(D))} \Prob\{\S(D')=C_{+y}\}.}_{\eqqcolon b}
    \end{align*}

    Let $q_C\coloneqq \Prob\{\S(D')=C\}+\e^\varepsilon\Prob\{\S(D')=C_{+y}\}$ to simplify the notation. Further manipulating the values $a$ and $b$, we obtain
    \begin{align*}
        a &= \sum_{C\in\supp(\S(D))} \Prob\{\M(C)\in\meas\}q_C \\
        &= \sum_{C\in\supp(\S(D))} \Prob\{\M(C)\in\meas\}\Prob\{\S(D)=C\}\frac{q_C}{\Prob\{\S(D)=C\}}\\
        &\leq \bigg(\max_{C\in\supp(\S(D))} \frac{q_C}{\Prob\{\S(D)=C\}}\bigg)\sum_{C\in\supp(\S(D))} \Prob\{\M(C)\in\meas\}\Prob\{\S(D)=C\} \\
        &= \e^{\varepsilon^\S_{D',D}}\Prob\{\M(\S(D))\in\meas\}, 
    \end{align*}
    and
    \begin{align*}
        b = \delta \sum_{C\in\supp(\S(D))} \Prob\{\S(D')=C_{+y}\} = \delta \Prob\{y\in\S(D')\} = \delta^\S_{D',D}.
    \end{align*}

    Thus, we obtain inequality~\ref{eq:suppressionUnbounded2}:
    \[
        \Prob\{\M(\S(D'))\in\meas\} \leq a+b \leq \e^{\varepsilon^\S_{D',D}}\Prob\{\M(\S(D))\in\meas\} + \delta^\S_{D',D}. \qedhere
    \]
\end{proof}

\begin{definition}[Assignation function]
    Let $\S$ be a suppression algorithm with domain $\D$. Let $D$ and $D'$ be two unbounded-neighboring databases in $\D$. Assume $D'=D_{+y}$. Then, an \textit{assignation $\mathcal{A}_{D,D'}$ from $D$ to $D'$} is any function 
    \[
        \mathcal{A}_{D,D'}\colon \{C\subseteq D\mid C\in\supp(\S(D))\} \longrightarrow \{C\subseteq D\mid C\in\supp(\S(D')) \text{ or } C_{+y}\in\supp(\S(D'))\},
    \]
    and an \textit{assignation $\mathcal{A}_{D',D}$ from $D'$ to $D$} is any function 
    \[
        \mathcal{A}_{D',D}\colon \{C\subseteq D\mid C\in\supp(\S(D')) \text{ or } C_{+y}\in\supp(\S(D'))\} \longrightarrow \{C\subseteq D\mid C\in\supp(\S(D))\}.
    \]

    We denote the fiber of $C$ under $\mathcal{A}_{D,D'}$ by $\mathcal{A}_{D,D'}^{-1}[C]$, which is defined as the set of all databases $C^*$ in the domain of $\mathcal{A}_{D,D'}$ such that $\mathcal{A}_{D,D'}(C^*)=C$ (analogously for $\mathcal{A}_{D',D}$). Naturally, if $C$ is not in the image of $\mathcal{A}_{D,D'}$, we have that $\mathcal{A}_{D,D'}^{-1}[C]=\varnothing$.  
\end{definition}

\begin{theorem}[Suppression theorem (general result)]\label{th:generalunboundedtheorem}
    Let $\M$ with domain $\D$ be a mechanism that satisfies unbounded $(\varepsilon,\delta)$-DP and $\S$ with domain $\D$ a suppression algorithm.
    
    Then, $\M\circ\S$ is unbounded $(\varepsilon^{\S},\delta^{\S})$-DP with 
    \[
        \varepsilon^{\S} = \sup_{\substack{D,D'\in\D\\\text{neighb.}}} \varepsilon^{\S}_{D,D'} \quad \text{and} \quad \delta^{\S} = \sup_{\substack{D,D'\in\D\\\text{neighb.}}} \delta^{\S}_{D,D'},
    \]
    where $\varepsilon^{\S}_{D,D'}$ and $\delta^{\S}_{D,D'}$ depend on an assignation $\mathcal{A}_{D,D'}$ and are defined as follows:
    If $D'=D_{+y}$, then
    \begin{align*}
        \e^{\varepsilon^{\S}_{D,D'}}=\max_{\substack{C\subseteq D:\\C\in\supp(\S(D'))\\\text{or }C_{+y}\in\supp(\S(D'))}}\frac{1}{\Prob\{\S(D')=C\}+\e^{-\varepsilon}\Prob\{\S(D')=C_{+y}\}}\sum_{\substack{C^*\in\mathcal{A}^{-1}_{D,D'}[C]}}\e^{\varepsilon|C\Delta C^*|}\Prob\{\S(D)=C^*\}
    \end{align*}
    and
    \begin{multline*}
        \delta^{\S}_{D,D'}=\delta\sum_{\substack{C\subseteq D:\\C\in\supp(\S(D'))\\\text{or }C_{+y}\in\supp(\S(D'))}}\bigg(\frac{\e^{-\varepsilon}\Prob\{\S(D')=C_{+y}\}}{\Prob\{\S(D')=C\}+\e^{-\varepsilon}\Prob\{\S(D')=C_{+y}\}}\sum_{\substack{C^*\in\mathcal{A}^{-1}_{D,D'}[C]}}\e^{\varepsilon|C\Delta C^*|}\Prob\{\S(D)=C^*\}\\+\sum_{\substack{C^*\in\mathcal{A}^{-1}_{D,D'}[C]}}\Prob\{\S(D)=C^*\}\sum^{|C\Delta C^*|-1}_{k=0}\e^{\varepsilon k}\bigg),
    \end{multline*}
    and, since the values are not symmetric with respect to $D$ and $D'$, 
    \begin{align*}
        \e^{\varepsilon^{\S}_{D',D}}=\max_{\substack{C\in\supp(\S(D))}}\frac{1}{\Prob\{\S(D)=C\}}\sum_{\substack{C^*\in\mathcal{A}^{-1}_{D',D}[C]}}\e^{\varepsilon|C\Delta C^*|}\big(\Prob\{\S(D')=C^*\}+\e^{\varepsilon}\Prob\{\S(D')=C^*_{+y}\}\big)
    \end{align*}
    and
    \[
        \delta^{\S}_{D',D}=\delta\sum_{\substack{C\in\supp(\S(D))}}\sum_{\substack{C^*\in\mathcal{A}^{-1}_{D',D}[C]}}\bigg(\Prob\{\S(D')=C^*\}\sum^{|C\Delta C^*|-1}_{k=0}\e^{\varepsilon k}+\Prob\{\S(D')=C^*_{+y}\}\sum^{|C\Delta C^*|}_{k=0}\e^{\varepsilon k}\bigg).
    \]
\end{theorem}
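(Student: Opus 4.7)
The plan is to adapt the argument of \Cref{th:unboundedcase} by using group privacy (as in the chain-of-neighbors step of \Cref{th:deterministicsuppression}) to ``repair'' every subset $C^* \in \supp(\S(D))$ that violates the support condition. Given an assignation $\mathcal{A}_{D,D'}$, each offending $C^*$ is routed to the surrogate $C = \mathcal{A}_{D,D'}(C^*)$ (which by construction does satisfy the support condition), paying the group-privacy price $\e^{\varepsilon |C \Delta C^*|}$ multiplicatively and $\delta \sum_{k=0}^{|C \Delta C^*|-1} \e^{\varepsilon k}$ additively. Once every $C^*$ has been replaced by some well-behaved $C$, the surviving computation reduces almost verbatim to the one in \Cref{th:unboundedcase}.

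Fix unbounded-neighboring $D, D' = D_{+y} \in \D$ and measurable $\meas \subseteq \Range(\M)$. For the first inequality, I would start from $\Prob\{\M(\S(D)) \in \meas\} = \sum_{C^*\in\supp(\S(D))} \Prob\{\M(C^*) \in \meas\} \Prob\{\S(D) = C^*\}$ and apply group privacy between $C^*$ and $C = \mathcal{A}_{D,D'}(C^*)$ (which are at symmetric-difference distance $|C\Delta C^*|$) to obtain $\Prob\{\M(C^*) \in \meas\} \le \e^{\varepsilon |C \Delta C^*|} \Prob\{\M(C) \in \meas\} + \delta \sum_{k=0}^{|C\Delta C^*|-1} \e^{\varepsilon k}$. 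Reindexing the outer sum by the fibers $\mathcal{A}_{D,D'}^{-1}[C]$ and inserting the factor $p_C / p_C$ with $p_C = \Prob\{\S(D')=C\} + \e^{-\varepsilon}\Prob\{\S(D')=C_{+y}\}$ (the same trick as in \Cref{th:unboundedcase}), the main term becomes bounded by $\bigl(\max_C \tfrac{1}{p_C}\sum_{C^*\in\mathcal{A}_{D,D'}^{-1}[C]} \e^{\varepsilon|C\Delta C^*|} \Prob\{\S(D)=C^*\}\bigr) \cdot \Prob\{\M(\S(D')) \in \meas\}$, which is exactly $\e^{\varepsilon^\S_{D,D'}} \Prob\{\M(\S(D')) \in \meas\}$ after re-absorbing the sum over $C$ (splitting target subsets $C'$ of $\S(D')$ according to whether $y \in C'$ or not, precisely as in the proof of \Cref{th:unboundedcase}). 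The remaining additive residue is the sum of the ``$\delta \e^{-\varepsilon} \Prob\{\S(D')=C_{+y}\}/p_C$'' contribution inherited from the $(\varepsilon,\delta)$-DP application to $\M$ on the pair $(C,C_{+y})$ plus the $\delta \sum_{k=0}^{|C\Delta C^*|-1}\e^{\varepsilon k}$ contributions inherited from group privacy; collecting them gives $\delta^\S_{D,D'}$ verbatim.

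The reverse inequality follows the mirror-image strategy. Expand $\Prob\{\M(\S(D')) \in \meas\}$ and partition $\supp(\S(D'))$ into subsets $C^*$ with $y \notin C^*$ and $C^*_{+y}$ with $y \in C^*_{+y}$; apply $(\varepsilon,\delta)$-DP of $\M$ between $C^*$ and $C^*_{+y}$ to merge each pair into a single term centered at $C^*$; then use the assignation $\mathcal{A}_{D',D}$ together with group privacy to transport each $C^*$ onto a target $C \in \supp(\S(D))$. The asymmetry of the bound arises here: the transport of the $C^*_{+y}$ term picks up an extra factor $\e^{\varepsilon}$ (from the $\M$-DP step through $C^*_{+y} \mapsto C^*$), which is what produces the $\Prob\{\S(D')=C^*\} + \e^{\varepsilon}\Prob\{\S(D')=C^*_{+y}\}$ numerator in $\e^{\varepsilon^\S_{D',D}}$ and the longer inner sum $\sum_{k=0}^{|C\Delta C^*|}\e^{\varepsilon k}$ (one term longer than in $\delta^\S_{D,D'}$) in $\delta^\S_{D',D}$.

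The main obstacle is bookkeeping rather than creativity: one must track three simultaneous uses of $(\varepsilon,\delta)$-DP---the group-privacy chain of length $|C \Delta C^*|$ used to transport between $C^*$ and $C$, the single $\M$-DP step between $C$ and $C_{+y}$ inside each denominator $p_C$, and (in the reverse direction) the additional $\M$-DP step between $C^*$ and $C^*_{+y}$---and verify that their multiplicative prefactors and additive $\delta$ residues combine into precisely the two sums $\sum_{k=0}^{|C\Delta C^*|-1}\e^{\varepsilon k}$ and $\sum_{k=0}^{|C\Delta C^*|}\e^{\varepsilon k}$ advertised in the statement. Once this is organized, the result degenerates to \Cref{th:unboundedcase} whenever $\mathcal{A}_{D,D'}$ and $\mathcal{A}_{D',D}$ can be taken to be identity maps, i.e., exactly when the support condition holds.
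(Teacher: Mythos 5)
Your proposal is correct and follows essentially the same route as the paper's proof: law of total probability, group-privacy transport through the assignation (paying $\e^{\varepsilon|C\Delta C^*|}$ multiplicatively and $\delta\sum_{k=0}^{|C\Delta C^*|-1}\e^{\varepsilon k}$ additively on each fiber element), the $p_C/p_C$ insertion with a final single DP step on the pair $(C,C_{+y})$, and reassembly of the sum over $\supp(\S(D'))$ by splitting according to whether $y$ belongs to the output. The only cosmetic difference is in the reverse direction, where you apply one single-step DP inequality between $C^*_{+y}$ and $C^*$ before transporting $C^*$ to $\mathcal{A}_{D',D}(C^*)$, while the paper transports $C^*_{+y}$ directly by group privacy over distance $|C\Delta C^*|+1$; these produce identical coefficients, so the two arguments coincide.
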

\begin{proof}
    First note that the expressions are well-defined since the arguments of the maximums and sums ensure that there is no division by $0$.
    
    We fix $D,D'\in\D$ unbounded-neighboring databases with $D'=D\uplus\{y\}$. We fix two assignation $\mathcal{A}_{D,D'}$ and $\mathcal{A}_{D',D}$. We need to see that for all measurable $\meas\subseteq\Range(\M\circ\S)\subseteq\Range(\M)=\Ran$,
    \begin{gather}
        \Prob\{\M(\S(D))\in\meas\}\leq \e^{\varepsilon^\S_{D,D'}}\Prob\{\M(\S(D'))\in\meas\}+\delta^\S_{D,D'}, \label{eq:suppressionUnbounded1General} \\
        \Prob\{\M(\S(D'))\in\meas\}\leq \e^{\varepsilon^\S_{D',D}}\Prob\{\M(\S(D))\in\meas\}+\delta^\S_{D',D}, \label{eq:suppressionUnbounded2General}
    \end{gather}
    with $\varepsilon^\S_{D,D'}$, $\delta^\S_{D,D'}$, $\varepsilon^\S_{D',D}$ and $\delta^\S_{D',D}$ as defined in the statement. We note that by the group privacy theorem~\cite{dwork2014algorithmic} of approximate DP, 
    \[
        \Prob\{\M(C^*)\in\meas\} \leq \e^{\varepsilon|C^*\Delta \mathcal{A}_{D,D'}(C^*)|} \Prob\{\M(\mathcal{A}_{D,D'}(C^*))\in\meas\} + \delta\Sigma_\varepsilon(|C^*\Delta \mathcal{A}_{D,D'}(C^*)|), 
    \]
    where $\Sigma_\varepsilon(l) = \sum^{l-1}_{k=0}\e^{\varepsilon k}$.
    
    We prove first inequality~\ref{eq:suppressionUnbounded1General}. Since $\S(D)$ are discrete random variables, by the law of total probability and the group privacy property, we have
    \begin{align*}
       \Prob\{\M(\S(D))\in\meas\} 
       ={}& \sum_{C^*\in\supp(\S(D))} \Prob\{\M(C^*)\in\meas\}\Prob\{\S(D)=C^*\} \\
       \leq{}& \sum_{C^*\in\supp(\S(D))} (\e^{\varepsilon|C^*\Delta \mathcal{A}_{D,D'}(C^*)|}\Prob\{\M(\mathcal{A}_{D,D'}(C^*))\in\meas\}+\delta\Sigma_\varepsilon(|C^*\Delta \mathcal{A}_{D,D'}(C^*)|))\Prob\{\S(D)=C^*\},
    \end{align*}
    and since 
    \[
        \{(C^*,\mathcal{A}_{D,D'}(C^*))\mid C^*\in\supp(\S(D))\} = \{(C^*,C)\mid C\in\Imag(\mathcal{A}_{D,D'}) \text{ and } C^*\in\mathcal{A}_{D,D'}^{-1}[C]\}, 
    \]
    we obtain
    \begin{align*}
       \MoveEqLeft[7] \sum_{C^*\in\supp(\S(D))} (\e^{\varepsilon|C^*\Delta \mathcal{A}_{D,D'}(C^*)|}\Prob\{\M(\mathcal{A}_{D,D'}(C^*))\in\meas\}+\delta\Sigma_\varepsilon(|C^*\Delta \mathcal{A}_{D,D'}(C^*)|))\Prob\{\S(D)=C^*\} \\
       ={}& \sum_{\substack{C\in\Imag(\mathcal{A}_{D,D'})}} \sum_{\substack{C^*\in\mathcal{A}_{D,D'}^{-1}[C]}} (\e^{\varepsilon|C^*\Delta C|} \Prob\{\M(C)\in\meas\}+\delta\Sigma_\varepsilon(|C^*\Delta C|))\Prob\{\S(D)=C^*\} \\
       ={}& \sum_{\substack{C\in\operatorname{Im}(\mathcal{A}_{D,D'})}} \bigg(\Prob\{\M(C)\in\meas\}\underbrace{\sum_{\substack{C^*\in\mathcal{A}_{D,D'}^{-1}[C]}} \e^{\varepsilon|C\Delta C^*|} \Prob\{\S(D)=C^*\}}_{\eqqcolon E_C} +\delta\underbrace{\sum_{\substack{C^*\in\mathcal{A}_{D,D'}^{-1}[C]}}\Sigma_\varepsilon(|C\Delta C^*|)\Prob\{\S(D)=C^*\}}_{\coloneqq D_C}\bigg).
    \end{align*}

    To simplify notation, we denote $p_C \coloneqq \Prob\{\S(D')=C\}+\e^{-\varepsilon}\Prob\{\S(D')=C_{+y}\}$, which is non-zero for $C\in\Imag(\mathcal{A}_{D,D'})$. Thus,
    \begin{align*}
       \Prob\{\M(\S(D))\in\meas\} 
       \leq{}& \sum_{\substack{C\in\operatorname{Im}(\mathcal{A}_{D,D'})}} (\Prob\{\M(C)\in\meas\}\,E_C + \delta D_C) \\
       ={}& \sum_{\substack{C\in\operatorname{Im}(\mathcal{A}_{D,D'})}} \bigg(\Prob\{\M(C)\in\meas\}\frac{p_C}{p_C}E_C + \delta D_C\bigg) \\
       ={}& \sum_{\substack{C\in\operatorname{Im}(\mathcal{A}_{D,D'})}} \Prob\{\M(C)\in\meas\}\Prob\{\S(D')=C\}\frac{E_C}{p_C} \\
       &+ \sum_{\substack{C\in\operatorname{Im}(\mathcal{A}_{D,D'})}} \e^{-\varepsilon}\Prob\{\M(C)\in\meas\}\Prob\{\S(D')=C_{+y}\}\frac{E_C}{p_C} + \delta\sum_{\substack{C\in\operatorname{Im}(\mathcal{A}_{D,D'})}} D_C.
    \end{align*}

    Applying that $\M$ is $(\varepsilon,\delta)$-DP in the second sum, we obtain
    \begin{align*}
        \Prob\{\M(\S(D))\in\meas\} 
        \leq{}& \sum_{C\in\operatorname{Im}(\mathcal{A}_{D,D'})} \Prob\{\M(C)\in\meas\}\Prob\{\S(D')=C\}\frac{E_C}{p_C} \\
        &+ \sum_{\substack{C\in\operatorname{Im}(\mathcal{A}_{D,D'})}} \e^{-\varepsilon}(\e^{\varepsilon}\Prob\{\M(C_{+y})\in\meas\}+\delta)\Prob\{\S(D')=C_{+y}\}\frac{E_C}{p_C} + \delta\sum_{\substack{C\in\operatorname{Im}(\mathcal{A}_{D,D'})}} D_C \\
        ={}& \sum_{\substack{C\in\operatorname{Im}(\mathcal{A}_{D,D'})}} \Prob\{\M(C)\in\meas\}\Prob\{\S(D')=C\}\frac{E_C}{p_C} \\
        &+ \sum_{\substack{C\in\operatorname{Im}(\mathcal{A}_{D,D'})}} \Prob\{\M(C_{+y})\in\meas\}\Prob\{\S(D')=C_{+y}\}\frac{E_C}{p_C} \\
        &+ \sum_{\substack{C\in\operatorname{Im}(\mathcal{A}_{D,D'})}} \delta\bigg(\e^{-\varepsilon}\Prob\{\S(D')=C_{+y}\}\frac{E_C}{p_C} + D_C\bigg).
    \end{align*}

    Note that the last sum equals $\delta^\S_{D,D'}$ and $\e^{\varepsilon^\S_{D,D'}}=\max_{C\in\Imag(\mathcal{A}_{D,D'})}\frac{E_C}{p_C}$. Therefore, we have that
    \begin{align*}
        \Prob\{\M(\S(D))\in\meas\} 
        \leq{}& \e^{\varepsilon^\S_{D,D'}}\sum_{C\in\operatorname{Im}(\mathcal{A}_{D,D'})} \Prob\{\M(C)\in\meas\}\Prob\{\S(D')=C\} \\
        &+ \e^{\varepsilon^\S_{D,D'}} \sum_{C\in\operatorname{Im}(\mathcal{A}_{D,D'})} \Prob\{\M(C_{+y})\in\meas\}\Prob\{\S(D')=C_{+y}\} + \delta^\S_{D,D'} \\
        \leq{}& \e^{\varepsilon^\S_{D,D'}}\sum_{\substack{C'\in\supp(\S(D')):\\y\notin C'}} \Prob\{\M(C')\in\meas\}\Prob\{\S(D')=C'\} \\
        &+ \e^{\varepsilon^\S_{D,D'}}\sum_{\substack{C'\in\supp(\S(D')):\\y\in C'}} \Prob\{\M(C')\in\meas\}\Prob\{\S(D')=C'\} + \delta^\S_{D,D'} \\
        ={}& \e^{\varepsilon^\S_{D,D'}}\sum_{C'\in\supp(\S(D'))} \Prob\{\M(C')\in\meas\}\Prob\{\S(D')=C'\} + \delta^\S_{D,D'} \\
        ={}& \e^{\varepsilon^\S_{D,D'}}\Prob\{\M(\S(D'))\in\meas\} + \delta^\S_{D,D'}, 
    \end{align*}
    proving inequality~\ref{eq:suppressionUnbounded1General}.
    Now we see inequality~\ref{eq:suppressionUnbounded2General}. Once again, by the law of total probability and the group privacy property
    \begin{align*}
       \Prob\{\M(\S(D'))\in\meas\} 
       ={}& \sum_{C^*\in\supp(\S(D'))} \Prob\{\M(C^*)\in\meas\}\Prob\{\S(D')=C^*\} \\
       \leq{}& \sum_{C^*\in\supp(\S(D'))} (\e^{\varepsilon|C^*\Delta\mathcal{A}_{D',D}(C^*)|}\Prob\{\M(\mathcal{A}_{D',D}(C^*))\in\meas\}+\delta\Sigma_\varepsilon(|C^*\Delta \mathcal{A}_{D',D}(C^*)|))\Prob\{\S(D')=C^*\},
    \end{align*}
    and since
    \[
        \{(C^*,\mathcal{A}_{D',D}(C^*))\mid C^*\in \supp(\S(D')) \text{ and } y\notin C^*\} = \{(C^*,C')\mid C'\in\Imag(\mathcal{A}_{D',D}) \text{ and } C^*\in\mathcal{A}^{-1}_{D',D}[C']\}
    \]
    and 
    \[
        \{(C^*,\mathcal{A}_{D',D}(C^*))\mid C^*\in \supp(\S(D')) \text{ and } y\in C^*\} = \{(C^*_{+y},C')\mid C'\in\Imag(\mathcal{A}_{D',D}) \text{ and } C^*\in\mathcal{A}^{-1}_{D',D}[C']\},
    \]
    we obtain
    \begin{align*}
       \MoveEqLeft[7]\sum_{C^*\in\supp(\S(D'))} (\e^{\varepsilon|C^*\Delta\mathcal{A}_{D',D}(C^*)|}\Prob\{\M(\mathcal{A}_{D',D}(C^*))\in\meas\}+\delta\Sigma_\varepsilon(|C^*\Delta \mathcal{A}_{D',D}(C^*)|))\Prob\{\S(D')=C^*\} \\
       ={}& \sum_{C'\in\Imag(\mathcal{A}_{D',D})}\sum_{C^*\in\mathcal{A}_{D',D}^{-1}[C']} (\e^{\varepsilon|C^*\Delta C'|}\Prob\{\M(C')\in\meas\}+\delta\Sigma_\varepsilon(|C^*\Delta C'|))\Prob\{\S(D')=C^*\} \\
       &+ \sum_{C'\in\Imag(\mathcal{A}_{D',D})}\sum_{C^*\in\mathcal{A}_{D',D}^{-1}[C']} (\e^{\varepsilon|C^*_{+y}\Delta C'|}\Prob\{\M(C')\in\meas\}+\delta\Sigma_\varepsilon(|C^*_{+y}\Delta C'|))\Prob\{\S(D')=C^*_{+y}\} \\
       ={}& \sum_{C'\in\Imag(\mathcal{A}_{D',D})}\bigg(\Prob\{\M(C')\in\meas\} \sum_{C^*\in\mathcal{A}_{D',D}^{-1}[C']} (\e^{\varepsilon|C'\Delta C^*|}\Prob\{\S(D')=C^*\}+\e^{\varepsilon|C'\Delta C^*_{+y}|}\Prob\{\S(D')=C^*_{+y}\}) \\
       &\qquad\qquad + \delta\sum_{C^*\in\mathcal{A}_{D',D}^{-1}[C']}(\Sigma_\varepsilon(|C'\Delta C^*|)\Prob\{\S(D')=C^*\}+\Sigma_\varepsilon(|C'\Delta C^*_{+y}|)\Prob\{\S(D')=C^*_{+y}\})\bigg) \\
       ={}& \sum_{C'\in\Imag(\mathcal{A}_{D',D})}\bigg(\Prob\{\M(C')\in\meas\} \sum_{C^*\in\mathcal{A}_{D',D}^{-1}[C']} \e^{\varepsilon|C'\Delta C^*|}(\Prob\{\S(D')=C^*\}+\e^{\varepsilon}\Prob\{\S(D')=C^*_{+y}\}) \\
       &\qquad\qquad + \delta\sum_{C^*\in\mathcal{A}_{D',D}^{-1}[C']}(\Sigma_\varepsilon(|C'\Delta C^*|)\Prob\{\S(D')=C^*\}+\Sigma_\varepsilon(|C'\Delta C^*|+1)\Prob\{\S(D')=C^*_{+y}\})\bigg).
    \end{align*}

    Now, we denote $E'_C$ as the value of the first inner sum in the previous equation and $D'_C$ as the value of the second inner sum. Since $\Prob\{\S(D)=C\}\neq0$ for all $C\in\Imag(\mathcal{A}_{D',D})$, we obtain
    \begin{align*}
        \Prob\{\M(\S(D'))\in\meas\} &\leq \sum_{C'\in\Imag(\mathcal{A}_{D',D})}(\Prob\{\M(C')\in\meas\}\,E'_C + \delta D'_C) \\
        &= \sum_{C'\in\Imag(\mathcal{A}_{D',D})}\Prob\{\M(C')\in\meas\}\Prob\{\S(D)=C'\}\frac{E'_C}{\Prob\{\S(D)=C'\}} + \delta\sum_{C'\in\Imag(\mathcal{A}_{D',D})} D'_C\\
        &\leq \e^{\varepsilon^\S_{D',D}}\sum_{C'\in\Imag(\mathcal{A}_{D',D})} \Prob\{\M(C')\in\meas\}\Prob\{\S(D)=C'\} + \delta^\S_{D',D}\\
        &\leq \e^{\varepsilon^\S_{D',D}}\sum_{C\in\supp(\S(D))} \Prob\{\M(C)\in\meas\}\Prob\{\S(D)=C\} + \delta^\S_{D',D}\\
        &= \e^{\varepsilon^\S_{D',D}}\Prob\{\M(\S(D))\in\meas\} + \delta^\S_{D',D}, 
    \end{align*}
    obtaining, thus, inequality~\ref{eq:suppressionUnbounded2General}.
\end{proof}

\begin{remark}[On \Cref{th:generalunboundedtheorem}]
    We first note that since the result holds for all assignations $\mathcal{A}_{D,D'}$, and thus we can optimize its choice to obtain the smallest $\varepsilon^\S_{D,D'}$ and $\delta^\S_{D,D'}$.  
    
    Additionally, \Cref{th:generalunboundedtheorem} indeed generalizes \Cref{th:unboundedcase}: If the support condition is verified, then 
    \[
        \{C\subseteq D\mid C\in\supp(\S(D))\} = \{C\subseteq D\mid C\in\supp(\S(D')) \text{ or } C_{+y}\in\supp(\S(D'))\}
    \]
    for all neighboring databases $D$ and $D_{+y}$. Thus, we can select $\mathcal{A}_{D,D'}=\id$ for all pairs of neighboring databases $D,D'\in\D$, which provides the precise bounds of \Cref{th:unboundedcase}. We expect that this assignation minimizes $\varepsilon^\S_{D,D'}$ and $\delta^\S_{D,D'}$ for the majority of algorithms $\S$. Similarly, we believe that selecting $\mathcal{A}_{D,D'}(C)=C$ for all $C\in\{C\subseteq D\mid C\in\supp(\S(D))\}\cap\{C\subseteq D\mid C\in\supp(\S(D')) \text{ or } C_{+y}\in\supp(\S(D'))\}$ is a good choice for the assignation.    
\end{remark}

\begin{proposition}\label{prop:final}
    Let $\S$ with domain $\D$ be a suppression algorithm and let $\varepsilon^\S_{\mathrm{tight}}$ and $\delta^\S_{\mathrm{tight}}$ be the tight privacy parameters of $\M\circ\S$ over all $(\varepsilon,\delta)$-DP mechanisms $\M$ with domain $\D$ ($\delta\leq 1$), i.e., for all $(\varepsilon,\delta)$-DP mechanism $\M$, we have that $\M\circ\S$ is $(\varepsilon^\S_{\mathrm{tight}},\delta^\S_{\mathrm{tight}})$-DP and there does not exist $\varepsilon''\leq\varepsilon^\S_{\mathrm{tight}}$ and $\delta''\leq\delta^\S_{\mathrm{tight}}$ (not both equal) such that $\M\circ\S$ is $(\varepsilon'',\delta'')$-DP for all $(\varepsilon,\delta)$-DP mechanisms $\M$. Then, 
    \[
        \varepsilon^\S_{\mathrm{tight}} \geq \ln\!\bigg(1+(\e^{\varepsilon}-1)\sup_{D'\in\D}\sup_{y\in D'}\Prob\{y\in\S(D')\}\bigg) \qquad\text{and}\qquad \delta^\S_{\mathrm{tight}}\geq \delta\sup_{D'\in\D}\sup_{y\in D'}\Prob\{y\in\S(D')\},
    \]
    where the right terms of the inequalities are the privacy parameters of the uniform Poisson sampling with sampling rate $p=\sup_{D'\in\D}\sup_{y\in D'}\Prob\{y\in\S(D')\}$.
\end{proposition}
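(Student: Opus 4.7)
The plan is to exhibit, for each coordinate of the tight pair, a specific $(\varepsilon,\delta)$-DP witness mechanism whose postprocessing with $\S$ forces the claimed inequality. Fix any $D'\in\D$ and any $y\in D'$, and set $p_y=\Prob\{y\in\S(D')\}$; since $\D$ is closed under subsets, $D\coloneqq D'\setminus\{y\}\in\D$ with $D\sim D'$ and $D'=D_{+y}$. The two stated bounds will be obtained pointwise in $p_y$ and then promoted to $p$ by taking the supremum over $(D',y)$ at the end.

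First I would handle the $\delta$-bound using the ``$\delta$-only'' witness $\M_2$ defined on $\D$ by $\Prob\{\M_2(C)=1\}=\delta$ if $y\in C$ and $\Prob\{\M_2(C)=1\}=0$ otherwise. On any neighboring pair not involving $y$ the output distribution is unchanged, and on a $y$-edge the only nontrivial $(\varepsilon,\delta)$-DP inequality to check is $\delta\leq\e^\varepsilon\cdot 0+\delta$, which is immediate; hence $\M_2$ is $(\varepsilon,\delta)$-DP. Since $y\notin D$, one has $\Prob\{\M_2(\S(D))=1\}=0$, while $\Prob\{\M_2(\S(D'))=1\}=\delta p_y$. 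Applying the postulated $(\varepsilon^\S_{\mathrm{tight}},\delta^\S_{\mathrm{tight}})$-DP of $\M_2\circ\S$ to the singleton $\{1\}$ yields $\delta p_y\leq\delta^\S_{\mathrm{tight}}$.

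Next I would handle the $\varepsilon$-bound using the tightly $(\varepsilon,\delta)$-DP binary witness $\M_1$ defined by $\Prob\{\M_1(C)=1\}=a\coloneqq(\e^\varepsilon+\delta)/(\e^\varepsilon+1)$ if $y\in C$ and $\Prob\{\M_1(C)=1\}=b\coloneqq(1-\delta)/(\e^\varepsilon+1)$ otherwise. A short computation shows $a-\e^\varepsilon b=\delta$ and, exploiting the symmetries $1-a=b$ and $1-b=a$, also $(1-b)-\e^\varepsilon(1-a)=\delta$; so $\M_1$ is $(\varepsilon,\delta)$-DP on the $y$-edge (and trivially so on all other edges). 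Then $\Prob\{\M_1(\S(D))=1\}=b$, while $\Prob\{\M_1(\S(D'))=1\}=b+(a-b)p_y$. The tight-DP inequality applied to $\{1\}$ rearranges to $(a-b)p_y\leq(\e^{\varepsilon^\S_{\mathrm{tight}}}-1)b+\delta^\S_{\mathrm{tight}}$. Substituting $a-b=(\e^\varepsilon-1)b+\delta$ and plugging in the Poisson-matching value $\delta^\S_{\mathrm{tight}}=\delta p_y$ obtained from $\M_2$, the $\delta$-terms cancel and one arrives at $\e^{\varepsilon^\S_{\mathrm{tight}}}\geq 1+(\e^\varepsilon-1)p_y$, i.e., the desired $\varepsilon$-bound.

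Finally, taking the supremum over $D'\in\D$ and $y\in D'$ promotes both pointwise bounds to the bounds in the statement, since $\sup_{D',y}p_y=p$. The main subtlety will be justifying the simultaneous use of the two witnesses at the same tight pair: the key observation is that any valid tight pair must cover both $\M_1\circ\S$ and $\M_2\circ\S$ at once, so the $\M_2$-constraint already pins $\delta^\S_{\mathrm{tight}}$ down to $\delta p$ at the Pareto-optimal trade-off that mirrors uniform Poisson sampling, after which $\M_1$ forces the claimed $\varepsilon^\S_{\mathrm{tight}}$. This coupling is the only delicate step and structurally mirrors the tightness proof for Poisson sampling by Li et~al.~\cite{li2012sampling}.
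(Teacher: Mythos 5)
Your two witnesses are sound as far as they go, and they are essentially the paper's construction: your $\M_1$ is exactly the randomized-response mechanism keyed to $y$ (with constants $\tfrac{\e^\varepsilon+\delta}{\e^\varepsilon+1}$ and $\tfrac{1-\delta}{\e^\varepsilon+1}$) that the paper uses, and your computation of $\Prob\{\M_1(\S(D'))=1\}=b+(a-b)p_y$ by conditioning on $y\in\S(D')$ is the same as the paper's. Your extra witness $\M_2$ cleanly yields $\delta^\S_{\mathrm{tight}}\geq\delta p_y$ for \emph{every} pair that is valid for all mechanisms, which is a nice (and correct) way to get the $\delta$-coordinate. The problem is the last step of the $\varepsilon$-bound. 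From $\M_2$ you only know $\delta^\S_{\mathrm{tight}}\geq\delta p_y$, yet you substitute the \emph{equality} $\delta^\S_{\mathrm{tight}}=\delta p_y$ into the inequality $(a-b)p_y\leq(\e^{\varepsilon^\S_{\mathrm{tight}}}-1)b+\delta^\S_{\mathrm{tight}}$ so that the $\delta$-terms cancel. If $\delta^\S_{\mathrm{tight}}>\delta p_y$ (which is entirely possible: the supremum $p$ may be approached at other pairs $(D',y)$, and more fundamentally nothing forces the tight pair to sit at $\delta p$), the slack $\delta^\S_{\mathrm{tight}}-\delta p_y$ absorbs part of the left-hand side and you cannot conclude $\e^{\varepsilon^\S_{\mathrm{tight}}}\geq 1+(\e^\varepsilon-1)p_y$. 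Your closing remark that Pareto-minimality ``pins $\delta^\S_{\mathrm{tight}}$ down to $\delta p$'' is not a proof and is not true in general: the set of pairs valid for all $(\varepsilon,\delta)$-DP mechanisms has a whole trade-off frontier, and a non-dominated pair may well have $\delta^\S_{\mathrm{tight}}>\delta p$ together with a smaller $\varepsilon$; two separate one-sided constraints coming from two different mechanisms still leave room to trade $\varepsilon$ against $\delta$.

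The paper closes exactly this hole differently: it works with the single witness ($=$ your $\M_1$), verifies \emph{all} subset constraints in both directions, and exhibits one constraint, $\Prob\{\M_1(\S(D'))\in\meas\}=\e^{\varepsilon'''}\Prob\{\M_1(\S(D))\in\meas\}+\delta'''$ with $\Prob\{\M_1(\S(D))\in\meas\}>0$, in which the multiplicative factor $\e^{\varepsilon'''}=1+(\e^\varepsilon-1)q$ and the additive term $\delta'''=\delta q$ are \emph{simultaneously} active; hence $\M_1\circ\S$ is tightly $(\varepsilon''',\delta''')$-DP, and any pair valid for all mechanisms that is coordinate-wise at most the Poisson pair is forced to equal it. You can repair your argument with the ingredients you already have: do not plug in an equality for $\delta^\S_{\mathrm{tight}}$; instead, suppose the tight pair satisfies $\varepsilon''\leq\ln(1+(\e^\varepsilon-1)p)$ and $\delta''\leq\delta p$ and feed both into your $\M_1$-constraint $b+(a-b)p\leq\e^{\varepsilon''}b+\delta''$ (taking $p_y\to p$); using $a-b=(\e^\varepsilon-1)b+\delta$ this forces equality in both coordinates, which is the content of the proposition (the Poisson parameters cannot be improved). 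As written, however, the derivation of the $\varepsilon$-bound does not go through.
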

\begin{proof}
    It is sufficient to see that there is an $(\varepsilon,\delta)$-DP mechanism $\M$ such that $\M\circ\S$ is tightly $(\varepsilon''',\delta''')$-DP with
    \[
        \varepsilon''' =  \ln\!\bigg(1+(\e^{\varepsilon}-1)\sup_{D'\in\D}\sup_{y\in D'}\Prob\{y\notin\S(D')\}\bigg) \qquad\text{and}\qquad  \delta''' =  \delta\sup_{D'\in\D}\sup_{y\in D'}\Prob\{y\notin\S(D')\}.
    \]
    
    Consider the mechanism $\M\colon\D\to\{0,1\}$ such that 
    \begin{gather*}
        \Prob\{\M(C)=0\}=\begin{cases}
            \frac{1-\delta}{1+\e^{\varepsilon}} & \text{if $y\notin C$,} \\
            \frac{\e^{\varepsilon}+\delta}{1+\e^{\varepsilon}} & \text{if $y\in C$,} \\
        \end{cases} \qquad\text{and}\qquad 
        \Prob\{\M(C)=1\}=\begin{cases}
            \frac{\e^{\varepsilon}+\delta}{1+\e^{\varepsilon}} & \text{if $y\notin C$,} \\
            \frac{1-\delta}{1+\e^{\varepsilon}} & \text{if $y\in C$.} \\
        \end{cases}
    \end{gather*}

    This mechanism is well-defined and satisfies $(\varepsilon,\delta)$-DP tightly: Indeed, for two neighboring databases $C,C'\in\D$ and $b\in\{0,1\}$, 
    we have that $\Prob\{\M(C)=b\}$ and $\Prob\{\M(C')=b\}$ are either equal (if $y\in C,C'$ or $y\notin C,C'$) or verify these inequalities (if $y\in C$ and $y\notin C'$, or vice versa): 
    \begin{gather*}
        \Prob\{\M(C)=b\} = \frac{1-\delta}{1+\e^{\varepsilon}} \leq \e^{\varepsilon}\frac{\e^{\varepsilon}+\delta}{1+\e^{\varepsilon}} + \delta = \e^\varepsilon\Prob\{\M(C')=b\} + \delta\\
        \Prob\{\M(C)=b\} = \frac{\e^{\varepsilon}+\delta}{1+\e^{\varepsilon}} = \e^{\varepsilon}\frac{1-\delta}{1+\e^{\varepsilon}} + \delta = \e^\varepsilon\Prob\{\M(C')=b\} + \delta.
    \end{gather*}

    Let $D,D'$ be two neighboring databases such that $D'=D_{+y}$. Then, we have the following equalities
    \begin{align*}
        \Prob\{\M(\S(D))=0\} &= \Prob\{\M(\S(D))=0\mid y\notin\S(D)\}\Prob\{y\notin\S(D)\} +  \Prob\{\M(\S(D))=0\mid y\in\S(D)\}\Prob\{y\in\S(D)\} \\
        &= \frac{1-\delta}{1+\e^{\varepsilon}}\cdot 1 + 0, \\
        \Prob\{\M(\S(D))=1\} &= \Prob\{\M(\S(D))=1\mid y\notin\S(D)\}\Prob\{y\notin\S(D)\} +  \Prob\{\M(\S(D))=1\mid y\in\S(D)\}\Prob\{y\in\S(D)\} \\
        &= \frac{\e^{\varepsilon}+\delta}{1+\e^{\varepsilon}}\cdot 1 + 0;
    \end{align*}
    and
    \begin{align*}
        \Prob\{\M(\S(D'))=0\} &= \Prob\{\M(\S(D'))=0\mid y\notin\S(D')\}\Prob\{y\notin\S(D')\} +  \Prob\{\M(\S(D'))=0\mid y\in\S(D')\}\Prob\{y\in\S(D')\} \\
        &= \frac{1-\delta}{1+\e^{\varepsilon}}(1-\Prob\{y\in\S(D')\}) + \frac{\e^{\varepsilon}+\delta}{1+\e^{\varepsilon}}\Prob\{y\in\S(D')\}, \\
        \Prob\{\M(\S(D'))=1\} &= \Prob\{\M(\S(D'))=1\mid y\notin\S(D')\}\Prob\{y\notin\S(D')\} +  \Prob\{\M(\S(D'))=1\mid y\in\S(D')\}\Prob\{y\in\S(D')\} \\
        &= \frac{\e^{\varepsilon}+\delta}{1+\e^{\varepsilon}}(1-\Prob\{y\in\S(D')\}) + \frac{1-\delta}{1+\e^{\varepsilon}}\Prob\{y\in\S(D')\}. \\
    \end{align*}

    We denote $\alpha\coloneqq\frac{\e^{\varepsilon}+\delta}{\e^{\varepsilon}+1}$ (then $1-\alpha=\frac{1-\delta}{\e^{\varepsilon}+1}$) and $q\coloneqq\Prob\{y\in\S(D')\}$ to simplify notation. Thus, we have that
    \[
        A\coloneqq\frac{\Prob\{\M(\S(D))=0\}-\delta q}{\Prob\{\M(\S(D'))=0\}} = \frac{(1-\alpha)-\delta q}{(1-\alpha)(1-q)+\alpha q} = \frac{(1-\alpha)-\delta q}{(1-\alpha)+(2\alpha-1)q},
    \]
    \[
        B\coloneqq\frac{\Prob\{\M(\S(D'))=0\}-\delta q}{\Prob\{\M(\S(D))=0\}} = \frac{(1-\alpha)(1-q)+\alpha q-\delta q}{1-\alpha} = (1-q) + \frac{\alpha-\delta}{1-\alpha}q = (1-q) + \e^{\varepsilon}q,
    \]
    \[
        C\coloneqq\frac{\Prob\{\M(\S(D))=1\}-\delta q}{\Prob\{\M(\S(D'))=1\}} = \frac{\alpha-\delta q}{\alpha(1-q)+(1-\alpha)q} = \frac{\alpha-\delta q}{\alpha - (2\alpha-1)q},
    \]
    and
    \[
        D\coloneqq\frac{\Prob\{\M(\S(D'))=1\}-\delta q}{\Prob\{\M(\S(D))=1\}} = \frac{\alpha(1-q)+(1-\alpha)q-\delta q}{\alpha} = (1-q) + \frac{1-\alpha-\delta}{\alpha}q = (1-q) + \frac{1-\delta(\e^{\varepsilon}+2)}{\e^{\varepsilon}+\delta}q.
    \]

    Now, we easily have that $A\leq 1$ since $2\alpha-1=\frac{\e^{\varepsilon}-(1-2\delta)}{\e^{\varepsilon}+1}\geq 0$, and that $D\leq 1$ since $\frac{1-\delta(\e^{\varepsilon}+2)}{\e^{\varepsilon}+\delta}\leq 1$. In addition, we have that $C\leq B$:
    \begin{align*}
        C\leq B &\Longleftrightarrow \frac{\alpha-\delta q}{\alpha - (2\alpha-1)q} \leq (1-q) + \e^{\varepsilon}q \\
        &\Longleftrightarrow \alpha-\delta q \leq (1-q)\alpha - (1-q)(2\alpha-1)q + \e^{\varepsilon}q\alpha - \e^{\varepsilon}(2\alpha-1)q^2 \\
        &\Longleftrightarrow \alpha-\delta q \leq \alpha + q(-\alpha-2\alpha+1+\e^{\varepsilon}\alpha) + q^2(2\alpha-1)(1-\e^{\varepsilon}) \\
        &\Longleftrightarrow 0 \leq q(1-3\alpha+\e^{\varepsilon}\alpha+\delta) - q^2(2\alpha-1)(\e^{\varepsilon}-1),
    \end{align*}

    Now, the inequality holds for $q=0$ directly. For $q\neq0$ and $\varepsilon=0$, we have that the inequality also holds: In this case $\alpha=\frac{1+\delta}{2}$ and
    \[
        C\leq B \Longleftrightarrow 0\leq q(1-3\alpha+\alpha+\delta) + 0 = q\bigg(1-2\frac{1+\delta}{2}+\delta\bigg) = 0.    
    \]

    For $q\neq 0$ and $\varepsilon\neq0$, we have that 
    \[
        C\leq B \Longleftrightarrow 0 \leq q(1-3\alpha+\e^{\varepsilon}\alpha+\delta) - q^2(2\alpha-1)(\e^{\varepsilon}-1)
        \Longleftrightarrow q \leq \frac{1-3\alpha+\e^{\varepsilon}\alpha+\delta}{(2\alpha-1)(\e^{\varepsilon}-1)},
    \]
    where the last inequality holds since 
    \begin{gather*}
        \frac{1-3\alpha+\e^{\varepsilon}\alpha+\delta}{(2\alpha-1)(\e^{\varepsilon}-1)} 
        = \frac{1}{2\alpha-1}\bigg(\alpha+\frac{1-2\alpha+\delta}{\e^{\varepsilon}-1}\bigg) = \frac{1}{2\alpha-1}\bigg(\frac{\e^{\varepsilon}+\delta}{\e^{\varepsilon}+1}+\frac{-\frac{\e^{\varepsilon}-(1-2\delta)}{\e^{\varepsilon}+1}+\delta}{\e^{\varepsilon}-1}\bigg) \\
        = \frac{1}{2\alpha-1}\bigg(\frac{\e^{\varepsilon}+\delta}{\e^{\varepsilon}+1}+\frac{-\e^{\varepsilon}+(1-2\delta)+\delta(\e^{\varepsilon}+1)}{(\e^{\varepsilon}-1)(\e^{\varepsilon}+1)}\bigg)
        = \frac{1}{2\alpha-1}\bigg(\frac{\e^{\varepsilon}+\delta}{\e^{\varepsilon}+1}+\frac{-\e^{\varepsilon}+(1-\delta)+\delta\e^{\varepsilon}}{(\e^{\varepsilon}-1)(\e^{\varepsilon}+1)}\bigg) \\
        = \frac{1}{2\alpha-1}\bigg(\frac{\e^{\varepsilon}+\delta}{\e^{\varepsilon}+1}-\frac{(\e^{\varepsilon}-1)(1-\delta)}{(\e^{\varepsilon}-1)(\e^{\varepsilon}+1)}\bigg) = \frac{1}{2\alpha-1}
    \end{gather*}
    is larger than 1. Thus, we have that 
    \[
        \Prob\{\M(\S(D))\in\meas\} \leq \e^{\varepsilon_{D,D'}}\Prob\{\M(\S(D'))\in\meas\} + \delta_{D,D'} \qquad\text{and}\qquad \Prob\{\M(\S(D'))\in\meas\} \leq \e^{\varepsilon_{D',D}}\Prob\{\M(\S(D))\in\meas\} + \delta_{D',D}
    \]
    with $\varepsilon_{D,D'}=\varepsilon_{D',D}=\ln(\e^{\varepsilon}q+(1-q))$ and $\delta_{D,D'}=\delta_{D',D}=\delta q$ for all subsets $\meas\subseteq\{0,1\}$, such that an inequality is tight for at least one subset (in this case, the right inequality for $\meas=\{0\}$). Taking the supremum over all neighboring $D,D'\in\D$, we obtain that $\M\circ\S$ is tightly $(\varepsilon''',\delta''')$-DP with 
    \[
        \varepsilon'''=\ln\!\bigg(1+(\e^{\varepsilon}-1)\sup_{D'\in\D}\sup_{y\in D'}\Prob\{y\in\S(D')\}\bigg) \qquad\text{and}\qquad \delta'''= \delta\sup_{D'\in\D}\sup_{y\in D'}\Prob\{y\in\S(D')\},
    \]
    that by construction verify $\varepsilon'''\leq\varepsilon^\S_{\mathrm{tight}}$ and $\delta'''\leq\delta^\S_{\mathrm{tight}}$, following the statement notation. This concludes the proof. 
\end{proof}

\subsection{Proofs of \Cref{sec:outlierscoresuppressiontheorem}: \nameref*{sec:outlierscoresuppressiontheorem}}
\begin{proposition}[Inhereted properties of {$(m,M)$}-transformations]\label{prop:(mM)-transformeddistance}
    The $(m,M)$-transformation $\tdist$ of a normalized distance $\dist$ inherits the following properties from $\dist$:
    \begin{enumerate}
        \item $\tdist(x,y)\in[m,M]$ for all $x,y\in\X$, 
        \item $\tdist(x,y)=m$ if and only if $x=y$,
        \item (Symmetry) $\tdist(x,y)=\tdist(y,x)$, and
        \item (Transformed triangular inequality) $\tdist(x,y)\leq \tdist(x,z)+\tdist(z,y)-m$ for all $x,y,z\in\X$. 
    \end{enumerate}
\end{proposition}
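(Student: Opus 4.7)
The plan is to verify each of the four properties by a direct computation using the defining relation $\tdist(x,y) = m + (M-m)\dist(x,y)$ together with the analogous property of the underlying normalized distance $\dist$.

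First, for property~1, I would substitute the bounds $0 \leq \dist(x,y) \leq 1$ into the affine expression defining $\tdist$. Since $M - m \geq 0$, the map $t \mapsto m + (M-m)t$ is monotone nondecreasing on $[0,1]$, so it sends $[0,1]$ into $[m,M]$. For property~2, solving $\tdist(x,y) = m$ gives $(M-m)\dist(x,y) = 0$, which (in the nondegenerate case $m < M$) forces $\dist(x,y) = 0$, and then the identity-of-indiscernibles property of $\dist$ yields $x=y$; conversely $x=y$ makes $\dist(x,y)=0$ and hence $\tdist(x,y) = m$. Symmetry (property~3) is immediate from the symmetry of $\dist$ since $\tdist$ depends on its arguments only through $\dist(x,y)$.

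The only mildly interesting calculation is property~4. Here the key observation is that affine rescalings of the form $t \mapsto m + (M-m)t$ distort the triangle inequality by exactly the additive constant $-m$ that appears in the statement. Starting from $\dist(x,y) \leq \dist(x,z) + \dist(z,y)$, I would multiply through by $M - m \geq 0$ and then add $m$ to each side, recognising $\tdist(x,y)$ on the left. On the right, writing $m + (M-m)\dist(x,z) + (M-m)\dist(z,y)$ as $\bigl(m + (M-m)\dist(x,z)\bigr) + \bigl(m + (M-m)\dist(z,y)\bigr) - m$ reassembles the two $\tdist$ terms minus a single leftover $m$, producing exactly $\tdist(x,z) + \tdist(z,y) - m$.

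The only subtlety worth flagging is the degenerate case $m = M$, in which $\tdist$ collapses to the constant $m$ and the reverse direction of property~2 fails; as $m < M$ is the regime of interest (the case $m = M$ reduces outlier-score suppression to uniform Poisson sampling and is handled separately), this edge case is immaterial for the subsequent use of the proposition. Beyond this bookkeeping I do not anticipate any real obstacle, since each property is essentially a one-line consequence of the affine definition.
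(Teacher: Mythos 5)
Your proof is correct and takes essentially the same route as the paper's: each property follows from the same direct substitution of the affine definition $\tdist(x,y)=m+(M-m)\dist(x,y)$, with the identical rearrangement (multiply the triangle inequality by $M-m$, add $m$, and split off the leftover $-m$) for property~4. Your observation that the ``only if'' direction of property~2 fails in the degenerate case $m=M$ is a fair point the paper's proof glosses over, but as you note it does not affect how the proposition is subsequently used.
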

\begin{proof}
    Recall that $\tdist(x,y)=m+(M-m)\dist(x,y)$ for all $x,y\in\X$. Since $\dist(x,y)\in[0,1]$ for all $x,y\in\X$ and equals 0 if and only if $x=y$, the first two properties follow. Symmetry also follows directly from the symmetry of $\dist$. The transformed triangular inequality comes from the triangular inequality of $\dist$: For all $x,y,z\in\X$,
    \begin{multline*}
        \tdist(x,z) = m + (M-m)\dist(x,z) \leq m + (M-m)(\dist(x,y)+\dist(y,z)) \\ 
        = (m + (M-m)\dist(x,y)) + (m + (M-m)\dist(y,z)) - m = \tdist(x,y) + \tdist(y,z) - m. \qedhere
    \end{multline*}
\end{proof}

\THoutlierscoresuppression*
\begin{proof}
    First note that the random variables $\S(D)$ are well-defined since $\out_D(x)\in[m,M]\subset[0,1]$ for all $x\in D$, which follow a distribution with support $\{C\mid C\subseteq D\}$, the subsets of $D$. In particular, $\S$ verifies the support condition needed to apply \Cref{th:unboundedcase}.
    
    We fix two unbounded-neighboring databases, $D$ and $D'$, with $D'=D_{+y}$. For all $C\subseteq D$, we have that 
    \[
        \Prob\{\S(D)=C\} = \prod_{x\in C} (1-\out_{D}(x)) \prod_{x\in D\backslash C}\out_{D}(x),
    \]
    \begin{align*}
        \Prob\{\S(D')=C\} &= \prod_{x\in C} (1-\out_{D'}(x)) \prod_{x\in D'\backslash C}\out_{D'}(x)
        = \out_{D'}(y) \prod_{x\in C} (1-\out_{D'}(x)) \prod_{x\in D\backslash C}\out_{D'}(x)
    \end{align*}
    and 
    \begin{align*}
        \Prob\{\S(D')=C_{+y}\} &= \prod_{x\in C_{+y}} (1-\out_{D'}(x)) \prod_{x\in D'\backslash C_{+y}}\out_{D'}(x)
        = (1-\out_{D'}(y)) \prod_{x\in C} (1-\out_{D'}(x)) \prod_{x\in D\backslash C}\out_{D'}(x).
    \end{align*}
    
    Then, from \Cref{th:unboundedcase}, we obtain that $\M\circ\S$ satisfies
    \[
        \Prob\{(\M\circ\S)(D)\in\meas\} \leq \e^{\varepsilon^\S_{D,D'}} \Prob\{(\M\circ\S)(D')\in\meas\} + \delta^\S_{D,D'}
    \] 
    for all measurable $\meas\subseteq\Ran\coloneqq\Range(\M)=\Range(\M\circ\S)$ with 
    \begin{align*}
        \e^{\varepsilon^{\S}_{D,D'}} &= \max_{\substack{C\subseteq D}}\frac{\Prob\{\S(D)=C\}}{\Prob\{\S(D')=C\}+\e^{-\varepsilon}\Prob\{\S(D')=C_{+y}\}} \\
        &= \max_{\substack{C\subseteq D}}\bigg(\frac{1}{\out_{D'}(y)+\e^{-\varepsilon}(1-\out_{D'}(y))} \prod_{x\in C} \frac{1-\out_{D}(x)}{1-\out_{D'}(x)} \prod_{x\in D\backslash C} \frac{\out_{D}(x)}{\out_{D'}(x)}\bigg),
    \end{align*}
    and
    \begin{align*}
        \delta^{\S}_{D,D'}&=\delta\sum_{\substack{C\in\supp(\S(D))}}\frac{\Prob\{\S(D)=C\}\,\e^{-\varepsilon}\Prob\{\S(D')=C_{+y}\}}{\Prob\{\S(D')=C\}+\e^{-\varepsilon}\Prob\{\S(D')=C_{+y}\}} \\
        &=\delta\sum_{\substack{C\in\supp(\S(D))}}\frac{\Prob\{\S(D)=C\}\,\e^{-\varepsilon}(1-\out_{D'}(y))}{\out_{D'}(y)+\e^{-\varepsilon}(1-\out_{D'}(y))} \\
        &=\delta\frac{\e^{-\varepsilon}(1-\out_{D'}(y))}{\out_{D'}(y)+\e^{-\varepsilon}(1-\out_{D'}(y))}\sum_{\substack{C\in\supp(\S(D))}}\Prob\{\S(D)=C\} \\
        &=\delta\frac{\e^{-\varepsilon}(1-\out_{D'}(y))}{\out_{D'}(y)+\e^{-\varepsilon}(1-\out_{D'}(y))};
    \end{align*}
    and satisfies
    \[
        \Prob\{(\M\circ\S)(D')\in\meas\} \leq \e^{\varepsilon^\S_{D',D}} \Prob\{(\M\circ\S)(D)\in\meas\} + \delta^\S_{D',D}
    \]
    for all measurable $\meas\subseteq\Ran$ with 
    \begin{align*}
        \e^{\varepsilon^{\S}_{D',D}}&=\max_{\substack{C\in\supp(\S(D))}}\frac{\Prob\{\S(D')=C\}+\e^{\varepsilon}\Prob\{\S(D')=C_{+y}\}}{\Prob\{\S(D)=C\}} \\
        &= \max_{\substack{C\subseteq D}}\bigg((\out_{D'}(y)+\e^{\varepsilon}(1-\out_{D'}(y))) \prod_{x\in C} \frac{1-\out_{D'}(x)}{1-\out_{D}(x)} \prod_{x\in D\backslash C} \frac{\out_{D'}(x)}{\out_{D}(x)}\bigg),
    \end{align*}
    and
    \begin{align*}
        \delta^{\S}_{D',D} &= \delta\sum_{C\in\supp(\S(D))} \Prob\{\S(D')=C_{+y}\}
        = \delta(1-\out_{D'}(y))\underbrace{\sum_{C\subseteq D}\bigg(\prod_{x\in C} (1-\out_{D'}(x)) \prod_{x\in D\backslash C}\out_{D'}(x)\bigg)}_{=1} = \delta(1-\out_{D'}(y)).
    \end{align*}

    Note that we obtain
    \[
        \delta^{\S}_{D,D'} = \delta\frac{\e^{-\varepsilon}(1-\out_{D'}(y))}{\out_{D'}(y)+\e^{-\varepsilon}(1-\out_{D'}(y))} \leq \delta(1-\out_{D'}(y)) = \delta^{\S}_{D',D}.
    \]

    Thus, 
    \[
        \delta^{\S} = \sup_{\substack{D,D'\in\D\\D\sim D'}} \delta^{\S}_{D,D'} = \sup_{\substack{D,D'\in\D\\D\sim D'}} \delta(1-\out_{D'}(y)) = \delta(1-m).
    \]

    Now we look for a bound for $\varepsilon^{\S}$. We first study separately the case where $D=\varnothing$ and $D'=\{y\}$. In this case, $\out_{D'}(y)=\dist(y,y)=m$ and we have that 
    \[
        \e^{\varepsilon^{\S}_{\varnothing,\{y\}}} = \frac{1}{m+\e^{-\varepsilon}(1-m)} \leq m+\e^{\varepsilon}(1-m)
        = \e^{\varepsilon^{\S}_{\{y\},\varnothing}}.
    \]

    Hereafter, we consider $1\leq|D|<|D'|$. We are now going to see that the previous expressions for $\e^{\varepsilon^{\S}_{D,D'}}$ and $\e^{\varepsilon^{\S}_{D',D}}$ can be written in terms of $N\coloneqq|D|$, $\tdist(x,y)$ and $\out_{D}(x)$ for $x\in D$. First, we note that
    \begin{gather*}
        \out_{D'}(y) = \frac{1}{|D'|}\bigg(\sum_{x\in D'} \tdist(x,y)\bigg) = \frac{1}{|D|+1}\bigg(\tdist(y,y) + \sum_{x\in D} \tdist(x,y)\bigg) = \frac{1}{N+1}\bigg(m + \sum_{x\in D} \tdist(x,y)\bigg),
    \end{gather*}
    and, for all $x\in D\subseteq D'$,
    \begin{align*}
        \out_{D'}(x) = \frac{1}{|D'|}\sum_{z\in D'}\tdist(x,z)
        = \frac{1}{N+1}\bigg(\sum_{\substack{z\in D}}\tdist(x,z) + \tdist(x,y)\bigg)
        = \frac{1}{N+1}(N\out_D(x) + \tdist(x,y)).
    \end{align*}

    In addition,  
    \begin{align*}
        \frac{\out_{D'}(x)}{\out_D(x)} 
        = \frac{\frac{1}{N+1}(N\out_D(x) + \tdist(x,y))}{\out_D(x)} 
        = \frac{N+\frac{\tdist(x,y)}{\out_D(x)}}{N+1}
    \end{align*}
    and
    \begin{gather*}
        \frac{1-\out_{D'}(x)}{1-\out_{D}(x)} 
        = \frac{1-\frac{1}{N+1}(N\out_D(x) + \tdist(x,y))}{1-\out_{D}(x)}
        = \frac{N+1-(N\out_D(x) + \tdist(x,y))}{(N+1)(1-\out_{D}(x))} \\
        = \frac{N(1-\out_D(x))+1- \tdist(x,y)}{(N+1)(1-\out_{D}(x))}
        = \frac{N+\frac{1-\tdist(x,y)}{1-\out_{D}(x)}}{N+1}.
    \end{gather*}

    Therefore, 
    \begin{align*}
        \e^{\varepsilon^{\S}_{D,D'}} &= \max_{\substack{C\subseteq D}}\bigg(\frac{1}{\out_{D'}(y)+\e^{-\varepsilon}(1-\out_{D'}(y))} \prod_{x\in C} \frac{1-\out_{D}(x)}{1-\out_{D'}(x)} \prod_{x\in D\backslash C} \frac{\out_{D}(x)}{\out_{D'}(x)}\bigg) \\
        &= \max_{\substack{C\subseteq D}}\bigg(\frac{1}{\e^{-\varepsilon}+(1-\e^{-\varepsilon})\out_{D'}(y)}\prod_{x\in C} \frac{N+1}{N+\frac{1-\tdist(x,y)}{1-\out_{D}(x)}} \prod_{x\in D\backslash C} \frac{N+1}{N+ \frac{\tdist(x,y)}{\out_D(x)}}\bigg) \\
        &= \max_{\substack{C\subseteq D}}\bigg(\bigg(\e^{-\varepsilon}+(1-\e^{-\varepsilon})\frac{m+\sum_{x\in D} \tdist(x,y)}{N+1}\bigg)^{-1}\prod_{x\in C} \frac{N+1}{N+\frac{1-\tdist(x,y)}{1-\out_{D}(x)}} \prod_{x\in D\backslash C} \frac{N+1}{N+ \frac{\tdist(x,y)}{\out_D(x)}}\bigg),
    \end{align*}
    and 
    \begin{align*}
        \e^{\varepsilon^{\S}_{D',D}} &= \max_{\substack{C\subseteq D}}\bigg((\out_{D'}(y)+\e^{\varepsilon}(1-\out_{D'}(y)))\prod_{x\in C} \frac{1-\out_{D'}(x)}{1-\out_{D}(x)} \prod_{x\in D\backslash C} \frac{\out_{D'}(x)}{\out_{D}(x)}\bigg) \\
        &= \max_{\substack{C\subseteq D}}\bigg((\e^{\varepsilon}-(\e^{\varepsilon}-1)\out_{D'}(y)) \prod_{x\in C}\frac{N+\frac{1-\tdist(x,y)}{1-\out_{D}(x)}}{N+1} \prod_{x\in D\backslash C} \frac{N+ \frac{\tdist(x,y)}{\out_D(x)}}{N+1}\bigg) \\
        &= \max_{\substack{C\subseteq D}}\bigg(\bigg(\e^{\varepsilon}-(\e^{\varepsilon}-1)\frac{m+\sum_{x\in D} \tdist(x,y)}{N+1}\bigg) \prod_{x\in C}\frac{N+\frac{1-\tdist(x,y)}{1-\out_{D}(x)}}{N+1} \prod_{x\in D\backslash C} \frac{N+ \frac{\tdist(x,y)}{\out_D(x)}}{N+1}\bigg).
    \end{align*}

    The terms inside the maximum in the previous values only depend on $N$, $a_{D,D'}\coloneqq\{\tdist(x,y)\}_{x\in D}$, and $z_{D}\coloneqq\{\out_D(x)\}_{x\in D}$. We use the following notation for the terms inside the previous maximums:
    \[
        \e^{\varepsilon^{\S}_{D,D'}} = \max_{C\subseteq D} g_{C,N}(a_{D,D'};z_{D}) \quad \text{and} \quad \e^{\varepsilon^{\S}_{D',D}} = \max_{C\subseteq D} f_{C,N}(a_{D,D'};z_{D}).
    \]

    Now we provide a bound that does not depend on $D$ and $D'$ (i.e., independent of $N$, $a_{D,D'}$ and $z_{D}$). By \Cref{lemma:boundsfordandout}, 
    \[
        \sum_{x'\in D}\out_{D}(x') \leq (2N-2)\out_D(x) - (N-2)m
    \]
    and 
    \[
        m\leq \out_D(x) \leq \frac{\min\{m+(N-1)M,(N-2)(\tdist(x,y)-m)+\sum_{x'\in D}\tdist(x',y)\}}{N} 
    \]
    for all $x\in D$. Fixing an order of the elements of $D$ such that $a_{D,D'}=(\tdist(x_1,y),\dots,\tdist(x_N,y))$ and $z_{D}=(\out_D(x_1),\dots,\out_D(x_N))$, we obtain that any element $z_{D}$ must be an element of $P_{a_{D,D'}}\subseteq[m,M]^N$, where $P_{a_{D,D'}}$ is the polytope defined by the previous inequalities, i.e., the polytope of \Cref{prop:PolytopeIsConvex}. Therefore, 
    \[
        \sup_{\substack{D,D'\in\D\backslash\{\varnothing\}:\\D'=D_{+y}}} \e^{\varepsilon^{\S}_{D,D'}} \leq \sup_{N\in\N}\max_{J\subseteq[N]}\max_{a\in[m,M]^N}\max_{z\in [m,M]^N} g_{C_J,N}(a;z),    
    \]
    and
    \[
        \sup_{\substack{D,D'\in\D\backslash\{\varnothing\}:\\D'=D_{+y}}} \e^{\varepsilon^{\S}_{D',D}} \leq \sup_{N\in\N}\max_{J\subseteq[N]}\max_{a\in[m,M]^N}\max_{z\in P_a} f_{C_J,N}(a;z).
    \]
    where $C_J=\{x_i\mid i\in J\}$. These suprema are computed in \Cref{th:boundinverse} and \Cref{th:bounddependingonp}, respectively, and depend only on our fixed variables $\varepsilon$, $m$, and $M$ (we recall that these proofs are computationally verified for some values of $\varepsilon$, $m$, and $M$; see \Cref{remark:computation,remark:computationInverse}). In this case, 
    \[
        \sup_{\substack{D,D'\in\D\backslash\{\varnothing\}:\\D'=D_{+y}}} \e^{\varepsilon^{\S}_{D,D'}} \leq \max\!\big\{(\e^{-\varepsilon}+(1-\e^{-\varepsilon})M)^{-1}\e^{1-\frac{1-M}{1-m}},(\e^{-\varepsilon}+(1-\e^{-\varepsilon})m)^{-1}\e^{1-\frac{m}{M}}\big\},    
    \]    
    and 
    \[
        \sup_{\substack{D,D'\in\D\backslash\{\varnothing\}:\\D'=D_{+y}}} \e^{\varepsilon^{\S}_{D,D'}} \leq \max_{p\in[0,1]}\max\{L_1(p),L_2(p)\}
    \]
    with 
    \[
        L_1(p) = (\e^\varepsilon-(\e^\varepsilon-1)(pM+(1-p)m))\e^{p\frac{M}{m}+(1-p)\frac{1-m}{1-(pM+(1-p)m)}-1}
    \]
    and
    \[
        L_2(p) = \bigg(\e^\varepsilon-(\e^\varepsilon-1)\bigg(pM+(1-p)\frac{(M+m)-pM}{2-p}\bigg)\bigg)\e^{p\frac{M}{m}+(1-p)\frac{1-\frac{(M+m)-pM}{2-p}}{1-M}-1}.
    \]

    Consequently, 
    \begin{gather*}
        \varepsilon^\S = \sup_{\substack{D,D'\in\D:\\D\sim D'}} \varepsilon^\S_{D,D'} = \max\!\bigg\{\sup_{\substack{y\in\X}}\varepsilon^\S_{\varnothing,\{y\}},\sup_{\substack{D,D'\in\D\backslash\{\varnothing\}:\\D'=D_{+y}}}\varepsilon^\S_{D,D'},\sup_{\substack{D,D'\in\D\backslash\{\varnothing\}:\\D'=D_{+y}}}\varepsilon^\S_{D',D}\bigg\} \\
        = \max_{p\in[0,1]}\max\!\bigg\{\ln(L_1(p)),\ln(L_2(p)),-\ln(\e^{-\varepsilon}+(1-\e^{-\varepsilon})M)+1-\frac{1-M}{1-m}\bigg\}.
    \end{gather*}

    We note that term $\ln((\e^{-\varepsilon}+(1-\e^{-\varepsilon})m)^{-1}\e^{1-\frac{m}{M}})$ is bounded always by the other three and is therefore superfluous in the previous expression (see \Cref{remark:computationInverse}). This value is the same as the one in the statement, concluding the proof.\qedhere
\end{proof}

\begin{proposition}\label{prop:expressionofp}
    Let $\varepsilon\geq0$, and let $m,M\in(0,1)$ such that $m\leq M$. 
    
    Let $L_1\colon[0,1]\to\R^+$ and $L_2\colon[0,1]\to\R^+$ be two functions such that
    \[
        L_1(p) = (\e^\varepsilon-(\e^\varepsilon-1)(pM+(1-p)m))\e^{p\frac{M}{m}+(1-p)\frac{1-m}{1-(pM+(1-p)m)}-1}
    \]
    and
    \[
        L_2(p) = \bigg(\e^\varepsilon-(\e^\varepsilon-1)\bigg(pM+(1-p)\frac{(M+m)-pM}{2-p}\bigg)\bigg)\e^{p\frac{M}{m}+(1-p)\frac{1-\frac{(M+m)-pM}{2-p}}{1-M}-1}
    \]
    for all $p\in[0,1]$, and let $l_1=\ln\circ L_1$ and $l_2=\ln\circ L_2$, i.e., 
    \[
        l_1(p) = \ln(\e^\varepsilon-(\e^\varepsilon-1)(pM+(1-p)m))+p\frac{M}{m}+(1-p)\frac{1-m}{1-(pM+(1-p)m)}-1
    \]
    and
    \[
        l_2(p) = \ln\!\bigg(\e^\varepsilon-(\e^\varepsilon-1)\bigg(pM+(1-p)\frac{(M+m)-pM}{2-p}\bigg)\bigg)+p\frac{M}{m}+(1-p)\frac{1-\frac{(M+m)-pM}{2-p}}{1-M}-1
    \]
    for all $p\in[0,1]$.
    
    Then for $\varepsilon\neq0$ and $m\neq M$, $L_1$ and $l_1$ achieve their maximum over $[0,1]$ when $p_1=\min\{1,\max\{V_1,0\}\}$ with
    \[
        V_1 = \begin{cases}
            -\frac{1}{3a_1}\bigg(b_1 + \sqrt[3]{\frac{D_{1,1}+\sqrt{D^2_{1,1}-4D_{1,0}^3}}{2}}+\sqrt[3]{\frac{D_{1,1}-\sqrt{D^2_{1,1}-4D_{1,0}^3}}{2}}\bigg) &\text{if $D^2_{1,1}-4D^3_{1,0}>0$} \\
            -\frac{1}{3a_1}\Big(b_1 + 2\sqrt{D_{1,0}}\cos\!\big(\frac{1}{3}\arccos\!\big(\frac{D_{1,1}}{2R_1}\big)\big)\Big) &\text{if $D^2_{1,1}-4D_{1,0}^3\leq0$} \\
        \end{cases}
    \]
    with
    \begin{gather*}
        a_1 = (\e^{\varepsilon}-1)\frac{M}{m}(M-m)^2, \\
        b_1 = -\frac{M-m}{m}((m^2-4Mm+2M)(\e^{\varepsilon}-1)+\e^{\varepsilon}M), \\
        c_1 = \frac{1-m}{m}((\e^{\varepsilon}-1)(2m^2-4Mm-m)+(3\e^{\varepsilon}-1)M), \\
        d_1 = -(1-m)\bigg((\e^{\varepsilon}-1)(m-2)+\frac{\e^{\varepsilon}}{m}\bigg),\\
        D_{1,0} = b_1^2 - 3a_1c_1, \\
        D_{1,1} = 2b_1^3-9a_1b_1c_1+27a_1^2d_1, \\
        R_1 = \sqrt{D_{1,0}^3};
    \end{gather*}
    and $L_2$ and $l_2$ achieve their maximum over $[0,1]$ when $p_2=\min\{1,\max\{V_2,0\}\}$ with
    \[
        V_2 = -\frac{1}{3a_2}\bigg(b_2 + 2\sqrt{D_{2,0}}\cos\!\bigg(\frac{1}{3}\arccos\!\bigg(\frac{D_{2,1}}{2R_2}\bigg)\bigg)\bigg)
    \]
    with
    \begin{gather*}
        a_2 = \frac{\e^{\varepsilon}-(\e^{\varepsilon}-1)m}{m}, \\
        b_2 = -\frac{6\e^{\varepsilon}-(\e^{\varepsilon}-1)(M+5m)}{m}, \\
        c_2 = \frac{1}{(1-M)m}(m((\e^{\varepsilon}-1)(m+9M-9)-\e^{\varepsilon})+4M((\e^{\varepsilon}-1)M-4\e^{\varepsilon}+1)+12\e^{\varepsilon}), \\
        d_2 = -(2\e^{\varepsilon}-(\e^{\varepsilon}-1)(M+m))\bigg(\frac{4-4M-m}{(1-M)m}\bigg)+2(\e^{\varepsilon}-1),\\
        D_{2,0} = b_2^2 - 3a_2c_2, \\
        D_{2,1} = 2b_2^3-9a_2b_2c_2+27a_2^2d_2, \\
        R_2 = \sqrt{D_{2,0}^3}.
    \end{gather*}

    For $\varepsilon=0$ and $m\neq M$, $L_1$ and $l_1$ achieve their maximum over $[m,M]$ when $p_1=\min\{1,\max\{V_1,0\}\}$ with
    \[
        V_1 = \frac{1-m}{M-m}-\frac{\sqrt{Mm(1-m)(1-M)}}{M(M-m)},
    \]
    and $L_2$ and $l_2$ achieve their maximum over $[0,1]$ when $p_2=\min\{1,\max\{V_2,0\}\}$
    with 
    \[
        V_2 = 2 - \frac{\sqrt{m(1-M)}}{1-M}. 
    \]

    For $m=M$, $L_1$ and $L_2$ are constantly $(\e^{\varepsilon}-(\e^{\varepsilon}-1)m)=(\e^{\varepsilon}-(\e^{\varepsilon}-1)M)$. 
\end{proposition}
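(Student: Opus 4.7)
The plan is to reduce the problem to finding the critical points of $l_i$ on $[0,1]$ for $i \in \{1,2\}$, since the natural logarithm is strictly monotonic and so $L_i$ and $l_i$ attain their maxima at the same points. I would first compute $l_i'(p)$ directly by differentiating the log term and the rational term separately. Writing $u(p) = pM + (1-p)m = m + p(M-m)$, the derivative of $l_1$ simplifies to a sum of three rational functions with denominators $\e^\varepsilon - (\e^\varepsilon-1)u(p)$, $m$, and $(1 - u(p))^2$; the derivative of $l_2$ has an analogous structure with the additional denominator $2-p$ arising from the nested term $\tfrac{(M+m) - pM}{2-p}$.

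After clearing denominators, I claim that $l_i'(p) = 0$ reduces to a cubic polynomial equation $a_i p^3 + b_i p^2 + c_i p + d_i = 0$ with the coefficients stated in the proposition. Verifying this is routine but lengthy algebraic manipulation: the rational summands are combined over a common denominator, several higher-order terms cancel (this is the key point, otherwise the equation would be of degree four or more), and the numerator collapses to a cubic in $p$. Once this reduction is done, the formulas for $V_i$ follow from the standard solution for cubic equations via the depressed-cubic substitution $p = t - b_i/(3a_i)$ and Cardano's formula: when the discriminant $D_{i,1}^2 - 4 D_{i,0}^3$ is strictly positive, the cubic has a single real root expressed as a sum of two real cube roots; when the discriminant is non-positive (the casus irreducibilis), the three real roots admit the trigonometric representation involving $\cos(\tfrac{1}{3}\arccos(\cdot))$.

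To conclude, I would argue that the maximum of $l_i$ over $[0,1]$ is attained at $p_i = \min\{1, \max\{V_i, 0\}\}$ by boundary analysis: since $l_i$ is continuous on a closed interval, the maximum is either at an interior critical point or at an endpoint, and the clipping operation selects the correct candidate. The degenerate cases follow as limits: when $\varepsilon = 0$ the logarithmic term in $l_i$ vanishes, the leading coefficient $a_i$ degenerates, the cubic collapses to a quadratic, and the standard quadratic formula yields the stated expressions for $V_1$ and $V_2$; when $m = M$, the interval $[m,M]$ is trivial, $u(p) \equiv m$, and both $L_1$ and $L_2$ reduce directly to the constant $\e^\varepsilon - (\e^\varepsilon - 1)m$.

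The main obstacle will be the cubic-reduction step: symbolically checking that the numerator of $l_i'(p)$ matches the stated polynomial requires careful bookkeeping, especially for $l_2$ where the nested rational term produces many cross terms before the cancellations kick in. A second subtle point is ensuring, in the three-real-root case, that the specific trigonometric branch with the principal value of $\arccos$ picks precisely the root corresponding to the global maximum on $[0,1]$, rather than a local minimum or a root lying outside the interval; this likely requires a sign argument on $l_i'$ near each candidate root or a direct numerical cross-check, in the same spirit as the computer-assisted verification used for \Cref{th:outlierscoresuppression}.
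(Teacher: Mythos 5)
Your overall route matches the paper's: pass to $l_i$ (or work with $L_i$ directly) by monotonicity, differentiate, collapse the numerator of the derivative to the cubic $a_ip^3+b_ip^2+c_ip+d_i$ (the paper writes $\frac{\partial L_1}{\partial p}$ as $-(M-m)$ times this cubic over $(1-(pM+(1-p)m))^2$ times the exponential factor, and analogously for $L_2$ with denominator $(2-p)^3$), and then apply Cardano's formula with the trigonometric form in the casus irreducibilis; the degenerate cases $\varepsilon=0$ (cubic collapses to a quadratic) and $m=M$ (constant functions) are handled exactly as in the paper.

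The genuine gap is the step you compress into ``boundary analysis'' and suggest might be settled by ``a direct numerical cross-check'': identifying \emph{which} root of the cubic is the maximizer and justifying the clipping $p_i=\min\{1,\max\{V_i,0\}\}$ is the substantive content of the proposition, it must hold for all $\varepsilon$, $m<M$, and so cannot be delegated to numerics (the paper's computer assistance concerns \Cref{th:outlierscoresuppression}, not this proposition). The paper resolves it analytically: extend $L_1$ to $\R$, note its single asymptote at $p=\frac{1-m}{M-m}>1$ and the limits $0$ at $-\infty$, $0$ at the asymptote from the left, $-\infty$ at $+\infty$, so Rolle-type arguments force exactly one local maximum left of the asymptote; then an explicit inequality, $-\frac{b_1-\sqrt{D_{1,0}}}{3a_1}>\frac{1-m}{M-m}$, which after substituting $a_1,b_1,c_1$ reduces to $-(\e^{\varepsilon}-1)m(1-M)<0$, combined with a second-derivative computation, shows the other two real roots lie at or beyond the asymptote and that the $k=0$ trigonometric branch $r^-_{1,0}$ is precisely that maximum. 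For $L_2$ one uses the asymptote at $p=2$, the unique zero in $(1,2)$, and the limits $0,0,\infty,\infty$ to classify the three critical points as one local maximum (the $k=0$ root, the only one that can lie in $[0,1]$) followed by two local minima. This unimodal picture on the interval up to the asymptote is also what makes the clipping correct when $V_i\notin[0,1]$; without an argument of this kind, valid for every admissible $(\varepsilon,m,M)$, the branch selection in $V_1,V_2$ and the claimed maximizer remain unproven. The cubic-coefficient bookkeeping you flag is indeed only routine algebra, but it too must be carried out for the statement to stand.
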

\begin{proof}
    We first note that $L_1$, $L_2$, $l_1$ and $l_2$ are well-defined functions over $[0,1]$ (in particular, $L_1(p),L_2(p)\geq 1$ for all $p\in[0,1]$). We note too that the point that achieves the maximum for $L_1$ is the same point that achieves the maximum for $l_1$ and vice versa, since $\ln$ and $\exp$ are strictly increasing functions (and analogously for $L_2$ and $l_2$). We thus only center on finding the maximum for $L_1$ and $L_2$. 

    With the exception of the degenerate cases ($\varepsilon=0$ and $m=M$), finding the values $p_1,p_2\in[0,1]$ that maximize $L_1$ and $L_2$ consists of solving a cubic polynomial. Thus, the expression shown in the statement is just a direct application of the solution of a generic cubic polynomial. 
    
    We now describe the optimal bound more precisely. Suppose we are in the non-degenerate cases, i.e., $\varepsilon\neq0$ and $m<M$. We briefly consider $L_1$ and $L_2$ as functions over the real line. From their expression, we can see that $L_1$ has an asymptote when $p=\frac{1-m}{M-m}>1$ (i.e., $1-(pM+(1-p)m)=0$) and $L_2$ has an asymptote when $p=2$. Functions $L_1$ and $L_2$ are defined and are continuous over the real line except on their respective asymptote.
    
    The derivatives of $L_1$ and $L_2$ with respect to $p$ are, respectively,
    \[
        \frac{\partial L_1}{\partial p}(p) = -(M-m)\frac{a_1p^3+b_1p^2+c_1p+d_1}{(1-(pM+(1-p)m))^2}\e^{p\frac{M}{m}+(1-p)\frac{1-m}{1-(pM+(1-p)m)}-1}   
    \]
    and
    \[
        \frac{\partial L_2}{\partial p}(p) = -(M-m)\frac{a_2p^3+b_2p^2+c_2p+d_2}{(2-p)^3}\e^{p\frac{M}{m}+(1-p)\frac{1-\frac{(M+m)-pM}{2-p}}{1-M}-1}   
    \]
    where $a_1$, $b_1$, $c_1$, $d_1$, $a_2$, $b_2$, $c_2$ and $d_2$ are the constants of the statement (depending only on $m$, $M$ and $\varepsilon$). Note that we select the sign such that $a_1,a_2>0$ (note too that $a_1=a_2=0$ only in the degenerate cases). The critical points of $L_j$ for $j\in\{1,2\}$ are therefore the roots of $a_jp^3+b_jp^2+c_jp+d_j=0$ (except if a root matches the respective asymptote).
    There are two possibilities regarding the roots of a cubic polynomial with real coefficients: that they are all real or that there are a real and two imaginary roots (one conjugate of the other). 
    According to the general cubic formula, the three roots of the polynomial are
    \[
        -\frac{1}{3a_j}\Bigg(b_j + \xi^k\sqrt[3]{\frac{D_{j,1}+\sqrt{D^2_{j,1}-4D_{j,0}^3}}{2}}+\xi^{-k}\sqrt[3]{\frac{D_{j,1}-\sqrt{D^2_{j,1}-4D_{j,0}^3}}{2}}\Bigg)
    \]
    for $k\in\{0,1,2\}$, where $D_{j,0}$ and $D_{j,1}$ are the expressions of the statement and $\xi=\frac{1+\im\sqrt{3}}{2}=\e^{\frac{2\pi}{3}\im}$ is a primitive cubic root of the unity. We note the following, if $D^2_{j,1}-4D_{j,0}^3>0$, then both
    \[
        \sqrt[3]{\frac{D_{j,1}+\sqrt{D^2_{j,1}-4D_{j,0}^3}}{2}}\quad\text{and}\quad\sqrt[3]{\frac{D_{j,1}-\sqrt{D^2_{j,1}-4D_{j,0}^3}}{2}}
    \]
    are different real numbers and 
    \[
        \xi^k\sqrt[3]{\frac{D_{j,1}+\sqrt{D^2_{j,1}-4D_{j,0}^3}}{2}}+\xi^{-k}\sqrt[3]{\frac{D_{j,1}-\sqrt{D^2_{j,1}-4D_{j,0}^3}}{2}}
    \]
    is real for $k=0$ and imaginary for $k=1$ and $k=2$. Therefore, $a_jp^3+b_jp^2+c_jp+d_j=0$ has one real root,
    \[
        r^+_{j} = -\frac{1}{3a_j}\Bigg(b_j + \sqrt[3]{\frac{D_{j,1}+\sqrt{D^2_{j,1}-4D_{j,0}^3}}{2}}+\sqrt[3]{\frac{D_{j,1}-\sqrt{D^2_{j,1}-4D_{j,0}^3}}{2}}\Bigg)   
    \]
    and two imaginary roots. On the other hand, if $D^2_{j,1}-4D_{j,0}^3<0$, then 
    \[
        \sqrt[3]{\frac{D_{j,1}+\sqrt{D^2_{j,1}-4D_{j,0}^3}}{2}}\quad\text{and}\quad\sqrt[3]{\frac{D_{j,1}-\sqrt{D^2_{j,1}-4D_{j,0}^3}}{2}}
    \]
    are imaginary numbers, one the conjugate of the other. Indeed, we consider the polar form\footnote{The polar form of a non-zero complex number of the form $a+b\im$ ($a,b\in\R$) is defined as $R\e^{\im\theta}$ with $R=\sqrt{a^2+b^2}\neq0$ and $\theta=\operatorname{sgn}(b)\arccos(\frac{a}{R})$.} 
    of $\frac{D_{j,1}+\sqrt{D^2_{j,1}-4D_{j,0}^3}}{2}=\frac{D_{j,1}}{2}+\im\frac{\sqrt{-(D^2_{j,1}-4D_{j,0}^3)}}{2}$, which is $R_j\e^{\im\theta_j}$ with
    \[
        R_j = \sqrt{\bigg(\frac{D_{j,1}}{2}\bigg)^2+\frac{-(D_{j,1}^2-4D_{j,0}^3)}{2^2}} = \sqrt{\frac{D_{j,1}^2-(D_{j,1}^2-4D_{j,0}^3)}{4}}=\sqrt{D_{j,0}^3}
        \qquad \text{and} \qquad
        \theta_j 
        = \arccos\!\bigg(\frac{D_{j,1}}{2R_j}\bigg).
    \]
    
    Then, by the properties of the cubic root,  
    \[
        \sqrt[3]{\frac{D_{j,1}+\sqrt{D^2_{j,1}-4D_{j,0}^3}}{2}} = \sqrt[3]{R_j\e^{\im\theta_j}} = \sqrt[3]{R_j}\e^{\im\frac{\theta_j}{3}} = \sqrt{D_{j,0}}\e^{\im\frac{\theta_j}{3}}
    \]
    and 
    \[
        \sqrt[3]{\frac{D_{j,1}-\sqrt{D^2_{j,1}-4D_{j,0}^3}}{2}} = \sqrt[3]{R_j\e^{-\im\theta_j}} = \sqrt[3]{R_j}\e^{-\im\frac{\theta_j}{3}} = \sqrt{D_{j,0}}\e^{-\im\frac{\theta_j}{3}}.
    \]

    Note that $\sqrt{D_{j,0}}$ and $\sqrt{D_{j,0}^3}$ are well-defined positive numbers since $D^2_{j,1}-4D_{j,0}^3<0$ ensures that $D_{j,0}$ is positive. Therefore, 
    \[
        \xi^k\sqrt{D_{j,0}}\e^{\im\frac{\theta_j}{3}}+\xi^{-k}\sqrt{D_{j,0}}\e^{-\im\frac{\theta_j}{3}} = \sqrt{D_{j,0}}\big(\e^{\im\frac{\theta_j+2\pi k}{3}}+\e^{-\im\frac{\theta_j+2\pi k}{3}}\big) = 2\sqrt{D_{j,0}}\cos\!\bigg(\frac{\theta_j+2\pi k}{3}\bigg)
    \]
    is real for all $k\in\{0,1,2\}$, since it is the sum of an imaginary number and its conjugate. In conclusion, the three real roots are
    \[
        r^-_{j,k}\coloneqq-\frac{1}{3a_j}\Bigg(b_j + 2\sqrt{D_{j,0}}\cos\!\bigg(\frac{1}{3}\arccos\!\bigg(\frac{D_{j,1}}{2R_j}\bigg)+\frac{2\pi k}{3}\bigg)\Bigg)
    \]
    for $k\in\{0,1,2\}$. Additionally, by the definition of cosine and arccosine, we have that 
    \[
        \cos\!\bigg(\frac{1}{3}\arccos(x)+\frac{2\pi k}{3}\bigg) \in \begin{cases}
            [\frac{1}{2},1] & \text{if $k=0$,}\\
            [-\frac{1}{2},\frac{1}{2}] & \text{if $k=2$,}\\
            [-1,-\frac{1}{2}] & \text{if $k=1$.}
        \end{cases} 
    \]
    for all $x\in[-1,1]$, the domain of $\arccos$. Then, since $a_1,a_2>0$, 
    \begin{equation}
        r^-_{j,0} \leq -\frac{b_j+\sqrt{D_{j,0}}}{3a_j} \leq r^-_{j,2} \leq -\frac{b_j-\sqrt{D_{j,0}}}{3a_j} \leq r^-_{j,1}, \label{eq:three_real_roots_inequality}
    \end{equation}
    where the intermediate values correspond to when the cosine equals $\pm\frac{1}{2}$.

    Finally, in the case where $D^2_{j,1}-4D_{j,0}^3=0$, we have that 
    \[
        \sqrt[3]{\frac{D_{j,1}+\sqrt{D^2_{j,1}-4D_{j,0}^3}}{2}}=\sqrt[3]{\frac{D_{j,1}-\sqrt{D^2_{j,1}-4D_{j,0}^3}}{2}}=\sqrt[3]{\frac{D_{j,1}}{2}}
    \]
    and, therefore, 
    \[
        \xi^k\sqrt[3]{\frac{D_{j,1}}{2}}+\xi^{-k}\sqrt[3]{\frac{D_{j,1}}{2}} = (\xi^k + \xi^{-k})\sqrt[3]{\frac{D_{j,1}}{2}}
    \]
    is real for $k\in\{0,1,2\}$ (for the cases $k\in\{1,2\}$, it corresponds to the sum of a complex number and its conjugate). In this case, there is a root with multiplicity higher than 1, and the roots correspond also to the expressions $r^-_{j,0}$, $r^-_{j,1}$ and $r^-_{j,2}$ previously given for case $D^2_{j,1}-4D_{j,0}^3<0$.
    
    Now that we have an expression of the real roots of the polynomials $a_1p^3+b_1p^2+c_1p+d_1=0$ and $a_2p^3+b_2p^2+c_2p+d_2=0$, and thus of the critical points of $L_1$ and $L_2$, we find the global maximum in the interval $p\in[0,1]$. 

    We study functions $L_1$ and $L_2$ separately. 

    Function $L_1$ is continuous over $\R\backslash\{\frac{1-m}{M-m}\}$, has a unique zero at $p=\frac{1-m}{M-m}+\frac{1}{(\e^{\varepsilon}-1)(M-m)}>\frac{1-m}{M-m}>1$ and is positive at $p\in(-\infty,\frac{1-m}{M-m})\cup(\frac{1-m}{M-m},\frac{1-m}{M-m}+\frac{1}{(\e^{\varepsilon}-1)(M-m)})$ and negative at $p\in(\frac{1-m}{M-m}+\frac{1}{(\e^{\varepsilon}-1)(M-m)},\infty)$. The limits of function $L_1$ to $\pm\infty$ and at the asymptote are
    \begin{gather*}
        \lim_{p\to-\infty} L_1(p)=0, \quad \lim_{p\to (\frac{1-m}{M-m})^-} L_1(p)=0, \quad
        \lim_{p\to (\frac{1-m}{M-m})^+} L_1(p)=\infty \quad\text{and}\quad
        \lim_{p\to+\infty} L_1(p)=-\infty.
    \end{gather*}

    Consequently, $L_1$ has at least one local maxima in $(-\infty,\frac{1-m}{M-m})$ by an extension of Rolle's theorem to infinite intervals. This maxima is at $p=r^+_1$ when $a_1p^3 + b_1p^2 + c_1p + d_1 = 0$ had one real root ($D^2_{1,1}-4D^3_{j,0}>0$).  We now will see that in the case we have three real roots (i.e., $D^2_{j,1}-4D_{j,0}^3\leq0$), we have that
    \begin{equation}\label{eq:inequalityforarccos}
        -\frac{b_1-\sqrt{D_{1,0}}}{3a_1} > \frac{1-m}{M-m},
    \end{equation}
    which implies that $r^-_{1,1}\geq \frac{1-m}{M-m}$ by \Cref{eq:three_real_roots_inequality}. Indeed, substituting the expression $D_{1,0}=b_1^2-3a_1c_1$, we obtain that \Cref{eq:inequalityforarccos} is equivalent to
    \begin{gather*}
        \sqrt{b_1^2 - 3a_1c_1} > 3a_1\frac{1-m}{M-m}+b_1,
    \end{gather*}
    and therefore it suffices to see that
    \[
        b^2_1-3a_1c_1 > \bigg(3a_1\frac{1-m}{M-m}+b_1\bigg)^2
    \]
    holds. So, this inequality corresponds to 
    \begin{align*}
        b^2_1-3a_1c_1 &> \bigg(3a_1\frac{1-m}{M-m}+b_1\bigg)^2 \\
        \Longleftrightarrow b^2_1-3a_1c_1 &> \bigg(3a_1\frac{1-m}{M-m}\bigg)^2 + b^2_1 + 6a_1b_1\frac{1-m}{M-m} \\
        \Longleftrightarrow 0 &> 3a_1\bigg(\frac{1-m}{M-m}\bigg)^2 + 2b_1\frac{1-m}{M-m}+c_1,
    \end{align*}
    and substituting the expressions of $a_1$, $b_1$ and $c_1$, it corresponds to 
    \begin{align*}
        0>{}&  3(\e^{\varepsilon}-1)M(1-m) - 2((m^2-4Mm+2M)(\e^{\varepsilon}-1)+\e^{\varepsilon}M) 
        + (\e^{\varepsilon}-1)(2m^2-4Mm-m)+(3\e^{\varepsilon}-1)M \\
        \Longleftrightarrow 0>{}&  3(\e^{\varepsilon}-1)(M-Mm) - (2m^2-8Mm+4M)(\e^{\varepsilon}-1)-2\e^{\varepsilon}M
        + (\e^{\varepsilon}-1)(2m^2-4Mm-m)+(\e^{\varepsilon}-1)M+2\e^{\varepsilon}M, \\
        \Longleftrightarrow 0>{}&  (\e^{\varepsilon}-1)(3M-3Mm -2m^2+8Mm-4M+2m^2-4Mm-m+M), \\
        \Longleftrightarrow 0>{}&  -(\e^{\varepsilon}-1)m(1-M),
    \end{align*}
    which clearly is satisfied. Thus, $1<r^-_{1,1}$. Computing the second derivative of $L_1$ at $p=r^-_{1,1}$, we obtain that 
    \[
        \frac{\partial^2L_1}{\partial p^2} (r^-_{1,1}) = -(M-m)\frac{(r^-_{1,0}-r^-_{1,1})(r^-_{1,2}-r^-_{1,1})}{(1-(r^-_{1,1}M+(1-r^-_{1,1})m))^2}\e^{r^-_{1,1}\frac{M}{m}+(1-r^-_{1,1})\frac{1-m}{1-(r^-_{1,1}M+(1-r^-_{1,1})m)}-1},
    \]
    which is non-zero if and only if $r^-_{1,1}\neq r^-_{1,2}$. Thus, either $1<r^-_{1,1}=r^-_{1,2}$ or $r^-_{1,1}$ is a maximum or minimum of $L_1$. This last case implies that $r^-_{1,2}$ must either be a minimum or maximum (respectively) and that $r^-_{1,2}>\frac{1-m}{M-m}>1$ since otherwise we would have a contradiction with the limits at $\pm\infty$ and at the asymptote. Consequently, $r^-_{1,0}$ must be the critical point at $(-\infty,\frac{1-m}{M-m})$.
       
    In conclusion, if $D^2_{1,1}-4D^3_{1,0}>0$, then $L_1$ has one local maxima, $r^+_{1}$, which must be in interval $(-\infty,\frac{1-m}{M-m})$. On the other hand, when $D^2_{1,1}-4D^3_{1,0}\leq 0$, then $L_1$ has one local maxima, $r^-_{1,0}$, which must be in the interval $(-\infty,\frac{1-m}{M-m})$, and two other critical points in $(\frac{1-m}{M-m},\infty)$. Since in both cases, $r^+_{1}$ and $r^-_{1,0}$ are the only critical points and local maxima in the interval $(-\infty,\frac{1-m}{M-m})\supseteq[0,1]$, we conclude that
    \[
        \argmax_{p\in[0,1]} L_1(p) = \min\{1,\max\{V_1,0\}\},  
    \]
    with $V_1$ the value in the statement. 

    Function $L_2$ is continuous over $\R\backslash\{2\}$, has a unique zero at $p=1+\frac{\e^{\varepsilon}-(\e^{\varepsilon}-1)M}{\e^{\varepsilon}-(\e^{\varepsilon}-1)m}\in(1,2)$ and is positive at $p\in(-\infty,1+\frac{\e^{\varepsilon}-(\e^{\varepsilon}-1)M}{\e^{\varepsilon}-(\e^{\varepsilon}-1)m})\cup(2,\infty)$ and negative at $p\in(1+\frac{\e^{\varepsilon}-(\e^{\varepsilon}-1)M}{\e^{\varepsilon}-(\e^{\varepsilon}-1)m},2)$. The limits of function $L_2$ to $\pm\infty$ and at the asymptote are
    \[
        \lim_{p\to-\infty} L_2(p)=0, \quad \lim_{p\to 2^-} L_2(p)=0, \quad
        \lim_{p\to 2^+} L_2(p)=\infty \quad\text{and}\quad
        \lim_{p\to+\infty} L_2(p)=\infty.
    \]

    Consequently, by extensions of Rolle's theorem to infinity, we can affirm that $L_2$ has three critical points: a local maximum at $(-\infty,1+\frac{\e^{\varepsilon}-(\e^{\varepsilon}-1)M}{\e^{\varepsilon}-(\e^{\varepsilon}-1)m})$, a local minimum at $(1+\frac{\e^{\varepsilon}-(\e^{\varepsilon}-1)M}{\e^{\varepsilon}-(\e^{\varepsilon}-1)m},2)$ and another local minimum at $(2,\infty)$. These critical points correspond respectively to $r^-_{2,0}$, $r^-_{2,2}$ and $r^-_{2,1}$. Therefore, since $r^-_{2,0}$ is the only critical point that can be in $[0,1]$, 
    \[
        \argmax_{p\in[0,1]} L_2(p) = \min\{1,\max\{r^-_{2,0},0\}\}.  
    \]

    Finally, we tackle the degenerate cases. For $\varepsilon=0$, $L_1$ is continuous and positive over $\R\backslash\{\frac{1-m}{M-m}\}$. The limits of function $L_1$ to $\pm\infty$ and at the asymptote are
    \begin{gather*}
        \lim_{p\to-\infty} L_1(p)=0, \quad \lim_{p\to (\frac{1-m}{M-m})^-} L_1(p)=0, \quad
        \lim_{p\to (\frac{1-m}{M-m})^+} L_1(p)=\infty \quad\text{and}\quad
        \lim_{p\to+\infty} L_1(p)=\infty.
    \end{gather*}

    Function $L_1$ has two critical points since the derivative corresponds to a polynomial of degree $2$ multiplied by some positive terms. Thus, the critical points correspond to the roots of such polynomial:
    \[
        p_1 = \frac{1-m}{M-m}-\frac{\sqrt{Mm(1-m)(1-M)}}{M(M-m)} \quad\text{and}\quad p_2 = \frac{1-m}{M-m}+\frac{\sqrt{Mm(1-m)(1-M)}}{M(M-m)}, 
    \]
    one at each side of the asymptote. Consequently, we can affirm that $p_1$ is a local maximum and $p_2>1$ is a local minimum. Thus,  
    \[
        \argmax_{p\in[0,1]} L_1(p) = \min\{1,\max\{p_1,0\}\}.
    \]

    For $\varepsilon=0$, function $L_2$ behaves similarly to $L_1$. It is continuous and positive over $\R\backslash\{2\}$. The limits of function $L_2$ to $\pm\infty$ and at the asymptote are
    \begin{gather*}
        \lim_{p\to-\infty} L_2(p)=0, \quad \lim_{p\to 2^-} L_2(p)=0, \quad
        \lim_{p\to 2^+} L_2(p)=\infty \quad\text{and}\quad
        \lim_{p\to+\infty} L_2(p)=\infty.
    \end{gather*}

    Function $L_2$ has two critical points since the derivative corresponds to a polynomial of degree $2$ multiplied by some positive terms. Thus, the critical points correspond to the roots of such polynomial:
    \[
        p_1 = 2-\frac{\sqrt{m(1-M)}}{1-M} \quad\text{and}\quad p_2 = 2+\frac{\sqrt{m(1-M)}}{1-M}, 
    \]
    one at each side of the asymptote. Consequently, we can affirm that $p_1$ is a local maximum and $p_2>1$ is a local minimum. Thus,  
    \[
        \argmax_{p\in[0,1]} L_2(p) = \min\{1,\max\{p_1,0\}\}.
    \]

    Finally, for the case $m=M$, we can easily see that $L_1$ and $L_2$ are constantly $(\e^{\varepsilon}-(\e^{\varepsilon}-1)m)=(\e^{\varepsilon}-(\e^{\varepsilon}-1)M)$. 
\end{proof}

\begin{remark}[Tightness of the privacy parameters in \Cref{th:outlierscoresuppression}]\label{remark:tightnessoutliersuppressiontheorem}
    The bounds we provide are tight in the colored areas in \Cref{fig:SuppressionTheoremSimplifiedAreas} with respect to \Cref{th:unboundedcase} (see \Cref{prop:tightnessoutlierscore}), but we are not able to show if they are tight in general. The pair of neighboring databases $D$ and $D_{+y}$ that achieves these ``tight'' bounds corresponds to having all elements of $D$ be the closest possible records (i.e., all copies of the same element $x$) and $y$ be the furthest record from $x$ (red area); and having all elements of $D$ be the furthest away from each other and $y$ be the closest to all of them (green and purple areas; possible under certain metrics). We believe that the values that achieve the tight bounds in the uncolored region would correspond to a combination of these cases, i.e., a portion of records are as close as possible, while the rest are as far away as possible. In this case, the portion would be determined by the value $p$ that achieves the maximum. 

    \begin{figure}[H]
        \centering
        \includegraphics[width=0.5\columnwidth]{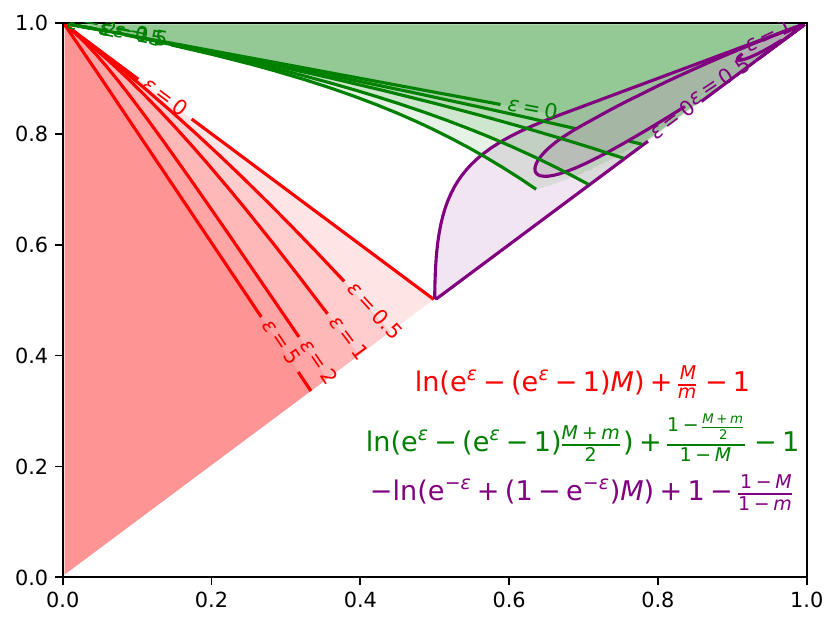}
        \caption{The colored areas show the values of $m$ ($x$-axis) and $M$ ($y$-axis) such that $\varepsilon^\S=\varepsilon^\S(\varepsilon,m,M)$ simplifies, which depends on the given $\varepsilon$. The expression of $\varepsilon^\S$ simplifies for the values $m$ and $M$ within a colored region to the expression of the same color. These colored areas are also where the privacy parameters are tight with respect to \Cref{th:unboundedcase}.}
        \label{fig:SuppressionTheoremSimplifiedAreas}
    \end{figure}
\end{remark}

\begin{proposition}\label{prop:tightnessoutlierscore}
    The privacy parameters given in \Cref{th:outlierscoresuppression} are tight to those of \Cref{th:unboundedcase} (i.e., are equal) when the maximum is achieved at $p=0$ or $p=1$ (including when the maximum is $l_3$). Note that this condition represents the values of $m$, $M$, and $\varepsilon$ colored in \Cref{fig:SuppressionTheoremSimplifiedAreas}. Furthermore, when the maximum is achieved at $p=1$, tightness is achieved independently of the choice of $\dist$ (for $|\X|\geq 2$). 
\end{proposition}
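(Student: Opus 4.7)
The bound of \Cref{th:outlierscoresuppression} is obtained by two relaxations of the exact expressions for $\varepsilon^{\S}_{D,D'}$ and $\varepsilon^{\S}_{D',D}$ given by \Cref{th:unboundedcase}: the inner maximization over $C\subseteq D$ is bounded in closed form, and the outer supremum is then taken over all $(N,a,z)$ in the polytope used in the proof. To establish tightness in the cases where the maximum is at $p\in\{0,1\}$ (or equals $l_3$), the plan is to exhibit for each such case a sequence of neighboring pairs $(D_N,(D_N)_{+y})$ whose direct evaluation by \Cref{th:unboundedcase} converges, as $N\to\infty$, to the claimed bound of \Cref{th:outlierscoresuppression}. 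Since the latter is by construction an upper bound on $\sup_{D\sim D'}\varepsilon^{\S}_{D,D'}$, such sequences suffice.

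\textbf{Case $p=1$ (so $l_1(1)=l_2(1)$, independent of $\dist$).} Pick any $x,y\in\X$ with $\tdist(x,y)=M$; this requires only $|\X|\geq 2$ and that $\dist$ attains its maximum $1$. Take $D_N$ to be $N$ copies of $x$, so that $\out_{D_N}(x)=m$ and $\out_{D'_N}(y)=(m+NM)/(N+1)$. The inner maximum in $\e^{\varepsilon^{\S}_{D'_N,D_N}}$ is attained at $C=\varnothing$ because $(N+M/m)/(N+1)>1>(N+(1-M)/(1-m))/(N+1)$, and substitution yields
\[
(\e^\varepsilon-(\e^\varepsilon-1)\out_{D'_N}(y))\cdot\bigl((N+M/m)/(N+1)\bigr)^{N}\;\xrightarrow[N\to\infty]{}\;(\e^\varepsilon-(\e^\varepsilon-1)M)\,e^{M/m-1}=e^{l_1(1)}.
\]
For $l_3$, the same construction is reused in the opposite direction: the inner maximum defining $\e^{\varepsilon^{\S}_{D_N,D'_N}}$ is attained at $C=D_N$ (since here $(1-m)/(1-\out_{D'_N}(x))>1>m/\out_{D'_N}(x)$), and the expression reduces to $((1-m)/(1-\out_{D'_N}(x)))^{N}/(\out_{D'_N}(y)+\e^{-\varepsilon}(1-\out_{D'_N}(y)))$, which tends to $e^{(M-m)/(1-m)}\cdot(\e^{-\varepsilon}+(1-\e^{-\varepsilon})M)^{-1}=e^{l_3}$.

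\textbf{Case $p=0$.} For $l_1(0)=\ln(\e^\varepsilon-(\e^\varepsilon-1)m)$, take $D_N=\{y,\ldots,y\}$; then $\tdist(x,y)=m$ and $\out_{D_N}(x)=\out_{D'_N}(y)=m$ for every $x$, both products in \Cref{th:unboundedcase} collapse to $1$, and the bound is attained \emph{exactly}, with no limit needed. For $l_2(0)$, the construction needs records of $D_N$ pairwise at distance $M$, with $y$ placed so that the resulting $\out_{D'_N}(y)$ matches the value $(M+m)/2$ appearing in $L_2(0)$; such configurations exist only for certain metrics (e.g., the discrete metric on a sufficiently large $\X$), which is why the tightness statement excludes metric independence in these sub-cases.

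\textbf{Main obstacle.} The delicate step is the asymptotic analysis of products of the form $\bigl((N+r_N)/(N+1)\bigr)^{N}$ where $r_N\to r$: the base is $1+O(1/N)$ and the exponent is $N$, so the limit is governed by the first-order expansion of $r_N$ and must be matched with the corresponding term in $L_1(p)$ or $L_2(p)$. For $p=1$ and $l_3$ this is straightforward because all records of $D_N$ share the same $\out_D$ value, collapsing each product into a single $N$-fold power; for $p=0$ of $l_2$ one must verify that the chosen $(D_N,y)$ lies precisely on the extremal face of the polytope from the proof of \Cref{th:outlierscoresuppression}, which is what pins down the conditions on $\dist$.
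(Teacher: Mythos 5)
Your overall strategy is exactly the paper's: since \Cref{th:outlierscoresuppression} upper-bounds the quantities of \Cref{th:unboundedcase}, it suffices to exhibit sequences of neighboring pairs whose \Cref{th:unboundedcase} values converge to the claimed bound. Your $p=1$ and $l_3$ construction ($D_N$ consisting of $N$ copies of $x$, with a maximally distant $y$ added), the identification of the maximizing $C$ ($C=\varnothing$ for $\varepsilon^\S_{D',D}$, $C=D$ for $\varepsilon^\S_{D,D'}$), and the limits $\exp(l_1(1))$ and $\exp(l_3)$ all coincide with the paper's computation. Two points there are incomplete, though only the first touches the statement itself: (i) the ``independently of the choice of $\dist$'' claim must cover normalized distances whose supremum $1$ is \emph{not attained}; the paper replaces your fixed pair with a sequence $y_\alpha$ satisfying $\dist(x,y_\alpha)\to1$ and takes a double limit in $\alpha$ and $N$, whereas your ``pick $x,y$ with $\tdist(x,y)=M$'' silently assumes attainment. (ii) The proposition concerns both privacy parameters; the paper also verifies $\delta^\S=\delta(1-m)$ via the all-identical-copies pair (your $l_1(0)$ construction would serve), which you omit.

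The genuine gap is the $p=0$ case. Since $l_1(0)\le l_2(0)$, the binding value at $p=0$ is $l_2(0)$, and your treatment of it is a sketch whose one concrete ingredient is wrong: the discrete metric cannot realize the required configuration. You need the records of $D_N$ pairwise at normalized distance $1$ (so that $\out_{D_N}(x_i)=\frac{m+(N-1)M}{N}\to M$) \emph{and} the added record $y$ at transformed distance $\frac{M+m}{2}$ from each of them, i.e., at distance $\tfrac12$; under the discrete metric every distinct pair is at distance $1$, so $\tdist(x_i,y)=M$ and your construction reproduces the $p=1$ limit $\exp(l_1(1))$, not $\exp(l_2(0))$. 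The paper instead builds a bespoke metric on an infinite $\X$ with a hub point $x_0$ at distance $\tfrac12$ from every other point and distance $1$ between all other distinct points (the triangle inequality holds with equality), takes $D_N$ to be $N$ distinct non-hub points and $y=x_0$, checks that the inner maximum sits at $C=D_N$ via $\frac{\tdist(x_i,y)}{\out_{D_N}(x_i)}\le 1\le \frac{1-\tdist(x_i,y)}{1-\out_{D_N}(x_i)}$ for $N\ge2$, and then computes the limit $\exp(l_2(0))$ (the other direction converges to a value $l_4\le l_2(0)$, so it does not interfere). Without exhibiting such a metric and carrying out this computation, the $p=0$ half of the proposition is not established; your closing remark about ``lying on the extremal face of the polytope'' gestures at the right phenomenon but does not substitute for the construction.
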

\begin{proof}
    We first see that the result holds true when the maximum is achieved at $p=1$ and that it is independent of the choice of $\dist$. Note that $l_1(1)=l_2(1)$. Since the privacy parameters of \Cref{th:unboundedcase} are always upper bounded by the privacy parameters of \Cref{th:outlierscoresuppression} (this follows from the proof of \Cref{th:outlierscoresuppression}), we will prove that for all $m,M\in(0,1)$ with $m\leq M$, and all distances $\dist\colon\X\times\X\to[0,1]$, there exist a sequence $((D_N,D'_N))_{N\in\N}$ and a pair $(\overline{D},\overline{D}')$ of neighboring databases such that 
    \[
        \max\!\big\{\varepsilon^\S_{D_N,D'_N},\varepsilon^\S_{D'_N,D_N}\big\} \xrightarrow[]{N\to\infty} \max\{l_1(1),l_3\} \qquad\text{and}\qquad \max\!\big\{\delta^\S_{\overline{D},\overline{D}'},\delta^\S_{\overline{D},\overline{D}'}\big\} = \delta(1-m)
    \]
    for $\varepsilon^\S_{D,D'}$, $\varepsilon^\S_{D',D}$, $\delta^\S_{D,D'}$, and $\delta^\S_{D',D}$ denotes the expressions given in \Cref{th:unboundedcase}. This fact ensures that the privacy parameters of \Cref{th:outlierscoresuppression} are tight with respect to \Cref{th:unboundedcase} if $l_1(1)=l_2(1)$ or $l_3$ achieves the maximum in the expression of $\varepsilon^\S$ in \Cref{th:outlierscoresuppression}, or equivalently, if the maximum is achieved when $p=1$. 

    Following the steps at the start of the proof of \Cref{th:outlierscoresuppression}, we have that
    \[
        \Prob\{(\M\circ\S)(D)\in\meas\} \leq \e^{\varepsilon^\S_{D,D'}} \Prob\{(\M\circ\S)(D')\in\meas\} + \delta^\S_{D,D'}
    \] 
    for all $D\in\D$ and $D'=D_{+y}\in\D$, and all measurable $\meas\subseteq\Ran$ with  
    \begin{align*}
        \e^{\varepsilon^{\S}_{D,D'}} = \max_{\substack{C\subseteq D}}\bigg(\bigg(\e^{-\varepsilon}+(1-\e^{-\varepsilon})\frac{m+\sum_{x\in D} \tdist(x,y)}{N+1}\bigg)^{-1}\prod_{x\in C} \frac{N+1}{N+\frac{1-\tdist(x,y)}{1-\out_{D}(x)}} \prod_{x\in D\backslash C} \frac{N+1}{N+ \frac{\tdist(x,y)}{\out_D(x)}}\bigg),
    \end{align*}
    and
    \begin{align*}
        \delta^{\S}_{D,D'}=\delta\frac{\e^{-\varepsilon}(1-\out_{D'}(y))}{\out_{D'}(y)+\e^{-\varepsilon}(1-\out_{D'}(y))};
    \end{align*}
    and satisfies
    \[
        \Prob\{(\M\circ\S)(D')\in\meas\} \leq \e^{\varepsilon^\S_{D',D}} \Prob\{(\M\circ\S)(D)\in\meas\} + \delta^\S_{D',D}
    \]
    for all $D\in\D$ and $D'=D_{+y}\in\D$, and all measurable $\meas\subseteq\Ran$ with  
    \begin{align*}
        \e^{\varepsilon^{\S}_{D',D}} = \max_{\substack{C\subseteq D}}\bigg(\bigg(\e^{\varepsilon}-(\e^{\varepsilon}-1)\frac{m+\sum_{x\in D} \tdist(x,y)}{N+1}\bigg) \prod_{x\in C}\frac{N+\frac{1-\tdist(x,y)}{1-\out_{D}(x)}}{N+1} \prod_{x\in D\backslash C} \frac{N+\frac{\tdist(x,y)}{\out_D(x)}}{N+1}\bigg)
    \end{align*}
    and
    \begin{align*}
        \delta^{\S}_{D',D} = \delta(1-\out_{D'}(y)).
    \end{align*} 

    We first prove that the pair $(\overline{D},\overline{D}')$ exists, which is direct: We select $\overline{D}$ as the database of only copies of $x\in\X$ and $D'=D_{+x}$ (which also only has copies of $x$). Then $\out_{\overline{D}'}(x)=m$ and   
    \[
        \max\!\big\{\delta^\S_{\overline{D},\overline{D}'},\delta^\S_{\overline{D},\overline{D}'}\big\} = \max\!\bigg\{\delta\frac{\e^{-\varepsilon}(1-m)}{m+\e^{-\varepsilon}(1-m)},\delta(1-m)\bigg\} = \delta(1-m)
    \]

    We now prove that the sequence $((D_N,D'_N))_{N\in\N}$ of neighboring databases exists for $|\X|\geq2$. Since $\dist\colon\X\times\X\to[0,1]$ is normalized, there exists $x,y\in\X$ such that $\dist(x,y)=1$ or $x\in\X$ and a sequence $(y_\alpha)_{\alpha}$ of elements in $\X$ such that $d_\alpha\coloneqq\dist(x,y_\alpha)\to1$ when $\alpha\to\infty$. We consider only the second case since the first is included in the second.

    We consider the database $D_{N,\alpha}$ with only $N\geq1$ copies of $x$ and its neighboring database $D'_{N,\alpha}\coloneqq D_{N,\alpha+\{y_\alpha\}}$. It is clear that $\tdist(x,y_\alpha)=m+(M-m)d_\alpha$ and
    \begin{gather*}
        \out_{D_{N,\alpha}}(x) = \frac{1}{N}\sum_{x'\in D_{N,\alpha}} \tdist(x',x) = m.
    \end{gather*}

    Substituting the values in the previous expressions of $\e^{\varepsilon^\S_{D,D'}}$ and $\e^{\varepsilon^\S_{D',D}}$, we obtain that 
    \begin{gather*}
        \exp(\varepsilon^{\S}_{D_{N,\alpha},D'_{N,\alpha}}) = \max_{\substack{C\subseteq D_{N,\alpha}}}\bigg(\bigg(\e^{-\varepsilon}+(1-\e^{-\varepsilon})\bigg(m + (M-m)\frac{Nd_\alpha}{N+1}\bigg)\bigg)^{-1}\prod_{x\in C} \frac{N+1}{N+\frac{1-(m + (M-m)d_\alpha)}{1-m}} \prod_{x\in D_{N,\alpha}\backslash C} \frac{N+1}{N+ \frac{m + (M-m)d_\alpha}{m}}\bigg),\\
        \exp(\varepsilon^{\S}_{D_{N,\alpha},D'_{N,\alpha}}) = \max_{\substack{C\subseteq D_{N,\alpha}}}\bigg(\bigg(\e^{\varepsilon}-(\e^{\varepsilon}-1)\bigg(m+(M-m)\frac{Nd_\alpha}{N+1}\bigg)\bigg) \prod_{x\in C}\frac{N+\frac{1-(m+(M-m)d_\alpha)}{1-m}}{N+1} \prod_{x\in D_{N,\alpha}\backslash C} \frac{N+ \frac{m+(M-m)d_\alpha}{m}}{N+1}\bigg).
    \end{gather*} 

    Furthermore, using that $\frac{1-(m+(M-m)d_\alpha)}{1-m}\leq 1\leq\frac{m+(M-m)d_\alpha}{m}$, allows us to solve the maximums: 
    \begin{gather*}
        \exp(\varepsilon^{\S}_{D_{N,\alpha},D'_{N,\alpha}}) = \bigg(\e^{-\varepsilon}+(1-\e^{-\varepsilon})\bigg(m + (M-m)\frac{Nd_\alpha}{N+1}\bigg)\bigg)^{-1}\bigg(\frac{N+1}{N+\frac{1-(m + (M-m)d_\alpha)}{1-m}}\bigg)^N,\\
        \exp(\varepsilon^{\S}_{D_{N,\alpha},D'_{N,\alpha}}) = \bigg(\e^{\varepsilon}-(\e^{\varepsilon}-1)\bigg(m+(M-m)\frac{Nd_\alpha}{N+1}\bigg)\bigg) \bigg(\frac{N+ \frac{m+(M-m)d_\alpha}{m}}{N+1}\bigg)^N.
    \end{gather*} 

    Now, we compute the limits when $N\to\infty$ and $\alpha\to\infty$. We will use the well-known limit $\lim_{n\to\infty}(1+\frac{a}{n+1})^n=\e^a$ for all $a\geq0$:
    \begin{align*}
        \lim_{N\to\infty}\lim_{\alpha\to\infty} \exp(\varepsilon^{\S}_{D_{N,\alpha},D'_{N,\alpha}}) &= \lim_{N\to\infty}\lim_{\alpha\to\infty} \bigg(\e^{-\varepsilon}+(1-\e^{-\varepsilon})\bigg(m + (M-m)\frac{Nd_\alpha}{N+1}\bigg)\bigg)^{-1}\bigg(\frac{N+1}{N+\frac{1-(m + (M-m)d_\alpha)}{1-m}}\bigg)^N \\
        &= \lim_{N\to\infty}\bigg(\e^{-\varepsilon}+(1-\e^{-\varepsilon})\bigg(m + (M-m)\frac{N}{N+1}\bigg)\bigg)^{-1}\bigg(\frac{N+1}{N+\frac{1-M}{1-m}}\bigg)^N \\
        &= (\e^{-\varepsilon}+(1-\e^{-\varepsilon})M)^{-1}\lim_{N\to\infty}\bigg(\frac{N+\frac{1-M}{1-m}}{N+1}\bigg)^{-N} \\
        &= (\e^{-\varepsilon}+(1-\e^{-\varepsilon})M)^{-1}\lim_{N\to\infty}\bigg(1+\frac{\frac{1-M}{1-m}-1}{N+1}\bigg)^{-N} \\
        &= (\e^{-\varepsilon}+(1-\e^{-\varepsilon})M)^{-1}\lim_{N\to\infty}\bigg(1+\frac{\frac{1-M}{1-m}-1}{N+1}\bigg)^{-N} \\
        &= (\e^{-\varepsilon}+(1-\e^{-\varepsilon})M)^{-1}\e^{-(\frac{1-M}{1-m}-1)} = \exp(l_3),
    \end{align*}
    and
    \begin{align*}
        \lim_{N\to\infty}\lim_{\alpha\to\infty} \exp(\varepsilon^{\S}_{D_{N,\alpha},D'_{N,\alpha}}) &= \lim_{N\to\infty}\lim_{\alpha\to\infty} \bigg(\e^{\varepsilon}-(\e^{\varepsilon}-1)\bigg(m+(M-m)\frac{Nd_\alpha}{N+1}\bigg)\bigg) \bigg(\frac{N+ \frac{m+(M-m)d_\alpha}{m}}{N+1}\bigg)^N \\
        &=\lim_{N\to\infty} \bigg(\e^{\varepsilon}-(\e^{\varepsilon}-1)\bigg(m+(M-m)\frac{N}{N+1}\bigg)\bigg) \bigg(\frac{N+ \frac{M}{m}}{N+1}\bigg)^N \\
        &= (\e^{\varepsilon}-(\e^{\varepsilon}-1)M) \lim_{N\to\infty}\bigg(1+\frac{ \frac{M}{m}-1}{N+1}\bigg)^N \\
        &= (\e^{\varepsilon}-(\e^{\varepsilon}-1)M) \e^{\frac{M}{m}-1} = \exp(l_1(1)).
    \end{align*}

    Thus,
    \[
        \lim_{N\to\infty}\lim_{\alpha\to\infty}\max\{\varepsilon^{\S}_{D_{N,\alpha},D'_{N,\alpha}},\varepsilon^{\S}_{D'_{N,\alpha},D_{N,\alpha}}\} = \max\{l_3,l_1(1)\}
    \]
    which concludes the first part of the proof. 

    Now we see that there exists a $\dist\colon\X\times\X\to[0,1]$ such that the statement holds true when the maximum is $l_2(0)$. We note that this covers the case where the maximum is achieved at $p=0$ since $l_1(0)=\ln(\e^{\varepsilon}-(\e^\varepsilon-1)m)\leq l_2(0)$, and excludes the case when $l_3$ is the maximum that we covered before. For this, we impose $\X$ to be infinite, let $x_0\in\X$, and we consider the distance 
    \[
        \dist(x,x') = \begin{cases}
            0 &\text{if $x=x'$,} \\
            \frac{1}{2} & \text{if either $x=x_0$ or $x'=x_0$,} \\
            1 & \text{otherwise.}
        \end{cases}
    \]

    As in the previous case, it is sufficient to see that for all $m,M\in(0,1)$ with $m\leq M$, there exist a sequence $((D_N,D'_N))_{N\in\N}$ and a pair $(\overline{D},\overline{D}')$ of neighboring databases such that 
    \[
        \max\!\big\{\varepsilon^\S_{D_N,D'_N},\varepsilon^\S_{D'_N,D_N}\big\} \xrightarrow[]{N\to\infty} l_2(0) \qquad\text{and}\qquad \max\!\big\{\delta^\S_{\overline{D},\overline{D}'},\delta^\S_{\overline{D},\overline{D}'}\big\} = \delta(1-m). 
    \]

    In particular, we can select the same pair $(\overline{D},\overline{D}')$ as in the previous case. Now, we prove that a sequence $((D_N,D'_N))_{N\in\N}$ of neighboring databases exists. We define $D_N=\{x_1,\dots,x_N\}$ where each element in $D_N$ is distinct from all the others, and none of them are $x_0$ (this exists because $\X$ is infinite), and we consider $D'_N=D_{N+x_0}$.

    It is clear that, for all $i\in[n]$, $\tdist(x_i,x_0)=m+(M-m)\frac{1}{2}=\frac{M+m}{2}$ and
    \begin{gather*}
        \out_{D_{N}}(x_i) = \frac{1}{N}\sum_{x'\in D_{N}} \tdist(x',x_i) = \frac{m+(N-1)M}{N}.
    \end{gather*}

    Substituting the values in the previous expressions of $\e^{\varepsilon^\S_{D,D'}}$ and $\e^{\varepsilon^\S_{D',D}}$, we obtain that 
    \begin{gather*}
        \exp(\varepsilon^{\S}_{D_{N},D'_{N}}) = \max_{\substack{C\subseteq D_{N}}}\left(\bigg(\e^{-\varepsilon}+(1-\e^{-\varepsilon})\frac{m+N\frac{M+m}{2}}{N+1}\bigg)^{-1}\prod_{x\in C} \frac{N+1}{N+\frac{1-\frac{M+m}{2}}{1-\frac{m+(N-1)M}{N}}} \prod_{x\in D_{N}\backslash C} \frac{N+1}{N+ \frac{(M-m)\frac{1}{2}}{\frac{m+(N-1)M}{N}}}\right),\\
        \exp(\varepsilon^{\S}_{D_{N},D'_{N}}) = \max_{\substack{C\subseteq D_{N}}}\left(\bigg(\e^{\varepsilon}-(\e^{\varepsilon}-1)\frac{m+N\frac{M+m}{2}}{N+1}\bigg) \prod_{x\in C}\frac{N+\frac{1-\frac{M+m}{2}}{1-\frac{m+(N-1)M}{N}}}{N+1} \prod_{x\in D_{N}\backslash C} \frac{N+ \frac{\frac{M+m}{2}}{\frac{m+(N-1)M}{N}}}{N+1}\right).
    \end{gather*} 

    Now, for $N\geq 2$, we have that  using that $\frac{\frac{M+m}{2}}{\frac{m+(N-1)M}{N}}\leq 1\leq\frac{1-\frac{M+m}{2}}{1-\frac{m+(N-1)M}{N}}$, allows us to solve the maximums: 
    \begin{gather*}
        \exp(\varepsilon^{\S}_{D_{N},D'_{N}}) = \bigg(\e^{-\varepsilon}+(1-\e^{-\varepsilon})\frac{m+N\frac{M+m}{2}}{N+1}\bigg)^{-1}\left(\frac{N+1}{N+\frac{1-\frac{M+m}{2}}{1-\frac{m+(N-1)M}{N}}}\right)^N,\\
        \exp(\varepsilon^{\S}_{D_{N},D'_{N}}) = \bigg(\e^{\varepsilon}-(\e^{\varepsilon}-1)\frac{m+N\frac{M+m}{2}}{N+1}\bigg) \left(\frac{N+\frac{1-\frac{M+m}{2}}{1-\frac{m+(N-1)M}{N}}}{N+1}\right)^N.
    \end{gather*}

    Now, we compute the limit when $N\to\infty$. We will use the well-known limit $\lim_{n\to\infty}(1+\frac{\alpha}{n+1})^n=\e^\alpha$ for all $\alpha\geq0$:
    \begin{align*}
        \lim_{N\to\infty} \exp(\varepsilon^{\S}_{D_{N},D'_{N}}) &= \lim_{N\to\infty}\bigg(\e^{-\varepsilon}+(1-\e^{-\varepsilon})\frac{m+N\frac{M+m}{2}}{N+1}\bigg)^{-1}\left(\frac{N+1}{N+\frac{1-\frac{M+m}{2}}{1-\frac{m+(N-1)M}{N}}}\right)^N \\
        &= \bigg(\e^{-\varepsilon}+(1-\e^{-\varepsilon})\frac{M+m}{2}\bigg)^{-1}\lim_{N\to\infty}\left(\frac{N+1}{N+\frac{1-\frac{M+m}{2}}{1-\frac{m+(N-1)M}{N}}}\right)^N \\
        &= \bigg(\e^{-\varepsilon}+(1-\e^{-\varepsilon})\frac{M+m}{2}\bigg)^{-1}\lim_{N\to\infty}\left(1+\frac{\frac{1-\frac{M+m}{2}}{1-\frac{m+(N-1)M}{N}}-1}{N+1}\right)^{-N} \\
        &= \bigg(\e^{-\varepsilon}+(1-\e^{-\varepsilon})\frac{M+m}{2}\bigg)^{-1}\e^{1-\frac{1-\frac{M+m}{2}}{1-M}} \eqqcolon \exp(l_4), 
    \end{align*}
    and
    \begin{align*}
        \lim_{N\to\infty}\exp(\varepsilon^{\S}_{D_{N},D'_{N}}) &= \lim_{N\to\infty}\left(\e^{\varepsilon}-(\e^{\varepsilon}-1)\frac{m+N\frac{M+m}{2}}{N+1}\right) \left(\frac{N+\frac{1-\frac{M+m}{2}}{1-\frac{m+(N-1)M}{N}}}{N+1}\right)^N \\
        &=\bigg(\e^{\varepsilon}-(\e^{\varepsilon}-1)\frac{M+m}{2}\bigg) \lim_{N\to\infty}\left(\frac{N+\frac{1-\frac{M+m}{2}}{1-\frac{m+(N-1)M}{N}}}{N+1}\right)^N \\
        &=\bigg(\e^{\varepsilon}-(\e^{\varepsilon}-1)\frac{M+m}{2}\bigg) \lim_{N\to\infty}\left(1+\frac{\frac{1-\frac{M+m}{2}}{1-\frac{m+(N-1)M}{N}}-1}{N+1}\right)^N \\
        &=\bigg(\e^{\varepsilon}-(\e^{\varepsilon}-1)\frac{M+m}{2}\bigg) \e^{\frac{1-\frac{M+m}{2}}{1-M}-1} = \exp(l_2(0)) \\
    \end{align*}

    Thus,
    \[
        \lim_{N\to\infty}\max\{\varepsilon^{\S}_{D_{N},D'_{N}},\varepsilon^{\S}_{D'_{N},D_{N}}\} = \max\{l_4,l_2(0)\}.
    \]

    Finally, to complete the proof, we see that $l_4\leq l_2(0)$. This follows easily directly from the fact that 
    \[
        -\ln\!\bigg(\e^{-\varepsilon}+(1-\e^{-\varepsilon})\frac{M+m}{2}\bigg)\leq \ln\!\bigg(\e^{\varepsilon}-(\e^{\varepsilon}-1)\frac{M+m}{2}\bigg) \quad\text{and}\quad 1-\frac{1-\frac{M+m}{2}}{1-M} \leq 0 \leq \frac{1-\frac{M+m}{2}}{1-M}-1,
    \]
    where the first inequality is equivalent to the inequality $\frac{M+m}{2}\leq 1$.
\end{proof}

\begin{corollary}\label{cor:final}
    The outlier-score suppression algorithm with parameters $m$ and $M$ cannot provide a greater privacy amplification than the uniform Poisson sampling with sampling rate $1-m$. 
\end{corollary}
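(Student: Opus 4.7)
The plan is to show directly that the two privacy parameters given by Theorem~\ref{th:outlierscoresuppression} are no better (i.e., no smaller) than those of uniform Poisson sampling with sampling rate $p = 1-m$. Recall from \Cref{sec:samplingDP} that Poisson sampling with rate $p$ applied to an $(\varepsilon,\delta)$-DP mechanism yields $(\ln(1+p(\e^{\varepsilon}-1)),\,p\delta)$-DP, which in our case becomes $(\ln(1+(1-m)(\e^{\varepsilon}-1)),\,(1-m)\delta)$.

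For $\delta$, there is nothing to prove: Theorem~\ref{th:outlierscoresuppression} already states $\delta^{\S}=(1-m)\delta$, which coincides with the Poisson sampling bound.

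For $\varepsilon$, since $\varepsilon^{\S}=\max_{p\in[0,1]}\max\{l_1(p),l_2(p),l_3\}$, it suffices to find a single value at which one of the three quantities equals (or exceeds) $\ln(1+(1-m)(\e^{\varepsilon}-1))$. The natural candidate is $l_1(0)$, since setting $p=0$ in $l_1$ eliminates the $M$-dependent term $pM/m$ and collapses the fraction in the denominator. A direct substitution gives
\begin{align*}
l_1(0) &= \ln(\e^{\varepsilon}-(\e^{\varepsilon}-1)m) + \frac{1-m}{1-m} - 1 \\
&= \ln((1-m)\e^{\varepsilon}+m) \\
&= \ln(1+(1-m)(\e^{\varepsilon}-1)).
\end{align*}
Hence $\varepsilon^{\S}\geq l_1(0)=\ln(1+(1-m)(\e^{\varepsilon}-1))$, which is precisely the uniform Poisson sampling bound.

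Combining both bounds, outlier-score suppression with parameters $m$ and $M$ produces a mechanism $\M\circ\S$ whose privacy guarantees (as given by our theorem) are at least as large as those produced by uniform Poisson sampling at rate $1-m$, which is exactly the claimed statement. I do not expect any obstacle: the argument reduces to a single algebraic substitution, and the cancellation $(1-m)/(1-m)=1$ at $p=0$ is what makes the identity exact rather than merely an inequality.
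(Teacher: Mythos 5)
Your algebra is right ($l_1(0)=\ln(1+(1-m)(\e^{\varepsilon}-1))$ and $\delta^{\S}=(1-m)\delta$), but it proves the wrong direction of statement. Theorem~\ref{th:outlierscoresuppression} gives an \emph{upper bound} on the privacy parameters of $\M\circ\S$, and the paper explicitly does not claim this bound is tight in general (cf.\ \Cref{remark:tightnessoutliersuppressiontheorem}). Showing that this upper bound is at least as large as the Poisson-sampling parameters only says that \emph{the derived guarantee} does not beat Poisson sampling; it leaves open the possibility that the true (tight) privacy parameters of $\M\circ\S$ are strictly smaller than the bound, and hence possibly smaller than those of Poisson sampling at rate $1-m$. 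The corollary, as used in the paper (``the privacy parameters of suppression naturally cannot be less than those of uniform Poisson sampling''), is a statement about the algorithm itself, i.e.\ an impossibility result, and such a result cannot be obtained by inspecting an upper bound --- it requires a matching \emph{lower bound} witnessed by a concrete mechanism.

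That is exactly what the paper's proof does: it invokes \Cref{prop:final}, which constructs a specific binary $(\varepsilon,\delta)$-DP mechanism (testing whether a fixed record $y$ survived suppression) for which $\M\circ\S$ is \emph{tightly} $(\ln(1+p(\e^{\varepsilon}-1)),\,p\delta)$-DP with $p=\sup_{D'\in\D}\sup_{y\in D'}\Prob\{y\in\S(D')\}$, so no analysis of any suppression algorithm can yield smaller worst-case parameters than Poisson sampling at that rate. The corollary then follows from the single observation that for outlier-score suppression $\Prob\{y\in\S(D')\}=1-\out_{D'}(y)\leq 1-m$, with equality attained (e.g.\ when $D'$ consists only of copies of $y$, so that $\out_{D'}(y)=m$), giving $p=1-m$. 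To repair your proof you would need to replace the appeal to Theorem~\ref{th:outlierscoresuppression} by this kind of adversarial-mechanism construction (or cite \Cref{prop:final} and compute the supremum), rather than evaluating $l_1$ at $p=0$.
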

\begin{proof}
    This is a direct corollary of \Cref{prop:final} with the observation that
    \[
        \sup_{D'\in\D}\sup_{y\in D'}\Prob\{y\in\S(D')\} = 1-m. \qedhere
    \]
\end{proof}

\subsubsection{Additional Theorems for the Proof of the Bound of \Cref{th:outlierscoresuppression}}

This section includes the additional results for obtaining the bound given in \Cref{th:outlierscoresuppression}. The proofs are lengthy because we aim to provide the tightest inequalities possible.

We provide a summary of the results below. We need to maximize the expressions
\[
    \e^{\varepsilon^{\S}_{D,D'}} = \max_{\substack{C\subseteq D}}\bigg(\bigg(\e^{-\varepsilon}+(1-\e^{-\varepsilon})\frac{m+\sum_{x\in D} \tdist(x,y)}{N+1}\bigg)^{-1}\prod_{x\in C} \frac{N+1}{N+\frac{1-\tdist(x,y)}{1-\out_{D}(x)}} \prod_{x\in D\backslash C} \frac{N+1}{N+ \frac{\tdist(x,y)}{\out_D(x)}}\bigg),
\]
and 
\[
    \e^{\varepsilon^{\S}_{D',D}}= \max_{\substack{C\subseteq D}}\bigg(\bigg(\e^{\varepsilon}-(\e^{\varepsilon}-1)\frac{m+\sum_{x\in D} \tdist(x,y)}{N+1}\bigg) \prod_{x\in C}\frac{N+\frac{1-\tdist(x,y)}{1-\out_{D}(x)}}{N+1} \prod_{x\in D\backslash C} \frac{N+ \frac{\tdist(x,y)}{\out_D(x)}}{N+1}\bigg)
\]
\Cref{lemma:convex-large} verify that the expression of $\exp(\varepsilon^{\S}_{D',D})$ is convex with respect to the variables $z_{D}$. 
\Cref{lemma:boundsfordandout} provides inequalities bounding $z_{D}$ in terms of $a_{D,D'}$, and \Cref{prop:PolytopeIsConvex} proves that the domain of $z_{D}$ is a convex polytope. These results allow us to conclude that the maximum with respect to $z_D$ is achieved at one of the vertices of the polytope, reducing the complexity of the maximum (at the beginning of \Cref{th:bounddependingonp}). \Cref{prop:formerproblem} and \Cref{th:bounddependingonp} provide an upper bound $\max_{p\in[0,1]}\max\{L_1(p),L_2(p)\}$ for $\max_{D\sim D'}\exp(\varepsilon^{\S}_{D',D})$ (the numerical computation is limited to \Cref{th:bounddependingonp}; see more details in \Cref{remark:computation}).  \Cref{th:boundinverse} provides an upper bound $\max\{L_3,L_4\}$ for $\max_{D\sim D'}\exp(\varepsilon^{\S}_{D,D'})$. Calculating this bound is more straightforward because we can use a looser inequality, since $\exp(\varepsilon^{\S}_{D,D'})$ is usually bounded by $\max_{p\in[0,1]}\max\{L_1(p),L_2(p)\}$. This theorem also requires numerical computation (see \Cref{remark:computationInverse}).

\begin{lemma}\label{lemma:Convex+Affine=Convex}
    Let $f\colon\R^n\to\R$ be a convex function and $g\colon\R^m\to\R^n$ be an affine function, i.e., of the form $g(x_1,\dots,x_m) = A(x_1,\dots,x_m)^\intercal + B$ with $A$ an $n\times m$ matrix in $\R$ and $B\in\R^n$. Then $f\circ g\colon\R^m\to\R$ is convex. 
\end{lemma}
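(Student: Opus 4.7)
The plan is to reduce the claim to the definition of convexity via the simple observation that affine maps commute with convex combinations. Since the statement is standard in convex analysis, I would present it in one short direct argument rather than invoking heavier machinery (e.g., epigraph characterizations or subdifferentials), which would be overkill here.

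Concretely, I would fix arbitrary $x,y\in\R^m$ and $\lambda\in[0,1]$, and first verify that $g$ maps the convex combination $\lambda x+(1-\lambda)y$ to the corresponding convex combination of $g(x)$ and $g(y)$. Writing $g(v)=Av^\intercal+B$ and using linearity of matrix multiplication, I would expand
\[
    g(\lambda x+(1-\lambda)y) = A(\lambda x+(1-\lambda)y)^\intercal + B = \lambda Ax^\intercal + (1-\lambda)Ay^\intercal + (\lambda + (1-\lambda))B = \lambda g(x) + (1-\lambda)g(y).
\]
This is the only place the affine structure enters; the constant term $B$ is absorbed using $\lambda+(1-\lambda)=1$.

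Then I would apply convexity of $f$ to the points $g(x),g(y)\in\R^n$, yielding
\[
    (f\circ g)(\lambda x+(1-\lambda)y) = f(\lambda g(x)+(1-\lambda)g(y)) \leq \lambda f(g(x)) + (1-\lambda)f(g(y)) = \lambda (f\circ g)(x) + (1-\lambda)(f\circ g)(y).
\]
Since $x,y,\lambda$ were arbitrary, $f\circ g$ satisfies the definition of convexity, which concludes the proof. There is no genuine obstacle here; the only subtlety worth double-checking is that the notation $A(\cdot)^\intercal+B$ in the statement truly denotes an affine map (so that the constant term $B$ telescopes as above), which is clear from the hypothesis.
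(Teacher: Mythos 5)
Your proof is correct and follows essentially the same route as the paper: show that the affine map $g$ commutes with convex combinations (you just make the expansion with $\lambda+(1-\lambda)=1$ explicit, which the paper leaves as "by definition") and then apply convexity of $f$ to $g(x)$ and $g(y)$.
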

\begin{proof}
    The result follows directly. Let $\lambda\in(0,1)$ and $x,y\in\R^m$. By definition, since $g$ is affine, we have that $g(\lambda x+(1-\lambda)y) = \lambda g(x)+(1-\lambda)g(y)$, and since $f$ is convex, we have that $f(g(\lambda x+(1-\lambda)y)) \leq \lambda f(g(x))+(1-\lambda)f(g(y))$. In conclusion, $f\circ g$ is convex.
\end{proof}

\begin{remark}[Positive semi-definite matrices]\label{remark:positivesemi-definite}
    We recall some properties of positive semi-definite matrices. We say an $n\times n$ matrix $A$ is \textit{positive semi-definite} if for all $T=(t_1,\dots,t_n)\in\R^n$, $T^\intercal A T\geq 0$. 
    
    We recall the following properties of positive semi-definite matrices for our proofs: If $A$ is symmetric, then $A$ is positive semi-definite if and only if its eigenvalues are non-negative. Consequently, any non-negative diagonal matrix is positive semi-definite since its eigenvalues are the elements in the diagonal, and the square $n\times n$ matrix of ones (i.e., such that all entries are $1$) is also positive semi-definite since its eigenvalues are $n$ with multiplicity $1$ and $0$ with multiplicity $n-1$. In addition, the sum of two positive semi-definite matrices is also positive semi-definite, and therefore,
    \[
        A = \begin{pmatrix}
            a_1 & 1 & \cdots & \cdots & 1 \\
            1 & a_2 & \ddots &  & \vdots \\
            \vdots & \ddots & \ddots & \ddots & \vdots \\
            \vdots & & \ddots & a_{n-1} & 1 \\
            1 & \cdots & \cdots & 1 & a_n
        \end{pmatrix}
    \]
    with $a_1,\dots,a_n\geq 1$ is positive semi-definite. 

    Finally, a twice-differentiable multivariate function $f$ is convex if and only if its Hessian $H=(\frac{\partial^2}{\partial x_i\partial x_j} f)_{i,j}$ is positive semi-definite for all values in the domain of $f$. 
\end{remark}

\begin{lemma}\label{lemma:convex-large}
    Let $N\in\N$, $J\in\{0,\dots,N\}$ and $m,M\in(0,1)$ such that $m\leq M$. 
    For all $i\in[N]$, let $a_i\in[m,M]$. Consider the $N$-variate function $f\colon[m,M]^N\to\R^+$
    defined such that
    \[
        f(\mathbf{z})
        = \prod^J_{i=1}\bigg(N+\frac{a_i}{z_i}\bigg)\prod^N_{i=J+1}\bigg(N+\frac{1-a_i}{1-z_i}\bigg)
    \]
    for all $\mathbf{z}=(z_1,\dots,z_N)\in[m,M]^N$. 
    Then, $f$ is convex. 

    Additionally, the function obtained by recursively substituting $z_i$ by an affine function depending on the other free variables is convex.
\end{lemma}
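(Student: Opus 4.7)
The cleanest route is to prove the stronger property that $f$ is \emph{log-convex}, and to invoke log-convexity $\Rightarrow$ convexity. Take logarithms:
\[
\log f(\mathbf{z}) = \sum_{i=1}^{J} \log\!\left(N+\frac{a_i}{z_i}\right) + \sum_{i=J+1}^{N}\log\!\left(N+\frac{1-a_i}{1-z_i}\right),
\]
which is a sum of univariate terms, each depending on a single coordinate of $\mathbf{z}$. Hence it suffices to show that each summand is a convex function of the corresponding real variable.

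For a term of the first kind, set $g(z)=\log(N+a/z)$ for $z\in[m,M]$, with $a\in[m,M]$ and $N\geq 1$. A direct computation gives
\[
g'(z)=\frac{-a}{z(Nz+a)}, \qquad g''(z)=\frac{a(2Nz+a)}{z^{2}(Nz+a)^{2}}\geq 0,
\]
so $g$ is convex on $[m,M]$. For a term of the second kind, write $h(z)=\log(N+(1-a)/(1-z))=g^*(1-z)$ where $g^*(u)=\log(N+(1-a)/u)$; by the same computation (with $(1-a)$ in place of $a$), $g^*$ is convex on $[1-M,1-m]$, and composition with the affine map $z\mapsto 1-z$ preserves convexity (Lemma \ref{lemma:Convex+Affine=Convex}). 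Summing these univariate convex functions (each viewed as a function on $[m,M]^N$) shows that $\log f$ is convex, so $f$ is log-convex and, in particular, convex.

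For the final assertion, if we substitute $z_i$ by an affine function of the remaining variables, the substitution is precisely the composition of $f$ with an affine map from $\R^{N-1}$ to $\R^{N}$, so convexity is preserved by Lemma \ref{lemma:Convex+Affine=Convex}. Iterating this reasoning, any sequence of such affine substitutions yields a convex function. The main (and only) nontrivial step is the sign of $g''$, which is elementary; the rest is a clean application of the standard facts that sums of convex functions are convex, affine precompositions preserve convexity, and log-convexity implies convexity.
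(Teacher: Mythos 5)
Your proof is correct, and it takes a genuinely different route from the paper. The paper establishes convexity of $f$ by computing the full Hessian of the product and exhibiting it as $B^\intercal C B\, f$, where $B$ is a diagonal matrix and $C$ is a matrix with diagonal entries $2C_k\geq 2$ and all off-diagonal entries equal to $1$, whose positive semi-definiteness is checked via its eigenvalue structure; the affine-substitution claim is then handled by Lemma~\ref{lemma:Convex+Affine=Convex}, exactly as you do. You instead pass to $\log f$, which separates into a sum of univariate terms $\log(N+a_i/z_i)$ and $\log(N+(1-a_i)/(1-z_i))$, verify $g''\geq 0$ for each (your formulas $g'(z)=-a/(z(Nz+a))$ and $g''(z)=a(2Nz+a)/(z^2(Nz+a)^2)$ are correct, and positivity holds since $z_i,a_i,1-a_i,1-z_i>0$ on the stated domain), and conclude via the standard facts that sums of convex functions are convex, affine precomposition preserves convexity, and log-convexity implies convexity. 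Your argument is more elementary — it avoids any multivariate Hessian or matrix factorization — and it proves the strictly stronger property that $f$ is log-convex, which the paper's proof does not record; the paper's Hessian factorization, on the other hand, gives a self-contained quadratic-form identity that could be reused if one needed quantitative curvature information rather than just convexity.
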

\begin{proof}
    We are going to prove that $f$ is convex. We need to see that the Hessian, $H(\mathbf{z})=(\frac{\partial^2}{\partial z_i\partial z_j} f)_{i,j}(\mathbf{z})$ is positive semi-definite for all $\mathbf{z}\in[m,M]^N$, i.e., for all $T=(t_1,\dots,t_N)\in\R^N$, $T^\intercal H(\mathbf{z}) T\geq 0$ (see \Cref{remark:positivesemi-definite}). 

    To simplify notation, we omit $\mathbf{z}$ for the rest of the proof, and we consider
    \[
        A_i=\begin{cases}
            a_i &\text{if $i\leq J$,} \\
            1-a_i & \text{if $i>J$,}
        \end{cases} 
        \quad \text{and} \quad 
        Z_i=\begin{cases}
            z_i &\text{if $i\leq J$,} \\
            1-z_i & \text{if $i>J$.}
        \end{cases} 
    \]
    
    This allows us to write $f=f(\mathbf{z})=f(z_1,\dots,z_N) = \prod^N_{i=1}(N+\frac{A_i}{Z_i})$. Note too that $\frac{A_i}{Z_i}\geq 0$ and that the derivative of $N+\frac{A_i}{Z_i}$ with respect to $z_i$ is $-\alpha_i\frac{A_i}{(Z_i)^2}$ with $\alpha_i=1$ if $i\leq J$ and $\alpha_i=-1$ if $i>J$ (more precisely, $\alpha_i\coloneqq\frac{\partial Z_i}{\partial z_i}$). The components for the Hessian matrix $H$ are as follows: for all $i\in [N]$, 
    \begin{align*}
        \frac{\partial^2}{(\partial z_i)^2} f &= 2(\alpha_i)^2\frac{A_i}{(Z_i)^3}\prod^N_{\substack{k=1\\k\neq i}}\bigg(N+\frac{A_k}{Z_k}\bigg) \\
        &= 2\bigg(\frac{\alpha_i}{Z_i}\bigg)^2\frac{A_i}{Z_i}\bigg(\frac{1}{N+\frac{A_i}{Z_i}}\bigg)\prod^N_{k=1}\bigg(N+\frac{A_k}{Z_k}\bigg) \\
        &= 2\bigg(\frac{\alpha_i}{Z_i}\frac{\frac{A_i}{Z_i}}{N+\frac{A_i}{Z_i}}\bigg)^2\frac{N+\frac{A_i}{Z_i}}{\frac{A_i}{Z_i}} f
    \end{align*}
    and, for all $i,j\in[N]$, $i\neq j$,
    \begin{align*}
        \frac{\partial^2}{\partial z_i\partial z_j} f &=  \alpha_i\frac{A_i}{(Z_i)^2}\alpha_j\frac{A_j}{(Z_j)^2}\prod^N_{\substack{k=1\\k\neq i,j}}\bigg(N+\frac{A_k}{Z_k}\bigg) \\ 
        &= \frac{\alpha_i}{Z_i}\frac{A_i}{Z_i}\frac{\alpha_j}{Z_j}\frac{A_j}{Z_j}\bigg(\frac{1}{N+\frac{A_i}{Z_i}}\bigg)\bigg(\frac{1}{N+\frac{A_j}{Z_j}}\bigg)\prod^N_{k=1}\bigg(N+\frac{A_k}{Z_k}\bigg) \\
        &= \bigg(\frac{\alpha_i}{Z_i}\frac{\frac{A_i}{Z_i}}{N+\frac{A_i}{Z_i}}\bigg)\bigg(\frac{\alpha_j}{Z_j}\frac{\frac{A_j}{Z_j}}{N+\frac{A_j}{Z_j}}\bigg) f
    \end{align*}

    Taking $B_k=\frac{\alpha_k}{Z_k}\frac{\frac{A_k}{Z_k}}{N+\frac{A_k}{Z_k}}$ and $C_k=\frac{N+\frac{A_k}{Z_k}}{\frac{A_k}{Z_k}}\geq 1$ for all $k\in[N]$, we consider the diagonal matrix $B=\operatorname{diag}(B_1,\dots,B_n)=B^\intercal$ and observe that
    \[
        H = B^\intercal
        \begin{pmatrix}
            2C_1 & 1 & \cdots & \cdots & 1 \\
            1 & 2C_2 & \ddots &  & \vdots \\
            \vdots & \ddots & \ddots & \ddots & \vdots \\
            \vdots & & \ddots & 2C_{N-1} & 1 \\
            1 & \cdots & \cdots & 1 & 2C_N
        \end{pmatrix}
        B f.
    \]

    Let $C$ denote the central matrix, which we know is positive semi-definite by \Cref{remark:positivesemi-definite}. Now we prove that $H$ is positive semi-definite. Let $T=(t_1,\dots,t_n)\in\R^N$. We obtain that
    \[
        T^\intercal H T = T^\intercal (B^\intercal C B f) T = (BT)^\intercal C (BT) f
    \]
    and since $C$ is positive semi-definite, $(BT)^\intercal C (BT)\geq 0$ and thus $T^\intercal H T\geq 0$ since $f\geq 0$. Since this holds for all $\mathbf{z}\in[m,M]^N$, we obtain that $f$ is convex. 
    
    The last part of the statement follows directly from \Cref{lemma:Convex+Affine=Convex}, since the substituted equations are affine functions. \qedhere
\end{proof}

\begin{lemma}\label{lemma:Polytopeconvexinequality}
    Let $N\in\N$ with $N>2$ and $m,M\in(0,1)$ such that $m\leq M$. For all $i\in[N]$, let $a_i\in[m,M]$ and
    \[
        M_i = \frac{1}{N}\min\!\bigg\{m+(N-1)M,(N-2)(a_i-m)+\sum^N_{j=1} a_j\bigg\}.
    \]

    For all non-empty subsets $K\subseteq [N]$, let 
    \[
        B^{(K)} = \frac{(N-2)m+\sum_{l\in[N]\backslash K} M_l}{2N-2-|K|}.
    \]

    Then, $m\leq B^{(K)}\leq M_k$ for all $k\in K$.
\end{lemma}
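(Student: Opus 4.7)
The plan is to first rewrite both inequalities in a form that exposes the role of the upper‐bound structure of $M_i$, and then split into two cases depending on which branch of the $\min$ defines $M_k$.

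For the lower bound $m \leq B^{(K)}$, I would first observe that $M_l \geq m$ for every $l \in [N]$: both arguments inside the $\min$ defining $M_l$ are at least $Nm$ (since $M \geq m$ and all $a_j \geq m$). Summing over $l \in [N]\setminus K$ and adding $(N-2)m$ gives a numerator of at least $(N-2)m + (N-|K|)m = (2N-2-|K|)m$, which matches the denominator of $B^{(K)}$.

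For the upper bound $B^{(K)} \leq M_k$, I would introduce the shorthand $A := m+(N-1)M$ and $T_i := (N-2)(a_i-m)+\sum_j a_j$, so that $M_i = \min\{A,T_i\}/N$, and then rearrange the target as
\[
    \sum_{l \in [N]\setminus K}(M_l - M_k) \;\leq\; (N-2)(M_k - m),
\]
using $2N-2-|K|=(N-2)+(N-|K|)$. Now I split on the active branch at $k$. If $M_k = A/N$, then $M_l \leq A/N = M_k$ for every $l$, so the left side is $\leq 0$ while the right side is $\geq 0$, and we are done. If instead $M_k = T_k/N$, then since $M_l \leq T_l/N$ always, the bound $M_l - M_k \leq (T_l - T_k)/N = (N-2)(a_l - a_k)/N$ holds. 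Plugging this in reduces the target, after canceling an $\frac{N-2}{N}$ factor, to
\[
    \sum_{l \in [N]\setminus K}(a_l - a_k) \;\leq\; (N-2)(a_k - m) + \sum_j a_j - Nm,
\]
which after simplifying and rearranging becomes $(2N-2)m \leq (2N-2-|K|)\,a_k + \sum_{k' \in K} a_{k'}$. This inequality holds term-by-term since $a_k \geq m$ and $a_{k'} \geq m$ for all $k' \in K$, closing the second case.

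There is no genuinely hard step here: the main subtlety is just recognizing that the right branch of the $\min$ in $M_k$ is the one that produces a usable pairwise bound $M_l - M_k \leq (N-2)(a_l - a_k)/N$, while the other branch trivially dominates all other $M_l$'s. The edge case $|K|=N$ (where the sum over $l \in [N] \setminus K$ is empty) just gives $B^{(K)} = m \leq M_k$ directly from the lower-bound argument, so no separate treatment is needed.
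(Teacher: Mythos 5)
Your proof is correct, and it takes a genuinely different route from the paper's. You split on which branch of the $\min$ is active at the index $k$: if $M_k=\frac{m+(N-1)M}{N}$ then every $M_l$ is dominated by $M_k$ and the rearranged target $\sum_{l\notin K}(M_l-M_k)\leq(N-2)(M_k-m)$ is immediate (using $M_k\geq m$ from the lower-bound step); if $M_k=\frac{(N-2)(a_k-m)+\sum_j a_j}{N}$, the pairwise bound $M_l-M_k\leq\frac{(N-2)(a_l-a_k)}{N}$ reduces everything to the counting inequality $(2N-2)m\leq(2N-2-|K|)a_k+\sum_{k'\in K}a_{k'}$, which holds term by term; your algebra in both the rearrangement (using $2N-2-|K|=(N-2)+(N-|K|)$, with $2N-2-|K|\geq N-2>0$) and the final reduction checks out, and the $|K|=N$ case is indeed covered automatically. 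The paper instead avoids any case split: it proves and applies two general min-manipulation inequalities, $\min\{a,b\}+\min\{c,d\}\leq\min\{a+c,b+d\}$ and $\min\{a,b\}-\min\{a,c\}\geq\min\{0,b-c\}$, to bound $B^{(K)}$ by an auxiliary quantity and then controls the resulting expression with absolute-value estimates such as $\abs{a_k-a_l}\leq M-m$ and $(a_j-m)+(a_k-m)-\abs{(a_k-m)-(a_l-m)}\geq0$. Your case analysis is shorter and more transparent about why the bound holds (the cap branch dominates all $M_l$; the linear branch differences telescope into a term-count argument), whereas the paper's approach keeps the $\min$ intact throughout a single chain of inequalities at the cost of heavier bookkeeping; both are complete proofs of the lemma.
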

\begin{proof}
    We will need two inequalities regarding the addition and subtraction of minimums. First, we have that $\min\{a,b\}+\min\{c,d\}\leq\min\{a+c,b+d\}$ for all $a,b,c,d\in\R$:
    \begin{enumerate}[i)]
        \item If $a\leq b$ and $c\leq d$, then $a+c\leq a+c$ is satisfied.
        \item If $a\leq b$ and $c\geq d$, then $a+d\leq a+c$ and $a+d\leq b+d$, and therefore $a+d\leq \min\{a+c,b+d\}$ is satisfied. Analogously, when $a\geq b$ and $c\leq d$, the inequality $b+c\leq \min\{a+c,b+d\}$ satisfies.
        \item If $a\geq b$ and $c\geq d$, then $b+d\leq b+d$ is also satisfied.  
    \end{enumerate}

    Secondly, $\min\{a,b\}-\min\{a,c\}\geq\min\{0,b-c\}$ for all $a,b,c\in\R$:
    \begin{enumerate}[i)]
        \item If $a\leq b,c$, then $a-a=0\geq\min\{0,b-c\}$ is satisfied.
        \item If $b\leq a\leq c$, then $b-c$ is negative and $b-a\geq \min\{0,b-c\}=b-c$ is satisfied since $a\leq c$. 
        \item If $c\leq a\leq b$, then $b-c$ and $a-c$ are positive and $a-c\geq \min\{0,b-c\}=0$ satisfies. 
        \item If $b,c\leq a$, then $b-c\geq \min\{0,b-c\}$ is also satisfied.
    \end{enumerate}
    
    We fix a non-empty subset $K$ of $[N]$ and $k\in K$. We first see that $m\leq B^{(K)}$, which follows directly from $m\leq M_k$:
    \[
        B^{(K)} = \frac{(N-2)m+\sum_{l\in[N]\backslash K} M_l}{2N-2-|K|} \geq \frac{(N-2)m+\sum_{l\in[N]\backslash K} m}{2N-2-|K|} = \frac{2N-2-|K|}{2N-2-|K|}m = m.
    \]
    
    We now prove an additional inequality before seeing $B^{(K)}\leq M_k$. Using the first inequality, we obtain 
    \begin{align*}
        B^{(K)} &= \frac{(N-2)m+\sum_{l\in [N]\backslash K}M_l}{2N-2-|K|} \\
        &= \frac{(N-2)m+\frac{1}{N}\sum_{l\in [N]\backslash K}\min\!\big\{m+(N-1)M,(N-2)(a_l-m)+\sum^N_{j=1} a_j\big\}}{2N-2-|K|} \\
        &\leq \frac{(N-2)m+\frac{1}{N}\min\!\big\{\sum_{l\in [N]\backslash K}(m+(N-1)M),\sum_{l\in [N]\backslash K}\big((N-2)(a_l-m)+\sum^N_{j=1} a_j\big)\big\}}{2N-2-|K|} \\
        &= \frac{(N-2)Nm+\min\!\big\{(N-|K|)(m+(N-1)M),(N-2)\sum_{l\in [N]\backslash K}(a_l-m)+(N-|K|)\sum^N_{j=1} a_j\big\}}{(2N-2-|K|)N}.
    \end{align*}
    
    We call this last term $C^{(K)}$. We now see that $B^{(K)}\leq M_k$, which is equivalent to seeing $\frac{(2N-2-|K|)N}{N-2}(M_k-B^{(K)})\geq0$ since $\frac{(2N-2-|K|)N}{N-2}\geq0$. Therefore, 
    \begin{align*}
        \MoveEqLeft[8] \frac{(2N-2-|K|)N}{N-2}(M_k-B^{(K)}) \\ 
        \geq{}& \frac{(2N-2-|K|)N}{N-2}(M_k-C^{(K)}) \\
        ={}& \frac{(2N-2-|K|)}{N-2}\min\!\bigg\{m+(N-1)M,(N-2)(a_k-m)+\sum^N_{j=1} a_j\bigg\} \\
        {}&- Nm - \frac{1}{N-2}\min\!\bigg\{(N-|K|)(m+(N-1)M),(N-2)\sum_{l\in [N]\backslash K}(a_l-m)+(N-|K|)\sum^N_{j=1} a_j\bigg\} \\
        ={}& \frac{N-2}{N-2}\min\!\bigg\{m+(N-1)M,(N-2)(a_k-m)+\sum^N_{j=1} a_j\bigg\} \\
        {}&+\frac{N-|K|}{N-2}\min\!\bigg\{m+(N-1)M,(N-2)(a_k-m)+\sum^N_{j=1} a_j\bigg\} \\
        {}&- Nm - \frac{1}{N-2}\min\!\bigg\{(N-|K|)(m+(N-1)M),(N-2)\sum_{l\in [N]\backslash K}(a_l-m)+(N-|K|)\sum^N_{j=1} a_j\bigg\} \\
        ={}& \min\!\bigg\{m+(N-1)M,(N-2)(a_k-m)+\sum^N_{j=1} a_j\bigg\} - Nm \\
        {}&+\frac{1}{N-2}\min\!\bigg\{(N-|K|)(m+(N-1)M),(N-2)(N-|K|)(a_k-m)+(N-|K|)\sum^N_{j=1} a_j\bigg\} \\
        {}& - \frac{1}{N-2}\min\!\bigg\{(N-|K|)(m+(N-1)M),(N-2)\sum_{l\in [N]\backslash K}(a_l-m)+(N-|K|)\sum^N_{j=1} a_j\bigg\}. 
    \end{align*}

    Now we apply the inequality $\min\{a,b\}-\min\{a,c\}\geq\min\{0,b-c\}$ we proven before, and obtain that
    \begin{align}
        \frac{(2N-2-|K|)N}{N-2}(M_k-B^{(K)}) \geq{}& \min\!\bigg\{m+(N-1)M,(N-2)(a_k-m)+\sum^N_{j=1} a_j\bigg\} - Nm \nonumber\\
        {}&+\frac{1}{N-2}\min\!\bigg\{0,(N-2)\bigg((N-|K|)(a_k-m)-\sum_{l\in [N]\backslash K}(a_l-m)\bigg)\bigg\} \nonumber\\
        ={}& \min\!\bigg\{(N-1)(M-m),(N-2)a_k-2(N-1)m+\sum^N_{j=1} a_j\bigg\} \nonumber\\
        {}&+\min\!\bigg\{0,\sum_{l\in [N]\backslash K}(a_k-a_l)\bigg\} \nonumber\\
        ={}& \min\!\bigg\{(N-1)(M-m),\sum^N_{\substack{j=1\\j\neq k}} (a_j+a_k-2m)\bigg\}+\min\!\bigg\{0,\sum_{l\in [N]\backslash K}(a_k-a_l)\bigg\}.\nonumber
    \end{align}

    Now, by the properties of the absolute value, we have that 
    \[
        \min\!\bigg\{0,\sum_{l\in [N]\backslash K}(a_k-a_l)\bigg\} \geq -\abs*{\sum_{l\in [N]\backslash K}(a_k-a_l)} \geq -\sum_{l\in [N]\backslash K}\abs{a_k-a_l}
    \]
    and therefore, 
    \begin{align*}
        \MoveEqLeft[6] \frac{(2N-2-|K|)N}{N-2}(M_k-B^{(K)}) 
        \\ 
        &{}= \min\!\bigg\{(N-1)(M-m),\sum^N_{\substack{j=1\\j\neq k}} (a_j+a_k-2m)\bigg\}+\min\!\bigg\{0,\sum_{l\in [N]\backslash K}(a_k-a_l)\bigg\} \\
        &{}\geq \min\!\bigg\{(N-1)(M-m),\sum^N_{\substack{j=1\\j\neq k}} (a_j+a_k-2m)\bigg\}-\sum_{l\in [N]\backslash K}\abs{a_k-a_l} \\
        &{}= \min\!\bigg\{(N-1)(M-m)-\sum_{l\in [N]\backslash K}\abs{a_k-a_l} ,\sum^N_{\substack{j=1\\j\neq k}} (a_j+a_k-2m)-\sum_{l\in [N]\backslash K}\abs{a_k-a_l} \bigg\}\\
        &{}= \min\!\bigg\{(|K|-1)(M-m)+\sum_{l\in [N]\backslash K}(M-m-\abs{a_k-a_l}),
        \sum_{\substack{j\in K\\j\neq k}} (a_j+a_k-2m)+ \sum_{l\in [N]\backslash K}(a_j+a_k-2m-\abs{a_k-a_l})\bigg\}.
    \end{align*}

    Finally, note that $|a_k-a_l|\leq M-m$, and that 
    \[
        \sum_{l\in [N]\backslash K}(a_j+a_k-2m-\abs{a_k-a_l}) = \sum_{l\in [N]\backslash K}((a_j-m)+(a_k-m)-\abs{(a_k-m)-(a_l-m)})\leq0
    \]
    since $a+b-\abs{a-b}\geq0$ for all $a,b\geq0$. Therefore, both terms of the minimum are positive. Thus, we obtain that $B^{(K)}\leq M_k$ for all $k\in K$. \qedhere
\end{proof}

\begin{lemma}\label{lemma:boundsfordandout}
    Let $N\in\N$ and $m,M\in(0,1)$ such that $m\leq M$. Let $\tdist$ be an $(m,M)$-transformation of a distance over $\X$. Then, for all $x_1,\dots,x_N,y\in\X$ and $l\in[N]$, we have 
    \[
        \sum^N_{i=1}\out_{D}(x_i) \leq (2N-2)\out_D(x_l) - (N-2)m
    \]
    and 
    \[
        m\leq \out_D(x_l) \leq \frac{1}{N}\min\!\bigg\{m+(N-1)M,(N-2)(\tdist(x_l,y)-m)+\sum^N_{j=1} \tdist(x_j,y)\bigg\}   
    \]
    for $D=\{x_1,\dots,x_N\}$.
\end{lemma}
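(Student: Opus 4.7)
The plan is to derive all three bounds from careful applications of the transformed triangular inequality $\tdist(x,y)\leq \tdist(x,z)+\tdist(z,y)-m$ (\Cref{prop:(mM)-transformeddistance}) combined with the equality $\tdist(x,x)=m$, taking care to use equalities wherever possible and only apply the inequality on the pairs that genuinely need it.

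For the first inequality, I would begin by symmetrizing and extracting the diagonal:
\[
    \sum^N_{i=1}\out_D(x_i) = \frac{1}{N}\sum_{i,j}\tdist(x_i,x_j) = \frac{2}{N}\sum_{i<j}\tdist(x_i,x_j)+m.
\]
I would then split the sum over unordered pairs $\{i,j\}$ into those that contain the index $l$ and those that do not. The first block has $N-1$ terms and equals $\sum_{j\neq l}\tdist(x_l,x_j) = N\out_D(x_l)-m$ exactly (no inequality used). For the second block, I would apply the transformed triangular inequality with $x_l$ as pivot, $\tdist(x_i,x_j)\leq \tdist(x_i,x_l)+\tdist(x_j,x_l)-m$. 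Counting multiplicities (each $\tdist(x_k,x_l)$ with $k\neq l$ appears in exactly $N-2$ such pairs) gives
\[
    \sum_{\substack{i<j\\ i,j\neq l}}\tdist(x_i,x_j)\leq (N-2)(N\out_D(x_l)-m)-\tbinom{N-1}{2}m.
\]
Adding the two blocks, substituting back, and simplifying the coefficient of $m$ via $1-\frac{(N-1)(2+(N-2))}{N}=1-(N-1)=-(N-2)$ yields exactly $2(N-1)\out_D(x_l)-(N-2)m$.

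The second pair of bounds is more direct. The lower bound $m\leq \out_D(x_l)$ follows from $\tdist(x_l,x_j)\geq m$ for every $j$. The upper bound $\out_D(x_l)\leq \frac{1}{N}(m+(N-1)M)$ follows by isolating the term $j=l$, which equals $m$ exactly, and bounding the remaining $N-1$ terms by $M$. For the other upper bound, I would again isolate $j=l$ (contributing $m$) and apply the transformed triangular inequality with pivot $y$ only to the remaining terms:
\[
    N\out_D(x_l)=m+\sum_{j\neq l}\tdist(x_l,x_j)\leq m+\sum_{j\neq l}\bigl(\tdist(x_l,y)+\tdist(x_j,y)-m\bigr).
\]
Rearranging the right-hand side as $(N-1)\tdist(x_l,y)+\bigl(\sum_j\tdist(x_j,y)-\tdist(x_l,y)\bigr)-(N-2)m = (N-2)(\tdist(x_l,y)-m)+\sum_j\tdist(x_j,y)$ and dividing by $N$ produces the claimed bound.

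The main obstacle is the tightness of the first inequality: a blind application of the triangular inequality to all $N^2$ ordered pairs would give only $2N\out_D(x_l)-Nm$, which is weaker than the target by $2(\out_D(x_l)-m)\geq 0$. The fix is to treat the self-pair $(l,l)$ via equality and to treat all $l$-involving pairs via equality (since those $\tdist$ values literally appear in the definition of $\out_D(x_l)$), reserving the triangle inequality exclusively for pairs with both indices different from $l$. This localization is precisely what saves the extra factor of $2\out_D(x_l)-2m$.
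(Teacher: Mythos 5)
Your proposal is correct and follows essentially the same route as the paper's proof: both rest on applying the transformed triangular inequality with $x_l$ as pivot only to the pairs not involving $l$ (keeping exact identities for the diagonal and the $l$-pairs), and both obtain the second chain of bounds by isolating the $j=l$ term and pivoting through $y$ or bounding by $M$. Your symmetrization over unordered pairs and uniform treatment of $N\leq 2$ (where the paper splits off $N=1,2$ as equality cases) are only cosmetic differences.
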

\begin{proof}
    We consider the first inequality. We see it is true for $N=1$ and $N=2$ separately. 
    For $N=1$, we have that $\out_D(x_1)=m$, and for $N=2$, we have that
    \[
        \out_D(x_1) =\frac{m+\tdist(x_1,x_2)}{2} = \out_D(x_2).
    \]
    
    Thus, the inequalities follow directly (in these cases, equality is satisfied).
    
    Suppose now that $N>2$ and we fix $l\in[N]$. Then, we obtain
    \begin{align*}
        \sum^N_{\substack{i=1\\i\neq l}} \out_D(x_i)
        &= \sum^N_{\substack{i=1\\i\neq l}} \frac{1}{N}\sum^N_{\substack{j=1}} \tdist(x_j,x_i) \\
        &= \sum^N_{\substack{i=1\\i\neq l}} \frac{1}{N}\bigg(m+\tdist(x_l,x_i)+\sum^N_{\substack{j=1\\j\neq i,l}} \tdist(x_j,x_i)\bigg) \\
        &= \frac{N-1}{N}m+\frac{1}{N}\sum^N_{\substack{i=1\\i\neq l}} \tdist(x_l,x_i)+\frac{1}{N}\sum^N_{\substack{i=1\\i\neq l}}\sum^N_{\substack{j=1\\j\neq i,l}} \tdist(x_j,x_i) \\
        &= \frac{N-1}{N}m+\frac{1}{N}\sum^N_{\substack{i=1}} \tdist(x_l,x_i)-\frac{1}{N}m+\frac{1}{N}\sum^N_{\substack{i=1\\i\neq l}}\sum^N_{\substack{j=1\\j\neq i,l}} \tdist(x_j,x_i) \\
        &= \out_D(x_l)+\frac{1}{N}\sum^N_{\substack{i=1\\i\neq l}}\sum^N_{\substack{j=1\\j\neq i,l}} \tdist(x_j,x_i)+\frac{N-2}{N}m.
    \end{align*}

    Now, applying the transformed triangular inequality (\Cref{prop:(mM)-transformeddistance}) to each term of the sum we obtain that
    \begin{align*}
        \sum^N_{\substack{i=1\\i\neq l}} \out_D(x_i)
        &\leq \out_D(x_l)+\frac{1}{N}\sum^N_{\substack{i=1\\i\neq l}}\sum^N_{\substack{j=1\\j\neq i,l}} (\tdist(x_j,x_l)+\tdist(x_l,x_i)-m) +\frac{N-2}{N}m \\
        &= \out_D(x_l)+\frac{1}{N}\sum^N_{\substack{i=1\\i\neq l}}\bigg(\sum^N_{\substack{j=1\\j\neq i,l}}\tdist(x_j,x_l)+(N-2)\tdist(x_l,x_i)-(N-2)m\bigg) +\frac{N-2}{N}m\\
        &= \out_D(x_l)+\frac{1}{N}\sum^N_{\substack{i=1\\i\neq l}}\bigg(\sum^N_{\substack{j=1}}\tdist(x_j,x_l)+(N-3)\tdist(x_l,x_i)-(N-1)m\bigg) +\frac{N-2}{N}m\\
        &= \out_D(x_l)+\frac{N-1}{N}\sum^N_{\substack{j=1}}\tdist(x_j,x_l)+\frac{N-3}{N}\bigg(\sum^N_{\substack{i=1}}\tdist(x_l,x_i)-m\bigg)-\frac{(N-1)^2}{N}m+\frac{N-2}{N}m \\
        &= \out_D(x_l)+(N-1)\out_D(x_l)+(N-3)\bigg(\out_D(x_l)-\frac{m}{N}\bigg)-\frac{(N-1)^2}{N}m+\frac{N-2}{N}m \\
        &= (2N-3)\out_D(x_l)-\frac{(N-3)+(N-1)^2-(N-2)}{N}m \\
        &= (2N-3)\out_D(x_l)-(N-2)m.
    \end{align*}

    Finally, adding $\out_D(x_l)$ at both sides of the inequality gives us the first inequality:
    \[
        \sum^N_{i=1}\out_{D}(x_i) \leq (2N-2)\out_D(x_l) - (N-2)m.
    \]

    Now we prove the second inequality. By the transformed triangular inequality (\Cref{prop:(mM)-transformeddistance}) and the fact that $m\leq \tdist(x,x')\leq M$ for all $x,x'\in\X$, we obtain that $m\leq\out_D(x_l)$ and
    \begin{align*}
        \out_D(x_l) &= \frac{1}{N}\bigg(m+\sum^N_{\substack{j=1\\j\neq l}} \tdist(x_j,x_i)\bigg) \\
        &\leq \frac{1}{N}\bigg(m+\min\!\bigg\{\sum^N_{\substack{j=1\\j\neq l}}M,\sum^N_{\substack{j=1\\j\neq l}} (\tdist(x_j,y)+\tdist(y,x_l)-m)\bigg\}\bigg) \\
        &= \frac{1}{N}\bigg(m+\min\!\bigg\{(N-1)M,(N-1)(\tdist(y,x_l)-m)+\sum^N_{\substack{j=1\\j\neq l}} \tdist(x_j,y)\bigg\}\bigg) \\
        &= \frac{1}{N}\min\!\bigg\{m+(N-1)M,(N-2)(\tdist(y,x_l)-m)+\sum^N_{\substack{j=1}} \tdist(x_j,y)\bigg\}. \qedhere
    \end{align*}
\end{proof}

\begin{proposition}\label{prop:PolytopeIsConvex}
    Let $N\in\N$ and $m,M\in(0,1)$ such that $m<M$. For all $i\in[N]$, let $a_i\in[m,M]$ and
    \[
        M_i = \frac{1}{N}\min\!\bigg\{m+(N-1)M,(N-2)(a_i-m)+\sum^N_{j=1} a_j\bigg\}.
    \]
    
    Let $P$ be the intersection of the closed $r$-dimensional hypercube $C=[m,M_1]\times\cdots\times[m,M_N]$ and the closed half-spaces $H_i\coloneqq\{z\in\R^N \mid \sum^N_{j=1} z_j \leq (2N-2)z_i - (N-2)m\}$ for all $i\in[N]$.

    Then, $P$ is a closed convex polytope in $\R^N$. For $N>2$, we have that $P$ has at most $2^N$ vertices, which are 
    \[
        z^{(K)}=(z^{(K)}_1,\dots,z^{(K)}_N) \text{ such that } z^{(K)}_k = \begin{cases}
            B^{(K)} & \text{if $k\in K$,} \\
            M_k & \text{if $k\in [N]\backslash K$,}
        \end{cases}
    \]
    for every subset $K\subseteq[N]$, where
    \[
        B^{(K)} = \frac{(N-2)m+\sum_{l\in[N]\backslash K} M_l}{2N-2-|K|}.
    \]

    For the cases $N=1$ and $N=2$, we obtain that $P$ is respectively $\{m\}$ and $\{(t,t)\in\R^2\mid m\leq t\leq M_1\}$. 
\end{proposition}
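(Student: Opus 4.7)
The plan is first to recognize $P$ as a bounded intersection of finitely many closed half-spaces — hence automatically a closed convex polytope — then dispatch $N=1,2$ by inspection, and for $N>2$ enumerate the vertices by classifying which defining inequalities can be simultaneously tight.

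For $N=1$, the formula for $M_1$ collapses to $\min\{m,m\}=m$, so $C=\{m\}$ and $P=\{m\}$. For $N=2$, each half-space $H_i$ reduces to $z_{3-i}\le z_i$, so $H_1\cap H_2$ forces $z_1=z_2$; since $M_1=M_2=\tfrac{1}{2}\min\{m+M,a_1+a_2\}$, intersecting the diagonal with the hypercube $C$ yields exactly $\{(t,t)\in\R^2 : m\le t\le M_1\}$.

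For $N>2$, I would enumerate vertices as follows. A vertex $v$ of $P$ is a point at which $N$ linearly independent defining inequalities hold with equality, drawn from the three families $(a)_k: z_k\ge m$, $(b)_k: z_k\le M_k$, and $(c)_k: (2N-2)z_k-\sum_j z_j\ge(N-2)m$. If $(a)_k$ is tight at $v$ for some $k$, then $v_k=m$ combined with $v\in H_k$ gives $\sum_j v_j\le (2N-2)m-(N-2)m=Nm$, which together with the lower bounds forces $v=(m,\dots,m)$; a direct computation shows $B^{([N])}=(N-2)m/(N-2)=m$, so $v=z^{([N])}$. Otherwise only $(b)$ and $(c)$ are active at $v$. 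Let $K\subseteq[N]$ collect the indices for which $(c)_k$ is tight. Rearranging $(c)_k$ as $(2N-2)v_k=\sum_j v_j+(N-2)m$ shows the right-hand side does not depend on $k\in K$, so $v_k=B$ is constant on $K$. Linear independence of the $N$ active constraints then forces the remaining $N-|K|$ to be upper bounds $(b)_l$ for $l\notin K$, giving $v_l=M_l$; substituting into $(c)_k$ for any $k\in K$ yields $(2N-2-|K|)B=(N-2)m+\sum_{l\notin K}M_l$, i.e., $B=B^{(K)}$, so $v=z^{(K)}$. This proves every vertex has the form $z^{(K)}$, giving at most $2^N$ vertices. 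Lemma~\ref{lemma:Polytopeconvexinequality} further ensures $B^{(K)}\in[m,M_k]$ for every $k\in K$, so each $z^{(K)}$ lies in $C$; whether $z^{(K)}$ is a genuine vertex of $P$ depends on whether the remaining half-spaces $H_i$ ($i\notin K$) hold at it, but in any case every vertex of $P$ is some $z^{(K)}$.

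The main obstacle I anticipate is handling degeneracies in the last step cleanly — configurations at which strictly more than $N$ constraints are tight, for instance $M_k=B^{(K)}$ for some $k\in K$ or both $(b)_k$ and $(c)_k$ active at the same index. The resolution is that any such degenerate point equals $z^{(K')}$ for some alternate $K'\subseteq[N]$ obtained by reassigning the over-determined coordinates, so the enumeration by subsets still covers every vertex and preserves the $2^N$ upper bound.
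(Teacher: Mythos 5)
Your active-constraint enumeration is a legitimately different route from the paper's (the paper instead argues geometrically that $\partial H_i\cap U_i$ cannot meet $P$, so that only a partition of $[N]$ into tight $\partial H_k$'s and tight upper facets $U_l$'s can cut out a point), and the direction you do prove is essentially correct: the lower-bound case collapsing to $(m,\dots,m)=z^{([N])}$, the constancy of $v_k$ on the set $K$ of tight $H$-constraints, and the reassignment of over-determined indices all work. (One small point: the assertion that linear independence forces the remaining active constraints to be the upper bounds at indices outside $K$ deserves its one-line justification — if two coordinates carry neither an active upper bound nor an active $\partial H$-constraint, every active normal has equal entries in those two coordinates, so the active system cannot have rank $N$; as written this step is asserted rather than argued, though it is true.)

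The genuine gap is the converse inclusion, which you explicitly leave open: the proposition claims the vertices of $P$ \emph{are} the points $z^{(K)}$ for every $K\subseteq[N]$, and this requires showing $z^{(K)}\in P$, in particular $z^{(K)}\in H_i$ for $i\notin K$. This is not a formality covered by \Cref{lemma:Polytopeconvexinequality}: that lemma only yields $m\le B^{(K)}\le M_k$ for $k\in K$, i.e.\ $z^{(K)}\in C$. For $K\neq\varnothing$ the tight constraints give $\sum_j z^{(K)}_j=(2N-2)B^{(K)}-(N-2)m$, so membership in $H_i$ for $i\notin K$ is equivalent to $B^{(K)}\le M_i$ for $i\notin K$, i.e.\ $(N-2)m+\sum_{l\notin K,\,l\neq i}M_l\le(2N-3-|K|)M_i$; for $K=\varnothing$ one needs $\sum_j M_j\le(2N-2)M_i-(N-2)m$. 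These inequalities do hold, but only after unwinding the definition of the $M_l$ (when every $M_l$ is given by the second term of the minimum the inequality reduces, after the sum $\sum_j a_j$ cancels, to $(2N-2-|K|)a_i+\sum_{k\in K}a_k\ge(2N-2)m$, and the first-term cases follow by monotonicity), which is exactly the verification the paper performs (tersely) when it checks $z^{(K)}\in H_l$ for all $l$. Without it you have established only ``$P$ has at most $2^N$ vertices, each of the form $z^{(K)}$'' — sufficient for the downstream use in \Cref{th:bounddependingonp}, but strictly weaker than the stated proposition.
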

\begin{proof}
    First, observe that $D$ is non-empty since it contains $(m,m,\dots,m)\in\R^N$. 
    Since $P$ is the intersection of $N+1$ closed convex sets, it is closed and convex. Also, since $C=[m,M_1]\times\cdots\times[m,M_N]$ is a polytope and the boundary of each half-space is a hyperplane, every side of $D$ is flat, and $P$ is a polytope.
    We see the cases $N=1$ and $N=2$ separately. When $N=1$, we have that $M_1=m$, $C=\{m\}$ and $H_1=\{z\in\R\mid z\leq m\}$ and thus $P=\{m\}$.
    When $N=2$, then $M_1=M_2$, $C=[m,M_1]^2$ and $H_1\cap H_2=\{z\in\R^2\mid z_1=z_2\}$, so $P=\{(t,t)\in\R^2\mid m\leq t\leq M_1\}$ (its vertices are $(m,m)$ and $(M_1,M_1)$). We consider $N>2$ for the rest of the proof.

    We now study the vertices of $D$. We denote every hyperplane that defines the half-space $H_i$ as $\partial H_i\coloneqq\{z\in\R^N \mid \sum^N_{j=1} z_j \leq (2N-2)z_i - (N-2)m\}$ for $i\in[N]$. We also denote the ``lower'' and ``upper'' hyperplanes that define the hypercube as $L_i\coloneqq\{z\in C\mid z_i=m\}$ and $U_i\coloneqq\{z\in C \mid z_i=M_i\}$ for  $i\in[N]$. By construction, all the faces of the polytope $P$ are contained in one or more of the elements of $\H=\{\partial H_i\}_{i\in[N]}\cup\{L_i\}_{i\in[N]}\cup\{U_i\}_{i\in[N]}$. 
    
    We know that the vertices of $P$ are the points of $P$ among 
    \begin{enumerate}[(i)]
        \item the vertices of $[m,M_1]\times\cdots\times[m,M_N]$,
          \item the intersection vertices between the hypercube facets and the hyperplanes,
        \item the intersection vertices between a subset of hyperplanes. 
    \end{enumerate}

    Formally, by the definition of vertex, a point $\mathbf{p}\in\R^r$ is a vertex of $P$ if $\mathbf{p}\in P$ and there exists $\H'\subseteq\H$ such that $\{\mathbf{p}\} = \bigcap_{H\in\H'} H$.

    We can see quickly that the only point verifying (iii) is $(m,m,\dots,m)\in\R^N$. That is, since every hyperplane $\partial H_i$ contains this point, the intersection of any combination of these hyperplanes must contain it.

    Regarding the candidates in (ii), we can also see that the intersection of any facet $L_i$ with any hyperplane is $(m,m,\dots,m)$. The intersection of the hyperplanes with $U_i$ is more complex.

    We first note that $\partial H_i$ and $U_i$ for any $i\in[N]$ do not intersect in $P$. Indeed, if $z\in \partial H_i\cap U_i$, then $z_i=M_i$ and 
    \[
        (2N-2)M_i - (N-2)m = M_i+\sum^N_{\substack{j=1\\j\neq i}} z_j.
    \]

    Consider two cases:
    \begin{enumerate}[(a)]
        \item Case $M_i=\frac{m+(N-1)M}{N}$ (i.e., the first term of the minimum of the expression of $M_i$ is achieved), including when $a_i=m$. Then we apply $z_j \leq M_j \leq \frac{m+(N-1)M}{N}=M_i$ and obtain a contradiction:
        \[
            (2N-2)M_i - (N-2)m = M_i + \sum^N_{\substack{j=1\\j\neq i}} z_j \leq NM_i \Longleftrightarrow (N-2)(M_i-m)\leq 0.
        \]
        \item Case $M_i=\frac{1}{N}((N-2)(a_i-m)+\sum^N_{k=1} a_k)$ (i.e., the second term of the minimum is achieved), excluding when $a_i=m$. Then, by applying $z_j\leq M_j\leq \frac{1}{N}((N-2)(a_j-m)+\sum^N_{k=1} a_k)$, we obtain that
        \[
            \sum^N_{\substack{j=1}} z_j \leq \frac{N-2}{N}\sum^N_{\substack{j=1}}(a_j-m)+\frac{1}{N}\sum^N_{\substack{j=1}}\sum^N_{k=1} a_k = \frac{2N-2}{N}\sum^N_{k=1}a_k - (N-2)m,
        \]
        which leads to a contradiction:
        \begin{gather*}
            \frac{2N-2}{N}\bigg((N-2)(a_i-m)+\sum^N_{k=1} a_k\bigg) - (N-2)m \leq \frac{2N-2}{N}\sum^N_{k=1}a_i - (N-2)m 
            \Longleftrightarrow \frac{2(N-1)(N-2)}{N}(a_i-m)\leq 0. 
        \end{gather*}
    \end{enumerate}

    Consequently, $\bigcap_{k\in K} \partial H_k \cap \bigcap_{l\in L} U_l$ is non-empty if and only if $K$ and $L$ are disjoint subsets of $[N]$. Since it is necessary to have at least $N$ hyperplanes to ensure that the intersection is a single point, we deduce that $\bigcap_{k\in K} \partial H_k \cap \bigcap_{l\in L} U_l$ is a point if and only if $K$ and $L$ form a partition of $[N]$ (in particular, $|K|+|L|=N$). 

    We consider a fixed partition $\{K,([N]\backslash K)\}$ of $[N]$ and the intersection $\bigcap_{k\in K} \partial H_k \cap \bigcap_{l\in [N]\backslash K} U_l$. We obtain that $x_l=M_l$ for all $l\in [N]\backslash K$ and $x_k=x_{k'}$ for all $k,k'\in K$ (since $(2N-2)z_{k}-(N-2)m=\sum^N_{j=1}z_j=(2N-2)z_{k'}-(N-2)m$). Substituting these values in the equation defining $\partial H_k$ for any $k\in K$ when $K\neq\varnothing$, we obtain
    \[
         |K|z_{k} + \sum_{l\in [N]\backslash K}M_l = (2N-2)z_{k}-(N-2)m \Longleftrightarrow z_k = \frac{(N-2)m+\sum_{l\in [N]\backslash K}M_l}{2N-2-|K|}.
    \]

    This value corresponds to $B^{(K)}$. Thus, the point we obtain is 
    \[
        z^{(K)}=(z^{(K)}_1,\dots,z^{(K)}_N) \text{ such that } z^{(K)}_k = \begin{cases}
            B^{(K)} & \text{if $k\in K$,} \\
            M_k & \text{if $k\in [N]\backslash K$.}
        \end{cases}
    \]

    In addition, by \Cref{lemma:Polytopeconvexinequality}, we that $m\leq z^{(K)}_k\leq M_k$ for all $k\in[N]$. We can also easily see that $z^{(K)}\in H_l$ for all $l\in[N]$. In conclusion, $z^{(K)}\in P$ for all $K\subseteq[N]$ and the vertices of $P$ are $\{z^{(K)}\mid K\subseteq[N]\}$ (note that $(m,\dots,m)=z^{([N])}$ and $(M_1,\dots,M_N)=z^{(\varnothing)}$), and therefore $P$ has at most $2^N$ vertices (we obtain less only when $z^{(K)}=z^{(K')}$ for some $K\neq K'$).
\end{proof}

\begin{proposition}\label{prop:formerproblem}
    Let $\varepsilon\geq0$. Let $N>2$ and $J$ be integers such that $0\leq J<N$. Let $m,M\in(0,1)$ such that $m\leq M$ and let $c\in[m,M]$. 
    Let $f\colon[m,M]^{N-J}\to\R$ such that, for all $(a_1,\dots,a_{N-J})\in[m,M]^{N-J}$,
    \[
        f(a_1,\dots,a_{N-J}) = 
        \bigg(\e^\varepsilon-(\e^\varepsilon-1)\frac{m+Jc+S}{N+1}\bigg)\prod^{N-J}_{i=1}\bigg(N+\frac{1-a_i}{1-\frac{\min\{m+(N-1)M,(N-2)(a_i-m)+Jc+S\}}{N}}\bigg)
    \]
    where $S=\sum^{N-J}_{i=1}a_i$. Then,
    \[
        f(a_1,\dots,a_{N-J}) \leq 
        \max\!\bigg\{\max_{t\in[m,U_{N-J-1}]} f_{\mathrm{diag}}(t),\max_{k\in\{0,1,\dots,N-J\}}\max_{t\in[m,U_{k}]} f_k(t)\bigg\} 
    \]
    where
    \[
        f_{\mathrm{diag}}(t) = \bigg(\e^\varepsilon-(\e^\varepsilon-1)\frac{m+Jc+(N-J)t}{N+1}\bigg)\bigg(N+\frac{1-t}{1-\frac{(2N-J-2)t-(N-2)m+Jc}{N}}\bigg)^{N-J}    
    \]
    for all $t\in[m,U_{N-J-1}]$; and for all $k\in\{0,1,\dots,N-J\}$,
    \[
        f_k(t) = \bigg(\e^\varepsilon-(\e^\varepsilon-1)\frac{m+S_k(t)}{N+1}\bigg)\bigg(N+\frac{1-t}{1-\frac{(N-2)(t-m)+S_k(t)}{N}}\bigg)
        \bigg(N+\frac{1-m}{1-\frac{S_k(t)}{N}}\bigg)^{N-J-k-1}\bigg(N+\frac{1-B_k(t)}{1-\frac{m+(N-1)M}{N}}\bigg)^{k}
    \] 
    with $S_k(t)=Jc+t+(N-J-k-1)m+kB_k(t)$ and
    \[
        B_k(t) = \frac{(N-1)(M+m)-(Jc+(N-J-k-1)m+t)}{N+k-2}
    \]
    for all $t\in[m,U_k]$; and
    \[
        U_k = \frac{(N-1)(M+m)-(Jc+(N-J-k-1)m)}{N+k-1}.
    \]
\end{proposition}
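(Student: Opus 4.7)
The plan is to reduce the maximization of $f$ over the cube $[m,M]^{N-J}$ to the supremum over at most $N-J+2$ explicit one-parameter families, by partitioning the domain and applying convexity. First I would split the domain according to which branch of the inner minimum is active at each coordinate. For a subset $K \subseteq [N-J]$, let $R_K$ denote the region where the constant branch $m+(N-1)M$ is the minimum precisely at the indices in $K$. On $R_K$ the factor at each $i \in K$ reduces to an affine expression in $a_i$, namely $N + \frac{N(1-a_i)}{1-M}$, while the factor at $i \notin K$ retains its nonlinear form; thus $f|_{R_K}$ is free of the $\min$ and is a smooth function on $R_K \cap [m,M]^{N-J}$.

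Next I would apply convexity. Writing $z_i \coloneqq ((N-2)(a_i-m)+Jc+S)/N$, where $S=\sum_j a_j$, each $z_i$ is an affine function of the vector $(a_1,\dots,a_{N-J})$. Invoking Lemma~\ref{lemma:convex-large} (together with its affine-substitution extension and Lemma~\ref{lemma:Convex+Affine=Convex}) shows that on any affine slice $\{S = \sigma\}$, the product appearing in $f|_{R_K}$ is convex in the remaining free coordinates; the prefactor $\e^\varepsilon - (\e^\varepsilon-1)(m+Jc+\sigma)/(N+1)$ is constant on such a slice. Since $R_K$ is cut out by linear inequalities, the intersection $R_K \cap \{S = \sigma\} \cap [m,M]^{N-J}$ is a convex polytope, so the supremum of the convex function $f|_{R_K}$ is attained at one of its vertices, and it remains only to enumerate these vertices as $\sigma$ and $K$ vary.

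The vertex structure is the heart of the argument. At a vertex, each coordinate $a_i$ with $i \in K$ is pinned to the interior boundary $(N-2)(a_i-m)+Jc+\sigma = m+(N-1)M$, which forces a common value $B$ at these $k \coloneqq |K|$ indices; each coordinate with $i \in K^c$ is pushed to the endpoint $m$ (taking $a_i=M$ would move the configuration into a different region $R_{K'}$), with at most one coordinate left free at some value $t$ by the slice constraint $\sum a_i = \sigma$. Solving the boundary equation for $B$ then yields exactly the quantities $B_k(t)$, $S_k(t)$, and the upper limit $U_k$ of the statement, and the restriction of $f$ to this vertex family is precisely $f_k(t)$; the admissible range $t \in [m, U_k]$ is equivalent to $B_k(t) \geq m$. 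The separate family $f_{\mathrm{diag}}(t)$ covers the symmetric case where $K = \varnothing$ and the slice constraint is satisfied by taking all $N-J$ coordinates equal to $t$, which is a distinct vertex of the diagonal slice not caught by the "one free, rest at $m$" configuration.

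The main obstacle will be justifying the convexity argument carefully under the coupling through $S$: $f$ is not jointly convex on the full cube, but only along the affine slices $\{S=\sigma\}$, so I must argue that iterating the slice-wise maximization and then optimizing over $\sigma$ produces exactly the claimed envelope $\max\{f_{\mathrm{diag}}, \max_k f_k\}$. A secondary difficulty is the case analysis needed to ensure that every vertex of every region $R_K$ is absorbed into one of the two enumerated families, and that boundary effects at the interfaces between different $R_K$'s introduce no additional candidate configurations beyond those already listed.
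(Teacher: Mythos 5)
There is a genuine gap in the convexity step. You claim that on every slice $\{S=\sigma\}$ the product of factors $N+\frac{1-a_i}{1-\frac{(N-2)(a_i-m)+Jc+\sigma}{N}}$ is convex in the free coordinates, invoking Lemma~\ref{lemma:convex-large}; but that lemma gives convexity in the variables $z$ \emph{with the $a_i$ held fixed}, whereas here the numerators $1-a_i$ vary together with the (affine) denominators, so the lemma does not apply. In fact the claim is false in general: the paper's proof computes the second derivative of $\varphi_s(x)=\ln\bigl(N+\frac{1-x}{1-\frac{(N-2)(x-m)+s}{N}}\bigr)$ and shows its sign is that of $s-(N-2)m-2$, so for $s\leq (N-2)m+2$ each log-factor is \emph{concave} on the slice. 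In that regime the slice maximum is obtained by Jensen's inequality at the point where all free coordinates are equal, i.e.\ an interior point of the slice polytope, not a vertex --- and this is precisely the origin of the family $f_{\mathrm{diag}}$. Your attempt to account for $f_{\mathrm{diag}}$ as ``a distinct vertex of the diagonal slice'' is therefore incorrect, and without the convex/concave dichotomy your argument cannot rule out maxima at arbitrary interior points of a slice, so the claimed envelope $\max\{f_{\mathrm{diag}},\max_k f_k\}$ is not established.

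A secondary issue is the treatment of the $\min$. The paper does not partition into regions $R_K$ by active branch; it first proves $\max_{[m,M]^{N-J}}f=\max_{Q}f$, where $Q$ is the set on which the second branch is active at \emph{every} coordinate, via a change of variables, a concavity (Jensen) step, and an explicit monotonicity computation showing the relevant factor decreases once a transformed coordinate exceeds $M$. Your $R_K$ decomposition would instead require a separate maximization analysis on each region with $K\neq\varnothing$ (where some factors have the fixed denominator $1-\frac{m+(N-1)M}{N}$, not $1-M$ as written), and you do not carry this out; the boundary coordinates pinned at the value $B_k(t)$ in the statement arise in the paper as edges of $Q$, not as interiors of other branch-regions. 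So the overall architecture could perhaps be repaired, but as written both the reduction of the $\min$ and the key slice-wise argument are missing or incorrect.
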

\begin{proof}
    We search for the maximum of $f$ over its domain $[m,M]^{N-J}$, denoting its elements as $a\coloneqq(a_1,\dots,a_{N-J})$. We consider the subset of $Q\subseteq[m,M]
    ^{N-J}$ such that the inequalities
    \[
        (N-2)a_i-(N-1)m+Jc+S\leq (N-1)M \Longleftrightarrow a_i \leq \frac{(N-1)(M+m)-Jc-S}{N-2}
    \]
    hold for all $i\in[N-J]$ (i.e., the minimum of $\min\{m+(N-1)M,(N-2)(a_i-m)+Jc+S\}$ is achieved in the right term for all $i\in[N-J]$) and we will first see that 
    \[
        \max_{a\in [m,M]^{N-J}} f(a) = \max_{a\in Q} f(a),
    \]
    i.e., that for every $a\in[m,M]^{N-J}\backslash Q$ there exists $a'\in Q$ such that $f(a)\leq f(a')$. 

    We consider the following change of variables: $b=\phi(a)$ defined as
    \[
        \begin{pmatrix}
            b_{1} \\
            b_{2} \\
            \vdots \\
            b_{N-J}
        \end{pmatrix}=
        \phi\!\begin{pmatrix}
            a_{1} \\
            a_{2} \\
            \vdots \\
            a_{N-J}
        \end{pmatrix} 
        =
        \begin{pmatrix}
            1 & \frac{1}{N-1} & \dots  & \frac{1}{N-1} \\
            \frac{1}{N-1} & 1 & \ddots & \vdots \\
            \vdots & \ddots & \ddots & \frac{1}{N-1} \\
            \frac{1}{N-1} &\dots & \frac{1}{N-1} & 1 
        \end{pmatrix}\begin{pmatrix}
            a_{1} \\
            a_{2} \\
            \vdots \\
            a_{N-J}
        \end{pmatrix}
         + 
        \begin{pmatrix}
            - m + \frac{Jc}{N-1} \\
            - m + \frac{Jc}{N-1} \\
            \vdots \\
            - m + \frac{Jc}{N-1}
        \end{pmatrix},
    \]
    whose inverse transformation is
    \[
        \begin{pmatrix}
            a_{1} \\
            a_{2} \\
            \vdots \\
            a_{N-J}
        \end{pmatrix} =
        \frac{N-1}{(N-2)(2N-J-2)}\begin{pmatrix}
            2N-J-3 & -1 & \dots  & -1 \\
            -1 & 2N-J-3 & \ddots & \vdots \\
            \vdots & \ddots & \ddots & -1 \\
            -1 &\dots & -1 & 2N-J-3
        \end{pmatrix}
        \left(
        \begin{pmatrix}
            b_{1} \\
            b_{2} \\
            \vdots \\
            b_{N-J}
        \end{pmatrix} + 
        \begin{pmatrix}
            m - \frac{Jc}{N-1} \\
            m - \frac{Jc}{N-1} \\
            \vdots \\
            m - \frac{Jc}{N-1}
        \end{pmatrix}\right),
    \]

    Under this transformation we have that $Q=\{a\in[m,M]^{N-J}\mid b=\phi(a)\leq M\}$. Let $g(b) \coloneqq f(\phi^{-1}(b))$ for all $b\in\phi([m,M]^{N-J})$, i.e.,  
    \begin{align*}
        g(b) &= \bigg(\e^{\varepsilon}-(\e^{\varepsilon}-1)\frac{m+Jc\frac{N-2}{2N-J-2}+\frac{N-1}{2N-J-2}\sum^{N-J}_{i=1}b_i+\frac{(N-1)(N-J)}{2N-J-2}m}{N+1}\bigg)\\
        &\qquad\qquad\cdot \prod^{N-J}_{i=1}\bigg(N+\frac{1-\frac{N-1}{(N-2)(2N-J-2)}((2N-J-3)b_i-\sum^{N-J}_{j=1,\,j\neq i}b_j+(N-2)(m-\frac{Jc}{N-1}))}{1-\frac{m+(N-1)\min\{M,b_i\}}{N}}\bigg) \\
        &=\bigg(\e^{\varepsilon}-(\e^{\varepsilon}-1)\frac{m+Jc\frac{N-2}{2N-J-2}+\frac{N-1}{2N-J-2}\sum^{N-J}_{i=1}b_i+\frac{(N-1)(N-J)}{2N-J-2}m}{N+1}\bigg)\\
        &\qquad\qquad \cdot \prod^{N-J}_{i=1}\bigg(N+\frac{1-\big(\frac{N-1}{N-2}b_i-\frac{N-1}{(N-2)(2N-J-2)}\sum^{N-J}_{j=1}b_j+\frac{N-1}{2N-J-2}(m-\frac{Jc}{N-1})\big)}{1-\frac{m+(N-1)\min\{M,b_i\}}{N}}\bigg).
    \end{align*}

    Now, we assume that $a\in[m,M]^{N-J}\backslash Q$, and thus there is at least one index $i\in[N-J]$ such that $b_i>M$. We assume without loss of generality that $b_1,b_2,\dots,b_k\geq M$ and $b_{k+1},\dots,b_{N-J}<M$. We consider the function 
    \begin{align*}
        \psi\colon [M,\max\{b_1,\dots,b_k\}] & \longrightarrow \R \\*
        x & \longmapsto \ln\!\bigg(N+\frac{1-\big(\frac{N-1}{N-2}x-\frac{N-1}{(N-2)(2N-J-2)}\sum^{N-J}_{j=1}b_j+\frac{N-1}{2N-J-2}(m-\frac{Jc}{N-1})\big)}{1-\frac{m+(N-1)M}{N}}\bigg),
    \end{align*}
    which is a concave function (its second derivative verifies $\psi''(x)=\frac{(N-1)^2}{(N-2)^2(1-\frac{m+(N-1)M}{N})^2\psi(x)^2}$ for all $x$ in its domain). Therefore, by Jensen's inequality,
    \[
        \frac{1}{k}\sum^k_{i=1} \psi(b_i) \leq \psi\bigg(\frac{1}{k}\sum^k_{i=1} b_i\bigg)
    \]
    since $b_1,\dots,b_k\in[M,\max\{b_1,\dots,b_k\}]$, and applying the exponential and the $k$th power at both sides of the inequality, we obtain
    \begin{multline*}
        \prod^{k}_{i=1}\bigg(N+\frac{1-\big(\frac{N-1}{N-2}b_i-\frac{N-1}{(N-2)(2N-J-2)}\sum^{N-J}_{j=1}b_j+\frac{N-1}{2N-J-2}(m-\frac{Jc}{N-1})\big)}{1-\frac{m+(N-1)M}{N}}\bigg) \\
        \leq \bigg(N+\frac{1-\big(\frac{N-1}{N-2}\frac{1}{k}\sum^k_{i=1} b_i-\frac{N-1}{(N-2)(2N-J-2)}\sum^{N-J}_{j=1}b_j+\frac{N-1}{2N-J-2}(m-\frac{Jc}{N-1})\big)}{1-\frac{m+(N-1)M}{N}}\bigg)^k
    \end{multline*}
    and it follows directly that
    \[
        g(b_1,\dots,b_k,b_{k+1},\dots,b_{N-J}) \leq g\bigg(\frac{1}{k}\sum^k_{i=1}b_i,\dots,\frac{1}{k}\sum^k_{i=1}b_i,b_{k+1},\dots,b_{N-J}\bigg).
    \]

    Therefore, it is enough to find the maximum for 
    \begin{align*}
        g_2(b')={}&g_2(v,b_{k+1},b_{k+2},\dots,b_{N-J}) \\
        \coloneqq{}& g(v,\dots,v,b_{k+1},b_{k+2},\dots,b_{N-J}) \\
        ={}& \bigg(\e^{\varepsilon}-(\e^{\varepsilon}-1)\frac{m+Jc\frac{N-2}{2N-J-2}+\frac{N-1}{2N-J-2}(kv+\sum^{N-J}_{i=k+1}b_i)+\frac{(N-1)(N-J)}{2N-J-2}m}{N+1}\bigg)\\
        &\qquad\cdot \bigg(N+\frac{1-\frac{N-1}{(N-2)(2N-J-2)}((2N-J-k-2)v-\sum^{N-J}_{j=k+1}b_j+(N-2)(m-\frac{Jc}{N-1}))}{1-\frac{m+(N-1)M}{N}}\bigg)^k \\
        &\qquad\cdot \prod^{N-J}_{i=k+1}\bigg(N+\frac{1-\frac{N-1}{(N-2)(2N-J-2)}((2N-J-3)b_i-(kv+\sum^{N-J}_{j=k+1,\,j\neq i}b_j)+(N-2)(m-\frac{Jc}{N-1}))}{1-\frac{m+(N-1)b_i}{N}}\bigg)
    \end{align*}
    
    Now, we consider $h$ defined as the second and third term of $g_2$, i.e.,
    \begin{align*}
        h(b')&=h(v,b_{k+1},b_{k+2},\dots,b_{N-J}) \\
        &= \bigg(N+\frac{1-\frac{N-1}{(N-2)(2N-J-2)}((2N-J-k-2)v-\sum^{N-J}_{j=k+1}b_j+(N-2)(m-\frac{Jc}{N-1}))}{1-\frac{m+(N-1)M}{N}}\bigg)^k \\
        &\qquad\cdot \prod^{N-J}_{i=k+1}\bigg(N+\frac{1-\frac{N-1}{(N-2)(2N-J-2)}((2N-J-3)b_i-(kv+\sum^{N-J}_{j=k+1,\,j\neq i}b_j)+(N-2)(m-\frac{Jc}{N-1}))}{1-\frac{m+(N-1)b_i}{N}}\bigg)
    \end{align*}
    for $b'\in[m,M]^{N-J-k+1}$, and we compute its derivative with respect to $v$:
    \begin{align*}
        \frac{\partial}{\partial v}h(b') ={}& k\frac{\frac{-\frac{N-1}{(N-2)(2N-J-2)}(2N-J-k-2)}{1-\frac{m+(N-1)M}{N}}}{N+\frac{1-\frac{N-1}{(N-2)(2N-J-2)}((2N-J-k-2)v-\sum^{N-J}_{j=k+1}b_j+(N-2)(m-\frac{Jc}{N-1}))}{1-\frac{m+(N-1)M}{N}}}h(b') \\
        {}& + \sum^{N-J}_{i=k+1} \frac{\frac{\frac{N-1}{(N-2)(2N-J-2)}k}{1-\frac{m+(N-1)b_i}{N}}}{N+\frac{1-\frac{N-1}{(N-2)(2N-J-2)}((2N-J-3)b_i-(kv+\sum^{N-J}_{j=k+1,\,j\neq i}b_j)+(N-2)(m-\frac{Jc}{N-1}))}{1-\frac{m+(N-1)b_i}{N}}}h(b') \\
        ={}& \frac{-k\frac{N-1}{(N-2)(2N-J-2)}(2N-J-k-2)h(b')}{(1-\frac{m+(N-1)M}{N})N+1-\frac{N-1}{(N-2)(2N-J-2)}((2N-J-k-2)v-\sum^{N-J}_{j=k+1}b_j+(N-2)(m-\frac{Jc}{N-1}))}\\
        {}& + \sum^{N-J}_{i=k+1} \frac{k\frac{N-1}{(N-2)(2N-J-2)}h(b')}{(1-\frac{m+(N-1)b_i}{N})N+1-\frac{N-1}{(N-2)(2N-J-2)}((2N-J-3)b_i-(kv+\sum^{N-J}_{j=k+1,\,j\neq i}b_j)+(N-2)(m-\frac{Jc}{N-1}))}.
    \end{align*}

    Let
    \[
        A(v,b_{k+1},\dots,b_{N-J}) = N-(m+(N-1)M)+1\\
        -\frac{N-1}{(N-2)(2N-J-2)}\bigg((2N-J-k-2)v-\sum^{N-J}_{j=k+1}b_j+(N-2)\bigg(m-\frac{Jc}{N-1}\bigg)\bigg)
    \]
    and, for $i=\{k+1,k+2,\dots,N-J\}$,
    \[
        B_i(v,b_{k+1},\dots,b_{N-J}) = N-(m+(N-1)M)+1\\
        -\frac{N-1}{(N-2)(2N-J-2)}\bigg((2N-J-3)b_i-kv-\sum^{N-J}_{\substack{j=k+1\\j\neq i}}b_j+(N-2)\bigg(m-\frac{Jc}{N-1}\bigg)\bigg). 
    \]
    
    Note that $A$ and $B_i$ are positive and $A(v,b_{k+1},\dots,b_{N-J})\leq B_i(v,b_{k+1},\dots,b_{N-J})$ since $b_{k+1},\dots,b_{N-J}\leq M < v$.
    Consequently, 
    \begin{align*}
        \frac{\partial}{\partial v}h(b') &= -\frac{k(N-1)h(b')}{(N-2)(2N-J-2)}\bigg(\frac{2N-J-k-2}{A(b')} - \sum_{i=k+1}^{N-J}\frac{1}{B_i(b')}\bigg) \\
        &\leq -\frac{k(N-1)h(b')}{(N-2)(2N-J-2)}\bigg(\frac{2N-J-k-2}{A(b')} - \sum_{i=k+1}^{N-J}\frac{1}{A(b')}\bigg) \\
        &= -\frac{k(N-1)h(b')}{(N-2)(2N-J-2)A(b')}(2N-J-k-2-(N-J-k)) \\
        &= -\frac{k(N-1)h(b')}{(2N-J-2)A(b')} \leq 0,
    \end{align*}
    and for all $v>M$. Thus, since $h$ is decreasing with respect to $v$ for $v>M$, i.e., 
    \[
        h(v,b_{k+1},\dots,b_{N-J}) \leq  h(M,b_{k+1},\dots,b_{N-J})
    \]
    and, since $\e^{\varepsilon}-(\e^{\varepsilon}-1)\frac{m+Jc\frac{N-2}{2N-J-2}+\frac{N-1}{2N-J-2}(kv+\sum^{N-J}_{i=k+1}b_i)+\frac{(N-1)(N-J)}{2N-J-2}m}{N+1}$ is also decreasing with respect to $v$ for $v\geq M$,
    \[
        g_2(v,b_{k+1},\dots,b_{N-J}) \leq g_2(M,b_{k+1},\dots,b_{N-J})
    \]
    and 
    \begin{align*}
        \MoveEqLeft[4] f(a_1,\dots,a_{N-J}) = f(\phi^{-1}(b_1,\dots,b_{N-J})) \leq f(\phi^{-1}(M,\dots,M,b_{k+1},\dots,b_{N-J}))
    \end{align*}
    with $\phi^{-1}(M,\dots,M,b_{k+1},\dots,b_{N-J})\in Q$ (trivially, the first coordinate of $\phi(\phi^{-1}(M,\dots,M,b_{k+1},\dots,b_{N-J}))$ is smaller or equal than $M$).

    In summary, we saw that 
    \[
        \max_{a\in [m,M]^{N-J}} f(a) = \max_{a\in Q} f(a).
    \]

    Now we study its maximum over $Q$. Note that $Q$ is a closed convex polytope that is symmetric with respect to $a_1,\dots,a_{N-J}$. The edges of $Q$ are
    \begin{align*}
        (m,\overset{(N-J-1)}{\cdots},m,t) &\qquad\text{for $t\in(m,U_0)$}, \\
        (m,\overset{(N-J-2)}{\cdots},m,B_1(t),t) &\qquad\text{for $t\in(m,U_1)$}, \\
        (m,\overset{(N-J-3)}{\cdots},m,B_2(t),B_2(t),t) &\qquad\text{for $t\in(m,U_2)$}, \\
        \vdots \\
        (m,B_{N-J-2}(t),\overset{(N-J-2)}{\cdots},B_{N-J-2}(t),t) &\qquad\text{for $t\in(m,U_{N-J-2})$}, \\
        (B_{N-J-1}(t),\overset{(N-J-1)}{\cdots},B_{N-J-1}(t),t) &\qquad\text{for $t\in(m,U_{N-J-1})$},
    \end{align*}
    as well as all edges obtained by permuting the coordinates (the total number of edges is $(N-J)2^{N-J-1}$), where 
    \[
        B_k(t) = \frac{(N-1)(M+m)-(Jc+(N-J-k-1)m+t)}{N+k-2}
    \]
    and
    \[
        U_k = \frac{(N-1)(M+m)-(Jc+(N-J-k-1)m)}{N+k-1}.
    \]

    Observe that, for all $a\in Q$, we have that
    \[
        f(a) = \bigg(\e^\varepsilon-(\e^\varepsilon-1)\frac{m+Jc+S}{N+1}\bigg)\prod^{N-J}_{i=1}\bigg(N+\frac{1-a_i}{1-\frac{(N-2)(a_i-m)+Jc+S}{N}}\bigg). 
    \]

    Now, we will see that the maximum over $Q$ is achieved at one of the edges of $Q$ or at its diagonal, i.e., when $a_1=\cdots=a_{N-J}$. To simplify the calculation, we will restrict the function to the hyperplanes such that $Jc+a_1+\cdots+a_{N-J}$ is constant. We consider now $I_s\coloneqq[m,\min\{M,\frac{(N-1)(M+m)-s}{N-2}\}]$ and the function 
    \begin{align*}
        \varphi_s\colon I_s & \longrightarrow \R
         \\
        x & \longmapsto \ln\!\bigg(N+\frac{1-x}{1-\frac{(N-2)(x-m)+s}{N}}\bigg)
    \end{align*}
    for a fixed $s\in[Nm,NM]$. We study now the convexity and concavity with respect to the constant $s$. The second derivative of $\varphi_s$ verifies 
    \[
        \varphi''_s(x) = -\frac{(s - (N - 2) m - 2) (2 (N - 2) (N - 1)(x-1) + (2N - 3)(s - (N - 2)m - 2))}{((N - 2)(x-m) + s - N)^{2} ((N - 1) x + s - (N - 2) m - N - 1)^{2}}
    \]
    for all $x\in I_s$, and we study its sign over $I_s$. We first see that the following inequality holds for all $x\in I_s$:
    \begin{gather*}
        2 (N - 2) (N - 1)(x-1) + (2N - 3)(s - (N - 2)m - 2)
        < 0 \\
        \Longleftrightarrow x < 1 - \frac{(2N-3)(s-(N-2)m-2)}{2(N-2)(N-1)}. 
    \end{gather*}

    Since $x\leq M<1$, the inequality clearly holds if $s - (N - 2) m - 2\leq 0$. On the other hand, if $s - (N - 2) m - 2\geq 0$, then 
    \begin{align*}
        x &\leq \frac{(N-1)(M+m)-s}{N-2} \\
        &= \frac{(N-1)M+m+(N-2)m-s}{N-2} \\
        &< \frac{N+(N-2)m-s}{N-2} \\
        &=\frac{N-2+(N-2)m-s+2}{N-2} \\
        &= 1-\frac{s-(N-2)m-2}{N-2} \\
        &< 1-\frac{2N-3}{2(N-1)}\frac{s-(N-2)m-2}{N-2}
    \end{align*}
    where the first strict inequality uses that $m,M<1$ and the second uses that $\frac{2N-3}{2(N-1)}<1$ and $s - (N - 2) m - 2\geq 0$. Thus, 
    \begin{gather*}
        \varphi_s(x) \text{ is convex} \Longleftrightarrow \varphi_s''(x)\geq 0 \Longleftrightarrow s - (N - 2) m - 2\geq 0, \\
        \varphi_s(x) \text{ is concave} \Longleftrightarrow \varphi_s''(x)\leq 0 \Longleftrightarrow s - (N - 2) m - 2\leq 0.
    \end{gather*} 

    Therefore, if $s - (N - 2) m - 2\leq 0$, then $\varphi_s$ is concave and by Jensen's inequality, 
    \[
        \frac{1}{N-J}\sum^{N-J}_{i=1} \varphi_s(a_i) \leq \varphi_s\bigg(\frac{1}{N-J}\sum^{N-J}_{i=1} a_i\bigg) \Longleftrightarrow \prod^{N-J}_{i=1}\bigg(N+\frac{1-a_i}{1-\frac{(N-2)(a_i-m)+s}{N}}\bigg) \leq \bigg(N+\frac{1-\frac{1}{N-J}\sum^{N-J}_{i=1} a_i}{1-\frac{(N-2)(\frac{1}{N-J}\sum^{N-J}_{i=1} a_i-m)+s}{N}}\bigg)^{N-J},
    \]
    and therefore for all $a\in Q$ such that $Jc+a_1+\dots+a_{N-J}=s$,
    \[
        f(a_1,\dots,a_{N-J}) \leq f\bigg(\frac{1}{N-J}\sum^{N-J}_{i=1} a_i,\dots,\frac{1}{N-J}\sum^{N-J}_{i=1} a_i\bigg),
    \]
    i.e., the maximum is achieved at the diagonal.

    On the other hand, if $s - (N - 2) m - 2\geq 0$, then $\varphi_s$ is convex and by Jensen's inequality, 
    \[
        \frac{1}{N-J}\sum^{N-J}_{i=1} \varphi_s(a_i) \geq \varphi_s\bigg(\frac{1}{N-J}\sum^{N-J}_{i=1} a_i\bigg).    
    \]

    Consequently, $g\colon I_s^{N-J}\to\R$ defined as $g(x_1,\dots,x_{N-J})=\sum^{N-J}_{i=1}\varphi_s(x_i)$ for all $(x_1,\dots,x_{N-J})\in I_s^{N-J}$ is also convex. Therefore, for all $a\in Q$ such that $Jc+a_1+\dots+a_{N-J}=s$ for a constant $s$, 
    \[
        f(a) = \bigg(\e^\varepsilon-(\e^\varepsilon-1)\frac{m+s}{N+1}\bigg)(\exp((N-J)g(a))),
    \]
    and since $x\mapsto(\e^\varepsilon-(\e^\varepsilon-1)\frac{m+s}{N+1})\exp((N-J)x)$ is a non-decreasing convex function, $f$ is convex over $\{a\in Q\mid Jc+a_1+\dots+a_{N-J}=s\}$. Consequently, the maximum is achieved at the vertices of the domain $\{a\in Q\mid Jc+a_1+\dots+a_{N-J}=s\}$ (since it is a polytope), which corresponds to an element in the edges of $Q$.  

    In summary, the maximum is achieved in either the edges or the diagonal of $Q$. The restriction of $f$ to the diagonal is $f_{\mathrm{diag}}\colon[m,U_{N-J-1}]\to\R$ with 
    \[
        f_{\mathrm{diag}}(t) = \bigg(\e^\varepsilon-(\e^\varepsilon-1)\frac{m+Jc+(N-J)t}{N+1}\bigg)\bigg(N+\frac{1-t}{1-\frac{(2N-J-2)t-(N-2)m+Jc}{N}}\bigg)^{N-J}    
    \]
    for all $t\in[m,U_{N-J-1}]$, and the restriction to edge $k$ is $f_k\colon[m,U_k]\to\R$ with
    \[
        f_k(t) = \bigg(\e^\varepsilon-(\e^\varepsilon-1)\frac{m+S_k(t)}{N+1}\bigg)\bigg(N+\frac{1-t}{1-\frac{(N-2)(t-m)+S_k(t)}{N}}\bigg)
        \bigg(N+\frac{1-m}{1-\frac{S_k(t)}{N}}\bigg)^{N-J-k-1}\bigg(N+\frac{1-B_k(t)}{1-\frac{m+(N-1)M}{N}}\bigg)^{k}
    \]  
    with $S_k(t)=Jc+t+(N-J-k-1)m+kB_k(t)$ for all $t\in[m,U_k]$. Thus,
    \[
        f(a_1,\dots,a_{N-J}) \leq 
        \max\!\bigg\{\max_{t\in[m,U_{N-J-1}]} f_{\mathrm{diag}}(t),\max_{k\in\{0,1,\dots,N-J\}}\max_{t\in[m,U_{k}]} f_k(t)\bigg\}.\qedhere 
    \]
\end{proof}

We believe the maximum of the previous expression is achieved when $t=m$ and $k=0$ or when $t=U_{N-J-1}$ and $k=N-J-1$ (corresponds to the case where $a_1=\dots=a_{N-J}\in\{m,\frac{(N-1)(M+m)-Jc}{2N-J-2}\}$). We are not able to provide proof of this fact, and it is not necessary for the overall proof, but it is supported by the empirical evaluation we perform. 

\begin{theorem}\label{th:bounddependingonp}
    Let $\varepsilon\geq0$, and let $m,M\in(0,1)$ such that $m\leq M$ as listed in \Cref{remark:computation}. 

    For all $N\in\N$ and $J\subseteq[N]$, we define $a=(a_1,\dots,a_n)\in[m,M]^N$ and $z=(z_1,\dots,z_n)\in P_a$, where $P_a$ is the polytope in \Cref{prop:PolytopeIsConvex}, and we consider
    \[
        f_{J,N}(a;z) = \bigg(\e^\varepsilon-(\e^\varepsilon-1)\frac{m+\sum^N_{i=1}a_i}{N+1}\bigg)\prod_{i\in J}\bigg(\frac{N+\frac{a_i}{z_i}}{N+1}\bigg)\prod_{i\in [N]\backslash J}\bigg(\frac{N+\frac{1-a_i}{1-z_i}}{N+1}\bigg).
    \]

    Then, 
    \[          
        \sup_{N\in\N} \max_{J\subseteq[N]} \max_{a\in[m,M]^N}\max_{z\in P_a} f_{J,N}(a;z) \leq \max_{p\in[0,1]}\max\{L_1(p),L_2(p)\}
    \]
    with 
    \[
        L_1(p) = (\e^\varepsilon-(\e^\varepsilon-1)(pM+(1-p)m))\e^{p\frac{M}{m}+(1-p)\frac{1-m}{1-(pM+(1-p)m)}-1}
    \]
    and
    \[
        L_2(p) = \bigg(\e^\varepsilon-(\e^\varepsilon-1)\bigg(pM+(1-p)\frac{(M+m)-pM}{2-p}\bigg)\bigg)\e^{p\frac{M}{m}+(1-p)\frac{1-\frac{(M+m)-pM}{2-p}}{1-M}-1}.
    \]
\end{theorem}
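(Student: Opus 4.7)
The plan is to first reduce the interior maximization in $z$ to a finite set of polytope vertices via convexity, then reduce the outer maximization in $a$ using a second convexity/Jensen argument, and finally pass to the limit $N\to\infty$ via a fraction parameter $p$.

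First I would handle the inner variable $z$. For fixed $N$, $J$, and $a$, the prefactor $\e^\varepsilon-(\e^\varepsilon-1)\frac{m+\sum_i a_i}{N+1}$ is independent of $z$, and after the change of variables $z_i\mapsto 1-z_i$ on the indices $i\in[N]\backslash J$, the $z$-dependent part of $f_{J,N}(a;z)$ matches exactly the product form of \Cref{lemma:convex-large} and is therefore convex in $z$ on $[m,M]^N$. Restricted to the compact convex polytope $P_a$, the maximum must be attained at one of the vertices $z^{(K)}$, $K\subseteq[N]$, characterized in \Cref{prop:PolytopeIsConvex}. At such a vertex, $z_i=B^{(K)}$ for $i\in K$ and $z_i=M_i$ for $i\notin K$; substituting and partitioning $[N]$ into the four blocks $J\cap K$, $J\backslash K$, $K\backslash J$, $[N]\backslash(J\cup K)$ yields a function of $a$ alone that is symmetric within each block. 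Because $M_i$ is a minimum of two affine functions of $a$, the same Jensen-type averaging argument used inside the proof of \Cref{prop:formerproblem} lets us replace $M_i$ by its worst-case branch without decreasing the value.

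Next I would handle the outer variable $a$. By block symmetry the $a_i$'s may be taken constant within each block, and then applying \Cref{prop:formerproblem} with the $K$-block playing the role of the fixed $c$-coordinates and the non-$K$ block playing the role of the free $(a_1,\ldots,a_{N-J})$-coordinates reduces the remaining optimization either to the diagonal configuration or to a single-edge configuration of the feasible simplex. Two surviving configurations dominate in the large-$N$ limit: one in which the free coordinates coincide at a common inlier value and one in which they coincide at the common outlier value determined by $B^{(K)}=M$; these will correspond respectively to $L_1$ and $L_2$. Finally, writing $p=|K|/N$ and sending $N\to\infty$ with $p\in[0,1]$ fixed, every factor of the form $\frac{N+X}{N+1}$ with $X$ bounded tends to $\e^{X-1}$ via $(1+\tfrac{t}{N+1})^N\to\e^t$, while the prefactor stabilizes to $\e^\varepsilon-(\e^\varepsilon-1)\mu(p)$ for a $p$-weighted mean $\mu(p)$ of $m$ and $M$. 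Summing the exponents across blocks yields exactly the two closed forms $L_1(p)$ and $L_2(p)$ in the statement, and taking the supremum over $p\in[0,1]$ concludes the bound.

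The hardest part will be the combinatorial bookkeeping in the middle step: among the many configurations produced by crossing the four index blocks with the two branches of each $M_i$, one must verify that exactly two extremal configurations dominate in the $N\to\infty$ limit and that all mixed configurations (partial averaging, mixed $\min$-branches) are dominated. This is presumably why the authors supplement the analytical reduction with the computer-assisted numerical verification described in \Cref{remark:computation}: a fully closed-form dominance proof for every $(\varepsilon,m,M)$ appears out of reach, so the final dominance check is discharged numerically over the announced finite grid of parameter values.
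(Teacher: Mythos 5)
Your overall architecture matches the paper's: convexity in $z$ (via \Cref{lemma:convex-large}) to push the inner maximum to the vertices of $P_a$, a reduction of the $a$-optimization feeding into \Cref{prop:formerproblem}, a passage to the limit $N\to\infty$ through a fraction parameter using $(1+\tfrac{t}{N+1})^N\to\e^t$, and a final maximization that is discharged numerically over the grid of \Cref{remark:computation} (you are right that the closed-form dominance is not proved analytically; the paper optimizes the reduced five-parameter expression numerically). However, the middle of your plan has two genuine gaps. First, you keep the vertex set $K$ and propose to handle the four blocks $J\cap K$, $J\setminus K$, $K\setminus J$, $[N]\setminus(J\cup K)$ together with the two branches of each $M_i$ by a vaguely specified ``Jensen-type averaging'' — this is precisely the combinatorial explosion the paper avoids. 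The missing idea is \Cref{lemma:Polytopeconvexinequality}: since $m\le B^{(K)}\le M_i$ for $i\in K$, monotonicity of each factor in $z_i$ lets one bound the vertex value by taking $z_i=m$ for every $i\in J$ and $z_i=M_i$ for every $i\in[N]\setminus J$, which eliminates $K$ entirely before any averaging is attempted. Without that step, ``replace $M_i$ by its worst-case branch'' is not a well-defined argument, because $B^{(K)}$ itself is a function of the $M_l$ with $l\notin K$ and the branches of different coordinates are coupled through $S=\sum_i a_i$.

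Second, ``by block symmetry the $a_i$'s may be taken constant within each block'' is not a valid inference: a symmetric function need not attain its maximum on the diagonal. The paper instead uses concavity of $x\mapsto\ln(N+\tfrac{x}{z})$ (Jensen) to equalize the $a_i$ only on $J$, and then the whole content of \Cref{prop:formerproblem} (the affine change of variables, the reduction from $[m,M]^{N-|J|}$ to the region $Q$, and the $s$-dependent convex/concave dichotomy) is needed for the complementary block, whose maximizers are the diagonal or edge configurations with up to three distinct coordinate values ($m$, $B_k(t)$, $t$), not constant blocks. Relatedly, your identification $p=|K|/N$ is misaligned: in the paper the parameter appearing in $L_1,L_2$ is $p=|J|/N$, and the dichotomy between $L_1$ and $L_2$ arises from the edge parameter $p_k\in\{0,1\}$ of \Cref{prop:formerproblem} (and $c=M$), not from the vertex set $K$ of $P_a$. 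You also omit the degenerate cases $N=1,2$, where $P_a$ collapses and the paper bounds the expression separately before the numerical step.
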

\begin{proof}
    We consider $f_{J,N}(a;z)$ as in the statement and first consider its maximum over $z\in P_a$. By \Cref{lemma:convex-large},
    \[
        \prod^J_{i=1}\bigg(N+\frac{a_i}{z_i}\bigg)\prod^N_{i=J+1}\bigg(N+\frac{1-a_i}{1-z_i}\bigg)    
    \]
    is convex with respect to $z_1,\dots,z_N$ in $[m,M]^N$, and therefore, $f_{J,N}(a;z)$ is also convex with respect to $z_1,\dots,z_N$. Since $P_a\subseteq [m,M]^N$ is a convex polytope (\Cref{prop:PolytopeIsConvex}), $\max_{z\in P_a} f_{J,N}(a;z)$ is achieved when $z$ equals one of the vertices of $P_a$. We assume that $N>2$ and leave the cases $N=1$ and $N=2$ for later in the proof. Therefore,   
    \begin{multline*}
        \max_{z\in P_a} f_{J,N}(a;z) = \bigg(\e^\varepsilon-(\e^\varepsilon-1)\frac{m+\sum^N_{i=1}a_i}{N+1}\bigg)\frac{1}{(N+1)^N}\\
        \cdot\max_{K\subseteq[N]}\bigg[\prod_{\substack{i\in J\cap K}}\bigg(N+\frac{a_i}{B^{(K)}}\bigg)\prod_{\substack{i\in J^\complement\cap K}}\bigg(N+\frac{1-a_i}{1-B^{(K)}}\bigg)\prod_{\substack{i\in J\cap K^\complement}}\bigg(N+\frac{a_i}{M_i}\bigg)\prod_{\substack{i\in J^\complement\cap K^\complement}}\bigg(N+\frac{1-a_i}{1-M_i}\bigg)\bigg]
    \end{multline*}
    with $J^\complement\coloneqq [N]\backslash J$ and $K^\complement\coloneqq [N]\backslash K$. By \Cref{lemma:Polytopeconvexinequality}, 
    $m\leq B^{(K)}\leq M_i$ for all $i\in K$. 
    Therefore, 
    \begin{align*}
        \max_{z\in P_a} f_{J,N}(a;z)
        &\leq \bigg(\e^\varepsilon-(\e^\varepsilon-1)\frac{m+\sum^N_{i=1}a_i}{N+1}\bigg)\frac{1}{(N+1)^N}\cdot\underbrace{\prod_{\substack{i\in J}}\bigg(N+\frac{a_i}{m}\bigg)\prod_{\substack{i\in J^\complement}}\bigg(N+\frac{1-a_i}{1-M_i}\bigg)}_{\eqqcolon G_{J,N}(a)}. 
    \end{align*}

    We now try to maximize $G_{J,N}(a)$ with respect to $a\in[m,M]^N$. We consider now the following function 
    \begin{align*}
        \phi_z\colon [m,M] & \longrightarrow \R
         \\
        x & \longmapsto \ln\!\bigg(N+\frac{x}{z}\bigg)
    \end{align*}
    for constant $z\in[m,M]$. We can verify that $\phi_z$ is concave (its second derivative is $\phi_z''(x)=-\frac{1}{z^2(N+\frac{x}{z})^2}$). Therefore, by Jensen's inequality,
    \[
        \frac{1}{n}\sum^n_{i=1} \phi_z(x_i) \leq \phi_z\bigg(\frac{1}{n}\sum^n_{i=1} x_i\bigg) \Longleftrightarrow \prod^n_{i=1}\bigg(N+\frac{x_i}{z}\bigg) \leq \bigg(N+\frac{\frac{1}{n}\sum^n_{i=1} x_i}{z}\bigg)^n
    \]
    for all $x_1,\dots,x_n\in[m,M]$. 
    Therefore,
    \begin{align*}
        G_{J,N}(a) &\leq \bigg(\e^\varepsilon-(\e^\varepsilon-1)\frac{m+\sum^N_{i=1}a_i}{N+1}\bigg)\frac{1}{(N+1)^N}\bigg(N+\frac{\frac{1}{|J|}\sum_{i\in J}a_i}{m}\bigg)^{|J|}\prod_{\substack{i\in J^\complement}}\bigg(N+\frac{1-a_i}{1-M_i}\bigg). 
    \end{align*}

    Furthermore, we observe that the previous value equals $G_{J,N}(a^*)$ with $a^*=(a^*_1,\dots,a^*_N)\in[m,M]^N$ such that
    \[
        a^*_i= \begin{cases}
            \frac{1}{|J|}\sum_{i\in J} a_i &\text{if $i\in J$,} \\
            a_i & \text{otherwise.}
        \end{cases}
    \]

    Thus, to find the maximum $\max_{a\in[m,M]^N} G_{J,N}(a)$, it is therefore sufficient to see the maximum of $G_{J,N}(a)$ for all $a=(a_1,\dots,a_N)$ such that $c\coloneqq a_i$ for all $i\in J$. 
    
    We now try to maximize $G_{J,N}(a)$ with respect to $a_i$ for $i\notin J$. If $J=[N]$, then $a=(c,\dots,c)$, and we obtain that
    \[
        H_{[N],N,c}\coloneqq G_{[N],N}(c,\dots,c) = \bigg(\e^\varepsilon-(\e^\varepsilon-1)\frac{m+Nc}{N+1}\bigg)\bigg(\frac{N+\frac{c}{m}}{N+1}\bigg)^{N}.
    \]
    
    By \Cref{prop:formerproblem}, for all $J\neq[N]$,
    \[
        G_{J,N}(a) \leq H_{J,N,c}\coloneqq \max\!\bigg\{\max_{t\in[m,U_{N-|J|-1}]} g(t),\max_{k\in\{0,1,\dots,N-|J|\}}\max_{t\in[m,U_{k}]} g_k(t)\bigg\} 
    \]
    where
    \[
        g(t) = \bigg(\e^\varepsilon-(\e^\varepsilon-1)\frac{m+|J|c+(N-|J|)t}{N+1}\bigg)\bigg(\frac{N+\frac{c}{m}}{N+1}\bigg)^{|J|}\left(\frac{N+\frac{1-t}{1-\frac{(2N-|J|-2)t-(N-2)m+|J|c}{N}}}{N+1}\right)^{N-|J|}    
    \]
    for all $t\in[m,U_{N-|J|-1}]$;
    \[
        g_k(t) = \bigg(\e^\varepsilon-(\e^\varepsilon-1)\frac{m+S_k(t)}{N+1}\bigg)\left(\frac{N+\frac{1-t}{1-\frac{(N-2)(t-m)+S_k(t)}{N}}}{N+1}\right)\bigg(\frac{N+\frac{c}{m}}{N+1}\bigg)^{|J|}\left(\frac{N+\frac{1-m}{1-\frac{S_k(t)}{N}}}{N+1}\right)^{N-|J|-k-1}\left(\frac{N+\frac{1-B_k(t)}{1-\frac{m+(N-1)M}{N}}}{N+1}\right)^{k}
    \] 
    with $S_k(t)=|J|c+t+(N-|J|-k-1)m+kB_k(t)$ and
    \[
        B_k(t) = \frac{(N-1)(M+m)-(|J|c+(N-|J|-k-1)m+t)}{N+k-2}
    \]
    for all $t\in[m,U_k]$; and
    \[
        U_k = \frac{(N-1)(M+m)-(|J|c+(N-|J|-k-1)m)}{N+k-1}.
    \]

    Still for $J\neq[N]$, we now substitute the variable $|J|$ in $H_{J,N}$ and the previous expressions for a new variable $p_J=\frac{|J|}{N}\in[0,1-\frac{1}{N}]$, and $k$ for another variable $p_k=\frac{k}{N-|J|-1}=\frac{k}{N(1-p_J)-1}\in[0,1]$ (if $p_J=1-\frac{1}{N}$, we define $p_k=0$). This new expression
    \[
        \overline{H}_{p_J,N,c} \coloneqq \max\!\bigg\{\max_{t\in[m,U_{N(1-p_J)-1}]} g(t),\max_{p_k\in[0,1]}\max_{t\in[m,U_{p_k(N(1-p_J)-1)}]} g_{p_k(N(1-p_J)-1)}(t)\bigg\}  
    \]
    verifies that $H_{J,N,c}\leq\overline{H}_{p_J,N,c}$ for $p_J=\frac{|J|}{N}$, and so
    \[
        \max_{J\subsetneq[N]} \max_{a\in[m,M]^N} \max_{z\in P_a} f_{J,N}(a;z) \leq \max_{J\subsetneq[N]} \max_{c\in[m,M]} H_{J,N,c} \leq \max_{p_J\in[0,1-\frac{1}{N}]} \max_{c\in[m,M]} \overline{H}_{p_J,N,c}
    \]
    for all $N>2$. Now we will consider the supremum over $N\in\N$. First, we quickly study the cases $N=1$ and $N=2$. 
    
    For the case $N=1$, we have that $P_a=\{m\}$ and so 
    \begin{align*}
        \max_{J\subseteq[1]} \max_{a\in[m,M]}\max_{z\in \{m\}} f_{J,1}(a;z)= \max\!\bigg\{\max_{a\in[m,M]} \bigg(\e^\varepsilon-(\e^\varepsilon-1)\frac{m+a}{2}\bigg)\bigg(\frac{1+\frac{a}{m}}{2}\bigg), \max_{a\in[m,M]} \bigg(\e^\varepsilon-(\e^\varepsilon-1)\frac{m+a}{2}\bigg)\bigg(\frac{1+\frac{1-a}{1-m}}{2}\bigg)\bigg\}.
    \end{align*}

    The terms inside the maximum are polynomials of degree 2 (or 1 if $\varepsilon=0$) with respect to $a$, meaning we can easily compute them. When $\varepsilon=0$, the maximum corresponds to $\frac{1+\frac{M}{m}}{2}$, and for $\varepsilon\neq 0$, the maximum for the first term is achieved when 
    \[
        a = \min\!\bigg\{M,\max\!\bigg\{m,\frac{\e^{\varepsilon}}{\e^{\varepsilon}-1}-m\bigg\}\bigg\}
    \]
    and the maximum for the second term is achieved when $a=m$ or $a=M$. 

    For the case $N=2$, recall that $P_a=\{(t,t)\mid m\leq t \leq M'\}$ with $M'\coloneqq\frac{1}{2}\min\{M+m,a_1+a_2\}$. Thus, 
    \begin{align*}
        \max_{J\subseteq[2]} \max_{a\in[m,M]^2}\max_{z\in P_a} f_{J,2}(a;z) &= \max_{J\subseteq[2]} \max_{a\in[m,M]^2}\max_{t\in [m,M']} f_{J,N}(a;(t,t)) \\ 
        &= \max\!\bigg\{\max_{a\in[m,M]^2}\max_{t\in [m,M']} \bigg(\e^\varepsilon-(\e^\varepsilon-1)\frac{m+a_1+a_2}{3}\bigg)\bigg(\frac{2+\frac{a_1}{t}}{3}\bigg)\bigg(\frac{2+\frac{a_2}{t}}{3}\bigg), \\
        &\qquad\qquad
        \max_{a\in[m,M]^2}\max_{t\in [m,M']}  \bigg(\e^\varepsilon-(\e^\varepsilon-1)\frac{m+a_1+a_2}{3}\bigg)\bigg(\frac{2+\frac{a_1}{t}}{3}\bigg)\bigg(\frac{2+\frac{1-a_2}{1-t}}{3}\bigg), \\
        &\qquad\qquad
        \max_{a\in[m,M]^2}\max_{t\in [m,M']} \bigg(\e^\varepsilon-(\e^\varepsilon-1)\frac{m+a_1+a_2}{3}\bigg)\bigg(\frac{2+\frac{1-a_1}{1-t}}{3}\bigg)\bigg(\frac{2+\frac{1-a_2}{1-t}}{3}\bigg)\bigg\} \\
        &= \max\!\bigg\{\max_{a\in[m,M]^2} \bigg(\e^\varepsilon-(\e^\varepsilon-1)\frac{m+a_1+a_2}{3}\bigg)\bigg(\frac{2+\frac{a_1}{m}}{3}\bigg)\bigg(\frac{2+\frac{a_2}{m}}{3}\bigg), \\
        &\qquad\qquad
        \max_{a\in[m,M]^2}\max_{t\in [m,M']}  \bigg(\e^\varepsilon-(\e^\varepsilon-1)\frac{m+a_1+a_2}{3}\bigg)\bigg(\frac{2+\frac{a_1}{t}}{3}\bigg)\bigg(\frac{2+\frac{1-a_2}{1-t}}{3}\bigg), \\
        &\qquad\qquad
        \max_{a\in[m,M]^2} \bigg(\e^\varepsilon-(\e^\varepsilon-1)\frac{m+a_1+a_2}{3}\bigg)\bigg(\frac{2+\frac{1-a_1}{1-M'}}{3}\bigg)\bigg(\frac{2+\frac{1-a_2}{1-M'}}{3}\bigg)\bigg\}.
    \end{align*}

    We study now the individual terms in the outermost maximum, starting with
    \[
        \max_{a\in[m,M]^2} \bigg(\e^\varepsilon-(\e^\varepsilon-1)\frac{m+a_1+a_2}{3}\bigg)\bigg(\frac{2+\frac{a_1}{m}}{3}\bigg)\bigg(\frac{2+\frac{a_2}{m}}{3}\bigg).
    \]

    Using $\phi_z$ just as we did earlier in the proof, we can conclude that  
    \[
        \bigg(\e^\varepsilon-(\e^\varepsilon-1)\frac{m+a_1+a_2}{3}\bigg)\prod^2_{i=1}\bigg(\frac{2+\frac{a_i}{m}}{3}\bigg)\leq\bigg(\e^\varepsilon-(\e^\varepsilon-1)\frac{m+a_1+a_2}{3}\bigg)\bigg(\frac{2+\frac{\frac{a_1+a_2}{2}}{m}}{3}\bigg)^2,
    \]
    and therefore
    \[
        \max_{a\in[m,M]^2} \bigg(\e^\varepsilon-(\e^\varepsilon-1)\frac{m+a_1+a_2}{3}\bigg)\prod^2_{i=1}\bigg(\frac{2+\frac{a_i}{m}}{3}\bigg) = \max_{b\in[m,M]}\bigg(\e^\varepsilon-(\e^\varepsilon-1)\frac{m+2b}{3}\bigg)\bigg(\frac{2+\frac{b}{m}}{3}\bigg)^2,
    \]
    where the left term is a polynomial of degree $2$ (or degree $1$ if $\varepsilon=0$) with respect to $b$. The maximum is thus $(\frac{2+\frac{M}{m}}{3})^2$ if $\varepsilon=0$, and the maximum is achieved for $\varepsilon\neq 0$ when 
    \[
        b = \min\!\bigg\{M,\max\!\bigg\{m,\frac{\e^{\varepsilon}}{\e^{\varepsilon}-1}-m\bigg\}\bigg\}.
    \]

    Now we look at the third maximum,
    \[
        \max_{a\in[m,M]^2} \bigg(\e^\varepsilon-(\e^\varepsilon-1)\frac{m+a_1+a_2}{3}\bigg)\prod^2_{i=1}\bigg(\frac{2+\frac{1-a_i}{1-\frac{\min\{M+m,a_1+a_2\}}{2}}}{3}\bigg).
    \]
    
    In this case, we consider the function $\phi_z$ such that $\phi_z(x)=\ln(1+\frac{1-x}{1-z})$ for all $x\in[m,M]$, which is also concave over its domain. By the same reasoning, 
    \[
        \max_{a\in[m,M]^2} \bigg(\e^\varepsilon-(\e^\varepsilon-1)\frac{m+a_1+a_2}{3}\bigg)\prod^2_{i=1}\bigg(\frac{2+\frac{1-a_i}{1-\frac{\min\{M+m,a_1+a_2\}}{2}}}{3}\bigg) = \max_{b\in[m,M]} \bigg(\e^\varepsilon-(\e^\varepsilon-1)\frac{m+2b}{3}\bigg)\bigg(\frac{2+\frac{1-b}{1-\frac{\min\{M+m,2b\}}{2}}}{3}\bigg)^2,
    \]
    where the second term is a non-increasing affine function when $M+m\geq 2b$ and a polynomial of degree 2 (or 1 if $\varepsilon=0$) with respect to $b$ when $M+m<2b$. The maximum is thus $1$ if $\varepsilon=0$, and the maximum is achieved either when $b=m$ or $b=M$ for $\varepsilon\neq0$.

    Finally, we study 
    \[
        \max_{a\in[m,M]^2}\max_{t\in [m,M']}  \bigg(\e^\varepsilon-(\e^\varepsilon-1)\frac{m+a_1+a_2}{3}\bigg)\bigg(\frac{2+\frac{a_1}{t}}{3}\bigg)\bigg(\frac{2+\frac{1-a_2}{1-t}}{3}\bigg).
    \]

    Since the previous expression is convex with respect to $t$ (\Cref{lemma:convex-large}), 
    \begin{align*}
        \MoveEqLeft[3]\max_{a\in[m,M]^2}\max_{t\in [m,M']}  \bigg(\e^\varepsilon-(\e^\varepsilon-1)\frac{m+a_1+a_2}{3}\bigg)\bigg(\frac{2+\frac{a_1}{t}}{3}\bigg)\bigg(\frac{2+\frac{1-a_2}{1-t}}{3}\bigg) \\
        &= \max_{a\in[m,M]^2}\max\!\bigg\{  \bigg(\e^\varepsilon-(\e^\varepsilon-1)\frac{m+a_1+a_2}{3}\bigg)\bigg(\frac{2+\frac{a_1}{m}}{3}\bigg)\bigg(\frac{2+\frac{1-a_2}{1-m}}{3}\bigg), \\
        &\qquad\qquad\qquad  \bigg(\e^\varepsilon-(\e^\varepsilon-1)\frac{m+a_1+a_2}{3}\bigg)\bigg(\frac{2+\frac{a_1}{\frac{\min\{M+m,a_1+a_2\}}{2}}}{3}\bigg)\bigg(\frac{2+\frac{1-a_2}{1-\frac{\min\{M+m,a_1+a_2\}}{2}}}{3}\bigg)\bigg\}.
    \end{align*}

    Now, we can easily see that the last expression is non-increasing with respect to $a_2$ (every individual term is non-increasing). Therefore, the maximum is achieved when $a_2=m$, 
    \[
        \max_{a_1\in[m,M]}\max\!\bigg\{  \bigg(\e^\varepsilon-(\e^\varepsilon-1)\frac{2m+a_1}{3}\bigg)\bigg(\frac{2+\frac{a_1}{m}}{3}\bigg),
        \bigg(\e^\varepsilon-(\e^\varepsilon-1)\frac{2m+a_1}{3}\bigg)\bigg(\frac{2+\frac{a_1}{\frac{m+a_1}{2}}}{3}\bigg)\bigg(\frac{2+\frac{1-m}{1-\frac{m+a_1}{2}}}{3}\bigg)\bigg\}.
    \]
    
    We do not aim to compute this maximum, as we will see it is always bounded by a non-degenerate case. This concludes the bounds for the cases $N=1$ and $N=2$. We use $H_{\{1,2\}}$ to denote the maximum of these degenerate cases. 

    Performing the supremum over $N\in\N$, we obtain that 
    \[
        \sup_{N\in\N}\max_{J\subseteq[N]} \max_{a\in[m,M]^N} \max_{z\in P_a} f_{J,N}(a;z)
    \]
    is bounded by 
    \begin{equation}
         \max\!\bigg\{H_{\{1,2\}},\sup_{\substack{N\in\N\\N>2}}\max_{p_J\in[0,1-\frac{1}{N}]} \max_{c\in[m,M]} \overline{H}_{p_J,N,c},\sup_{\substack{N\in\N\\N>2}}\max_{c\in[m,M]}H_{[N],N,c}\bigg\}. \label{eq:functionToOptimize}
    \end{equation}

    We perform the calculation of Formula~\ref{eq:functionToOptimize} empirically for a set of reasonable parameters for $\varepsilon$, $m$, and $M$ (see \Cref{remark:computation} for more details). The maximum for all cases is achieved in the left term when $N\to\infty$ (the function always converges), $c=M$, and $p_k$ is either $0$ or $1$. The maximum is independent of $t$, which disappears from the expression when $N\to\infty$. Different values of $p_J$ can achieve the maximum. In conclusion, the maximum is
    \[
        \max_{p_J\in[0,1]}\max\{L_1(p_J),L_2(p_J)\}
    \]
    with 
    \[
        L_1(p) = (\e^\varepsilon-(\e^\varepsilon-1)(pM+(1-p)m))\e^{p\frac{M}{m}+(1-p)\frac{1-m}{1-(pM+(1-p)m)}-1}
    \]
    corresponding to the case when $p_k=0$, and
    \[
        L_2(p) = \bigg(\e^\varepsilon-(\e^\varepsilon-1)\bigg(pM+(1-p)\frac{(M+m)-pM}{2-p}\bigg)\bigg)\e^{p\frac{M}{m}+(1-p)\frac{1-\frac{(M+m)-pM}{2-p}}{1-M}-1}.
    \]
    corresponding to the case when $p_k=1$.\qedhere
\end{proof}

\begin{remark}[Numerical computation]\label{remark:computation}
    We resort to a numerical computation in the last step of the calculation of the privacy parameter $\varepsilon^\S$ of \Cref{th:outlierscoresuppression} that only depends on $\varepsilon$, $m$, and $M$. More precisely, we require computational power to compute value at Formula~\ref{eq:functionToOptimize} at the end of \Cref{th:bounddependingonp}, which consists in finding the supremum of an expression over five parameters $N\in\N$ (with $N>2$), $p_J\in[0,1]$, $p_k\in[0,1]$, $c\in[m,M]$ and $t\in[m,U_k]\subseteq[m,M]$, where $U_k\in[m,M]$ is a value that depends on $k$.

    We choose differential evolution as the optimization strategy, implemented in Python through the \texttt{differential\_evolution} function of the \texttt{script.optimize} package\footnote{\url{https://docs.scipy.org/doc/scipy/reference/generated/scipy.optimize.differential\_evolution.html}.}. The script checks for specific $\varepsilon$, $m$ and $M$ whether the numerical maximum of the function in \Cref{eq:functionToOptimize} over parameters $N$, $p_J$, $p_k$, $c$ and $t$ corresponds with our hypothesized value, the value found in \Cref{prop:expressionofp}. We compare the logarithm of each value since it helps with round-off errors for larger values, and we expand the function to be $0$ when $t>U_k$ to avoid using a domain that depends on another parameter ($[m,M]$ instead of $[m,U_k]$). In addition, since \texttt{differential\_evolution} requires us to specify finite domains for the parameters, we select the maximum value for $N$ to be $10^9$, which is deemed enough since the convergence when $N\to\infty$ is very quick.
    
    Since the domains of $\varepsilon\in[0,\infty)$, $m\in(0,1)$, and $M\in[m,1)$ are continuous and the domain of $\varepsilon$ is furthermore unbounded, we cannot run exhaustively for all values of these parameters. For this work, we decide to run the experiment for every reasonable value with decent level of granularity: $m$ and $M$ are run for every value between $0.01$ and $0.99$ with step $0.01$, and $\varepsilon$ is run for every value between $0$ and $2$ with step $0.01$, every value between $2$ and $10$ with step $0.1$ and every value between $10$ and $100$ with step $1$. We believe that our choice of values for $\varepsilon$ is representative of the values of $\varepsilon$ deemed to be acceptable in the literature (small values smaller than $2$ or $10$~\cite{dwork2014algorithmic}). 

    Our experimentation concludes that for the selected parameters $\varepsilon$, $m$, and $M$, the numerical and hypothesized values are less than $2\cdot 10^{-7}$ in difference and that the hypothesized value is always larger than the numerical one to account for the real maximum being when $N\to\infty$.

    We conclude then that the hypothesized value is the correct bound up to an error of $2\cdot10^{-7}$. In addition, since the evaluated functions are continuous and smooth with respect to $\varepsilon$, $m$, and $M$, we conjecture it to be the real bound for all $\varepsilon\geq0$ and $m,M\in(0,1)$ such that $m<M$.  
\end{remark}

\begin{theorem}\label{th:boundinverse}
    Let $\varepsilon\geq0$, and let $m,M\in(0,1)$ such that $m\leq M$ as listed in \Cref{remark:computationInverse}. 

    For all $N\in\N$ with $N>1$ and $J\subseteq[N]$, we define $a=(a_1,\dots,a_n)\in[m,M]^N$ and $z=(z_1,\dots,z_n)\in [m,M]^N$, and we consider
    \[
        g_{J,N}(a;z) = \bigg(\e^{-\varepsilon}+(1-\e^{-\varepsilon})\frac{m+\sum^N_{i=1}a_i}{N+1}\bigg)^{-1}\prod_{i\in J}\bigg(\frac{N+1}{N+\frac{a_i}{z_i}}\bigg)\prod_{i\in [N]\backslash J}\bigg(\frac{N+1}{N+\frac{1-a_i}{1-z_i}}\bigg).
    \]

    Then, 
    \[
        \sup_{N\in\N}\max_{J\subseteq[N]}\max_{a\in[m,M]^N}\max_{z\in [m,M]^N} g_{J,N}(a;z)
        =\max\!\big\{(\e^{-\varepsilon}+(1-\e^{-\varepsilon})M)^{-1}\e^{1-\frac{1-M}{1-m}},(\e^{-\varepsilon}+(1-\e^{-\varepsilon})m)^{-1}\e^{1-\frac{m}{M}}\big\}.
    \]
\end{theorem}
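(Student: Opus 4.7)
\textbf{Proof plan for \Cref{th:boundinverse}.} The plan is to first collapse the maximization over $z \in [m,M]^N$ by monotonicity, then collapse the maximization over $a$ using a combination of monotonicity (for $i \in J$) and convexity (for $i \notin J$), and finally identify the supremum over $N$ via a numerical verification along the lines of \Cref{remark:computation}.

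First I would observe that each factor in the product depends only on a single pair $(a_i, z_i)$, and moreover the map $z_i \mapsto \frac{N+1}{N + a_i/z_i}$ is increasing while $z_i \mapsto \frac{N+1}{N + (1-a_i)/(1-z_i)}$ is decreasing. Since the first factor of $g_{J,N}$ does not depend on $z$, the inner maximum over $z \in [m,M]^N$ is therefore attained at $z_i = M$ for every $i \in J$ and $z_i = m$ for every $i \notin J$. Substituting these values reduces the problem to maximizing, over $a \in [m,M]^N$,
\[
    \bigg(\e^{-\varepsilon} + (1-\e^{-\varepsilon})\tfrac{m+\sum a_i}{N+1}\bigg)^{-1} \prod_{i \in J}\frac{N+1}{N + a_i/M}\prod_{i \notin J}\frac{N+1}{N + (1-a_i)/(1-m)} .
\]

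Next I would handle the $a$-maximization in two parts. For $i \in J$, the product factor $\frac{N+1}{N+a_i/M}$ is decreasing in $a_i$, and the leading factor is also decreasing in $a_i$; hence the joint maximum fixes $a_i = m$ for all $i \in J$, contributing $\bigl(\tfrac{N+1}{N+m/M}\bigr)^{|J|}$ and adding $|J|m$ to the sum inside the leading factor. For $i \notin J$, the leading factor depends only on $\sum_{i \notin J} a_i$, so for each fixed value $s$ of that sum I need to maximize $\prod_{i \notin J} \frac{N+1}{N+(1-a_i)/(1-m)}$ subject to the affine constraint. A direct computation of the second derivative of $x \mapsto -\ln(N + (1-x)/(1-m))$ shows this map is convex, so the logarithm of the product is a convex function of $(a_i)_{i \notin J}$; on the polytope $\{a \in [m,M]^{N-|J|} : \sum a_i = s\}$ the maximum is therefore attained at a vertex, i.e.\ at a point where every $a_i$ is either $m$ or $M$. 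Writing $k = |J|$ and $l$ for the number of indices $i \notin J$ with $a_i = M$, the problem reduces to
\[
    \sup_{\substack{N>1,\, k+l \leq N}} \bigg(\e^{-\varepsilon} + (1-\e^{-\varepsilon})\tfrac{(N+1-l)m + lM}{N+1}\bigg)^{-1}\bigg(\frac{N+1}{N+m/M}\bigg)^{k}\bigg(\frac{N+1}{N + (1-M)/(1-m)}\bigg)^{l}.
\]

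At this point I would argue (numerically, as in \Cref{remark:computation}, over a representative grid of $(\varepsilon,m,M)$) that the supremum is attained in the limiting regime $N \to \infty$ with either $k = N,\ l = 0$ or $k = 0,\ l = N$, the two remaining degrees of freedom collapsing because one of the two base-exponent pairs is bounded below~$1$. Using $\bigl(1+\tfrac{c}{N+c}\bigr)^{N}\!\to \e^{c}$, the first regime yields the limit $(\e^{-\varepsilon} + (1-\e^{-\varepsilon})m)^{-1}\,\e^{1 - m/M}$ and the second yields $(\e^{-\varepsilon} + (1-\e^{-\varepsilon})M)^{-1}\,\e^{1 - (1-M)/(1-m)}$, whose maximum is the claimed bound. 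The main obstacle is the same one encountered in \Cref{th:bounddependingonp}: showing that no mixed regime $0 < k, l < N$ can beat these two pure regimes. I expect to resolve it by the same differential-evolution-based verification rather than by a closed-form argument, since the objective is neither concave nor monotone in $(k/N, l/N)$ uniformly, but its continuity in the parameters makes the numerical check conclusive up to the same $2\cdot 10^{-7}$ tolerance.
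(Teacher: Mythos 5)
Your proposal is sound and starts the same way as the paper's proof (monotonicity in $z$ pins $z_i=M$ for $i\in J$ and $z_i=m$ for $i\notin J$, and monotonicity in $a_i$ pins $a_i=m$ for $i\in J$), and it arrives at the correct two limiting values; but from that point on it takes a genuinely different route. The paper eliminates the ``mixed'' configurations analytically: for the coordinates $i\notin J$ it studies the objective as a one-variable function of a single $a_i$, shows it is convex between its two asymptotes and symmetric about their midpoint, and then runs an exchange argument to conclude that \emph{all} remaining $a_i$ can be pushed to the \emph{same} endpoint (all $m$ or all $M$); a further explicit analysis of the function $h(p)$ of $p=|J|/N$ shows the maximum over $|J|$ sits at $|J|\in\{0,N\}$. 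After these steps the only thing left to numerics (\Cref{remark:computationInverse}) is the one-parameter check that $\sup_N$ of a single explicit sequence equals its limit as $N\to\infty$. You instead stop at the three-parameter family $(N,k,l)$ and delegate to a differential-evolution search the entire claim that no regime with $0<k,l<N$ or intermediate $|J|$ beats the two pure regimes; this is closer in spirit to the paper's proof of \Cref{th:bounddependingonp} and is defensible given that the theorem is stated as computer-assisted anyway, but it hands to the computer a multi-parameter optimization that the paper resolves in closed form, so your numerical burden (and the trust placed in the grid search) is strictly larger. Two smaller points: the vertices of the slice $\{a\in[m,M]^{N-|J|}:\sum_i a_i=s\}$ can have one coordinate strictly inside $(m,M)$, so your statement that the slice maximum has every $a_i\in\{m,M\}$ needs a supplementary step --- e.g., observe that with all other coordinates fixed the objective in the remaining coordinate is the reciprocal of a positive concave quadratic, hence convex on $[m,M]$, so that coordinate can also be pushed to an endpoint; in fact this coordinate-wise convexity argument alone already yields that the box maximum sits at a vertex of the box, making your fixed-sum slicing unnecessary. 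Finally, note that the paper also uses this theorem's setting to verify (numerically) that the term $(\e^{-\varepsilon}+(1-\e^{-\varepsilon})m)^{-1}\e^{1-\frac{m}{M}}$ is dominated by the bounds of \Cref{th:bounddependingonp} in the final assembly of \Cref{th:outlierscoresuppression}; your plan would need the analogous check if it is to slot into that proof.
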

\begin{proof}
    Let $N\in\N$ and $J\subseteq[N]$. We consider $g_{J,N}$ as defined in the statement. We observe that $g_{J,N}$ is increasing with respect to all $z_i$ with $i\in J$ and decreasing with respect to all $z_i$ with $i\in[N]\backslash J$ and $a_j$ with $j\in J$. Therefore, 
    \begin{multline*}
        \bigg(\e^{-\varepsilon}+(1-\e^{-\varepsilon})\frac{m+\sum^N_{i=1}a_i}{N+1}\bigg)^{-1}\prod_{i\in J}\bigg(\frac{N+1}{N+\frac{a_i}{z_i}}\bigg)\prod_{i\in [N]\backslash J}\bigg(\frac{N+1}{N+\frac{1-a_i}{1-z_i}}\bigg) \\
        \leq \bigg(\e^{-\varepsilon}+(1-\e^{-\varepsilon})\frac{m+|J|m+\sum_{i\in[N]\backslash J}a_i}{N+1}\bigg)^{-1}\bigg(\frac{N+1}{N+\frac{m}{M}}\bigg)^{|J|}\prod_{i\in [N]\backslash J}\bigg(\frac{N+1}{N+\frac{1-a_i}{1-m}}\bigg)
    \end{multline*}
    for all $a,z\in[m,M]^N$. Now we will see that we can further bound the expression by
    \[
        \max\!\bigg\{(\e^{-\varepsilon}+(1-\e^{-\varepsilon})m)^{-1}\bigg(\frac{N+1}{N+\frac{m}{M}}\bigg)^{|J|},
        \bigg(\e^{-\varepsilon}+(1-\e^{-\varepsilon})\frac{m+|J|m+(N-|J|)M}{N+1}\bigg)^{-1}\bigg(\frac{N+1}{N+\frac{m}{M}}\bigg)^{|J|}\bigg(\frac{N+1}{N+\frac{1-M}{1-m}}\bigg)^{N-|J|}\bigg\},
    \]
    which are the values when $a_i=m$ for all $i\in[N]\backslash J$ or $a_i=M$ for all $i\in[N]\backslash J$. Note that this result is direct for $\varepsilon=0$, which needs to be excluded in the following argument.
    
    Indeed, for all $b=(b_1,\dots,b_{N-|J|-1})\in[m,M]^{N-|J|-1}$, we suppose $\varepsilon\neq 0$ and consider the real function $g_b$ such that
    \[
        g_{b}(x) = \bigg(\e^{-\varepsilon}+(1-\e^{-\varepsilon})\frac{m+|J|m+x+\sum^{N-|J|-1}_{j=1}b_j}{N+1}\bigg)^{-1}\bigg(\frac{N+1}{N+\frac{m}{M}}\bigg)^{|J|}\bigg(\frac{N+1}{N+\frac{1-x}{1-m}}\bigg)\prod^{N-|J|-1}_{i=1}\bigg(\frac{N+1}{N+\frac{1-b_i}{1-m}}\bigg)
    \]
    for all $x\in\R$ except at the asymptotes $A_1(b)=-\frac{N+1}{\e^{\varepsilon}-1}-((1+|J|)m+\sum^{N-|J|-1}_{j=1}b_j)<0$ and $A_2=1+N(1-m)>1$. Computing the derivative, we can see that $g_b$ has a single critical point at $x=\frac{A_1(b)+A_2}{2}$. In addition,
    \[
        \lim_{x\to A_1(b)^+} g_b(x) = \lim_{x\to A_2^-} g_b(x) = \infty,   
    \]
    which allows us to conclude that $g_b$ is convex over $x\in(A_1(b),A_2)$ and achieves its minimum at the midpoint $x=\frac{A_1(b)+A_2}{2}$, which acts as a point of symmetry for the function. In particular, since $[m,M]\subseteq(0,1)\subseteq(A_1(b),A_2)$,  
    \[
        \max_{x\in[m,M]}g_b(x) = 
        \begin{cases}
            g_b(M) &\text{if $|\frac{A_1(b)+A_2}{2}-m|\leq|\frac{A_1(b)+A_2}{2}-M|$,} \\
            g_b(m) &\text{if $|\frac{A_1(b)+A_2}{2}-m|\geq|\frac{A_1(b)+A_2}{2}-M|$.}
        \end{cases}
    \]
    
    Now, for all $b,b'\in[m,M]^{N-|J|-1}$ such that $b\leq b'$ component-wise, we obtain that $A_1(b')\leq A_1(b)$ and so, if $|\frac{A_1(b)+A_2}{2}-m|\leq|\frac{A_1(b)+A_2}{2}-M|$, then $|\frac{A_1(b')+A_2}{2}-m|\leq|\frac{A_1(b')+A_2}{2}-M|$. Analogously, for all $b,b'\in[m,M]^{N-|J|-1}$ such that $b\geq b'$ component-wise, we obtain that $A_1(b')\geq A_1(b)$ and so, if $|\frac{A_1(b)+A_2}{2}-m|\geq|\frac{A_1(b)+A_2}{2}-M|$, then $|\frac{A_1(b')+A_2}{2}-m|\geq|\frac{A_1(b')+A_2}{2}-M|$.
    
    In addition, we have that for any permutation $\sigma$ of $(b_1,\dots,b_{N-|J|-1},c)\in[m,M]^{N-|J|}$,
    \[
        g_{(b_1,\dots,b_{N-|J|-1})}(c) = g_{(\sigma(b_1),\dots,\sigma(b_{N-|J|-1}))}(\sigma(c)),   
    \]
    and therefore, we obtain, for all $c\in[m,M]$ and $b\in[m,M]^{N-|J|-1}$. 
    \begin{enumerate}
        \item If $|\frac{A_1(b)+A_2}{2}-m|\leq|\frac{A_1(b)+A_2}{2}-M|$, then 
        \[
            g_{(b_1,\dots,b_{N-|J|-1})}(c) \leq g_{(b_1,\dots,b_{N-|J|-1})}(M) = g_{(M,b_2,\dots,b_{N-|J|-1})}(b_1),
        \]
        and since $(b_1,\dots,b_{N-|J|-1})\leq b'\coloneqq(M,b_2,\dots,b_{N-|J|-1})$ component-wise, $|\frac{A_1(b')+A_2}{2}-m|\leq|\frac{A_1(b')+A_2}{2}-M|$ and
        \[
            g_{(M,b_2,\dots,b_{N-|J|-1})}(b_1) \leq g_{(M,b_2,\dots,b_{N-|J|-1})}(M) = 
            g_{(M,M,b_3,\dots,b_{N-|J|-1})}(b_2).
        \]

        Therefore, repeating the process we arrive to $g_b(c)\leq g_{(M,\dots,M)}(M)$. 
        \item If $|\frac{A_1(b)+A_2}{2}-m|\geq|\frac{A_1(b)+A_2}{2}-M|$, then 
        \[
            g_{(b_1,\dots,b_{N-|J|-1})}(c) \leq g_{(b_1,\dots,b_{N-|J|-1})}(m) = g_{(m,b_2,\dots,b_{N-|J|-1})}(b_1),
        \]
        and since $(b_1,\dots,b_{N-|J|-1})\geq b'\coloneqq(m,b_2,\dots,b_{N-|J|-1})$ component-wise, $|\frac{A_1(b')+A_2}{2}-m|\geq|\frac{A_1(b')+A_2}{2}-M|$ and
        \[
            g_{(m,b_2,\dots,b_{N-|J|-1})}(b_1) \leq g_{(m,b_2,\dots,b_{N-|J|-1})}(m) = 
            g_{(m,m,b_3,\dots,b_{N-|J|-1})}(b_2).
        \]

        Therefore, repeating the process we arrive to $g_b(c)\leq g_{(m,\dots,m)}(m)$. 
    \end{enumerate}

    In conclusion, for all $c\in[m,M]$ and $b\in[m,M]^{N-|J|-1}$, 
    \[
        g_b(c)\leq \max\{g_{(m,\dots,m)}(m),g_{(M,\dots,M)}(M)\},
    \]
    and thus, 
    \[
        g_{J,N}(a;z) \leq \max\!\bigg\{(\e^{-\varepsilon}+(1-\e^{-\varepsilon})m)^{-1}\bigg(\frac{N+1}{N+\frac{m}{M}}\bigg)^{|J|},
        \bigg(\e^{-\varepsilon}+(1-\e^{-\varepsilon})\frac{m+|J|m+(N-|J|)M}{N+1}\bigg)^{-1}\bigg(\frac{N+1}{N+\frac{m}{M}}\bigg)^{|J|}\bigg(\frac{N+1}{N+\frac{1-M}{1-m}}\bigg)^{N-|J|}\bigg\},
    \]
    for all $a,z\in[m,M]^N$. 
    
    Now we try to maximize with respect to $|J|$. Clearly, 
    \[
        (\e^{-\varepsilon}+(1-\e^{-\varepsilon})m)^{-1}\bigg(\frac{N+1}{N+\frac{m}{M}}\bigg)^{|J|}\leq (\e^{-\varepsilon}+(1-\e^{-\varepsilon})m)^{-1}\bigg(\frac{N+1}{N+\frac{m}{M}}\bigg)^{N},
    \]
    so we look at the second term of the maximum. We consider the function $h\colon[0,1]\to\R$ such that 
    \[
        h(p) = \bigg(\e^{-\varepsilon}+(1-\e^{-\varepsilon})\frac{m+pNm+N(1-p)M}{N+1}\bigg)^{-1}\bigg(\frac{N+1}{N+\frac{m}{M}}\bigg)^{pN}\bigg(\frac{N+1}{N+\frac{1-M}{1-m}}\bigg)^{(1-p)N}
    \]
    for all $p\in[0,1]$. Note $h(\frac{|J|}{N})$ corresponds to this term that we want to maximize. Extending $h$ to the real line, we see it is defined for all $x\in\R$ except its asymptote  
    \[
        A \coloneqq \frac{1}{N(M-m)}\bigg(\frac{N+1}{\e^{\varepsilon}-1}+(m+NM)\bigg) > 1.
    \]

    Computing its derivative, we see that it has a unique critical point at
    \[
        p_1 \coloneqq \frac{1}{N}\ln\!\bigg(\frac{N+\frac{1-M}{1-m}}{N+\frac{m}{M}}\bigg)^{-1}+A,
    \]
    which is well-defined for $m\neq M$, and it is larger than $A$ if and only if $m<1-M$ and smaller than $A$ if and only if $m>1-M$. The limits of the function to $\pm\infty$ and at its asymptotes are
    \begin{gather*}
        \lim_{p\to-\infty} h(p) = \begin{cases}
            \infty & \text{if $m>1-M$,} \\            0 &  \text{if $m\leq 1-M$,}
        \end{cases} \quad
        \lim_{p\to\infty} h(p) = \begin{cases}
            \infty & \text{if $m<1-M$,} \\            0 &  \text{if $m\geq1-M$,} 
        \end{cases} \quad
        \lim_{p\to A^-} h(p) = \infty \quad \text{and} \quad
        \lim_{p\to A^+} h(p) = -\infty.
    \end{gather*}

    Consequently, if $m\leq 1-M$, $h$ is increasing for $p\in(-\infty,A)\supseteq[0,1]$, and if $m<1-M$, $h$ is decreasing for $p\in(-\infty,p_1)$ and increasing for $p\in(p_1,A)$. Consequently,
    \[
        \max_{p\in[0,1]} h(p) = \max\{h(0),h(1)\}
    \]
    and, for all $J\subseteq[N]$, 
    \begin{multline*}
        \bigg(\e^{-\varepsilon}+(1-\e^{-\varepsilon})\frac{m+|J|m+(N-|J|)M}{N+1}\bigg)^{-1}\bigg(\frac{N+1}{N+\frac{m}{M}}\bigg)^{|J|}\bigg(\frac{N+1}{N+\frac{1-M}{1-m}}\bigg)^{N-|J|} \\
        \leq \max\!\bigg\{\bigg(\e^{-\varepsilon}+(1-\e^{-\varepsilon})\frac{m+NM}{N+1}\bigg)^{-1}\bigg(\frac{N+1}{N+\frac{1-M}{1-m}}\bigg)^{N},(\e^{-\varepsilon}+(1-\e^{-\varepsilon})m)^{-1}\bigg(\frac{N+1}{N+\frac{m}{M}}\bigg)^{N}\bigg\}.
    \end{multline*}

    We now perform the maximum over $N\in\N$. We can verify that 
    \[
        \bigg(\frac{N+1}{N+\frac{m}{M}}\bigg)^{N} = \bigg(1+\frac{1-\frac{m}{M}}{N+\frac{m}{M}}\bigg)^{N+\frac{m}{M}}\bigg(1+\frac{1-\frac{m}{M}}{N+\frac{m}{M}}\bigg)^{-\frac{m}{M}}
    \]
    is increasing with respect to $N$ with the value approaching $\e^{1-\frac{m}{M}}$ when $N\to\infty$ (direct from the fact that for all $a\in\R$, $(1+\frac{a}{x})^x$ monotonically converges to $\e^a$ when $x\to\infty$). Similarly,
    \[
        \bigg(\e^{-\varepsilon}+(1-\e^{-\varepsilon})\frac{m+NM}{N+1}\bigg)^{-1}\bigg(\frac{N+1}{N+\frac{1-M}{1-m}}\bigg)^{N}        
    \]
    converges to $(\e^{-\varepsilon}+(1-\e^{-\varepsilon})M)\e^{1-\frac{1-M}{1-m}}$ when $N\to\infty$, but it is possible that it is not increasing for certain values of $m$ and $M$. Altogether, 
    \begin{align*}
        \MoveEqLeft[4]\sup_{N\in\N}\max_{J\subseteq[N]}g_{J,N}(a;z) = \max\!\bigg\{\sup_{N\in\N}\bigg(\e^{-\varepsilon}+(1-\e^{-\varepsilon})\frac{m+NM}{N+1}\bigg)^{-1}\bigg(\frac{N+1}{N+\frac{1-M}{1-m}}\bigg)^{N}, (\e^{-\varepsilon}+(1-\e^{-\varepsilon})m)^{-1}\e^{1-\frac{m}{M}}\bigg\}.
    \end{align*}

    We prove numerically that this maximum is indeed the value achieved when $N\to\infty$ (see \Cref{remark:computationInverse} for details), i.e., 
    \[
        \sup_{N\in\N}\max_{J\subseteq[N]}\max_{a\in[m,M]^N}\max_{z\in [m,M]^N} g_{J,N}(a;z) =\max\!\big\{(\e^{-\varepsilon}+(1-\e^{-\varepsilon})M)^{-1}\e^{1-\frac{1-M}{1-m}},(\e^{-\varepsilon}+(1-\e^{-\varepsilon})m)^{-1}\e^{1-\frac{m}{M}}\big\}.
    \]

    Note that the equality holds since
    \[
        \lim_{N\to\infty}g_{[N],N}((M,\dots,M);(m,\dots,m)) = (\e^{-\varepsilon}+(1-\e^{-\varepsilon})M)^{-1}\e^{1-\frac{1-M}{1-m}}    
    \]
    and 
    \[
        \lim_{N\to\infty}g_{\varnothing,N}((m,\dots,m);(M,\dots,M)) = (\e^{-\varepsilon}+(1-\e^{-\varepsilon})m)^{-1}\e^{1-\frac{m}{M}}.    
    \]

    Neither of the terms is superfluous since they can both be the largest value for different choices of $\varepsilon$, $m$, and $M$.
\end{proof}

\begin{remark}[Numerical computation]\label{remark:computationInverse}
    We resort to a numerical computation to check that
    \begin{multline*}
        \max\!\bigg\{\sup_{N\in\N}\bigg(\e^{-\varepsilon}+(1-\e^{-\varepsilon})\frac{m+NM}{N+1}\bigg)^{-1}\bigg(\frac{N+1}{N+\frac{1-M}{1-m}}\bigg)^{N},(\e^{-\varepsilon}+(1-\e^{-\varepsilon})m)^{-1}\e^{1-\frac{m}{M}}\bigg\} \\
        =\max\!\big\{(\e^{-\varepsilon}+(1-\e^{-\varepsilon})M)^{-1}\e^{1-\frac{1-M}{1-m}},(\e^{-\varepsilon}+(1-\e^{-\varepsilon})m)^{-1}\e^{1-\frac{m}{M}}\big\},
    \end{multline*}
    which we use in \Cref{th:boundinverse}. The script is implemented in Python and checks that both values are the same up to an error of $2\cdot10^{-7}$. The maximum over $N$ is found using the \texttt{find\_local\_maximum} checking up to $N=10^9$. We deem this value to be enough since the term converges quickly when $N\to\infty$. In addition, we also verify that 
    \[
        -\ln(\e^{-\varepsilon}+(1-\e^{-\varepsilon})m) + 1-\frac{m}{M} \leq \max\{l_1(p),l_2(p),l_3\}
    \]
    as used at the end of \Cref{th:outlierscoresuppression}.
    
    Like \Cref{remark:computation}, since the domains of $\varepsilon\in[0,\infty)$, $m\in(0,1)$, and $M\in[m,1)$ are continuous and the domain of $\varepsilon$ is furthermore unbounded, we cannot run exhaustively for all values of these parameters. For this work, we decide to run the experiment for every reasonable value with decent level of granularity: $m$ and $M$ are run for every value between $0.01$ and $0.99$ with step $0.01$, and $\varepsilon$ is run for every value between $0$ and $2$ with step $0.01$, every value between $2$ and $10$ with step $0.1$ and every value between $10$ and $100$ with step $1$. We believe that our choice of values for $\varepsilon$ is representative of the values of $\varepsilon$ deemed to be acceptable in the literature (small values smaller than $2$ or $10$~\cite{dwork2014algorithmic}). 

    Our experimentation concludes that for the selected parameters $\varepsilon$, $m$, and $M$, the numerical and hypothesized values are less than $2\cdot10^{-7}$ in difference and that the hypothesized value is always larger than the numerical one to account for the real maximum being when $N\to\infty$.
\end{remark}

\end{document}